\declaretheorem{theorem}
\patchcmd{\@citex}{,}{,$\!$}{}{}
\newcommand{\probbox}[1]{\centerline{\framebox{\parbox{0.9\columnwidth}{#1}}}}
\newtheorem{proposition}[theorem]{Proposition}
\newtheorem{lemma}[theorem]{Lemma}
\newtheorem{corollary}[theorem]{Corollary}
\newtheorem{definition}[theorem]{Definition}
\newcommand{\homclosed}{homclosed}
\DeclareMathAlphabet{\mathbbold}{U}{bbold}{m}{n}
\newcommand{\bal}{\rotatebox[origin=c]{180}{$\mathbbold{A}$}}
\newcommand{\bex}{\rotatebox[origin=c]{180}{$\mathbbold{E}$}}
\newcommand{\SO}{\mbox{$\mathbb{SO}$}}
\newcommand{\ESO}{\mbox{$\Ex{}\SO$}}
\newcommand{\Logic}{\mathbb{L}}
\newcommand{\bA}{{\mathfrak A}}
\newcommand{\bB}{{\mathfrak B}}
\newcommand{\FO}{\mathbb{FO}}
\newcommand{\FOt}{\FO^\mathbbm{2}}
\newcommand{\FOe}{\FO_{=}}
\newcommand{\FOte}{\mbox{$\FO_{=}^\mathbbm{2}$}}
\newcommand{\GFO}{\mbox{$\mathbb{GFO}_{=}$}}
\newcommand{\GNFO}{\mbox{$\mathbb{GNFO}_{=}$}}
\newcommand{\TGF}{\mbox{$\mathbb{TGF}$}}
\newcommand{\TGD}{\mbox{$\mathbb{TGD}$}}
\newcommand{\MDTGD}{\mbox{$\mathbb{MDTGD}$}}
\newcommand{\DTGD}{\mbox{$\mathbb{DTGD}$}}
\newcommand{\CQ}{\mbox{$\mathbb{CQ}$}}
\newcommand{\UCQ}{\mbox{$\mathbb{UCQ}$}}
\newcommand{\Ex}{\mbox{$\bex$}}
\newcommand{\Al}{\mbox{$\bal$}}
\newcommand{\EpFO}{\mbox{$\Ex^*\FO^+_{=}$}}
\newcommand{\EFO}{\mbox{$\Ex{}\FO^+$}}
\newcommand{\EAAE}{\mbox{$\Ex^*\Al\Al\Ex^*\FO$}}
\newcommand{\EEAA}{\mbox{$\Ex\Ex\Al\Al\FO$}}
\newcommand{\EEEA}{\mbox{$\Ex\Ex\Ex\Al\FO$}}
\newcommand{\AAEE}{\mbox{$\Al\Al\Ex\Ex\FO$}}
\newcommand{\AAAE}{\mbox{$\Al\Al\Al\Ex\FO$}}
\newcommand{\AAEe}{\mbox{$\Al\Al\Ex\FOe$}}
\newcommand{\EEAe}{\mbox{$\Ex\Ex\Al\FOe$}}
\newcommand{\EAFO}{\mbox{$\Ex^*\Al^*\FOe$}}
\newcommand{\AsFO}{\mbox{$\Al^*\FOe$}}
\newcommand{\EsFO}{\mbox{$\Ex^*\FOe$}}
\newcommand{\mcl}[1]{\multicolumn{2}{l}{#1}}
\newcommand{\cq}[1]{\mathrm{cq}(#1)}
\DeclareMathOperator*{\smallbigwedge}{{\textstyle\bigwedge}}
\DeclareMathOperator*{\smallbigvee}{{\textstyle\bigvee}}
\newcommand{\spoil}{{\scaleobj{0.8}{\text{\faBolt}}}}
\renewcommand{\bold}[1]{{\boldsymbol{#1}}}
\newcommand{\fin}{{\scaleobj{0.8}{<}\omega}}
\newcommand{\ffin}{{\scaleobj{0.8}(\!\,\fin\,\!\scaleobj{0.8})}}
\newcommand{\strok}{\rotatebox[origin=c]{-45}{\text{--}}}
\newcommand{\mergehom}{\mathrel{{\raisebox{0.5ex}{\hstretch{.8}{\scaleobj{0.7}{\strok}}}\hspace{-1.3ex}\to}}}
\newcommand{\str}[2][0pt]{\mathrel{{#2}\llap{\hspace{0ex}\raisebox{-0.20ex}{\hstretch{0.6}{\scaleobj{1}{\raisebox{-0.20ex}{$-$}}\hspace{#1}\phantom{\scaleobj{0.75}{-}}}}}}}
\newcommand{\arbhom}{\mathrel{{\to}}}
\newcommand{\injhom}{\mathrel{{\hookrightarrow}}}
\newcommand{\surhom}{\mathrel{{\to\hspace{-0.8ex}\!\!\!\to}}}
\newcommand{\injsur}{\mathrel{{\hookrightarrow\hspace{-0.8ex}\!\!\!\to}}}
\newcommand{\mmhom}{\mathrel{\str{\mergehom}}}
\newcommand{\embed}{\mathrel{{\str{\injhom}}}}
\newcommand{\sshom}{\mathrel{{\str[0.6ex]{\surhom}}}}
\newcommand{\cl}[1]{{\scaleobj{1.1}{{\textnormal{${#1}$}}}}}
\newcommand{\homcl}[1]{#1^\cl{\arbhom}}
\newcommand{\homclfin}[1]{#1^{\cl\arbhom_\fin}}
\newcommand{\homclffin}[1]{#1^{\cl\arbhom_{(\fin)}}}
\newcommand{\getmodels}[1]{\llbracket #1 \rrbracket}
\newcommand{\getfinmodels}[1]{\getmodels{#1}_\fin}
\newcommand{\getffinmodels}[1]{\getmodels{#1}_\ffin}
\newcommand{\getsummary}[1]{\llparenthesis #1 \rrparenthesis}
\newcommand{\impl}{\Rightarrow}
\newcommand{\sig}[1]{\mathtt{#1}}
\newcommand{\sigU}{\sig{U}}
\newcommand{\sigP}{\sig{P}}
\newcommand{\sigH}{\sig{H}}
\newcommand{\sigV}{\sig{V}}
\newcommand{\sigT}{\sig{T}}
\newcommand{\sigc}{\sig{c}}
\newcommand{\Bit}{\sig{Bit}}
\newcommand{\tp}{\mathfrak{t}}
\newcommand{\gettp}{\mathfrak{t}^{?\!}}
\newcommand{\Tp}{\sig{Tp}}
\newcommand\widecheck[1]{%
	\savestack{\tmpbox}{\stretchto{%
			\scaleto{%
				\scalerel*[\widthof{\ensuremath{#1}}]{\kern-.6pt\bigwedge\kern-.6pt}%
				{\rule[-\textheight/2]{1ex}{\textheight}}
			}{\textheight}%
		}{0.5ex}}%
	\stackon[1pt]{#1}{\scalebox{-1}{\tmpbox}}%
}
\definecolor{manuel}{rgb}{0.00,0.00,0.50}
\definecolor{simon}{rgb}{0.80,0.00,0.80}
\definecolor{thomas}{rgb}{0.00,0.50,0.00}
\definecolor{sebastian}{rgb}{0.50,0.00,0.00}
\definecolor{todo}{rgb}{0.99,0.00,0.00}
\newcommand{\neutral}{\color{black} }
\newcommand{\stkout}[1]{\ifmmode\text{\sout{\ensuremath{#1}}}\else\sout{#1}\fi}
\renewcommand{\stkout}[1]{\ifmmode\text{\sout{\ensuremath{#1}}}\else\sout{#1}\fi}
\begin{document}

\title{On Logics and Homomorphism Closure}

\author{\IEEEauthorblockN{Manuel Bodirsky\IEEEauthorrefmark{1},
Thomas Feller\IEEEauthorrefmark{2},
Simon Knäuer\IEEEauthorrefmark{1}, and
Sebastian Rudolph\IEEEauthorrefmark{2}}
\IEEEauthorblockA{\IEEEauthorrefmark{1} Institut f\"{u}r Algebra, TU Dresden}
\IEEEauthorblockA{\IEEEauthorrefmark{2} Computational Logic Group, TU Dresden}
}

\IEEEoverridecommandlockouts
\IEEEpubid{\makebox[\columnwidth]{\,Extended preprint version of paper accepted at LICS 2021.\hfill} \hspace{\columnsep}\makebox[\columnwidth]{ }}

\maketitle

\begin{abstract}
Predicate logic is the premier choice for specifying classes of relational structures.
Homomorphisms are key to describing correspondences between relational structures. Questions concerning the interdependencies between these two means of characterizing (classes of) structures are of fundamental interest and can be highly non-trivial to answer. 
We investigate several problems regarding the homomorphism closure (homclosure) of the class of all (finite or arbitrary) models of logical sentences: membership of structures in a sentence's homclosure; sentence homclosedness; homclosure characterizability in a logic; normal forms for homclosed sentences in certain logics. For a wide variety of fragments of first- and second-order predicate logic, we clarify these problems' computational properties.
\end{abstract}


%
\IEEEpeerreviewmaketitle


\section{Introduction}
The
\emph{homomorphism preservation theorem} states that, for any first-order sentence $\Phi$, the property of being a model of~$\Phi$~is preserved under homomorphisms\footnote{Henceforth, we will simply call such a $\Phi$ (and its model class)  \emph{\homclosed{}}.} if and only if $\Phi$ is equivalent to an existential positive sentence.
For arbitrary models, this correspondence goes back to \L{}os, Tarski, and Lyndon in the 1950s, whereas Rossman showed only rather recently that it also holds when considering \emph{finite} models only \cite{Rossman08}.

Classes of homclosed first- or second-order sentences are ubiq\-ui\-tous in computer science. In databases, they often serve as intuitive \emph{query languages} and check for the presence of a certain ``pattern'' in a given database \cite{DBLP:conf/pods/RudolphK13}. In particular, existential positive sentences, which can be expressed as so-called \emph{unions of conjunctive queries}, are encountered very frequently in practice. Homclosed classes of finite structures  also appear naturally in other areas of computer science; for example, the complement of any constraint satisfaction problem, viewed as a class of finite structures, is homclosed. 


This paper poses and comprehensively answers the following four fundamental questions concerning homclosures:  

\begin{enumerate}[itemindent=0ex, leftmargin=3ex]
\item \textbf{Homclosure membership.}
Given a sentence $\Phi$ from some logic and a finite structure $\mathfrak{A}$, 
we want to decide whether $\mathfrak{A}$ belongs to the homclosure of $\Phi$. How difficult is this problem computationally? 
We are interested both in the \emph{combined complexity} (where both $\mathfrak{A}$ and $\Phi$ are given) and in the \emph{data complexity} (where $\mathfrak{A}$ is given and $\Phi$ is fixed). 
\item \textbf{Homclosedness.}
Given a sentence $\Phi$ from some logic, we want to decide whether $\Phi$ is homclosed. How difficult is this problem computationally? 
\item \textbf{Homclosure characterizability.} 
Given a sentence $\Phi$ from some logic, 
does there exist a sentence $\Psi$ (from the same or a different logic) such that
the class of $\Psi$'s models is precisely the homclosure of the class of $\Phi$'s models? 
\item \textbf{Homclosed normal forms.}
For which logics 
exists a \emph{`syntactic normal form'}, i.e., a subset of easily recognizable sentences such that an arbitrary sentence from that logic is equivalent to a sentence in normal form if and only if the former is homclosed?
\end{enumerate}
\vspace{-2ex}
This paper provides answers to these questions for (fragments of) first- and se\-cond-or\-der logic of varying expressivity, with a particular emphasis on very expressive but decidable formalisms (including the two-variable fragment, the guarded (negation) fragment, and the triguarded fragment), popular formalisms in database theory (such as existential positive sentences and tuple-generating dependencies), and diverse prefix classes. 
We next summarize our findings and methods. Our results hold both for the finite- and the arbitrary-model case, so we will not make this distinction here, even if distinct proofs are sometimes required.   

As for Question 1), \Cref{sec:InHomCl} presents results regarding decidability and combined complexity. We show that the homclosure membership problem is generically interreducible with the considered logic's satisfiability problem in a wide variety of cases, thus tightly tying decidability and complexity of homclosure membership to satisfiability; the reduction from the former to the latter is shown by logically encoding model \emph{colorings}, inspired by similar ideas from constraint satisfaction. As notable exception, we find that for tuple-generating dependencies, the problem is undecidable despite satisfiability being trivial. Regarding the data complexity of homclosure membership, we find that question intimately related to Question 3), since in the presence of a logical characterization of a sentence's homclosure, checking homclosure membership boils down to model checking against the characterizing sentence. Along those lines, \Cref{sec:characterizability} provides a variety of results, the most striking of which is probably \Cref{sorpresa}, asserting polynomial time data complexity of checking membership in the homclosure of sentences in the guarded (negation) fragment of first-order logic.  

Regarding Question 2) addressed in \Cref{sec:homclosedness}, we find that under fairly weak assumptions, the satisfiability problem for the logic under consideration can be reduced to its homclosedness problem. Yet, a similar reduction can be found from satisfiability in the ``dual'' logic, identifying several prefix classes where homclosedness is undecidable, despite satisfiability being decidable. For the remaining cases, we propose a generic reduction from homclosedness to unsatisfiability,
based on the notion of \emph{spoiler} -- a homomorphism from a model to a non-model, i.e., a witness for non-homclosedness -- and a way of creating spoiler-detecting sentences. Again, tuple-generating dependencies need to be coped with differently; we find that for the plain version, homclosedness is \textsc{NP}-complete, while adding disjunction turns the problem undecidable.

For Question 3), which we focus on in \Cref{sec:characterizability}, it is important to note that already for very basic first-order sentences (including tuple-generating dependencies), their homclosure is not expressible in first-order logic, and even the question whether a given first-order sentence is expressible in any logic with a decidable model checking problem is undecidable. On the other hand, we show that the homclosure of any sentence from the Bernays-Schönfinkel class can even be characterized in existential positive first-order logic. Using elaborate model-theoretic characterizations based on \emph{types}, we are able to show that the two-variable fragment, the triguarded fragment and the Gödel class allow for homclosure-characterization in second-order logic (in fact, even in decidable fragments thereof), while sentences in the guarded and the guarded negation fragment can be homclosure-characterized even in first-order logic with least fixed points. Employing arguments from descriptive complexity we show that these characterizability results are optimal in a certain sense. 

Question 4) is dealt with in \Cref{sec:normalforms}. For fragments of first-order logics, we exploit the homomorphism-preservation theorem to settle the case for all fragments fully encompassing existential positive first-order logic. We can also give a syntactic normal form for Bernays-Schönfinkel subclasses with a bounded number of quantifiers. For fragments with a decidable homclosedness problem, we can resort to this (usually costly) check to identify normal form sentences. A few cases, however, remain open.
Finally, turning to full second-order logic, we are indeed able to establish a normal form for sentences whose models are closed under homomorphisms, too. As opposed to first-order logic, however, this normal form can be even computed from the given second-order sentence in polynomial time.

Detailed proofs can be found in the appendix.

%


\section{Preliminaries}

We assume the reader to be familiar with standard notions regarding model theory, first- and second-order logic, decidability, and complexity theory. 

\noindent\textbf{Structures and Homomorphisms.}\quad
We let capital Fraktur letters denote structures and the corresponding capital Roman letters their domains; i.e., $A$ denotes the domain of $\bA$. 
We assume that structures have non-empty domains. A structure is called finite 
if its domain is finite
. We work with signatures $\tau$ that consist of finitely many constant and predicate symbols (written in typewriter font, often simply called \textsl{constants} and \textsl{predicates} for brevity), and let $\text{ar}(\sigP)$ denote the arity of a predicate $\sigP \in \tau$. 
For convenience, we assume that our signatures contain at least one constant symbol; this is not a severe restriction since we may always add a new constant symbol to the signature without affecting our results. The \emph{size} $|\bA|$ of a finite $\tau$-structure $\bA$ is the number of bits needed to represent $\bA$ up to isomorphism, and bounded by
$(|\tau|+1)(|A|+1)^{m+1}$ where $m$ is the maximal arity of the predicates in $\tau$. 

A \emph{homomorphism} $h \colon \bA \to \bB$ from a \mbox{$\tau$-structure} $\bA$ to a \mbox{$\tau$-structure} $\bB$ is a function $h \colon A \to B$
where 

\vspace{-1ex}
\begin{itemize}[itemindent=0ex, leftmargin=2ex, partopsep=0ex]
\item for every predicate symbol $\sigP \in \tau$, 
$(a_1,\dots,a_{\text{ar}(\sigP)}) \in \sigP^{\bA}$ implies $\big (h(a_1),\dots,h(a_{\text{ar}(\sigP)}) \big) \in \sigP^{\bB}$, and 
\item for every constant symbol $\sigc \in \tau$, we have
$h(\sigc^{\bA}) = \sigc^{\bB}$. 
\end{itemize}
\vspace{-1ex}

A homomorphism $h \colon \bA \to \bB$ is called \emph{strong} if for every predicate symbol $\sigP \in \tau$ we have  
$(a_1,\dots,a_{\text{ar}(\sigP)}) \in \sigP^{\bA}$ \emph{exactly if}  
$\big (h(a_1),\dots,h(a_{\text{ar}(\sigP)}) \big) \in \sigP^{\bB}$. 
We use the common symbols $\injhom$ and $\surhom$ for injective and surjective homomorphisms, respectively, and add a horizontal stroke ($\str\arbhom$, $\embed$, $\sshom$) when they are strong.
Injective strong homomorphisms are also called \emph{embeddings}. 

Given a class $\mathcal{C}$ of $\tau$-structures, let $\homcl{\mathcal{C}}$ denote the \emph{homomorphism closure of $\mathcal{C}$}, i.e., the class of all $\tau$-structures into which some structure $\mathfrak{A} \in \mathcal{C}$ has a homomorphism. We use $\mathcal{C}^{\cl{\str{\injhom}}}$ and $\mathcal{C}^{\cl{\surhom}}$ analogously.
%
%
%
%
We write $\homclfin{\mathcal{C}}$ 
for the class obtained by removing all infinite structures from $\homcl{\mathcal{C}}$.

Given a $\tau$-structure $\mathfrak{A}$ and some $\sigma \subseteq \tau$, let  
$\mathfrak{A}|_\sigma$ denote the $\sigma$-reduct of $\mathfrak{A}$. Moreover, for some constant-free signature $\rho$ disjoint from $\tau$, we let $\mathfrak{A}\cdot{\widehat\rho}$ denote the $\tau{\uplus}\rho$-structure expanding~$\mathfrak{A}$ in which every element $\sigP$ of $\rho$ is mapped to the full relation $A^{\text{ar}(\sigP)}$. Likewise, we let $\mathfrak{A}\cdot{\widecheck\rho}$ denote the \mbox{$\tau{\uplus}\rho$-structure} expanding $\mathfrak{A}$ in which every element of $\rho$ is mapped to the empty relation.  

For a signature $\rho$ of only constants, let $\hat{\mathfrak{O}}_\rho$
denote the one-element $\rho$-structure (i.e., the up-to-isomorphism unique structure with all constants mapped to the only element), while $\check{\mathfrak{O}}_\rho$ denotes the $|\rho|$-element $\rho$-structure where each constant is interpreted separately. 
For a signature $\tau=\rho \uplus \sigma$ of constants $\rho$ and predicates $\sigma$, we let $\mathfrak{I}_\tau = \check{\mathfrak{O}}_\rho \cdot \widecheck{\sigma}$ and $\mathfrak{F}_\tau = \hat{\mathfrak{O}}_\rho \cdot \widehat{\sigma}$.\footnote{Note that $\mathfrak{I}_\tau$ 
has a homomorphism into every \mbox{$\tau$-structure}, and every \mbox{$\tau$-structure} has a homomorphism into $\mathfrak{F}_\tau$, i.e., $\homcl{\{\mathfrak{I}_\tau\}}$ is the class of all \mbox{$\tau$-structures}, whereas $\mathfrak{F}_\tau \in \homcl{\mathcal{C}}$ for every non-empty class $\mathcal{C}$ of \mbox{$\tau$-structures}.} We omit signature subscripts if they are clear from the context.

\noindent\textbf{Logics.}\quad
We give a brief description of the logics we study in this paper. By default, equality is not allowed in a logic ${\mathbb L}$; 
the variant of ${\mathbb L}$ where equality $=$ is allowed is denoted by ${\mathbb L}_=$. \emph{First-order logic $(\FOe)$} formulae are built from atomic formulae using 
existential quantification $\exists$, universal quantification $\forall$, negation $\neg$, conjunction $\wedge$, and disjunction $\vee$ in the usual way.
We use the symbol $\Rightarrow$ for implication and $\Leftrightarrow$ for equivalence as the usual shortcuts for better readability. 
All the considered logics are fragments of \emph{second-order logic} (\SO),\footnote{As $=$ can be axiomatized in \SO, we do not distinguish \SO{} from $\SO_=$.} which extends first-order logic by \emph{second-order quantifiers}, of the form $\exists \sigP$ or $\forall \sigP$ for a predicate symbol $\sigP$. A formula's \emph{quantifier rank} is the maximal nesting depth of its quantifiers.  



\begin{itemize}[itemindent=2ex, leftmargin=0ex]
\item A  \emph{term} is either a variable or a constant symbol (as per our assumption that the signature may contain constant symbols but no function symbols of arities greater than zero). 
\item 
We use bold variable/term/element symbols (e.g., $\bold{x}$, $\bold{y}$, or $\bold{t}$, or $\bold{a}$) to denote finite sequences of such entities (e.g., $x_1,\ldots, x_n$) of appropriate length.
Whenever order and multiplicity are irrelevant, we take the liberty to consider such sequences as sets, justifying expressions like $\bold{x} \cup \bold{y}$.
 \item
For a (sub)formula $\varphi$, we write $\varphi[\bold{x}]$ to indicate that all free first-order variables in $\varphi$ are from $\bold{x}$. We write $\varphi(\bold{x})$ to indicate a formula with free occurrences of $\bold{x}$ and use $\varphi(\bold{t})$ to denote $\varphi$ with all occurrences of free variables from $\bold{x}$ replaced by the corresponding element from $\bold{t}$.     \end{itemize}

We study the following 
fragments of first-order logic:
\begin{itemize}[itemindent=2ex, leftmargin=0ex, partopsep=0ex]
\item
\emph{Prefix classes} of $\FO$ / $\FOe$. 
For these logics, we assume that the formulae are in prenex normal form and that the quantifier prefix is restricted by a regular expression, following 
standard nomenclature as used by Börger, Grädel and Gurevich \cite{BorgerGG1997}. 
\item 
\emph{Existential positive $\FO$} ($\EpFO$). In this fragment, $\forall$ and $\neg$
are disallowed; however, the special sentence $\bot$  is allowed in order to express falsity (it could be read as empty disjunction). We obtain (Boolean) conjunctive queries ($\CQ$) from $\EpFO$ by disallowing $\vee$. The \emph{canonical query} for a finite structure $\mathfrak{A}$, denoted by $\cq{\mathfrak{A}}$,  is the \CQ{} sentence obtained by existentially quantifying over each variable $x_a$, for all $a \in A$, in 
\begin{equation}
\bigwedge_{\sigc\in\tau} \sigc \,{=}\, x_{\sigc^\mathfrak{A}} \wedge \hspace{-3ex} \bigwedge_{{\sigP \in \tau,\, k= \text{ar}(\sigP)}\atop{{(a_1,\ldots,a_k) \in \sigP^\mathfrak{A}}}} \hspace{-3ex} 
\sigP(x_{a_1},\ldots,x_{a_k}). 
\end{equation}
\item 
\emph{The guarded fragment of $\FOe\!$} ($\GFO$). 
All occurrences of $\forall$ have the form 
$\forall \bold{x}.(\sigP(\bold{y}) {\,\Rightarrow\,} \varphi[\bold{y}])$, 
and all of $\exists$ the form 
$\exists \bold{x}.(\sigP(\bold{y}) \wedge \varphi[\bold{y}])$
for $\bold{x} \subseteq \bold{y}$ and predicates $\sigP{}$.
This shape is called \emph{guarded quantification} and $\sigP(\bold{y})$ the respective \emph{guard}.\footnote{It is innocuous to
conjunctively add further atoms to the guard, as long as they use only variables from $\bold{x}$. We will consider such variants part of the  language; they can be \textsc{LogSpace}-transformed into the plain form. The same holds for unguarded quantification over formulae with just one free variable.}
\item
\emph{The triguarded fragment of} $\FO$ ($\TGF$). Every quantification over formulae with at least two free variables must be guarded.
\item
\emph{The guarded negation fragment of} $\FOe$ ($\GNFO$). The symbol $\forall$ is disallowed 
and every occurrence of $\neg$ must be in the form of
$\sigP(\bold{x}) \wedge \neg \varphi[\bold{x}]$.
\item
\emph{The $n$-variable fragment of} $\FOe$ / $\FO$. All variables must be from $\{x_1,\ldots,x_n\}$.  As special cases we obtain the 2-variable fragment with equality ($\FOte$) and without it ($\FOt$); here we normally use the variable names $x$, $y$ instead of $x_1$, $x_2$.
\item
\emph{(Disjunctive) tuple-generating dependencies} (TGDs). A \emph{disjunctive TGD} is a $\FO$ sentence of the form 
\begin{equation}
\forall \bold{x}.\varphi(\bold{x}) \impl \exists \bold{y}.\psi(\bold{x},\bold{y}),
\end{equation}
where
$\varphi$ is a (possibly empty) conjunction over atoms while $\psi$ is a non-empty disjunction over conjunctions of atoms.\footnote{Non-emptiness of disjunction is common in database theory and a deliberate choice: it ensures desirable properties like guaranteed satisfiability. Empty disjunction would make disjunctive TGDs as expressive as $\FO$.}
Note that for finite structures $\mathfrak{A}, \mathfrak{B}$, the formula $\cq{\mathfrak{A}} \Rightarrow \cq{\mathfrak{B}}$ can be easily transformed into an equivalent TGD (by pulling out the existential quantifiers of $\cq{\mathfrak{A}}$ thus turning them into universal ones, using that the variable sets of  $\cq{\mathfrak{A}}$ and $\cq{\mathfrak{B}}$ are disjoint), so we will consider it as such right away.
 
We obtain \DTGD{} (sentences) as finite conjunctions over disjunctive TGDs. Likewise, we obtain \TGD{} (sentences) as finite conjunctions over (non-disjunctive) TGDs, wherein the $\psi$ are plain conjunctions over atoms. As a middle ground between \TGD{} and \DTGD{}, we introduce \emph{mildly disjunctive TGD sentences} (\MDTGD) as sentences of the form $\Phi \vee \Psi$ where $\Phi \in \TGD$ and $\Psi \in \CQ$. Note that any \MDTGD{} sentence can be equivalently rewritten into \DTGD{} in polynomial time.
\end{itemize}

\emph{Existential \SO{}} ($\ESO$) is the fragment of \SO\ where all second-order quantifiers are existential and prenex.

We apply the standard model-theoretic semantics.
For a sentence $\Phi$, 
we let 
	$\getmodels{\Phi}$ denote the class of all models of $\Phi$ and
	$\getfinmodels{\Phi}$ the class of all finite models of $\Phi$. 	
A sentence in some logic is (finitely) satisfiable if it has a (finite) model. 
A logic has the \emph{finite model property (FMP)} if satisfiability and finite satisfiability coincide. \Cref{fig:main} gives results on the FMP for logics considered by us. 
A sentence $\Phi$ \emph{(finitely) characterizes} a class $\mathcal{C}$ of structures, if 
$\getffinmodels{\Phi} = \mathcal{C}$.
For $\tau' \subseteq \tau$, we say that a $\tau$-sentence $\Phi$ \emph{(finitely) projectively characterizes} a class $\mathcal{C}$ of $\tau'$-structures if 
$\getffinmodels{\Phi}|_{\tau'}  = \mathcal{C}$. 
A logic is \emph{non-degenerate} if there exists an unsatisfiable sentence $\Phi_\textbf{false}$.

%
%
%
%

\noindent\textbf{Tilings.}\quad
We introduce a variant of the $\mathbb{N} \times \mathbb{N}$ tiling problem, which serves as a versatile and useful vehicle for various subsequent undecidability arguments.

\begin{definition}[margin-constrained tiling problem]\label{def:tiling}
A \emph{(margin-constrained) domino system} $\mathcal{D} = (D,B,L,H,V)$ consists of a finite set $D$ of \emph{dominoes}, some of which are 
\emph{bottom-margin}  dominoes $B \subseteq D$
and some \emph{left-mar\-gin}  ones $L \subseteq D$. We refer to $L \cap B$ as \emph{seed dominoes}.
The two binary relations $H,V \subseteq D \times D$ specify \emph{horizontal} and \emph{vertical compatibility}.\footnote{In the context of domino systems, it will be convenient to let $R\in \{H,V\}$ also denote {the} map $R \colon D \rightarrow 2^D$ defined by $R(x):= \{d\in D\mid (x,d)\in R\}$. The intended meaning will always be clear from the context.}

We call  $\mathcal{D} = (D,B,L,H,V)$ \emph{deterministic} if\\[-3ex] 
\begin{enumerate}[itemindent=0ex, leftmargin=3ex,  itemsep=-0.8ex, topsep=0ex]
	\item	
	there is at most one seed domino (i.e., $|B \cap L|\leq 1$), 
	\item for every $d \in B$ there is at most one $e$ such that $(d,e) \in H$, and in this case $e \in B$, 
	\item	
	for every $d \in L$ there is at most one $e$ such that $(d,e) \in V$, and in this case $e \in L$, and
	\item
	for any $d_1,d_2 \in D$ there is at most one $e$ with $(d_1,e) \in H$ and $(d_2,e)\in V$. 
\end{enumerate}
A $\mathcal{D}$-\emph{tiling} (of $\mathbb{N} \times \mathbb{N}$) is a function $t\colon \mathbb{N} \times \mathbb{N} \to D$ such that for any $k, \ell \in \mathbb{N}$ hold
$t(0,\ell) \in L$ and $t(k,0) \in B$ as well as 
$\big(t(k,\ell),t(k+1,\ell)\big) \in H$ and $\big(t(k,\ell),t(k,\ell+1)\big) \in V$.  

A tiling $t$ is \emph{ultimately periodic}, if there are natural numbers $k_\mathrm{init},k_\mathrm{period},\ell_\mathrm{init},\ell_\mathrm{period}$ such that, for every $k, \ell \in \mathbb{N}$,
\begin{itemize}[itemindent=0ex, leftmargin=1.8ex, itemsep=-0.5ex, labelsep=0.7ex, topsep=0ex]
	\item	
	$t(k,\ell) = t(k+k_\mathrm{period},\ell)$ whenever $k\geq k_\mathrm{init}$ and  
	\item	
    $t(k,\ell) = t(k,\ell+\ell_\mathrm{period})$ whenever $\ell\geq \ell_\mathrm{init}$. 
\end{itemize}
\vspace{-1ex}
\end{definition}

It is straightforward to see that for deterministic domino systems, tilings are unique if they exist.
Also note that the ``classical'' $\mathbb{N} \times \mathbb{N}$ tiling problem without margin constraints is a special case of this variant (choose $D = L = B$).

\begin{lemma}[restate=gridlemma, label=gridlemma, name=]
The following problems are both undecidable even for deterministic margin-constrained domino systems $\mathcal{D}$:
\begin{enumerate}[itemindent=0ex, leftmargin=3ex, itemsep=-0.5ex]
\item
Given $\mathcal{D}$, does there exist a $\mathcal{D}$-\emph{tiling}?	
\item
Given $\mathcal{D}$, does there exist an ultimately periodic $\mathcal{D}$-\emph{tiling}?		
\end{enumerate}
\end{lemma}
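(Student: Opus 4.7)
The plan is to reduce the halting problem for deterministic Turing machines (TMs) on empty input to both tiling problems. Given a deterministic TM $M$, I construct a deterministic margin-constrained domino system $\mathcal{D}_M$ whose dominoes carry a tape symbol, possibly augmented with the current state and head information. The tiling grid encodes $M$'s computation with row $\ell$ representing the tape configuration at time $\ell$ and column $k$ representing tape cell $k$. The bottom-margin dominoes realize the initial all-blank tape, with the unique seed domino at $(0,0)$ marking the initial state; the left-margin dominoes track tape cell $0$ through time. Horizontal compatibility propagates tape content and routes the head within a row as required, while vertical compatibility implements the TM's transition function. The halting state is assigned no vertical successor, so a $\mathcal{D}_M$-tiling exists iff $M$ never halts on empty input.

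Determinism of $\mathcal{D}_M$ follows directly from $M$'s determinism: condition (1) is immediate since the initial configuration fixes the unique seed; (2) holds because blank dominoes propagate uniquely along the bottom row; (3) follows from the deterministic evolution of cell $0$ through time; and (4) holds because, given the left and bottom neighbors, the TM transition function fixes the next domino uniquely. Hence non-halting of $M$ is equivalent to the existence of a $\mathcal{D}_M$-tiling, yielding the undecidability of~(1).

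For (2), I observe that the (necessarily unique) $\mathcal{D}_M$-tiling is ultimately periodic exactly when $M$'s infinite computation uses bounded tape and eventually repeats a configuration. To reduce halting to this property, I modify $M$ to an auxiliary TM $M'$ that interleaves each simulation step of $M$ with one increment of a counter placed on a disjoint portion of the tape, so that $M'$'s used tape region grows without bound whenever $M$ keeps running. Upon $M$ reaching a halting configuration, $M'$ switches to a trivial two-state oscillation at its current head position. Consequently $M'$'s computation is ultimately periodic iff $M$ halts, so applying the above construction to $M'$ produces a deterministic $\mathcal{D}_{M'}$ whose unique tiling is ultimately periodic iff $M$ halts, which establishes~(2).

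The principal technical hurdle lies in verifying determinism condition (4) in the presence of head motion: the domino at position $(k,\ell+1)$ must be uniquely determined by those at $(k-1,\ell+1)$ and $(k,\ell)$. This is handled by encoding in each active (head-carrying) domino enough information about the direction from which the head arrives, so that the horizontal compatibility already commits to which cell hosts the head in the new row, making the vertical transition unambiguous.
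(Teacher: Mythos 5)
Your high-level plan is sound, and your Item~2 strategy is a genuine alternative to the paper's: instead of modifying the domino system (the paper adds two extra tiles $f,e$ that allow a periodic ``fill'' once the head reaches a halting state), you modify the Turing machine $M$ itself into $M'$, which interleaves a growing counter while $M$ runs and falls into a two-state oscillation once $M$ halts, so that $M'$'s run is ultimately periodic iff $M$ halts. Both mechanisms correctly destroy periodicity in the non-halting case (the paper does it via a left-margin $c$-block that grows by one each row), and your version is arguably cleaner to state.

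There is, however, a real gap in your verification of determinism, specifically condition~(4), and it is not a minor technicality but the crux of why margin-constrained tiling systems can be made deterministic at all. Condition~(4) asks: for any $d_1, d_2 \in D$, at most one $e$ satisfies $(d_1,e)\in H$ and $(d_2,e)\in V$. In the intended tiling, $e$ sits at grid position $(k,\ell{+}1)$, $d_1$ at $(k{-}1,\ell{+}1)$, and $d_2$ at $(k,\ell)$. To decide whether $e$ carries the head, one must know whether cell $k{+}1$ at time $\ell$ held the head and moved left --- and \emph{neither} $d_1$ (next time step, one cell left) \emph{nor} $d_2$ (cell $k$ at time $\ell$ only) carries any information about cell $k{+}1$ at time $\ell$. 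Your proposed fix --- encoding direction-of-arrival in the ``active (head-carrying) domino'' --- does not address this: in the problematic case $d_1$ and $d_2$ may both be inactive tiles, so there is nowhere for the relevant information to live. The paper's proof handles this by making every tile store a triple $(x_l,x_m,x_r)$ recording the state/symbol information of its own cell \emph{and both neighbors}, with horizontal compatibility enforcing the overlap consistency $y_l=x_m$, $y_m=x_r$; the vertical rule of $d_2=(T^u_r,T^v_s,T^w_t)$ then ``sees'' the right neighbor's status $w$ and can determine $e$ unambiguously, while the horizontal rule from $d_1$ pins down $e_l, e_m$. The same objection applies to your verification of conditions~(2) and~(3): the vertical successor of a margin tile depends on whether the head sits on the \emph{adjacent} interior cell, which a single-cell tile cannot know. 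So the construction needs the neighbor-storage trick (or an equivalent ``head approaching from left/right'' flag carried by \emph{all} tiles and propagated by $H$), and your proof as written does not supply it.
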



\section{Some Tools}\label{sec:tools}

\newcommand{\indss}[2]{[#1]_{#2}}
In this section, we present basic tools that will be used in the subsequent sections; we start 
with the folklore technique of \emph{relativization} to make statements about a model's induced substructures.
In the following, let $\sigU{}$ denote a unary predicate. 
Given a $\tau {\uplus} \{\sigU{}\}$-structure $\mathfrak{A}$ 
we write 
$\indss{\mathfrak{A}}{\sigU{}}$ for the $\tau$-reduct of the substructure of $\mathfrak{A}$ induced by $\sigU{}^\mathfrak{A}$. 

Given an $\SO$ $\rho{\uplus}\tau$-sentence $\Phi$ with constants $\rho$ and predicates $\tau \mathrel{\rotatebox[y=0.57ex]{180}{$\notin$}}\sigU$, let $\Phi^{\mathrm{rel}(\sigU{})}$ denote 
the \emph{$\sigU{}$-relativization} of $\Phi$, defined as
\begin{equation}
\Phi^{\mathrm{guard}(\sigU{})} \wedge \bigwedge_{c \in \rho} \sigU{}(c),
\end{equation}
where $\Phi^{\mathrm{guard}(\sigU{})}$ is obtained from $\Phi$ by (inside-out) replacing all subformulae $\forall x.\varphi$ by $\forall x.(\sigU{}(x) \impl \varphi)$ and all subformulae $\exists x.\varphi$ by $\exists x.(\sigU{}(x) \wedge \varphi)$.
%
%
Relativization preserves many aspects of the sentence, such as the number of variables, the quantifier rank, and membership in the fragments $\mathbb{GFO}_=$, $\GNFO$, $\TGF$, $\FOte$, $\TGD$, $\DTGD$, $\MDTGD$, $\FOe$ (modulo some trivial syntactic equivalence-preserving transformations) as well as all prefix classes. Obviously, $\Phi^{\mathrm{rel}(\sigU{})}$ is \textsc{LogSpace}-computable from $\Phi$.
Moreover, the models of $\Phi^{\mathrm{rel}(\sigU{})}$ are precisely those $\rho{\uplus}\tau {\uplus} \{\sigU{}\}$-structures $\mathfrak{A}$ for which $\indss{\mathfrak{A}}{\sigU{}} \in \getmodels{\Phi}$.

Next, we introduce some machinery for labeling structures with numbers and using such endowed structures as compact representations of other structures.

\newcommand{\trans}{\mathrm{tr}}
\newcommand{\ass}{\mathit{as}}

Given a $\tau$-structure $\mathfrak{A}$ and a number $n\in \mathbb{N}$, an \emph{$n$-labeling of $\mathfrak{A}$} is a function $\lambda \colon  A \to \{1,\ldots,n\}$. We will then call $(\mathfrak{A},\lambda)$ an \emph{$n$-labeled structure}. If 
$(\mathfrak{A},\lambda)$ satisfies $\lambda(\sigc^\mathfrak{A})=1$ for every constant 
$\sigc \in \tau$, we call it \emph{constant-sole}. 
Given an $n$-labeled $\tau$-structure $(\mathfrak{A},\lambda)$, its \emph{implicit representation} $\mathfrak{A}_\lambda$ is the $\tau {\uplus} \{\Bit^\lambda_0,\ldots,\Bit^\lambda_{\ell}\}$-structure with $\ell = \lceil\log_2 n \rceil$ expanding $\mathfrak{A}$ by $(\Bit^\lambda_i)^{\mathfrak{A}_\lambda} = \{a \mid \lfloor\lambda(a) / 2^i\rfloor \mbox{ is odd}\}$. 

We next define $\FOe$ formulae allowing to compare some natural number against the label of a domain element associated with some term $t$ in $\mathfrak{A}_\lambda$. Given some $m \in \{1,\ldots,n\}$, assume $b_1\cdots b_\ell \in \{0,1\}^\ell$ is the fixed-length binary representation of $m-1$ (i.e., $m-1 = \sum_{i=1}^\ell b_i\cdot 2^{\ell-i}$). 
%
Then we use the following abbreviations for $\FOe$ sentences:
\begin{eqnarray}
{}[\lambda(t) = m] & \!\!\!\!:=\!\!\!\! &\!\!\bigg(\!\!\!\smallbigwedge_{{\ 1\leq j \leq \ell}\atop{b_j = 0}}\!\!\!\! \neg \Bit^\lambda_j(t) \!\bigg) \!\wedge\! \bigg(\!\!\!\smallbigwedge_{{\ 1\leq j \leq \ell}\atop{b_j = 1}}\!\!\! \Bit^\lambda_j(t) \!\bigg),\\
{}[\lambda(t) \geq m] & \!\!\!\!:=\!\!\!\! &\!\!\!\bigwedge_{{1\leq j \leq \ell}\atop{b_j = 1}}\! \bigg( \Bit^\lambda_j(t)   \vee \! \smallbigvee_{{1 \leq i \leq j-1}\atop{b_i=0}} \Bit^\lambda_i(t) \bigg) .  
\end{eqnarray}
These formulae's 
size is polynomial in $\ell$, and, for some $\mathfrak{A}$ and a variable assignment $v\colon \!V\!\! \to\! A$, we get $\mathfrak{A}_\lambda,\hspace{-1pt}v \models [\lambda(t) \hspace{1pt}{=}\hspace{1pt} m]$ whenever $\lambda(t^{\mathfrak{A},v}) = m$; analogously for  $[\lambda(t) \geq m]$. 

For a constant-sole $(\mathfrak{A},\lambda)$, let its \emph{unfolding} $\mathfrak{A}^\lambda$ be the $\tau$-structure with $A^\lambda = \{(a,i) \mid a\in A,\ 1\leq i\leq \lambda(a)\}$ 
where, for any $\sigP \in \tau$ of arity $k$, holds $((a_1,m_1), \ldots ,(a_k,m_k)) \in \sigP{}^{\mathfrak{A}^\lambda}$ iff
$(a_1, \ldots ,a_k) \in \sigP{}^{\mathfrak{A}^\lambda}$. We define $\pi \colon A^\lambda \to A$ via $(a,m) \mapsto a$. 
Obviously, $\pi$ is a strong surjective homomorphism from $\mathfrak{A}^\lambda$ to $\mathfrak{A}$. Conversely, for any strong surjective homomorphism $h \colon \mathfrak{B} \to \mathfrak{A}$ where every $a\in A$ satisfies $|h^{-1}(a)|\leq n$ (i.e., the size of each element's pre-image is bounded by $n$) there exists some $\lambda \colon A \to \{1,\ldots,n\}$ with $\mathfrak{B} \cong \mathfrak{A}^\lambda$ (witnessed by an isomorphism $g \colon \mathfrak{B} \to \mathfrak{A}^\lambda$ such that $h = \pi \circ g$).

For variable labelings $\ass \colon V\to \{1,\ldots,n\}$, let $\ass^+(t)$ denote $\ass(t)$ if $t$ is a variable and $1$ if $t$ is a constant. We recursively define the functions $\trans^n_\ass$ mapping $\FOe$ $\tau$-formulae in negation normal form to $\FOe$ $\tau{\uplus} \{\Bit^\lambda_0,\ldots,\Bit^\lambda_{\lceil\log_2 n \rceil}\}$-formulae as follows:
\begin{eqnarray}
\trans^n_\ass (\sig\sigP(\bold{t})) &\!\!\!\!=\!\!\!\! & \sig\sigP(\bold{t}) \\
\trans^n_\ass (\neg \sig\sigP(\bold{t})) & \!\!\!\!=\!\!\!\! & \neg \sig\sigP(\bold{t}) \\
\label{eq: 5}\trans^n_\ass (t_1=t_2) & \!\!\!\!=\!\!\!\! & \left\{\begin{array}{ll} 
	t_1=t_2 & \mbox{ if } \ass^+(t_1)=\ass^+(t_2) \\
	\bot & \mbox{ otherwise } \\
\end{array}\right.\\
\label{eq: 6}\trans^n_\ass (t_1\not=t_2) & \!\!\!\!=\!\!\!\! & \left\{\begin{array}{ll} 
	t_1\not=t_2 & \mbox{ if } \ass^+(t_1)=\ass^+(t_2)  \\
	\top & \mbox{ otherwise } \\
\end{array}\right.\\
\trans^n_\ass (\varphi \wedge \psi) &\!\!\!\!=\!\!\!\! & \trans^n_\ass (\varphi) \wedge \trans^n_\ass(\psi) \\ 
\trans^n_\ass (\varphi \vee \psi) &\!\!\!\!=\!\!\!\! & \trans^n_\ass (\varphi) \vee \trans^n_\ass(\psi) \\ 
\label{eq: 11}\trans^n_\ass (\exists x.\varphi) &\!\!\!\!=\!\!\!\! & \!\!\!\bigvee_{1 \leq i \leq n}\!\!\! \exists x. [\lambda(x) \geq i] \wedge \trans^n_{\ass \cup \{x \mapsto i\}} (\varphi) \\ 
\label{eq: 12}\trans^n_\ass (\forall x.\varphi) &\!\!\!\!=\!\!\!\! & \!\!\!\bigwedge_{1 \leq i \leq n}\!\!\! \forall x. [\lambda(x) \geq i] \impl \trans^n_{\ass \cup \{x \mapsto i\}} (\varphi)\ \ \ \ \ 
\end{eqnarray}
\vspace{-8pt}

Defining $\trans^n$ by letting 
$\trans^n(\Phi) = \trans^n_\emptyset (\mathrm{NNF}(\Phi))$, we obtain a lemma stating that $\trans^n$ truthfully rewrites sentences for their evaluation over implicit representations of unfoldings.  

\begin{lemma}[restate=translemma, label=translemma, name=]
Let $\Phi$ be any $\FOe$ sentence. Then $\mathfrak{A}^\lambda \models \Phi$ exactly if
$\mathfrak{A}_\lambda \models \trans^n(\Phi)$.  
\end{lemma}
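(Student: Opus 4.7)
My plan is to prove the lemma by structural induction on the formula $\Phi$, but since $\trans^n_\ass$ carries around a variable labeling $\ass$, I first strengthen the statement to an inductive claim about subformulae under assignments. Concretely, for every subformula $\varphi$ of $\mathrm{NNF}(\Phi)$ (containing free variables $\bold{x}$), every assignment $v\colon \bold{x}\to A$ and every $\ass\colon \bold{x}\to\{1,\ldots,n\}$ with $\ass(x)\leq\lambda(v(x))$ for all $x\in\bold{x}$, I claim
$$
\mathfrak{A}^\lambda,\, v^\lambda \models \varphi \quad\Longleftrightarrow\quad \mathfrak{A}_\lambda,\, v \models \trans^n_\ass(\varphi),
$$
where $v^\lambda\colon \bold{x}\to A^\lambda$ is defined by $v^\lambda(x) := (v(x),\ass(x))$. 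The original statement then follows by specializing to the empty assignment (using constant-soleness to handle constants, since $\sigc^{\mathfrak{A}^\lambda}=(\sigc^\mathfrak{A},1)$ corresponds to $\ass^+(\sigc)=1$) and by invoking the well-known fact that $\mathrm{NNF}$ preserves equivalence.

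The base cases cover atomic and negated-atomic formulas together with the two equality cases. For $\sigP(\bold{t})$ and $\neg \sigP(\bold{t})$, the claim follows directly from the definition of $\mathfrak{A}^\lambda$: by construction $\pi\colon\mathfrak{A}^\lambda\to\mathfrak{A}$ is a \emph{strong} surjective homomorphism, so $(\bold{t}^{\mathfrak{A}^\lambda,v^\lambda})\in \sigP^{\mathfrak{A}^\lambda}$ iff $(\bold{t}^{\mathfrak{A},v})\in \sigP^{\mathfrak{A}}$, and $\mathfrak{A}_\lambda$ agrees with $\mathfrak{A}$ on the $\tau$-reduct. For $t_1=t_2$ and $t_1\neq t_2$, equality in $\mathfrak{A}^\lambda$ amounts to equality of both the $A$-component and the label component, so the translation must additionally check $\ass^+(t_1)=\ass^+(t_2)$; if this fails, $t_1=t_2$ is replaced by $\bot$ and $t_1\neq t_2$ by $\top$, exactly matching the semantics (constants are implicitly labeled $1$ via $\ass^+$, consistent with constant-soleness).

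The inductive cases for $\wedge$ and $\vee$ are immediate. The crucial step is the quantifier case. For $\exists x.\varphi$: a witness in $\mathfrak{A}^\lambda$ is a pair $(a,m)$ with $1\leq m\leq\lambda(a)$; picking the disjunct in \eqref{eq: 11} corresponding to $i=m$, the guard $[\lambda(x)\geq m]$ holds on $v[x\mapsto a]$ precisely because $\lambda(a)\geq m$ (using the stated semantics of the abbreviation), and by the induction hypothesis applied with $\ass\cup\{x\mapsto m\}$ the body translates correctly. Conversely, if some disjunct $i$ is satisfied in $\mathfrak{A}_\lambda$ by witness $a$, then $[\lambda(x)\geq i]$ ensures $(a,i)\in A^\lambda$, yielding a witness for $\exists x.\varphi$ in $\mathfrak{A}^\lambda$. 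The universal case \eqref{eq: 12} is dual.

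The main obstacle will be making the bookkeeping between $v^\lambda$, $v$ and $\ass$ watertight — in particular, ensuring that the guards $[\lambda(x)\geq i]$ in \eqref{eq: 11} and \eqref{eq: 12} correctly restrict the range of $i$ to valid labels so that the quantification over $A^\lambda$ is matched exactly by the finite disjunction/conjunction over $\{1,\ldots,n\}$, and that the translation's treatment of equality via $\ass^+$ mirrors the identity of pairs $(a,m)\in A^\lambda$ in the presence of constants. Everything else is routine.
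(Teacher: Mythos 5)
Your proposal is correct and takes essentially the same approach as the paper: both strengthen the statement to an inductive claim about subformulae in NNF under assignments, decomposing an assignment into $\mathfrak{A}^\lambda$ as a pair (an assignment into $A$ plus a labeling into $\{1,\ldots,n\}$), with identical treatment of the atomic, (in)equality, Boolean, and quantifier cases. The only cosmetic difference is that the paper works directly with a single assignment $\beta\colon V\to A^\lambda$ decomposed post hoc into $(\beta_1,\beta_2)$, whereas you start from $(v,\ass)$ with the side condition $\ass(x)\leq\lambda(v(x))$ and recombine into $v^\lambda$ — these formulations are equivalent.
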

\vspace{-6pt}

Note that 
the size of $\trans^n(\Phi)$ 
may be exponential in the size of $\Phi$. 
On the positive side, $\trans^n$ preserves (negation-)guardedness, number of variables, quantifier prefixes, and quantifier rank.


\section{Checking Homclosure Membership}\label{sec:InHomCl}

The first problem we address is determining whether a given finite structure is contained in the homclosure of the set of models (or finite models) of a given sentence $\Phi$.  

\bigskip

\probbox{
\textbf{Problem:} $\mathrm{InHomCl}_\mathrm{(fin)}$\\
\textbf{Input:} $\tau$, $\tau$-sentence $\Phi$, finite $\tau$-structure $\mathfrak{A}$.\\
\textbf{Output:} \textsc{yes}, if $\mathfrak{A} \in \homcl{\getffinmodels{\Phi}}$, \textsc{no} otherwise.  
}

\bigskip

The fact that $\Phi$ is (finitely) satisfiable iff $\mathfrak{F}_\tau \in \homcl{\getffinmodels{\Phi}}$, already yields some lower bounds for this problem's difficulty.

\begin{proposition}\label{prop:SatToInHomCl}
(Finite) satisfiability can be \textsc{LogSpace}-reduced to $\mathrm{InHomCl}_\mathrm{(fin)}$.	
\end{proposition}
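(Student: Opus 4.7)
The plan is to use the natural reduction that maps any (finite) satisfiability instance $\Phi$ (over signature $\tau$) to the $\mathrm{InHomCl}_\mathrm{(fin)}$ instance $(\tau, \Phi, \mathfrak{F}_\tau)$, and to verify both its correctness and its log-space computability. The latter is essentially immediate: $\tau$ and $\Phi$ are passed through unchanged, while $\mathfrak{F}_\tau$ is the one-element $\tau$-structure in which every constant is interpreted as the unique domain element and every predicate of arity $k$ is interpreted as the singleton tuple consisting of that element repeated $k$ times. A log-space transducer can iterate through $\tau$ once, keeping only a pointer into the input and the arity of the current symbol on its workspace, and emit this description.

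For correctness, I would verify the equivalence ``$\Phi$ is (finitely) satisfiable iff $\mathfrak{F}_\tau \in \homcl{\getffinmodels{\Phi}}$'' that the paper flags right before the proposition. In the forward direction, I invoke the footnote accompanying the definition of $\mathfrak{F}_\tau$: every $\tau$-structure admits a homomorphism into $\mathfrak{F}_\tau$, namely the constant map to the unique element of $F_\tau$, which is compatible with every constant symbol (since constants in $\mathfrak{F}_\tau$ collapse to this element) and with every predicate (since each predicate in $\mathfrak{F}_\tau$ is full). Thus any (finite) model $\mathfrak{B}$ of $\Phi$ immediately witnesses $\mathfrak{F}_\tau \in \homcl{\getffinmodels{\Phi}}$. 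In the reverse direction, $\mathfrak{F}_\tau \in \homcl{\getffinmodels{\Phi}}$ requires, by the very definition of homomorphism closure, the existence of some (finite) $\mathfrak{B} \in \getffinmodels{\Phi}$, which is exactly (finite) satisfiability.

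There is no substantive obstacle here; the argument is a short verification leveraging the paper's global assumptions that signatures contain at least one constant and structures are non-empty, which together ensure that $\mathfrak{F}_\tau$ is well-defined as a single-element structure. The same reduction works uniformly for the finite and arbitrary-model settings, since the homomorphism into $\mathfrak{F}_\tau$ witnessing membership in the homclosure exists regardless of whether $\mathfrak{B}$ is finite.
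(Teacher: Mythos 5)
Your reduction $\Phi \mapsto (\tau,\Phi,\mathfrak{F}_\tau)$, the verification that $\Phi$ is (finitely) satisfiable iff $\mathfrak{F}_\tau \in \homcl{\getffinmodels{\Phi}}$, and the observation that $\mathfrak{F}_\tau$ is log-space computable are precisely what the paper relies on (it states the equivalence just before the proposition and defers to the footnote about $\mathfrak{F}_\tau$). Correct, and essentially the same approach as the paper.
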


This shows that $\mathrm{InHomCl}_\mathrm{(fin)}$ is undecidable for any logic with an undecidable (finite) satisfiability problem, such as $\FO$ and every logic containing it, but also all $\FO$ reduction classes. 

Toward some decidable cases and corresponding optimal upper bounds, we employ and formalize the straightforward idea of ``coloring'' a model with elements of the target structure. 

\begin{definition}[extrinsic/intrinsic $\mathfrak{A}$-coloring]\label{def:exincol}
Let $\mathfrak{A}$ be a finite $\tau$-structure over the domain $A=\{1,\ldots,n\}$. 
As before, we use a signature $\rho =\{ \Bit^\lambda_1,\ldots,\Bit^\lambda_\ell \}$ of unary predicates with $\ell = \lceil \log_2 n \rceil$ to represent $n$-labeled structures. 

Given a predicate $\sigP \in \tau$ of arity $k$ and a $k$-tuple of terms 
$\bold{t}=(t_1,\ldots,t_k)$, we let $\chi^{\sigP}_\mathfrak{A}(\bold{t})$ denote the formula
\begin{equation}
	\sig\sigP(\boldsymbol{t}) \wedge \bigvee_{{(m_1,\ldots,m_k)}\in \sigP{}^\mathfrak{A}\ } \bigwedge_{i=1}^k [\lambda(t_i)=m_i].
\end{equation}
We denote by $\Omega_\mathfrak{A}$ the following $\tau{\uplus}\rho$-sentence: 
\begin{equation}
\bigwedge_{{\sigP\in \tau}\atop{k=\text{ar}(\sigP)}} \forall x_1\ldots x_k. \sig\sigP(x_1,\ldots,x_k)
\impl \chi^{\sigP}_\mathfrak{A}(x_1,\ldots, x_k). 
\end{equation}
Given a $\tau$-sentence $\Phi$ and a finite $\tau$-structure $\mathfrak{A}$, we define the \emph{extrinsic \mbox{$\mathfrak{A}$-coloring}} of $\Phi$ by
$\Phi^\mathrm{ext}_\mathfrak{A} = \Phi \wedge \Omega_\mathfrak{A}$. The \emph{intrinsic \mbox{$\mathfrak{A}$-coloring}} of $\Phi$, denoted $\Phi^\mathrm{int}_\mathfrak{A}$ is obtained from $\Phi$ by replacing, for all predicates $\sigP \in \tau$, any occurrence of $\sig\sigP(\bold{t})$ by $\chi^{\sigP}_\mathfrak{A}(\bold{t})$. 
\end{definition}

Note that both $\Phi^\mathrm{ext}_\mathfrak{A}$ and $\Phi^\mathrm{int}_\mathfrak{A}$ are of polynomial size wrt.\ $|\mathfrak{A}|$ and the size of $\Phi$. 
The sentence $\Omega_\mathfrak{A}$ projectively characterizes the class of $\tau$-structures having a homomorphism into $\mathfrak{A}$, thus $\Phi^\mathrm{ext}_\mathfrak{A}$ projectively characterizes the class of (finite) models of $\Phi$  having a homomorphism into $\mathfrak{A}$.
Also, every model of $\Phi$ having a homomorphism into $\mathfrak{A}$ can be extended to a model of $\Phi^\mathrm{int}_\mathfrak{A}$, and conversely, for every model $\mathfrak{B}$ of $\Phi^\mathrm{int}_\mathfrak{A}$, the structure $\mathfrak{C}$ obtained by setting $C:=B$, $\sigc{}^\mathfrak{C}:=\sigc{}^\mathfrak{B}$ for all constants, and $\sigP{}^\mathfrak{C} := (\chi^{\sigP}_\mathfrak{A})^\mathfrak{B}$ is a model of $\Phi$ with a homomorphism into $\mathfrak{A}$. Hence we obtain the following lemma.

\begin{lemma}[restate=lemmacoloring, label=, name=]\label{lemma:coloring}
	Let $\Phi$ be an arbitrary $\tau$-sentence and let $\mathfrak{A}$ be a finite $\tau$-struc\-ture.
The following are equivalent:
\begin{enumerate}[itemindent=0ex, leftmargin=1.8ex, itemsep=-0.5ex, labelsep=0.7ex, topsep=0ex]
	\item
	$\Phi$ has a (finite) model admitting a homomorphism into $\mathfrak{A}$.
	\item
	$\Phi^\mathrm{int}_\mathfrak{A}$ is (finitely) satisfiable.
	\item
	$\Phi^\mathrm{ext}_\mathfrak{A}$ is (finitely) satisfiable.
\end{enumerate}
\end{lemma}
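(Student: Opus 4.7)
The strategy is to prove the cycle $(1) \Rightarrow (3) \Rightarrow (2) \Rightarrow (1)$, each step amounting to a direct unfolding of Definition~\ref{def:exincol}. The pivotal observation is that in any model of $\Omega_\mathfrak{A}$ the atom $\sigP(\bold{t})$ and the formula $\chi^\sigP_\mathfrak{A}(\bold{t})$ are equivalent: the backward direction is the first conjunct of $\chi^\sigP_\mathfrak{A}$, while the forward direction is exactly what $\Omega_\mathfrak{A}$ asserts. Dually, a labeling $\lambda \colon B \to \{1,\ldots,n\}$, encoded via the bit predicates $\Bit^\lambda_i$, is the same data as a map $B \to A$, and $\mathfrak{B}_\lambda \models \Omega_\mathfrak{A}$ witnesses precisely that this map respects all relational tuples.

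For $(1) \Rightarrow (3)$, given a (finite) model $\mathfrak{B} \models \Phi$ with a homomorphism $h \colon \mathfrak{B} \to \mathfrak{A}$, I expand $\mathfrak{B}$ to $\mathfrak{B}_\lambda$ by interpreting the bit predicates so as to encode $\lambda := h$. Then $\mathfrak{B}_\lambda \models \Phi$ vacuously (the bit predicates are fresh to $\Phi$), and $\mathfrak{B}_\lambda \models \Omega_\mathfrak{A}$: for any $(b_1,\ldots,b_k) \in \sigP^\mathfrak{B}$, homomorphicity puts $(h(b_1),\ldots,h(b_k))$ into $\sigP^\mathfrak{A}$, witnessing a satisfying disjunct of $\chi^\sigP_\mathfrak{A}(b_1,\ldots,b_k)$.

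For $(3) \Rightarrow (2)$, if $\mathfrak{C} \models \Phi \wedge \Omega_\mathfrak{A}$, the pivotal equivalence above allows a straightforward structural induction on $\Phi$ to conclude $\mathfrak{C} \models \Phi^\mathrm{int}_\mathfrak{A}$, since $\Phi^\mathrm{int}_\mathfrak{A}$ is obtained from $\Phi$ by uniformly replacing each atom $\sigP(\bold{t})$ by $\chi^\sigP_\mathfrak{A}(\bold{t})$. For $(2) \Rightarrow (1)$, from $\mathfrak{D} \models \Phi^\mathrm{int}_\mathfrak{A}$ I construct a $\tau$-structure $\mathfrak{B}$ on the same domain, with the same constants, and $\sigP^\mathfrak{B} := (\chi^\sigP_\mathfrak{A})^\mathfrak{D}$. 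A routine substitution argument yields $\mathfrak{B} \models \Phi$, because reinterpreting $\sigP$ on $\mathfrak{D}$ as $\chi^\sigP_\mathfrak{A}$ exactly mirrors the syntactic replacement made in $\Phi^\mathrm{int}_\mathfrak{A}$; and the labeling $\lambda$ read off from the bit predicates of $\mathfrak{D}$ gives a homomorphism $h \colon \mathfrak{B} \to \mathfrak{A}$ via $h(b) := \lambda(b)$, because the disjunctive conjunct of $\chi^\sigP_\mathfrak{A}$ guarantees that every tuple surviving in $\sigP^\mathfrak{B}$ has its $\lambda$-image in $\sigP^\mathfrak{A}$.

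The one delicate point is constant preservation: $\Omega_\mathfrak{A}$ alone does not force $\lambda(\sigc^\mathfrak{C}) = \sigc^\mathfrak{A}$, so I tacitly enhance $\Omega_\mathfrak{A}$ with conjuncts $[\lambda(\sigc) = \sigc^\mathfrak{A}]$ for each constant $\sigc \in \tau$; this is harmless, since it only rules out colorings that could never extend to a constant-preserving homomorphism in the first place. All three constructions preserve the domain, so the ``finite'' variant of the equivalence falls out for free.
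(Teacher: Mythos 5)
Your cycle $(1)\Rightarrow(3)\Rightarrow(2)\Rightarrow(1)$ is a sound reorganization of essentially the paper's argument. Both proofs pivot on the same observation: in any model of $\Omega_\mathfrak{A}$ (or of $\Phi^\mathrm{int}_\mathfrak{A}$) the atom $\sigP(\bold{t})$ and $\chi^{\sigP}_\mathfrak{A}(\bold{t})$ agree, and one reads the candidate homomorphism off the bit predicates. The paper proves $(1)\Leftrightarrow(3)$ in one stroke by noting that $\Phi^\mathrm{ext}_\mathfrak{A}$ projectively characterizes the models of $\Phi$ with a homomorphism into $\mathfrak{A}$, and handles $(1)\Leftrightarrow(2)$ by invoking a separately proved substitution lemma (that $(\mathfrak{B},\beta)\models\varphi$ iff $(\mathfrak{B}_h,\beta)\models\varphi^\mathrm{int}_\mathfrak{A}$ whenever $h\colon\mathfrak{B}\to\mathfrak{A}$ is a homomorphism). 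Your inline structural induction for $(3)\Rightarrow(2)$ is a tidy shortcut for this single lemma, at the price of not packaging a reusable tool.

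The constant issue you flag is genuine and sharp: as literally defined, neither $\Omega_\mathfrak{A}$ nor $\Phi^\mathrm{int}_\mathfrak{A}$ constrains $\lambda(\sigc)$, so the labeling read off the bit predicates need not send $\sigc^\mathfrak{B}$ to $\sigc^\mathfrak{A}$ and hence need not be a homomorphism in the paper's sense; the paper's proof glosses over this. However, your repair is one-sided: you conjoin $[\lambda(\sigc)=\sigc^\mathfrak{A}]$ only to $\Omega_\mathfrak{A}$, which fixes $\Phi^\mathrm{ext}_\mathfrak{A}$, but the step $(2)\Rightarrow(1)$ extracts the labeling from a model of $\Phi^\mathrm{int}_\mathfrak{A}$ and needs the same constraint there. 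Concretely, take $\Phi=\sigP(\sigc)$ and $\mathfrak{A}$ with $\sigc^\mathfrak{A}\notin\sigP^\mathfrak{A}$ but $\sigP^\mathfrak{A}\neq\emptyset$: the unenhanced $\Phi^\mathrm{int}_\mathfrak{A}=\sigP(\sigc)\wedge\bigvee_{m\in\sigP^\mathfrak{A}}[\lambda(\sigc)=m]$ stays satisfiable, while no model of $\Phi$ can map $\sigc$ both to $\sigc^\mathfrak{A}$ and into $\sigP^\mathfrak{A}$; your chain then breaks precisely at its last link. The remedy is simply to apply the fix symmetrically, conjoining $\bigwedge_{\sigc\in\tau}[\lambda(\sigc)=\sigc^\mathfrak{A}]$ to $\Phi^\mathrm{int}_\mathfrak{A}$ as well, after which all three implications go through.
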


As a direct consequence, arbitrary and finite homclosure membership coincide for a wide range of fragments.

\begin{theorem}[restate=thearbsameasfin, label=thm:first-main, name=]\label{the:arb-sameas-fin}
Let $\Logic \,{\subseteq}\, \SO$ and $\Logic^\mathrm{ext}:=\{\Phi^\mathrm{ext}_\mathfrak{A}\mid \Phi \in \Logic, \mathfrak{A} \mbox{ finite}\}$
as well as $\Logic^\mathrm{int}:=\{\Phi^\mathrm{int}_\mathfrak{A}\mid \Phi \in \Logic, \mathfrak{A} \mbox{ finite}\}$.
If $\Logic^\mathrm{ext}$ or $\Logic^\mathrm{int}$ has the FMP, then $\homclfin{\getmodels{\Psi}}=\homclfin{\getfinmodels{\Psi}}$ for every $\Psi \in \Logic$.
\end{theorem}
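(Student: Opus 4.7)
The inclusion $\homclfin{\getfinmodels{\Psi}} \subseteq \homclfin{\getmodels{\Psi}}$ is immediate, since any finite model of $\Psi$ is in particular a model of $\Psi$. So the work lies in the reverse inclusion, and the plan is to route it through the coloring lemma together with the FMP hypothesis.

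Fix $\Psi \in \Logic$ and take any finite $\tau$-structure $\mathfrak{A} \in \homclfin{\getmodels{\Psi}}$. By definition, this means that $\Psi$ has some (possibly infinite) model $\mathfrak{B}$ admitting a homomorphism $h \colon \mathfrak{B} \to \mathfrak{A}$. Apply the equivalence $(1) \Leftrightarrow (2)$ (respectively $(1) \Leftrightarrow (3)$) of \Cref{lemma:coloring} in the ``arbitrary'' reading: the existence of such a $\mathfrak{B}$ and $h$ is equivalent to the satisfiability of $\Psi^\mathrm{int}_\mathfrak{A}$ (respectively $\Psi^\mathrm{ext}_\mathfrak{A}$). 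Since $\Psi^\mathrm{int}_\mathfrak{A} \in \Logic^\mathrm{int}$ (respectively $\Psi^\mathrm{ext}_\mathfrak{A} \in \Logic^\mathrm{ext}$), and the relevant class is assumed to have the FMP, satisfiability coincides with finite satisfiability; hence the sentence has a \emph{finite} model.

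Applying \Cref{lemma:coloring} once more, this time in the ``finite'' reading, we conclude that $\Psi$ possesses a \emph{finite} model $\mathfrak{B}'$ that admits a homomorphism into $\mathfrak{A}$. By definition, this places $\mathfrak{A}$ in $\homclfin{\getfinmodels{\Psi}}$, completing the proof.

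There is no real obstacle here once the coloring machinery of \Cref{lemma:coloring} is available: the proof is an essentially two-line argument that bounces the existence of a witnessing model through the colored sentence, uses FMP to pass from an arbitrary to a finite model of the colored sentence, and then bounces back. The only point worth noting is that the set-theoretic definitions of $\Logic^\mathrm{ext}$ and $\Logic^\mathrm{int}$ ensure that the coloring of any $\Psi \in \Logic$ against any finite $\tau$-structure actually lies in the class to which the FMP hypothesis applies.
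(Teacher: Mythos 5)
Your proof is correct and follows essentially the same route as the paper's: both arguments establish the nontrivial inclusion $\homclfin{\getmodels{\Psi}} \subseteq \homclfin{\getfinmodels{\Psi}}$ by translating homclosure membership to satisfiability of $\Psi^{\mathrm{int}}_\mathfrak{A}$ (or $\Psi^{\mathrm{ext}}_\mathfrak{A}$) via the coloring lemma, invoking the FMP hypothesis on $\Logic^{\mathrm{int}}$ or $\Logic^{\mathrm{ext}}$ to obtain a finite model of the colored sentence, and then applying the coloring lemma again in the finite reading to extract a finite model of $\Psi$ mapping into $\mathfrak{A}$. Your explicit mention of the trivial reverse inclusion is a harmless addition the paper leaves implicit.
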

This theorem applies to various fragments, including any (sublogic) of \DTGD{}, \GNFO{}, \TGF{}, \FOte{}, \EAAE{}, or \EAFO{}. We obtain the following complexity results.

\begin{theorem}[restate=firstmain, label=thm:first-main, name=]
$\mathrm{InHomCl}_\mathrm{(fin)}$ is complete for 
\begin{enumerate}[itemindent=0ex, leftmargin=3ex, itemsep=-0.5ex]
\item 
\textsc{2ExpTime} for \GNFO{},
\item 
\textsc{N2ExpTime} for \TGF{},
\item
\textsc{NExpTime} for
\FOte{}, 
\EAAE{},
\EAFO{}, 
\AsFO{},
\item
\textsc{NP} for \EsFO{} and 
\EpFO{}. 
\end{enumerate}

\end{theorem}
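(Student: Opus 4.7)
The plan is to derive each of the four completeness results uniformly as a match between satisfiability-hardness and satisfiability-membership for the corresponding fragment, routing the membership side through the colorings of Definition~\ref{def:exincol} via Lemma~\ref{lemma:coloring}, and the hardness side through Proposition~\ref{prop:SatToInHomCl}. Throughout, Theorem~\ref{the:arb-sameas-fin} lets me treat the finite and arbitrary variants jointly, since all fragments listed satisfy one of its sufficient FMP conditions.

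For the lower bounds, I would simply invoke the known combined-complexity hardness of (finite) satisfiability for each fragment --- \textsc{2ExpTime} for $\GNFO$, \textsc{N2ExpTime} for $\TGF$, \textsc{NExpTime} for $\FOte$ and for the prefix classes $\EAAE$, $\EAFO$, $\AsFO$, and \textsc{NP} for $\EsFO$ --- and pull each through Proposition~\ref{prop:SatToInHomCl}. The one outlier is $\EpFO$: its satisfiability is trivial, so instead I use that every $\EpFO$-sentence is already homclosed, whence $\mathrm{InHomCl}_\mathrm{(fin)}$ reduces to $\UCQ$ evaluation, which is \textsc{NP}-hard.

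For the upper bounds, Lemma~\ref{lemma:coloring} turns ``$\mathfrak{A} \in \homclfin{\getffinmodels{\Phi}}$?'' into ``is the polynomially-sized sentence $\Phi^\mathrm{int}_\mathfrak{A}$ (resp.\ $\Phi^\mathrm{ext}_\mathfrak{A}$) (finitely) satisfiable?''. For each fragment I would check that one of the two colorings stays inside the fragment and then invoke the known satisfiability upper bound. For the prefix classes ($\EAAE$, $\EAFO$, $\AsFO$, $\EsFO$) the intrinsic coloring preserves the quantifier prefix, since it rewrites only matrix atoms; for $\FOte$ it likewise introduces no new variables. For $\TGF$ the extrinsic coloring is more natural: $\Omega_\mathfrak{A}$ is a conjunction of guarded-universal sentences whose bodies are quantifier-free and whose guards are single atoms $\sigP(\bold{x})$, so $\Phi\wedge\Omega_\mathfrak{A}$ remains in $\TGF$ and inherits its \textsc{N2ExpTime} bound. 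For $\GNFO$ the extrinsic coloring also works after rewriting each conjunct $\forall\bold{x}.\sigP(\bold{x})\Rightarrow\chi^\sigP_\mathfrak{A}(\bold{x})$ as the negation of the sentence $\exists\bold{x}.(\sigP(\bold{x})\wedge\neg\chi^\sigP_\mathfrak{A}(\bold{x}))$ and absorbing the negated Bit-atoms inside $\neg\chi^\sigP_\mathfrak{A}$ into the guarded-negation pattern $\sigP(\bold{x})\wedge\neg\varphi[\bold{x}]$. Finally, $\EpFO$ needs no coloring at all: homclosedness turns membership into model checking of an existential-positive sentence, which is in \textsc{NP}.

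The delicate step I expect to demand most care is the $\GNFO$ bookkeeping --- confirming that every negated Bit-atom produced by $\chi^\sigP_\mathfrak{A}$ truly sits inside a $\GNFO$-legal guarded-negation pattern and that the whole rewriting is polynomial. Once this is settled, the \textsc{2ExpTime} upper bound follows from $\GNFO$-satisfiability and matches the hardness transferred by Proposition~\ref{prop:SatToInHomCl}; the remaining three items then follow immediately from the analogous satisfiability bounds.
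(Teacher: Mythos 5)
Your proposal matches the paper's proof essentially exactly: hardness via Proposition~\ref{prop:SatToInHomCl} from satisfiability-hardness (with the separate model-checking argument for $\EpFO$, whose satisfiability is trivial), and membership via Lemma~\ref{lemma:coloring} using extrinsic coloring for $\GNFO$ and $\TGF$ and intrinsic coloring for the remaining fragments. Your extra care in checking that the negated $\Bit$-atoms inside $\chi^\sigP_\mathfrak{A}$ stay $\GNFO$-legal (each has a single free variable, so can be guarded by, e.g., a trivial equality) is correct and fills in a detail the paper states only implicitly.
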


Next to the combined complexity addressed in the previous theorem, one might also ask for the complexity if the sentence is fixed and only the structure of interest varies. We will come back to this question in the course of \Cref{sec:characterizability}. 

We conclude with a case where the difficulty of $\mathrm{InHomCl}_\mathrm{(fin)}$ does not coincide with that of satisfiability (in the most egregious way). We recap that \TGD{} (just like \MDTGD{} and \DTGD{}) has a trivial satisfiability problem (in that every \TGD{} sentence is satisfiable and finitely satisfiable, with $\mathfrak{F}$ uniformly serving as a model), yet we present a fixed \TGD{} sentence $\Phi_\mathrm{grid}$ for which $\mathrm{InHomCl}_\mathrm{(fin)}$ is undecidable. 

\begin{definition}[restate=defgridformula, name=]
We let $\mathfrak{A}_{\mathbb{N} \times \mathbb{N}}$ denote the $\mathbb{N} \times \mathbb{N}$ grid, i.e., the $\{\sigH{},\sigV{}\}$-structure with domain $A_{\mathbb{N} \times \mathbb{N}} = \mathbb{N} \times \mathbb{N}$ that satisfies
$\sigH{}^{\mathfrak{A}_{\mathbb{N} \times \mathbb{N}}} = \{((k,\ell),(k+1,\ell)) \mid k,\ell \in \mathbb{N}\}$ as well as  
$\sigV{}^{\mathfrak{A}_{\mathbb{N} \times \mathbb{N}}} = \{((k,\ell),(k,\ell+1)) \mid k,\ell \in \mathbb{N}\}$.

Let $\Phi_\mathrm{grid}$ denote the \TGD{} sentence 
\begin{equation}
\begin{array}{l}
\Big( \forall x \exists y. \sigH{}(x,y) \Big) \wedge 
\Big( \forall x \exists y. \sigV{}(x,y) \Big) \wedge \\
\hspace{4ex}\Big( \forall xyzv. \sigH{}(x,y) \wedge \sigV{}(x,z) \wedge \sigV{}(y,v) \impl \sigH{}(z,v) \Big).
\end{array}
\end{equation}
%
%
%
For a given domino system $\mathcal{D}=(D,D,D,H,V)$, we define the finite $\{\sigH{},\sigV{}\}$-structure $\mathfrak{A}_\mathcal{D}$ as follows: $A_\mathcal{D} = D$ as well as $\sigH{}^{\mathfrak{A}_\mathcal{D}}=H$ and $\sigV{}^{\mathfrak{A}_\mathcal{D}}=V$.  
\end{definition}

Noting that every model $\mathfrak{B}$ of $\Phi_\mathrm{grid}$ allows for a homomorphism $h \colon \mathfrak{A}_{\mathbb{N} \times \mathbb{N}} \to \mathfrak{B}$,
it follows that every homomorphism from some (finite) model of $\Phi_\mathrm{grid}$ into $\mathfrak{A}_\mathcal{D}$ gives rise to a(n ultimately periodic) $\mathcal{D}$-tiling.
Thus, by \Cref{gridlemma}, both membership in $\homcl{\getmodels{\Phi_\mathrm{grid}}}$ and $\homcl{\getfinmodels{\Phi_\mathrm{grid}}}$ are undecidable problems.

\begin{theorem}[restate=undecTGD, label=corr:undecTGD, name=]
$\mathrm{InHomCl}_\mathrm{(fin)}$ is undecidable for \TGD{} (and hence for \MDTGD{} and \DTGD{}).
\end{theorem}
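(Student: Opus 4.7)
My plan is a direct reduction from the undecidable problems of \Cref{gridlemma}, using the already-introduced instantiation $\mathcal{D}=(D,D,D,H,V)$ (under which the margin conditions become trivially satisfied), to $\mathrm{InHomCl}_\mathrm{(fin)}$ for the fixed sentence $\Phi_\mathrm{grid}$ and the finite structure $\mathfrak{A}_\mathcal{D}$ defined above. Concretely, I aim to establish the two equivalences
\begin{align*}
\exists\,\mathcal{D}\text{-tiling} &\iff \mathfrak{A}_\mathcal{D} \in \homcl{\getmodels{\Phi_\mathrm{grid}}},\\
\exists\text{ ultimately periodic }\mathcal{D}\text{-tiling} &\iff \mathfrak{A}_\mathcal{D} \in \homcl{\getfinmodels{\Phi_\mathrm{grid}}},
\end{align*}
from which \Cref{gridlemma} immediately delivers undecidability for \TGD{}; as every \TGD{} sentence is syntactically also \MDTGD{} and \DTGD{}, the same $\Phi_\mathrm{grid}$ witnesses undecidability for those logics as well.

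The pivot of the argument is the claim $(\star)$: every model $\mathfrak{B}$ of $\Phi_\mathrm{grid}$ admits a homomorphism $g\colon\mathfrak{A}_{\mathbb{N}\times\mathbb{N}} \to \mathfrak{B}$. I prove $(\star)$ by induction on antidiagonals $k+\ell=n$: pick any $g(0,0)\in B$; then use $\forall x\exists y.\sigH(x,y)$ and $\forall x\exists y.\sigV(x,y)$ to iteratively define the bottom row $g(k,0)$ and left column $g(0,\ell)$; for an interior cell $(k,\ell)$ with $k,\ell\geq 1$, choose any V-successor $v$ of $y:=g(k,\ell{-}1)$, and the confluence conjunct $\sigH(x,y)\wedge\sigV(x,z)\wedge\sigV(y,v)\Rightarrow\sigH(z,v)$, applied with $x:=g(k{-}1,\ell{-}1)$ and $z:=g(k{-}1,\ell)$, forces the required $\sigH^\mathfrak{B}(z,v)$, so setting $g(k,\ell):=v$ preserves both grid edges. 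Granted $(\star)$, the forward direction of the arbitrary case is immediate ($t$ is itself a homomorphism $\mathfrak{A}_{\mathbb{N}\times\mathbb{N}}\to\mathfrak{A}_\mathcal{D}$ and $\mathfrak{A}_{\mathbb{N}\times\mathbb{N}}\models\Phi_\mathrm{grid}$) and the backward direction is just the composition $h\circ g$. For the forward direction of the finite case, I assemble a finite ``cylinder-with-tail'' model on $\{0,\ldots,k_\mathrm{init}+k_\mathrm{period}-1\}\times\{0,\ldots,\ell_\mathrm{init}+\ell_\mathrm{period}-1\}$ whose $\sigH$- and $\sigV$-edges advance the respective coordinate by one, wrapping from the last periodic position back into the start of the periodic block; this structure is directly seen to satisfy all three conjuncts of $\Phi_\mathrm{grid}$, and the ultimately periodic tiling $t$ factors through it into $\mathfrak{A}_\mathcal{D}$.

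The main obstacle is the backward direction of the finite case, since a generic composition $h\circ g$ into a finite $\mathfrak{B}$ need not be ultimately periodic. I overcome this by performing the construction of $(\star)$ \emph{canonically}: fix a total order on $B$, let $H_\mathrm{next}(b)$ and $V_\mathrm{next}(b)$ denote the $\leq$-least H- and V-successor of $b$ (which exist by the successor axioms), and define $g(k,0)$ by iterating $H_\mathrm{next}$ from $g(0,0)$, then $g(k,\ell+1):=V_\mathrm{next}(g(k,\ell))$ for every $k,\ell$; the confluence conjunct used exactly as in $(\star)$ again guarantees that $g$ remains a homomorphism. Since $H_\mathrm{next}$ is a self-map on the finite set $B$, the sequence $g(k,0)$ is eventually periodic in $k$, yielding $k_\mathrm{init},k_\mathrm{period}$; uniform bounds $\ell_\mathrm{init},\ell_\mathrm{period}$ valid for every column are obtained by taking, respectively, the maximum preperiod and the least common multiple of the orbit periods of $V_\mathrm{next}$ across all $b\in B$. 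Composing with the witness $h\colon\mathfrak{B}\to\mathfrak{A}_\mathcal{D}$ then yields an ultimately periodic $\mathcal{D}$-tiling, completing the reduction and hence the proof.
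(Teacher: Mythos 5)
Your proof is correct and follows essentially the same route as the paper: the theorem there is a one-liner invoking Lemma~\ref{gridformula} (proved in the appendix), whose items are exactly your claim $(\star)$ (every model of $\Phi_\mathrm{grid}$ receives a homomorphism from $\mathfrak{A}_{\mathbb{N}\times\mathbb{N}}$), the homomorphism-to-tiling correspondence, and, for the finite case, a canonicalized version of $(\star)$ built from fixed $\sigH$-/$\sigV$-successor functions whose eventual periodicity on a finite domain yields an ultimately periodic tiling. The only cosmetic difference is that the paper's appendix choice functions are arbitrary whereas you pick the $\leq$-least successor, and the paper truncates the grid and adds back-edges where you phrase it as a cylinder-with-tail -- both immaterial variations on the same construction.
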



\section{Checking Homclosedness}\label{sec:homclosedness}

The next problem we investigate is the 
problem of 
 determining whether a given sentence is homclosed (that is, the class of its models is closed under homomorphisms).

\bigskip

\probbox{
	\textbf{Problem:} $\mathrm{HomClosed}$\\
	\textbf{Input:} $\tau$, $\tau$-sentence $\Phi$.\\
	\textbf{Output:} \textsc{yes}, if $\getmodels{\Phi} = \homcl{\getmodels{\Phi}}$, \textsc{no} otherwise.
}

\bigskip

For the finite-model version of this problem, note that any structure has a homomorphism into some infinite structure. Thus, any non-empty class $\mathcal{C}$ of finite structures cannot be homomorphism-closed (whence the question if $\getfinmodels{\Phi} = \homcl{\getfinmodels{\Phi}}$ trivializes to the question if $\getfinmodels{\Phi} = \emptyset$). This motivates the subsequent definition, focusing on finite target structures. 

\bigskip

\probbox{
	\textbf{Problem:} $\mathrm{HomClosed}_\mathrm{fin}$\\
	\textbf{Input:} $\tau$, $\tau$-sentence $\Phi$.\\
	\textbf{Output:} \textsc{yes}, if $\getfinmodels{\Phi} = \homclfin{\getfinmodels{\Phi}}$, \textsc{no} otherwise.
}

\medskip

To begin with, obviously, both problems are trivial (i.e., the answer is always \textsc{yes}) for fragments where every sentence's model class is homomorphism-closed, such as $\EpFO$.

Next, we will see that under very mild assumptions, (finite) satisfiability within the considered logic can be reduced to $\mathrm{HomClosed}_\mathrm{(fin)}$, yielding lower bounds for the latter in many cases. To this end, we state the following technical result.

\newcommand{\minus}{\vspace{-4pt}}

\begin{lemma}[restate=SattToHomClosed, label=prop:SattToHomClosed, name=]
Let $\Logic,\Logic' \subseteq \SO$ and $f \colon \Logic'\to \Logic$ such that
\minus
\begin{itemize}[itemindent=0ex, leftmargin=1.8ex, itemsep=-0.6ex, labelsep=0.7ex, topsep=0ex]
\item (finite) unsatisfiability in $\Logic'$ is $\textsc{C}$-hard and
\item $f$ is $\textsc{C}'$-computable  and maps $\tau$-sen\-ten\-ces to $\tau\cup\{\sigU\}\cup \tau'$-sen\-ten\-ces with  {$\getffinmodels{f(\Phi)}|_{\tau \cup \{\sigU{}\}}  = \{(\mathfrak{A}, \emptyset) \mid \mathfrak{A}\in \llbracket \Phi\rrbracket_{(<\omega)}    \}$.}
\end{itemize}
Then $\mathrm{HomClosed}_\mathrm{(fin)}$ for $\Logic$ is $\textsc{C}$-hard under $\textsc{C}'$-reductions.
\end{lemma}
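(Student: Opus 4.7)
My plan is to use $f$ itself as the reduction, showing that for every $\Phi \in \Logic'$, the sentence $f(\Phi)$ is homclosed (in the relevant sense) if and only if $\Phi$ is (finitely) unsatisfiable. Since $f$ is assumed $\textsc{C}'$-computable and (finite) unsatisfiability in $\Logic'$ is $\textsc{C}$-hard, this gives $\textsc{C}$-hardness of $\mathrm{HomClosed}_\mathrm{(fin)}$ for $\Logic$ under $\textsc{C}'$-reductions.

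The forward direction is immediate. If $\Phi$ is (finitely) unsatisfiable, the right-hand side of the hypothesized identity is empty, so $\getffinmodels{f(\Phi)}|_{\tau \cup \{\sigU\}}=\emptyset$ and hence $\getffinmodels{f(\Phi)}=\emptyset$. A sentence with no (finite) models is vacuously homclosed.

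For the backward direction I argue contrapositively. Assume $\Phi$ has some (finite) model $\mathfrak{A}$. By hypothesis, $f(\Phi)$ admits a (finite) model $\mathfrak{B}$ whose $(\tau \cup \{\sigU\})$-reduct is precisely $(\mathfrak{A},\emptyset)$; in particular $\sigU^\mathfrak{B}=\emptyset$. The spoiler I pick is the one-element full structure $\mathfrak{F}_{\tau \cup \{\sigU\} \cup \tau'}$ over the entire signature of $f(\Phi)$. By the preliminaries, every structure has a homomorphism into $\mathfrak{F}_{\tau \cup \{\sigU\} \cup \tau'}$, so in particular $\mathfrak{B} \arbhom \mathfrak{F}_{\tau \cup \{\sigU\} \cup \tau'}$. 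However, in $\mathfrak{F}_{\tau \cup \{\sigU\} \cup \tau'}$ the predicate $\sigU$ holds on the unique domain element, so its $(\tau \cup \{\sigU\})$-reduct is not of the form $(\mathfrak{A}',\emptyset)$, whence $\mathfrak{F}_{\tau \cup \{\sigU\} \cup \tau'} \not\models f(\Phi)$. I have thus exhibited a (finite) model of $f(\Phi)$ mapping homomorphically to a non-model, so $f(\Phi)$ is not homclosed.

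There is no real obstacle inside the lemma proof itself; the whole argument turns on viewing $\sigU$ as an ``emptiness marker'' that every legitimate model of $f(\Phi)$ must respect and that collapsing into the full structure $\mathfrak{F}$ inevitably violates. The genuine difficulty the lemma defers to its clients will be the construction, for each target logic $\Logic$, of a concrete $f$ satisfying the model-theoretic side-condition while remaining syntactically inside $\Logic$ -- typically by relativizing $\Phi$ to the complement of $\sigU$ together with auxiliary axioms forcing $\sigU^\mathfrak{B}=\emptyset$ in all (finite) models~$\mathfrak{B}$.
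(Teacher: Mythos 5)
Your proof is correct and uses essentially the same idea as the paper: both hinge on the observation that every structure maps homomorphically into $\mathfrak{F}$, on which $\sigU$ is non-empty, while every model of $f(\Phi)$ must have $\sigU$ empty. The paper phrases the non-trivial direction by passing to the $\tau\cup\{\sigU\}$-reduct and noting it would be a non-empty homclosed class with $\sigU$ uniformly empty (impossible), whereas you argue contrapositively and exhibit the spoiler $\mathfrak{B}\arbhom\mathfrak{F}_{\tau\cup\{\sigU\}\cup\tau'}$ directly over the full signature; the latter is slightly more explicit but not a genuinely different route.
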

\minus

In particular, \Cref{prop:SattToHomClosed} shows that for any logic allowing for extending any sentence $\Phi$ into $\Phi \wedge \forall x.\neg \sigU(x)$, (finite) hom\-closed\-ness
must be as hard as (finite) unsatisfiability. As a direct consequence of this lemma, we also obtain un\-de\-cida\-bi\-li\-ty for all logics containing $\FO$, but also for all undecidable $\FOe$ prefix classes featuring a universal quantifier (such as $\AAAE$ or $\AAEe$), even if the signature does not contain constants. As in the previous section, this justifies focusing our attention to logics with a decidable satisfiability problem. However, it turns out that even for many of those,  $\mathrm{HomClosed}_\mathrm{(fin)}$ is undecidable, as we will show now. To this end, we establish a companion result to \Cref{prop:SattToHomClosed}.

\begin{lemma}[restate=SattToHomClosedCompanion, label=prop:SattToHomClosedCompanion, name=]
	Let $\Logic,\Logic' \subseteq \SO$ and $f \colon \Logic'\to \Logic$ such that
\minus
	\begin{itemize}[itemindent=0ex, leftmargin=1.8ex, itemsep=-0.6ex, labelsep=0.7ex, topsep=0ex]
		\item (finite) unsatisfiability in $\Logic'$ is $\textsc{C}$-hard and
		\item $f$ is $\textsc{C}'$-computable and maps $\tau$-sen\-ten\-ces to $\tau\cup\{\sigU\}$-sen\-ten\-ces with  $\getffinmodels{f(\Phi)} = \{(\mathfrak{A},\tilde{A}) \mid \mathfrak{A} \,{\in}\, \getffinmodels{\neg\Phi} \mbox{ or } \tilde{A} \,{\not=}\, A\}$.
	\end{itemize}
	Then $\mathrm{HomClosed}_\mathrm{(fin)}$ for $\Logic$ is $\textsc{C}$-hard under $\textsc{C}'$-reductions.
\end{lemma}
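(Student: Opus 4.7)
The strategy is to use $f$ itself as the reduction and prove that $\Phi$ is (finitely) unsatisfiable iff $f(\Phi)$ is (finitely) homclosed. Together with $\textsc{C}$-hardness of (finite) unsatisfiability in $\Logic'$ and $\textsc{C}'$-computability of $f$, this yields the desired hardness claim for $\mathrm{HomClosed}_\mathrm{(fin)}$ on $\Logic$.

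One direction is immediate: if $\Phi$ is (finitely) unsatisfiable, then every $\tau$-structure is a (finite) model of $\neg\Phi$, so by the specification of $\getffinmodels{f(\Phi)}$ every $\tau\cup\{\sigU\}$-structure is a (finite) model of $f(\Phi)$. The class of all structures is trivially closed under homomorphisms, making $f(\Phi)$ homclosed.

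For the converse, I would fix an arbitrary (finite) model $\mathfrak{B}$ of $\Phi$ and build a spoiler for $f(\Phi)$ by expanding $\mathfrak{B}$ in two different ways on the flag predicate $\sigU$: let $\mathfrak{B}^{\emptyset} := (\mathfrak{B},\emptyset)$ and $\mathfrak{B}^{B} := (\mathfrak{B},B)$. Since domains are non-empty, $\sigU^{\mathfrak{B}^{\emptyset}} = \emptyset \neq B$, so the second disjunct places $\mathfrak{B}^{\emptyset}$ in $\getffinmodels{f(\Phi)}$. Conversely, $\mathfrak{B} \models \Phi$ gives $\mathfrak{B} \notin \getffinmodels{\neg\Phi}$, while $\sigU^{\mathfrak{B}^{B}} = B$ kills the second disjunct, so $\mathfrak{B}^{B} \notin \getffinmodels{f(\Phi)}$. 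The identity map on $B$ is a homomorphism $\mathfrak{B}^{\emptyset} \to \mathfrak{B}^{B}$: it preserves all $\tau$-atoms and constant values, and the homomorphism condition on $\sigU$ is vacuously satisfied because $\sigU^{\mathfrak{B}^{\emptyset}}$ is empty. Hence $f(\Phi)$ admits a homomorphism from a model to a non-model and is not homclosed.

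There is no deep obstacle; the argument mirrors that of \Cref{prop:SattToHomClosed} but flips the polarity in $\sigU$. The asymmetry that carries the proof is that homomorphisms may only \emph{grow} the extension of the unary predicate $\sigU$, never shrink it---so an empty $\sigU$ is a ``safe'' configuration that maps homomorphically into the ``forbidden'' full-$\sigU$ configuration, producing the required spoiler exactly when $\Phi$ is satisfiable. The only care needed when applying this lemma in concrete settings is to verify that the logic $\Logic$ in question can syntactically express ``$\neg\Phi$ or $\sigU$ is not everything'' while staying inside the fragment---an issue orthogonal to the core argument above.
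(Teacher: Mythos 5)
Your proof is correct and follows essentially the same route as the paper's: the paper likewise uses $f$ as the reduction, observes that unsatisfiability of $\Phi$ makes $f(\Phi)$ valid (hence trivially homclosed), and in the satisfiable case takes the expansion $\mathfrak{A}\cdot\{\widecheck{\sigU}\}$ (your $\mathfrak{B}^{\emptyset}$) mapping by the identity into $(\mathfrak{A},A)$ (your $\mathfrak{B}^{B}$) as the spoiler. The only cosmetic difference is notation; the identification of the ``$\sigU$ can only grow'' asymmetry as the driving mechanism matches the paper's intent exactly.
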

\minus


In particular, this result shows that for any logics $\Logic$ and $\Logic'$ where one can easily find for any $\Logic'$ sentence $\Phi$ an $\Logic$ sentence equivalent to $\neg\Phi \vee \exists x.\neg \sigU{}(x)$, 
$\mathrm{HomClosed}_\mathrm{(fin)}$ for $\Logic$ is at least as hard as unsatisfiability for $\Logic'$. Picking $\AAAE$ and $\AAEe$ for $\Logic'$, this immediately entails undecidability of $\mathrm{HomClosed}_\mathrm{(fin)}$ for the prefix classes $\EEEA$ and $\EEAe$, despite decidability of satisfiability. Again, this even holds when constants are absent from the signature.
Of course, this also entails undecidability of $\mathrm{HomClosed}_\mathrm{(fin)}$ for the decidable prefix classes \EAAE{} (the Gödel class)
and \EAFO{} (the Bernays-Schönfinkel class). 


After these negative results, we will now introduce a generic method for establishing decidability and even tight complexity bounds of $\mathrm{HomClosed}_\mathrm{(fin)}$ for a wide variety of logics.

\begin{definition}[spoiler, FSP]
Let $\Phi$ be a $\tau$-sentence. A \emph{(homclosedness) spoiler} for $\Phi$ is a triple $(\mathfrak{A},\mathfrak{B},s)$, written as $\mathfrak{A}\stackrel{s}{\to}\mathfrak{B}$, where $\mathfrak{A} \in \getmodels{\Phi}$,  $\mathfrak{B} \not\in \getmodels{\Phi}$, and $s \colon \mathfrak{A} \to \mathfrak{B}$.
We call a spoiler \emph{injective}, \emph{surjective}, \emph{strong}, etc.~if $s$ is (and use the appropriate arrow symbol).
We call it \emph{finite} if both $\mathfrak{A}$ and $\mathfrak{B}$ are. A logic $\Logic$ has the \emph{finite spoiler property} (FSP) if each $\Logic$ sentence that has a spoiler also has a finite one.
\end{definition}
\minus

Obviously, $\homcl{\getmodels{\Phi}} \not= \getmodels{\Phi}$ iff $\Phi$ has a spoiler, 
and $\homclfin{\getfinmodels{\Phi}} \not= \getfinmodels{\Phi}$ iff $\Phi$ has a finite spoiler. 
Hence, homclosedness is verified or refuted by the absence or presence, respectively, of such a spoiler, the search for which can be confined using a straightforward decomposition result.

\begin{proposition}[restate=HomSeparation, label=prop:HomSeparation, name=]
For any two $\tau$-structures $\mathfrak{A}$ and $\mathfrak{B}$, any homo\-morphism $h \colon \mathfrak{A} \to \mathfrak{B}$ can be represented as the com\-po\-sition $h_2 \circ h_1$ of two homomorphisms where
\begin{itemize}[itemindent=0ex, leftmargin=1.8ex, itemsep=-0.8ex, labelsep=0.7ex, topsep=-0.6ex]
\item 
$h_1 \colon \mathfrak{A} \injhom \mathfrak{C}$ is injective and
\item 
$h_2 \colon \mathfrak{C} \str{\surhom} \mathfrak{B}$ is surjective and strong.
\end{itemize}
\vspace{-0.2ex}
The same holds when consideration is restricted exclusively to (homomorphisms between) finite structures.
\end{proposition}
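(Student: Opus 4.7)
The plan is to construct the intermediate structure $\mathfrak{C}$ as a suitable ``inflation'' of $\mathfrak{B}$ that has room to accommodate an injective copy of $\mathfrak{A}$, with $h_2$ then collapsing the extra copies back down to $\mathfrak{B}$.

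Concretely, I would set $C := A \uplus (B \setminus h(A))$ (disjoint union of sets) and define the map $h_2 \colon C \to B$ by $h_2(a) := h(a)$ for $a\in A$ and $h_2(b) := b$ for $b \in B \setminus h(A)$. The structure $\mathfrak{C}$ is then obtained as the \emph{pullback} of $\mathfrak{B}$ along $h_2$: for every predicate $\sigP \in \tau$ of arity $k$, declare
\begin{equation*}
(c_1,\dots,c_k) \in \sigP^{\mathfrak{C}} \quad \Longleftrightarrow \quad (h_2(c_1),\dots,h_2(c_k)) \in \sigP^{\mathfrak{B}},
\end{equation*}
and for every constant $\sigc \in \tau$ set $\sigc^{\mathfrak{C}} := \sigc^{\mathfrak{A}}$ (regarding $\sigc^{\mathfrak{A}}$ as an element of $C$ via the left injection). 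Finally let $h_1 \colon A \to C$ be the canonical inclusion.

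The verifications are then routine. By construction $h_1$ is injective, and $h_2$ is surjective (elements of $h(A)$ have preimages in $A$, while elements of $B\setminus h(A)$ map to themselves) and strong (the biconditional above is the definition). That $h_1$ is a homomorphism follows from $h$ being one, via $(h_2\circ h_1)(a) = h_2(a) = h(a)$, so tuples in $\sigP^{\mathfrak{A}}$ are mapped to tuples in $\sigP^{\mathfrak{B}}$ and thus, by the pullback definition, their $h_1$-images lie in $\sigP^{\mathfrak{C}}$; the constant condition $h_1(\sigc^{\mathfrak{A}}) = \sigc^{\mathfrak{C}}$ holds by choice of $\sigc^{\mathfrak{C}}$, whose compatibility with $h_2$ uses $h(\sigc^{\mathfrak{A}}) = \sigc^{\mathfrak{B}}$. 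The identity $h = h_2 \circ h_1$ then holds pointwise on $A$.

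I do not foresee a genuine obstacle: the only subtlety is ensuring that the constants of $\mathfrak{C}$ can be picked consistently so that $h_2$ respects constants while $h_1$ remains injective on the constant-interpreting part of $A$. Placing the constants inside the $A$-component of the disjoint union handles this uniformly. Finiteness is preserved since $|C| = |A| + |B \setminus h(A)| \le |A| + |B|$, so restricting the whole construction to finite structures yields the final clause of the proposition.
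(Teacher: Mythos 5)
Your construction is correct and is essentially the paper's: both build $\mathfrak{C}$ on a disjoint-union domain, endow it with the pullback of $\mathfrak{B}$'s relations along $h_2$, keep the constants from $\mathfrak{A}$, and take $h_1$ to be the inclusion. The only cosmetic difference is that you use $A \uplus (B \setminus h(A))$ while the paper uses the slightly larger $A \uplus B$ with $h_2 = h \uplus \mathrm{id}_B$; both work for identical reasons.
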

\minus

\begin{corollary}\label{lem:inj-or-ss}
Let $\Phi$ be an arbitrary $\tau$-sentence. Then $\getffinmodels{\Phi} \not= \homclffin{\getffinmodels{\Phi}}$ exactly if $\Phi$ has a (finite) injective spoiler or a (finite) strong surjective spoiler.
\end{corollary}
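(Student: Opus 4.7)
The plan is to prove the corollary by a straightforward case split on the decomposition guaranteed by \Cref{prop:HomSeparation}. The ``if'' direction is immediate, since injective homomorphisms and strong surjective homomorphisms are homomorphisms, so any (finite) injective or strong surjective spoiler is in particular a (finite) spoiler, which witnesses $\getffinmodels{\Phi} \neq \homclffin{\getffinmodels{\Phi}}$ by the observation immediately preceding \Cref{prop:HomSeparation}.

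For the ``only if'' direction, I would start from the assumption that $\getffinmodels{\Phi} \neq \homclffin{\getffinmodels{\Phi}}$. By the same preceding observation, this means $\Phi$ admits a (finite) spoiler $\mathfrak{A} \stackrel{s}{\to} \mathfrak{B}$, with $\mathfrak{A} \models \Phi$, $\mathfrak{B} \not\models \Phi$, and both structures finite (in the finite setting). I would then apply \Cref{prop:HomSeparation} to factor $s = h_2 \circ h_1$ through an intermediate (finite) structure $\mathfrak{C}$, where $h_1 \colon \mathfrak{A} \injhom \mathfrak{C}$ is injective and $h_2 \colon \mathfrak{C} \sshom \mathfrak{B}$ is strong and surjective.

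The key step is then a simple case distinction on whether $\mathfrak{C} \models \Phi$. If $\mathfrak{C} \not\models \Phi$, then the triple $(\mathfrak{A},\mathfrak{C},h_1)$ is a (finite) injective spoiler for $\Phi$, since $\mathfrak{A} \in \getmodels{\Phi}$, $\mathfrak{C} \notin \getmodels{\Phi}$, and $h_1$ is an injective homomorphism. If, on the other hand, $\mathfrak{C} \models \Phi$, then $(\mathfrak{C},\mathfrak{B},h_2)$ is a (finite) strong surjective spoiler, since $\mathfrak{C} \in \getmodels{\Phi}$, $\mathfrak{B} \notin \getmodels{\Phi}$, and $h_2$ is strong and surjective. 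In either case, one of the two types of spoiler exists, which completes the argument.

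There is essentially no obstacle here: the whole content is packed into \Cref{prop:HomSeparation}, and the corollary merely exploits that an intermediate structure $\mathfrak{C}$ either satisfies $\Phi$ or does not, yielding a spoiler of the desired shape on whichever side of the factorization the ``$\Phi$-boundary'' falls. The only thing to be mildly careful about is that, in the finite-model version, \Cref{prop:HomSeparation} explicitly guarantees that the intermediate structure $\mathfrak{C}$ can be chosen finite, so the argument goes through uniformly for both the arbitrary and finite cases.
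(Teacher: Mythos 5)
Your proof is correct and matches the argument the paper implicitly relies on: the corollary is stated directly after \Cref{prop:HomSeparation} precisely because it follows by factoring the spoiling homomorphism and case-splitting on whether the intermediate structure $\mathfrak{C}$ models $\Phi$.
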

\minus

We will first tend to the case of injective spoilers.

\begin{definition}
Let $\Phi$ be a $\tau$-sentence. Let $\tau'$ consist of copies $\sigP'$ for all predicates $\sigP \in \tau$.
Let $\Phi^-$ be the $\tau\cup\tau'$-sentence obtained from $\Phi$ by replacing every occurrence of some atom $\sigP(\bold{t})$ by $\sigP(\bold{t}) \wedge \sigP{}'(\bold{t})$. 
Then, let $\Phi^\spoil_\cl{\injhom}$ denote the $\tau {\cup} \tau' {\cup} \{\sigU{}\}$-sentence
$\neg\Phi \wedge (\Phi^-)^{\mathrm{rel}(\sigU{})}.$
\minus

A \emph{witness} of an injective spoiler $\mathfrak{A}\stackrel{s}{\injhom}\mathfrak{B}$ is a $\tau {\uplus} \tau' {\uplus} \{\sigU{}\}$-struc\-ture $\mathfrak{D}$ satisfying 
$\mathfrak{D}|_\tau = \mathfrak{B}$, 
$\sigU{}^\mathfrak{D} = s(A)$, and
$\sigP{}^\mathfrak{D} {\cap}\,{\sigP{}'}^\mathfrak{D}\! = \{(s(a_1),\ldots,s(a_k)) \mid   (a_1,\ldots,a_k) \in \sigP{}^\mathfrak{A}\}$ for every predicate $\sigP{}$ from $\tau$.
\end{definition} 
\minus

Note that any (finite) injective spoiler of some $\Phi$ has a (finite) witness but also any such witness is a (finite) model of $\Phi^\spoil_\cl{\injhom}$. Conversely, any (finite) model of $\Phi^\spoil_\cl{\injhom}$ is clearly a witness of some (finite) injective spoiler of $\Phi$. So we obtain the following.

\begin{lemma}[restate=InjFormula, label=InjFormula, name=]
$\Phi^\spoil_\cl{\injhom}$ is (finitely) satisfiable exactly if $\Phi$ has a (finite) injective spoiler. 
\end{lemma}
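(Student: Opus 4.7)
I will argue both directions by direct construction. The sentence $\Phi^\spoil_\cl{\injhom}$ is tailor-made to describe an injective spoiler: the $\tau$-reduct encodes a non-model $\mathfrak{B}$ of $\Phi$ (via the conjunct $\neg\Phi$), the unary predicate $\sigU$ marks the image of the putative embedding $s$, and the primed predicates $\sigP'$ carve out, within $\sigU$, exactly the $\tau$-structure that should play the role of $\mathfrak{A}$. The transformation $\Phi \mapsto \Phi^-$ then forces $\Phi$ to hold on this carved-out structure, since every atom $\sigP(\bold{t})$ is replaced by $\sigP(\bold{t}) \wedge \sigP'(\bold{t})$.

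\textbf{Forward direction.} Given an injective spoiler $\mathfrak{A} \stackrel{s}{\injhom} \mathfrak{B}$ of $\Phi$, I expand $\mathfrak{B}$ to a $\tau \cup \tau' \cup \{\sigU\}$-structure $\mathfrak{D}$ by setting $\sigU^\mathfrak{D} := s(A)$ and $(\sigP')^\mathfrak{D} := \{(s(a_1),\ldots,s(a_k)) \mid (a_1,\ldots,a_k) \in \sigP^\mathfrak{A}\}$ for every $k$-ary $\sigP \in \tau$. Since $s$ is a homomorphism, $(\sigP')^\mathfrak{D} \subseteq \sigP^\mathfrak{B} = \sigP^\mathfrak{D}$, so $\sigP^\mathfrak{D} \cap (\sigP')^\mathfrak{D} = (\sigP')^\mathfrak{D}$, confirming that $\mathfrak{D}$ is a witness. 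Now $\mathfrak{D}|_\tau = \mathfrak{B} \not\models \Phi$ yields $\mathfrak{D} \models \neg\Phi$, and every constant $\sigc$ satisfies $\sigc^\mathfrak{D} = s(\sigc^\mathfrak{A}) \in \sigU^\mathfrak{D}$. By the semantics of relativization, $\mathfrak{D} \models (\Phi^-)^{\mathrm{rel}(\sigU)}$ reduces to $\indss{\mathfrak{D}}{\sigU} \models \Phi^-$. The map $s \colon A \to s(A)$ is a bijection under which $\sigP^\mathfrak{A}$ is sent onto $\sigP^{\indss{\mathfrak{D}}{\sigU}} \cap (\sigP')^{\indss{\mathfrak{D}}{\sigU}}$, so the evaluation of each $\sigP(\bold{t}) \wedge \sigP'(\bold{t})$ of $\Phi^-$ on $\indss{\mathfrak{D}}{\sigU}$ matches the evaluation of $\sigP(\bold{t})$ in $\Phi$ on $\mathfrak{A}$. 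Hence $\indss{\mathfrak{D}}{\sigU} \models \Phi^-$ iff $\mathfrak{A} \models \Phi$, which holds by assumption.

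\textbf{Backward direction and main obstacle.} Conversely, given $\mathfrak{D} \models \Phi^\spoil_\cl{\injhom}$, I set $\mathfrak{B} := \mathfrak{D}|_\tau$ (so $\mathfrak{B} \not\models \Phi$) and define a $\tau$-structure $\mathfrak{A}$ with domain $\sigU^\mathfrak{D}$, constants inherited from $\mathfrak{D}$ (well-defined since relativization enforces $\sigc^\mathfrak{D} \in \sigU^\mathfrak{D}$), and $\sigP^\mathfrak{A} := \sigP^\mathfrak{D} \cap (\sigP')^\mathfrak{D} \cap (\sigU^\mathfrak{D})^{\text{ar}(\sigP)}$. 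The inclusion $s \colon \mathfrak{A} \hookrightarrow \mathfrak{B}$ is trivially injective and a homomorphism (constants and predicates are preserved by construction). The same correspondence as in the forward direction yields $\mathfrak{A} \models \Phi$ from $\indss{\mathfrak{D}}{\sigU} \models \Phi^-$, so $(\mathfrak{A}, \mathfrak{B}, s)$ is an injective spoiler. Finiteness is preserved in both directions: the forward witness shares $\mathfrak{B}$'s domain, and $\mathfrak{A}, \mathfrak{B}$ are obtained from $\mathfrak{D}$ by reduct and restriction. The only mildly delicate point is verifying that the ``carving'' via $\sigP \wedge \sigP'$ on $\indss{\mathfrak{D}}{\sigU}$ faithfully reinterprets $\sigP$ on $\mathfrak{A}$ across all Boolean combinations and quantifiers of $\Phi$; this follows uniformly because $\Phi \mapsto \Phi^-$ is a purely atomic-level substitution that commutes with the rest of the formula structure.
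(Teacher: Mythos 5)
Your proof is correct and takes essentially the same approach as the paper's: the forward direction constructs the witness structure $\mathfrak{D}$ exactly as the paper defines it, and the backward direction extracts $\mathfrak{A}$, $\mathfrak{B}$, and $s$ from a model of $\Phi^\spoil_\cl{\injhom}$ in the same way, with the key observation in both being that the atomic substitution $\sigP(\bold{t}) \mapsto \sigP(\bold{t}) \wedge \sigP'(\bold{t})$, combined with relativization to $\sigU$, transports modelhood of $\Phi$ between $\mathfrak{A}$ and $\indss{\mathfrak{D}}{\sigU}$.
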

\minus

%
%
%

We next tend to the case of strong surjective spoilers, focusing on the case where  $\Phi$ is from $\FOe$. Before, note that for any equality-free logic (in particular any fragment of $\FO$), sentences are naturally preserved under strong surjective homomorphisms. Thus, for any such logic,
$\homclffin{\getffinmodels{\Phi}} = \getffinmodels{\Phi}$ holds precisely if $\Phi^\spoil_\cl{\injhom}$ is (finitely) unsatisfiable.

\renewcommand{\hslash}{{}h'}

\begin{lemma}[restate=squeeze, name=]
Let $\mathfrak{B}$ and $\mathfrak{A}$ be $\tau$-structures, let $h\colon\mathfrak{B} \sshom \mathfrak{A}$ be a strong surjective homomorphism. Let $n$ be a natural number. Then there exists a function $\lambda\colon A \to \{1,\ldots,n\}$ such that
\minus
\begin{itemize}[itemindent=0ex, leftmargin=1.8ex, itemsep=-0.9ex, labelsep=0.7ex, topsep=0ex]
\item
$\!$there is a strong surjective homomorphism $\hslash\colon\mathfrak{B} \sshom \mathfrak{A}^\lambda$ satisfying $h = \pi \circ \hslash$, and
\item
$\!\mathfrak{B}\!$ and $\!\mathfrak{A}^\lambda\!$ satisfy$\!$ the$\!$ same $\!\FO_\text{=}\!$ sentences$\!$ of$\!$ quantifier$\!$ rank $\!n$.
\end{itemize}
\end{lemma}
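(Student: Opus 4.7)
The plan is to choose $\lambda\colon A \to \{1,\ldots,n\}$ by setting $\lambda(a) := \min(|h^{-1}(a)|, n)$: this truncates each fiber cardinality of $h$ at exactly the threshold visible to formulas of quantifier rank $n$. For constants, constant-solenity forces $\lambda(\sigc^{\mathfrak{A}}) = 1$, which is consistent with the above choice in the standard case $h^{-1}(\sigc^{\mathfrak{A}}) = \{\sigc^{\mathfrak{B}}\}$.

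To construct $h'$, fix for each $a \in A$ a partition $h^{-1}(a) = P_{a,1} \uplus \cdots \uplus P_{a,\lambda(a)}$ into $\lambda(a)$ non-empty parts (possible because $\lambda(a) \leq |h^{-1}(a)|$), with $\sigc^{\mathfrak{B}} \in P_{\sigc^{\mathfrak{A}},1}$ for each constant. Defining $h'(b) := (h(b), i)$ whenever $b \in P_{h(b), i}$ yields $\pi \circ h' = h$ by construction; $h'$ is surjective by non-emptiness of each $P_{a,i}$; and since the relations of $\mathfrak{A}^\lambda$ are the $\pi$-pullbacks of those of $\mathfrak{A}$, the strength of $h$ transfers to $h'$ via the chain $(h'(b_1),\ldots,h'(b_k)) \in \sigP^{\mathfrak{A}^\lambda} \Leftrightarrow (h(b_1),\ldots,h(b_k)) \in \sigP^{\mathfrak{A}} \Leftrightarrow (b_1,\ldots,b_k) \in \sigP^{\mathfrak{B}}$.

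The central step is $\mathfrak{B} \equiv_n \mathfrak{A}^\lambda$, which I would obtain via the $n$-round Ehrenfeucht--Fraïssé game. Duplicator plays so as to maintain the invariant that after $k$ rounds with tokens $b_1,\ldots,b_k \in B$ and $c_1,\ldots,c_k \in A^\lambda$, (i) $h(b_i) = \pi(c_i)$ for all $i$, and (ii) $b_i = b_j$ if and only if $c_i = c_j$. Condition (i) combined with $h$ strong and the pullback structure of $\mathfrak{A}^\lambda$ matches all relational atomic types; (ii) matches equality; constants, paired at round $0$, comply with the invariant because $\pi(\sigc^{\mathfrak{A}^\lambda}) = \sigc^{\mathfrak{A}} = h(\sigc^{\mathfrak{B}})$. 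When Spoiler picks a fresh $b \in B$ with $h(b) = a$, Duplicator needs an unused copy $(a,i) \in A^\lambda$: writing $m_a$ for the number of distinct prior picks in $h^{-1}(a)$, the count of unused copies is $\lambda(a) - m_a > 0$, because either $\lambda(a) = n > k \geq m_a$, or $\lambda(a) = |h^{-1}(a)| \geq m_a + 1$ (as $b$ itself is an additional preimage). Spoiler's picks in $\mathfrak{A}^\lambda$ are handled symmetrically: for a fresh copy $(a,i)$, the same counting guarantees that $h^{-1}(a) \setminus \{b_1,\ldots,b_k\}$ is non-empty, so Duplicator can respond.

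The main obstacle is the interaction with constants: if $|h^{-1}(\sigc^{\mathfrak{A}})| > 1$, the rigid value $\lambda(\sigc^{\mathfrak{A}}) = 1$ imposed by constant-solenity blocks the counting argument above, so the analysis must either confine attention to $h$ that are injective over constants, or rely on a mild relaxation permitting additional copies of constant elements (with $\sigc^{\mathfrak{A}^\lambda}$ still interpreted as $(\sigc^{\mathfrak{A}},1)$). Apart from this, the proof is a clean two-sided counting argument whose point is that the threshold $n$ is precisely what the EF game of rank $n$ can detect.
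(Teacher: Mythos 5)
Your proof takes essentially the same route as the paper's: the same $\lambda(a)=\min(n,|h^{-1}(a)|)$, the same construction of $\hslash$ by partitioning fibers, and the same $n$-round Ehrenfeucht--Fra\"iss\'e game with Duplicator replying from $h^{-1}(\pi(e))$ or $\pi^{-1}(h(e))$. Your concern about constants is well-taken but not a defect in your argument relative to the paper: the paper's definition of $\mathfrak{A}^\lambda$ is indeed stated only for constant-sole $(\mathfrak{A},\lambda)$, yet its own proof silently uses the same $\lambda$ without verifying constant-solenity, and the relaxation you propose (interpret $\sigc^{\mathfrak{A}^\lambda}$ as $(\sigc^{\mathfrak{A}},1)$ and put $\sigc^{\mathfrak{B}}$ in the first block $P_{\sigc^{\mathfrak{A}},1}$) is exactly the right fix --- the game argument, the homomorphism conditions, and the counting all go through unchanged under it.
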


\begin{corollary}
If an $\FOe$ sentence $\Phi$ of quantifier rank $n$ has a strong sur\-jec\-tive spoiler, then it has a spoiler of the form $\mathfrak{A}^\lambda \stackrel{\pi}{\sshom} \mathfrak{A}$ for some $\lambda \colon A \to \{1,\ldots,n\}$ (which is automatically strong and surjective). 
\end{corollary}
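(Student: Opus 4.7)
The plan is to derive the corollary as a direct application of the preceding \textbf{squeeze} lemma. Suppose $\Phi$ is an $\FOe$ sentence of quantifier rank $n$ and has a strong surjective spoiler $\mathfrak{B} \stackrel{h}{\sshom} \mathfrak{A}$; thus $\mathfrak{B} \models \Phi$, $\mathfrak{A} \not\models \Phi$, and $h$ is strong and surjective. The aim is to replace the potentially arbitrary structure $\mathfrak{B}$ by a canonical unfolding $\mathfrak{A}^\lambda$ while retaining all the properties that matter, namely: still being a model of $\Phi$, and still admitting a strong surjective homomorphism onto $\mathfrak{A}$.

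First, I would apply the squeeze lemma to $h\colon \mathfrak{B}\sshom \mathfrak{A}$ with the parameter chosen to be $n$, the quantifier rank of $\Phi$. This yields a labeling $\lambda\colon A \to \{1,\ldots,n\}$ together with a strong surjective homomorphism $h'\colon \mathfrak{B} \sshom \mathfrak{A}^\lambda$ factoring $h$ as $\pi \circ h'$, and moreover the guarantee that $\mathfrak{B}$ and $\mathfrak{A}^\lambda$ satisfy exactly the same $\FOe$ sentences of quantifier rank at most $n$.

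Second, I would use the elementary equivalence (up to quantifier rank $n$) to transfer satisfaction of $\Phi$ from $\mathfrak{B}$ to $\mathfrak{A}^\lambda$: since $\mathfrak{B}\models \Phi$ and $\Phi$ has quantifier rank $n$, the squeeze lemma's second bullet gives $\mathfrak{A}^\lambda \models \Phi$. Combined with the already-recorded fact (from the discussion introducing $\mathfrak{A}^\lambda$) that $\pi\colon \mathfrak{A}^\lambda \sshom \mathfrak{A}$ is automatically a strong surjective homomorphism, and with the hypothesis $\mathfrak{A}\not\models\Phi$, the triple $\mathfrak{A}^\lambda \stackrel{\pi}{\sshom} \mathfrak{A}$ is precisely a strong surjective spoiler of the claimed form.

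There is no real obstacle here; the only point that deserves care is checking that the quantifier rank of $\Phi$ is the correct parameter to feed into the squeeze lemma, and that the lemma's elementary-equivalence conclusion is strong enough (quantifier rank $n$, not merely $n-1$) to preserve satisfaction of $\Phi$ itself. Both hold by direct inspection.
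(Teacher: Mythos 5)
Your proof is correct and takes exactly the intended route: apply the squeeze lemma with parameter $n$ to factor the given spoiler's homomorphism through an unfolding $\mathfrak{A}^\lambda$, use the resulting elementary equivalence up to quantifier rank $n$ to transfer $\Phi$-satisfaction from $\mathfrak{B}$ to $\mathfrak{A}^\lambda$, and observe that $\pi\colon \mathfrak{A}^\lambda \sshom \mathfrak{A}$ is then the desired strong surjective spoiler. The paper leaves this corollary without an explicit written proof precisely because this is the one-step derivation from the squeeze lemma; you have supplied it faithfully.
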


Thus, in our quest to detect the existence of strong sur\-jec\-tive spoilers, we can focus on spoilers of this specific type.

\newcommand{\squeezeabit}{\vspace{-4.5pt}}

\begin{definition}
For a $\FOe$ sentence $\Phi$, let $\Phi^{\spoil}_\cl{\sshom}$ denote $\bot$ if $\Phi$ is equality-free, otherwise it denotes the sentence 
$\neg\Phi \wedge \trans^n_\emptyset(\Phi)$ where $n$ is the quantifier rank of $\Phi$.
\squeezeabit

Given a strong surjective spoiler of $\Phi$ of the form $\mathfrak{A}^\lambda \stackrel{\pi}{\sshom} \mathfrak{A}$, its \emph{witness} is the  $\tau{\uplus}\{\Bit^\lambda_0,\ldots,\Bit^\lambda_{\lceil\log_2 n \rceil}\}$-structure $\mathfrak{A}_\lambda$.
\end{definition}
\squeezeabit

We note that $\Phi^\spoil_\cl{\sshom}$ may be of exponential size in $\Phi$.
With help of \Cref{translemma} we obtain that, on the one hand, the witness of a (finite) $\Phi$-spoiler of the form $\mathfrak{A}^\lambda \stackrel{\smash{\pi}}{\sshom} \mathfrak{A}$ is a (finite) model of $\Phi^\spoil_\cl{\sshom}$, and, on the other hand, that any (finite) model of $\Phi^\spoil_\cl{\sshom}$ is isomorphic to a witness of some $\Phi$-spoiler of the form $\mathfrak{A}^\lambda \stackrel{\smash{\pi}}{\sshom} \mathfrak{A}$.
This gives rise to the following characterization.

\begin{lemma}\label{lem:bigspoiler}
$\Phi^\spoil_\cl{\sshom}$ is (finitely) satisfiable exactly if $\Phi$ has a (finite) strong surjective spoiler. 
\end{lemma}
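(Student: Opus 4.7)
The plan is to prove the equivalence by a case split on whether $\Phi$ contains equality, and then, in the interesting case, by translating between strong surjective spoilers and models of $\Phi^\spoil_\cl{\sshom}$ via the unfolding/implicit-representation correspondence. The key ingredients are already in place: the preceding Corollary (to the squeeze lemma) canonicalizes every strong surjective spoiler into one of the form $\mathfrak{A}^\lambda \stackrel{\pi}{\sshom} \mathfrak{A}$ with $\lambda\colon A\to\{1,\ldots,n\}$, and \Cref{translemma} establishes the semantic bridge $\mathfrak{A}^\lambda \models \Phi \iff \mathfrak{A}_\lambda \models \trans^n(\Phi)$.

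In the equality-free case, $\Phi^\spoil_\cl{\sshom}=\bot$ is unsatisfiable, while $\Phi$ admits no strong surjective spoiler because equality-free $\FO$ sentences are preserved under strong surjective homomorphisms (as already observed just before the squeeze lemma); both sides of the equivalence are thus vacuously false. Otherwise $\Phi^\spoil_\cl{\sshom}=\neg\Phi\wedge\trans^n(\Phi)$ with $n$ the quantifier rank of $\Phi$.

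For the backward direction, I would start from a (finite) strong surjective spoiler, invoke the Corollary to obtain a spoiler $\mathfrak{A}^\lambda \stackrel{\pi}{\sshom} \mathfrak{A}$ with $\lambda\colon A\to\{1,\ldots,n\}$, and verify that its witness $\mathfrak{A}_\lambda$ models $\Phi^\spoil_\cl{\sshom}$: it satisfies $\neg\Phi$ because $\Phi$ is a $\tau$-sentence and $\mathfrak{A}_\lambda|_\tau=\mathfrak{A}\not\models\Phi$, and it satisfies $\trans^n(\Phi)$ by \Cref{translemma} applied to $\mathfrak{A}^\lambda\models\Phi$. Finiteness is preserved throughout.

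For the forward direction, I would take an arbitrary (finite) model $\mathfrak{M}$ of $\Phi^\spoil_\cl{\sshom}$, set $\mathfrak{A}:=\mathfrak{M}|_\tau$, and read off a labeling $\lambda$ from the $\Bit^\lambda_i$-predicates so that $\mathfrak{M}=\mathfrak{A}_\lambda$. Then $\mathfrak{A}\not\models\Phi$ (as $\Phi$ is a $\tau$-sentence and $\mathfrak{M}\models\neg\Phi$), and \Cref{translemma} together with $\mathfrak{M}\models\trans^n(\Phi)$ gives $\mathfrak{A}^\lambda\models\Phi$; since $\pi\colon\mathfrak{A}^\lambda\sshom\mathfrak{A}$ is automatically strong and surjective, $\mathfrak{A}^\lambda\stackrel{\pi}{\sshom}\mathfrak{A}$ is a strong surjective spoiler of $\Phi$, finite when $\mathfrak{M}$ is. The only subtlety, and the main obstacle to clean bookkeeping, is to guarantee that the $\lambda$ extracted from $\mathfrak{M}$ is constant-sole and takes values in $\{1,\ldots,n\}$ so that the unfolding $\mathfrak{A}^\lambda$ is literally the one in the Corollary; this can be arranged by a harmless normalization of $\mathfrak{M}$ since only the truth values of $[\lambda(t)\geq i]$ for $i\in\{1,\ldots,n\}$ are referenced by $\trans^n$, leaving the remaining bit configurations semantically inert.
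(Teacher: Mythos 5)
Your proof takes the paper's route: both directions go through the witness correspondence via the Corollary to the squeeze lemma and \Cref{translemma}, exactly as the paper sketches it. One caveat, though, about the forward direction. The ``harmless normalization'' you invoke is justified by your stated rationale only for the range concern: if a bit pattern encodes some $m>n$, recoding it as $n$ leaves every $[\lambda(t)\geq i]$ with $i\leq n$ unchanged, and $\trans^n$ queries nothing else. That rationale does \emph{not} cover making the extracted labeling constant-sole. Forcing $\lambda(\sigc^{\mathfrak{A}})$ down to $1$ from, say, $2\leq n$ flips $[\lambda(x)\geq 2]$ at assignments sending $x$ to $\sigc^{\mathfrak{A}}$, and such subformulae do appear in $\trans^n(\Phi)$. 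Concretely, take $\Phi=\exists x\exists y.\,(x{=}\sigc\wedge x{\neq} y)$ (quantifier rank $2$) and the one-element $\{\sigc\}$-model $\mathfrak{M}$ whose single element is $\sigc^{\mathfrak{M}}$ with bit pattern encoding label $2$. Then $\mathfrak{M}\models \neg\Phi\wedge\trans^2(\Phi)$, but after normalizing $\lambda(\sigc^{\mathfrak{A}})$ to $1$ the disjunct of $\trans^2(\Phi)$ with $i=1,j=2$ becomes false and satisfaction is lost.

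The good news is that the constant-sole normalization is unnecessary and should simply be omitted. The unfolding $\mathfrak{A}^\lambda$, the convention $\sigc^{\mathfrak{A}^\lambda}=(\sigc^{\mathfrak{A}},1)$, and the projection $\pi$ all make sense for an arbitrary labeling $\lambda\colon A\to\{1,\ldots,n\}$, $\pi$ remains strong and surjective, and an inspection of the proof of \Cref{translemma} shows that constant-soleness is never used beyond this convention. Indeed, the paper's own proof of the squeeze lemma sets $\lambda(a)=\min(n,|h^{-1}(a)|)$, which need not be $1$ at constants, so the restriction is already honored only in the breach. With the constant-sole step dropped, the forward direction closes cleanly once the labels are capped at $n$.
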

\squeezeabit

As a next step, we combine the obtained characterizations.


\begin{definition}
Given a $\FOe$ sentence $\Phi$, let 
$\Phi^\spoil = \Phi^\spoil_\cl{\injhom} \vee \Phi^\spoil_\cl{\sshom}$.
\end{definition}
\squeezeabit

We find that $\Phi^\spoil$ is in \GNFO{}, \TGF{}, and $\FOte$ 
whenever $\Phi$ is; it can be easily rewritten into \EAAE{} whenever $\Phi$ is in \EEAA{} or \AAEE;
and into \EAFO{} whenever $\Phi$ is in \EsFO{} or \AsFO. Thanks to \Cref{InjFormula}, \Cref{lem:bigspoiler}, and \Cref{prop:HomSeparation}, 
$\Phi^\spoil$ is (finitely) satisfiable whenever a (finite) spoiler of $\Phi$ exists. So we obtain the following.

\begin{theorem}\label{lem:arbspoiler}
$\Phi$ is (finitely) homclosed if and only if $\Phi^\spoil$ is (finitely) unsatisfiable.
\end{theorem}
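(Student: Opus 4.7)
The plan is to assemble this theorem by chaining together the characterizations already established for the two spoiler types. Start from the definition: $\Phi$ is (finitely) homclosed exactly when $\getmodels{\Phi} = \homcl{\getmodels{\Phi}}$ (respectively the finite-model counterpart), which by the observation just after the definition of spoilers holds precisely when no (finite) spoiler of $\Phi$ exists.

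Next, I would invoke \Cref{lem:inj-or-ss}, which is the decomposition consequence of \Cref{prop:HomSeparation}, to replace ``no (finite) spoiler'' by the conjunction ``no (finite) injective spoiler \emph{and} no (finite) strong surjective spoiler''. At this point the two disjuncts of $\Phi^\spoil$ come in independently: by \Cref{InjFormula}, the absence of a (finite) injective spoiler is equivalent to (finite) unsatisfiability of $\Phi^\spoil_\cl{\injhom}$; by \Cref{lem:bigspoiler}, the absence of a (finite) strong surjective spoiler is equivalent to (finite) unsatisfiability of $\Phi^\spoil_\cl{\sshom}$.

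Finally, since (finite) unsatisfiability distributes over disjunction in the obvious way — a disjunction $\Phi^\spoil_\cl{\injhom} \vee \Phi^\spoil_\cl{\sshom}$ has no (finite) model iff neither disjunct has one — the two unsatisfiability conditions together are equivalent to (finite) unsatisfiability of $\Phi^\spoil$, yielding the claim.

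There is essentially no obstacle here: the genuine technical work already lives in \Cref{InjFormula} (construction of the witness structure for injective spoilers) and \Cref{lem:bigspoiler} (which relies on the squeezing lemma, \Cref{translemma}, and the careful definition of $\Phi^\spoil_\cl{\sshom}$ using $\trans^n$), together with \Cref{prop:HomSeparation}. The only item worth remarking on in the write-up is that both equivalences hold uniformly in the ``arbitrary'' and ``finite'' reading, so the arbitrary-model and finite-model versions of the theorem are obtained simultaneously without modification.
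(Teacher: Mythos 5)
Your proof is correct and follows exactly the route the paper takes: the authors themselves justify the theorem simply by citing \Cref{InjFormula}, \Cref{lem:bigspoiler}, and \Cref{prop:HomSeparation} (via \Cref{lem:inj-or-ss}), with the disjunctive structure of $\Phi^\spoil$ doing the rest. Your write-up spells this chain of biconditionals out more explicitly and correctly notes that all three ingredients carry a uniform ``(finitely)'' qualifier, so both readings come for free.
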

\squeezeabit

Unfortunately, the size of $\Phi^\spoil$ is exponential in $\Phi$ and hence too big to directly yield tight complexity results. However, whenever we know that $\Phi^\spoil$ belongs to a fragment with the finite model property, \Cref{lem:arbspoiler} allows us to establish that the existence of arbitrary spoilers and finite spoilers coincide.   

\begin{theorem}\label{thm:FSP}
$\GNFO$, $\TGF$, $\FOte$, $\AsFO$, $\EsFO$, \EEAA{}, and \AAEE{} have the finite spoiler property.
Thus, $\mathrm{HomClosed}$ and $\mathrm{HomClosed}_\mathrm{fin}$ coincide for these fragments. 
\end{theorem}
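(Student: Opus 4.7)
\emph{Proof plan.} The plan is to combine the previous theorem (homclosedness $\Leftrightarrow$ unsatisfiability of $\Phi^\spoil$) with the syntactic observations stated in the paragraph right before it. For each fragment $\Logic$ in the list, I would show that whenever $\Phi \in \Logic$ the sentence $\Phi^\spoil$ lies in a fragment enjoying the finite model property (FMP); the FSP then drops out of the equivalence chain
$\Phi \text{ has a spoiler} \Leftrightarrow \Phi^\spoil \text{ satisfiable} \Leftrightarrow \Phi^\spoil \text{ finitely satisfiable} \Leftrightarrow \Phi \text{ has a finite spoiler}.$

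First I would dispatch the cases $\Logic\in\{\GNFO, \TGF, \FOte\}$, which are FMP-closed themselves (per the summary figure of the paper). Here the task is to check that both disjuncts of $\Phi^\spoil$ remain in $\Logic$. For $\Phi^\spoil_\cl{\injhom} = \neg\Phi \wedge (\Phi^-)^{\mathrm{rel}(\sigU)}$, the duplication $\sigP(\bold{t}) \mapsto \sigP(\bold{t}) \wedge \sigP'(\bold{t})$ that defines $\Phi^-$ manifestly preserves (negation-)guardedness and the variable bound (it just augments guards or negated atoms by a second atom over the same variables), and the Section 3 remark already notes that relativization preserves membership in all these fragments; the outer $\neg\Phi$ is handled by pushing the negation in (for $\GNFO$ this requires that guarded negation already absorbs $\neg\Phi$, which it does since $\Phi$ itself is a sentence). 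For $\Phi^\spoil_\cl{\sshom}$, the Section 3 observation that $\trans^n$ preserves (negation-)guardedness and the number of variables does the job.

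Next I would handle the prefix classes. For $\Logic\in\{\EsFO, \AsFO\}$, writing $\Phi$ in the stated prefix form and computing $\Phi^\spoil$, the conjunct $\neg\Phi$ flips the quantifier pattern between $\Ex^*$ and $\Al^*$, $(\Phi^-)^{\mathrm{rel}(\sigU)}$ stays within the same prefix class (relativization introduces only one $\sigU$-guard per quantifier), and $\trans^n$ preserves the quantifier prefix; after prenexing the conjunction and outer disjunction one obtains a sentence in $\EAFO$ (the Bernays--Schönfinkel class), which has FMP. For $\Logic\in\{\EEAA,\AAEE\}$, the analogous prenexing of $\neg\Phi \wedge (\Phi^-)^{\mathrm{rel}(\sigU)} \vee \neg\Phi \wedge \trans^n(\Phi)$ yields an $\EAAE$ sentence (the Gödel class), again FMP.

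The main obstacle is the bookkeeping in the last two paragraphs: each of the auxiliary constructions ($\Phi^-$, relativization, and $\trans^n$) interacts slightly differently with each fragment, and for the prefix classes one has to argue that after (i) taking the negation, (ii) relativizing, (iii) applying $\trans^n$, and (iv) prenexing the final disjunction, the result still fits into the claimed target prefix class. I would organize this as a small table listing, for each input fragment, the shape of $\Phi^\spoil_\cl{\injhom}$ and $\Phi^\spoil_\cl{\sshom}$ after prenexing, and then invoke the known FMP for the resulting fragment. Once FSP is established, the second sentence of the theorem is immediate: since $\Phi$ has a spoiler iff it has a finite one, homclosedness and finite homclosedness coincide for every sentence in these fragments, hence so do the decision problems $\mathrm{HomClosed}$ and $\mathrm{HomClosed}_\mathrm{fin}$.
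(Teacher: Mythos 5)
Your proposal is essentially the paper's own argument: the paper derives the FSP for exactly these fragments by noting (in the remark preceding the theorem) that $\Phi^\spoil$ lands in or rewrites into an FMP-enjoying fragment ($\GNFO$, $\TGF$, $\FOte$ themselves; $\EAFO$ for $\EsFO$/$\AsFO$; $\EAAE$ for $\EEAA$/$\AAEE$), and then reads off FSP from the equivalence ``$\Phi$ has a (finite) spoiler iff $\Phi^\spoil$ is (finitely) satisfiable.'' Your bookkeeping of how $\Phi^-$, relativization, and $\trans^n$ interact with each fragment is exactly the sanity check the paper compresses into that preceding remark, and your final sentence (spoiler iff finite spoiler $\Rightarrow$ the two decision problems coincide) matches the paper's conclusion.
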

\squeezeabit
%
%
This justifies, for all the mentioned fragments, to focus on the case of finite spoilers. To this end, we introduce a very specific kind of strong surjective homomorphism, which only merges two ``indistinguishable'' domain elements. 

\begin{definition}[monomerge]
An \emph{monomerge} is a surjective strong homomorphism $h\colon\mathfrak{A} \sshom \mathfrak{B}$ satisfying $B = A \setminus \{a\}$ for some $a \in A$ and $h(a')=a'$ for every $a' \not= a$. We call $a$ the \emph{merge source} and $h(a)$ the \emph{merge target} of $h$. We will use $\mmhom$ to specifically refer to homomorphisms that are monomerges.	
\end{definition}
\squeezeabit

It is not hard to see that in the finite case, every strong surjective homomorphism can be realized, up to isomorphism, by iteratively merging pairs of domain elements in this way. 

\begin{lemma}
	Let $\mathfrak{A}_0$, $\mathfrak{B}$ be finite $\tau$-structures and $h\colon\mathfrak{A}_0 \sshom \mathfrak{B}$ a strong surjective homomorphism. 
	Then there is a finite se\-quence $h_1\colon\mathfrak{A}_0\! \mmhom \!\mathfrak{A}_1$, $\ldots$\ , $h_m\colon\mathfrak{A}_{m-1}\! \mmhom \!\mathfrak{A}_m$ of mono\-mer\-ges  and an isomorphism $g\colon\mathfrak{A}_m\! \to \!\mathfrak{B}$ such that
	$h = g \circ h_m \circ \ldots \circ h_1$.
\end{lemma}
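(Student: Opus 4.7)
The plan is to proceed by induction on the quantity $|A_0|-|B|$. In the base case, where $|A_0|=|B|$, the surjective map $h$ is a bijection between finite structures, and since $h$ is strong the inverse is also a homomorphism, so $h$ is already an isomorphism; I then set $m=0$ and $g=h$.

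For the inductive step I use the fact that $|A_0|>|B|$ and $h$ surjective force the existence of distinct elements $a,a'\in A_0$ with $h(a)=h(a')$. I build $\mathfrak{A}_1$ with domain $A_1:=A_0\setminus\{a\}$, and define $h_1\colon A_0\to A_1$ to be the identity on $A_1$ and to send $a$ to $a'$. The interpretations in $\mathfrak{A}_1$ are defined as the pointwise $h_1$-images of those in $\mathfrak{A}_0$, i.e.\ $P^{\mathfrak{A}_1}:=\{(h_1(b_1),\ldots,h_1(b_k))\mid (b_1,\ldots,b_k)\in P^{\mathfrak{A}_0}\}$ for each predicate $P$ and $c^{\mathfrak{A}_1}:=h_1(c^{\mathfrak{A}_0})$ for each constant $c$. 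By construction, $h_1\colon\mathfrak{A}_0\mmhom\mathfrak{A}_1$ is a surjective monomerge with merge source $a$ and merge target $a'$.

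The key verification is that $h_1$ is in fact \emph{strong}. By definition of $\mathfrak{A}_1$, the nontrivial direction amounts to the following: whenever $(b_1,\ldots,b_k)\in P^{\mathfrak{A}_0}$ and $(b'_1,\ldots,b'_k)$ arises from it by swapping some occurrences of $a$ and $a'$, then $(b'_1,\ldots,b'_k)\in P^{\mathfrak{A}_0}$ as well. This is where the hypothesis bites: because $h(a)=h(a')$, the tuples $(h(b_1),\ldots,h(b_k))$ and $(h(b'_1),\ldots,h(b'_k))$ are equal, so strongness of $h$ (used in both directions) transports membership in $P^{\mathfrak{A}_0}$ from one tuple to the other. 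Having established this, one defines $h'\colon\mathfrak{A}_1\to\mathfrak{B}$ by $h'(b):=h(b)$ for $b\in A_1$; a short routine check (using strongness of $h$ once more) shows that $h'\colon\mathfrak{A}_1\sshom\mathfrak{B}$ is a strong surjective homomorphism, and the factorization $h=h'\circ h_1$ is immediate from the definitions.

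Since $|A_1|=|A_0|-1$ while $|B|$ is unchanged, the induction hypothesis applied to $h'$ yields monomerges $h_2\colon\mathfrak{A}_1\mmhom\mathfrak{A}_2,\ldots,h_m\colon\mathfrak{A}_{m-1}\mmhom\mathfrak{A}_m$ and an isomorphism $g\colon\mathfrak{A}_m\to\mathfrak{B}$ with $h'=g\circ h_m\circ\cdots\circ h_2$. Composing on the right with $h_1$ delivers the required decomposition of $h$. The only nontrivial part of the argument is the strongness check for the intermediate structure $\mathfrak{A}_1$; once that is in place, the rest is bookkeeping.
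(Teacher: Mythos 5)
Your proof is correct. The paper states this lemma immediately after saying ``It is not hard to see that...'' and gives no proof in the appendix, so there is no official argument to compare against; your induction on $|A_0|-|B|$, peeling off one monomerge at a time, is exactly the natural argument the authors evidently had in mind. The pivotal step — that $h_1$ is \emph{strong} — is handled correctly: you transport membership in $P^{\mathfrak{A}_0}$ between the two tuples by noting that they have the same image under $h$ (since $h(a)=h(a')$) and then invoking strongness of $h$ twice. The remaining verifications — that $h'$ is a strong surjective homomorphism, that $h=h'\circ h_1$, and the base case where a strong bijective homomorphism between finite structures is an isomorphism — are all routine and checked as you indicate. One tiny stylistic nit: you say $(b'_1,\ldots,b'_k)$ arises by ``swapping'' occurrences of $a$ and $a'$; the exact characterization of the $h_1$-fiber is that the tuples may differ only at positions where \emph{both} entries lie in $\{a,a'\}$, but this does not affect the argument since in all such cases $h$ identifies the entries.
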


Thus, in the finite, our search for strong surjective spoilers can be confined to the search for monomerge spoilers. 

\begin{corollary}[label=FinHomSeparation, name=]
	If a sentence $\Phi$ has a finite strong surjective spoiler, then it has a finite monomerge spoiler.
	Consequently, whenever $\Phi$ has an arbitrary finite spoiler, it must have an injective spoiler or a monomerge spoiler.   
\end{corollary}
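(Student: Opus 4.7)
The plan is to derive the first statement directly from the preceding lemma, then combine it with \Cref{lem:inj-or-ss} (and the decomposition from \Cref{prop:HomSeparation}) to obtain the second statement. I would start with a finite strong surjective spoiler $\mathfrak{A}_0 \stackrel{h}{\sshom} \mathfrak{B}$ for $\Phi$ (so $\mathfrak{A}_0 \in \getmodels{\Phi}$ and $\mathfrak{B} \notin \getmodels{\Phi}$) and invoke the preceding lemma to factor $h$ as $g \circ h_m \circ \cdots \circ h_1$, where each $h_i\colon \mathfrak{A}_{i-1} \mmhom \mathfrak{A}_i$ is a monomerge and $g\colon\mathfrak{A}_m \to \mathfrak{B}$ is an isomorphism.

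Next I would walk through this sequence and compare satisfaction of $\Phi$ along the chain $\mathfrak{A}_0, \mathfrak{A}_1, \ldots, \mathfrak{A}_m$. Since $\mathfrak{A}_0 \models \Phi$ and $\mathfrak{A}_m \cong \mathfrak{B} \not\models \Phi$ (satisfaction being preserved by isomorphism), there must exist a least index $i \in \{1,\ldots,m\}$ with $\mathfrak{A}_{i-1} \models \Phi$ and $\mathfrak{A}_i \not\models \Phi$. Then $h_i\colon \mathfrak{A}_{i-1} \mmhom \mathfrak{A}_i$ is by construction a finite monomerge and a spoiler for $\Phi$, proving the first part.

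For the second part, I would take an arbitrary finite spoiler $\mathfrak{A}\stackrel{s}{\to}\mathfrak{B}$ of $\Phi$ and apply \Cref{prop:HomSeparation} to factor $s = h_2 \circ h_1$ with $h_1\colon \mathfrak{A} \injhom \mathfrak{C}$ injective and $h_2\colon \mathfrak{C} \sshom \mathfrak{B}$ strong surjective, both between finite structures. A case distinction on whether $\mathfrak{C}$ is a model of $\Phi$ yields either a finite injective spoiler (namely $h_1$, if $\mathfrak{C}\not\models\Phi$) or a finite strong surjective spoiler (namely $h_2$, if $\mathfrak{C}\models\Phi$). In the latter case, the already established first part converts it into a finite monomerge spoiler, completing the argument.

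I do not anticipate a real obstacle here, since both halves of the corollary are essentially bookkeeping on top of the previous decomposition results; the only subtlety to guard against is ensuring that the intermediate structures remain finite (which is immediate from the finiteness of $\mathfrak{A}$ and $\mathfrak{B}$ and the fact that $\mathfrak{C}$ has size at most $|\mathfrak{A}|$, since $h_1$ is injective) and that the finiteness of the monomerge chain in the preceding lemma is invoked rather than its more general variant.
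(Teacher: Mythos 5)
Your argument is correct and follows the intended derivation: factor the finite strong surjective spoiler into a chain of monomerges plus an isomorphism via the preceding lemma, locate the first step where modelhood of $\Phi$ flips, and use that monomerge as the spoiler; then, for the second part, apply \Cref{prop:HomSeparation} and case-split on whether the intermediate structure $\mathfrak{C}$ is a model. One small slip in your justification of finiteness: injectivity of $h_1 \colon \mathfrak{A} \injhom \mathfrak{C}$ yields $|\mathfrak{A}| \leq |\mathfrak{C}|$, not the reverse, so $|\mathfrak{C}| \leq |\mathfrak{A}|$ does not follow. The conclusion still holds, though, because \Cref{prop:HomSeparation} explicitly asserts the decomposition works within the class of finite structures (and indeed the $\mathfrak{C}$ constructed in its proof has domain $A \uplus B$, which is finite when $A$ and $B$ are), so your overall argument is unaffected.
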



\begin{definition}
For a $\tau$-sentence $\Phi$, we let $\Phi^\spoil_\cl{\mmhom}$ denote $\bot$ if  $\Phi$ is equality-free, otherwise it denotes the $\tau {\uplus} \{\sigc{}_s,\sigc{}_t,\dot{\sigU{}}\}$-sentence 
\begin{equation}
\Phi \wedge (\neg\Phi)^{\mathrm{rel}(\dot{\sigU{}})} \wedge \Omega_{\sigc{}_s\simeq \sigc{}_t} \wedge \forall x. \big( x{=}\sigc{}_s \Leftrightarrow \neg\dot{\sigU{}}(x) \big),
\end{equation}
where $\Omega_{\sigc{}_s\simeq \sigc{}_t}$ stands for
\begin{equation}
\bigwedge_{{\sigP \in \tau}\atop{k=\text{ar}(\sigP)}} \hspace{-6.5ex} 
\bigwedge_{{1\leq i \leq k}\atop{{\hspace{6.5ex} \bold{x}=(x_1,\ldots,x_{i-1})}\atop{\hspace{6.5ex}\bold{y}=(x_{i+1},\ldots,x_k)}}}
\hspace{-6.0ex}
\forall \bold{x}\bold{y}. \sigP(\bold{x},\sigc{}_s,\bold{y}) \Leftrightarrow \sigP(\bold{x},\sigc{}_t,\bold{y}).
\end{equation}
~\\[-1ex]
Let $\mathfrak{A} \stackrel{\raisebox{-0.5ex}{\scriptsize \smash{h}}}{\mmhom} \mathfrak{B}$ be a monomerge spoiler of $\Phi$
with merge source $a_s$ and merge target $a_t$.
Then, we call the $\tau {\uplus} \{\sigc{}_s,\sigc{}_t,\dot{\sigU{}}\}$-structure $(\mathfrak{A},a_s,a_t,B)$ 
the \emph{witness} of $\mathfrak{A} \stackrel{\raisebox{-0.5ex}{\scriptsize \smash{h}}}{\mmhom} 
\mathfrak{B}$.
\end{definition}

We note that the witness of a finite monomerge spoiler of $\Phi$ is a finite model of $\Phi^\spoil_\cl{\mmhom}$. Conversely, 
any finite model of $\Phi^\spoil_\cl{\mmhom}$ is isomorphic to the witness of a finite monomerge spoiler of $\Phi$. Thus we obtain the following characterization.
 
\begin{lemma}[label=MonomergeFormula, name=]
$\Phi^\spoil_\cl{\mmhom}$ is finitely satisfiable exactly if $\Phi$ has a finite monomerge spoiler. 
\end{lemma}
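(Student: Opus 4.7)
The statement is an equivalence, and the natural approach is to explicitly convert between finite monomerge spoilers and finite models of $\Phi^\spoil_\cl{\mmhom}$ via the witness construction introduced just before the lemma. The bulk of the work consists in checking that the four conjuncts of $\Phi^\spoil_\cl{\mmhom}$ encode precisely the defining ingredients of a finite monomerge spoiler.

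For the ``if'' direction, I would start from a finite monomerge spoiler $\mathfrak{A} \mmhom \mathfrak{B}$ with underlying map $h$, merge source $a_s$ and merge target $a_t$, and exhibit as witness the $\tau {\uplus} \{\sigc_s,\sigc_t,\dot{\sigU{}}\}$-expansion $\mathfrak{D} := (\mathfrak{A},a_s,a_t,B)$ of $\mathfrak{A}$. The first conjunct $\Phi$ is immediate since $\mathfrak{D}|_\tau = \mathfrak{A} \models \Phi$, the last conjunct holds by the choice $\dot{\sigU{}}^\mathfrak{D} := B = A \setminus \{a_s\}$, and $\Omega_{\sigc_s \simeq \sigc_t}$ follows from strongness of $h$: swapping any single position $a_s \leftrightarrow a_t$ in a $\sigP^\mathfrak{A}$-tuple preserves membership, since both possible pre-images project under $h$ to the same tuple of $\mathfrak{B}$. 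I would then observe that $\indss{\mathfrak{D}}{\dot{\sigU{}}}$ agrees with $\mathfrak{B}$ (because $h$ is the identity on $B$ and strong), so $(\neg\Phi)^{\mathrm{rel}(\dot{\sigU{}})}$ is inherited from $\mathfrak{B} \not\models \Phi$.

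For the ``only if'' direction, I would read a spoiler off a finite model $\mathfrak{D}$ of $\Phi^\spoil_\cl{\mmhom}$ by setting $\mathfrak{A} := \mathfrak{D}|_\tau$, $a_s := \sigc_s^\mathfrak{D}$, $a_t := \sigc_t^\mathfrak{D}$, $B := \dot{\sigU{}}^\mathfrak{D}$. The last conjunct forces $B = A \setminus \{a_s\}$; the first yields $\mathfrak{A} \models \Phi$; and the constant-guarding clauses $\bigwedge_{c \in \rho} \dot{\sigU{}}(c)$ hidden inside $(\neg\Phi)^{\mathrm{rel}(\dot{\sigU{}})}$ ensure that $\mathfrak{B} := \indss{\mathfrak{D}}{\dot{\sigU{}}}$ is a well-defined $\tau$-structure satisfying $\neg \Phi$. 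I would then define $h : A \to B$ via $h(a_s) := a_t$ and $h(a') := a'$ for $a' \neq a_s$; surjectivity, the homomorphism condition on constants and the monomerge shape are immediate, while both directions of strongness for predicate tuples are delivered by $\Omega_{\sigc_s \simeq \sigc_t}$ together with the fact that $\mathfrak{B}$ is the $\tau$-substructure of $\mathfrak{A}$ induced by $B$.

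The main technical point needing care is the lifting of $\Omega_{\sigc_s \simeq \sigc_t}$: the axiom only directly licenses single-position swaps of $\sigc_s$ and $\sigc_t$, whereas both the forward and the backward direction need swapping several occurrences inside one $\sigP$-tuple simultaneously. I would handle this by a routine induction on the number of $a_s$-occurrences in the tuple, applying $\Omega_{\sigc_s \simeq \sigc_t}$ once per occurrence to a ``hybrid'' tuple, and this is what upgrades the single-swap invariance into the biconditional form of strongness that a monomerge requires. All constructions on either side visibly preserve finiteness, so the ``finite'' qualifier in the statement carries through.
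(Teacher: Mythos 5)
You follow the same witness-based strategy the paper sketches, but there is a gap in the ``if'' direction. When you claim that $(\neg\Phi)^{\mathrm{rel}(\dot{\sigU{}})}$ is inherited because $\indss{\mathfrak{D}}{\dot{\sigU{}}}$ ``agrees with $\mathfrak{B}$'', you implicitly need every $\tau$-constant of $\mathfrak{D}$ to be interpreted inside $\dot{\sigU{}}^{\mathfrak{D}}=B$: by definition of relativization, $(\neg\Phi)^{\mathrm{rel}(\dot{\sigU{}})}$ carries a conjunct $\dot{\sigU{}}(\sigc)$ for each constant $\sigc$ of $\tau$, and $\indss{\mathfrak{D}}{\dot{\sigU{}}}$ is only a well-defined $\tau$-structure under that condition. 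However, a monomerge spoiler can have merge source $a_s=\sigc^{\mathfrak{A}}$ for some constant $\sigc\in\tau$; then $\sigc^{\mathfrak{D}}=a_s\notin B$, the conjunct $\dot{\sigU{}}(\sigc)$ fails, and $\mathfrak{D}\not\models\Phi^\spoil_\cl{\mmhom}$. Concretely, for $\tau=\{\sigc,\sig{d}\}$ (constants only) and $\Phi=\neg(\sigc=\sig{d})$, the two-element model with $\sigc,\sig{d}$ distinct monomerges onto the one-element non-model, so a finite monomerge spoiler exists, yet $\Phi^\spoil_\cl{\mmhom}$ conjoins $\sigc\neq\sig{d}$ with the guarded part of $(\neg\Phi)^{\mathrm{rel}(\dot{\sigU{}})}$, which is $\sigc=\sig{d}$, and is therefore unsatisfiable. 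Your argument does not rule this case out, so the ``if'' direction does not go through as written.

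There is also a smaller unverified step in the ``only if'' direction: you declare the monomerge shape of $h$ ``immediate'', but none of the conjuncts of $\Phi^\spoil_\cl{\mmhom}$ visibly enforces that $\sigc{}_s$ and $\sigc{}_t$ are interpreted by distinct elements of $\mathfrak{D}$; if they coincide, the map you define sends $a_s$ to itself, which lies outside $B=A\setminus\{a_s\}$, so $h$ is not a function into $B$. This needs an argument (or an adjustment of the read-off) before it actually yields a monomerge.
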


\begin{definition}
	Given a sentence $\Phi$ we let 
	$\Phi_{\fin}^\spoil = \Phi^\spoil_\cl{\injhom} \vee \Phi^\spoil_\cl{\mmhom}$.
\end{definition}

We note a few properties of $\Phi_{\fin}^\spoil$: It is polytime computable and of polynomial size with respect to $\Phi$. It is in \GNFO{}, \TGF{}, and \FOte{}  
whenever $\Phi$ is. It is in \EAAE{} whenever $\Phi$ is in \EEAA{} or \AAEE{} and it is in \EAFO{} whenever $\Phi$ is in \EsFO{} or \AsFO.
By \Cref{FinHomSeparation} as well as \Cref{InjFormula} and \Cref{MonomergeFormula}, we also obtain that $\Phi_{\fin}^\spoil$ is finitely satisfiable whenever a finite spoiler of $\Phi$ exists, which gives rise to the following characterization.

\begin{theorem}[label=FinHomClFormula, name=]
$\Phi$ is finitely homclosed if and only if $\Phi_{\fin}^\spoil$ is finitely unsatisfiable.
\end{theorem}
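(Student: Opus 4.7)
The plan is to assemble this equivalence directly from the three ingredients already prepared in this subsection: \Cref{InjFormula}, \Cref{MonomergeFormula}, and \Cref{FinHomSeparation}. I would first restate the problem as a spoiler question. By the definition of homclosedness and the characterization noted just after the definition of spoiler, $\Phi$ fails to be finitely homclosed precisely when $\Phi$ has a finite spoiler. Hence $\Phi$ is finitely homclosed iff $\Phi$ has no finite spoiler of any kind.

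Next I would reduce arbitrary finite spoilers to the two special kinds handled by $\Phi^\spoil_\cl{\injhom}$ and $\Phi^\spoil_\cl{\mmhom}$. Using \Cref{FinHomSeparation}, every finite spoiler of $\Phi$ yields either a finite injective spoiler or (via the iterated monomerge decomposition of a finite strong surjective homomorphism) a finite monomerge spoiler. Conversely, any finite injective spoiler and any finite monomerge spoiler is itself a finite spoiler. Thus $\Phi$ has a finite spoiler if and only if $\Phi$ has a finite injective spoiler or a finite monomerge spoiler.

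Finally, I would translate each disjunct into a satisfiability statement. By \Cref{InjFormula}, $\Phi$ has a finite injective spoiler iff $\Phi^\spoil_\cl{\injhom}$ is finitely satisfiable; by \Cref{MonomergeFormula}, $\Phi$ has a finite monomerge spoiler iff $\Phi^\spoil_\cl{\mmhom}$ is finitely satisfiable. Since finite satisfiability distributes over disjunction, these two conditions together are equivalent to finite satisfiability of $\Phi^\spoil_\cl{\injhom} \vee \Phi^\spoil_\cl{\mmhom} = \Phi_{\fin}^\spoil$. Contraposing, $\Phi$ is finitely homclosed iff $\Phi_{\fin}^\spoil$ is finitely unsatisfiable.

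There is no real obstacle: the work was done by \Cref{FinHomSeparation} (which packages the decomposition lemma together with \Cref{prop:HomSeparation}) and by the two satisfiability characterizations. The only minor point to be careful about is the equality-free case: if $\Phi$ has no equality, then $\Phi^\spoil_\cl{\mmhom}$ is defined to be $\bot$, and we must note that this is consistent because equality-free sentences are automatically preserved under strong surjective (in particular monomerge) homomorphisms, so no monomerge spoiler can exist and the $\bot$-disjunct correctly contributes nothing.
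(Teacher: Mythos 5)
Your proposal is correct and tracks the paper's own argument precisely: the paper states, immediately before this theorem, that it follows from \Cref{FinHomSeparation} together with \Cref{InjFormula} and \Cref{MonomergeFormula}, exactly as you assemble it. Your closing remark on the equality-free degenerate case ($\Phi^\spoil_\cl{\mmhom} = \bot$) is a sensible sanity check that the paper makes implicitly via its earlier observation that equality-free sentences are preserved under strong surjective homomorphisms.
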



Now we have all the bits and pieces together to establish a bunch of tight complexity results -- both for the finite and the arbitrary case.

\begin{theorem}[restate=mainsecond, name=]
	$\mathrm{HomClosed}_\mathrm{(fin)}$ for
    \vspace{-4pt}
	\begin{enumerate}[itemindent=0ex, leftmargin=3ex, itemsep=-0.5ex]
		\item 
		\GNFO{} is \textsc{2ExpTime}-complete,
		\item 
		\TGF{} is \textsc{coN2ExpTime}-complete,
		\item 
		any of \FOte{},
		\AsFO{},
		\EsFO{},				
        \AAEE{}, and 
        \EEAA{} is \textsc{coNExpTime}-complete.				
	\end{enumerate}
\end{theorem}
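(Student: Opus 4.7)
The plan is to combine Theorem~\ref{FinHomClFormula} (which characterizes finite homclosedness as finite unsatisfiability of $\Phi^\spoil_\fin$) with Theorem~\ref{thm:FSP} (ensuring that finite and arbitrary homclosedness coincide on these fragments) to obtain upper bounds, and to invoke Lemma~\ref{prop:SattToHomClosed} together with its companion Lemma~\ref{prop:SattToHomClosedCompanion} for matching lower bounds.

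For the upper bounds, given $\Phi$ from any of the listed fragments, $\Phi^\spoil_\fin$ is polynomial-time computable and, as noted in the text preceding Theorem~\ref{FinHomClFormula}, lies either in the same fragment (\GNFO{}, \TGF{}, \FOte{}) or in a sibling fragment of matching satisfiability complexity (\EAAE{} for inputs from \AAEE{} or \EEAA{}; \EAFO{} for inputs from \AsFO{} or \EsFO{}). Plugging in known complexity bounds---\GNFO{} satisfiability is \textsc{2ExpTime}-complete, \TGF{} is \textsc{N2ExpTime}-complete, and \FOte{}, \EAAE{}, and \EAFO{} are all \textsc{NExpTime}-complete---and complementing (exploiting closure of \textsc{2ExpTime} under complement) yields the claimed upper bounds.

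For the lower bounds, I apply Lemma~\ref{prop:SattToHomClosed} with the encoding $f(\Phi) := \Phi \wedge \forall x.\neg\sigU(x)$: the universal conjunct forces $\sigU$ to be empty in every model, while the identity homomorphism to the same structure with $\sigU$ enlarged violates $f(\Phi)$; hence $f(\Phi)$ is (finitely) homclosed iff $\Phi$ is (finitely) unsatisfiable. This applies directly to \FOte{}, \TGF{} (where unary quantification need not be guarded), \AsFO{}, and---rewriting the conjunct as $\neg\exists x.\sigU(x)$---to \GNFO{}. For the prefix classes \AAEE{} and \EEAA{}, where a fresh leading $\forall x$ would exit the prefix class, I instead absorb $\neg\sigU(x_1)$ into the matrix under an existing universal quantifier (so $\forall \bold{x} \exists \bold{y}. \varphi$ becomes $\forall \bold{x} \exists \bold{y}. (\varphi \wedge \neg\sigU(x_1))$, which is equivalent to $\Phi \wedge \forall x_1.\neg\sigU(x_1)$ and stays in \AAEE{}). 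For \EsFO{}, which disallows $\forall$, I appeal instead to Lemma~\ref{prop:SattToHomClosedCompanion} with $f(\Phi) := \neg\Phi \vee \exists x.\neg\sigU(x)$, choosing the source logic $\Logic'$ so that negations of $\Logic'$-sentences lie in \EsFO{} while unsatisfiability of $\Logic'$ is \textsc{coNExpTime}-hard.

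The main obstacle is the lower bound for \EsFO{}: since this fragment's own satisfiability may be of lower complexity than the target \textsc{coNExpTime}, the source logic in Lemma~\ref{prop:SattToHomClosedCompanion} must be chosen with care, so that its unsatisfiability is \textsc{coNExpTime}-hard yet its sentences' negations still fit syntactically into \EsFO{}. The remaining obligations---verifying that in each case the encoded $f(\Phi)$ respects the syntactic restrictions of the target fragment, and confirming that $\Phi^\spoil_\fin$ lies in the claimed sibling fragment---are routine case-by-case checks.
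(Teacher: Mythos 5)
Your overall plan — upper bounds via Theorem~\ref{FinHomClFormula} and Theorem~\ref{thm:FSP}, lower bounds via Lemma~\ref{prop:SattToHomClosed} and its companion — matches the paper's route, and the upper-bound side is fine. The gap is in the \EEAA{} lower bound.

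You treat \AAEE{} and \EEAA{} uniformly under Lemma~\ref{prop:SattToHomClosed}, absorbing $\neg\sigU(\cdot)$ into the matrix. The absorption itself is syntactically valid in both cases, but it only gives you a reduction \emph{from unsatisfiability of the same prefix class}. For \AAEE{} this suffices, because \AAEE{} satisfiability is \textsc{NExpTime}-complete and hence its unsatisfiability is \textsc{coNExpTime}-hard. For \EEAA{}, however, satisfiability is in \textsc{NP} (small-model argument, see Figure~\ref{fig:main}), so \EEAA{} unsatisfiability is only \textsc{coNP}-hard, and Lemma~\ref{prop:SattToHomClosed} then yields at most \textsc{coNP}-hardness of $\mathrm{HomClosed}_\mathrm{(fin)}$ for \EEAA{}, far short of the claimed \textsc{coNExpTime}-hardness. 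There is no way to fix this while staying inside Lemma~\ref{prop:SattToHomClosed}, since the sentence $f(\Phi)$ must land in \EEAA{}. You must instead use Lemma~\ref{prop:SattToHomClosedCompanion} — exactly as you already (correctly) do for \EsFO{} — taking $\Logic' = \AAEE$ and $f(\Phi) := \neg\Phi \vee \exists x.\neg\sigU(x)$; for $\Phi \in \AAEE$ this can be rewritten into \EEAA{} by prenexing $\neg\Phi$ and pulling the $\exists x$ inward, and it inherits the \textsc{coNExpTime}-hardness of \AAEE{} unsatisfiability. That is the route the paper takes.

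One smaller point: for \EsFO{}, your plan is sound but leaves the source logic unnamed. The natural choice is $\Logic' = \AsFO$ (unsatisfiability \textsc{coNExpTime}-hard), since negating a $\forall^*$ sentence yields an $\exists^*$ sentence and the disjunct $\exists x.\neg\sigU(x)$ keeps the result in \EsFO{}; spelling this out closes the obligation you flagged.
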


Finally, we turn to \TGD, for which the above machinery fails to work (despite the trivial (un)satisfiability problem) as \TGD{} is not closed under negation. We start with a positive result.

\begin{theorem}[restate=TGDNPtheorem, name=]
$\mathrm{HomClosed}_\mathrm{(fin)}$ for \TGD{} is \textsc{NP}-complete.
\end{theorem}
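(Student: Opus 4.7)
The plan is to prove the theorem via a syntactic, polynomially-checkable characterization of homclosedness for \TGD{} sentences. Call a rule $\rho=\forall\bold{x}.\varphi(\bold{x})\Rightarrow\exists\bold{y}.\psi(\bold{x},\bold{y})$ a \emph{trigger} if $\bold{x}=\emptyset$ (so that $\rho$ is itself the \CQ{} sentence $\exists\bold{y}.\psi(\bold{y})$), and \emph{tautological} if the canonical $\bold{x}$-frozen body structure $\mathfrak{B}_\varphi$ already satisfies $\exists\bold{y}.\psi(\bold{x}^\ast,\bold{y})$; by Chandra--Merlin this is equivalent to the existence of a homomorphism from the canonical structure of $\psi$ (with $\bold{x}$ frozen and $\bold{y}$ free) into $\mathfrak{B}_\varphi$ fixing $\bold{x}$. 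The central claim to establish is that a \TGD{} sentence $\Phi$ is (finitely) homclosed if and only if every non-trigger rule of $\Phi$ is tautological.

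The easy direction $(\Leftarrow)$ is immediate: tautological rules are equivalent to $\top$, so after removing them $\Phi$ becomes a conjunction of trigger rules which, after renaming existentials apart, is a single Boolean \CQ{} sentence and hence homclosed. For $(\Rightarrow)$ I argue contrapositively: suppose some non-trigger rule $\rho$ of $\Phi$ is non-tautological. Set $\mathfrak{B}:=\mathfrak{B}_\varphi\uplus\mathfrak{W}$ (sharing only the interpretations of constants) and $\mathfrak{A}:=\mathfrak{I}_\tau\uplus\mathfrak{W}$, where $\mathfrak{W}$ is a fresh-null witness gadget that supplies heads for all trigger rules of $\Phi$ and is saturated under any non-trigger rules whose bodies it activates; $\mathfrak{A}$ models $\Phi$ by construction. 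The inclusion gives a homomorphism $\mathfrak{A}\embed\mathfrak{B}$, while $\mathfrak{B}\not\models\rho$ at $\bold{x}^\ast$: the disjointness of $\mathfrak{W}$'s nulls from $\mathfrak{B}_\varphi$'s variable-elements blocks head witnesses that would mix the two components, so any purported $\bold{y}$-assignment reduces to a homomorphism from $\psi$'s canonical into $\mathfrak{B}_\varphi$ fixing $\bold{x}$, contradicting non-tautology. Constant-only head atoms do not pose an additional obstacle since they already reside in $\mathfrak{B}_\varphi$.

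Given the characterization, \textsc{NP}-membership is immediate: nondeterministically guess, for each non-trigger rule, a map from $\psi$'s free variables into $\mathfrak{B}_\varphi$ that fixes $\bold{x}$, and verify in polynomial time that it preserves the atoms of $\psi$. For \textsc{NP}-hardness, reduce from the \textsc{NP}-complete Boolean \CQ{}-containment problem: given Boolean \CQ{} sentences $q_1=\exists\bold{x}.\varphi(\bold{x})$ and $q_2=\exists\bold{y}.\psi(\bold{y})$, the single-rule \TGD{} sentence $\forall\bold{x}.\varphi(\bold{x})\Rightarrow\exists\bold{y}.\psi(\bold{y})$ is homclosed iff it is tautological iff $q_1\subseteq q_2$. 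The main obstacle I anticipate is the spoiler construction in the $(\Rightarrow)$ direction: one must ensure $\mathfrak{W}$ can be made finite in the finite-model case (which hinges on verifying that only a bounded amount of chase is required to block head-witnessing for $\rho$ at $\bold{x}^\ast$) and one must rule out, via strict component separation, the accidental creation of head witnesses mixing $\bold{x}^\ast$-elements with $\mathfrak{W}$-nulls.
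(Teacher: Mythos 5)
Your proposed characterization — that a \TGD{} sentence is homclosed iff every non-trigger rule is tautological — is false, and the flaw lies precisely in the spot you flagged as your ``main obstacle.'' Consider
\begin{equation}
\Phi \;=\; \big(\forall x.\ \sigP(x) \impl \exists y.\ \sig{Q}(y)\big) \;\wedge\; \big(\exists z.\ \sig{Q}(z)\big).
\end{equation}
The first conjunct is a non-trigger rule that is not tautological under your definition (the frozen body structure $\{a\}$ with $\sigP(a)$ does not satisfy $\exists y.\sig{Q}(y)$). Yet $\Phi$ \emph{is} homclosed: every model of $\Phi$ has a non-empty $\sig{Q}$-relation, non-emptiness of $\sig{Q}$ is preserved by homomorphisms, and non-emptiness of $\sig{Q}$ alone implies both conjuncts. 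This breaks your $(\Rightarrow)$ direction. Concretely, in your spoiler construction $\mathfrak{B} := \mathfrak{B}_\varphi \uplus \mathfrak{W}$, the claim that ``disjointness of $\mathfrak{W}$'s nulls from $\mathfrak{B}_\varphi$'s variable-elements blocks head witnesses that would mix the two components'' is simply not valid when the head contains atoms built entirely from existential variables $\bold{y}$ and no frontier variables $\bold{x}$: such head atoms may be witnessed entirely inside $\mathfrak{W}$, so $\mathfrak{B}$ can in fact be a model of $\rho$. In the example, $\mathfrak{W}$ necessarily contains a $\sig{Q}$-fact (to satisfy the trigger), so $\mathfrak{B}_\varphi \uplus \mathfrak{W}$ satisfies the first rule as well and fails to be a spoiler.

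The paper's argument handles exactly this phenomenon. After a preprocessing step that splits each head into its $\exists$-connected components and separates the resulting rules into \emph{connected} (head still mentions a frontier variable) and \emph{disconnected} (head uses only existentials, so the rule is equivalent to $(\exists\bold{x}.\varphi)\impl(\exists\bold{y}.\psi)$), the paper requires only that every connected rule be a tautology — this part matches your intuition. For the disconnected rules, however, tautology is \emph{not} necessary; instead, the NP-certificate includes a ``universal model derivation'' (a bounded forward-chaining chase from $\mathfrak{I}$ using disconnected self-irredundant rules, each at most once) plus ``discharge witnesses'' confirming that each disconnected rule's head is already satisfied in that universal model. This is the ingredient your characterization is missing, and it is why your NP-membership argument of simply guessing head-into-body homomorphisms rule-by-rule is insufficient in general. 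Your NP-hardness reduction from Boolean \CQ{}-containment is, as it happens, sound for the single-rule disconnected shape you use (a single disconnected non-tautology is indeed not homclosed, because the universal model is trivial and no discharge witness can exist), and is a perfectly good alternative to the paper's 3-colorability reduction; but the upper bound needs the additional chase-based machinery to be correct.
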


Intuitively, this result is facilitated by the fact that homclosed \TGD{} sentences must have very small universal models, against which  homclosedness violations can be checked on a per-rule basis. However, all is lost as soon as just a tad of disjunction is introduced.

\newcommand{\tight}[1]{\hspace{1pt}{#1}\hspace{1pt}}
\newcommand{\tightimp}{\hspace{-1.7ex} & \impl & \hspace{-1.7ex}}

\begin{definition}\label{def:TGDdettiling}
	Given a deterministic domino system
	$\mathcal{D}=(D,B,L,H,V)$, we define the \TGD{} sentence 
	\begin{equation}
	\Phi_{\mathcal{D}\text{-}\mathrm{tiling}} = \Phi_\mathrm{grid} \wedge \Phi_\mathcal{D},
	\end{equation} 
	where $\Phi_\mathcal{D}$ is the conjunction over the following rules (with unary predicates $\sigT{}_\emptyset$ and $\sigT{}_{\{d\}}$ for all the dominoes $d \in D_\mathcal{D}$):
	\begin{eqnarray}
		  &  & \hspace{-7ex}\exists v.\sigT{}_{B \cap L}(v) \\[3pt]
		%
		\forall xy.\big(\sigT{}_{\{d\}}(x) \tight{\wedge} \sigV{}(x,y) \tightimp \sigT{}_{V(d)}(y) \big) 
		\hspace{10.7ex}{d \tight{\in} L}\hspace{4.5ex}\\[3pt]
		%
		\forall xy.\big(\sigT{}_{\{d\}}(x) \tight{\wedge} \sigH{}(x,y) \tightimp \sigT{}_{H(d)}(y) \big) 
		\hspace{10.5ex}{d \tight{\in} B}\\[3pt]
		%
		\forall xyz.\big( \sigT{}_{\{d\}}(x) \tight{\wedge} \sigH{}(x,z) \tight{\wedge} \sigT{}_{\{d'\}}(y) \tight{\wedge} \sigV{}(y,z) \hspace{-18.7ex} \nonumber \\[-1pt]
		 \tightimp 
		\sigT{}_{H(d)\cap V(d')}(z) \big) 
		\hspace{2.5ex} {d,\!d' \tight{\in} D}\\[3pt]
		%
		\forall x.\big(\sigT{}_{\{d\}}(x) \tight{\wedge} \sigT{}_{\{d'\}}(x) \tightimp
		\sigT{}_{\emptyset}(x) \big) 
		\hspace{5ex} {d,\!d' \tight{\in} D, d \tight{\neq} d'}
	\end{eqnarray}	
\end{definition}
Note that a(n ultimately periodic) $\mathcal{D}$-tiling exists exactly if 
$\Phi_{\mathcal{D}\text{-}\mathrm{tiling}}$ has a (finite) model $\mathfrak{A}$ with $\sigT{}_\emptyset{}^\mathfrak{A} = \emptyset$.
This is the case exactly if	$\getffinmodels{\exists x.\sigT{}_\emptyset{}(x)} \neq\getffinmodels{\Phi_{\mathcal{D}\text{-}\mathrm{tiling}} \vee \exists x.\sigT{}_\emptyset{}(x)}$. Yet, note that any -- finite or infinite -- model $\mathfrak{A}$ of $\Phi_{\mathcal{D}\text{-}\mathrm{tiling}}$ with $\sigT{}_\emptyset{}^\mathfrak{A} = \emptyset$ admits a homomorphism into $\mathfrak{A} \uplus \mathfrak{I}$, for which $\sigT{}_\emptyset{}^{\mathfrak{A}\uplus \mathfrak{I}} = \emptyset$ and it is not a model of $\Phi_{\mathcal{D}\text{-}\mathrm{tiling}}$, thus $\mathfrak{A}\uplus \mathfrak{I} \notin \getffinmodels{\Phi_{\mathcal{D}\text{-}\mathrm{tiling}} \vee \exists x.\sigT{}_\emptyset{}(x)}$. Therefore, the class of (finite) models of the \MDTGD{} sentence $\Psi = \Phi_{\mathcal{D}\text{-}\mathrm{tiling}} \vee \exists x.\sigT{}_\emptyset{}(x)$ is closed under (finite-target) homomorphisms exactly if there exists no (ultimately periodic) $\mathcal{D}$-tiling. We get the following.


\begin{theorem}
	$\mathrm{HomClosed}_\mathrm{(fin)}$ is undecidable for \MDTGD{} (and hence for \DTGD{}).
	This even holds for \DTGD{} sentences with $\leq\! 4$ universally and $\leq\! 1$ existentially quantified variables and predicates of arity $\leq\! 2$. 
\end{theorem}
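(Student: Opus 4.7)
The plan is to use the explicit construction $\Psi = \Phi_{\mathcal{D}\text{-}\mathrm{tiling}} \vee \exists x.\sigT_\emptyset(x)$ introduced just before the theorem, verify that the assignment $\mathcal{D} \mapsto \Psi$ is a many-one reduction from the (ultimately periodic) deterministic tiling problem to $\mathrm{HomClosed}_\mathrm{(fin)}$ for \MDTGD, and then read off the parameter bounds by inspection. Membership $\Psi \in \MDTGD$ is immediate because $\Phi_{\mathcal{D}\text{-}\mathrm{tiling}} \in \TGD$ and $\exists x.\sigT_\emptyset(x) \in \CQ$. Undecidability then follows by invoking both parts of \Cref{gridlemma}.

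Correctness of the reduction rests on the equivalence \emph{$\Psi$ is (finitely) homclosed iff no (ultimately periodic) $\mathcal{D}$-tiling exists}. For the \emph{if}-direction, if no tiling exists, then by the observation preceding the theorem every (finite) model of $\Phi_{\mathcal{D}\text{-}\mathrm{tiling}}$ satisfies $\exists x.\sigT_\emptyset(x)$, so $\getffinmodels{\Psi} = \getffinmodels{\exists x.\sigT_\emptyset(x)}$ (and analogously for arbitrary models). Being existential positive, this class is homomorphism-closed. For the \emph{only if}-direction, pick a (finite) model $\mathfrak{A}$ of $\Phi_{\mathcal{D}\text{-}\mathrm{tiling}}$ with $\sigT_\emptyset^\mathfrak{A} = \emptyset$ (witnessing the given tiling). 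Then $\mathfrak{A} \models \Psi$, but the inclusion $\mathfrak{A} \hookrightarrow \mathfrak{A} \uplus \mathfrak{I}$ is a homomorphism into a structure whose extra $\mathfrak{I}$-element has no $\sigH$-successor and thus violates the grid axiom $\forall x\exists y.\sigH(x,y)$; since $\sigT_\emptyset$ remains empty, neither disjunct of $\Psi$ holds, yielding a spoiler.

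For the quantitative refinement I would inspect each conjunct: $\sigH, \sigV$ are binary, all $\sigT_{(\cdot)}$ are unary; the grid composition rule uses four universally quantified variables ($x, y, z, v$), the rule over $\sigT_{H(d)\cap V(d')}$ uses three, all others at most two; every rule contains at most one existential variable, as does the $\CQ$-disjunct. The bounds transfer to \DTGD{} via the rule-wise \MDTGD-to-\DTGD{} rewriting noted in the preliminaries, which distributes the single extra existential of $\exists x.\sigT_\emptyset(x)$ into the head of each TGD as a separate disjunct. The main subtlety I expect is making sure that the $\mathfrak{I}$-element, once present, does not fire any rule body and does not accidentally populate $\sigT_\emptyset$ — which is why $\mathfrak{I}_\tau$ interpreting every predicate as empty is the correct choice: no $\sigT_{\{d\}}(\cdot)$ premise applies to it, and only the grid successor axioms are cleanly broken.
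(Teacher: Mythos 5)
Your proposal is correct and follows the paper's own argument essentially verbatim: the paper establishes the key equivalence (``$\Psi$ is (finitely) homclosed iff there is no (ultimately periodic) $\mathcal{D}$-tiling'') in the paragraph immediately preceding the theorem, and you reproduce that reasoning faithfully, including the use of $\mathfrak{A}\uplus\mathfrak{I}$ as the spoiler target and the two undecidable tiling variants from \Cref{gridlemma} for the arbitrary and finite cases respectively. The parameter check is also right, with the small implicit step that when distributing $\exists z.\sigT_\emptyset(z)$ into a rule head that already has one existential variable, the two existentials can be merged into a single one (since $\exists y.\varphi(y)\vee\exists z.\psi(z)\equiv\exists y.(\varphi(y)\vee\psi(y))$), keeping the bound at $\leq 1$.
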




\neutral

\section{Characterizability of Homclosures}\label{sec:characterizability}


In case a sentence $\Phi$ under consideration is not homclosed, one might still be interested in the (finite) homclosure $\homclffin{\getffinmodels{\Phi}}$ of its (finite) models. In particular one might want to (finitely) characterize it logically, that is, find some sentence $\Psi$
whose set of (finite) models is precisely $\Phi$'s (finite) homclosure.  

We start with the observation that even for very basic $\FOe$ sentences contained in very restricted fragments, it might be impossible to characterize their homclosure within $\FOe$.

\newcommand{\minuss}{\vspace{-6.2pt}}

\begin{definition}\label{infpath}
Let $\Phi_{\infty}$ denote the $\{\sigP\}$-sentence
$
\forall x \exists y. \sig\sigP(x,y).
$
\end{definition}
\minuss


We note that $\Phi_{\infty}$ is simultaneously contained in $\GFO$, $\FOt$, $\bal\bex\FO$ and $\TGD$.
$\homcl{\getmodels{\Phi_{\infty}}}$ consists of all structures containing an infinite $\sigP{}$-path.
In the finite, this condition is equivalent to containing a directed $\sigP{}$-cycle. Identifying such structures is \textsc{NLogSpace}-complete, hence not expressible in $\FOe$~\cite{FurstSS84}.

\begin{proposition}[restate=notFO, name=]\label{notfo}
$\homcl{\getmodels{\Phi_{\infty}}}$ cannot be characterized within $\FOe$.
\end{proposition}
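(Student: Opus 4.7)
The plan is to first characterise $\homcl{\getmodels{\Phi_\infty}}$ concretely and then reduce the impossibility to a classical non-definability fact about cycles. A structure $\mathfrak{B}$ lies in $\homcl{\getmodels{\Phi_\infty}}$ iff there exists a non-empty subset $S\subseteq B$ in which every element has a $\sigP$-successor that also belongs to $S$: in one direction, if $h\colon\mathfrak{A}\to\mathfrak{B}$ is a homomorphism from $\mathfrak{A}\models\Phi_\infty$, then $S:=h(A)$ has this property, because for every $a\in A$ some $(a,a')\in\sigP^\mathfrak{A}$ exists, whence $(h(a),h(a'))\in\sigP^\mathfrak{B}$ with $h(a')\in S$; in the other direction, given such an $S$, the structure $(\mathbb{N},\mathrm{Succ})$ is itself a model of $\Phi_\infty$ and maps homomorphically into $\mathfrak{B}$ by sending $0$ to some $s_0\in S$ and inductively $n{+}1$ to any $\sigP$-successor inside $S$ of the image of $n$. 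For \emph{finite} $\mathfrak{B}$, the existence of such $S$ collapses to $\mathfrak{B}$ containing a directed $\sigP$-cycle, since iterating the successor choice inside a finite set forces a repetition.

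Now suppose for contradiction that some $\Psi\in\FOe$ satisfies $\getmodels{\Psi}=\homcl{\getmodels{\Phi_\infty}}$. Then $\getfinmodels{\Psi}$ is precisely the class of finite $\{\sigP\}$-structures containing a directed $\sigP$-cycle, so the contradiction will come from showing that this property is not $\FOe$-definable on finite structures. The fastest route, matching the paper's own hint, is complexity-theoretic: directed-cycle existence is \textsc{NLogSpace}-complete (via a trivial reduction from $st$-reachability by adjoining an edge from $t$ to $s$), every $\FOe$-definable finite-model property lies in $\mathrm{AC}^0$, and $\textsc{NLogSpace}\not\subseteq\mathrm{AC}^0$ follows from Furst--Saxe--Sipser~\cite{FurstSS84}.

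A more self-contained route, which I would prefer to spell out, uses Hanf locality. Let $r$ be the quantifier rank of $\Psi$ and $T$ the Hanf threshold for radius $r$, and set $\mathfrak{A}_r := C_{2r+2}\uplus N\cdot P_{2r+2}$ and $\mathfrak{B}_r := (N{+}1)\cdot P_{2r+2}$ with $N\geq T$. The decisive local observation is that, since $|C_{2r+2}|>2r+1$, the $r$-ball around every vertex of $C_{2r+2}$ is a directed path of $2r+1$ vertices with the chosen vertex in the middle, hence realises the same pointed $r$-type as an interior vertex of $P_{2r+2}$. Counting $r$-types, $\mathfrak{A}_r$ and $\mathfrak{B}_r$ agree except that the interior-type count differs by $2r$ and each truncated endpoint-type count differs by exactly $1$, so with $N$ beyond threshold all multiplicities in both structures exceed $T$; Hanf's theorem then gives $\mathfrak{A}_r\equiv_r\mathfrak{B}_r$, contradicting the supposed characterisation since $\mathfrak{A}_r$ contains a cycle while $\mathfrak{B}_r$ does not. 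The main book-keeping obstacle is verifying uniformly in $r$ that all local-type counts clear the Hanf threshold; everything else is standard.
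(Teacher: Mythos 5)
Your complexity-theoretic route is exactly the paper's proof: both reduce finite homclosure membership to directed-cycle detection, note this is $\textsc{NLogSpace}$-complete, and invoke $\FOe \subseteq \mathrm{AC}^0 \subsetneq \textsc{NLogSpace}$ via Furst--Saxe--Sipser. Your Hanf-locality alternative is a correct and genuinely different argument, and the bookkeeping checks out: in $C_{2r+2}$ the $r$-ball around any vertex is a $(2r{+}1)$-vertex directed path with the marked vertex centred (since $2r{+}1 < 2r{+}2$ the ball never wraps), matching the two interior positions $r{+}1$ and $r{+}2$ of $P_{2r+2}$; the $2r$ boundary sphere-types of $P_{2r+2}$ are pairwise non-isomorphic and occur exactly once per path; hence the interior type has multiplicities $2N{+}2r{+}2$ vs.\ $2N{+}2$ and each boundary type $N$ vs.\ $N{+}1$, so $N \geq T$ pushes every count past the Hanf threshold, giving $\mathfrak{A}_r \equiv_r \mathfrak{B}_r$ while exactly one contains a cycle. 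The tradeoff is transparency versus brevity: the paper's route is one line given the citations, while yours is self-contained modulo Hanf's theorem and avoids appealing to circuit lower bounds; it also makes the precise obstruction to $\FOe$-definability (locality, not just low circuit complexity) visible, which is arguably more informative. Your preliminary characterisation of $\homcl{\getmodels{\Phi_\infty}}$ via the existence of a successor-closed non-empty $S$ is equivalent to the paper's ``contains an infinite $\sigP$-path'' formulation and is worked out in slightly more detail.
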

\minuss

Given that homclosures are not guaranteed to be characterizable in $\FOe$ even for very inexpressive $\FOe$ fragments, we might ask whether one can at least tell sentences that have a $\FOe$-characterizable homclosure from those that do not.
%

$\left.\right.$\\[-1.5ex]
\probbox{
	\textbf{Problem:} $\mathrm{ExCharHomCl}_\ffin^\Logic$\\
	\textbf{Input:} $\tau$, $\tau$-sentence $\Phi$.\\
	\textbf{Output:} \textsc{yes}, if $\getffinmodels{\Psi} = \homclffin{\getffinmodels{\Phi}}$ for some $\Logic$-sen\-tence $\Psi$, \textsc{no} otherwise.  
}
\\

It turns out the answer is no. Also, there is no stronger logic capable of characterizing the homclosures of all $\FOe$ sentences, nor one that would facilitate $\mathrm{ExCharHomCl}_\ffin$.

\begin{theorem}[restate=ExCharHomA, name=]
$\mathrm{ExCharHomCl}_\ffin^\Logic$ is undecidable for $\FOe$ sentences whenever $\Logic$ is a non-degenerate logic with a decidable model checking problem.
\end{theorem}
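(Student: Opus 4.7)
The plan is to reduce finite satisfiability of $\FO$ (undecidable by Trakhtenbrot) to the complement of $\mathrm{ExCharHomCl}_\ffin^\Logic$. The starting observation is that if some $\Psi \in \Logic$ characterizes $\homclffin{\getffinmodels{\Phi}}$, then, since $\Logic$ has decidable model checking, this class is a recursive set of finite $\tau$-structures (up to isomorphism). Yet $\homclffin{\getffinmodels{\Phi}}$ is always recursively enumerable (enumerate finite $\mathfrak{B} \models \Phi$ together with candidate finite $\mathfrak{A}$, then check for a homomorphism), so exhibiting a properly r.e.\ (non-recursive) homclosure suffices to defeat $\Logic$-characterizability. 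At the other end, the empty class is always characterizable by the unsatisfiable $\Phi_\textbf{false}$ guaranteed by non-degeneracy.

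Given $\chi \in \FO$ over signature $\sigma$, I would build $\Phi_\chi \in \FOe$ over an extended signature $\tau = \sigma \uplus \sigma^*$, of the form $\chi^{\mathrm{rel}(\sigU)} \wedge \exists x.\sigU(x) \wedge \Phi^*$, where $\sigU \in \sigma^*$ is a fresh unary predicate and $\Phi^*$ is a fixed $\FOe$ $\sigma^*$-sentence whose finite models encode halting runs of a fixed universal Turing machine $U$ on some input, with the input segment made homomorphism-rigid by two anchor constants $\sigc_{\mathrm{start}}$, $\sigc_{\mathrm{end}}$ together with a strict, self-loop-free successor chain and unary tape-label predicates. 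Using signature-disjointness and a suitable domain expansion, finite satisfiability of $\Phi_\chi$ coincides with finite satisfiability of $\chi$: when $\chi$ is finitely unsatisfiable, $\getffinmodels{\Phi_\chi} = \emptyset$ and the homclosure is trivially characterized by $\Phi_\textbf{false}$; when $\chi$ is finitely satisfiable, the homclosure inherits the computational content of $\Phi^*$.

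The workhorse is $\Phi^*$: for each string $w$ over $U$'s alphabet, one has a canonical finite $\sigma^*$-structure $\mathfrak{A}_w$ -- the input chain of length $|w|$ with tape labels and endpoint constants, plus a ``universal sink'' element absorbing all higher-arity computation predicates -- for which $\mathfrak{A}_w \in \homclffin{\getffinmodels{\Phi^*}}$ iff $w \in L(U)$. The forward direction witnesses membership via the homomorphism that folds the computation-encoding $\Phi^*$-model into the sink while keeping the input chain identical; the converse uses preservation of the anchor constants together with the injectivity on the chain enforced by $\mathfrak{A}_w$ being self-loop-free to force any $\Phi^*$-preimage to encode a halting $U$-run on exactly $w$. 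Since the mapping $w \mapsto \mathfrak{A}_w$ is recursive and $L(U)$ is $\Sigma_1$-complete, $\homclffin{\getffinmodels{\Phi^*}}$ is properly r.e., and the signature-disjoint conjunction with $\chi^{\mathrm{rel}(\sigU)}$ propagates this to $\homclffin{\getffinmodels{\Phi_\chi}}$. The main obstacle is the rigidity construction: homomorphisms do not preserve the absence of self-loops universally, so one cannot axiomatize rigidity globally, but it suffices to control collapse at the canonical targets $\mathfrak{A}_w$ themselves -- designed without chain self-loops -- where homomorphism-preservation of the constants, the chain successor, and the tape labels then forces the input to match $w$ positionally.
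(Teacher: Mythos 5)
Your high-level plan is exactly the paper's: relativize the given formula to a fresh unary predicate $\sigU$, conjoin on a disjoint signature with a fixed ``undecidability engine,'' and split cases -- if the formula is finitely unsatisfiable the homclosure is empty and characterizable by $\Phi_\textbf{false}$ (using non-degeneracy), and otherwise any $\Logic$-characterization would let you decide an undecidable problem via decidable model checking. Your reformulation of the second case as ``the homclosure is properly r.e., so not recursive'' is a clean and correct abstraction of that last step. Where you diverge is the engine itself. The paper takes $\Phi_\mathrm{grid}$ -- a fixed, constant-free \TGD{} over $\{\sigH,\sigV\}$ -- and reuses the fact, already proved in the homclosure-membership section (via margin-constrained tilings and the test structures $\mathfrak{A}_\mathcal{D} \cdot \widehat{\tau_1} \cdot \widehat{\{\sigU\}}$), that membership in $\homcl{\getffinmodels{\Phi_\mathrm{grid}}}$ is undecidable. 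You instead propose to construct a fresh $\Phi^*$ encoding a universal Turing machine, with anchor constants, a rigid input chain, and a sink element. That route is plausible but substantially under-specified at exactly the critical point: you need $\mathfrak{A}_w \in \homclffin{\getffinmodels{\Phi^*}}$ iff $w \in L(U)$, and while you flag the chain-collapse issue yourself, there are further interactions you gloss over -- the chain elements of a $\Phi^*$-model will participate in the computation predicates, so those atoms must be absorbable in $\mathfrak{A}_w$ without allowing the successor chain to fold into the sink (defeating the length constraint) or the computation grid to flood into the chain (invalidating the backward direction); and the disjoint-signature conjunction requires choosing the right $\tau_1 \cup \{\sigU\}$-expansion of $\mathfrak{A}_w$ as the test structure, which is precisely what the paper does with the $\cdot\widehat{\phantom{\tau}}$ operation. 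None of this is a refutation of your plan, but it is exactly the work the paper avoids by leaning on its already-proven tiling lemma, and it is why the paper can get away with a constant-free engine and no rigidity argument at all.
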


\begin{theorem}[restate=ExCharHomB, name=]
	$\mathrm{ExCharHomCl}_\ffin^\Logic$ is undecidable for $\TGD$ sentences whenever $\Logic$ is a logic able to express $\cq{\mathfrak{F}}$ and having a decidable model checking problem.
\end{theorem}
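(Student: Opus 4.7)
My plan is to reduce from the undecidability of tiling for deterministic margin-constrained domino systems (\Cref{gridlemma}). Given such a system $\mathcal{D}$ I will construct, in logarithmic space, a \TGD{} sentence $\Phi_\mathcal{D}$ over a signature with a single constant $\sigc$ such that $\homclffin{\getffinmodels{\Phi_\mathcal{D}}}$ is $\Logic$-expressible if and only if $\mathcal{D}$ admits no (ultimately periodic) tiling. I take $\Phi_\mathcal{D}$ to be $\Phi_{\mathcal{D}\text{-}\mathrm{tiling}}$ from \Cref{def:TGDdettiling}, conjoined for every predicate $\sigP \in \tau$ of arity $k$ with a ``collapse'' rule
\begin{equation}
\forall x\, y_1 \ldots y_k.\ \sigT_\emptyset(x)\ \impl\ \sigP(y_1, \ldots, y_k),
\end{equation}
which guarantees that the presence of a single conflict element forces every predicate to be interpreted as the full relation.

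For the easy direction, suppose $\mathcal{D}$ admits no tiling. By the property of $\Phi_{\mathcal{D}\text{-}\mathrm{tiling}}$ recorded after \Cref{def:TGDdettiling}, every finite model $\mathfrak{A}$ of $\Phi_\mathcal{D}$ must have $\sigT_\emptyset^\mathfrak{A} \neq \emptyset$, so the collapse rules fire and make every predicate total on $A$. For any such ``trivialised'' $\mathfrak{A}$ one routinely checks that $\mathfrak{A} \to \mathfrak{B}$ iff $\mathfrak{F} \to \mathfrak{B}$: the forward direction uses $(\sigc^\mathfrak{A}, \ldots, \sigc^\mathfrak{A}) \in \sigP^\mathfrak{A}$ for every $\sigP$; the backward direction uses the constant map $a \mapsto \sigc^\mathfrak{B}$. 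Hence $\homclffin{\getffinmodels{\Phi_\mathcal{D}}} = \homclffin{\{\mathfrak{F}\}} = \getffinmodels{\cq{\mathfrak{F}}}$, which is $\Logic$-expressible by our standing assumption on $\Logic$.

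The main obstacle is the hard direction: assuming that $\mathcal{D}$ admits an ultimately periodic tiling, I must show $\homclffin{\getffinmodels{\Phi_\mathcal{D}}}$ is not $\Logic$-expressible. Since $\Logic$-expressibility, combined with decidable model checking, yields decidability of class membership, it suffices to prove that membership in $\homclffin{\getffinmodels{\Phi_\mathcal{D}}}$ is undecidable. I would reduce from $\mathrm{InHomCl}_\mathrm{fin}$ for the fixed \TGD{} $\Phi_\mathrm{grid}$, shown undecidable in \Cref{corr:undecTGD}. Given an input $\mathfrak{A}$ over $\{\sigH, \sigV, \sigc\}$, I construct $\mathfrak{A}^+$ by enriching $\mathfrak{A}$ with an interpretation of the tile predicates inherited from a fixed finite tiling witness of $\mathcal{D}$, while simultaneously suppressing one atomic fact at $\sigc$ (e.g.\ arranging $(\sigc^{\mathfrak{A}^+}, \sigc^{\mathfrak{A}^+}) \notin \sigH^{\mathfrak{A}^+}$). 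The second move kills every homomorphism into $\mathfrak{A}^+$ from a trivialised $\Phi_\mathcal{D}$-model (all such models require self-loops at the image of $\sigc$), so the only finite $\Phi_\mathcal{D}$-models that can witness $\mathfrak{A}^+ \in \homclffin{\getffinmodels{\Phi_\mathcal{D}}}$ are the genuine tile-respecting ones; an analysis of how their grid skeletons must route through $\mathfrak{A}$ then yields $\mathfrak{A}^+ \in \homclffin{\getffinmodels{\Phi_\mathcal{D}}} \iff \mathfrak{A} \in \homclffin{\getffinmodels{\Phi_\mathrm{grid}}}$. The delicate technical step is verifying that this padding is strong enough to eliminate spurious witnesses, yet weak enough to let through every tileable preimage whose underlying grid sits in $\mathfrak{A}$.
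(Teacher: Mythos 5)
Your overall structure mirrors the paper's: pair $\Phi_{\mathcal{D}\text{-}\mathrm{tiling}}$ with a collapse mechanism triggered by $\sigT{}_\emptyset$ so that, when $\mathcal{D}$ is untilable, the homclosure collapses to $\getffinmodels{\cq{\mathfrak{F}}}$; and argue undecidability of homclosure membership when $\mathcal{D}$ is tilable. Your easy direction is essentially right, modulo a syntactic flaw: the collapse rules $\forall x\,y_1\ldots y_k.\ \sigT{}_\emptyset(x) \impl \sigP(y_1,\ldots,y_k)$ have universally quantified variables $y_1,\ldots,y_k$ in the head that do not occur in the body, which is outside the paper's TGD syntax. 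The paper instead conjoins $(\exists x.\sigT{}_\emptyset(x))\impl\cq{\mathfrak{F}}$, a genuine TGD, which has the same effect: any finite model with $\sigT{}_\emptyset$ nonempty satisfies $\cq{\mathfrak{F}}$ and hence receives a homomorphism from $\mathfrak{F}$.

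The genuine gap is the hard direction, where your $\mathfrak{A}^+$ construction is neither fully specified nor clearly correct. You do not explain how to ``inherit'' a $\mathcal{D}$-tile labeling onto an arbitrary $\{\sigH,\sigV,\sigc\}$-structure $\mathfrak{A}$: its elements are not grid cells, so there is no canonical interpretation of the $\sigT{}$-predicates, and whatever you pick must be compatible with some non-trivialised finite $\Phi_\mathcal{D}$-model mapping into $\mathfrak{A}^+$. Worse, such a model must carry a consistent $\mathcal{D}$-tiling keeping $\sigT{}_\emptyset$ empty, but the grid witness $\mathfrak{B}$ provided by $\mathfrak{A}\in\homclffin{\getffinmodels{\Phi_\mathrm{grid}}}$ (say a torus of period $p$) need not support one, since $\mathcal{D}$'s assumed ultimately periodic tiling may have period $p'$ with $p'\nmid p$. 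Deleting $(\sigc^{\mathfrak{A}^+},\sigc^{\mathfrak{A}^+})$ from $\sigH^{\mathfrak{A}^+}$ compounds this: it also blocks legitimate witnesses $\mathfrak{B}$ with $(\sigc^\mathfrak{B},\sigc^\mathfrak{B})\in\sigH^\mathfrak{B}$, so the claimed equivalence is in jeopardy in both directions. The paper sidesteps all of this by reducing not from $\mathrm{InHomCl}_\mathrm{fin}$ over arbitrary inputs but from the tiling problem for a second domino system $\mathcal{D}'$, querying membership of $\mathfrak{A}_{\mathcal{D}'}\cdot\widehat{\sigma}$ with the tile predicates interpreted as full. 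There one is free to \emph{construct} the witness: a torus whose period is a common multiple of the $\mathcal{D}$- and $\mathcal{D}'$-periods carries a $\mathcal{D}$-labeling (so $\sigT{}_\emptyset$ stays empty) and simultaneously maps onto $\mathfrak{A}_{\mathcal{D}'}$, while the full tile predicates render the labeling invisible to the homomorphism condition. That is precisely the delicate step you flag but leave unresolved.
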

\minuss

After establishing these negative results, we turn to the other end of the spectrum to identify cases where the homomorphism closure is guaranteed to be characterizable in $\FOte$ and thus -- as a consequence of the homomorphism preservation theorems -- even in $\EpFO$. For $\EAFO$, the Bernays-Schönfinkel class, this is an easy consequence of the existence of ``small'' induced submodels.

\begin{proposition}[restate=EAFOinEPO, name=]
	Every $\EAFO$ sentence $\Phi$ satisfies $\homcl{\getmodels{\Phi}} = \homcl{\getfinmodels{\Phi}}$ and one can effectively compute a $\EpFO$ sentence $\Psi$ with $\homcl{\getmodels{\Phi}}=\getmodels{\Psi}$.
\end{proposition}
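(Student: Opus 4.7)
The strategy is to exploit the well-known small model property of the Bernays-Schönfinkel class and then convert the finitely many small witnessing structures into a union of conjunctive queries.

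First, I would write $\Phi$ in prenex form as $\exists x_1 \ldots x_k \forall y_1 \ldots y_m. \varphi$ with $\varphi$ quantifier-free, over a signature with constant symbols $\rho$. The key observation is that if $\mathfrak{A} \models \Phi$ via witnesses $\bold{a} \in A^k$, then the substructure $\mathfrak{A}_0$ induced by $\bold{a} \cup \{\sigc^{\mathfrak{A}} \mid \sigc \in \rho\}$ still satisfies $\Phi$. Indeed, $\varphi(\bold{a},\bold{y})$ is a Boolean combination of atomic formulae whose only terms are variables from $\bold{x},\bold{y}$ or constants from $\rho$; since $\mathfrak{A}_0$ contains all witnesses and all constants and is induced, equalities and predicate atoms $\sigP(\cdots)$ have the same truth value whether evaluated in $\mathfrak{A}$ or $\mathfrak{A}_0$, so the universal statement is inherited (the universe of $\mathfrak{A}_0$ is a subset of that of $\mathfrak{A}$). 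Thus every model of $\Phi$ contains, as an embedded substructure, a model of $\Phi$ of cardinality at most $k + |\rho|$.

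Since every embedding is in particular a homomorphism, $\homcl{\getmodels{\Phi}} = \homcl{\getfinmodels{\Phi}}$: any $\mathfrak{B} \in \homcl{\getmodels{\Phi}}$ is the target of a homomorphism $h\colon \mathfrak{A} \to \mathfrak{B}$ with $\mathfrak{A} \models \Phi$, and the restriction $h \circ \iota$ along the embedding $\iota\colon \mathfrak{A}_0 \embed \mathfrak{A}$ witnesses $\mathfrak{B} \in \homcl{\getfinmodels{\Phi}}$. Moreover, since up to isomorphism there are only finitely many $\tau$-structures of size at most $k+|\rho|$, and model-checking $\Phi$ against each such finite structure is decidable, we can effectively enumerate the (finitely many) isomorphism classes $\mathfrak{A}_0^{(1)}, \ldots, \mathfrak{A}_0^{(N)}$ of models of $\Phi$ of size at most $k + |\rho|$.

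Finally, for each such $\mathfrak{A}_0^{(i)}$ the canonical conjunctive query $\cq{\mathfrak{A}_0^{(i)}}$ (as defined in the preliminaries) has the property that for every $\tau$-structure $\mathfrak{B}$, $\mathfrak{B} \models \cq{\mathfrak{A}_0^{(i)}}$ iff there is a homomorphism $\mathfrak{A}_0^{(i)} \to \mathfrak{B}$, i.e., iff $\mathfrak{B} \in \homcl{\{\mathfrak{A}_0^{(i)}\}}$. Hence setting
\begin{equation*}
\Psi \;:=\; \bigvee_{i=1}^{N} \cq{\mathfrak{A}_0^{(i)}}
\end{equation*}
yields an $\EpFO$ sentence (in fact a \UCQ) with $\getmodels{\Psi} = \bigcup_i \homcl{\{\mathfrak{A}_0^{(i)}\}} = \homcl{\getmodels{\Phi}}$ by the previous paragraph, and $\Psi$ is effectively computable from $\Phi$. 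There is no real obstacle here beyond careful bookkeeping of constants to make the small-substructure argument go through in the presence of equality; the rest is routine.
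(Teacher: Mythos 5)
Your proposal is correct and takes essentially the same approach as the paper. The paper's only cosmetic difference is that it first skolemizes the outer existential block of $\Phi$ into fresh constants, reducing to the purely universal class $\AsFO$, and then takes the substructure induced by the constants; you work directly with the $\exists^*\forall^*$ prenex form and the witnessing tuple, which amounts to the same small-model extraction and yields the same finite family of small models whose canonical conjunctive queries are disjoined.
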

\minuss

Using descriptive complexity theory \cite{DBLP:books/daglib/0095988}, this implies a very low data complexity of checking hom\-closure membership.  

\begin{corollary}
	For any fixed $\EAFO$ sentence $\Phi$, checking membership in $\homcl{\getffinmodels{\Phi}}$ is in \textsc{AC}$^0$ in the size of the structure.
\end{corollary}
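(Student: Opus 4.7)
The corollary is essentially a one-line consequence of the preceding proposition combined with a classical result from descriptive complexity, so my plan is to treat it as an immediate unpacking rather than developing new machinery. First I would invoke the preceding proposition to produce, from the fixed $\EAFO$ sentence $\Phi$, an $\EpFO$ sentence $\Psi$ satisfying $\homcl{\getmodels{\Phi}}=\getmodels{\Psi}$, together with the equality $\homcl{\getmodels{\Phi}} = \homcl{\getfinmodels{\Phi}}$. Restricting both sides to finite structures gives $\getfinmodels{\Psi} = \homclfin{\getmodels{\Phi}} = \homclffin{\getffinmodels{\Phi}}$, so that for a finite input structure $\mathfrak{A}$ we have $\mathfrak{A} \in \homcl{\getffinmodels{\Phi}}$ if and only if $\mathfrak{A} \models \Psi$.

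Second, since $\Phi$ is fixed, $\Psi$ is fixed too (its size does not depend on the input). Thus the problem reduces to evaluating a fixed first-order (indeed existential positive) sentence on the input structure, i.e.\ to first-order model checking under data complexity. By the classical result of Barrington, Immerman, and Straubing (see \cite{DBLP:books/daglib/0095988}), this lies in uniform $\mathrm{AC}^0$: the usual translation unfolds the $O(1)$ quantifier alternations into polynomially many disjunctions/conjunctions over tuples of domain elements, evaluated by a constant-depth, polynomial-size circuit whose input gates read off atomic facts of $\mathfrak{A}$. Combining the two steps yields the claimed $\mathrm{AC}^0$ bound.

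There is essentially no obstacle here, since all the work has already been done in the preceding proposition; the only thing to double-check is that the reduction of ``finite homclosure membership'' to ``model checking a fixed FO sentence'' really is legitimate, which follows precisely because $\homcl{\getmodels{\Phi}} = \homcl{\getfinmodels{\Phi}}$ makes the distinction between the arbitrary-model and finite-model homclosures inconsequential on finite inputs.
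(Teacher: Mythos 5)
Your proof is correct and takes essentially the same route the paper intends: the paper itself offers no explicit argument but simply notes that the corollary follows from the preceding proposition (which produces the equivalent $\EpFO$ sentence $\Psi$ and the identity $\homcl{\getmodels{\Phi}}=\homcl{\getfinmodels{\Phi}}$) together with the standard descriptive-complexity fact that data complexity of first-order model checking is in $\textsc{AC}^0$. Your unpacking of the finite/arbitrary homclosure identifications and the reduction to evaluating the fixed sentence $\Psi$ is exactly the intended reasoning.
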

\minuss

Now, after establishing the extremal cases (of characterizability failing altogether vs. it succeeding already within $\EpFO$) we investigate the middle ground in between: the fragments \EAAE{}, \FOte{}, \TGF{}, \GFO{}, and \GNFO{}. By \Cref{the:arb-sameas-fin}, we know that $\homclfin{\getmodels{\Phi}}=\homclfin{\getfinmodels{\Phi}}$ for every sentence $\Phi$ in any of these logics. Therefore, any sentence characterizing $\homcl{\getmodels{\Phi}}$ will also finitely characterize $\homclfin{\getfinmodels{\Phi}}$. Hence, we will next describe how to characterize $\homcl{\getmodels{\Phi}}$ for the considered fragments. The task of characterizing $\homcl{\getfinmodels{\Psi}}$ is slightly different and will be discussed thereafter.

It is helpful to study projective homclosure characterizations.

\begin{definition}
	For logics $\Logic, \Logic' \subseteq \SO$, we say $\Logic'$ \emph{homcaptures} $\Logic$ if for every $\tau$ and $\Logic$ $\tau$-sentence $\Phi$ there exist some $\sigma$ and a $\Logic'$ $\tau{\uplus}\sigma$-sentence $\Psi$ such that  $\homcl{\getmodels{\Phi}} = \getmodels{\Psi}|_{\tau}$.   
\end{definition}
\minuss

Note that whenever a logic $\Logic'$ homcaptures a logic $\Logic$, it follows that for any $\Logic$ sentence $\Phi$, we find an $\ESO(\Logic')$ sentence characterizing $\Phi$'s homclosure, where $\ESO(\Logic')$ describes the fragment of $\SO$ sentences led by second-order existential quantifiers followed by an $\Logic'$ sentence.   

We will next give a description of a generic type-based approach toward homcapturing results, which will be applicable to all mentioned logics with slight variations in the details. 

Fixing a finite signature $\tau$, we let \emph{$n$-types} denote maximal consistent sets $\tp$ of literals (i.e., negated and un\-ne\-ga\-ted atoms) using all constants from $\tau$ and a finite collection of variables $v_1,\ldots,v_n$ (denoted $\bold{v}$) disjoint from the variables used in the considered sentence. 
For convenience, we identify the (necessarily finite) set $\tp$ with $\bigwedge \! \tp$. 
We say an $n$-type $\tp$ is \emph{realized} by an $n$-tuple $\bold{a} = (a_1,\ldots,a_n) \in A^n$ of domain elements of a $\tau$-structure $\mathfrak{A}$, if $(\mathfrak{A},\{v_i \mapsto a_i\}) \models \tp$. We say $\tp$ is \emph{realized} ($k$ times) in $\mathfrak{A}$ if it is realized by ($k$ distinct) tuples of $\mathfrak{A}$. Note that every tuple $\bold{a}$ realizes one unique type, which entirely characterizes the substructure of $\mathfrak{A}$ induced by all the elements of $\bold{a}$ together with all elements denoted by constants.   
Our approach makes use of the fact that for several logics, many questions regarding a structure can be answered, if one knows what types of a certain kind are realized therein. On a generic level, we assume there is a function that, given a $\tau$-sentence $\Phi$ yields a finite set $\mathcal{E}$ of \emph{eligible types} (possibly containing $n$-types of varying $n$), which are sufficient for our purposes. Then, given a $\tau$-structure $\mathfrak{A}$, we let its \emph{type summary} $\mathcal{E}(\mathfrak{A})$ denote the pair $(\mathcal{E}_\text{+},\mathcal{E}_!)$ where $\mathcal{E}_\text{+}$ ($\mathcal{E}_!$) consists of those types from $\mathcal{E}$ realized (exactly once) in $\mathfrak{A}$.\footnote{In fact, $\mathcal{E}_!$ is only needed for the $\FOte$ case in order to deal with the phenomenon of so-called ``kings'', i.e., $1$-types forced to be unique in models. For the other logics considered, $\mathcal{E}_\text{+}$ suffices.} 
Furthermore, the \emph{model summary} $\getsummary{\Phi}$ of a sentence $\Phi$ is defined by $\{\mathcal{E}(\mathfrak{A}) \mid \mathfrak{A} \in \getmodels{\Phi}\}$, which is a finite set by assumption. 

We will extend our signature by a set $\sigma$ of \emph{type predicates}, containing an $n$-ary predicate symbol $\Tp_\tp$ for every eligible $n$-type $\tp \in \mathcal{E}$.
Given a $\tau$-structure $\mathfrak{A}$ and a set $\mathcal{E}$ of eligible types over $\tau$, the \emph{$\mathcal{E}$-adornment} of $\mathfrak{A}$, denoted $\mathfrak{A}\cdot\mathcal{E}$ is the $\tau\uplus\sigma$-structure expanding $\mathfrak{A}$ such that
$\Tp_\tp^{\mathfrak{A}\cdot\mathcal{E}} = \{ \bold{a} \mid \bold{a} \text{ realizes } \tp \text{ in } \mathfrak{A} \}$. In a way, $\mathcal{E}$-adornments materialize the information regarding relevant types in a model. While this information is obviously redundant in the model itself, we will make good use of it when characterizing a model's homomorphic targets.

Given a $\tau$-sentence $\Phi$ (which we presume to be in some normal form with bounded quantifier alternation), we will construct a $\tau\uplus\sigma$-sentence $\Psi$ that (projectively) identifies those $\mathfrak{B}$ into which some model $\mathfrak{A}$ of $\Phi$ has a homomorphism $h\colon\mathfrak{A} \arbhom \mathfrak{B}$, using the following idea: assuming such an $h$ exists, then it will also serve as homomorphism from $\mathfrak{A}\cdot\mathcal{E}$ to $\mathfrak{B}\cdot h(\mathfrak{A}\cdot\mathcal{E}|_\sigma)$ denoting the structure $\mathfrak{B}'$ expanding $\mathfrak{B}$ by letting $\Tp_\tp^{\mathfrak{B}'} = \{ (h(a_1),\ldots,h(a_n)) \mid (a_1,\ldots,a_n) \in \Tp_\tp^{\mathfrak{A}\cdot\mathcal{E}} \}$, that is, where all type information materialized in $\mathfrak{A}$ is transferred to $\mathfrak{B}$ along $h$. In fact, $\Psi$ is then built in a way that its models are precisely such $\mathfrak{B}'$.
Thereby, one integral part of $\Psi$ is a conjunct $\Psi_\mathrm{hom}$ ensuring that types are mapped to $\mathfrak{B}$ in a way compatible with the homomorphism condition, i.e., that at least all positive literals postulated by the type are indeed present: 
\begin{equation}
\Psi_\mathrm{hom} = \!\bigwedge_{\tp \in \mathcal{E}} \!\forall \bold{v}.\big( \Tp_\tp(\bold{v}) \impl \smallbigwedge \{\alpha \tight{\in} \tp \mid \alpha \text{ positive lit.}\} \big)
\end{equation}
Moreover, we require that the collection of ``transferred type atoms'' found in $\mathfrak{B}'$ corresponds to the type summary of some model. To this end, we include in $\Psi$ the case distinction 
\begin{equation}
\bigvee_{(\mathcal{E}_\text{+},\mathcal{E}_!)\in \getsummary{\Phi}} \Psi_{(\mathcal{E}_\text{+},\mathcal{E}_!)},
\end{equation}
where each $\Psi_{(\mathcal{E}_\text{+},\mathcal{E}_!)}$ details existence and relationships between the mapped types. Some of the conditions specified in $\Psi_{(\mathcal{E}_\text{+},\mathcal{E}_!)}$ are generically derived from $(\mathcal{E}_\text{+},\mathcal{E}_!)$, others are obtained by translating constraints from $\Phi$ into ``type language'', expressed by type atoms.
By construction, the sentence $\Psi$ thus obtained holds in some type-adorned structure $\mathfrak{A}\cdot\mathcal{E}$ if and only if $\mathfrak{A} \in \getmodels{\Phi}$. Also, its satisfaction is preserved under ``type-faithful'' homomorphisms  $h\colon\mathfrak{A}\cdot\mathcal{E} \arbhom \mathfrak{B}'$ as described above, thus ensuring $\homcl{\getmodels{\Phi}} \subseteq \getmodels{\Psi}|_{\tau}$.
The other direction $\homcl{\getmodels{\Phi}} \supseteq \getmodels{\Psi}|_{\tau}$ is shown by taking a model $\mathfrak{B}' \in \getmodels{\Psi}$, removing all information about $\tau$, and ``unfolding'' or  ``unraveling'' it such that, in the resulting structure $\mathfrak{A}'$, the diverse $\Tp_\tp$ instances do not provide contradicting information about the $\tau$-connections between their arguments. This allows for reconstructing the interpretations of the original signature elements $\tau$ from the $\Tp_\tp$ instances. After forgetting the latter, we have arrived at our wanted $\mathfrak{A} \in \getmodels{\Phi}$.

Applying the described strategy yields the following theorem.

\begin{theorem}[restate=allhomcaptures, label=thm:allhomcaptures, name=]\label{thm:homcapture}
	\begin{enumerate}[itemindent=0ex, leftmargin=3ex, itemsep=-0.3ex]
		\item $\FOte$ homcaptures itself.
		\item $\TGF$ homcaptures itself and $\EAAE$.
		\item $\GFO$ homcaptures itself and $\GNFO$.
	\end{enumerate}
\end{theorem}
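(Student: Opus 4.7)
The plan is to instantiate the generic type-based homcapturing template laid out above for each of the six cases, choosing the notion of eligible types tailored to the source logic and then verifying that the resulting sentence $\Psi$ lies in the designated target fragment. For $\FOte$ homcapturing itself, I take as eligible types the ordinary $1$-types and $2$-types over $\tau$ (together with the constants of $\tau$); finiteness of $\getsummary{\Phi}$ follows from the classical exponential-size model property for $\FOte$ of Grädel-Kolaitis-Vardi and Pratt-Hartmann, with $\mathcal{E}_!$ tracking the kings. For $\TGF$ and $\GFO$ homcapturing themselves, I would instead restrict to \emph{guarded} $n$-types, i.e.\ those $n$-types whose positive part contains a predicate atom covering the entire variable block $\bold{v}$; this keeps the quantifications introduced by $\Psi_{(\mathcal{E}_\text{+},\mathcal{E}_!)}$ below a type-atom guard $\Tp_\tp(\bold{v})$, which is what preserves membership in the target fragment. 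For $\TGF$ homcapturing $\EAAE$ and $\GFO$ homcapturing $\GNFO$, one uses the same type alphabet as for the homcapturing of the target logic by itself.

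With the eligible types fixed, the sentence $\Psi = \Psi_\mathrm{hom} \wedge \bigvee_{(\mathcal{E}_\text{+},\mathcal{E}_!) \in \getsummary{\Phi}} \Psi_{(\mathcal{E}_\text{+},\mathcal{E}_!)}$ is assembled as in the blueprint. The conjunct $\Psi_\mathrm{hom}$ is as-written a conjunction of universally-closed implications whose bodies are type atoms and whose heads are literals, so it lives in the intersection of all three target fragments. Each disjunct $\Psi_{(\mathcal{E}_\text{+},\mathcal{E}_!)}$ then asserts (i) realization of every $\tp \in \mathcal{E}_\text{+}$; (ii) unique realization of every $\tp \in \mathcal{E}_!$, which is needed only in the $\FOte$ case and uses equality to enforce the king property; and (iii) a type-rewrite of $\Phi$ in which every quantifier $Q\bold{x}$ is decomposed as a Boolean combination ranging over choices of a type atom $\Tp_\tp(\bold{v})$ for the relevant sub-tuple of $\bold{x}$, so that all constraints of $\Phi$ are re-expressed purely on the type level. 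The crucial design choice is that the type predicate $\Tp_\tp$ itself serves as a guard, whence whenever $\Phi$ is (triguarded, negation-guarded, or two-variable) the rewritten sentence is too; and once types have been materialized, the $\exists\forall\forall\exists$ skeleton of an $\EAAE$ sentence, respectively the $\sigP(\bold{x})\wedge\neg\varphi$ pattern of a $\GNFO$ sentence, translates to triguarded, respectively guarded, quantifications on type atoms.

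The forward inclusion $\homcl{\getmodels{\Phi}} \subseteq \getmodels{\Psi}|_\tau$ is the easy direction: given $h\colon \mathfrak{A} \arbhom \mathfrak{B}$ with $\mathfrak{A}\models\Phi$, expand $\mathfrak{B}$ by transporting the type materialization of $\mathfrak{A}\cdot\mathcal{E}$ along $h$, then note that $\Psi_\mathrm{hom}$ survives the transfer and that the disjunct $\Psi_{\mathcal{E}(\mathfrak{A})}$ holds by construction. The main technical obstacle is the converse direction. Given $\mathfrak{B}'\models\Psi$ witnessing summary $(\mathcal{E}_\text{+},\mathcal{E}_!)$, I plan to construct $\mathfrak{A}\in\getmodels{\Phi}$ by \emph{unravelling} $\mathfrak{B}'$: take as domain of $\mathfrak{A}$ a tree-like amalgam of the type-realizing tuples of $\mathfrak{B}'$, interpret each predicate symbol of $\tau$ inside $\mathfrak{A}$ according to the eligible type attached to each tuple (not according to $\mathfrak{B}'|_\tau$), glue two tuples only along shared variables whose type prescriptions agree, and in the $\FOte$ case copy the kings verbatim so as to inherit $\mathcal{E}_!$. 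The projection back to $\mathfrak{B}'|_\tau$ is then a homomorphism by $\Psi_\mathrm{hom}$, and the type summary of $\mathfrak{A}$ equals $(\mathcal{E}_\text{+},\mathcal{E}_!)$, so the type-rewritten clause (iii) forces $\mathfrak{A}\models\Phi$.

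The hard part will be calibrating the unravelling for the two guarded fragments: I need it to produce a structure whose guarded substructures are \emph{isomorphic} to the guarded eligible types realized in $\mathfrak{B}'$, so that the guarded quantifications of $\Phi$ see exactly the witnesses promised by $\Psi$ and nothing else. I expect this to be handled via a guarded-bisimulation style argument in the spirit of Grädel and of Bárány-ten Cate-Segoufin, with the tree-likeness of the unravelling ensuring that no spurious coincidences are introduced. For $\FOte$ the analogous obstruction is the king-faithfulness of the construction, which the $\mathcal{E}_!$ component of the summary is designed to handle, while for $\EAAE$ the extra work consists only in checking that, once types are fixed, the $\exists\forall\forall\exists$ matrix becomes a type-level triguarded sentence. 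Finally, effectivity of $\getsummary{\Phi}$ in each case is inherited from decidability of satisfiability of the source logic, applied to sentences that prescribe the realization or omission of each eligible type.
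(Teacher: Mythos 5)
Your proposal follows the same type-based blueprint the paper actually uses: fix eligible types tailored to the source logic (rigid $1$- and $2$-types for $\FOte$; guarded rigid types of order $\le\mathrm{width}(\Phi)$ for $\GFO$ and $\TGF$), assemble $\Psi = \Psi_\mathrm{hom} \wedge \bigvee_{(\mathcal{E}_+,\mathcal{E}_!)} \Psi_{(\mathcal{E}_+,\mathcal{E}_!)}$, prove the forward inclusion by transporting $\mathfrak{A}\cdot\mathcal{E}$ along $h$, and prove the converse by building a model of $\Phi$ from the type predicates of a model of $\Psi$. Two places diverge enough to flag.

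First, the paper does \emph{not} directly type-rewrite $\GNFO$ sentences as your remark about translating the $\sigP(\bold{x})\wedge\neg\varphi$ pattern suggests. Instead it first establishes (Lemma~\ref{GNFOredGFO}) that every $\GNFO$ sentence has the same homclosure as a $\GFO$ sentence, up to projection; this reduction relies on non-trivial external facts — the $\GNFO$ normal form of Bárány--ten~Cate--Segoufin expressing a $\GNFO$ sentence as $\Phi'\wedge\neg\Phi''$ with $\Phi'\in\GFO$ and $\Phi''$ a UCQ, together with a guarded approximation of $\Phi''$ — and only then applies the $\GFO$ homcapturing machinery. Directly bringing the guarded-negation shape of $\GNFO$ to the type level would be substantially harder, and your sketch gives no hint of how the $\neg\varphi$ bodies (which can themselves contain existential quantifiers) would be absorbed into a type-level rewrite. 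The $\EAAE\to\TGF$ step is more routine (skolemize, add guards) and is handled as a separate preprocessing lemma in the paper as well, but it too should be made explicit rather than folded silently into ``same type alphabet''.

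Second, for $\FOte$ the paper's converse construction is not a tree-like unravelling but a finite fiber product over the type predicates: the domain is $(B\times\mathcal{E}_{+,1})\times\{0,\ldots,k-1\}$ (with the $\mathcal{E}_!$-component kept a singleton) and a function $\gettp$ is used to deterministically disambiguate overlapping $2$-type assignments. A genuine tree unravelling is in tension with the king constraint, and even with kings copied verbatim you would still need an argument for why the amalgam's $2$-types land in $\mathcal{E}_+$; the paper's product-with-shuffle circumvents this. For $\GFO$ and $\TGF$, by contrast, your unravelling intuition is very close to what the paper does: it starts from a disjoint union $\mathfrak{K}_{\mathcal{E}_+}$ of type diagrams and repairs ``violation tuples'' by gluing in fresh copies of $\mathfrak{D}_{\tp'}$ along fitting guards, producing a (possibly infinite) limit structure, which is a guarded-unravelling-style construction in the spirit you describe.
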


As an example for all these logics, note that the hom\-clo\-sure of $\Phi_\infty$ from \Cref{infpath},
is projectively characterized by 
\begin{equation}
\exists x.\big(\hspace{1pt}\sig{Lv}(x)\hspace{1pt}\big) \wedge \forall x.\big(\, \sig{Lv}(x) \impl \exists y. (\hspace{1pt}\sigP(x,y) \wedge \sig{Lv}(y)\hspace{1pt})\,\big).
\end{equation}


In view of our earlier discussion, this ensures that all these logics admit homclosure characterization in $\ESO$, and that the data complexity of homclosure membership is in \textsc{NP} \cite{fagingeneralized}.

\begin{corollary}
	For any fixed \FOte, \TGF{}, \EAAE, \GFO{} or \GNFO{} sentence $\Phi$, checking $\mathfrak{A} \in \homcl{\getmodels{\Phi}}$ 
	is in \textsc{NP}.   
\end{corollary}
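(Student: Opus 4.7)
The plan is to combine the homcapture results of \Cref{thm:allhomcaptures} with Fagin's characterization of \textsc{NP}. Fix any sentence $\Phi$ from one of the listed fragments. By \Cref{thm:allhomcaptures}, we can compute once and for all a signature extension $\sigma$ (consisting of the type predicates introduced there) and a sentence $\Psi$ in a fragment of $\FO$ over $\tau \uplus \sigma$ such that $\homcl{\getmodels{\Phi}} = \getmodels{\Psi}|_\tau$. Hence, for a given finite $\tau$-structure $\mathfrak{A}$, we have $\mathfrak{A} \in \homcl{\getmodels{\Phi}}$ if and only if there is an expansion $\mathfrak{A}'$ of $\mathfrak{A}$ to the enlarged signature $\tau \uplus \sigma$ such that $\mathfrak{A}' \models \Psi$.

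The existence of such an expansion is precisely the assertion that $\mathfrak{A}$ satisfies the $\ESO$ sentence obtained by existentially second-order quantifying over the predicates in $\sigma$ and then stating $\Psi$. By Fagin's theorem~\cite{fagingeneralized}, every $\ESO$-definable class of finite structures is decidable in \textsc{NP} in the size of the input structure. Since $\Phi$, and therefore also $\sigma$ and $\Psi$, are fixed, the resulting $\ESO$ sentence is fixed, and so recognition of $\homcl{\getmodels{\Phi}}$ on finite $\tau$-structures lies in \textsc{NP}.

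Operationally, the \textsc{NP} algorithm nondeterministically guesses interpretations for the predicates in $\sigma$ over the domain $A$ (each of polynomial size, because $\sigma$ has bounded arity depending only on $\Phi$) and then verifies $\mathfrak{A}' \models \Psi$ by $\FO$ model checking, which runs in polynomial time in $|\mathfrak{A}|$ for a fixed sentence. No individual step poses a real obstacle here: the nontrivial content has already been isolated in the homcapture theorem, and the remainder is standard descriptive-complexity bookkeeping. The only subtle point that warrants explicit mention is that ``$\Phi$ fixed'' propagates through the construction so that $\sigma$ and $\Psi$ are fixed as well, ensuring the complexity analysis genuinely takes place in the size of $\mathfrak{A}$; combined with \Cref{the:arb-sameas-fin}, which guarantees $\homclfin{\getmodels{\Phi}} = \homclfin{\getfinmodels{\Phi}}$ for the logics in question, this also means the same bound applies when working within the finite.
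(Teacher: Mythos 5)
Your proof is correct and follows essentially the same route as the paper: invoke the homcapture result (\Cref{thm:allhomcaptures}) to get a projective $\FO$ characterization of $\homcl{\getmodels{\Phi}}$, then observe that this yields an $\ESO$ characterization, and apply Fagin's theorem~\cite{fagingeneralized} to conclude \textsc{NP} data complexity. The explicit guess-and-check algorithm and the remark that fixedness of $\Phi$ propagates to $\sigma$ and $\Psi$ make the argument slightly more self-contained than the paper's terse statement, but the substance is identical.
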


We next complement these $\textsc{NP}$-membership results by providing matching lower bounds even for much weaker logics.

\newcommand{\narrowimp}{ \!\! & {\!\!\!\!\!\!{\impl}\!\!\!\!\!\!} &  \!\! }
\newcommand{\narrowwedge}{\,{\wedge}\,}
\newcommand{\narrowvee}{\,{\vee}\,}

\begin{definition}[restate=phithreesat, name=]\label{def: 3SAT}
	Given an instance of 3SAT $\mathcal{S} = \{C_1, \ldots, C_n\}$ consisting of $n$ clauses each containing 3 literals over propositional variables $p_1,\ldots,p_m$, we define the structure $\mathfrak{A}_{\mathcal{S}}$ by
	\begin{eqnarray}
		A_\mathcal{S} & {\!\!\!\!=\!\!\!\!} & \{a_i \mid 1\leq i\leq n\tight{+}1\} \cup \{b_i, b'_i \mid 1\leq i\leq m\}\ \ \ \ \ \ \ \\
		\sig{First}^{\mathfrak{A}_{\mathcal{S}}} & {\!\!\!\!=\!\!\!\!} & \{a_1\}\\
		\sig{CLst}^{\mathfrak{A}_{\mathcal{S}}} & {\!\!\!\!=\!\!\!\!} & \{a_i \mid 1\leq i\leq n\}\\
		\sig{Nil}^{\mathfrak{A}_{\mathcal{S}}} & {\!\!\!\!=\!\!\!\!} & \{a_{n+1}\}\\
		\sig{Next}^{\mathfrak{A}_{\mathcal{S}}} & {\!\!\!\!=\!\!\!\!} & \{(a_i,a_{i+1}) \mid 1\leq i\leq n\}\\
		\sig{Lit}^{\mathfrak{A}_{\mathcal{S}}} & {\!\!\!\!=\!\!\!\!} & \{(a_i,b_j) \mid p_j \in C_i\} \cup \{(a_i,b'_j) \mid \neg p_j \in C_i\}\\
		\sig{Sel}^{\mathfrak{A}_{\mathcal{S}}} & {\!\!\!\!=\!\!\!\!} & \{b_i, b'_i \mid 1\leq i\leq m\}\\
		\sig{Cmp}^{\mathfrak{A}_{\mathcal{S}}} & {\!\!\!\!=\!\!\!\!} & \{(b_i,b_j),(b'_i,b'_j),(b_i,b'_k),(b'_i,b_k) \mid i\not= k\}
	\end{eqnarray}
	We let $\Phi_\mathrm{3SAT}$ be the sentence containing the conjunction of the following sentences (leading universal quantifier omitted):
\begin{eqnarray}
	& & \!\!  \exists x. \sig{First}(x)  	\label{Def14: 1}\\
	\sig{First}(x)
	\narrowimp
	\sig{CLst}(x)\ \ \ \ \ \label{Def14: 2} \\
	\sig{CLst}(x)
	\narrowimp
	\exists y. \sig{Next}(x,y) \narrowwedge \big(\sig{CLst}(y) \narrowvee \sig{Nil}(y)\big)\ \ \ \ \   \label{Def14: 3} \\
	\sig{CLst}(x) 
	\narrowimp 
	\exists y. \sig{Lit}(x,y)  \wedge \sig{Sel}(y)   \label{Def14: 4}
	\\
	\sig{Sel}(x) \narrowwedge \sig{Sel}(y) 
	\narrowimp 
	\sig{Cmp}(x,y).   \label{Def14: 5}
\end{eqnarray}
\end{definition}
Obviously, $\Phi_\mathrm{3SAT}$ is contained in constant-free, equality-free $\FOt$ (and therefore in $\FOte$ and \TGF). It can be 
rewritten into constant-free, equality-free $\Al\Al\Ex\Ex\FO$ (and therefore into 
$\EAAE$). Moreover, for any set $\mathcal{S}$ of $3$-clauses, satisfiability of the 3SAT instance $\bigwedge_{\{\ell_1,\ell_2,\ell_3\}\in \mathcal{S}} \ell_1 \vee \ell_2 \vee \ell_3$  coincides with $\mathfrak{A}_\mathcal{S} \in \homcl{\getffinmodels{\Phi_\mathrm{3SAT}}}$. Hence we obtain the following statement, providing matching lower bounds.

\begin{proposition}[restate=NPhardFOprop, name=]
	Checking homclosure membership is \textsc{NP}-hard for constant-free, equality-free $\FOt$ and $\Al\Al\Ex\Ex\FO$ with predicate arity $\leq 2$.
\end{proposition}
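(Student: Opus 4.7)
The plan is to give a polynomial-time reduction from 3SAT via the map $\mathcal{S} \mapsto (\Phi_\mathrm{3SAT}, \mathfrak{A}_\mathcal{S})$ of \Cref{def: 3SAT}. The map is clearly polynomial-time computable, and the fragment-membership claims follow by syntactic inspection: the sentence uses only two variables $x,y$, no equality, no constants, and only unary or binary predicates, placing it in the stated constant-free equality-free $\FOt$-fragment. For $\Al\Al\Ex\Ex\FO$, I would prenex by merging the universals of rules (2) and (5) into one $\forall x\forall y$ block and the existentials of rules (3) and (4) into one $\exists y_1 \exists y_2$ block; the lone $\exists$ of rule (1) can then be absorbed by further requiring the witness $y_1$ to satisfy $\sig{First}(y_1)$ whenever the universally quantified $x$ fails to satisfy $\sig{CLst}$. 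The resulting sentence differs from $\Phi_\mathrm{3SAT}$ only in additionally admitting models whose entire domain lies in $\sig{CLst}$, but no such model can homomorphically map into $\mathfrak{A}_\mathcal{S}$ (its $\sig{Next}$-graph necessarily cycles or extends infinitely, whereas $\sig{Next}^{\mathfrak{A}_\mathcal{S}}$ is the acyclic finite path $a_1 \to \cdots \to a_{n+1}$), so the rewriting preserves homclosure membership of $\mathfrak{A}_\mathcal{S}$.

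For the easy direction, given a satisfying assignment $\alpha$ of $\mathcal{S}$, I would let $\mathfrak{B}$ agree with $\mathfrak{A}_\mathcal{S}$ on every predicate except $\sig{Sel}^\mathfrak{B} := \{b_j \mid \alpha(p_j){=}1\} \cup \{b'_j \mid \alpha(p_j){=}0\}$. Routine checking shows $\mathfrak{B} \models \Phi_\mathrm{3SAT}$: rule (4) holds because every clause contains a true literal under $\alpha$ whose element lies in $\sig{Sel}^\mathfrak{B}$, and rule (5) holds because the restricted $\sig{Sel}^\mathfrak{B}$ omits any complementary pair $\{b_j, b'_j\}$, so all Sel-pairs fall inside $\sig{Cmp}^{\mathfrak{A}_\mathcal{S}}$. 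The identity map is then a homomorphism $\mathfrak{B} \to \mathfrak{A}_\mathcal{S}$, witnessing $\mathfrak{A}_\mathcal{S} \in \homcl{\getffinmodels{\Phi_\mathrm{3SAT}}}$.

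The converse direction will be the main work. Given $h\colon \mathfrak{B} \to \mathfrak{A}_\mathcal{S}$ with $\mathfrak{B} \models \Phi_\mathrm{3SAT}$, rule (1) furnishes some $b_0 \in \sig{First}^\mathfrak{B}$; rule (2) forces $b_0 \in \sig{CLst}^\mathfrak{B}$; and $h$ forces $h(b_0) = a_1$. Iterating rule (3) together with the rigid form $\sig{Next}^{\mathfrak{A}_\mathcal{S}} = \{(a_i, a_{i+1})\}$ yields $b_0, \ldots, b_n \in B$ with $h(b_i) = a_{i+1}$, where each $b_i$ for $i<n$ must lie in $\sig{CLst}^\mathfrak{B}$ (not $\sig{Nil}^\mathfrak{B}$, since $a_{i+1}$ is not Nil) and $b_n$ must lie in $\sig{Nil}^\mathfrak{B}$ (not $\sig{CLst}^\mathfrak{B}$, since $a_{n+1}$ is not CLst). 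Rule (4) then supplies for each $i<n$ an element $c_i \in \sig{Sel}^\mathfrak{B}$ with $h(c_i) \in \{b_j \mid p_j \in C_{i+1}\} \cup \{b'_j \mid \neg p_j \in C_{i+1}\}$, representing a literal of clause $C_{i+1}$; rule (5) propagated along $h$, combined with the exclusion of complementary pairs from $\sig{Cmp}^{\mathfrak{A}_\mathcal{S}}$, guarantees that $\{h(c_i)\}_{i<n}$ contains no $\{b_j, b'_j\}$. Setting $\alpha(p_j) := 1$ if some $h(c_i) = b_j$, $\alpha(p_j) := 0$ if some $h(c_i) = b'_j$, and arbitrarily otherwise, thus yields a consistent assignment satisfying every clause of $\mathcal{S}$. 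The main delicacy is controlling the chain of $b_i$'s in $\mathfrak{B}$ under $h$ --- showing it neither loops nor overshoots $a_{n+1}$ --- but this is enforced by the path-like form of $\sig{Next}^{\mathfrak{A}_\mathcal{S}}$ and homomorphism-preservation of the unary predicates, making the induction essentially automatic.
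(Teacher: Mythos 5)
Your proposal is correct and its core (the 3SAT encoding into $\Phi_\mathrm{3SAT}$, the structure $\mathfrak{A}_\mathcal{S}$, the easy and converse directions, and the observation that $\Phi_\mathrm{3SAT}$ itself lands in constant-free, equality-free $\FOt$ with arity $\le 2$) coincides with the paper's proof. Where you genuinely diverge is in the $\Al\Al\Ex\Ex\FO$ case. The paper constructs a \emph{different} $\forall x\,\forall y\,\exists z\,\exists v$-sentence $\Psi$ over an \emph{enlarged} signature: it adds an auxiliary binary predicate $\sig{Eq}$, moves the target structure to the expansion $\mathfrak{A}^*_\mathcal{S}$ of $\mathfrak{A}_\mathcal{S}$ by the diagonal interpretation of $\sig{Eq}$, spends the existential $z$ unconditionally on $\sig{First}(z)$, and multiplexes the single remaining existential $v$ between the $\sig{Next}$- and $\sig{Lit}$-witnesses by guarding those two conjuncts with the (disjoint, in the intended model) antecedents $\sig{Lit}(x,y)$ vs.\ $\sig{Eq}(x,y)$; this guarantees $\getffinmodels{\Psi}|_\tau \subseteq \getffinmodels{\Phi_\mathrm{3SAT}}$ so that one inclusion is immediate. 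You instead stay on the original signature and target $\mathfrak{A}_\mathcal{S}$, keep two existentials $y_1,y_2$ for the $\sig{Next}$- and $\sig{Lit}$-witnesses respectively, and absorb rule (1) via the conjunct $\sig{CLst}(x)\vee\sig{First}(y_1)$; the price is an over-approximation $\getffinmodels{\Phi_\mathrm{3SAT}}\subseteq\getffinmodels{\Psi'}$ whose surplus models (those with $\sig{CLst}^\mathfrak{B}=B$) you then must and do rule out by a $\sig{Next}$-cycle/infinite-path argument against the acyclic finite $\sig{Next}^{\mathfrak{A}_\mathcal{S}}$. Your route is arguably cleaner in that it needs neither an extra predicate nor an expanded target structure, while the paper's route makes the model-class inclusion syntactically immediate at the cost of the $\sig{Eq}$ machinery; both establish the proposition. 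One small technical point worth stating explicitly in a write-up: your cycle argument relies on non-empty domains (so that an infinite $\sig{Next}$-chain exists in the first place) and on $\sig{Next}^{\mathfrak{A}_\mathcal{S}}$ being functional as well as acyclic, both of which hold here but deserve a sentence.
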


Note that this settles the case of the data complexity of homclosure membership for any logic between $\FOt$ / $\Al\Al\Ex\Ex\FO$ on the lower and  $\FOte$ / $\TGF{}$ / $\EAAE$ on the upper end, showing that one cannot hope for characterizability in a logic having a lower data complexity of model checking. It does not provide a lower bound for $\GFO$ or $\GNFO$, though. As it turns out, we can in fact do better for these\textellipsis

Given a $\GNFO$ sentence $\Phi$, let $\Psi$ be the $\GFO$ sentence projectively characterizing $\homcl{\getmodels{\Phi}}$ constructed as described before.
Then we obtain $\homcl{\getmodels{\Phi}} = \getmodels{\exists \sigma.\Psi}$ and also observe
\begin{align}
	\exists \sigma.\Psi & = \exists \sigma . \Big(\Psi_\mathrm{hom} \wedge \hspace{-1ex} \bigvee_{\scriptscriptstyle(\mathcal{E}_\text{+},\mathcal{E}_!)\in \getsummary{\Phi}} \hspace{-1ex} \Psi_{\!(\mathcal{E}_\text{+},\mathcal{E}_!)}\Big) \\
	 & \equiv \hspace{-1ex} \bigvee_{\scriptscriptstyle(\mathcal{E}_\text{+},\mathcal{E}_!)\in \getsummary{\Phi}} \hspace{-1ex} \exists \sigma . \big(\Psi_{\!(\mathcal{E}_\text{+},\mathcal{E}_!)} \wedge \Psi_\mathrm{hom} \big).
\end{align}
Thoroughly inspecting $\exists \sigma.(\Psi_{(\mathcal{E}_\text{+},\mathcal{E}_!)} \wedge \Psi_\mathrm{hom})$, we find that it can be equi\-valently rewritten into the following $\FOe$ sentence with least fixed points over simultaneous inductive definitions for ``complement predicates'' $\overline{\Tp}_\tp$ from a signature $\overline{\sigma}$:
\begin{equation} 
\bigwedge_{\tp\, \in \,\mathcal{E}_\text{+}} \!\exists \bold{x}.\neg\Big[\textbf{lfp}_{\overline{\Tp}_\tp} \big\{\, \overline{\Tp}_{\tp'}(\bold{y}) \leftarrow \xi_{\overline{\Tp}_{\tp'}}[\bold{y}] \,\, \big|\,  \tp' \tight{\in} \mathcal{E}\,\big\}\Big](\bold{x}),
\end{equation} 
where the $\xi_{\overline{\Tp}_\tp'}$ are $\FOe$ $\tau\uplus\overline{\sigma}$-formulae with only positive occurrences of type atoms, obtained from aggregating the bodies of the contrapositions of the implications in $\Psi_{(\mathcal{E}_\text{+},\mathcal{E}_!)}$.\footnote{Conceptually, this equivalent transformation can be explained as follows: for any two models $\mathfrak{B}_1$, $\mathfrak{B}_2$ of $\Psi_{(\mathcal{E}_\text{+},\mathcal{E}_!)}$ with $\mathfrak{B}_1|_\tau = \mathfrak{B}_2|_\tau$, the struc\-ture $\mathfrak{B}_1 {\cdot} \mathfrak{B}_2$ obtained as the $\sigma$-expansion of $\mathfrak{B}_1|_\tau$ by $\Tp_\tp^{\mathfrak{B}_1 {\cdot} \mathfrak{B}_2} = \Tp_\tp^{\mathfrak{B}_1} {\cup} \Tp_\tp^{\mathfrak{B}_2}$ for all eligible $\tp$ is also a model. Hence, for every $(\mathcal{E}_\text{+},\mathcal{E}_!)$-compatible $\tau$-structure $\mathfrak{B}$, there is a ``maximal $\sigma$-expansion'' $\mathfrak{B}^*$ satisfying $\Psi_{(\mathcal{E}_\text{+},\mathcal{E}_!)}$, which can be deterministically obtained in a type-elimination like process, via a least fixed-point computation over predicates $\overline{\Tp}_{\tp}$ that express exactly the absence of ${\Tp}_{\tp}$. This spares us the ``guessing'' of type relations via $\exists \sigma$.}
Simultaneous induction can in turn be expressed using plain least-fixed-point logic \cite{GurevichS86}, therefore $\Psi$ as a whole can be expressed in first-order least-fixed-point logic denoted $\FOe^\mathbf{lfp}$ (sometimes also referred to as LFP or FO+LFP). Using the fact that evaluating $\FOe^\mathbf{lfp}$ sentences over structures can be done in polynomial time \cite{Immerman82,Vardi82}, we obtain an improved result regarding the data complexity of homclosure membership.

\begin{theorem}[restate=GNFOinLFP, name=]\label{sorpresa}
Homclosures of $\GNFO$ and $\GFO$ sentences can be characterized in $\FOe^\mathbf{lfp}$. Thus, for a fixed $\GNFO$ or $\GFO$ sentence $\Phi$, checking $\mathfrak{A} \in \homcl{\getmodels{\Phi}}$ is \textsc{P}-complete.
\end{theorem}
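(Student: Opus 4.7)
The plan is to leverage the $\GFO$ projective characterization from Theorem~\ref{thm:allhomcaptures} and then eliminate the existential second-order quantifier $\exists\sigma$ by a least-fixed-point computation over ``complement'' type predicates. Starting from a $\GNFO$ (or $\GFO$) sentence $\Phi$, Theorem~\ref{thm:allhomcaptures} supplies a $\GFO$ $\tau\uplus\sigma$-sentence of the shape $\Psi = \Psi_\mathrm{hom} \wedge \bigvee_{(\mathcal{E}_+,\mathcal{E}_!)\in\getsummary{\Phi}}\Psi_{(\mathcal{E}_+,\mathcal{E}_!)}$ with $\homcl{\getmodels{\Phi}} = \getmodels{\exists\sigma.\Psi}|_\tau$. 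Distributing $\exists\sigma$ over the finite disjunction reduces the task to rewriting each disjunct $\exists\sigma.(\Psi_\mathrm{hom}\wedge\Psi_{(\mathcal{E}_+,\mathcal{E}_!)})$ into $\FOe^\mathbf{lfp}$, whose finite disjunction then characterizes $\homcl{\getmodels{\Phi}}$.

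The crux is the maximal-expansion argument foreshadowed in the footnote. First I would verify that $\Psi_\mathrm{hom}\wedge\Psi_{(\mathcal{E}_+,\mathcal{E}_!)}$ is closed under pointwise $\sigma$-union over a fixed $\tau$-reduct---if $\mathfrak{B}_1, \mathfrak{B}_2 \models \Psi_\mathrm{hom}\wedge\Psi_{(\mathcal{E}_+,\mathcal{E}_!)}$ and agree on $\tau$, then so does the structure $\mathfrak{B}_1\cdot\mathfrak{B}_2$ whose type-predicate interpretations are their unions---by inspecting the syntactic form produced in the proof of Theorem~\ref{thm:allhomcaptures} and confirming that all $\sigma$-constraints are monotone in the $\Tp_\tp$. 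It follows that every $\tau$-structure $\mathfrak{A}$ admits a unique \emph{maximal} $\sigma$-expansion $\mathfrak{A}^\star$ satisfying the $\sigma$-fragment of the constraints, and $\mathfrak{A}\models \exists\sigma.(\Psi_\mathrm{hom}\wedge\Psi_{(\mathcal{E}_+,\mathcal{E}_!)})$ holds exactly if $\mathfrak{A}^\star$ also meets the residual, purely first-order, $\tau$-level conditions. The expansion $\mathfrak{A}^\star$ is computed in type-elimination style: start with $\Tp_\tp^{\mathfrak{A}}$ set to $A^n$ for the arity $n$ of each eligible $\tp$, and iteratively remove a tuple $\bold{a}$ from $\Tp_\tp^{\mathfrak{A}}$ whenever some clause of $\Psi_{(\mathcal{E}_+,\mathcal{E}_!)}\wedge\Psi_\mathrm{hom}$ is witnessed to fail at $\bold{a}$. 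This is exactly the simultaneous monotone fixed-point computation of the complements $\overline{\Tp}_\tp$ displayed in the excerpt, whose bodies $\xi_{\overline{\Tp}_{\tp'}}$ are the contrapositions of those clauses and are positive in the $\overline{\Tp}_\tp$. Using the standard reduction of simultaneous LFP to plain LFP~\cite{GurevichS86}, each disjunct becomes an $\FOe^\mathbf{lfp}$ sentence, and so does their disjunction.

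Polynomial-time data complexity then follows directly from the Immerman--Vardi theorem~\cite{Immerman82,Vardi82}. For the matching \textsc{P}-hardness I would exhibit a \emph{fixed} $\GFO$ (hence also $\GNFO$) sentence into whose homclosure membership a canonical \textsc{P}-complete problem---such as monotone circuit value or Horn-SAT---reduces in logarithmic space via the choice of the target structure, using guarded rules to propagate truth along gates or Horn implications in the style of \Cref{def: 3SAT}. The main obstacle I anticipate is rigorously establishing the monotonicity/union-closure claim for $\Psi_\mathrm{hom}\wedge\Psi_{(\mathcal{E}_+,\mathcal{E}_!)}$: the formula produced in the proof of Theorem~\ref{thm:allhomcaptures} blends ``homomorphism-transferred'' type atoms with existential witness requirements and case distinctions over the summary, and one must check that no clause---especially those related to guards or to the $\mathcal{E}_!$ component carried over from the $\FOte$ treatment---introduces an anti-monotone dependency on $\sigma$. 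Should a clause fail to be literally monotone, I would reformulate $\Psi_{(\mathcal{E}_+,\mathcal{E}_!)}$ into a manifestly monotone shape (or use a stratified fixed point) before applying the LFP translation; the remaining first-order, $\tau$-level conditions (``every $\tp\in\mathcal{E}_+$ is realized'' and the $\mathcal{E}_!$-uniqueness obligations) are then expressed over the $\FOe^\mathbf{lfp}$-definable $\Tp_\tp$ without further difficulty.
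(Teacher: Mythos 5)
Your plan matches the paper's own proof almost step for step: distribute $\exists\sigma$ over the finite model-summary disjunction, exploit union-closure of the $\sigma$-constraints (the footnote's maximal-expansion observation) to obtain a deterministic type-elimination computation, and express it as a simultaneous monotone LFP over complement predicates $\overline{\Tp}_\tp$ built from the contrapositions of the implications, then appeal to \cite{GurevichS86} and \cite{Immerman82,Vardi82}. Two small remarks: your worry about $\mathcal{E}_!$-dependent clauses introducing non-monotonicity is unfounded for $\GFO$/$\GNFO$, since the characterization produced in that case (unlike the $\FOte$ one) uses only $\mathcal{E}_+$-constraints and $\mathcal{E}_!$ appears merely as an index of the disjunct; and your sketch of a \textsc{P}-hardness lower bound is a worthwhile addition, since the paper's own proof establishes only \textsc{P}-membership while the theorem asserts \textsc{P}-completeness.
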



As a minimalistic example, consider the $\GFO{}$ sentence $\Phi_\infty$ from \Cref{infpath}, whose homclosure cannot be characterized in $\FOe$ (see \Cref{notfo}),
but by the $\FOe^\mathbf{lfp}$ sentence
\begin{equation}
\exists x.\neg\Big[\textbf{lfp}_{\sig{Cds}} \big\{ \sig{Cds}(y) \leftarrow  \forall z. \big(\sigP (y,z) \impl \sig{Cds}(z)\big) \big\}\Big](x),
\end{equation}
where the fixed point predicate $\sig{Cds}$ identifies ``cul-de-sac'' elements in the given $\{\sig{P}\}$-structure.


We finally turn to the case of finding a characterization of $\homcl{\getfinmodels{\Phi}}$, i.e., of those -- finite as well as infinite -- structures receiving a homomorphism from a \emph{finite} model of $\Phi$. We start with a negative result for a very simple sentence, following  from compactness of $\FOe$.

\begin{proposition}[restate=cornercaseone, name=]
$\homcl{\getfinmodels{\Phi_\infty}}$ cannot be characterized in \ESO.
\end{proposition}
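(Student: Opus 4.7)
The plan is to first give an explicit reachability-based description of $\homcl{\getfinmodels{\Phi_\infty}}$, then refute the existence of a characterizing $\ESO$ sentence by applying $\FOe$ compactness to its first-order matrix. Fixing the (conventionally required) constant symbol $\sigc{}$ in the signature $\tau \supseteq \{\sigP\}$, I first claim $\homcl{\getfinmodels{\Phi_\infty}}$ consists precisely of those $\tau$-structures $\mathfrak{A}$ in which $\sigc{}^\mathfrak{A}$ reaches a finite $\sigP$-cycle via a (possibly empty) $\sigP$-path. The inclusion ``$\supseteq$'' is witnessed by taking as $\mathfrak{B}$ the appropriate finite lollipop (path from $\sigc{}^\mathfrak{B}$ ending in a disjoint cycle), which trivially satisfies $\Phi_\infty$ and maps homomorphically into $\mathfrak{A}$. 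For ``$\subseteq$'', in any finite $\mathfrak{B} \models \Phi_\infty$, iterating the everywhere-defined $\sigP$-successor from $\sigc{}^\mathfrak{B}$ must by pigeonhole revisit an element, exposing a cycle reachable from $\sigc{}^\mathfrak{B}$; under any $\sigc{}$-preserving homomorphism $h\colon\mathfrak{B}\to\mathfrak{A}$, the $h$-image is a closed $\sigP$-walk based at $\sigc{}^\mathfrak{A}$, inside which a finite cycle reachable from $\sigc{}^\mathfrak{A}$ is extracted.

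Next, I would assume for contradiction that $\Psi = \exists \bold{R}.\varphi(\bold{R})$ with $\varphi \in \FOe$ characterizes this class. For each pair $n, m \in \mathbb{N}$, let $\chi_{n,m}$ be the $\FOe$ $\tau$-sentence expressing ``no element reachable from $\sigc{}$ via a $\sigP$-path of length $\leq n$ lies on a $\sigP$-cycle of length $\leq m$'', which unfolds into a finite combination of existentially quantified path/cycle formulae. Consider the $\FOe$ theory
\[
T \;=\; \{\varphi(\bold{R})\} \,\cup\, \{\chi_{n,m} \mid n,m \in \mathbb{N}\}
\]
over the expanded signature $\tau \uplus \bold{R}$. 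For any $N$, the lollipop structure $\mathfrak{A}_N$ consisting of a simple $\sigP$-path of length $N{+}1$ from $\sigc{}^{\mathfrak{A}_N}$ attached to a disjoint $\sigP$-cycle of length $L > N$ lies in $\homcl{\getfinmodels{\Phi_\infty}} = \getmodels{\Psi}$, hence admits an $\bold{R}$-expansion satisfying $\varphi$; meanwhile, by construction no element within $N$ $\sigP$-steps of $\sigc{}^{\mathfrak{A}_N}$ lies on a cycle of length $\leq N$, so $\chi_{n,m}$ holds for all $n,m \leq N$. Thus every finite subset of $T$ is satisfiable.

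By $\FOe$ compactness, $T$ has a model $\mathfrak{M}$. Then $\mathfrak{M}|_\tau \models \Psi$, so by the characterization hypothesis $\mathfrak{M}|_\tau \in \homcl{\getfinmodels{\Phi_\infty}}$, meaning $\sigc{}^\mathfrak{M}$ reaches some finite $\sigP$-cycle in $\mathfrak{M}|_\tau$; but this cycle has some finite length $m$ and is reached in some finite number $n$ of $\sigP$-steps, directly contradicting $\chi_{n,m}$. The only (mild) obstacle is the explicit formulation of the $\chi_{n,m}$ in $\FOe$ and verifying that the lollipop witness really admits the needed $\bold{R}$-expansion — both are immediate from the explicit reachability characterization of the homclosure established in the first paragraph.
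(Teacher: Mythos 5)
Your proof is correct and takes essentially the same approach as the paper: apply $\FOe$ compactness to the first-order matrix of the hypothetical characterizing $\ESO$ sentence together with an infinite family of $\FOe$ sentences excluding ever-longer cycles, each finite subfamily witnessed by a finite model of $\Phi_\infty$ with a single long cycle. The paper's version is marginally leaner, characterizing the homclosure simply as ``contains a directed $\sigP$-cycle'' (without routing through a constant) and using a one-parameter family $\Gamma^{\not\circlearrowright}_k$ of cycle-exclusion sentences in place of your two-parameter, constant-anchored $\chi_{n,m}$, but the substance of the argument is identical.
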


\newcommand{\ESOplus}{\ESO$^{\#}$}
However, it is easy to see that an immaterial extension of \ESO{} suffices to remedy this situation: let \ESOplus{} denote \ESO{} extended by a special second-order quantifier $\exists^\fin$ which requires that the quantified-over predicate corresponds to a finite relation -- an idea borrowed from weak monadic second-order logic. We note that $\exists^\fin \sig{U}$ is also expressible in plain $\SO$.

\begin{theorem}[restate=cornercasetwo, name=]
Let $\Phi$ be some $\SO$ sentence and let $\Psi$ be such that $\getfinmodels{\Psi} = \homclfin{\getfinmodels{\Phi}}$.
Then $\homcl{\getfinmodels{\Phi}} = \getmodels{\exists^\fin \sig{U}. \Psi^{\mathrm{rel}(\sigU)}}$.
\end{theorem}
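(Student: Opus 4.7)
The plan is to verify both set inclusions by direct translation between the two characterizations, using the relativization property recalled earlier (that $\mathfrak{C} \models \Phi^{\mathrm{rel}(\sigU)}$ iff the $\tau$-reduct of the substructure of $\mathfrak{C}$ induced by $\sigU^{\mathfrak{C}}$ is a model of $\Phi$) together with the semantics of $\exists^\fin$. The key conceptual observation is that \emph{receiving a homomorphism from a finite structure} is equivalent to \emph{having a finite induced substructure that receives such a homomorphism}; the finite witness $\sigU$ will play precisely the role of (the image of) that finite source.

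For the $\supseteq$-direction, I would fix $\mathfrak{B} \models \exists^\fin \sigU. \Psi^{\mathrm{rel}(\sigU)}$, witnessed by some finite $U \subseteq B$ with $(\mathfrak{B},U) \models \Psi^{\mathrm{rel}(\sigU)}$. By the relativization property and the conjunct $\bigwedge_{\sigc \in \rho} \sigU(\sigc)$ built into $\Psi^{\mathrm{rel}(\sigU)}$, the set $U$ contains every constant's interpretation, so $\mathfrak{B}|_U$ is a well-defined $\tau$-structure satisfying $\Psi$; finiteness of $U$ makes $\mathfrak{B}|_U$ finite. Hence $\mathfrak{B}|_U \in \getfinmodels{\Psi} = \homclfin{\getfinmodels{\Phi}}$, so there is a finite model $\mathfrak{A}$ of $\Phi$ with a homomorphism $h \colon \mathfrak{A} \to \mathfrak{B}|_U$; composing $h$ with the identity embedding $\mathfrak{B}|_U \embed \mathfrak{B}$ yields a homomorphism $\mathfrak{A} \to \mathfrak{B}$, placing $\mathfrak{B}$ in $\homcl{\getfinmodels{\Phi}}$.

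For the $\subseteq$-direction, I would start with $\mathfrak{B} \in \homcl{\getfinmodels{\Phi}}$, witnessed by a finite $\mathfrak{A} \in \getfinmodels{\Phi}$ and a homomorphism $h\colon\mathfrak{A} \arbhom \mathfrak{B}$. Set $U := h(A) \subseteq B$, which is finite since $\mathfrak{A}$ is. Because $h$ respects constants and predicates, its codomain-restriction $h\colon\mathfrak{A} \arbhom \mathfrak{B}|_U$ is still a homomorphism (and $U$ contains all constant interpretations of $\mathfrak{B}$), so $\mathfrak{B}|_U \in \homclfin{\getfinmodels{\Phi}} = \getfinmodels{\Psi}$. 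Interpreting the fresh predicate $\sigU$ as $U$, the relativization property then yields $(\mathfrak{B},U) \models \Psi^{\mathrm{rel}(\sigU)}$; since $U$ is finite, this is precisely a witness for $\mathfrak{B} \models \exists^\fin \sigU. \Psi^{\mathrm{rel}(\sigU)}$.

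I do not expect a genuine obstacle here: the argument is essentially bookkeeping, and the only point that requires minor care is ensuring that $U$ is automatically ``large enough'' to host all constant interpretations, which is guaranteed on the $\supseteq$-side by the explicit conjuncts $\sigU(\sigc)$ in the definition of relativization and on the $\subseteq$-side by the fact that $h$ preserves constants. The finiteness of the induced substructure $\mathfrak{B}|_U$ following from the finiteness of $U$ is what makes the semantics of $\exists^\fin$ exactly interchangeable with the choice of a finite homomorphic source.
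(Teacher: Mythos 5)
Your proof is correct and follows the same route as the paper's: in both directions it passes through the finite induced substructure on $\sigU$ (respectively on $h(A)$), invokes the relativization property, and composes with the inclusion. The paper presents the two inclusions in the opposite order but the argument is otherwise identical.
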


Together with \Cref{the:arb-sameas-fin} and the previously established characterization results, this allows us to close this case.

\begin{corollary}[restate=cornercasethree, name=]
For all $\Phi$ in any of \EAAE{}, \TGF{}, \FOte{}, and \GNFO{}, $\homcl{\getfinmodels{\Phi}}$ can be characterized in \ESOplus{}.
\end{corollary}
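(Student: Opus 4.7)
The plan is to chain together \Cref{thm:homcapture}/\Cref{sorpresa}, \Cref{the:arb-sameas-fin}, and the preceding theorem. Fix $\Phi$ in one of the four fragments. I would first produce an \ESO{} sentence $\Psi$ whose finite models are exactly $\homclfin{\getfinmodels{\Phi}}$, and then apply the preceding theorem to lift this into an \ESOplus{} characterization of $\homcl{\getfinmodels{\Phi}}$.

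For $\Phi$ in any of \EAAE, \TGF, or \FOte, I would invoke \Cref{thm:homcapture}: the projective characterization it supplies is an \FOte{} (resp.\ \TGF, \FOte) sentence over an auxiliary signature; existentially quantifying out the auxiliary predicates yields an \ESO{} sentence $\Psi$ with $\getmodels{\Psi} = \homcl{\getmodels{\Phi}}$. For $\Phi$ in \GNFO, \Cref{sorpresa} supplies an $\FOe^\mathbf{lfp}$ characterization instead; since $\FOe^\mathbf{lfp}$ on finite structures describes a polynomial-time class, Fagin's theorem yields an \ESO{} sentence $\Psi$ with the same finite models. In both cases, \Cref{the:arb-sameas-fin} gives $\homclfin{\getmodels{\Phi}} = \homclfin{\getfinmodels{\Phi}}$, so restricting to finite models of $\Psi$ produces the desired equality $\getfinmodels{\Psi} = \homclfin{\getfinmodels{\Phi}}$.

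Applying the preceding theorem to this $\Psi$, I obtain $\homcl{\getfinmodels{\Phi}} = \getmodels{\exists^\fin \sigU.\, \Psi^{\mathrm{rel}(\sigU)}}$. Since relativization only rewrites the first-order quantifiers of $\Psi$ and adds constant-guard conjuncts, \ESO{} membership is preserved; adjoining the leading $\exists^\fin \sigU$ then places the resulting sentence into \ESOplus{}, as required.

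The one nontrivial pivot is the \GNFO{} case, where the characterization supplied by \Cref{sorpresa} is not in \ESO{} but in $\FOe^\mathbf{lfp}$; the appeal to Fagin's theorem to bridge back to \ESO{} on finite structures is where a little care is needed (one has to notice that only the finite-model class of $\Psi$ enters the hypothesis of the preceding theorem, so replacing $\Psi$ by an \ESO{} sentence that agrees with it only on finite structures is sufficient). The remainder is a routine assembly of the previously established tools.
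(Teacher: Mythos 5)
Your proof is correct and follows essentially the paper's intended route: obtain an $\ESO$ sentence whose finite models are exactly $\homclfin{\getfinmodels{\Phi}}$ (using \Cref{the:arb-sameas-fin} to pass from $\homclfin{\getmodels{\Phi}}$ to $\homclfin{\getfinmodels{\Phi}}$), then apply the preceding theorem to prepend $\exists^\fin \sigU$ and relativize, observing that relativization leaves an $\ESO$ sentence in $\ESO$. The one place you diverge from the paper is the $\GNFO$ case, where you go through \Cref{sorpresa}'s $\FOe^\mathbf{lfp}$ characterization and then invoke Fagin's theorem to descend back into $\ESO$ on finite structures. That works (and you correctly flag that only finite-model agreement with $\Psi$ is needed), but it is an unnecessary detour: item~3 of \Cref{thm:homcapture} already states that $\GFO$ homcaptures $\GNFO$, so one obtains a projective $\GFO$ characterization of $\homcl{\getmodels{\Phi}}$ directly, and existentially quantifying out the auxiliary predicates yields an $\ESO(\GFO)\subseteq\ESO$ sentence exactly as in the other three cases, with no appeal to descriptive complexity needed. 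This is the route the paper intends (cf.\ the $\ESO(\GFO)$ entry in the summary table), and it has the minor advantage of staying within the paper's own syntactic toolbox.
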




\section{Normal Forms for Homclosed Fragments}\label{sec:normalforms}

This section is devoted to identifying ``homclosed normal forms'' of certain logics.
Given a logic $\Logic$, a homclosed normal form $\mathbb{H}\Logic$ of $\Logic$ is a fragment which satisfies
\begin{itemize}[itemindent=0ex, leftmargin=3ex, topsep=0ex]
\item $\mathbb{H}\Logic \subseteq \Logic$,
\item every $\Phi \in \mathbb{H}\Logic$ is homclosed,
\item for each homclosed $\Phi \in \Logic$ exists a $\Phi' \in \mathbb{H}\Logic$ with $\Phi \equiv \Phi'$,  
\item membership in $\mathbb{H}\Logic$ is decidable.  
\end{itemize}

It is, of course, desirable to strengthen the last requirement to say that membership can be decided ``easily'', e.g., in polytime.
Likewise, we might require the existence of a (preferably computationally inexpensive) algorithm to compute $\Phi'$ from a given $\Phi$, or of a (not too high) upper bound regarding the size of $\Phi'$ with respect to $\Phi$.

\noindent\textbf{Normal Forms for $\FOe$ and Fragments.}\quad
%
%
The homomorphism preservation theorem provides a normal form for full $\FOe$, namely $\EpFO$; the equivalent $\EpFO$ sentences may even be chosen to have equal quantifier rank. 
Rossman's theorem shows that the same normal form can be used if we consider finite models only;
more recently, Rossman even established that any homclosed first-order formula of quantifier rank $k$ is equivalent to an existential positive formula of quantifier rank $k^{O(1)}$~\cite{Rossman17}. 
Note that computing the normal form can be quite costly: there is a potentially non-elementary blow-up for the length of a shortest equivalent existential positive formula \cite[Theorem 6.1]{Rossman08}. 

Obviously, this characterization of a normal form for $\FOe$ carries over to any $\FOe$ fragment that fully encompasses $\EpFO$. For Bernays-Schönfinkel fragments with bounded number of existential quantifiers, we state the following observation.

\begin{proposition}[restate=boundedex, name=]
For every homclosed $\bex^n\bal^*\FOe$ sentence there exists an equivalent $\bex^n\FOe^+$ sentence. 
\end{proposition}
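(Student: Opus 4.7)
The plan is to mimic the model-theoretic proof of the homomorphism preservation theorem, while keeping careful track of the number of existential quantifiers introduced. The key insight is that every model $\mathfrak{A}$ of a homclosed $\bex^n\bal^*\FOe$ sentence contains a ``small'' substructure of size at most $n+|\rho|$ (with $\rho$ the set of constant symbols) that is itself a model of $\Phi$, and that the canonical conjunctive query of such a small structure can be expressed with exactly $n$ existential quantifiers, provided we use the constant symbols directly rather than quantifying over their named elements.

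Concretely, given a homclosed $\Phi=\exists x_1\ldots x_n.\forall y_1\ldots y_m.\varphi(\bold{x},\bold{y})$ with $\varphi$ quantifier-free over $\tau=\rho\uplus\sigma$, I would let $\mathcal{F}$ be a finite set of representatives for the isomorphism classes of $\tau$-structures of cardinality at most $n+|\rho|$ that satisfy $\Phi$. For each $\mathfrak{S}\in\mathcal{F}$ I would fix a surjection $\eta_\mathfrak{S}\colon \{x_1,\ldots,x_n\}\cup\rho\to S$ with $\eta_\mathfrak{S}(\sigc)=\sigc^\mathfrak{S}$ for every constant $\sigc$, and define $\psi_\mathfrak{S}(\bold{x})$ as the conjunction of all atoms $\sigP(t_1,\ldots,t_k)$ over variables from $\bold{x}$ and constants from $\rho$ such that $(\eta_\mathfrak{S}(t_1),\ldots,\eta_\mathfrak{S}(t_k))\in \sigP^\mathfrak{S}$. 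Then I would set
\[ \Psi\;:=\;\exists x_1\ldots x_n.\!\bigvee_{\mathfrak{S}\in\mathcal{F}}\!\psi_\mathfrak{S}(\bold{x}), \]
interpreting the empty disjunction as $\bot$. This sentence clearly belongs to $\bex^n\FOe^+$. The direction $\Psi\Rightarrow\Phi$ is immediate: any $\mathfrak{A}\models\Psi$ admits a homomorphism from some $\mathfrak{S}\in\mathcal{F}$, and since $\mathfrak{S}\models\Phi$ by choice of $\mathcal{F}$, homclosedness of $\Phi$ yields $\mathfrak{A}\models\Phi$. For $\Phi\Rightarrow\Psi$, given $\mathfrak{A}\models\Phi$ with a witness $\bold{a}\in A^n$, I would take $\mathfrak{S}$ to be the substructure of $\mathfrak{A}$ induced by the set $\{a_1,\ldots,a_n\}\cup\{\sigc^\mathfrak{A}\mid\sigc\in\rho\}$; this has at most $n+|\rho|$ elements, and still satisfies $\forall\bold{y}.\varphi(\bold{a},\bold{y})$ since every atom and equality of $\varphi(\bold{a},\bold{b})$ for $\bold{b}\in S^m$ evaluates identically in $\mathfrak{A}$ and in $\mathfrak{S}$ by definition of \emph{induced} substructure. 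Hence $\mathfrak{S}$, up to isomorphism, lies in $\mathcal{F}$, and the identity inclusion $\mathfrak{S}\injhom\mathfrak{A}$ witnesses $\mathfrak{A}\models\psi_\mathfrak{S}(\bold{a})$, whence $\mathfrak{A}\models\Psi$.

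The main obstacle is not the correctness of the above construction but the careful choice to keep the number of existential quantifiers at exactly $n$, regardless of how many elements of $\mathfrak{S}$ happen to be named by constants or how components of $\bold{a}$ might coincide with one another or with named elements. This is handled by letting $\eta_\mathfrak{S}$ send the constants to their interpretations while the remaining (at most $n$) elements of $S$ are covered by the quantified variables $x_i$; variables not needed simply do not appear in $\psi_\mathfrak{S}$, and constants appear as constant symbols rather than as new existentially quantified variables tied by equalities. Since $\psi_\mathfrak{S}$ is then a conjunction of positive atoms, the body of $\Psi$ remains inside $\FOe^+$, completing the argument.
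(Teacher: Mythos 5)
Your proposal follows the same route as the paper's proof: exploit that $\bex^n\bal^*\FOe$ models have small induced submodels generated by the witness tuple and the constant interpretations, collect the finitely many such submodels up to isomorphism, and turn the disjunction of their canonical queries into a single $\bex^n\FOe^+$ sentence. The paper does this by first skolemizing the $n$ existentials into fresh constants and later re-quantifying over them; you keep the original variables and handle constants directly, which is a legitimate variant. However, there is a genuine gap in the definition of $\psi_\mathfrak{S}$: it consists only of relational atoms and omits equality conjuncts, so it cannot enforce that a coincidence of constants in $\mathfrak{S}$ is reproduced in the satisfying structure, and as a result the $\Psi\Rightarrow\Phi$ direction fails. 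Concretely, let $\tau=\{\sigP,\sigc,\sig{d}\}$, $n=0$, and $\Phi=\forall y.\big(\sigP(\sigc)\wedge\sigP(\sig{d})\wedge\sigc=\sig{d}\big)$. This is homclosed: homomorphisms respect constant symbols, so they transport $\sigc^\mathfrak{A}=\sig{d}^\mathfrak{A}$ to any homomorphic image and preserve $\sigP$-membership. The only $\mathfrak{S}\in\mathcal{F}$ admitting a surjection $\eta_\mathfrak{S}$ when $n=0$ is the one-element model with $\sigc^\mathfrak{S}=\sig{d}^\mathfrak{S}$ and $\sigP^\mathfrak{S}$ total; your $\psi_\mathfrak{S}$ is then $\sigP(\sigc)\wedge\sigP(\sig{d})$, and so is $\Psi$. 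But the structure with two elements $1,2$, $\sigc^\mathfrak{A}=1$, $\sig{d}^\mathfrak{A}=2$, $\sigP^\mathfrak{A}=\{1,2\}$ satisfies $\Psi$ yet not $\Phi$; there is no homomorphism from $\mathfrak{S}$ to it, since such a map would have to send the single element of $\mathfrak{S}$ simultaneously to $1$ and to $2$.

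The fix is exactly the ingredient the paper's $\mathrm{cq}'$ contains and your $\psi_\mathfrak{S}$ drops: adjoin the equality atoms $t_1 = t_2$ for every pair of terms $t_1,t_2\in\{x_1,\ldots,x_n\}\cup\rho$ with $\eta_\mathfrak{S}(t_1)=\eta_\mathfrak{S}(t_2)$. Equality atoms are positive and allowed in $\FOe^+$, so the sentence stays in $\bex^n\FOe^+$, and with them $\mathfrak{A}\models\psi_\mathfrak{S}(\bold{a})$ does yield a homomorphism from $\mathfrak{S}$ to $\mathfrak{A}$, restoring soundness. A further, smaller issue: you define $\mathcal{F}$ as all isomorphism types of models of $\Phi$ of cardinality at most $n+|\rho|$, but some such structures have more than $n$ elements not named by any constant, and for those no surjection $\eta_\mathfrak{S}$ of the required shape exists. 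You should restrict $\mathcal{F}$ to structures generated by the constant interpretations together with at most $n$ further elements, which is exactly what your forward direction produces anyway.
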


For other fragments, the case is less immediate. Whenever $\mathrm{HomClosed}$ is decidable for $\Logic$, we can use this (usually expensive) test to directly define $\mathbb{H}\Logic$, thus obtaining $\mathbb{H}\TGF$, $\mathbb{H}\FOte$, and $\mathbb{H}\AAEE$. For $\FOe$ fragments not covered by these cases (such as \AAAE), the question remains open.

\newcommand{\supclffin}[1]{#1^{\cl{\str{\injhom}}_\ffin}}
\newcommand{\homsucl}[1]{#1^{\mathrm{sH}_\ffin}}

%

\newcommand{\FOgfp}{$\FOe^\textbf{lfp}$}
\newcommand{\vfpFO}{$\varnothing\FO^+_=$}

\newcommand{\cFOe}{\hfill\cite{FOundecChurch,FOundecTuring}}
\newcommand{\cDTGD}{}
\newcommand{\cMDTGD}{}
\newcommand{\cTGD}{}
\newcommand{\cGFO}{\hfill\cite{Gra99,CateF05}}

\newcommand{\cGNFO}{\hfill\cite{GuardedNegation}}
\newcommand{\cTGF}{\hfill\cite{RS18}}
\newcommand{\cFOte}{\hfill\cite{GKV97}}
\newcommand{\cAAAE}{\hfill\cite{AAAE}}
\newcommand{\cEAAE}{\hfill\cite{Lewis}} 
\newcommand{\fEAAE}{\hfill\cite{Schuette}}
\newcommand{\fTGF}{\hfill\cite{kier2021finite}}
\newcommand{\fGFO}{\hfill\cite{Gra99,BaranyGO13}}
\newcommand{\cAAEE}{}
\newcommand{\cEAFO}{}
\newcommand{\cAsFO}{}
\newcommand{\cEEEA}{}
\newcommand{\cEEAA}{}
\newcommand{\cEsFO}{}
\newcommand{\cEpFO}{}
\newcommand{\cSO}{}
\newcommand{\cESO}{}
\newcommand{\cEGSO}{}

\renewcommand{\mcl}[1]{\multicolumn{2}{@{}l@{}}{#1}} 

\begin{table*}[t!]
	\caption{Overview of newly established results 
	(5 last columns, except normal form for $\FOte$) 
	as well as references used (easy insights are left without reference). 
	All complexities are time complexities. Closure of $\GNFO$ under $\neg$ only holds for sentences.
	\label{fig:main}}
\small~\hfill
\begin{tabular}{@{}l@{\ \ \ }l@{\ \ \ }l@{\ \ \ \ }l@{\ \ }l@{\ \ \ \ \ }l@{\ }l@{\ \ \ \ }l@{\ \ }l@{\ \ }l@{}}
	logic    & SAT     & finite model    & \mcl{closure} & \mcl{InHomCl}  & HomClosed & homclosure charac- & normal form   \\
	name     & fin/arb & property (size) & $\neg$ & $\wedge$ & comb. & data & fin/arb & terizable in logic & fragment  \\ \hline\hline \\[-2ex]
	$\FOe$   & und.    \cFOe   & no              & yes    & yes      & und.  & und.    & und.    & none               & \EpFO     \\
	$\DTGD$  & trivial \cDTGD  & yes (1)      & no     & yes      & und.  & und.    & und.    & none               & \UCQ     \\
	$\MDTGD$ & trivial \cMDTGD & yes (1)      & no     & no       & und.  & und.    & und.    & none               & \CQ{}$\vee$\CQ \\
	$\TGD$   & trivial \cTGD   & yes (1)      & no     & yes      & und.  & und.    & NP      & none               & \CQ \\
	$\TGF$   & N2Exp   \cTGF   & yes (2Exp) \fTGF  & yes    & yes      & N2Exp & NP      & coN2Exp & \ESO(\TGF)             & $\mathbb{HTGF}$ \\
    $\FOte$  & NExp    \cFOte  & yes (Exp) \cFOte  & yes    & yes      & NExp  & NP      & coNExp  & \ESO(\FOte)             & $\mathbb{H}\FOte$ \\ 
	$\GNFO$  & 2Exp    \cGNFO  & yes (2Exp) \cGNFO & yes    & yes      & 2Exp  & P  & 2Exp    & \FOgfp \text{ /} \ESO(\GFO)         & \EpFO \\
	$\GFO$  & 2Exp \   \cGFO  & yes (2Exp)\  \fGFO & yes    & yes      & 2Exp  & P  & 2Exp    & \FOgfp \text{ /} \ESO(\GFO)         & \EpFO \\ \hline \\[-2ex]
	\AAAE    & und.    \cAAAE  & no              & no     & no       & und.  & und.    & und.    & none               & ? \\ 
	\EAAE    & NExp    \cEAAE  & yes (2Exp) \fEAAE & no     & yes      & NExp  & NP      & und.    & \ESO(\TGF)             & \EpFO \\
	\AAEE    & NExp    \cAAEE  & yes (2Exp) & no     & no       & NExp  & NP      & coNExp  & \ESO(\TGF)             & $\mathbb{H}$\AAEE \\
	\EAFO    & NExp    \cEAAE  & yes (C+Ex)      & no     & yes      & NExp  & AC$^0$  & und.    & \EpFO     & \EpFO  \\
	\AsFO    & NExp    \cAsFO  & yes max(C,1)    & no     & yes      & NExp  & AC$^0$  & coNExp  & \EFO             & \vfpFO \\
	\EEEA    & NP      \cEEEA  & yes (C+3)       & no     & no       & NP    & AC$^0$  & und.    & $\Ex\Ex\Ex\FO^+$ & $\Ex\Ex\Ex\FO^+$ \\
	\EEAA    & NP      \cEEAA  & yes (C+2)       & no     & no       & NP    & AC$^0$  & coNExp  & $\Ex\Ex\FO^+$    & $\Ex\Ex\FO^+$ \\
	\EsFO    & NP      \cEsFO  & yes (C+Ex)      & no     & yes      & NP    & AC$^0$  & coNExp  & \EpFO            & \EpFO \\
	\EpFO    & const.  \cEpFO  & yes (C+Ex)      & no     & yes      & NP    & AC$^0$  & trivial & \EpFO            & \EpFO \\ \hline \\[-2ex]
	\SO      & und.    \cSO    & no              & yes    & yes      & und.  & und.    & und.    & none               & $\mathbb{HSO}$ \\
\end{tabular}
\hfill~
\end{table*}

\noindent\textbf{Normal Form for Homclosed $\SO$.}\quad
We next provide a syntactic normal form for $\SO$ sentences whose class of models is homclosed. This normal form arises as a combination of a normal form for superstructure-closed classes and a normal form for classes closed under surjective homomorphisms. Our proof relies  on the interplay of syntactic manipulations of $\SO$ sentences on the one hand and ``operators'' on classes of structures on the other. We focus on the following two operations that can be applied to classes $\mathcal{C}$ of $\tau$-structures:

\vspace{-2.5ex}
\begin{eqnarray}
\supclffin{\mathcal{C}} \hspace{-1.5ex} & = & \hspace{-1.5ex} \{\mathfrak{A} \text{~(finite)}\mid \mathfrak{B}\, {\str{\injhom}}\, \mathfrak{A} \text{ for some }\mathfrak{B}\in \mathcal{C}\}, \\
\homsucl{\mathcal{C}} \hspace{-1.5ex} & = & \hspace{-1.5ex} \{\mathfrak{A} \text{~(finite)}\mid 
\{\mathfrak{A}\}^\cl{\surhom} \subseteq \mathcal{C})
\}.
\end{eqnarray}
\vspace{-2.5ex}

With these notions, we can establish the following characterizations of classes of $\tau$-structures.

\begin{lemma}[restate=classnormalform, label=lem:classnormalform, name=]
	For any class $\mathcal{C}$  of (finite) $\tau$-structures holds:	
	\begin{enumerate}[itemindent=0ex, leftmargin=3ex, itemsep=-0.5ex]
		\item $\supclffin{\mathcal{C}}$ is closed under (finite) superstructures. Moreover, if $\mathcal{C}$  is closed under  (finite) superstructures, then $\supclffin{\mathcal{C}}=\mathcal{C}$.
		\item $\homsucl{\mathcal{C}}$ is closed under (finite) surjective homomorphisms. Moreover, if $\mathcal{C}$  is closed  under (finite) surjective homomorphisms, then $\homsucl{\mathcal{C}}=\mathcal{C}$.
		\item $\supclffin{(\homsucl{\mathcal{C}})}$ is (finitely) homclosed. Moreover, every (finitely) homclosed $\mathcal{C}$ satisfies $\supclffin{(\homsucl{\mathcal{C}})}=\mathcal{C}$.
	\end{enumerate}
\end{lemma}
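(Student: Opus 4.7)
The plan is to handle the three parts in order, with parts (1) and (2) being direct consequences of the basic composition properties of the relevant morphism classes, and part (3) combining them through the decomposition of arbitrary homomorphisms provided by \Cref{prop:HomSeparation}.

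For part (1), closure of $\supclffin{\mathcal{C}}$ under (finite) superstructures follows by transitivity of embeddings: from $\mathfrak{B} \embed \mathfrak{A} \embed \mathfrak{A}'$ one obtains $\mathfrak{B} \embed \mathfrak{A}'$. The equality $\supclffin{\mathcal{C}} = \mathcal{C}$ under the superstructure-closure hypothesis holds because each $\mathfrak{B} \in \mathcal{C}$ embeds into itself (giving $\mathcal{C} \subseteq \supclffin{\mathcal{C}}$), while every $\mathfrak{A} \in \supclffin{\mathcal{C}}$ is by definition a superstructure of some $\mathfrak{B} \in \mathcal{C}$ and therefore itself in $\mathcal{C}$. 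Part (2) is symmetric: closure of $\homsucl{\mathcal{C}}$ under surjective homomorphisms follows by composing $\mathfrak{A} \surhom \mathfrak{A}' \surhom \mathfrak{A}''$; the inclusion $\mathcal{C} \subseteq \homsucl{\mathcal{C}}$ under surjective-hom closure is immediate from the definition; and the converse uses that the identity is a surjective homomorphism, so any $\mathfrak{A} \in \homsucl{\mathcal{C}}$ lies in $\{\mathfrak{A}\}^\cl{\surhom} \subseteq \mathcal{C}$.

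Part (3) is the substantive one. For the closure direction, take $\mathfrak{A} \in \supclffin{(\homsucl{\mathcal{C}})}$ with witness $\mathfrak{B} \embed \mathfrak{A}$, $\mathfrak{B} \in \homsucl{\mathcal{C}}$, and let $h \colon \mathfrak{A} \arbhom \mathfrak{A}'$ be an arbitrary homomorphism to a (finite) target. Decomposing $h = h_2 \circ h_1$ via \Cref{prop:HomSeparation} with $h_1 \colon \mathfrak{A} \embed \mathfrak{C}$ and $h_2 \colon \mathfrak{C} \sshom \mathfrak{A}'$, composition with the witness embedding yields $\mathfrak{B} \embed \mathfrak{C}$, so we may replace $\mathfrak{A}$ by $\mathfrak{C}$ and assume without loss of generality that $h$ itself is strong surjective. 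The key step is then to set $\mathfrak{B}'$ to be the substructure of $\mathfrak{A}'$ induced by $h(B)$, and to verify that $h|_B$ is a strong surjective homomorphism $\mathfrak{B} \sshom \mathfrak{B}'$. Strongness follows by a three-step chase: a relational tuple in $P^{\mathfrak{B}'}$ lifts to $P^{\mathfrak{A}'}$ since $\mathfrak{B}'$ is induced in $\mathfrak{A}'$, descends to a preimage tuple in $P^{\mathfrak{A}}$ by strongness of $h$, and restricts to $P^{\mathfrak{B}}$ because $\mathfrak{B}$ is induced in $\mathfrak{A}$. By part (2) one has $\mathfrak{B}' \in \homsucl{\mathcal{C}}$, and $\mathfrak{B}' \embed \mathfrak{A}'$ holds by construction, so $\mathfrak{A}' \in \supclffin{(\homsucl{\mathcal{C}})}$.

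For the final equation, assume $\mathcal{C}$ is (finitely) homclosed. Every $\mathfrak{A} \in \mathcal{C}$ lies in $\homsucl{\mathcal{C}}$ because surjective homomorphic images are in particular homomorphic images, and $\mathfrak{A}$ embeds into itself, giving $\mathcal{C} \subseteq \supclffin{(\homsucl{\mathcal{C}})}$. Conversely, given any $\mathfrak{A} \in \supclffin{(\homsucl{\mathcal{C}})}$ with witness $\mathfrak{B} \embed \mathfrak{A}$ and $\mathfrak{B} \in \homsucl{\mathcal{C}}$, part (2) applied to $\mathcal{C}$ (which is in particular closed under surjective homomorphisms) gives $\homsucl{\mathcal{C}} = \mathcal{C}$, hence $\mathfrak{B} \in \mathcal{C}$, and homclosedness applied along the embedding $\mathfrak{B} \embed \mathfrak{A}$ yields $\mathfrak{A} \in \mathcal{C}$. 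The only genuinely nontrivial point is the strong-surjective step of part (3), where one must verify that restricting and co-restricting a strong surjective homomorphism to induced substructures preserves strongness; this is the unique place where the \emph{strong} qualifier on homomorphisms plays an essential role.
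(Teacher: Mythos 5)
Parts (1), (2), and the ``moreover'' equality in (3) are fine, but the closure direction of (3) contains a genuine error. You invoke \Cref{prop:HomSeparation} to factor $h = h_2 \circ h_1$ and write $h_1 \colon \mathfrak{A} \embed \mathfrak{C}$, but that proposition only provides $h_1 \colon \mathfrak{A} \injhom \mathfrak{C}$, injective \emph{without} the strong qualifier (indeed the construction in its proof makes $\mathfrak{C}$ inherit its relations from $\mathfrak{B}$ along $h_2$, so the $\mathfrak{A}$-part of $\mathfrak{C}$ is generally a proper enrichment of $\mathfrak{A}$; and if both $h_1$ and $h_2$ were strong, $h$ itself would be strong, which it need not be). Consequently the composite $\mathfrak{B} \to \mathfrak{A} \to \mathfrak{C}$ is injective but not an embedding, so you cannot conclude $\mathfrak{C} \in \supclffin{(\homsucl{\mathcal{C}})}$, and the ``wlog $h$ is strong surjective'' step does not go through. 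The ordering in \Cref{prop:HomSeparation} is simply the wrong one for this operator: $\supclffin{(\homsucl{\cdot})}$ nests as embedding-after-surjection, so you need the dual factorization of an arbitrary homomorphism as a surjection onto the induced image followed by the embedding of that image into the target.

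The good news is that with the correct factorization your argument collapses to something simpler, and in fact you wrote most of it already. Given $h \colon \mathfrak{A} \to \mathfrak{A}'$ with witness $\mathfrak{B} \embed \mathfrak{A}$, $\mathfrak{B} \in \homsucl{\mathcal{C}}$, set $\mathfrak{B}'$ to be the induced substructure of $\mathfrak{A}'$ on $h(B)$. Then $h|_B \colon \mathfrak{B} \to \mathfrak{B}'$ is a \emph{surjective} homomorphism (surjectivity by construction; the homomorphism property because $\mathfrak{B}$ is induced in $\mathfrak{A}$ and $\mathfrak{B}'$ is induced in $\mathfrak{A}'$), so part (2) gives $\mathfrak{B}' \in \homsucl{\mathcal{C}}$, and $\mathfrak{B}' \embed \mathfrak{A}'$ gives $\mathfrak{A}' \in \supclffin{(\homsucl{\mathcal{C}})}$. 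No strongness is needed: part (2) asserts closure under plain surjective homomorphisms, and the definition of $\homsucl{\mathcal{C}}$ uses $\surhom$, not $\sshom$. Your concluding remark that the strong qualifier ``plays an essential role'' here misdiagnoses the situation; the three-step strongness chase is both unnecessary and only became apparently necessary because of the flawed reduction that preceded it.
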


Based on this, we can introduce a normal form for super\-struc\-ture-closed $\SO$.

\begin{definition}
	Let $\tau$ be a signature and let $\Psi$ be an~$\SO$ $\tau$-sentence. 
	Then the~$\SO$ $\tau$-sentence $\Psi^{\mathrm{sup}}$ is defined as 
	$\exists \dot{\sigU}. \Psi^{\mathrm{rel}(\dot{\sigU})}$,
	where $\dot{\sigU}$ is a fresh unary predicate.
\end{definition}

\begin{proposition}[restate=supstructure, label=prop:supstructure, name=]
	Let $\Psi$ be an~$\SO$ $\tau$-sentence. Then
	\begin{equation}
	\supclffin{\getffinmodels{\Psi}}= \getffinmodels{\Psi^{\mathrm{sup}}}.
	\end{equation}
\end{proposition}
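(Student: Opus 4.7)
The proof is essentially a direct application of the semantic characterization of relativization stated earlier in Section~\ref{sec:tools}: namely, $\mathfrak{A}' \models \Phi^{\mathrm{rel}(\sigU)}$ iff $\indss{\mathfrak{A}'}{\sigU} \in \getmodels{\Phi}$. My plan is to prove both inclusions by translating between ``choosing a substructure of $\mathfrak{A}$ that models $\Psi$'' and ``choosing an interpretation of $\dot{\sigU}$ in $\mathfrak{A}$ that witnesses $\Psi^{\mathrm{sup}}$.''

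For the inclusion $\supclffin{\getffinmodels{\Psi}} \subseteq \getffinmodels{\Psi^{\mathrm{sup}}}$, I would start with a finite $\mathfrak{A}$ together with a $\mathfrak{B} \in \getffinmodels{\Psi}$ and a strong embedding $e \colon \mathfrak{B} \embed \mathfrak{A}$. I would then define an interpretation of the fresh unary predicate by $\dot{\sigU}^{\mathfrak{A}'} := e(B)$, obtaining an expansion $\mathfrak{A}'$ of $\mathfrak{A}$. Because $e$ is a strong embedding, the induced substructure $\indss{\mathfrak{A}'}{\dot{\sigU}}$ is isomorphic (as a $\tau$-structure) to $\mathfrak{B}$; in particular, the constants $\rho$ of $\tau$ are interpreted inside $\dot{\sigU}^{\mathfrak{A}'}$ because $e$ preserves them, so the second conjunct $\bigwedge_{c \in \rho}\dot{\sigU}(c)$ of the relativization is satisfied. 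Hence $\mathfrak{A}' \models \Psi^{\mathrm{rel}(\dot{\sigU})}$, and therefore $\mathfrak{A} \models \exists \dot{\sigU}.\Psi^{\mathrm{rel}(\dot{\sigU})} = \Psi^{\mathrm{sup}}$.

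For the converse inclusion, I would take a finite $\mathfrak{A} \models \Psi^{\mathrm{sup}}$, pick some witnessing $\dot{\sigU}^{\mathfrak{A}'} \subseteq A$ with $\mathfrak{A}' \models \Psi^{\mathrm{rel}(\dot{\sigU})}$, and set $\mathfrak{B} := \indss{\mathfrak{A}'}{\dot{\sigU}}$. The semantic property of relativization recalled above gives $\mathfrak{B} \in \getmodels{\Psi}$, and $\mathfrak{B}$ is finite since $\mathfrak{A}$ is. The conjunct $\bigwedge_{c\in\rho}\dot{\sigU}(c)$ in the definition of $\Psi^{\mathrm{rel}(\dot{\sigU})}$ ensures that all constants of $\tau$ lie in $\dot{\sigU}^{\mathfrak{A}'}$, so $\mathfrak{B}$ is a well-defined $\tau$-structure and the inclusion $B \hookrightarrow A$ yields a strong embedding $\mathfrak{B} \embed \mathfrak{A}$. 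Thus $\mathfrak{A} \in \supclffin{\getffinmodels{\Psi}}$.

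There is really no substantial obstacle; the argument is a bookkeeping exercise once one has the semantic description of $\Phi^{\mathrm{rel}(\sigU)}$ in hand. The only point that deserves care is making sure both directions handle the constants correctly, which is precisely what the extra conjunct $\bigwedge_{c \in \rho}\sigU(c)$ in the definition of relativization is designed to do; without it, the forward direction would break when constants of $\mathfrak{A}$ fall outside the chosen substructure domain.
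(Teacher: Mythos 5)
Your proof is correct and takes essentially the same route as the paper's: both directions reduce to the observed equivalence that $\mathfrak{A}' \models \Psi^{\mathrm{rel}(\dot{\sigU})}$ iff the substructure of $\mathfrak{A}'$ induced by $\dot{\sigU}^{\mathfrak{A}'}$ models $\Psi$, with the existential second-order quantifier in $\Psi^{\mathrm{sup}}$ guessing exactly the domain of the embedded submodel. If anything, your version is slightly more careful than the paper's, explicitly noting that the conjunct $\bigwedge_{\sigc\in\rho}\dot{\sigU}(\sigc)$ is what keeps the constants inside the chosen subset on both sides of the equivalence.
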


\Cref{prop:supstructure} and \Cref{lem:classnormalform} give rise to the normal form for superstructure-closed $\SO$.

\begin{corollary}\label{cor:supstr}For any~$\SO$ $\tau$-sentence $\Psi$, $\getffinmodels{\Psi^{\mathrm{sup}}} $ is closed under (finite)  superstructures.
	Moreover, if $\getffinmodels{\Psi}$ is closed under (finite) superstructures, then $\getffinmodels{\Psi^{\mathrm{sup}}}=\getffinmodels{\Psi}$.
\end{corollary}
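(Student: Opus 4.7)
The plan is to derive this corollary as a direct consequence of the two results immediately preceding it, namely \Cref{prop:supstructure} and part~(1) of \Cref{lem:classnormalform}. No further model-theoretic work is needed; the corollary is essentially a restatement of their combination at the level of sentences rather than classes.

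For the first claim, I would start by invoking \Cref{prop:supstructure} to rewrite $\getffinmodels{\Psi^{\mathrm{sup}}}$ as $\supclffin{\getffinmodels{\Psi}}$. Part~(1) of \Cref{lem:classnormalform} applied to the class $\mathcal{C} := \getffinmodels{\Psi}$ then immediately yields that $\supclffin{\mathcal{C}}$ is closed under (finite) superstructures. Combining these two identities gives that $\getffinmodels{\Psi^{\mathrm{sup}}}$ is closed under (finite) superstructures, which is exactly the first assertion.

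For the second claim, assume additionally that $\getffinmodels{\Psi}$ is itself closed under (finite) superstructures. The second sentence of part~(1) of \Cref{lem:classnormalform}, applied again with $\mathcal{C} := \getffinmodels{\Psi}$, then yields $\supclffin{\getffinmodels{\Psi}} = \getffinmodels{\Psi}$. Chaining this equation with the identity $\getffinmodels{\Psi^{\mathrm{sup}}} = \supclffin{\getffinmodels{\Psi}}$ from \Cref{prop:supstructure} produces the desired equality $\getffinmodels{\Psi^{\mathrm{sup}}} = \getffinmodels{\Psi}$.

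The only point that requires any care is making the ``finite/arbitrary'' bracketing consistent throughout: \Cref{prop:supstructure} is stated for $\getffinmodels{\cdot}$ (covering both the finite and the arbitrary-model case via our notational convention), and \Cref{lem:classnormalform}(1) has a parenthetical ``(finite)'' guarding superstructure closure for both cases uniformly. I would therefore simply carry the ``(finite)'' parenthesization through the argument unchanged, so that both variants of the corollary (finite-model and arbitrary-model) are proved by the same two-line chain of equalities. There is no real obstacle here — the corollary is genuinely a packaging of the preceding two results into sentence-level form.
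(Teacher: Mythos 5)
Your proposal is correct and is exactly the intended derivation: the paper itself states the corollary immediately after remarking that \Cref{prop:supstructure} and \Cref{lem:classnormalform} ``give rise to'' it, and omits an explicit proof because it is precisely the two-line chain you wrote out. Nothing to add.
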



We proceed by establishing a normal form for $\SO$ sentences closed under surjective homomorphisms. 
For a signature $\tau$ and variables $z,z'$, we let $\eta^{\tau}(z,z')$ denote the formula
\begin{equation}
	\bigwedge_{{\sigP \in \tau}\atop{k=\text{ar}(\sigP)}} \hspace{-6.5ex} 
	\bigwedge_{{1\leq i \leq k}\atop{{\hspace{6.5ex} \bold{x}=(x_1,\ldots,x_{i-1})}\atop{\hspace{6.5ex}\bold{y}=(x_{i+1},\ldots,x_k)}}}
	\hspace{-6.0ex}
	\forall \bold{x}\bold{y}. \sigP(\bold{x},z,\bold{y}) \Leftrightarrow \sigP(\bold{x},z',\bold{y}),
\end{equation}
%
%
%
%
and, for a unary $\sigU \notin \tau$, we define the $\tau{\uplus}\{\sigU\}$-sentence $\Theta_\sigU^\tau$ by
\begin{equation}
\forall x \exists y. (\sigU(y)\wedge \eta^{\tau}(x,y)).
\end{equation}

\begin{definition}
	Let $\tau$  and $\tau'$ be  disjoint signatures such that $\tau'$ consists of copies $\sigP'$ of all predicates $\sigP \in \tau$. Let $\Psi$ be an $\SO$ $\tau$-sentence. 
	Then the $\SO$ $\tau$-sentence  $\Psi^{\mathrm{shom}}$ is defined by

\vspace{-3ex}
\begin{equation}
\forall \sigU \, \forall \tau'\! . \bigg(\Big( \Theta_\sigU^{\tau'} \wedge \! \bigwedge_ {\sigP'\in \tau'} \! \forall \bold{x}.\big(\sigP(\bold{x}) \,{\impl}\, \sigP'(\bold{x})\big)\Big) \Rightarrow \Psi_{\tau \mapsto \tau'}^{\mathrm{rel}(\sigU) } \bigg),
\end{equation}
\vspace{-2.5ex}

\noindent where $\sigU$ is a fresh unary predicate and $\Psi_{\tau \mapsto \tau'}^{\mathrm{rel}(\sigU)}$ denotes $\Psi^{\mathrm{rel}(\sigU)}$ with every $\sigP \in \tau$ replaced by its copy $\sigP' \in \tau'$.
\end{definition}

\begin{proposition}[restate=surjhom, label=prop:surjhom, name=]
Let $\Psi$ be an \SO{} $\tau$-sentence. Then
\begin{equation}
\homsucl{\getffinmodels{\Psi}}= \getffinmodels{\Psi^{\mathrm{shom}}}.
\end{equation}
\end{proposition}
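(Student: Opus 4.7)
The plan is to prove both inclusions by translating between surjective homomorphisms out of $\mathfrak{A}$ and $(\sigU,\tau')$-expansions of $\mathfrak{A}$ satisfying the antecedent of $\Psi^{\mathrm{shom}}$. The guiding intuition is that the conjunct $\bigwedge_{\sigP' \in \tau'} \forall \bold{x}.(\sigP(\bold{x}) \Rightarrow \sigP'(\bold{x}))$ encodes that the $\tau'$-interpretations form a homomorphic ``thickening'' of the $\tau$-interpretations, while $\Theta_\sigU^{\tau'}$ guarantees that $\sigU$ picks up (with multiplicity one) a representative of every $\tau'$-indistinguishability class, which is exactly what is needed to define a surjective homomorphism onto $\sigU$.

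For the inclusion $\homsucl{\getffinmodels{\Psi}}\subseteq \getffinmodels{\Psi^{\mathrm{shom}}}$, I would start from an $\mathfrak{A} \in \homsucl{\getffinmodels{\Psi}}$ and fix an arbitrary $(\sigU,\tau')$-expansion $\mathfrak{A}^*$ of $\mathfrak{A}$ in which the antecedent of $\Psi^{\mathrm{shom}}$ holds. Using $\Theta_\sigU^{\tau'}$, pick for every $a \in A$ an element $f(a) \in \sigU^{\mathfrak{A}^*}$ with $\mathfrak{A}^* \models \eta^{\tau'}(a,f(a))$, where we may (and do) choose $f(a) = a$ whenever $a \in \sigU^{\mathfrak{A}^*}$ so that $f$ fixes $\sigU$ and is therefore surjective onto it. Let $\mathfrak{B}$ be the $\tau$-structure obtained from the $\tau'$-reduct of the substructure of $\mathfrak{A}^*$ induced by $\sigU^{\mathfrak{A}^*}$ by renaming each $\sigP' \in \tau'$ back to $\sigP \in \tau$. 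Using the conjunct $\sigP(\bold{x}) \Rightarrow \sigP'(\bold{x})$ together with iterated applications of $\eta^{\tau'}(a_i,f(a_i))$ (substituting one coordinate at a time), one checks that $f\colon \mathfrak{A} \to \mathfrak{B}$ is a surjective homomorphism. Hence $\mathfrak{B} \in \{\mathfrak{A}\}^\cl{\surhom} \subseteq \getffinmodels{\Psi}$, which translates back to $\mathfrak{A}^* \models \Psi^{\mathrm{rel}(\sigU)}_{\tau \mapsto \tau'}$.

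For the converse $\supseteq$, I would start from $\mathfrak{A} \models \Psi^{\mathrm{shom}}$ and an arbitrary surjective homomorphism $g\colon \mathfrak{A} \sshom \mathfrak{B}$, and manufacture a specific $(\sigU,\tau')$-expansion $\mathfrak{A}^*$ of $\mathfrak{A}$ whose $\sigU$-induced $\tau'$-substructure reproduces $\mathfrak{B}$. Concretely, pull back the relations of $\mathfrak{B}$ by defining $\sigP'^{\mathfrak{A}^*} := \{(a_1,\ldots,a_k) \in A^k \mid (g(a_1),\ldots,g(a_k)) \in \sigP^\mathfrak{B}\}$; this immediately gives the containment $\sigP \subseteq \sigP'$ (since $g$ is a homomorphism) and also makes any two $g$-equivalent elements $\tau'$-interchangeable, so $\mathfrak{A}^* \models \eta^{\tau'}(a,a')$ whenever $g(a) = g(a')$. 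Choose a section $s\colon B \to A$ of $g$ and set $\sigU^{\mathfrak{A}^*} := s(B)$; the map $a \mapsto s(g(a))$ witnesses that $\mathfrak{A}^* \models \Theta_\sigU^{\tau'}$. Finally, $g \circ s$ is the identity on $B$, so $s$ is a $\tau'$-isomorphism between the $\sigU$-induced $\tau'$-substructure of $\mathfrak{A}^*$ and $\mathfrak{B}$ (viewed through the $\tau \mapsto \tau'$ renaming). Applying $\mathfrak{A} \models \Psi^{\mathrm{shom}}$ to this expansion yields $\mathfrak{B} \models \Psi$.

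The main technical obstacle I anticipate is the careful verification in the first direction that $f$ is a homomorphism: one needs to observe that $\eta^{\tau'}(z,z')$ expresses interchangeability in \emph{every} argument position of \emph{every} predicate, and then argue by substituting $a_i$ for $f(a_i)$ one position at a time, using at each step that $\eta^{\tau'}$ is quantified over all side tuples $\bold{x}\bold{y}$. A minor sanity check will be to note that the proof is signature-local and carries over verbatim to the finite-model case, since all the constructed structures have the same domain as $\mathfrak{A}$ or $\mathfrak{B}$ and hence are finite whenever these are.
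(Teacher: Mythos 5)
Your proposal is correct and follows essentially the same two constructions as the paper's proof: in the first direction, building the surjective homomorphism by fixing $\sigU^{\mathfrak{A}^*}$ pointwise and sending other elements into $\sigU^{\mathfrak{A}^*}$ via $\Theta_\sigU^{\tau'}$, then using $\sigP \subseteq \sigP'$ together with coordinate-by-coordinate applications of $\eta^{\tau'}$; in the second, pulling back the relations of $\mathfrak{B}$ along $g$ to interpret $\tau'$ and choosing a set of kernel-class representatives (your section $s$) to interpret $\sigU$.
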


This result, together with \Cref{prop:supstructure} and \Cref{lem:classnormalform}, yields the normal form for surjective-homomorphism-closed $\SO$.

\begin{corollary}\label{cor:suhom}For any $\SO$ $\tau$-sentence $\Psi$, $\getffinmodels{\Psi^{\mathrm{shom}}}$ is closed under  surjective (finite) homomorphisms.
	Moreover, if $\getffinmodels{\Psi}$ is closed under  surjective (finite) homomorphisms, then $\getffinmodels{\Psi^{\mathrm{shom}}}=\getffinmodels{\Psi}$.
\end{corollary}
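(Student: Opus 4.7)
The plan is to observe that this corollary is an immediate consequence of chaining together the two earlier results that have already been prepared for exactly this purpose: \Cref{prop:surjhom} (which semantically equates $\getffinmodels{\Psi^{\mathrm{shom}}}$ with the operator $\homsucl{(\cdot)}$ applied to the original model class) and \Cref{lem:classnormalform}(2) (which captures the key closure and idempotency properties of that operator). In other words, the heavy lifting has already been done; only a clean assembly step remains.

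More concretely, for the first claim, I would argue as follows. By \Cref{prop:surjhom}, we have the identity
\begin{equation*}
\getffinmodels{\Psi^{\mathrm{shom}}} \;=\; \homsucl{\getffinmodels{\Psi}}.
\end{equation*}
The right-hand side is closed under (finite) surjective homomorphisms by the first part of \Cref{lem:classnormalform}(2), applied to the class $\mathcal{C} := \getffinmodels{\Psi}$. Transporting this closure across the equality yields that $\getffinmodels{\Psi^{\mathrm{shom}}}$ is closed under (finite) surjective homomorphisms, as required.

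For the second claim, suppose now that $\getffinmodels{\Psi}$ is itself closed under (finite) surjective homomorphisms. Invoking the second half of \Cref{lem:classnormalform}(2) (the idempotency clause) with $\mathcal{C} := \getffinmodels{\Psi}$ gives $\homsucl{\getffinmodels{\Psi}}= \getffinmodels{\Psi}$. Combining this with the identity from \Cref{prop:surjhom} yields $\getffinmodels{\Psi^{\mathrm{shom}}} = \getffinmodels{\Psi}$, which is exactly the desired normal-form property: any surjective-homomorphism-closed \SO{}-definable class is already in the image of the syntactic transformation $\Psi \mapsto \Psi^{\mathrm{shom}}$.

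Since the entire argument is a two-line invocation of prior results, there is no real obstacle at this stage; the real work was done in establishing \Cref{prop:surjhom} and the operator-level \Cref{lem:classnormalform}. The only minor subtlety to double-check is that both the finite-model and the arbitrary-model readings of $\getffinmodels{\cdot}$, $\homsucl{(\cdot)}$ and the two supporting results align consistently, so that the ``(finite)'' qualifier in the corollary statement is tracked uniformly through the chain of equalities; this is routine given how the notation $\getffinmodels{\cdot}$ and the superscript $\ffin$ on $\homsucl{(\cdot)}$ have been set up in the preliminaries.
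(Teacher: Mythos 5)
Your proposal is correct and matches the paper's intended derivation: the corollary is exactly the combination of \Cref{prop:surjhom} (the semantic identity $\getffinmodels{\Psi^{\mathrm{shom}}}=\homsucl{\getffinmodels{\Psi}}$) with \Cref{lem:classnormalform}(2) (closure and idempotency of the operator $\homsucl{(\cdot)}$). The paper treats this as an immediate corollary of precisely those two results, so there is nothing to add.
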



We are now ready to combine the established results toward the desired normal form for homclosed $\SO$.
Invoking \Cref{cor:supstr}, \Cref{cor:suhom}, and \Cref{lem:classnormalform}, we obtain the following statement, motivating the subsequent definition.

\begin{theorem}\label{thm:normalformSO}An $\SO$-definable class of (finite) $\tau$-structures is of the form
\vspace{-1.5ex}
\begin{equation}
\getffinmodels{(\Psi^{\mathrm{shom} }) ^{\mathrm{sup}}}
\end{equation}	
for an $\SO$ $\tau$-sentence $\Psi$  if and only if it is closed under (finite) homomorphisms.
\end{theorem}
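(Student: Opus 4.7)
The statement is essentially a packaging of the preceding results, so the plan is to combine them in the most direct way. I would prove the two directions in turn, using \Cref{lem:classnormalform}, \Cref{prop:supstructure}, and \Cref{prop:surjhom} as the black boxes they were set up to be.

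For the ``if'' direction, suppose a class $\mathcal{C}$ of (finite) $\tau$-structures is $\SO$-definable and closed under (finite) homomorphisms. Let $\Psi$ be an $\SO$ $\tau$-sentence with $\mathcal{C}=\getffinmodels{\Psi}$. First I would apply \Cref{prop:surjhom} to rewrite the model class of $\Psi^{\mathrm{shom}}$ as $\homsucl{\mathcal{C}}$, and then apply \Cref{prop:supstructure} to the sentence $\Psi^{\mathrm{shom}}$ in order to rewrite the model class of $(\Psi^{\mathrm{shom}})^{\mathrm{sup}}$ as $\supclffin{(\homsucl{\mathcal{C}})}$. Finally, since $\mathcal{C}$ is homclosed, item~(3) of \Cref{lem:classnormalform} yields $\supclffin{(\homsucl{\mathcal{C}})}=\mathcal{C}$, hence $\getffinmodels{(\Psi^{\mathrm{shom}})^{\mathrm{sup}}}=\mathcal{C}$, as required.

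For the ``only if'' direction, assume $\mathcal{C}=\getffinmodels{(\Psi^{\mathrm{shom}})^{\mathrm{sup}}}$ for some $\SO$ $\tau$-sentence $\Psi$. Applying \Cref{prop:supstructure} and then \Cref{prop:surjhom} exactly as above, I would obtain $\mathcal{C}=\supclffin{(\homsucl{\getffinmodels{\Psi}})}$. Item~(3) of \Cref{lem:classnormalform} states that classes of this shape are always (finitely) homclosed, so $\mathcal{C}$ is homclosed and is by construction $\SO$-definable.

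The only things that require care are bookkeeping: making sure that \Cref{prop:supstructure} is applied to $\Psi^{\mathrm{shom}}$ (not $\Psi$) when handling the outer ``$\mathrm{sup}$'' operator, and that the finite/arbitrary case distinction is tracked consistently through the two applications (both propositions are phrased uniformly for (finite) models, so this is painless). I do not foresee any real obstacle, since the hard work -- in particular the syntactic realization of $\supclffin{}$ and $\homsucl{}$ by the operators $\Psi\mapsto\Psi^{\mathrm{sup}}$ and $\Psi\mapsto\Psi^{\mathrm{shom}}$, and the algebraic fact that their composition yields the homomorphism closure -- is already discharged by the cited lemmas and propositions.
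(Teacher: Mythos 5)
Your proof is correct and takes essentially the same route as the paper: the paper derives the theorem by invoking \Cref{cor:supstr}, \Cref{cor:suhom}, and \Cref{lem:classnormalform}, whereas you work directly from \Cref{prop:supstructure}, \Cref{prop:surjhom}, and item (3) of \Cref{lem:classnormalform} — but these corollaries are themselves just packagings of exactly those propositions together with that lemma, so the underlying argument is identical.
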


\begin{definition}
	We define  $\mathbb{HSO}$ as the set of  $\SO$ sentences of the form  $(\Psi^{\mathrm{shom} }) ^{\mathrm{sup}}$ for any $\SO$ sentence $\Psi$.
\end{definition}

It is clear that membership in $\mathbb{HSO}$ is decidable. We can now formulate the main result of this section, as a direct consequence from \Cref{thm:normalformSO} and the fact that $(\Psi^{\mathrm{shom} }) ^{\mathrm{sup}}$ is polytime-computable from $\Psi$.

\begin{corollary} Every (finitely) homclosed $\SO$ sentence $\Phi$ is equivalent (on finite structures) to a $\mathbb{HSO}$ sentence $\Psi$ which can be computed from $\Phi$ in polynomial time.
\end{corollary}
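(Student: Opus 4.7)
The plan is to take $\Psi := (\Phi^{\mathrm{shom}})^{\mathrm{sup}}$ explicitly; this sentence lies in $\mathbb{HSO}$ by the very definition of that fragment. It then remains to verify two obligations: that $\Psi$ is (finitely) equivalent to $\Phi$, and that it is polynomial-time computable from $\Phi$.

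For the equivalence, I would ride on \Cref{cor:suhom} and \Cref{cor:supstr}, which together say that the operators $(\cdot)^{\mathrm{shom}}$ and $(\cdot)^{\mathrm{sup}}$ act as the identity on sentences whose (finite) model classes are already closed under surjective homomorphisms resp.\ under (finite) superstructures. The key observation is that (finite) hom-closure of $\getffinmodels{\Phi}$ implies both of these specific closure properties: closure under surjective homomorphisms is immediate, and closure under superstructures follows from the fact that every embedding is a homomorphism. Applying \Cref{cor:suhom} to $\Phi$ therefore yields $\getffinmodels{\Phi^{\mathrm{shom}}} = \getffinmodels{\Phi}$; this retained class is still closed under superstructures, so applying \Cref{cor:supstr} to the sentence $\Phi^{\mathrm{shom}}$ yields $\getffinmodels{(\Phi^{\mathrm{shom}})^{\mathrm{sup}}} = \getffinmodels{\Phi^{\mathrm{shom}}}$. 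Chaining the two equalities gives $\getffinmodels{\Psi} = \getffinmodels{\Phi}$, as required.

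For polynomial-time computability, I would simply inspect the definitions of the two operators. Each performs a standard relativization of the input sentence (a logspace-computable local substitution) and adjoins a fixed-size quantifier prefix together with the auxiliary formula $\Theta_\sigU^{\tau'}$ and a signature-renaming conjunction, whose sizes depend polynomially on the signature $\tau$ (via its cardinality and the maximum predicate arity). Composing the two operators therefore produces a sentence of size polynomial in $|\Phi|$, computable in polynomial time.

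I do not foresee a serious obstacle: the heavy lifting has already been done in \Cref{prop:surjhom}, \Cref{prop:supstructure}, and \Cref{lem:classnormalform}, which are the substrate of \Cref{thm:normalformSO}. The corollary essentially amounts to (i) noting that the $\Psi$ whose existence is asserted by that theorem can concretely be taken to be $\Phi$ itself, since the collapse of both operators on closed classes is realised pointwise by the corollaries just cited, and (ii) observing that both transformations involved are syntactically cheap.
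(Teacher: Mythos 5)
Your proof is correct and follows essentially the paper's intended route: the paper declares the corollary a "direct consequence of Theorem~\ref{thm:normalformSO} and polytime computability," and your argument simply unpacks that theorem's ``only if'' direction to make explicit that the witness sentence can be taken to be $\Phi$ itself. You reach the conclusion by chaining Corollaries~\ref{cor:suhom} and~\ref{cor:supstr} together with the observation that homclosure of $\getffinmodels{\Phi}$ subsumes both closure under surjective homomorphisms and closure under superstructures, whereas the paper's Lemma~\ref{lem:classnormalform}(3) packages this chain into a single identity $\supclffin{(\homsucl{\mathcal{C}})}=\mathcal{C}$ for homclosed $\mathcal{C}$; the underlying substance is the same. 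Your polytime analysis is likewise right: both $(\cdot)^{\mathrm{shom}}$ and $(\cdot)^{\mathrm{sup}}$ are relativizations plus fixed auxiliary material of size polynomial in $|\tau|$ and the maximum arity, so the composition is polynomial.
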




\section{Conclusion}

Inspired by the homomorphism preservation theorem and motivated by routinely encountering -- clearly fundamental yet seemingly largely neglected -- questions 
regarding homclosures of logically characterized model classes, we undertook a principled analysis of four basic questions related to that matter and clarified them for a wide range of theoretically and practically relevant logical formalisms, both in the finite-model setting and the arbitrary-model one. \Cref{fig:main} summarizes the results obtained.

Next to several newly introduced generic techniques, which might prove useful well beyond the specific formalisms considered here, our most noteworthy achievements are probably the establishment of a computationally well-behaved normal form for homclosed $\SO$ sentences, and the insight that checking homclosure membership for the very expressive guarded negation fragment of first-order logic ($\GNFO$) has only polynomial-time data complexity.   

Plenty of open problems remain.
Obviously other, different logical formalisms could be investigated regarding the considered questions.
Not all of the established techniques immediately lend themselves to coping with ``counting logics'', such as two-variable logic with counting quantifiers.
More generally, logics without the finite model property might turn out to be harder to handle.

The investigation into the prefix classes has not yet resulted in a complete characterization of decidability and complexity of the homclosedness problem. A first inspection seems to indicate that corresponding results would hinge on a more fine-grained analysis of fragments defined by the conjunction over sentences from different prefix classes.

As far as normal forms are concerned, for first-order prefix classes with bounded number of existential quantifiers, preceded by universal quantification, the question remains generically unsolved. For \TGF{} and \FOte{}, normal forms have been shown to exist, but recognizing them is costly and more syntactic solutions would be preferred. 

An interesting open problem is whether there exists a syntactic fragment of \emph{existential second-order logic} such that the classes of finite structures that are expressible in the fragment are precisely the homomorphism-closed classes in \textsc{NP}. Similarly, we ask whether there exists a logic (in the sense of Gurevich~\cite{GroheLogicforP}) that captures precisely the homclosed classes of finite structures that are in the complexity class \textsc{P}. 

Another interesting avenue arises from the area of constraint satisfaction problems (CSPs). It seems advisable to systematically explore which (complements of) 
CSPs can be described as the homclosure of sentences in 
one of the well-behaved logics $\GFO$ and $\GNFO$ -- guaranteeing solvability in \textsc{P}, by Theorem~\ref{sorpresa}.
For a start, note that \Cref{thm:allhomcaptures} implies that the homclosures of $\GNFO{}$ and $\GFO{}$
sentences are expressible in guarded existential second-order logic;
this even holds projectively. Yet, by a recent result of a subset of
the authors~\cite{BKR}, if the complement of a CSP can be expressed in
this form, then the CSP is of the form CSP$({\mathfrak B})$ for an
$\omega$-categorical structure ${\mathfrak B}$. This implies that the
so-called universal-algebraic approach can be used to study complexity
and expressivity questions CSPs of said type \cite{Book}.

\section*{Acknowledgments}
We are indebted to the as meticulous as benevolent anonymous reviewers for their appreciation and numerous valuable suggestions for improvement. We are particularly grateful to Bartosz Bednarczyk for his diligent help in polishing the final version (pun intended).
 
Manuel Bodirsky has received funding from the European Research Council through the ERC Consolidator Grant 681988 (CSP-Infinity).
Thomas Feller and Sebastian Rudolph are supported by the European Research Council through the ERC Consolidator Grant 771779 (DeciGUT). 
Simon Knäuer is supported by DFG Graduiertenkolleg 1763 (QuantLA).

\bibliographystyle{abbrv}
\bibliography{references}

\begin{thebibliography}{10}

\bibitem{GuardedNegation}
V.~B\'{a}r\'{a}ny, B.~T. Cate, and L.~Segoufin.
\newblock Guarded negation.
\newblock {\em Journal of the ACM}, 62(3), 2015.

\bibitem{BaranyGO13}
V.~B{\'{a}}r{\'{a}}ny, G.~Gottlob, and M.~Otto.
\newblock Querying the guarded fragment.
\newblock {\em Log. Methods Comput. Sci.}, 10(2), 2014.

\bibitem{berger1966undecidability}
R.~Berger.
\newblock {\em The undecidability of the domino problem}.
\newblock Number~66. American Mathematical Soc., 1966.

\bibitem{Book}
M.~Bodirsky.
\newblock {\em Complexity of Infinite-Domain Constraint Satisfaction}.
\newblock Cambridge University Press, 2021.
\newblock to appear in the LNL Series.

\bibitem{lics2021companion}
M.~Bodirsky, T.~Feller, S.~Knäuer, and S.~Rudolph.
\newblock On logics and homomorphism closure.
\newblock {\em arXiv}, (CoRR abs/2104.11955), 2021.

\bibitem{BKR}
M.~Bodirsky, S.~Kn\"{a}uer, and S.~Rudolph.
\newblock Datalog-expressibility for monadic and guarded second-order logic.
\newblock In {\em Proceedings of the 48th International Colloquium on Automata,
  Languages, and Programming (ICALP)}, 2021.
\newblock To appear, preprint available at
  \url{https://arxiv.org/abs/2010.05677}.

\bibitem{BorgerGG1997}
E.~B{\"{o}}rger, E.~Gr{\"{a}}del, and Y.~Gurevich.
\newblock {\em The Classical Decision Problem}.
\newblock Perspectives in Mathematical Logic. Springer, 1997.

\bibitem{FOundecChurch}
A.~Church.
\newblock An unsolvable problem of elementary number theory.
\newblock {\em American Journal of Mathematics}, 58(2):345--363, 1936.

\bibitem{fagingeneralized}
R.~Fagin.
\newblock Generalized first-order spectra, and polynomial-time recognizable
  sets.
\newblock In R.~Karp, editor, {\em Complexity of Computation, SIAM-AMS
  Proceedings}, volume~7, pages 43--73, 1974.

\bibitem{FurstSS84}
M.~L. Furst, J.~B. Saxe, and M.~Sipser.
\newblock Parity, circuits, and the polynomial-time hierarchy.
\newblock {\em Math. Syst. Theory}, 17(1):13--27, 1984.

\bibitem{Gra99}
E.~Gr{\"a}del.
\newblock On the restraining power of guards.
\newblock {\em J. Symb. Log.}, 64(4):1719--1742, 1999.

\bibitem{GKV97}
E.~Gr{\"a}del, P.~Kolaitis, and M.~Y. Vardi.
\newblock On the decision problem for two-variable first-order logic.
\newblock {\em Bulletin of Symbolic Logic}, 3(1):53--69, 1997.

\bibitem{GroheLogicforP}
M.~Grohe.
\newblock The quest for a logic capturing {PTIME}.
\newblock In {\em Proceedings of the 23rd Annual {IEEE} Symposium on Logic in
  Computer Science ({LICS})}, pages 267--271. {IEEE} Computer Society, 2008.

\bibitem{GurevichS86}
Y.~Gurevich and S.~Shelah.
\newblock Fixed-point extensions of first-order logic.
\newblock {\em Ann. Pure Appl. Log.}, 32:265--280, 1986.

\bibitem{remarksonberger}
Y.~S. Gurevich and I.~O. Koryakov.
\newblock Remarks on {B}erger's paper on the domino problem.
\newblock {\em Siberian Mathematical Journal}, 13(2):319--321, 1972.

\bibitem{Immerman82}
N.~Immerman.
\newblock Relational queries computable in polynomial time (extended abstract).
\newblock In H.~R. Lewis, B.~B. Simons, W.~A. Burkhard, and L.~H. Landweber,
  editors, {\em Proceedings of the 14th Annual {ACM} Symposium on Theory of
  Computing (STOC)}, pages 147--152. {ACM}, 1982.

\bibitem{DBLP:books/daglib/0095988}
N.~Immerman.
\newblock {\em Descriptive complexity}.
\newblock Graduate texts in computer science. Springer, 1999.

\bibitem{DBLP:journals/jcss/Jones75}
N.~D. Jones.
\newblock Space-bounded reducibility among combinatorial problems.
\newblock {\em J. Comput. Syst. Sci.}, 11(1):68--85, 1975.

\bibitem{Karp1972}
R.~M. Karp.
\newblock Reducibility among combinatorial problems.
\newblock In R.~E. Miller, J.~W. Thatcher, and J.~D. Bohlinger, editors, {\em
  Proceedings of a symposium on the Complexity of Computer Computations}, pages
  85--103, Boston, MA, 1972. Springer US.

\bibitem{kier2021finite}
E.~Kieroński and S.~Rudolph.
\newblock Finite model theory of the triguarded fragment and related logics.
\newblock In L.~Libkin, editor, {\em Proceedings of the 36th Annual Symposium
  on Logic in Computer Science ({LICS})}. IEEE, 2021.
\newblock In press.

\bibitem{Lewis}
H.~Lewis.
\newblock Complexity results for classes of quantificational formulas.
\newblock {\em J. Comput. Syst. Sci.}, 21:317--353, 1980.

\bibitem{Rossman08}
B.~Rossman.
\newblock Homomorphism preservation theorems.
\newblock {\em Journal of the ACM}, 55(3), 2008.

\bibitem{Rossman17}
B.~Rossman.
\newblock An improved homomorphism preservation theorem from lower bounds in
  circuit complexity.
\newblock In C.~H. Papadimitriou, editor, {\em 8th Innovations in Theoretical
  Computer Science Conference ({ITCS})}, volume~67 of {\em LIPIcs}, pages
  27:1--27:17. Schloss Dagstuhl - Leibniz-Zentrum f{\"{u}}r Informatik, 2017.

\bibitem{DBLP:conf/pods/RudolphK13}
S.~Rudolph and M.~Kr{\"{o}}tzsch.
\newblock Flag {\&} check: data access with monadically defined queries.
\newblock In R.~Hull and W.~Fan, editors, {\em Proceedings of the 32nd {ACM}
  {SIGMOD-SIGACT-SIGART} Symposium on Principles of Database Systems ({PODS})},
  pages 151--162. {ACM}, 2013.

\bibitem{RS18}
S.~Rudolph and M.~Simkus.
\newblock The triguarded fragment of first-order logic.
\newblock In G.~Barthe, G.~Sutcliffe, and M.~Veanes, editors, {\em Proceedings
  of the 22nd International Conference on Logic for Programming, Artificial
  Intelligence and Reasoning ({LPAR})}, volume~57 of {\em EPiC Series in
  Computing}, pages 604--619. EasyChair, 2018.

\bibitem{Schuette}
K.~Schütte.
\newblock {Untersuchungen zum Entscheidungsproblem der mathematischen Logik}.
\newblock {\em Math. Annalen}, 109(4):572--603, 1934.

\bibitem{AAAE}
J.~Surányi.
\newblock {\em {Reduktionstheorie des Entscheidungsproblems im
  Prädikatenkalkül der ersten Stufe}}.
\newblock Ungarische Akademie der Wissenschaften, Budapest, 1959.

\bibitem{CateF05}
B.~ten Cate and M.~Franceschet.
\newblock Guarded fragments with constants.
\newblock {\em Journal of Logic, Language and Information}, 14(3):281--288,
  2005.

\bibitem{FOundecTuring}
A.~M. Turing.
\newblock On computable numbers, with an application to the
  {E}ntscheidungsproblem.
\newblock {\em Proceedings of the London Mathematical Society},
  s2-42(1):230--265, 1937.

\bibitem{Vardi82}
M.~Y. Vardi.
\newblock The complexity of relational query languages (extended abstract).
\newblock In H.~R. Lewis, B.~B. Simons, W.~A. Burkhard, and L.~H. Landweber,
  editors, {\em Proceedings of the 14th Annual {ACM} Symposium on Theory of
  Computing (STOC)}, pages 137--146. {ACM}, 1982.

\bibitem{wang1963proceedings}
H.~Wang.
\newblock Dominoes and the {AEA} case of the decision problem.
\newblock In {\em Proceedings of the Symposium on Mathematical Theory of
  Automata}, pages 23--55. Polytechnic Press Brooklyn, NY", 1963.

\bibitem{Wang1990}
H.~Wang.
\newblock Dominoes and the {AEA} case of the decision problem.
\newblock In {\em Computation, Logic, Philosophy: A Collection of Essays},
  pages 218--245. Springer Netherlands, 1990.

\end{thebibliography}

\clearpage


\clearpage
\appendix

\subsection{On the Different Notions of Homclosure}

At the first glance, the different notions $\homcl{\getmodels{\Phi}}$, $\homclfin{\getmodels{\Phi}}$, $\homcl{\getfinmodels{\Phi}}$, and $\homclfin{\getfinmodels{\Phi}}$ might seem abundant and give rise to the question, if some of these may coincide. We show that this is not the case.

From $\getfinmodels{\Phi} \subseteq \getmodels{\Phi}$ and $\homclfin{\mathcal{C}} \subseteq \homcl{\mathcal{C}}$, one can directly infer the following inclusions:
\begin{itemize}
	\item
	$\homclfin{\getfinmodels{\Phi}} \subseteq \homcl{\getfinmodels{\Phi}}$
	\item
	$\homclfin{\getfinmodels{\Phi}} \subseteq \homclfin{\getmodels{\Phi}}$
	\item
	$\homclfin{\getmodels{\Phi}} \subseteq \homcl{\getmodels{\Phi}}$
	\item
	$\homcl{\getfinmodels{\Phi}} \subseteq \homcl{\getmodels{\Phi}}$
	\item
	$\homclfin{\getfinmodels{\Phi}} \subseteq \homcl{\getmodels{\Phi}}$ (by transitivity)
\end{itemize}

We proceed to show that all these inclusions are strict and that
$\homcl{\getfinmodels{\Phi}}$ and $\homclfin{\getmodels{\Phi}}$ are indeed incomparable, even for one fixed $\Phi$.

Let $\Phi$ be the conjunction over the following sentences:

\begin{itemize}
	\item $\forall x.\big( x \not= \sig{b} \Rightarrow \exists y. \sigP(x,y) \big)$
	\item $\forall x.\big( x \not= \sig{a} \Leftrightarrow \exists y. \sigP(y,x) \big)$
	\item $\forall xyz.\big( \sigP(x,y) \wedge \sigP(x,z) \Rightarrow y = z \big)$
	\item $\forall xyz.\big( \sigP(x,z) \wedge \sigP(y,z) \Rightarrow x = y \big)$
\end{itemize}

We now define the following $\{\sig{a},\sig{b},\sigP\}$-structures:

\begin{itemize}
	\item
	$\mathfrak{A}_\text{$k$-path} = (\{0,\ldots,k\},0,k,succ)$
	\item
	$\mathfrak{A}_\text{$k$-path$^+$} = (\mathbb{N},0,k,succ)$
	\item
	$\mathfrak{A}_\text{$2$loops} = (\{0,1\},0,1,\{(0,0),(1,1)\})$
	\item
	$\mathfrak{A}_\text{$\infty$-gap} = (\mathbb{Z} \setminus \{0\},1,-1,succ)$
\end{itemize}

We obtain: 
$$
\begin{array}{l|llll}
	& \homclfin{\getfinmodels{\Phi}} & \homcl{\getfinmodels{\Phi}} &
	\homclfin{\getmodels{\Phi}} & \homcl{\getmodels{\Phi}} \\\hline
	\mathfrak{A}_\text{$k$-path}     & \in & \in & \in & \in \\ 
	\mathfrak{A}_\text{$k$-path$^+$} & \not\in & \in & \not\in & \in \\
	\mathfrak{A}_\text{$2$loops}     & \not\in & \not\in & \in & \in \\
	\mathfrak{A}_\text{$\infty$-gap} & \not\in & \not\in & \not\in & \in \\

\end{array}
$$

\subsection{Tilings}

We prove in this section  the undecidability  of the deterministic margin-constraint tiling problems. The proof is a slight modification of the first undecidability proof for origin-constraint tiling problems in \cite{wang1963proceedings} (see also \cite{Wang1990}). We include it here for the convenience of the reader. { The words ``domino'' and ``tile'' are used fully  synonymously in the following, i.e., a tile system  is the same as a domino system and a set of tiles  is the same as a set of dominoes.

}

\gridlemma*

%
%

\newcommand{\red}[1]{{\color{red}#1}}

\begin{proof}In order to prove Item 1 we follow the construction of Wang in \cite{Wang1990}.
	
	 He showed undecidability of the origin-constrained tiling problem by a reduction from the halting problem of deterministic Turing machines. For a given Turing machine, Wang defined a set of tiles such that the $i$th computation step of the machine is completely given by a tiling of the $i$th row of the $\mathbb{N}\times \mathbb{N}$ grid. Therefore, a tiling of the $\mathbb{N}\times \mathbb{N}$ grid can only be found if the Turing machine runs forever. 
	 
	 We show in the following how this can be done with a margin-constrained system of dominoes. Our construction relies, in addition to Wang's idea, on two modifications of the classical  set of tiles. First we shift the Turing machine tape content in each new computation step one position to the right. To this end, we define a set of dominoes $\mathcal{D}_1$ of new tiles which occur only in the beginning of each row; their number increases by one from row to row.
	 The second idea is that we store the tape content not once but three times. Each placed tile corresponds to a position on the tape and contains the tape content at this position, but furthermore it also contains the tape content of its left and right neighbors. This idea leads to the definition of the tile system $\mathcal{D}_2$.

	 For a binary relation $R$ on a set $D$  we denote by $R$ also {the} map $R \colon D \rightarrow 2^D$ defined by $R(x)= \{d\in D\mid (x,d)\in R\}$. It will always be clear from the context which of the two objects we mean. Furthermore, if the  set $R(x)$ contains only one element we drop the set brackets, i.e., we write $R(x)=d$ for $d\in D$.

	 Let $\mathcal{M}=\langle\{q_0,\ldots, q_n, q_f\},\{0,1\}, \{0,1\},\delta, q_0, q_f \rangle$ be a deterministic Turing machine where  $\{q_0,\ldots, q_n, q_f\}$ is the set of states, $\{0,1\}$ is the input and tape alphabet, and $\delta$ is a partial transition function  $\{q_0,\ldots, q_n\}\times \{0,1\} \rightarrow \{q_0,\ldots, q_n,q_f\}\times \{0,1\} \times \{L,R\}$.
	 
	 Let $D_1=\{o, l, b ,c,c', d\}$ be a set of dominoes such that $o$ is the origin tile. 
	 We define the relations $H_1$ and $V_1$ on  $D_1$ as follows:
	 \begin{itemize}[itemindent=0ex, leftmargin=3ex, itemsep=-0.7ex, topsep=0ex]
	 	\item $H_1(o)= b$, $V_1(o)= l$,
	 	\item $H_1(b)=b$, $V_1(b)=\{c',d\},$
	 \item $V_1(l)=l$, $H_1(l)=\{c,c'\},$
	 \item $H_1(c)=\{c,c'\}$, $V_1(c)=c,$
	 \item $H_1(c')=d$, $V_1(c')=c,$
	 \item $V_1(d)=c'.$
	
	 \end{itemize}
	
	 In the next step we define tiles that contain the information of a Turing machine tape position. Consider the set $T=\{ T^p_i \mid  p\in \{q_0,\ldots, q_n, q_f, e\}, i\in \{0,1\}  \}$.
	 We define the set of tiles $D_2$ as $ (T \uplus \{d\}) \times T \times T$. We denote the three coordinates of a tile $x \in D_2$ with $x_l, x_m$ and $x_r$. 
	We want to ensure in the following, that a tile $x\in D_2$ that is placed at the grid corresponds to a tape position of the Turing machine after a certain number of computation steps. The tape content is encoded in the main tape information $x_m=T^p_i$.  The index $i$ tells us which symbol is written on the tape at that certain position. If the head of the machine is at this position we get the state information by the superscript $p$. Otherwise $p=e$, which means the head is somewhere else. The coordinates $x_l$ and $x_r$ give exactly this kind of information for the left and right neighbors of the tape position.
	 
We give now the relation $H_2$ and $V_2$ for a tile $x\in D_2$.

\begin{equation}
H_{2}(x)=\{y\in D_2 \mid x_m=y_l \wedge x_r=y_m   \}
\end{equation}

	 For $V_2$  we make the following case distinction on the different types of the form $x=(T^{u}_{r},T^{v}_{s},T^{w}_{t})\in D_2$.
%
%
%
%
%
%
%

\newcommand{\eqn}[1]{
\begin{equation}
#1
\end{equation}
}
	 
\begin{enumerate}
	\item $u=q$ for $q\in \{q_1,\ldots,q_n\}$ and $\delta(q,r)=(q',i,R)$:
	\eqn{V_2(x)=\{ (y_l,T^{e}_{i},T^{q'}_{s})\in D_2 \mid y_l  \in T\cup \{d\}\}.}
	
	\item $v=q$ for $q\in \{q_1,\ldots,q_n\}$ and $\delta(q,s)=(q',i,R)$:
	\eqn{V_2(x)=\{ (y_l,T^{u}_{r}, T^{e}_{i})\in D_2 \mid y_l \in T\cup \{d\}\}.}
	
	\item $v=q$ for $q\in \{q_1,\ldots,q_n\}$ and $\delta(q,s)=(q',i,L)$:
	\eqn{V_2(x)=\{ (y_l,T^{q'}_{s},T^{e}_{i})\in D_2 \mid  y_l \in T\cup \{d\}\}.}
	
	\item $w=q$ for $q\in \{q_1,\ldots,q_n\}$ and $\delta(q,s)=(q',i,L)$:
	\eqn{V_{2}(x)=\{ (y_l,x_l,T^{q'}_{s})\in D_2 \mid y_l \in T\cup \{d\}\}.}
	\item In all the remaining cases we define
	\eqn{V_2(x)=\{y\in D_2 \mid  x_m=y_r  \}.}
\end{enumerate}

	 Now consider the set of tiles $D=D_1 \cup D_2$. We define the relations $H_3$ and $V_3$ as follows. For the tiles in $D_2$ that are of the form $ x=(d,T^{v}_{s},T^{w}_{t})$, where $v,w\in \{q_0,\ldots,q_n,e\}$ holds, we set
	 $V_3(x)=\{d\}$. Furthermore we define $H_3(d)=( d\times T^2)$.

	 In the last step of our construction we have to specify the second row of the grid. This row should correspond to the initial tape configuration of the Turing machine. Therefore we define a relation $V_4$ such that
\begin{equation}
V(b)=\{(d, T^{q_0}_0, T^e_0) , ( T^{q_0}_0, T^e_0, T^e_0), ( T^{e}_0, T^e_0, T^e_0) \}.
\end{equation}
 	 
	 We claim that the domino system $\mathcal{D}$ on the set $D$ with the relations $B=\{b,o\}$, $L=\{l,o\}$, $H=H_1\cup H_2\cup H_3$ and $V=V_1\cup V_2\cup V_3\cup V_4$ is a deterministic margin constraint domino system which tiles the $\mathbb{N}\times \mathbb{N}$ grid if and only if the Turing machine $\mathcal{M}$ does not halt.
	 
	 In order to prove this we give the following description of the tile system $\mathcal{D}$.
	 
	\underline{Claim 1:} $\mathcal{M}$ runs at least $k$ steps if and only if $\mathcal{D}$ tiles at least the first $k{+1}$ rows. In this case the $i{+1}$th row, with $1\leq i\leq k$ has the following description from left to right:
	\begin{itemize}[itemindent=0ex, leftmargin=3ex, itemsep=-0.5ex, topsep=0ex]
		\item the margin symbol $l$,
		\item $i-1$ times the symbol $c$,
		\item one time the symbol $c'$,
			\item one time the symbol $d$,
			\item an infinite sequence of tiles from $D_2$ that store the Turing machine configuration after computation step $i-1$.
	\end{itemize}

It is clear that this claim holds for $k=1$, by our encoding of the initial machine configuration in the second row (see definition of $V_4$). Now assume the statement holds for $k-1$. It is easy to see from the definition of $H_1$ and $V_1$ that the $c$-block is enlarged by one $c$ from one row to the next.
Note that our definition  of $V_2$ ensures the last item in the claim. The main tape information of a tile gets the left neighbor information of its upper neighbor. This is due to the shifting of the tape. Furthermore one can check  that the state transition and head position change in each computation step are also encoded
in the definition of $V_2$ (see items 1 to 4). Note that this implies  also that a tile where the main information consists of a final state does not have an upper neighbor. This is since there exists no transition in a deterministic Turing machine from a final state to another state.
	 This proves Claim 1.

	 The domino system $\mathcal{D}$ satisfies clearly the items 1 to 3 in \Cref{def:tiling}. Let $d_1$ and $d_2$ be tiles. It is clear that item 4 is satisfied whenever $d_1\in D_1\setminus \{d\}$ or $d_2\in D_1$ holds. Suppose therefore $d_1\in D_2\cup \{d\}$ and $d_2\in D_2$. Note that for a tile $x\in H(d_1)$ the coordinates $x_l$ and $x_m$ are uniquely determined by $d_1$.
	 The coordinate $x_l$ corresponds as a tape position to $(d_2)_m$. The tape information at this position after the next computation step is  determined by its two neighbors $(d_2)_l$ and $(d_2)_r$. The uniqueness of this transition can be checked in the definition of $V_2$.
	 
	 This proves that $\mathcal{D}$ is deterministic. It follows from Claim 1 that this system tiles the grid whenever the machine $\mathcal{M}$ does not halt on the empty input. On the other hand, if the machine halts, then there exists a maximal $k$ such that the first $k$ rows can be tiled. This proves the first item of the lemma.

	\bigskip
	
Item 2 is also proved by a reduction from the halting problem for Turing machines. We give a reduction such that a Turing machine halts on the empty tape whenever the corresponding tile system admits an ultimately periodic tiling of the grid. On the other hand if the Turing machine runs forever, the defined  set of tiles does not allow for an ultimately periodic tiling.

Note that our reduction from Item 1 satisfies already the second implication. If  the Turing machine $\mathcal{M}$ runs forever the tile system $\mathcal{D}$ tiles the grid by a tiling $t$.  By the definition of $\mathcal{D}$ and Claim 1 above the $i$th row starts with exactly $i-1$ times the symbol $c$ after one $l$. This means that for every $\ell_\mathrm{init},\ell_\mathrm{period}\in \mathbb{N}$ there exists $k\in \mathbb{N}$ such that   $t(k,\ell_\mathrm{init})\not= c$ and $ t(k,\ell_\mathrm{init}+\ell_\mathrm{period})=c$. Therefore the tiling is not ultimately periodic. Moreover, since the used margin-constrained tile systems is deterministic, this is the unique tiling of the grid by this  set of tiles, which proves that $\mathcal{D}$ does not admit an ultimately periodic tiling.

We modify the tiling from Item 1 in the following such that $\mathcal{D}$ tiles the grid ultimately periodically whenever the Turing machine $\mathcal{M}$ halts. If the Turing machine halts it is in the final state. Also, we can assume without loss of generality that the reading head will then be at the leftmost tape position. 
Let $f$ and $e$ be new tiles and let $D'$ be the union of the set of dominoes from Item 1 with $\{f,e\}$. We define new relations $V_5$ and $H_4$ as follows.

\begin{enumerate}
	\item Let $x\in D_2$. If  $x$ is of the form $ x=(d,T^{q_f}_{s},T^{w}_{t})$  where $q_f$ is the final state we define	$V_5(x)=f$. 
\\Otherwise $V_5(x)=e$ .
	\item $V_5(f)=f$ and  $V_5(e)=e$.
	\item $H_4(c)=f $, $H_4(c')=f$, $H_4(f)=e$ and $H_4(e)=e$.
\end{enumerate}
One can  check that the new tile system $\mathcal{D}'$ on the  set of tiles $D'$ with the relations $B'=B$, $L'=L$, $H'=H\cup H_4$ and $V'=V\cup V_5$ is still a deterministic margin-constrained domino system: Consider tiles $d_1$ and $d_2$ and assume first that $d_1\in D$ and $d_2\in D$ hold, then the only possibilities for $d_1$ that have to be checked are $d_1=c$ and $d_1=c'$. The definition of $V_5$ involves only  old elements from $D_2$ (Item 1 above). Therefore $d_2$ can be assumed to be from $D_2$. It is easy to see that in this cases $|H'(d_1)\cap  V'(d_2)|\leq 1$ holds.
The remaining cases are $d_1\in \{e,f\}$ or $d_2\in \{e,f\}$. In these cases $|H'(d_1)\cap  V'(d_2)|\leq 1$ holds trivially since $e$ and $f$ have unique upper and right neighbors.

We characterize now how a tiling with the system $\mathcal{D}'$ looks like if the Turing machine halts.

		\underline{Claim 2:} If $\mathcal{M}$ runs  $k$ steps and then halts, then $\mathcal{D}$ tiles the grid such that for every $i\geq k+3$ the $i$th row has the following description from left to right
	\begin{itemize}[itemindent=0ex, leftmargin=3ex, itemsep=-0.5ex, topsep=0ex]
		\item The margin symbol $l$,
		\item $k$ times the symbol $c$,
		\item one time the symbol $f$,
		\item an infinite sequence of tiles $e$.
	\end{itemize}
	
Note that we have by Claim 1 a description of the  $k$th row. Since the Turing machine halts after $k$ steps the first symbol in this row after the symbol $d$ is of the kind $ x=(d,T^{q_f}_{s},T^{w}_{t})$  where $q_f$ is the final state. Therefore the unique tiling of the $k+2$th row is as follows:

		\begin{itemize}[itemindent=0ex, leftmargin=3ex, itemsep=-0.5ex, topsep=0ex]
		\item the margin symbol $l$,
		\item $k$ times the tile $c$,
		\item one time the tile $c'$,
			\item one time the tile $f$, and
		\item an infinite sequence of tiles $e$.
	\end{itemize}
	From this description it is easy to compute the unique tilings of the $k+3$th and $k+4$th row. Since both are identical and by Item 4 in \Cref{def:tiling} it follows that all further rows admit also tilings that are identical to the $k+3$th one.	This proves Claim 2.	
	
{
The tiling described in Claim 2 is ultimately periodic. For the vertical period this follows directly from Claim 2.  For the horizontal period node that a Turing machine can in $k$ computation step reach at most $k+1$ different tape positions. This implies that in the rows $2$ to $k+2$ the tiles right of the $2(k+1)$th column encode the empty tape. Since Claim 2 ensures the horizontal period for the rows greater than $k+1$ the tiling is ultimately periodic.}

  Note also that our modifications to achieve this do not affect the property of a (or the) tiling being not ultimately periodic if the Turing machine runs forever. 
	This finishes the proof of the lemma. \end{proof}

\subsection{Proofs of Tools Section}

\translemma*

\begin{proof}

	Note that the negation normal form for a formula $\Phi$ is equivalent to $\Phi$ itself.
	Then, by the definition of  $\trans^n$ it is enough to prove the statement for sentences $\Phi$ in $\mathrm{NNF}$. In order to do this, we prove the
	 following statement. For every $\FOe$ formula $\Phi(\bold{x})$ in $\mathrm{NNF}$, every $n$-labelled constant-sole structure 	$(\mathfrak{A},\lambda)$, and every variable assignment $\beta \colon V\rightarrow \mathfrak{A}^\lambda$ it holds that 
\begin{equation}
( \mathfrak{A}^\lambda, \beta) \models \Phi(\bold{x})  \text{~ iff~} 
(\mathfrak{A}_\lambda, \beta_1) \models \trans^n_{\beta_2}( \Phi(\bold{x})),
\end{equation}
	where $\beta_1 \colon V \to A$ and $\beta_2 \colon V \to \{1,\ldots,n\}$ are defined such that $\beta(x)=(\beta_1(x),\beta_2(x))$ for all $x \in V$.
	We denote by $\beta_2^+$ the extensions of the assignment $\beta_2$ to all constants $c$ such that $\beta_2^+(c)=1$  (cf. definition of  $\trans^n_\ass$). Furthermore $\beta_1^+$ is the  extension of $\beta_1$ to all constants $c$ such that $\beta_1^+(c)=c^{\mathfrak{A}_\lambda} $. The map $\beta^+$ is now defined  by $\beta^+ (c)=(\beta^+_1(c),\beta^+_2(c))$. Note that this is compatible with the definition of the constants in the  constant-sole structure $ \mathfrak{A}^\lambda$ since $c^{\mathfrak{A}^\lambda}= (c^{\mathfrak{A}},1)$ holds.	 
	
	We prove the statement by structural induction over the set of formulae.
For  literals of the form $\sig{P}(\bold{t})$ or $\neg\sig{P}(\bold{t})$ the statement follows immediately from the definition of $\trans^n_\ass$ since the projection map from $\mathfrak{A}^\lambda$ to   ${\mathfrak{A}_\lambda}|_\tau$  is a strong surjective homomorphism. Therefore the only cases that we have to check are formulae of the form $t_1=t_2$ and $t_1\not =t_2$ for terms $t_1$ and $t_2$. 
	
	Suppose that $\beta$ is any variable assignment such that  $ (\mathfrak{A}^\lambda, \beta) \models (t_1=t_2)$. Then $\beta_1^+(t_1)=\beta^+_1(t_2)$ and $\beta^+_2(t_1)=\beta^+_2(t_2)$. This implies together with (\ref{eq: 5}) that $(\mathfrak{A}_\lambda, \beta_1) \models \trans^n_{\beta_2}( t_1=t_2)$ holds. For the other direction observe that if $(\mathfrak{A}_\lambda, \beta_1) \models \trans^n_{\beta_2}( t_1=t_2)$  holds, then $\trans^n_{\beta_2}( t_1=t_2)$ is exactly the formula  $t_1=t_2$ and for the assignment $\beta$ it holds that  $\beta^+_2(t_1)=\beta^+_2(t_2)$ by (5).  Furthermore we get in this case that  $\beta^+_1(t_1)=\beta^+_1(t_2)$ must hold and therefore we get $( \mathfrak{A}^\lambda, \beta) \models (t_1=t_2)$.
	This concludes the second direction.
	
	For the case of  a literal of the form $t_1\not = t_2$ suppose that $\beta$ is a variable assignment such that  $ ( \mathfrak{A}^\lambda, \beta) \models (t_1\not =t_2)$ holds. This implies that $\beta_1^+(t_1)\not=\beta_1^+(t_2)$ or $\beta_2^+(t_1) \not=\beta_2^+(t_2)$ holds.
	For the first possibility it follows by (\ref{eq: 6}) that $(\mathfrak{A}_\lambda, \beta_1) \models \trans^n_{\beta_2}( t_1\not =t_2)$  holds. In the case that  $\beta^+_2(t_1) \not=\beta^+_2(t_2)$ holds, we get by (\ref{eq: 6}) that $\trans^n_{\beta_2}( t_1\not =t_2)$ is precisely $\top$ and therefore again the implication is valid. 
	For the other direction assume that  $(\mathfrak{A}_\lambda, \beta_1) \models \trans^n_{\beta_2}( t_1\not =t_2)$  holds. Then either $\trans^n_{\beta_2}( t_1\not =t_2)$ is equal to $\top$ or it is equal to $t_1\not = t_2$ and $\beta^+_1(t_1)\not=\beta^+_1(t_2)$. In both cases at least one of $\beta_i$ satisfies $\beta^+_i(t_1)\not =\beta^+_i(t_2)$ and therefore $\beta^+(t_1)\not =\beta^+(t_2) $ holds. This proves the claim.
	
	For the inductive step it is straightforward to see that statement holds for all formulae of the form $\Phi(\bold{x}) \wedge \Psi(\bold{x})$ and $\Phi(\bold{x}) \vee \Psi(\bold{x})$ 
	whenever it holds for $\Phi(\bold{x})$ and $\Psi(\bold{x}) $.
	
	Therefore it remains to prove the statement for formulae of the form $\exists x .\Phi(x,\bold{y})$ and   $\forall x .\Phi(x,\bold{y})$. Let $\beta$ be an arbitrary variable assignment such that $ ( \mathfrak{A}^\lambda, \beta) \models \exists x .\Phi(x,\bold{y})$ holds. This is true if and only if there exists a variable assignment $\beta'= \beta \cup \{x\mapsto (a,j )\} $, where $(a,i ) $ is a domain element of $\mathfrak{A}^\lambda$ such that $ ( \mathfrak{A}^\lambda, \beta') \models  \Phi(x,\bold{y})$ holds.  By the inductive assumption we get that this is the case if and only if   $(\mathfrak{A}_\lambda, \beta'_1) \models \trans^n_{\beta'_2}(\Phi(x,\bold{y}))$ holds. 
	Since $ [\lambda(a) \geq j]$ holds clearly 
	in $\mathfrak{A}_\lambda$ this is equivalent to
	the validity of
	$ (\mathfrak{A}_\lambda, \beta'_1) \models  [\lambda(x) \geq j] \wedge \trans^n_{\beta'_2} (\Phi(x,\bold{y}))$ for $\beta'= \beta \cup \{x\mapsto (a,j )\} $.
	Note that such a $\beta$ can be found  if and only if $ (\mathfrak{A}_\lambda, \beta_1) \models  \exists x.[\lambda(x) \geq j] \wedge \trans^n_{\beta'_2} (\Phi(x,\bold{y}))$ holds for some $j$.  By Definition (\ref{eq: 11})  this means that	 $ (\mathfrak{A}_\lambda, \beta_1) \models  \trans^n_{\beta_2}(\exists x .\Phi(x,\bold{y}))$  holds.
	
	The statement for formulae of the form $\forall x .\Phi(x,\bold{y})$ can be shown by an analogous argument.\end{proof}

\subsection{Proofs of Homclosure Membership Section}

\begin{lemma}\label{lemma:projchar PHIintext}
	Let $\Phi$ be a $\tau$-sentence and let $\mathfrak{A}$ be a finite $\tau$-structure as in \Cref{def:exincol}. Then the following holds:
	\begin{enumerate}
		\item $\Phi^\mathrm{ext}_\mathfrak{A}$ and $\Phi^\mathrm{int}_\mathfrak{A}$ are of polynomial size wrt.\ $|\mathfrak{A}|$ and the size of $\Phi$. 
		\item $\Omega_\mathfrak{A}$ projectively characterizes the class of (finite) $\tau$-structures having a homomorphism into $\mathfrak{A}$.
		\item\label{item:projecht} $\Phi^\mathrm{ext}_\mathfrak{A}$ projectively characterizes the class of (finite) models of $\Phi$  having a homomorphism into $\mathfrak{A}$.
	\end{enumerate}
\end{lemma}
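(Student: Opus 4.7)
I will verify the three items in turn. The arguments are direct combinatorial/model-theoretic checks on the constructions in Definition~\ref{def:exincol}, with one technical subtlety around constant symbols.

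For item 1, I will do a direct syntactic count. Writing $k=\mathrm{ar}(\sigP)$, each subformula $\chi^\sigP_\mathfrak{A}(\bold{t})$ consists of one atom together with a disjunction over at most $|A|^k$ tuples of $\sigP^\mathfrak{A}$, each disjunct a conjunction of $k$ bit-test formulas $[\lambda(t_i)=m_i]$ of size $O(\log|A|)$. Summing over $|\tau|$ predicates yields $|\Omega_\mathfrak{A}|$ polynomial in $|\mathfrak{A}|$, whence $|\Phi^\mathrm{ext}_\mathfrak{A}|=|\Phi|+|\Omega_\mathfrak{A}|$ is polynomial in $|\Phi|$ and $|\mathfrak{A}|$; substituting $\chi^\sigP_\mathfrak{A}$ for each atomic occurrence in $\Phi$ gives the same bound for $|\Phi^\mathrm{int}_\mathfrak{A}|$.

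For item 2, I will exhibit both directions of the projective characterization. Forward: given a homomorphism $h\colon\mathfrak{B}\to\mathfrak{A}$ with $A=\{1,\ldots,n\}$, I expand $\mathfrak{B}$ to a $\tau\uplus\rho$-structure $\mathfrak{B}^h$ by letting $(\Bit^\lambda_i)^{\mathfrak{B}^h}:=\{b\in B\mid \lfloor h(b)/2^i\rfloor \text{ is odd}\}$, which encodes $h$ bit by bit. Any tuple $(b_1,\ldots,b_k)\in\sigP^\mathfrak{B}$ maps under $h$ to a tuple $(h(b_1),\ldots,h(b_k))\in\sigP^\mathfrak{A}$, which witnesses the disjunction in $\chi^\sigP_\mathfrak{A}(b_1,\ldots,b_k)$, so $\mathfrak{B}^h\models\Omega_\mathfrak{A}$. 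Backward: given any expansion $\mathfrak{B}'\models\Omega_\mathfrak{A}$, I extract the labeling $\lambda$ from the bit predicates and set $h(b):=\lambda(b)$ whenever $\lambda(b)\in A$, using a fixed default in $A$ otherwise; $\Omega_\mathfrak{A}$ forces tuples of $\sigP^\mathfrak{B}$ to carry labels forming tuples of $\sigP^\mathfrak{A}\subseteq A^k$, so $h$ respects every predicate.

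The main obstacle I expect is that $\Omega_\mathfrak{A}$, as written, does not directly force $\lambda(\sigc^\mathfrak{B})=\sigc^\mathfrak{A}$ for constants $\sigc\in\tau$, so a ``misaligned'' expansion might satisfy $\Omega_\mathfrak{A}$ while the extracted $h$ fails to respect constants. I will handle this by tacitly strengthening $\Omega_\mathfrak{A}$ with the additional conjunct $\bigwedge_{\sigc\in\tau}[\lambda(\sigc)=\sigc^\mathfrak{A}]$, which is itself of size polynomial in $|\mathfrak{A}|$, syntactically preserves the logical fragments considered downstream, and leaves the forward direction trivially intact while forcing the extracted $h$ to respect all constants in the backward direction.

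Item 3 then follows directly: since $\Phi^\mathrm{ext}_\mathfrak{A}=\Phi\wedge\Omega_\mathfrak{A}$ and the bit predicates do not occur in $\Phi$, a $\tau\uplus\rho$-expansion $\mathfrak{B}'$ of some $\mathfrak{B}$ satisfies $\Phi^\mathrm{ext}_\mathfrak{A}$ iff $\mathfrak{B}\in\getmodels{\Phi}$ and $\mathfrak{B}'\models\Omega_\mathfrak{A}$; by item~2 the existence of such an expansion is equivalent to $\mathfrak{B}$ being a model of $\Phi$ that admits a homomorphism into $\mathfrak{A}$. The finite case is immediate as expansion preserves the domain.
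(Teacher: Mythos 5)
Your proof follows the same three-part approach as the paper (direct syntactic size count; extract the labeling $\lambda$ from the bit predicates and show it is a homomorphism, and conversely build an expansion from a given homomorphism; and then compose with the independence of $\Phi$ from the fresh bit signature). The interesting part is that you identify a genuine gap that the paper's own proof glosses over. As printed, $\Omega_\mathfrak{A}$ only constrains the bit-labels on tuples occurring in predicate extensions, and says nothing about the labels carried by interpretations of constant symbols. The paper's proof asserts that the extracted $\lambda$ is a homomorphism ``from the $\tau$-reduct of $\mathfrak{B}$ to $\mathfrak{A}$'' after checking only the predicate condition, silently skipping the requirement $h(\sigc^\mathfrak{B})=\sigc^\mathfrak{A}$ from the paper's own definition of homomorphism. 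This is not merely cosmetic: with $\tau=\{\sigc,\sigP\}$, $\mathfrak{A}$ having $A=\{1,2\}$, $\sigc^\mathfrak{A}=1$, $\sigP^\mathfrak{A}=\{(2,2)\}$, and $\Phi=\sigP(\sigc,\sigc)$, the one-element structure with $\sigc$ interpreted as the single element and labeled $2$ satisfies $\Omega_\mathfrak{A}$ and $\Phi$, even though no model of $\Phi$ admits a homomorphism into $\mathfrak{A}$ (that would force $(1,1)\in\sigP^\mathfrak{A}$). Your fix---conjoining $\bigwedge_{\sigc\in\tau}[\lambda(\sigc)=\sigc^\mathfrak{A}]$---is exactly right: it is a ground, quantifier-free conjunction of bit literals of size $O(|\tau|\cdot\log|A|)$, so it is polynomial, and being ground it is preserved under all the downstream syntactic fragments ($\GNFO$, $\TGF$, $\FOte$, prefix classes with a universal quantifier) to which $\Omega_\mathfrak{A}$ is later appended. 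Your handling of labels $\lambda(b)\notin\{1,\ldots,n\}$ (possible when $n$ is not a power of two, since $\ell$ bits overshoot) is another careful detail the paper omits; the default assignment is harmless because $\Omega_\mathfrak{A}$ already forbids such $b$ from participating in any predicate tuple. In short, the route is the same as the paper's, but your version actually closes a hole in the printed argument.
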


\begin{proof}The first statement follows directly from the definitions of  $\Phi^\mathrm{ext}_\mathfrak{A}$ and $\Phi^\mathrm{int}_\mathfrak{A}$.
	
	For (2), let $\mathfrak{B}$ be a $\tau{\cup}\rho$-structure that satisfies $\Omega_\mathfrak{A}$. Let $\lambda$ be the $n$-labeling of $\mathfrak{B}$ given by the $\rho$-reduct of $\mathfrak{B}$. The map $\lambda$ is a homomorphism from the $\tau$-reduct  $\mathfrak{B}'$ of $\mathfrak{B}$ to $\mathfrak{A}$.
	 To see this assume that $\bold{b}\in  \sig{P}^{\mathfrak{B}'}$ holds for a tuple $\bold{b}$ from $B$. By the definition of $\Omega_\mathfrak{A}$	we get that  $\bold{b} \in (\chi^{\sigP}_\mathfrak{A})^\mathfrak{B}$ holds.
	 This implies that $\lambda(\bold{b}) \in  \sig{P}^\mathfrak{A}$ holds and therefore $\lambda$ is a homomorphism. 
	For the other inclusion assume the $h\colon\mathfrak{B} \rightarrow \mathfrak{A}$ is a homomorphism. It is easy to check that the implicit representation $\mathfrak{B}_h$ of the $n$-labeled structure $(\mathfrak{B},h)$ satisfies $\Omega_\mathfrak{A}$ and is an expansion of $\mathfrak{B}$. This proves the claim.
	
	Claim (3) follows immediately from the definition of $\Phi^\mathrm{ext}_\mathfrak{A}$ and the previous claim.\end{proof}

\begin{lemma}\label{lemma:phi int}
Let $\mathfrak{A}$ and $\rho$ as in \Cref{def:exincol} and let $\mathfrak{B}$ be an arbitrary $\tau$-structure such that $h\colon\mathfrak{B}\rightarrow \mathfrak{A}$ is a homomorphism.  Then it holds for every $\tau$-formula $\Phi$ and all assignments $\beta$ that 
$$ (\mathfrak{B},\beta) \models   \Phi(\bold{t}) \text{~~if and only if~~} (\mathfrak{B}_h,\beta)\models \Phi^\mathrm{int}_\mathfrak{A}(\bold{t}),$$
where $\mathfrak{B}_h$ is the implicit representation of the $n$-labeled structure $(\mathfrak{B},h)$.\end{lemma}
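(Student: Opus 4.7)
The plan is to establish the equivalence by a routine structural induction on the formula $\Phi$, with the only non-trivial work taking place in the base case for predicate atoms, where the hypothesis that $h$ is a homomorphism becomes essential.

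For the base case I would consider three kinds of atomic formulas. An equality atom $t_1=t_2$ is untouched by the $(\cdot)^\mathrm{int}_\mathfrak{A}$ transformation, and since $\mathfrak{B}_h$ agrees with $\mathfrak{B}$ on every constant and on the whole domain, the equivalence is immediate. For a predicate atom $\sigP(\bold{t})$ with $k=\mathrm{ar}(\sigP)$, the translation yields $\chi^\sigP_\mathfrak{A}(\bold{t}) = \sigP(\bold{t})\wedge\bigvee_{(m_1,\ldots,m_k)\in\sigP^\mathfrak{A}}\bigwedge_{i=1}^k[\lambda(t_i)=m_i]$. The direction $(\mathfrak{B}_h,\beta)\models\chi^\sigP_\mathfrak{A}(\bold{t})\Rightarrow(\mathfrak{B},\beta)\models\sigP(\bold{t})$ is immediate because $\chi^\sigP_\mathfrak{A}$ syntactically contains $\sigP(\bold{t})$ as a conjunct and $\mathfrak{B}_h$ is an expansion of $\mathfrak{B}$. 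The converse is where $h$ enters: if $(\mathfrak{B},\beta)\models\sigP(\bold{t})$, letting $b_i := t_i^{\mathfrak{B},\beta}$ one has $(b_1,\ldots,b_k)\in\sigP^\mathfrak{B}$, hence, since $h$ is a homomorphism, $(h(b_1),\ldots,h(b_k))\in\sigP^\mathfrak{A}$; now by construction of the implicit representation, the $\Bit^\lambda_j$-predicates in $\mathfrak{B}_h$ encode the labeling $h$, so the conjuncts $[\lambda(t_i)=h(b_i)]$ hold in $(\mathfrak{B}_h,\beta)$, witnessing the disjunct indexed by $(h(b_1),\ldots,h(b_k))$.

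For the inductive step I would dispatch Boolean connectives and quantifiers uniformly. Since $(\cdot)^\mathrm{int}_\mathfrak{A}$ commutes with $\neg,\wedge,\vee,\exists,\forall$ and since $\mathfrak{B}_h$ shares the same domain as $\mathfrak{B}$, each assignment $\beta'$ of a quantified variable ranges over the same set on both sides; the induction hypothesis applied to subformulas then transports the equivalence through each constructor directly. Negation is worth a brief remark: although $\chi^\sigP_\mathfrak{A}(\bold{t})$ is logically stronger than $\sigP(\bold{t})$ in general, the base case establishes their equivalence \emph{on $\mathfrak{B}_h$}, so negation is preserved without difficulty.

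There is no genuine obstacle in this proof; the real content is isolating the single place where the homomorphism assumption is used, namely the right-to-left direction of the base case for predicate atoms. A useful intermediate observation that I would state once and for all is that, precisely because $h$ is a homomorphism and $\mathfrak{B}_h$ encodes $h$ via its $\Bit^\lambda_j$-predicates, the sentences $\sigP(\bold{t})$ and $\chi^\sigP_\mathfrak{A}(\bold{t})$ are equivalent on $(\mathfrak{B}_h,\beta)$ for every assignment $\beta$; after this, the induction is mechanical.
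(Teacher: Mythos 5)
Your proof is correct and follows essentially the same route as the paper's: structural induction on $\Phi$, with the homomorphism hypothesis used exactly once in the right-to-left direction of the predicate-atom base case (where $h$ guarantees $(h(b_1),\ldots,h(b_k))\in\sigP^\mathfrak{A}$ and the $\Bit^\lambda_j$-encoding of $h$ makes the corresponding disjunct of $\chi^\sigP_\mathfrak{A}$ true in $\mathfrak{B}_h$), the converse being immediate because $\chi^\sigP_\mathfrak{A}$ has $\sigP(\bold{t})$ as a conjunct and $\mathfrak{B}_h$ expands $\mathfrak{B}$. Your explicit handling of equality atoms and the remark that negation is unproblematic because the atomic equivalence is established on $\mathfrak{B}_h$ are small refinements of exposition, not a different argument.
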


\begin{proof}We prove the statement by structural induction over the set of formulae.  For the induction beginning let $\Phi(\bold{t})$ be an atomic formula $\sig{P}(\bold{t})$ and let $\beta$ be a term assignment such that $ (\mathfrak{B},\beta) \models   \Phi(\bold{t}) $ holds. Since $h$ is a homomorphism it follows that $(\mathfrak{A},h\circ \beta)\models \sig{P}(\bold{t})$ and therefore also $(\mathfrak{B}_h,\beta)\models \chi^{\sigP}_\mathfrak{A}(\bold{t})$ holds. The second direction of the equivalence follows directly from the definition of $\chi^{\sigP}_\mathfrak{A}$ and since $\mathfrak{B}_h$ is an expansion of $\mathfrak{B}$.
	
	For the induction step let $\Phi(\bold{t})$ be of the form $\neg \Psi(\bold{t})$ such that the statement already holds for $\Psi$. Note that then clearly $\Phi^\mathrm{int}_\mathfrak{A}(\bold{t})$ is equal to $\neg \Psi^\mathrm{int}_\mathfrak{A}(\bold{t})$ and we get:
	\begin{align*}
(\mathfrak{B},\beta) \models   \Phi(\bold{t}) &\hspace{4ex}\text{iff} \hspace{-5ex}&& (\mathfrak{B},\beta) \not\models   \Psi(\bold{t}) \\
&\hspace{4ex}\text{iff} \hspace{-5ex}&& (\mathfrak{B}_h,\beta) \not\models   \Psi^\mathrm{int}_\mathfrak{A}(\bold{t})\\
&\hspace{4ex}\text{iff} \hspace{-5ex}&& (\mathfrak{B}_h,\beta)\models \Phi^\mathrm{int}_\mathfrak{A}(\bold{t}).
	\end{align*}
Assume that $\Phi(\bold{t})$ is of the from $\Psi(\bold{t}) \vee \Gamma(\bold{t})$ for formulae $\Psi(\bold{t})$ and $\Gamma(\bold{t})$ for which the statement holds. Since $\Phi^\mathrm{int}_\mathfrak{A}(\bold{t})$ is by its definition of the form $\Psi^\mathrm{int}_\mathfrak{A}(\bold{t}) \vee \Gamma^\mathrm{int}_\mathfrak{A}(\bold{t})$ we get
	\begin{align*}
	(\mathfrak{B},\beta) \models   \Phi(\bold{t}) &&&
	\\
	&\hspace{-10ex}\text{iff} \hspace{0ex}&& \hspace{-10ex}(\mathfrak{B},\beta) \models  \Psi(\bold{t}) \vee \Gamma(\bold{t}) \\
&\hspace{-10ex}\text{iff} \hspace{0ex}&& \hspace{-10ex}(\mathfrak{B},\beta) \models   \Psi(\bold{t})  \text{~~or~~}  (\mathfrak{B},\beta) \models \Gamma(\bold{t}) \\
&\hspace{-10ex}\text{iff} \hspace{0ex}&& \hspace{-10ex}(\mathfrak{B}_h,\beta) \models   \Psi^\mathrm{int}_\mathfrak{A}(\bold{t})\text{~~or~~}(\mathfrak{B}_h,\beta) \models \Gamma^\mathrm{int}_\mathfrak{A}(\bold{t}) \\
	&\hspace{-10ex}\text{iff} \hspace{0ex}&& \hspace{-10ex} (\mathfrak{B}_h,\beta)\models \Psi^\mathrm{int}_\mathfrak{A}(\bold{t}) \vee \Gamma^\mathrm{int}_\mathfrak{A}(\bold{t})\\
		&\hspace{-10ex}\text{iff} \hspace{0ex}&& \hspace{-10ex} (\mathfrak{B}_h,\beta)\models \Phi^\mathrm{int}_\mathfrak{A}(\bold{t}) .
\end{align*}
As in the last case assume that $\Phi(\bold{t})$ is of the form $\exists t_1 \Psi(\bold{t})$ for formulae $\Psi(\bold{t})$ for which the statement holds.  Then $\Phi^\mathrm{int}_\mathfrak{A}(\bold{t})$ is by its definition of the form $\exists t_1\Psi^\mathrm{int}_\mathfrak{A}(\bold{t})$. 
We get the following:
\begin{align*}
	(\mathfrak{B},\beta) \models   \Phi(\bold{t}) &&&\\
&\hspace{-12ex}\text{iff} \hspace{0ex}&&\hspace{-12ex} \exists b\in B \text{~such that~}  (\mathfrak{B},\beta\cup\{t_1\mapsto b\}) \models   \Psi(\bold{t})\\
&\hspace{-12ex}\text{iff} \hspace{0ex}&&\hspace{-12ex} \exists b\in B \text{~such that~}  (\mathfrak{B}_h,\beta\cup\{t_1\mapsto b\}) \models  \Psi^\mathrm{int}_\mathfrak{A}(\bold{t})\\
	&\hspace{-12ex}\text{iff} \hspace{0ex}&&\hspace{-12ex} (\mathfrak{B}_h,\beta)\models \Phi^\mathrm{int}_\mathfrak{A}(\bold{t}).
\end{align*}
This concludes the proof of the lemma.
\end{proof}

\lemmacoloring*

\begin{proof}
	By \Cref{lemma:projchar PHIintext} Item \ref{item:projecht}, the sentence $\Phi^\mathrm{ext}_\mathfrak{A}$ projectively characterizes the class of (finite) models of $\Phi$  having a homomorphism into $\mathfrak{A}$. Therefore the equivalence of 1) and 3) follows immediately.
	
The implication from 1) to 2) follows from \Cref{lemma:phi int}, 
since for every (finite) model $\mathfrak{B}$ of $\Phi$  admitting a homomorphism $h$ into $\mathfrak{A}$, the structure $\mathfrak{B}_\lambda$  with $\lambda=h$ is a (finite) model of $\Phi^\mathrm{int}_\mathfrak{A}$.

For the implication from 2) to 1) assume that $\mathfrak{B}$ is a model of $\Phi^\mathrm{int}_\mathfrak{A}$ and let $\lambda$ be the $n$-labeling of $\mathfrak{B}$ given by the $\rho$-reduct of $\mathfrak{B}$.
Let $\mathfrak{C}$ be the $\tau$-structure
obtained by setting $C=B$, $\sigc{}^\mathfrak{C}=\sigc{}^\mathfrak{B}$ for all constants, and $\sigP{}^\mathfrak{C} = (\chi^{\sigP}_\mathfrak{A})^\mathfrak{B}$.
It follows from the definition of  $\chi^{\sigP}_\mathfrak{A}(\bold{t})$  that $\lambda$ is a homomorphism from $\mathfrak{C}$ to $\mathfrak{A}$. Furthermore the definition of $\mathfrak{C}$ implies that for all $\sig{P}\in \tau$ and all $\bold{b}$ from $B$ it holds 
$$ \bold{b} \in {(\chi^{\sigP}_\mathfrak{A})}^{\mathfrak{C}_\lambda}
 \text{~~~if and only if~~~} \bold{b} \in{(\chi^{\sigP}_\mathfrak{A})}^\mathfrak{B}. $$
Since $\mathfrak{B}$ satisfies $\Phi^\mathrm{int}_\mathfrak{A}$ it follows that
$\mathfrak{C}_\lambda$ does as well. Now we apply \Cref{lemma:phi int} and get that $\mathfrak{C}$ satisfies $\Phi$ which proves that 1) holds.\end{proof}

\thearbsameasfin*
\begin{proof}
In order to show $\homclfin{\getmodels{\Psi}}\subseteq\homclfin{\getfinmodels{\Psi}}$ let $\mathfrak{B}\in \homclfin{\getmodels{\Psi}}$.  Then
  $\Psi^\mathrm{ext}_\mathfrak{B}$ and  $\Psi^\mathrm{int}_\mathfrak{B}$ have by \Cref{lemma:coloring} models. By the assumption $\Logic^\mathrm{ext}$ or $\Logic^\mathrm{int}$ have the FMP which means that $\Phi^\mathrm{ext}_\mathfrak{B}$ or $\Phi^\mathrm{int}_\mathfrak{B}$ are finitely satisfiable. Again by \Cref{lemma:coloring}, we get that $\Phi$ has a  finite model admitting a homomorphism into $\mathfrak{B}$ which shows that 
$ \mathfrak{B}\in \homclfin{\getfinmodels{\Psi}}$ holds.
\end{proof}

\firstmain*

\begin{proof}
	First consider \EpFO{}, noting that every such sentence is already homclosed. Thus $\mathrm{InHomCl}_\mathrm{(fin)}$ is equivalent to checking modelhood for the given $\mathfrak{A}$. This can be done in nondeterministic time by guessing an instantiation of the existentially quantified variables and checking satisfaction. \textsc{NP}-hardness follows from a reduction from the graph homomorphism problem (even for disjunction-free sentences). 	
	We now consider the other fragments. Extrinsic coloring for \GNFO{} and \TGF{} 
	and intrinsic coloring for the others establishes a polynomial reduction to the corresponding satisfiability problems via \Cref{lemma:coloring}. Hardness follows by a reduction from satisfiability via \Cref{prop:SatToInHomCl} in all cases (see \Cref{fig:main} for references for the complexity of the satisfiability problem in each case). 
\end{proof}

\defgridformula*

\begin{lemma}[restate=gridformula, label=gridformula, name=]
	\begin{enumerate}[itemindent=0ex, leftmargin=4ex, itemsep=-0.5ex]
		\item \label{grid1}
		Every model $\mathfrak{B}$  of  $\Phi_\mathrm{grid}$ is a  $\{\sigH{},\sigV{}\}$-structures such that for every domain element
		$b \in B$ there is a homomorphism $h: \mathfrak{A}_{\mathbb{N} \times \mathbb{N}} \to \mathfrak{B}$ with $h((0,0))=b$. 
		\item \label{grid2}
		Every model of $\Phi_\mathrm{grid}$ has a homomorphism into a structure that is not a model of $\Phi_\mathrm{grid}$.
		\item \label{grid3}
		Every homomorphism from the $\mathbb{N} \times \mathbb{N}$ grid into $\mathfrak{A}_\mathcal{D}$ corresponds to a $\mathcal{D}$-tiling.
		\item \label{grid4}
		Every homomorphism from some model of $\Phi_\mathrm{grid}$ into $\mathfrak{A}_\mathcal{D}$ gives rise to a $\mathcal{D}$-tiling.
		\item \label{grid5}
		Every homomorphism from some finite model of $\Phi_\mathrm{grid}$ into $\mathfrak{A}_\mathcal{D}$ gives rise to an ultimately periodic $\mathcal{D}$-tiling.
		\item \label{grid6}
		Both membership in $\homcl{\getmodels{\Phi_\mathrm{grid}}}$ and $\homcl{\getfinmodels{\Phi_\mathrm{grid}}}$ are undecidable problems.   	
	\end{enumerate}
\end{lemma}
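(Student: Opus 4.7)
The plan is to prove the six items in order, reusing earlier ones to shorten later ones. For item (\ref{grid1}), I would construct $h\colon\mathfrak{A}_{\mathbb{N}\times\mathbb{N}}\to\mathfrak{B}$ by diagonal induction on $k+\ell$, starting with $h(0,0):=b$. The first two conjuncts of $\Phi_\mathrm{grid}$ supply $\sigH$- and $\sigV$-successors along the two boundary rays. For an interior point, once $h(k,\ell)$, $h(k{+}1,\ell)$ and $h(k,\ell{+}1)$ are fixed, I would pick any $\sigV$-successor $v$ of $h(k{+}1,\ell)$ and apply the commutativity axiom with $x=h(k,\ell)$, $y=h(k{+}1,\ell)$, $z=h(k,\ell{+}1)$ to infer $\sigH(h(k,\ell{+}1),v)$; thus $h(k{+}1,\ell{+}1):=v$ discharges both incoming edge obligations at the new corner.

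Items (\ref{grid2}) and (\ref{grid3}) are immediate: for (\ref{grid2}), the inclusion $\mathfrak{B}\hookrightarrow\mathfrak{B}\uplus\mathfrak{I}$ is a homomorphism into a non-model, since the fresh isolated element lacks an $\sigH$-successor; for (\ref{grid3}), the homomorphism conditions on $\sigH$ and $\sigV$ translate literally into horizontal and vertical compatibility, while the margin conditions hold vacuously because $L=B=D$ in our $\mathcal{D}$. Item (\ref{grid4}) then follows by composing the homomorphism from (\ref{grid1}) with the given $h\colon\mathfrak{B}\to\mathfrak{A}_\mathcal{D}$ and invoking (\ref{grid3}).

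Item (\ref{grid5}) requires exploiting finiteness. Rather than the nondeterministic construction of (\ref{grid1}), I would fix successor functions $f_\sigH,f_\sigV\colon B\to B$ (which exist since $\mathfrak{B}$ is finite and the first two conjuncts force some successors) and set $g(k,\ell):=f_\sigV^{\,\ell}(f_\sigH^{\,k}(b_0))$; the commutativity axiom, applied exactly as in (\ref{grid1}) with $z=f_\sigV(x)$ and $v=f_\sigV(y)$, guarantees that $g$ is a homomorphism. Pigeonhole on $k\mapsto f_\sigH^k(b_0)$ yields $k_\mathrm{init},k_\mathrm{period}$ with $f_\sigH^{k+k_\mathrm{period}}(b_0)=f_\sigH^k(b_0)$ for $k\geq k_\mathrm{init}$; for the vertical direction I take $\ell_\mathrm{period}$ as the LCM of the periods of the sequences $\ell\mapsto f_\sigV^\ell(b)$ for the finitely many $b$ in the $f_\sigH$-orbit of $b_0$, and $\ell_\mathrm{init}$ as the maximum of their initial segments. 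Then $g$, and hence $h\circ g$, is ultimately periodic with parameters $(k_\mathrm{init},k_\mathrm{period},\ell_\mathrm{init},\ell_\mathrm{period})$.

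Finally, item (\ref{grid6}) combines (\ref{grid4}) and (\ref{grid5}) with their converses. For the arbitrary-model case, $\mathfrak{A}_{\mathbb{N}\times\mathbb{N}}$ itself satisfies $\Phi_\mathrm{grid}$, so any $\mathcal{D}$-tiling $t$ is already a homomorphism $\mathfrak{A}_{\mathbb{N}\times\mathbb{N}}\to\mathfrak{A}_\mathcal{D}$, showing $\mathfrak{A}_\mathcal{D}\in\homcl{\getmodels{\Phi_\mathrm{grid}}}$; (\ref{grid4}) provides the other direction, so $\homcl{\getmodels{\Phi_\mathrm{grid}}}$-membership is equivalent to existence of a $\mathcal{D}$-tiling. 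For the finite-model case, I would fold $\mathbb{N}\times\mathbb{N}$ via $\pi(k,\ell)=(p(k),q(\ell))$, where $p$ is the identity below $k_\mathrm{init}$ and reduces $(k-k_\mathrm{init})$ modulo $k_\mathrm{period}$ above (analogously for $q$), giving a finite quotient $\mathfrak{B}_\pi$ with the induced $\sigH,\sigV$ relations. The main obstacle is verifying that the commutativity axiom survives the quotient: this reduces to the elementary check that $p(k)=p(k')$ implies $p(k{+}1)=p(k'{+}1)$ (and similarly for $q$), which follows by case-splitting on whether the indices lie below or above the initial threshold. Well-definedness of $t$ on the quotient then follows from ultimate periodicity, and $\bar t\colon\mathfrak{B}_\pi\to\mathfrak{A}_\mathcal{D}$ is a homomorphism; combined with (\ref{grid5}) this yields the equivalence between $\mathfrak{A}_\mathcal{D}\in\homcl{\getfinmodels{\Phi_\mathrm{grid}}}$ and existence of an ultimately periodic $\mathcal{D}$-tiling. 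Invoking \Cref{gridlemma} transfers undecidability of both tiling problems to the two homclosure-membership problems.
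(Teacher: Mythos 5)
Your proof follows essentially the same route as the paper's appendix proof: diagonal induction for item (1), a disjoint union with $\mathfrak{I}$ for item (2), literal translation of the homomorphism conditions for item (3), composition for item (4), successor-function iteration plus pigeonhole for item (5), and a reduction from tiling problems for item (6). The one structural difference is cosmetic: in item (6) you realize the finite model as the quotient of $\mathfrak{A}_{\mathbb{N}\times\mathbb{N}}$ under $\pi(k,\ell)=(p(k),q(\ell))$, while the paper truncates the grid to a finite rectangle and manually adds wrap-around $\sigH$- and $\sigV$-tuples; these two constructions produce essentially the same finite torus, and your checks (well-definedness of $\bar t$ and preservation of commutativity via $p(k)=p(k')\Rightarrow p(k{+}1)=p(k'{+}1)$) are correct.

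The one spot needing repair is your final appeal to \Cref{gridlemma}. That lemma asserts undecidability of the (ultimately periodic) tiling problem for \emph{deterministic margin-constrained} domino systems, but your reduction maps a system of the form $\mathcal{D}=(D,D,D,H,V)$ to $\mathfrak{A}_\mathcal{D}$, and such a $\mathcal{D}$ with $|D|>1$ is never deterministic (condition 1 requires $|B\cap L|\leq 1$, yet here $B\cap L=D$). Undecidability of a problem restricted to one input class does not yield undecidability on a different, incomparable class, so \Cref{gridlemma} cannot justify the hardness of the unconstrained instances your reduction actually produces. What you need instead are the classical undecidability results for the unconstrained tiling problem and its ultimately periodic variant, which is exactly what the paper's appendix cites via \cite{berger1966undecidability} and \cite{remarksonberger}. (The paper's main-body remark preceding \Cref{corr:undecTGD} uses the same loose wording; the appendix proof is the rigorous one.) With that citation corrected, your argument goes through.
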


\begin{proof} For (\ref{grid1}), note that an easy inductive argument under the usage of the extension properties defined in $\Phi_\mathrm{grid}$ proves the claim.
	
	For (\ref{grid2}), simply choose $\mathfrak{F} \uplus \mathfrak{I}$ as target structure.
	
	(\ref{grid3}) follows immediately from the definition of a tiling and from the definition of the structure $\mathfrak{A}_\mathcal{D}$ .
	
	For (\ref{grid4}), let $\mathfrak{B}$ be a model of  $\Phi_\mathrm{grid}$  that has a homomorphism to 
	$\mathfrak{A}_\mathcal{D}$. By (\ref{grid1}) we get that  $\mathfrak{A}_{\mathbb{N} \times \mathbb{N}} $ also has a homomorphism to $\mathfrak{B}$ and therefore also to $\mathfrak{A}_\mathcal{D}$. (\ref{grid3}) implies that this gives rise to a  $\mathcal{D}$-tiling.
	
	For (\ref{grid5}), assume that $\mathfrak{B}$ is a finite model of  $\Phi_\mathrm{grid}$ with a homomorphism $t$ into $\mathfrak{A}_\mathcal{D}$. Let $v\colon B\rightarrow B$ and $h\colon B\rightarrow B$ to functions such that $(x,h(x))\in\sigH{}^\mathfrak{B}$ and $(x,v(x))\in\sigV{}^\mathfrak{B}$ hold. It is clear that such functions exist since $\mathfrak{B}$ satisfies $\Phi_\mathrm{grid}$.
	We define a map $f$ from $\mathbb{N} \times \mathbb{N} $ to $B$ as follows.   For an arbitrary element $o\in B$ we set $f(0,0)=o$. In the next step we define inductively the map on the  elements $W_0=\{ (x,0) \mid x\in \mathbb{N} \}$. This is done by $f(i+1,0)=h(f(i,0))$. Since $h$ is a  function this is well defined.
	Now we can assume that the map is defined on the elements $W_i=\{ (x,i) \mid x\in \mathbb{N} \}$ and we extend it to $W_{i+1}=\{ (x,i+1) \mid x\in \mathbb{N} \}$ by $f(j,i+1)= v(f(j,i+1))$.
	Note that  by the use of the function $h$ it is clear that after a finite initial sequence of length at most $|B|$, the mapping $f$ is periodic on the first row $W_0$  with respect to the assigned elements from $B$. Since the definition of $f$ at the $n$th column of the grid only depends on the value $f(n,0)$ the columns
	are in the same way periodic as $W_0$ is. 
	Furthermore the graph of the map $v$ contains finitely many cycles of finite length. Let $q$ be the greatest common multiple of these lengths. It is clear that for every element $x\in B$ that lies in a cycle applying $q$ times the function $v$ on $x$ results in the element $x$ itself. This shows that $f$ also has a repetition on the rows in the grid in its definition after a possible initial sequence of at most $|B|$ elements.
	
	What we observed about the function $f$ implies immediately that if $t\circ f$ defines a tiling of the grid by tiles of $\mathcal{D}$ then this tiling is ultimately periodic.
	
	In order to prove that $t\circ f$  defines a tiling it is enough according to (\ref{grid3}) to show that $f$ is a homomorphism from $\mathfrak{A}_{\mathbb{N} \times \mathbb{N}}$ to $ \mathfrak{B}$.

	Consider the substructure  $\mathfrak{W}_0$ of  $\mathfrak{A}_{\mathbb{N} \times \mathbb{N}}$  induced by $W_0$.

	The map from $\mathfrak{W}_0$ to $\mathfrak{B}$  induced by $f$ is a homomorphism, since 
	$(f(i,0),f(i+1,0) )= (f(i,0), h(f(i,0))) \in  \sigH{}^{\mathfrak{B}} $ holds for all
	$((i,0),(i+1,0)) \in \sigH{}^{\mathfrak{W}_0}$.
	
	Now we prove by induction that $f$ is a homomorphism on all structure $\mathfrak{W}_{i+1}$ which are induced by the sets $W_{i+1}$. It follows immediately by the choice of $v$ that all tuples of the form $ ((j,i),(j,i+1)) \in \sigV{}^{\mathfrak{W}_{i+1}}$ for arbitrary $j\in\mathbb{N}$ are preserved by $f$. 
	It remains to show the preservation of tuples of the form $((j,i+1),(j+1,i+1)) \in \sigH{}^{\mathfrak{W}_{i+1}}$. Consider the elements $f(j,i+1),f(j+1,i+1), f(j,i)$ and $f(j+1,j)$ 
	in $B$. By the induction hypothesis and the observation from before we get that $\sigH(f(j,i), f(j+1,i)) $,  $\sigV(f(j,i), f(j,i+1)   )$ and $\sigV(f(j+1,i), f(j+1,i+1))$ hold in $\mathfrak{B}$. Since $\mathfrak{B}$ satisfies the sentence $\Phi_\mathrm{grid}$ it follows that also $\sigH(f(j,i+1),f(j+1,i+1))$ holds in $\mathfrak{B}$. This implies that $f$ induces a homomorphism from $\mathfrak{W}_{i+1}$ to $\mathfrak{B}$ and therefore concludes the proof that $f$ is a homomorphism.

	For (\ref{grid6}), we prove the statements by reductions from the  tiling problem and the ultimately periodic tiling problem.
    The undecidability of  these problems was shown in \cite{berger1966undecidability} and in	\cite{remarksonberger}, respectively.

	The reductions map an instance of the tiling problem, that is a  set of tiles $\mathcal{D}$ to the structure $\mathfrak{A}_\mathcal{D}$.
	
	To see that this is a correct reduction to the membership problem of $\homcl{\getmodels{\Phi_\mathrm{grid}}}$ note that $\mathcal{D}$ tiles by (\ref{grid3}) the grid if and only if   $\mathfrak{A}_{\mathbb{N} \times \mathbb{N}} $ has a homomorphism to $\mathfrak{A}_\mathcal{D}$. By (\ref{grid1})  and since  $\mathfrak{A}_{\mathbb{N} \times \mathbb{N}} $ is a model of  $\Phi_\mathrm{grid}$ this holds if and only if $\mathfrak{A}_\mathcal{D} \in \homcl{\getmodels{\Phi_\mathrm{grid}}}$. This proves the undecidability of the membership problem for $ \homcl{\getmodels{\Phi_\mathrm{grid}}}$.
	
	For the undecidability of the membership problem for $\homcl{\getfinmodels{\Phi_\mathrm{grid}}}$ we use the same mapping from  sets of tiles to structures.
	Assume that a set of tiles $\mathcal{D}$ admits an ultimately periodic tiling  $t$ of the grid. Let
	$k_\mathrm{init},k_\mathrm{period},\ell_\mathrm{init},\ell_\mathrm{period}$ from the definition of ultimately periodic tiling. Let $\mathfrak{D}'$ be the finite substructure of $\mathfrak{A}_{\mathbb{N} \times \mathbb{N}} $ that is induced by elements of the from $(x,y)$ where $x\leq  k_\mathrm{init}+k_\mathrm{period}$ and $y\leq  l_\mathrm{init}+l_\mathrm{period}$ hold.
	We add new tuples to the relations $\sigH{}^{\mathfrak{D}'}$ and $\sigV{}^{\mathfrak{D}'}$  from $\mathfrak{D}'$ and define in this way a new structure $\mathfrak{D}$.
	For all elements of the form $(k_\mathrm{init}+k_\mathrm{period},y)  $ we add the tuple 
	$((k_\mathrm{init}+k_\mathrm{period},y)  ,(k_\mathrm{init}+1,y))$ to the relation $\sigH{}^{\mathfrak{D}'}$. 
	Analogous we add for all elements of the form $(x,l_\mathrm{init}+l_\mathrm{period})  $ the tuple
	$((x,l_\mathrm{init}+l_\mathrm{period})  ,(x,l_\mathrm{init}+1))$ to the relation $\sigV{}^{\mathfrak{D}'}$. 
	It is easy to see that the so defined structure $\mathfrak{D}$ satisfies $\Phi_\mathrm{grid}$. Since the tiling $t$ was ultimately periodic the restriction of the tiling to the elements of $\mathfrak{D}$ defines a homomorphism from $\mathfrak{D}$ to $\mathfrak{A}_\mathcal{D} $. Since $\mathfrak{D}$ is clearly finite this implies that the structure $\mathfrak{A}_\mathcal{D}$ is in $\homcl{\getfinmodels{\Phi_\mathrm{grid}}}$ .
	
	For the other direction of the reduction proof assume that $\mathfrak{A}_\mathcal{D}$ is in $\homcl{\getfinmodels{\Phi_\mathrm{grid}}}$ . This implies by (\ref{grid5}) immediately that $\mathcal{D}$ admits an ultimately periodic tiling of the grid. \end{proof}

\undecTGD*

\begin{proof}
Immediate from Item~\ref{grid6} of \Cref{gridformula}. 
\end{proof}

\subsection{Proofs of Homclosedness Section}

\SattToHomClosed*

\begin{proof}
	We $\textsc{C}'$-reduce (finite) unsatisfiability in $\Logic'$ to $\mathrm{HomClosed}_\mathrm{(fin)}$ for $\Logic$ as follows: given $\Phi$, compute $f(\Phi)$. 

If the set of (finite) models of $f(\Phi)$ is closed under (finite-target) homomorphisms then 
$\getffinmodels{f(\Phi)}|_{\tau \cup \{\sigU{}\}}$ is closed under (finite-target) homomorphisms 
and therefore an empty set. This implies that $\Phi$ is (finitely) unsatisfiable.
On the other hand if  $\Phi$ is (finitely) unsatisfiable then $\getffinmodels{f(\Phi)}|_{\tau \cup \{\sigU{}\}}$  as well as $\getffinmodels{f(\Phi)}$ are closed under (finite-target) homomorphisms since they are empty.\end{proof}

\SattToHomClosedCompanion*

\begin{proof}
%
%
%
	We $\textsc{C}'$-reduce (finite) unsatisfiability in $\Logic'$ to $\mathrm{HomClosed}_\mathrm{(fin)}$ for $\Logic$ as follows: given $\Phi$, compute $f(\Phi)$. 

If $\Phi$ is (finitely) unsatisfiable, then $f(\Phi)$ is (finitely) valid and therefore its (finite) models are closed under (finite target) homomorphisms.

If $\Phi$ is (finitely) satisfiable witnessed by some (finite) model $\mathfrak{A}$, this means that $(\mathfrak{A},A)$ is not a model of $f(\Phi)$. Let now $\mathfrak{B}= \mathfrak{A} \cdot \{\widecheck{\sigU{}}\}$, which clearly is a (finite) model of $f(\Phi)$. Moreover, the identity function is a homomorphism from $\mathfrak{B}$ to $(\mathfrak{A},A)$, which shows that $f(\Phi)$ is not (finitely) homomorphism closed. 

It follows that checking (finite) homomorphism-closedness of $f(\Phi)$ is equivalent to checking (finite) unsatisfiability of $\Phi$.
\end{proof}

\HomSeparation*

\begin{proof}
	Given $\mathfrak{A}$, $\mathfrak{B}$, and $h$, we obtain $\mathfrak{C}$, $h_1$, and $h_2$ as follows:
	\begin{itemize}[itemindent=0ex, leftmargin=3ex, itemsep=-0.5ex]  
		\item $C = A \uplus B$
		\item $h_1\colon A \injhom C$ is set to $id_A$
		\item $h_2\colon C \sshom B$ is obtained via $h_2 = h \uplus id_B$
		
		\item $\sigc{}^\mathfrak{C} = \sigc{}^\mathfrak{A}$ for every constant $\sigc{} \in \tau$
		\item $\sigP{}^\mathfrak{C} = \{(c_1,\ldots,c_k) \mid (h_2(c_1),\ldots,h_2(c_k)) \in \sigP{}^\mathfrak{B}\}$ for every $k$-ary predicate $\sigP\in\tau$.\qedhere
	\end{itemize}
\end{proof}

\InjFormula*

\begin{proof}
	Let $\Phi$ be some $\tau$-sentence and let $\tau'$ consist of copies $\sig{P}'$ for each predicate $\sig{P}\in\tau$.

	For the first direction, assume that $\Phi^\spoil_\cl{\injhom}$ is (finitely) satisfiable. Then there is a (finite) model $\mathfrak{D} \vDash \Phi^\spoil_\cl{\injhom}$ over the signature $\tau\uplus\tau'\uplus \{ \sig{U} \}$. We now define two $\tau$-structures, $\mathfrak{B}$ as $\mathfrak{D}|_{\tau}$ and $\mathfrak{A}$ as the following one:
	\begin{itemize}
		\item The domain of $\mathfrak{A}$ is $\sig{U}^{\mathfrak{D}}$.
	 	\item For each constant $\sigc{}\in\tau$ we put $\sig{c}^{\mathfrak{A}} = \sig{c}^{\mathfrak{D}}$. Note that, by definition of the relativization of a formula, $\sig{c}^{\mathfrak{D}}\in\sig{U}^{\mathfrak{D}}$ for all constant symbols $\sig{c}$. So this is well defined.
	 	\item For each predicate $\sig{P}\in\tau$ let $\sig{P}^{\mathfrak{A}} = (\sig{U}^{\mathfrak{D}})^{\mathrm{ar}(\sig{P})} \cap \sig{P}^{\mathfrak{D}}\cap\sig{P}'^{\mathfrak{D}}$. Note that the first conjunct $(\sig{U}^{\mathfrak{D}})^{\mathrm{ar}(\sig{P})}$ ensures well definedness.
	\end{itemize}
	Define $s\colon\mathfrak{A}\to\mathfrak{B}, a\mapsto a$. $s$ is an injective homomorphism. As the injectivity is plain to see, we just verify that $s$ is a homomorphism. For each constant $\sig{c}\in\tau$ we have $s(\sig{c}^{\mathfrak{A}}) = \sig{c}^{\mathfrak{D}}$. Since $\sig{c}^{\mathfrak{D}} = \sig{c}^{\mathfrak{B}}$ for $\sig{c}\in\tau$ by means of $\mathfrak{B} = \mathfrak{D}|_{\tau}$ the map $s$ preserves constants. Now let $\bold{a}\in\sig{P}^{\mathfrak{A}}$ for some predicate $\sig{P}\in\tau$, which additionally means that $\bold{a}$ is a tuple from $\mathfrak{A}$ by the $(\sig{U}^{\mathfrak{D}})^{\mathrm{ar}(\sig{P})}$ part of the definition. By construction $\bold{a}\in\sig{P}^{\mathfrak{D}}$. Since $\sig{P} \in\tau$ it holds that $\sig{P}^{\mathfrak{D}}  =\sig{P}^{\mathfrak{B}}$. Thus we obtain by componentwise application $s(\bold{a}) \in\sig{P}^{\mathfrak{B}}$. Hence we conclude that $s$ is a homomorphism.
	\newline
	Now we show that $\mathfrak{A}\stackrel{s}{\injhom}\mathfrak{B}$ constitutes a (finite) injective spoiler, of which $\mathfrak{D}$ is a witness. For showing the spoiler property (as we already established that $s$ is an injective homomorphism this is the only thing left to prove) we note that by construction $\mathfrak{A}\vDash\Phi$ as $\mathfrak{D}\vDash(\Phi^-)^{\mathrm{rel}(\sigU{})}$ and the universe of $\mathfrak{A}$ is $\sig{U}^{\mathfrak{D}}$. Furthermore, as $\mathfrak{B}=\mathfrak{D}|_{\tau}$ we obtain $\mathfrak{B}\vDash\lnot\Phi$ since $\mathfrak{D}\vDash\lnot\Phi$ and $\Phi$ is a $\tau$-sentence.
	\newline
	Finally, we constructed $\mathfrak{A}$ and $\mathfrak{B}$ such that $\mathfrak{D}|_{\tau}=\mathfrak{B}$, $\sig{U}^{\mathfrak{D}}=s(A)$, and $\sig{P}^{\mathfrak{D}}\cap\sig{P}'^{\mathfrak{D}}=\sig{P}^{\mathfrak{A}}$ for every predicate $\sig{P}\in\tau$. By definition of $s$ this establishes that $\mathfrak{D}$ is a witness of the (finite) injective spoiler $\mathfrak{A}\stackrel{s}{\injhom}\mathfrak{B}$.
	
	For the other direction, let $\mathfrak{A}\stackrel{s}{\injhom}\mathfrak{B}$ be a (finite) injective spoiler for the $\tau$-sentence $\Phi$. We define $\mathfrak{D}$ as follows:
	\begin{itemize}
		\item The domain of $\mathfrak{D}$ is $B$.
		\item For each constant $\sig{c}\in\tau$ we put $\sig{c}^{\mathfrak{D}} = \sig{c}^{\mathfrak{B}}$.
		\item For each predicate $\sig{P}\in\tau$ we put $\sig{P}^{\mathfrak{D}} = \sig{P}^{\mathfrak{B}}$. 
		\item For each predicate $\sig{P}\in\tau$ we put \newline $\sig{P}'^{\mathfrak{D}}=\{(s(a_1),\ldots,s(a_k)) \mid   (a_1,\ldots,a_k) \in \sigP{}^\mathfrak{A}\}$.
		\item $\sig{U}^{\mathfrak{D}}=s(A)$.
	\end{itemize}
	It is obvious that $\mathfrak{D}|_{\tau}=\mathfrak{B}$, $\sig{U}^{\mathfrak{D}}=s(A)$ and $\sigP{}^\mathfrak{D} {\cap}\,{\sigP{}'}^\mathfrak{D}\! = \{(s(a_1),\ldots,s(a_k)) \mid   (a_1,\ldots,a_k) \in \sigP{}^\mathfrak{A}\}$ for every predicate $\sig{P}$ from $\tau$. Hence $\mathfrak{D}$ is a (finite) witness of the injective spoiler $\mathfrak{A}\stackrel{s}{\injhom}\mathfrak{B}$. By $\mathfrak{A}\stackrel{s}{\injhom}\mathfrak{B}$ being a spoiler we obtain that $\mathfrak{D}\vDash \lnot\Phi$ as $\Phi$ is a $\tau$-sentence and $\mathfrak{D}|_{\tau}=\mathfrak{B}$. Furthermore, as $\mathfrak{A}\vDash \Phi$ and $s$ being injective, $\mathfrak{D}\vDash (\Phi^{-})^{\mathrm{rel}(\sigU{})}$ also holds. Hence we obtain $\mathfrak{D}\vDash \Phi^\spoil_\cl{\injhom}$.
\end{proof}

\squeeze*

\begin{proof}
	Let $\lambda$ be defined by $a \mapsto \min(n,|h^{-1}(a)|)$ where $h^{-1}(a)$, as usual, denotes the set $\{b \in B \mid h(b)= a\}$. 
	Now define $\hslash \colon B \rightarrow A^\lambda $ such that the elements of $h^{-1}(a)$ are mapped to $\{(a,1), \ldots, (a,\lambda(a))\}$ in a surjective way, which is possible by construction. 
	By definition of $\mathfrak{A}^\lambda$, the map $\hslash$ is  a strong surjective homomorphism and clearly satisfies $h = \pi \circ \hslash$.
	The equivalence of $\mathfrak{B}$ and $\mathfrak{A}^\lambda$ regarding $\FOe$ sentences of quantifier rank $n$ can be shown as usual via an $n$-step Ehrenfeucht-Fra\"{i}ssé game, where, upon spoiler picking $e$, duplicator picks a fresh element from 
{$h^{-1}(\pi(e))$ or $\pi^{-1}(h(e))$}, respectively.    
\end{proof}

\mainsecond*

\begin{proof}
	For all cases, membership can be shown for the finite cases directly via \Cref{FinHomClFormula} by a polynomial reduction to a fragment with the appropriate unsatisfiability problem. 
	The general cases can be reduced to the finite ones via \Cref{thm:FSP}. 
	
	Hardness follows for \GNFO, \TGF, $\FOte{}$, and \AsFO{} as well as \AAEE{} via \Cref{prop:SattToHomClosed} from the corresponding known unsatisfiability complexities.
	For $\EsFO$ and $\EEAA$, it follows via \Cref{prop:SattToHomClosedCompanion} from the unsatisfiability complexities of  \AsFO{} and \AAEE{}, respectively.
\end{proof}

\TGDNPtheorem*

\begin{proof}
	For \textsc{NP}-hardness, we reduce the 3-colorability problem to $\mathrm{HomClosed}_\mathrm{(fin)}$ as follows: given a finite graph expressed as $\{E\}$-structure $\mathfrak{G}$, three-colorability of $\mathfrak{G}$ is equivalent to the existence of a homomorphism from $\mathfrak{G}$ to $\mathfrak{K_3}$ (the complete but loop-free 3-element graph). However, this is the case exactly if the \TGD{} sentence $\Phi \ = \ \cq{\mathfrak{K_3}} \Rightarrow \cq{\mathfrak{G}}$ is a tautology. This, in turn coincides with $\getmodels{\Phi}$ being closed under homomorphisms (as, for the negative case, we find that $\mathfrak{K_3} \notin \getmodels{\Phi}$ receives a homomorphism from $\mathfrak{I} \in \getmodels{\Phi}$).  
	
	In order to show \textsc{NP} membership, we provide a corresponding nondeterministic polytime algorithm that answers the question. From now on, we consider a \TGD{} sentence $\Phi$ as the finite set of conjuncts, each representing a single TGD (simply called ``rule'' in the following). We first apply a \textsc{PTime} preprocessing as follows: One by one, go through all the rules $\forall \bold{x}.\varphi(\bold{x}) {\impl} \exists\bold{y}.\psi(\bold{x},\bold{y})$ of $\Phi$ and do the following: 
	Let $\sim$ be the smallest equivalence relation over all the atoms from $\psi$ such that $\alpha \sim \alpha'$ whenever $\alpha$ and $\alpha'$ share an $\exists$-quantified variable from $\bold{y}$. For every $\sim$-equivalence class, form the conjunction over its atoms and $\exists$-quantify over all variables from $\boldsymbol{y}$ occurring therein. Let $\exists\bold{y_1}.\psi_1,\ldots, \exists\bold{y_n}.\psi_n$ be the formulae thus obtained. Now, replace $\forall \bold{x}.\varphi(\bold{x}) {\impl} \exists\bold{y}.\psi(\bold{x},\bold{y})$
	by the rules $\forall \bold{x}.\varphi(\bold{x}) {\impl} \exists\bold{y_1}.\psi_1(\bold{x},\bold{y_1}), \ldots, \forall \bold{x}.\varphi(\bold{x}) {\impl} \exists\bold{y_n}.\psi_n(\bold{x},\bold{y_n})$. Finally, the new rule set is deterministically partitioned into two sets: those where all variables present in $\psi$'s atoms are $\exists$-quantified will be called \emph{disconnected}, all other rules will be called \emph{connected}. Obviously all this preprocessing can be done in deterministic polynomial time and produces a partitioned rule set of polynomial size in the original one. Moreover, the obtained rule set is semantically equivalent to the original one.
	
	We proceed to describe a polysize certificate witnessing homomorphism-closedness of a preprocessed rule set. 
	Our certificate contains the following:
	\begin{itemize}[itemindent=0ex, leftmargin=3ex, itemsep=-0.5ex, topsep=0ex]
	\item a \emph{relevance partitioning} of the rule set into \emph{self-redundant} and \emph{self-irredundant} rules, where all connected rules (but possibly also some disconnected ones) are categorized as self-redundant. It is straight forward to check validity of such a partition in \textsc{PTime}.
    \item a \emph{redundancy witness} for each of the self-redundant rules: a substitution $h\colon \bold{y} \rightarrow \bold{x}$ sending every atom inside $\psi$ to an atom inside $\varphi$. Obviously redundancy witnesses can be verified in \textsc{PTime}.
	\item a \emph{universal model derivation}: a forward-chaining derivation sequence starting from $\mathfrak{I}$ using only self-irredundant rules and applying each rule at most once,  including the homomorphisms used for the rule applications. Let $\mathfrak{M}$ denote the universal model constructed in this derivation.  Again such a guess is of polynomial size and allows for \textsc{PTime} verification.
 \item a \emph{discharge witness} for each self-irredundant rule: a mapping $h\colon \bold{y} \rightarrow M$ sending every atom $\sig{R}(y_1,\ldots,y_n)$ inside $\psi$ to a ``semantic atom'' inside $\mathfrak{M}$, i.e., $(h(y_1),\ldots, h(y_n))  \in \sig{R}^\mathfrak{M}$.
 Discharge witnesses can be verified in \textsc{PTime}. 
	\end{itemize}
	
	Summing up, the algorithm's guess consists of one relevance partitioning, one redundancy witness for every self-redundant rule, one universal model derivation, and one discharge witness per self-irredundant rule. Verification for each can be done in \textsc{PTime}. 
	
	We now argue that the algorithm is correct (i.e., that the existence of a certificate as above coincides with the TGD set's homomorphism-closedness):
	
	\begin{enumerate}[label={\roman*}), itemindent=0ex, leftmargin=3ex, itemsep=-0.5ex]
		\item First, consider the case where one of the connected rules is not self-redundant (i.e., no appropriate witness can be provided). For such a rule $\forall \bold{x}.\varphi(\bold{x}) {\impl} \exists\bold{y}.\psi(\bold{x},\bold{y})$, let $\mathfrak{A}_\varphi$ be such that $\cq{\mathfrak{A}_\varphi} = \exists \bold{x}.\varphi(\bold{x})$. 
		Then, $\mathfrak{F}$ is a model, but $\mathfrak{F} \uplus \mathfrak{A}_\varphi$ is not -- due to the special form of $\psi$ that we achieved in the preprocessing. This is despite the existence of a homomorphism from $\mathfrak{F}$. Thus the rule set is not homclosed, as required. In the following, we consider the case where all connected rules are indeed (self-)redundant.
		
		\item Obviously, self-redundant rules are tautologies and hence have no influence on the set of models, so the question if the rule set is homomorphism-closed reduces to the question if its subset of self-irredundant rules is. By now we know that, in the considered rule set, every such rule $\forall \bold{x}.\varphi(\bold{x}) {\impl} \exists\bold{y}.\psi(\bold{x},\bold{y})$
		must be disconnected, i.e., be such that no variables from $\bold{x}$ occur in $\psi$, therefore any such rule can be written as $(\exists \bold{x}.\varphi(\bold{x})) \impl (\exists\bold{y}.\psi(\bold{y}))$. Henceforth, we assume our considered rule set contains the self-irredundant in this syntactic form. From the structure of the rules follows that there exists a finite universal model (i.e. one which has a homomorphism into every other model) $\mathfrak{M}$ which can be obtained in a forward-chaining derivation where every rule is applied at most once. This information is provided by the certificate's universal model derivation.

		\item We proceed to consider two cases.
		
		\begin{enumerate}
			\item Assume the non-existence of a discharge witness of $(\exists \bold{x}.\varphi(\bold{x})) \impl (\exists\bold{y}.\psi(\bold{y}))$. 
			This means that among the remaining rules, there is one $(\exists \bold{x}.\varphi(\bold{x})) \impl (\exists\bold{y}.\psi(\bold{y}))$ for which $\mathfrak{M} \not\models \exists\bold{y}.\psi(\bold{y})$. Yet in that case $\mathfrak{M} \uplus \mathfrak{A}_\varphi$ is not a model (mark that the rule is not self-redundant by assumption), despite the existence of the trivial homomorphism from $\mathfrak{M}$ -- consequently the rule set is not homclosed. 
			\item The existence of all discharge witnesses proves that each remaining rule's head is satisfied in $\mathfrak{M}$. Then, the same holds for every structure $\mathfrak{M}'$ into which a homomorphism from $\mathfrak{M}$ exists, which trivially entails that all remaining rules are satisfied in any such $\mathfrak{M}'$ and hence the rule set is homclosed. \qedhere
		\end{enumerate}
	\end{enumerate} 
\end{proof}

\subsection{Proofs of Characterizing Homclosures Section}

\notFO*

\begin{proof}
	We note that $\mathfrak{A} \in \homcl{\getmodels{\Phi_{\infty}}}$ exactly if $\mathfrak{A}$ contains an infinite $\sigP{}$-path.
	Hence, if $\mathfrak{A}$ is finite, $\mathfrak{A} \in \homcl{\getmodels{\Phi_{\infty}}}$ exactly if $\mathfrak{A}$ contains a directed $\sigP{}$-cycle.
	Checking for the existence of directed cycles is \textsc{NLogSpace}-complete \cite{DBLP:journals/jcss/Jones75} whereas
	$\FOe$ sentences can only describe problems in \textsc{AC}$^0$ \cite{DBLP:books/daglib/0095988}, which is strictly weaker than \textsc{NLogSpace}~\cite{FurstSS84}.
\end{proof}

\ExCharHomA*

\begin{proof}
	We show the claim by a reduction from (finite) satisfiability.
	Let $\Psi$ be an arbitrary $\FOe$ formula.
	
	Consider $\Psi'$ defined as   
	$\Psi^{\mathrm{rel}(\sigU{})} \wedge \Phi_{\text{grid}}$ chosen in a way that the signature $\tau_1 \uplus \{\sigU{}\}$ of $\Psi^{\mathrm{rel}(\sigU{})}$ and the signature $\tau_2$ of $\Phi_{\text{grid}}$ are disjoint.
	If $\Psi$ is (finitely) unsatisfiable, then so is $\Psi'$. In that case $\homclffin{\getffinmodels{\Psi'}}$ is empty and can be characterized by $\Phi_\bold{false}$.

	If $\Psi$ is (finitely) satisfiable, then any sentence $\tilde\Psi$ (finitely) characterizing $\homclffin{\getffinmodels{\Psi'}}$ can be turned into a procedure for deciding if for a domino system $\mathcal{D}$ a(n ultimately periodic) $\mathcal{D}$-tiling exists: this is the case exactly if the structure $\mathfrak{A}_\mathcal{D} \cdot\widehat{\tau_1} \cdot \widehat{\{\sigU{}\}}$ is a model of $\tilde\Psi$. 
	
	It follows that the (finite) homomorphism closure of $\getffinmodels{\Psi'}$ is characterizable (in any non-degenerate logic $\Logic$) exactly if $\Psi$ is unsatisfiable. Checking the latter is undecidable.  
\end{proof}

\ExCharHomB*

\begin{proof}
	We show the claim by a reduction from the existence of a(n ultimately periodic) tilling by the deterministic domino set $\mathcal{D}$.
	
Let the TGD sentence $\Phi_{\mathcal{D}\text{-}\mathrm{tiling}}$ with signature $\tau$ be defined as in \Cref{def:TGDdettiling}. Let
\begin{equation}
\Phi'_{\mathcal{D}\text{-}\mathrm{tiling}} = \Phi_{\mathcal{D}\text{-}\mathrm{tiling}} \wedge \Big( \big(\exists x.  \sigT{}_\emptyset(x) \big) \impl \cq{\mathfrak{F}} \Big).
\end{equation}	
	If no (ultimately periodic) $\mathcal{D}$-tiling exists, then $\cq{\mathfrak{F}}$ characterizes $\homclffin{\getffinmodels{\Phi'_{\mathcal{D}\text{-}\mathrm{tiling}}}}$.

	If, on the other hand, some (ultimately periodic) $\mathcal{D}$-tiling exists, then any sentence $\Psi$ (finitely) characterizing $\homclffin{\getffinmodels{\Phi'_{\mathcal{D}\text{-}\mathrm{tiling}}}}$ can be turned into a procedure for deciding if for some other domino system $\mathcal{D}'$ a(n ultimately periodic) $\mathcal{D}'$-tiling exists: this is the case exactly if the structure $\mathfrak{A}_{\mathcal{D}'} \cdot\widehat{\sigma}$ with $\sigma=\tau \setminus \{\sigH{},\sigV{}\}$  is a model of $\Psi$.
This holds since all $\{\sig{H},\sig{V}\}$-reducts of models of  $\Phi'_{\mathcal{D}\text{-}\mathrm{tiling}}$ satisfy  $\Phi_{\text{grid}}$ and furthermore if an ultimately periodic $\mathcal{D}'$-tiling exists we can find a  model of  $\Phi'_{\mathcal{D}\text{-}\mathrm{tiling}}$ such that the period of the grid induced by its $\{\sig{H},\sig{V}\}$-reduct is a multiple of the period of the $\mathcal{D}'$-tiling. This is possible since a domino set that tiles the grid with an ultimately periodic tiling with period length $k$  also tiles it with an ultimately periodic tiling with period length $k\cdot s$  for all $s\in \mathbb{N}$.
	
	It follows that the (finite) homomorphism closure of $\getffinmodels{\Phi'_{\mathcal{D}\text{-}\mathrm{tiling}}}$ is characterizable (in any decent logic $\Logic$ in the above sense) exactly if some (ultimately periodic) $\mathcal{D}$-tiling exists. Checking the latter is undecidable.  
\end{proof}

\EAFOinEPO*

\begin{proof}
	Assume $\Phi$ is over the signature $\tau$.
	Let $\Phi^\mathrm{sk}$ be the $\AsFO$ sentence obtained from $\Phi$ via skolemization of all the existentially quantified variables into constants. Clearly then  $\getffinmodels{\Phi} = \getffinmodels{\Phi^\mathrm{sk}}|_\tau$. Since $\Phi^\mathrm{sk}$ is a $\AsFO$ sentence, any model $\mathfrak{A}$ gives rise to a ``small'' submodel by taking the substructure $\mathfrak{A}'$ induced by the set $A' \subseteq A$ containing all elements denoted by constants (if the signature has no constants, pick one arbitrary element). Moreover, the identity on $A'$ is a homomorphism from $\mathfrak{A}'$ to $\mathfrak{A}$. Let $\mathcal{A}$ be the set of small submodels thus obtained, for which therefore holds $\homcl{\mathcal{A}}=\homcl{\getmodels{\Phi^\mathrm{sk}}}$ and hence also $\homcl{\mathcal{A}|_\tau}=\homcl{\getmodels{\Phi}}$. Due to the bound on the domain size (number of constants), $\mathcal{A}$ (and thus $\mathcal{A}|_\tau$) can contain only finitely many different structures (up to isomorphism) and is effectively computable via enumeration and model checking. Hence $\homcl{\mathcal{A}|_\tau}$ can be characterized by the  $\EsFO$ sentence $\Psi = \bigvee_{\mathfrak{A} \in \mathcal{A}|_\tau} \cq{\mathfrak{A}}$.
\end{proof}

\allhomcaptures*

We split the proof into separate statements and prove them individually:
\begin{itemize}
\item \Cref{prop:charFOte} deals with homcapturing of $\FOte$
\item \Cref{prop:charGFO} deals with homcapturing of $\GFO$
\item \Cref{prop:charGNFO} deals with homcapturing of $\GNFO$
\item \Cref{prop:charTGF} deals with homcapturing of $\TGF$
\item \Cref{prop:charEAAE} deals with homcapturing of $\EAAE$
\end{itemize}

Before, however, we will show a useful result, which allows us to -- without loss of generality -- restrict our attention to (projective) normal forms of the considered logics (as they are typically used in results related to satisfiability).

\begin{lemma}[projective normal form sufficiency]\label{lem:normalformOK}
For some $\tau \subseteq \tau' \subseteq \tau''$, let 
\begin{itemize}[itemindent=0ex, leftmargin=3ex, itemsep=-0.7ex, topsep=-0.5ex] 
\item 	
$\Phi$ be a $\tau$-sentence, 
\item 
$\Phi'$ be a $\tau'$-sentence projectively characterizing $\Phi$, i.e., $\getmodels{\Phi'}|_\tau = \getmodels{\Phi}$,
\item 
$\Psi$ be a $\tau''$-sentence projectively characterizing the homclosure of $\Phi'$, i.e., $\homcl{\getmodels{\Phi'}} = \getmodels{\Psi}|_{\tau'}$.   
\end{itemize} 
Then $\Psi$ also projectively characterizes the homclosure of $\Phi$, i.e., $\homcl{\getmodels{\Phi}} = \getmodels{\Psi}|_{\tau}$ holds.
\end{lemma}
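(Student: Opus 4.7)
The plan is to establish the two inclusions $\homcl{\getmodels{\Phi}} \subseteq \getmodels{\Psi}|_{\tau}$ and $\getmodels{\Psi}|_{\tau} \subseteq \homcl{\getmodels{\Phi}}$ separately, using only the two given projective characterizations and the definition of homomorphism.

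For the easier inclusion $\getmodels{\Psi}|_{\tau} \subseteq \homcl{\getmodels{\Phi}}$, I would take an arbitrary $\mathfrak{B} \in \getmodels{\Psi}|_\tau$, fix a $\tau''$-expansion $\mathfrak{B}''$ of $\mathfrak{B}$ that satisfies $\Psi$ (which exists by assumption), and then observe that $\mathfrak{B}''|_{\tau'} \in \getmodels{\Psi}|_{\tau'} = \homcl{\getmodels{\Phi'}}$. Thus there is some $\mathfrak{A}' \in \getmodels{\Phi'}$ and a homomorphism $h \colon \mathfrak{A}' \arbhom \mathfrak{B}''|_{\tau'}$. Restricting to $\tau$, the same function $h$ is a homomorphism $\mathfrak{A}'|_\tau \arbhom \mathfrak{B}''|_\tau = \mathfrak{B}$, and $\mathfrak{A}'|_\tau \in \getmodels{\Phi'}|_\tau = \getmodels{\Phi}$, which shows $\mathfrak{B} \in \homcl{\getmodels{\Phi}}$.

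For the other inclusion, the idea is to lift a given homomorphism $h \colon \mathfrak{A} \arbhom \mathfrak{B}$ witnessing $\mathfrak{B} \in \homcl{\getmodels{\Phi}}$ (with $\mathfrak{A} \in \getmodels{\Phi}$) all the way up to the richer signatures. I would first invoke $\getmodels{\Phi'}|_\tau = \getmodels{\Phi}$ to fix a $\tau'$-structure $\mathfrak{A}' \in \getmodels{\Phi'}$ with $\mathfrak{A}'|_\tau = \mathfrak{A}$, and then construct a $\tau'$-expansion $\mathfrak{B}^+$ of $\mathfrak{B}$ by setting, for every predicate $\sigP \in \tau' \setminus \tau$ of arity $k$, the interpretation
\begin{equation}
\sigP^{\mathfrak{B}^+} := \{(h(a_1),\ldots,h(a_k)) \mid (a_1,\ldots,a_k) \in \sigP^{\mathfrak{A}'}\},
\end{equation}
and, for every constant $\sigc \in \tau' \setminus \tau$, $\sigc^{\mathfrak{B}^+} := h(\sigc^{\mathfrak{A}'})$. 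By construction, $h$ remains a homomorphism from $\mathfrak{A}'$ to $\mathfrak{B}^+$ (the $\tau$-part is inherited, the $\tau' \setminus \tau$-part holds by definition), so $\mathfrak{B}^+ \in \homcl{\getmodels{\Phi'}} = \getmodels{\Psi}|_{\tau'}$. Picking a $\tau''$-expansion $\mathfrak{B}^{++}$ of $\mathfrak{B}^+$ that models $\Psi$, we see that $\mathfrak{B}^{++}|_\tau = \mathfrak{B}^+|_\tau = \mathfrak{B}$, yielding $\mathfrak{B} \in \getmodels{\Psi}|_\tau$.

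There is no genuine obstacle; the only points requiring care are bookkeeping around constants versus predicates in $\tau' \setminus \tau$ when defining $\mathfrak{B}^+$ (so that $h$ is indeed a homomorphism and $\mathfrak{B}^+|_\tau = \mathfrak{B}$) and noting that reducts commute with projection, so the final $\tau$-reduct of $\mathfrak{B}^{++}$ is still the structure $\mathfrak{B}$ we started with.
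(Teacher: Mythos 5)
Your proof is correct and follows essentially the same route as the paper's: both reduce the claim to lifting the witnessing homomorphism $h$ to a $\tau'$-expansion of $\mathfrak{B}$ and then recover a $\tau''$-expansion via the second hypothesis. The only substantive difference is in the choice of $\tau'$-expansion for the harder inclusion: the paper uses the uniform full-relation expansion $\mathfrak{B}\cdot\widehat{\tau'\setminus\tau}$ (which makes $h$ a homomorphism trivially, at the cost of only being defined for predicate-only $\tau'\setminus\tau$), whereas you push the interpretations of $\tau'\setminus\tau$ forward along $h$, which has the small advantage of handling new constant symbols explicitly. Both constructions do the same job; yours is slightly more careful about constants, the paper's is slightly more succinct.
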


\begin{proof}
Since $\tau \subseteq \tau'$, we obtain 
\begin{equation}
\getmodels{\Psi}|_\tau = \getmodels{\Psi}|_{\tau'}|_\tau = \homcl{\getmodels{\Phi'}}|_\tau \stackrel{*}{=} \homcl{\getmodels{\Phi'}|_\tau} = \homcl{\getmodels{\Phi}}.
\end{equation}
For the equality denoted by $\stackrel{*}{=}$, we show the two inclusions separately:

``$\subseteq$'': Let $\mathfrak{B} \in \homcl{\getmodels{\Phi'}}|_\tau$. Then there exists a $\tau'$-structure $\mathfrak{B}' \in \homcl{\getmodels{\Phi'}}$ with $\mathfrak{B}'|_\tau = \mathfrak{B}$. Hence there exist a $\tau'$-structure $\mathfrak{A}' \in \getmodels{\Phi'}$ and a homomorphism $h \colon \mathfrak{A}' \to \mathfrak{B}'$. But then $h$ also serves as a homomorphism from $\mathfrak{A} = \mathfrak{A}'|_\tau$ to $\mathfrak{B}'|_\tau = \mathfrak{B}$. By $\mathfrak{A} \in  \getmodels{\Phi'}|_\tau$ we then obtain $\mathfrak{B} \in \homcl{\getmodels{\Phi'}|_\tau}$.

``$\supseteq$'': Let $\mathfrak{B} \in \homcl{\getmodels{\Phi'}|_\tau}$. Then there exist a $\tau'$-structure $\mathfrak{A}' \in \getmodels{\Phi'}$ and a homomorphism $h \colon \mathfrak{A}'|_{\tau} \to \mathfrak{B}$. Let now $\mathfrak{B}' =  \mathfrak{B} \cdot {\raisebox{1pt}{$\widehat{\ \ \ \ \ \ }$} \hspace{-4.5ex}  { \tau' {\setminus} \tau}}$. Clearly,
then $h$ serves as homomorphism from $\mathfrak{A}'$ to $\mathfrak{B}'$, thus $\mathfrak{B}' \in \homcl{\getmodels{\Phi'}}$. Yet, since $\mathfrak{B}'|_\tau = \mathfrak{B}$, we obtain $\mathfrak{B} \in \homcl{\getmodels{\Phi'}}|_\tau$.
\end{proof}

As the next three results (namely \Cref{prop:charFOte}, \Cref{prop:charGFO}, and \Cref{prop:charTGF}) will make use of some essentially equal parts in the homcapturing sentence we introduce these ``generic'' parts beforehand.

\begin{definition}
	Let $\Phi$ be a $\FO{}$ sentence over some fixed signature $\tau$ and let $C$ denote the set of all constant symbols from $\tau$. Additionally, we fix some map $\mathrm{width}$ mapping each $\tau$-formula to some natural number, its so-called \emph{width}. We specify this map when the need arises.
	
	A \emph{type $\tp$} is a maximal consistent set of $\tau$-literals (i.e., negated and unnegated atoms where the relation symbol is from $\tau$) over terms from $\bold{v} \cup C$, where $\bold{v}$ is a dedicated, finite, linearly ordered set of \emph{type variables} of cardinality $\geq \mathrm{width}(\Phi)$. When necessary we might explicitly mention the signature at consideration, i.e. we say \emph{$\tau$-type}.
	To each type $\tp$ we associate a $\bold{v}$-subsequence $\bold{v}_{\tp}$ enumerating the variables of $\tp$.
	We say $\tp$ is an \emph{$n$-type}, or equivalently is a type of \emph{order $n$}, if $\bold{v}_{\tp}$ is a sequence of exactly $n$ variables. We define $\mathrm{order}$ to be a map from the set of types to the natural numbers, associating to each type its order, i.e., an $n$-type $\tp$ has $\mathrm{order}(\tp) = n$.
	For a structure $\mathfrak{A}$ we say that an $n$-type $\tp$ is \emph{realized} by an $n$-tuple $\bold{a} = (a_{1}, \ldots, a_{n}) \in A^{n}$, if for the componentwise evaluation map $\nu \colon \bold{v}_{t} \mapsto \bold{a}, (\bold{v}_{\tp})_{i} \mapsto a_{i}$ the relation $(\mathfrak{A}, \nu) \vDash \tp$ holds.
	We say a type $\tp$ of order $n$ is \emph{realized} ($k$ times) in $\mathfrak{A}$ if it is realized by ($k$ distinct) $n$-tuples from $\mathfrak{A}$.

	Given an atom $\delta$ using exactly the variables from $\bold{x}$ (and possibly additional constants), we say a type $\tp$ is \emph{guarded by $\delta$} if there exists a variable substitution $\nu\colon \bold{x} \to \bold{v}_{\tp}\cup C$ with $\bold{v}_{\tp} \subseteq \{\nu(x) \mid x \in \bold{x}\}$ such that $\nu(\delta) \in \tp$, where the application of a substitution $\nu$ to an expression replaces all occurrences of every variable $x$ in $\nu$'s domain by $\nu(x)$. Then $\nu$ will be called \emph{guard witness}.
	A type $\tp$ is \emph{guarded}, if it is guarded by some atom.
	A type $\tp$ is called \emph{rigid} if it does not contain literals of the form $v=v'$ (for distinct variables $v,v'$ from $\bold{v}_{\tp}$) or $\sigc = v$ or $v = \sigc$ for $\sig{c}\in C$. 
	For $\bold{v}' \subseteq \bold{v}_{\tp}$, we obtain the $\bold{v}'$-\emph{subtype} of $\tp$, denoted $\tp|_{\bold{v}'}$, by removing from $\tp$ all literals mentioning any variable from $\bold{v}_\tp \setminus \bold{v}'$.
	
	By $\mathcal{E}$ we will denote a fixed finite set of rigid types and call it the set of \emph{eligible types}. For a $\tau$-structure $\mathfrak{A}$ we denote by $\mathcal{E}(\mathfrak{A})$ its \emph{type summary}, which is a pair $(\mathcal{E}_{+}, \mathcal{E}_{!})$ of subsets of $\mathcal{E}$ such that $\mathcal{E}_{+}$ ($\mathcal{E}_{!}$) contains every type realized (exactly once) in $\mathfrak{A}$. Furthermore, the \emph{model summary} $\getsummary{\Phi}$ of a sentence $\Phi$ is defined by $\{ \mathcal{E}(\mathfrak{A}) \ | \ \mathfrak{A}\in \getmodels{\Phi} \}$.
	
	For a set $\mathcal{E}$ of eligible types we extend the signature $\tau$ by a set $\sigma$ of \emph{type predicates}: $\sigma$ contains a new $n$-ary predicate $\sig{Tp}_{\tp}$ for every eligible $n$-type $\tp\in\mathcal{E}$.
	Let $\mathfrak{A}$ be a $\tau$-structure. The \emph{$\mathcal{E}$-adornment} of $\mathfrak{A}$, denoted by $\mathfrak{A}\cdot\mathcal{E}$ is the $\tau\uplus\sigma$-structure expanding $\mathfrak{A}$ such that $\sig{Tp}_{\tp}^{\mathfrak{A}\cdot\mathcal{E}} = \{ \bold{a} \ | \ \bold{a} \text{ realizes } \tp \text{ in } \mathfrak{A} \}$.
	
	If $\mathfrak{A}, \mathfrak{B}$ are $\tau$-structures such that there is a homomorphism $h \colon \mathfrak{A}\to\mathfrak{B}$ and we have a set of eligible types $\mathcal{E}$, then we denote by $\mathfrak{B}\cdot h(\mathfrak{A}\cdot\mathcal{E}|_{\sigma})$ the $\tau\uplus\sigma$-structure $\mathfrak{B}'$ expanding $\mathfrak{B}$ with $\sig{Tp}_{\tp}^{\mathfrak{B}'} = \{ h(\bold{a}) \ | \ \bold{a} \in \sig{Tp}_{\tp}^{\mathfrak{A}\cdot\mathcal{E}} \}$.
	
	We define
	\begin{equation}
		\Psi_\mathrm{hom} = \!\bigwedge_{\tp \in \mathcal{E}} \!\forall \bold{v}_{\tp}.\big( \Tp_\tp(\bold{v}_{\tp}) \impl \smallbigwedge \{\alpha \tight{\in} \tp \mid \alpha \text{ positive lit.}\} \big).
	\end{equation}
	
	For every $(\mathcal{E}_{+}, \mathcal{E}_{!}) \in \getsummary{\Phi}$ let $\Psi_{\mathrm{gen}, (\mathcal{E}_{+}, \mathcal{E}_{!})}$ be the conjunction of the following formulae:

	\newcommand{\less}{\hspace{-1.5ex}}
	\noindent
	\begin{eqnarray}
		\less\exists \bold{v}_\tp. \sig{Tp}_\tp (\bold{v}_\tp) \less& & \less \hspace{25ex}\mbox{for } \tp \in \mathcal{E}_{+}  \label{eq: gen 1}\\
		\less\forall \bold{v}_\tp. \sig{Tp}_\tp (\bold{v}_\tp) \less& \Rightarrow &\less \bot \hspace{20ex}\mbox{for } \tp \in \mathcal{E}\setminus \mathcal{E}_{+}  \label{eq: gen 2}\\
		\less\forall \bold{v}_\tp. \sig{Tp}_\tp (\bold{v}_\tp) \less& \Rightarrow &\less 	\sig{Tp}_{\tp'}(\bold{v}_{\tp'}) \hspace{5.25ex}\mbox{for } \tp,\tp' \in \mathcal{E}_{+}, \tp'=\tp|_{\bold{v}_{\tp'}}  \label{eq: gen 3} \\
		\less\forall \bold{x}. \sig{Tp}_\tp (\bold{x}) \less& \Rightarrow &\less 	\sig{Tp}_{\kappa\tp}( \eta^{\kappa\bold{v}_\tp}_\bold{v} (\bold{x}) ) \hspace{1.75ex}\mbox{for } \tp \in \mathcal{E}, \kappa \colon \bold{v}\injsur \bold{v} \label{eq: gen 3plus}
	\end{eqnarray}
	\let\less\undefined
	where 
	\begin{itemize}
		\item
		the permutation $\kappa$ on the set $\bold{v}$ is extended to variable tuples in the usual way and also used as substitution on logical expressions, 
		\item for a (repetition-free) tuple $\bold{v}'$ of variables from $\bold{v}$, we let $\eta^{\bold{v}'}_\bold{v}$ denote the sorting function rearranging the entries of $|\bold{v}'|$-tuples in a way that $\eta^{\bold{v}'}_\bold{v}(\bold{v}')$ becomes a subsequence of $\bold{v}$.
	\end{itemize}
\end{definition}
Several remarks to this definition:
\begin{enumerate}
	\item Note that each signature $\tau$ is finite (hence also the set of constant symbols $C$ from $\tau$) and the set of dedicated type variables $\bold{v}$ is finite as well. Hence there are only finitely many distinct literals with terms from $\bold{v}\cup C$. This implies the finiteness of every type $\tp$. For convenience we may identify $\tp$ (which is a set of literals) with the conjunction over all the literals contained in it, i.e. $\bigwedge \tp$ (which then is a formula).
	\item Given a $\tau$-structure $\mathfrak{A}$ and a tuple $\bold{a} = (a_{1}, \ldots, a_{n})$ (such that $\mathrm{length}(\bold{a}) \le \mathrm{length}(\bold{v})$), $\bold{a}$ realizes one, up to permutation of type variables, unique type $\tp$ characterizing the induced substructure of $\mathfrak{A}$ by the elements from $\bold{a}$ and the interpretations of every constant symbol. To be more precise, let $\bold{v}_{\tp} = (v_{1}, \ldots, v_{n})$ be the initial segment of $\bold{v}$ of length $\mathrm{length}(\bold{a})$ and $\mu \colon \bold{v}_{\tp}\to\bold{a}, v_i\mapsto a_i$ the componentwise evaluation of $\bold{v}_{\tp}$ by $\bold{a}$. We extend $\mu$ to $\bold{v}_{\tp}\cup C$ by defining $\mu(\sig{c}) = \sig{c}^{\mathfrak{A}}$ for every $\sig{c}\in C$. For each $\sig{R}$ and $\bold{t} = (t_{1}, \ldots, t_{\mathrm{ar}(\sig{R})})\subseteq \bold{v}_{\tp}\cup C$ the literal
	\begin{itemize}
		\item $\sig{R}(\bold{t})$ is in $\tp$, iff $\mu(\bold{t}) \in \sig{R}^{\mathfrak{A}}$ and
		\item $\lnot\sig{R}(\bold{t})$ is in $\tp$, iff $\mu(\bold{t}) \notin \sig{R}^{\mathfrak{A}}$.
	\end{itemize}
	This set is consistent as (by $\mathfrak{A}$ being a structure) for $\bold{t}$ we cannot have $\mu(\bold{t})\in \sig{R}^{\mathfrak{A}}$ and $\mu(\bold{t})\notin \sig{R}^{\mathfrak{A}}$ simultaneously. Furthermore it is maximal since either $\mu(\bold{t})\in \sig{R}^{\mathfrak{A}}$ or $\mu(\bold{t})\notin \sig{R}^{\mathfrak{A}}$. To show uniqueness (up to variable permutations), assume there were a second type $\tp^{\ast}$ also realized by $\bold{a}$ such that for all $\kappa \colon \bold{v}\injsur\bold{v}$ the inequality $\kappa\tp^{\ast}\neq\tp$ holds. Without loss of generality we may assume that $\bold{v}_{\tp} = \bold{v}_{\tp^{\ast}} = (v_{1}, \ldots, v_{n})$ and $\kappa$ mapping each $v\in\bold{v}_{\tp}$ to itself (and hence $\kappa\tp^{\ast}=\tp^{\ast}$). As $\tp \neq \tp^{\ast}$ there is some $\sig{R} \in \tau$ and $\bold{t}(t_{1}, \ldots, t_{\mathrm{ar}(\sig{R})})\subseteq \bold{v_{\tp}} \cup C$ such that $\sig{R}(\bold{t})\in \tp$ and $\sig{R}(\bold{t})\notin\tp^{\ast}$. By maximal consistency of types this means $\lnot\sig{R}(\bold{t})\in \tp^{\ast}$. But as $\bold{a}$ realizes both types we obtain $\mu(\bold{t})\in\sig{R}^{\mathfrak{A}}$ and $\mu(\bold{t})\notin\sig{R}^{\mathfrak{A}}$, a contradiction.
	\item A note on eligible types: Depending on the logical fragment under scrutiny, we will make the notion of eligibility more specific. In fact, for $\Phi$ a sentence from some fragment of $\FO{}$, the finiteness will be ensured by considering just types whose order is bounded by $\mathrm{width}(\Phi)$. 
	\item As a corollary to sets $\mathcal{E}$ being finite, we immediately obtain that for an $\FO{}$ sentence $\Phi$ (and a set of eligible types $\mathcal{E}$) its model summary $\getsummary{\Phi}$ is finite.
\end{enumerate}
\begin{lemma}\label{lem:eligclosure}
	Let $\tau$ be a signature and $\Phi$ a $\FO{}$ $\tau$-sentence. Let $\mathcal{E}$ be a finite set of rigid types.
	\begin{enumerate}
		\item\label{item eligclosure:1} For every $\tp, \tp' \in \mathcal{E}$ and every $\mathfrak{A}\in\getmodels{\Phi}$ with $\mathcal{E}(\mathfrak{A}) = (\mathcal{E}_{+}, \mathcal{E}_{!}) \in \getsummary{\Phi}$ with $\tp' = \tp|_{\bold{v}_{\tp'}}$ and $\tp\in \mathcal{E}_{+}$, then also $\tp' \in \mathcal{E}_{+}$. More specifically: Every componentwise assignment $\mu\colon\bold{v}_{\tp}\to\bold{a}$ of the variables from $\tp$ to a tuple $\bold{a}$ from $\mathfrak{A}$ of length $\mathrm{order}(\tp)$ witnessing the realization of $\tp$ by $\bold{a}$ in $\mathfrak{A}$ gives rise to a subtuple $\bold{a}'$ of $\bold{a}$, namely $\mu(\bold{v}_{\tp'})$, such that $\bold{a}' = \mu|_{\bold{v}_{\tp'}}$ witnesses the realization of $\tp'$ by $\bold{a}'$ in $\mathfrak{A}$.
		\item\label{item eligclosure:2} If $\mathcal{E}$ is closed under taking variable permutations, i.e., for every $\tp \in \mathcal{E}$ and $\kappa\colon \bold{v}\injsur\bold{v}$ also $\kappa\tp \in \mathcal{E}$, 
		then so is $\mathcal{E}_{+}$ for every $(\mathcal{E}_{+}, \mathcal{E}_{!})\in\getsummary{\Phi}$.
	\end{enumerate}
\end{lemma}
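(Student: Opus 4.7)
The proof is a routine unfolding of definitions, in both parts. The plan is to, in each case, take a realizing assignment witnessing that the given type lies in $\mathcal{E}_+$ and to construct from it a realizing assignment for the targeted type by either restricting or precomposing with a variable renaming.

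For part~\ref{item eligclosure:1}, I would start by fixing $\tp \in \mathcal{E}_+$ realized in $\mathfrak{A}$ by the tuple $\bold{a}$ via the componentwise assignment $\mu \colon \bold{v}_\tp \to \bold{a}$ (extended canonically to constants by $\mu(\sigc) = \sigc^\mathfrak{A}$). From $(\mathfrak{A},\mu) \models \tp$ I would observe that $(\mathfrak{A},\mu) \models \alpha$ holds in particular for every literal $\alpha \in \tp$ whose terms are drawn from $\bold{v}_{\tp'} \cup C$, i.e., for every $\alpha \in \tp|_{\bold{v}_{\tp'}} = \tp'$. Since such literals only mention variables from $\bold{v}_{\tp'}$, the satisfying assignment can be shrunk to $\mu' := \mu|_{\bold{v}_{\tp'}}$ without affecting satisfaction, yielding $(\mathfrak{A},\mu') \models \tp'$. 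Setting $\bold{a}' := \mu(\bold{v}_{\tp'})$, we have that $\mu'$ is precisely the componentwise evaluation of $\bold{v}_{\tp'}$ by $\bold{a}'$, so $\bold{a}'$ realizes $\tp'$ in $\mathfrak{A}$ as required. This membership of $\tp'$ in $\mathcal{E}_+$ follows since $\tp' \in \mathcal{E}$ by assumption.

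For part~\ref{item eligclosure:2}, fix $\tp \in \mathcal{E}_+$, a permutation $\kappa \colon \bold{v} \injsur \bold{v}$, and a realization $\mu \colon \bold{v}_\tp \to \bold{a}$ of $\tp$ in $\mathfrak{A}$. The plan is to consider the assignment $\mu^\kappa$ defined on $\kappa(\bold{v}_\tp)$ by $\mu^\kappa(\kappa(v)) := \mu(v)$ (again extended canonically to constants). For every literal $L$ of $\kappa\tp$, there is a unique $L_0 \in \tp$ with $L = \kappa L_0$, and by construction $L$ evaluated under $\mu^\kappa$ coincides with $L_0$ evaluated under $\mu$, which holds in $\mathfrak{A}$. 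Hence $(\mathfrak{A},\mu^\kappa) \models \kappa\tp$, so $\mu^\kappa(\bold{v}_{\kappa\tp})$ realizes $\kappa\tp$ in $\mathfrak{A}$. Since $\kappa\tp \in \mathcal{E}$ by the closure hypothesis on $\mathcal{E}$, we conclude $\kappa\tp \in \mathcal{E}_+$, as desired.

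There is no real obstacle here: both parts just unpack the definitions of ``type realized by a tuple'' and of type summary. The only mild technicality is bookkeeping around the canonical sequence $\bold{v}_{\tp}$ associated with a type and making sure that the restricted/renamed assignment is still presented as a componentwise evaluation of the appropriate variable tuple on the appropriate element tuple, which amounts to applying the sorting convention implicit in the definition of $\bold{v}_\tp$.
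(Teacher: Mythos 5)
Your proof is correct and follows essentially the same route as the paper's. In Part~1, the paper phrases the restricted assignment as $\mu\circ\iota$ for the inclusion $\iota\colon\bold{v}_{\tp'}\hookrightarrow\bold{v}_{\tp}$, while you phrase it as $\mu|_{\bold{v}_{\tp'}}$; these are the same map. In Part~2, the paper makes the ordering bookkeeping explicit by invoking the sorting function $\eta_{\bold{v}}^{\kappa\bold{v}_{\tp}}$ to present the realizing tuple as the componentwise value of the subsequence $\bold{v}_{\kappa\tp}$, whereas you define $\mu^\kappa$ on the underlying set and flag the sorting convention as a technicality to be applied afterwards; both amount to the same argument, and your closing remark shows you have correctly identified where the only care is needed.
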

Note that Item \ref{item eligclosure:1} essentially asserts for every $(\mathcal{E}_{+}, \mathcal{E}_{!}) \in \getsummary{\Phi}$ a kind of limited closure under taking subtypes in $\mathcal{E}_{+}$ .

\begin{proof}
	Let $\Phi$ be a $\FO{}$ $\tau$-sentence and $\mathcal{E}$ a set of eligible types.

		We first show Item \ref{item eligclosure:1}. Let $\tp, \tp'\in \mathcal{E}$ with $\tp'=\tp|_{\bold{v}_{\tp'}}$, $\mathfrak{A}\in\getmodels{\Phi}$, $\mathcal{E}(\mathfrak{A})=(\mathcal{E}_{+}, \mathcal{E}_{!})\in\getsummary{\Phi}$, and $\tp \in\mathcal{E}_{+}$. Hence $\tp$ is realized in $\mathfrak{A}$ by some tuple $\bold{a} = (a_{1}, \ldots, a_{n})\in A^{n}$ where $n = \mathrm{order}(\tp)$. Let $\mu\colon \bold{v}_{\tp} \to \bold{a}$ be the componentwise variable evaluation witnessing this realization, i.e. $(\mathfrak{A}, \mu) \vDash \tp$. Now we set $\iota\colon \bold{v}_{\tp'}\to\bold{v}_{\tp}, v \mapsto v$. Then the map $\mu\circ\iota$ witnesses the realization of $\tp' = \tp|_{\bold{v}_{\tp'}}$ in $\mathfrak{A}$ by the subtuple $\mu(\bold{v}_{\tp'})$ (obtained by componentwise application of $\mu$) of $\bold{a}$. As $\tp' \in \mathcal{E}$ we conclude $\tp'\in \mathcal{E}_{+}$.
	
	Now take $\mathcal{E}$ to be closed under taking permutations of variables, let $\mathfrak{A}\in\getmodels{\Phi}$, and $\mathcal{E}(\mathfrak{A}) = (\mathcal{E}_{+}, \mathcal{E}_{!})\in\getsummary{\Phi}$. Furthermore take some type $\tp\in \mathcal{E}_{+}$ and some permutation $\kappa\colon\bold{v}\injsur\bold{v}$. Since $\tp\in\mathcal{E}_{+}$, it is realized by some tuple $\bold{a}=(a_{1}, \ldots, a_{n})\in A^{n}$ where $n = \mathrm{order}(\tp)$. We denote by $\mu \colon \bold{v}_{\tp}\to \bold{a}$ the componentwise variable assignment witnessing the realization of $\tp$ by $\bold{a}$ in $\mathfrak{A}$, i.e. $(\mathfrak{A}, \mu)\vDash \tp$. Then the tuple $\eta_{\bold{v}}^{\kappa\bold{v}_{\tp}}(\bold{a})$ realizes $\kappa\tp$, witnessed by $\mu' \colon \kappa\bold{v}_{\tp}\to \eta_{\bold{v}}^{\kappa\bold{v}_{\tp}}(\bold{a}), (\eta_{\bold{v}}^{\kappa\bold{v}_{\tp}}(\bold{v}_{\tp}))_{i}\mapsto (\eta_{\bold{v}}^{\kappa\bold{v}_{\tp}}(\bold{a}))_{i}$. As $\mathcal{E}$ is closed under taking permutations of variables, we conclude that $\kappa\tp\in\mathcal{E}_{+}$. This establishes Item \ref{item eligclosure:2}.
\end{proof}
\begin{lemma}\label{lem:PsiGen}
	Let $\Phi$ be a $\FO{}$ $\tau$-sentence, $\mathcal{E}$ be a set of eligible types (i.e., a finite set of rigid type) closed under variable permutations, and $\sigma$ be the set of fresh type predicates $\sig{Tp}_{\tp}$ for all $\tp\in \mathcal{E}$. Then for $\mathfrak{A} \in\getmodels{\Phi}$ the $\mathcal{E}$-adornment $\mathfrak{A}\cdot \mathcal{E}$ satisfies $\Psi_{\mathrm{gen}, \mathcal{E}(\mathfrak{A})}$. Furthermore, for every $\mathfrak{B}$ being a homomorphic codomain of $\mathfrak{A}$ via a homomorphism $h \colon \mathfrak{A}\to\mathfrak{B}$ the $\tau\uplus\sigma$-structure $\mathfrak{B} \cdot h(\mathfrak{A}\cdot \mathcal{E}|_{\sigma})$ equally satisfies $\Psi_{\mathrm{gen}, \mathcal{E}(\mathfrak{A})}$.
\end{lemma}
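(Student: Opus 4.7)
The plan is to verify, for each of the four conjunct families (\ref{eq: gen 1})--(\ref{eq: gen 3plus}) making up $\Psi_{\mathrm{gen},\mathcal{E}(\mathfrak{A})}$, that it holds in the two structures of interest. Write $(\mathcal{E}_+,\mathcal{E}_!)=\mathcal{E}(\mathfrak{A})$.

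\emph{Claim 1 (the $\mathcal{E}$-adornment).}
First I would unfold the definition of $\mathfrak{A}\cdot\mathcal{E}$: by construction, $\sig{Tp}_\tp^{\mathfrak{A}\cdot\mathcal{E}}$ equals the set of tuples realizing $\tp$ in $\mathfrak{A}$, which by definition of $\mathcal{E}_+$ is nonempty iff $\tp\in\mathcal{E}_+$. This immediately settles (\ref{eq: gen 1}) and (\ref{eq: gen 2}). For (\ref{eq: gen 3}), I would invoke Item~\ref{item eligclosure:1} of the preceding lemma: if $\tp\in\mathcal{E}_+$ is witnessed by a componentwise evaluation $\mu\colon\bold{v}_\tp\to\bold{a}$, then the restriction $\mu|_{\bold{v}_{\tp'}}$ witnesses that the corresponding subtuple realizes $\tp'=\tp|_{\bold{v}_{\tp'}}$. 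Transcribed to type atoms, this says $\sig{Tp}_\tp(\bold{v}_\tp)\rightarrow\sig{Tp}_{\tp'}(\bold{v}_{\tp'})$ is true in $\mathfrak{A}\cdot\mathcal{E}$. For (\ref{eq: gen 3plus}), I would analogously use Item~\ref{item eligclosure:2}: when $\mathcal{E}$ is closed under variable permutations $\kappa$, the tuple $\eta^{\kappa\bold{v}_\tp}_{\bold{v}}(\bold{a})$ realizes $\kappa\tp$; reading this at the level of type atoms gives exactly the implication postulated in (\ref{eq: gen 3plus}).

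\emph{Claim 2 (the homomorphic image).}
Set $\mathfrak{B}':=\mathfrak{B}\cdot h(\mathfrak{A}\cdot\mathcal{E}|_\sigma)$. The key observation is that by construction
\begin{equation}
\sig{Tp}_\tp^{\mathfrak{B}'}=\{\,h(\bold{a})\mid\bold{a}\in\sig{Tp}_\tp^{\mathfrak{A}\cdot\mathcal{E}}\,\},
\end{equation}
so every type witness in $\mathfrak{B}'$ is a forward-image of a type witness in $\mathfrak{A}\cdot\mathcal{E}$. Hence (\ref{eq: gen 1}) holds because witnesses in $\mathfrak{A}\cdot\mathcal{E}$ transport to witnesses in $\mathfrak{B}'$ under $h$, and (\ref{eq: gen 2}) holds because the empty set maps to the empty set. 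For (\ref{eq: gen 3}) and (\ref{eq: gen 3plus}), I would argue by pullback: any tuple $\bold{b}\in\sig{Tp}_\tp^{\mathfrak{B}'}$ is of the form $\bold{b}=h(\bold{a})$ for some $\bold{a}\in\sig{Tp}_\tp^{\mathfrak{A}\cdot\mathcal{E}}$; by Claim~1 applied to $\mathfrak{A}\cdot\mathcal{E}$, the relevant subtuple (resp.\ permuted tuple) of $\bold{a}$ lies in $\sig{Tp}_{\tp'}^{\mathfrak{A}\cdot\mathcal{E}}$ (resp.\ in $\sig{Tp}_{\kappa\tp}^{\mathfrak{A}\cdot\mathcal{E}}$); applying $h$ componentwise and commuting $h$ with the combinatorial operations on tuples (projection and the sorting map $\eta^{\kappa\bold{v}_\tp}_{\bold{v}}$) yields the required membership in the corresponding type predicate of $\mathfrak{B}'$.

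\emph{Main obstacle.}
The conceptual content is minimal, since both claims are essentially read off from the definitions of $\mathcal{E}$-adornment and of $\sig{Tp}_\tp^{\mathfrak{B}'}$ once Lemma~\ref{lem:eligclosure} is in place. The only thing that requires a bit of care is the bookkeeping of variable reorderings in (\ref{eq: gen 3plus}): one must verify that componentwise application of $h$ commutes with the sorting function $\eta^{\kappa\bold{v}_\tp}_{\bold{v}}$, so that the witness $h(\eta^{\kappa\bold{v}_\tp}_{\bold{v}}(\bold{a}))=\eta^{\kappa\bold{v}_\tp}_{\bold{v}}(h(\bold{a}))$ indeed sits in $\sig{Tp}_{\kappa\tp}^{\mathfrak{B}'}$. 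This is a purely syntactic check but is the one step where an off-by-one in the indexing would silently derail the proof.
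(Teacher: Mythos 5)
Your proposal is correct and follows essentially the same route as the paper's proof: it checks each conjunct family of $\Psi_{\mathrm{gen},\mathcal{E}(\mathfrak{A})}$ in the adornment $\mathfrak{A}\cdot\mathcal{E}$ directly from the definitions and \Cref{lem:eligclosure}, and then for $\mathfrak{B}\cdot h(\mathfrak{A}\cdot\mathcal{E}|_\sigma)$ argues by pulling back a type witness to $\mathfrak{A}\cdot\mathcal{E}$, applying Claim~1, and pushing forward along $h$, noting as the paper also does that $h$ commutes with projection and with $\eta^{\kappa\bold{v}_\tp}_{\bold{v}}$. One small imprecision: for (\ref{eq: gen 3plus}) in Claim~1 you cite Item~\ref{item eligclosure:2} of \Cref{lem:eligclosure}, but that item only asserts that $\mathcal{E}_+$ is closed under permutations; the fact you actually need — that $\eta^{\kappa\bold{v}_\tp}_{\bold{v}}(\bold{a})$ realizes $\kappa\tp$ — is established inside the proof of that item (and you do state it), so the argument goes through but would read better if you appealed to the realization fact directly rather than to the item's statement.
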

\begin{proof}
	Let $\Phi$ be a $\FO{}$ $\tau$-sentence,  $\mathfrak{A}\in\getmodels{\Phi}$ and let $\mathcal{E}$ be the set of eligible types and $\mathcal{E}(\mathfrak{A}) = (\mathcal{E}_{+}, \mathcal{E}_{!})$ denote the type summary of $\mathfrak{A}$. 
	We show that the $\mathcal{E}$-adornment $\mathfrak{A}\cdot\mathcal{E}$ of $\mathfrak{A}$ satisfies $\Psi_{\mathrm{gen}, \mathcal{E}(\mathfrak{A})}$. More specifically:
	\newline
	For all $\tp \in \mathcal{E}$ the interpretation $\sig{Tp}_{\tp}^{\mathfrak{A}\cdot \mathcal{E}}$ contains by definition some tuple $\bold{a}$ from $\mathfrak{A}\cdot\mathcal{E}$ if and only if $\bold{a}$ realizes the associated type $\tp$ in $\mathfrak{A}$. As every $\tp \in \mathcal{E}_{+}$ is realized in $\mathfrak{A}$ we conclude that $\mathfrak{A}\cdot\mathcal{E}$ satisfies Formulae \ref{eq: gen 1}. As none of the types $\tp \in \mathcal{E}\setminus \mathcal{E}_{+}$ is being realized in $\mathfrak{A}$, we see by the same token that the interpretations $\sig{Tp}_{\tp}^{\mathfrak{A}\cdot\mathcal{E}}$ have to be empty. Hence we immediately obtain that $\mathfrak{A}\cdot \mathcal{E}$ satisfies Formulae \ref{eq: gen 2}.
	\newline
	Now take $\tp, \tp' \in \mathcal{E}$ such that $\tp' = \tp|_{\bold{v}_{\tp'}}$ and let $\bold{a}$ be some tuple from $\mathfrak{A}\cdot\mathcal{E}$ such that $\bold{a}\in \sig{Tp}_{\tp}^{\mathfrak{A}\cdot\mathcal{E}}$. By definition $\bold{a}$ realizes $\tp$ in $\mathfrak{A}$, let $\mu\colon \bold{v}_{\tp} \to\bold{a}$ be the componentwise evaluation witnessing this.
	By \Cref{lem:eligclosure} $\tp'$ is realized in $\mathfrak{A}$ (i.e. $\tp' \in\mathcal{E}_{+}$) by the subtuple $\mu(\bold{v}_{\tp'})$ ($\mu$ applied componentwise) of $\bold{a}$ realizing $\tp'$. Hence it is in $\sig{Tp}_{\tp'}^{\mathfrak{A}\cdot\mathcal{E}}$ by definition. Finally, observing that $\mu$ assigns the tuple $\bold{a}$ componentwise to $\bold{v}_{\tp}$, yields the satisfaction of Formulae \ref{eq: gen 3} by $\mathfrak{A}\cdot\mathcal{E}$.
	\newline
	To show the satisfaction of the last kind of Formulae (i.e. Formulae \ref{eq: gen 3plus}), let $\tp \in \mathcal{E}$, $\kappa \colon \bold{v} \injsur \bold{v}$ a permutation of the variables and $\bold{a}$ a tuple from $\mathfrak{A}\cdot\mathcal{E}$ such that $\bold{a}\in \sig{Tp}_{\tp}^{\mathfrak{A}\cdot\mathcal{E}}$. Note that, since $\mathcal{E}$ is closed under taking permutations of (type) variables, $\kappa\tp\in\mathcal{E}$. As for every $\tp \in\mathcal{E}\setminus\mathcal{E}_{+}$ the interpretation $\sig{Tp}_{\tp}^{\mathfrak{A}\cdot\mathcal{E}}$ is empty, it suffices to discuss the case when $\tp\in \mathcal{E}_{+}$. Hence $\bold{a}$ realizes $\tp$ in $\mathfrak{A}$. It is plain that $\eta_{\bold{v}}^{\kappa\bold{v}_{\tp}}(\bold{a})$ realizes $\kappa\tp$ in $\mathfrak{A}$, hence $\kappa\tp\in\mathcal{E}_{+}$. Thus we obtain $\eta_{\bold{v}}^{\kappa\bold{v}_{\tp}}(\bold{a}) \in \sig{Tp}_{\kappa\tp}^{\mathfrak{A}\cdot\mathcal{E}}$, showing that $\mathfrak{A}\cdot\mathcal{E}$ satisfies Formulae \ref{eq: gen 3plus}.

	We now turn to the second part of the lemma. Let $\mathfrak{B}$ be a $\tau$-structure such that there is a homomorphism $h\colon \mathfrak{A} \to \mathfrak{B}$. We show that the structure $\mathfrak{B}\cdot h(\mathfrak{A}\cdot\mathcal{E}|_{\sigma})$ (which we will denote by $\mathfrak{B}'$) also satisfies $\Psi_{\mathrm{gen}, \mathcal{E}(\mathfrak{A})}$. Additionally, $h\colon \mathfrak{A}\to\mathfrak{B}$ is also a homomorphism if lifted to $h' \colon \mathfrak{A}\cdot\mathcal{E} \to \mathfrak{B}'$. As $h$ is already a homomorphism from $\mathfrak{A}$ to $\mathfrak{B}$ we only have to check whether for every $\sig{Tp}_{\tp} \in\sigma$ and $\bold{a}\in\sig{Tp}_{\tp}^{\mathfrak{A}\cdot\mathcal{E}}$ the image of $\bold{a}$ under $h'$ satisfies $h'(\bold{a})\in\sig{Tp}_{\tp}^{\mathfrak{B}'}$. But by definition of $\mathfrak{B}'$ the interpretation of these type predicates is already defined to be the homomorphic image of the respective interpretations in $\mathfrak{A}\cdot\mathcal{E}$. Hence $h'(\bold{a})\in\sig{Tp}_{\tp}^{\mathfrak{B}'}$ yielding that $h'$ is also a homomorphism.
	\newline
	Applying this knowledge we immediately obtain that $\mathfrak{B}'$ satisfies Formulae \ref{eq: gen 1}: As $\mathfrak{A}\cdot\mathcal{E}$ satisfies \ref{eq: gen 1} we find for every $\tp \in \mathcal{E}_{+}$ a tuple $\bold{a}$ such that $\bold{a}\in \sig{Tp}_{\tp}^{\mathfrak{A}\cdot\mathcal{E}}$. As $h'$ is a homomorphism we get $h'(\bold{a}) \in \sig{Tp}_{\tp}^{\mathfrak{B}'}$.
	To show satisfaction of Formulae \ref{eq: gen 2} we note first, that by definition of $\mathfrak{B}'$ we interpreted the type predicates exactly as the homomorphic image of their interpretation in $\mathfrak{A}\cdot\mathcal{E}$. Hence for every $\tp\in \mathcal{E}\setminus\mathcal{E}_{+}$ we obtain $\sig{Tp}_{\tp}^{\mathfrak{B}'} = \emptyset$ as $\sig{Tp}_{\tp}^{\mathfrak{A}\cdot\mathcal{E}}$ was already empty. This yields Formulae \ref{eq: gen 2}.
	\newline
	Now let $\tp, \tp' \in \mathcal{E}$ such that $\tp' = \tp|_{\bold{v}_{\tp'}}$ and let $\bold{b}$ be a tuple from $\mathfrak{B}'$ such that $\bold{b}\in \sig{Tp}_{\tp}^{\mathfrak{B}'}$. As $\bold{b}$ is in $\sig{Tp}_{\tp}^{\mathfrak{B}'}$ we know by definition of $\mathfrak{B}'$ that $\bold{b}$ is the homomorphic image of some tuple from $\mathfrak{A}\cdot\mathcal{E}$, more specifically we find a tuple $\bold{a}$ in $\mathfrak{A}\cdot\mathcal{E}$ such that $\bold{a}\in \sig{Tp}_{\tp}^{\mathfrak{A}\cdot\mathcal{E}}$ and $h'(\bold{a}) = \bold{b}$. Let $\mu\colon \bold{v}_{\tp} \to\bold{a}$ be the componentwise evaluation witnessing the satisfaction of $\sig{Tp}_{\tp}$ by $\bold{a}$ in $\mathfrak{A}\cdot\mathcal{E}$. By \Cref{lem:eligclosure} the subtuple $\bold{a}^{\ast} = \mu(\bold{v}_{\tp'})$ (where $\mu$ is applied componentwise) of $\bold{a}$ satisfies $\sig{Tp}_{\tp'}$ in $\mathfrak{A}\cdot\mathcal{E}$. By applying $h'$ to $\bold{a}^{\ast}$ we obtain a subtuple $\bold{b}^{\ast} = h'(\bold{a}^{\ast})$ of $\bold{b}$ which satisfies $\bold{b}^{\ast}\in \sig{Tp}_{\tp'}^{\mathfrak{B}'}$. This ensures the satisfaction of Formulae \ref{eq: gen 3} by $\mathfrak{B}'$. 
	\newline
	Finally, let $\tp \in \mathcal{E}$, $\kappa \colon \bold{v} \injsur \bold{v}$ a permutation of the variables and $\bold{b}$ a tuple from $\mathfrak{B}'$ such that $\bold{b}\in \sig{Tp}_{\tp}^{\mathfrak{B}'}$. Since by definition of $\mathfrak{B}'$ the interpretation $\sig{Tp}_{\tp}^{\mathfrak{B}'}$ is non-empty if and only if $\tp \in \mathcal{E}_{+}$ it 	suffices to assume just that. From $\bold{b}\in \sig{Tp}_{\tp}^{\mathfrak{B}'}$ we obtain via the definition of $\mathfrak{B}'$ that there is a tuple $\bold{a}$ in $\mathfrak{A}\cdot\mathcal{E}$ such that $\bold{a}\in \sig{Tp}_{\tp}^{\mathfrak{A}\cdot\mathcal{E}}$ and $h'(\bold{a}) = \bold{b}$. As Formulae \ref{eq: gen 3plus} hold in $\mathfrak{A}\cdot\mathcal{E}$ we can conclude that $\eta_{\bold{v}}^{\kappa\bold{v}_{\tp}}(\bold{a})\in \sig{Tp}_{\kappa\tp}^{\mathfrak{A}\cdot\mathcal{E}}$. As the application of maps that are defined on elements	extends to tuples componentwise, $\eta_{\bold{v}}^{\kappa\bold{v}_{\tp}}$ and $h'$ commute, hence 
	\begin{equation}
		h'(\eta_{\bold{v}}^{\kappa\bold{v}_{\tp}}(\bold{a})) = \eta_{\bold{v}}^{\kappa\bold{v}_{\tp}}(h'(\bold{a})) = \eta_{\bold{v}}^{\kappa\bold{v}_{\tp}}(\bold{b})
	\end{equation}
	and since $h'$ is a homomorphism we obtain immediately $\eta_{\bold{v}}^{\kappa\bold{v}_{\tp}}(\bold{b})\in \sig{Tp}_{\kappa\tp}^{\mathfrak{B}'}$. This proves that $\mathfrak{B}'$ satisfies Formulae \ref{eq: gen 3plus} and concludes the whole proof of the lemma.
\end{proof}

\begin{proposition}[restate=FOTEinESO, name=]\label{prop:charFOte}
	For every \FOte{} $\tau$-sentence $\Phi$, 
	there exists
	a \FOte{} 
	sen\-tence $\Psi$ 
	such that $\homcl{\getmodels{\Phi}} = \getmodels{\Psi}|_\tau$. 
\end{proposition}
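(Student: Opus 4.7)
The plan is to instantiate the generic type-based framework introduced just before the proposition. First, via \Cref{lem:normalformOK}, I reduce to the case where $\Phi$ is in Scott normal form $\forall x\forall y.\alpha(x,y) \wedge \bigwedge_{i=1}^{k}\forall x\exists y.\beta_i(x,y)$ over a possibly extended signature $\tau' \supseteq \tau$ with fresh unary predicates; any projective $\FOte$-characterization of the Scott form then also projectively characterizes $\homcl{\getmodels{\Phi}}$ over $\tau$. I set $\mathrm{width}(\Phi) := 2$, so $\mathcal{E}$ comprises all rigid $\tau'$-types of order $1$ or $2$ over $\bold{v} = (v_1, v_2)$; this finite set is closed under the swap $v_1 \leftrightarrow v_2$. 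For $\mathfrak{A} \in \getmodels{\Phi}$, the pair $\mathcal{E}(\mathfrak{A}) = (\mathcal{E}_{+},\mathcal{E}_{!})$ records which types are realized, with $\mathcal{E}_{!}$ flagging the so-called \emph{kings} -- $1$-types with exactly one realizer -- a $\FOte$-specific phenomenon that must be tracked because equality can force uniqueness.

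Next, I assemble $\Psi := \bigvee_{(\mathcal{E}_{+},\mathcal{E}_{!})\in\getsummary{\Phi}} \Psi_{(\mathcal{E}_{+},\mathcal{E}_{!})} \wedge \Psi_{\mathrm{hom}}$, where $\Psi_{(\mathcal{E}_{+},\mathcal{E}_{!})}$ is the generic $\Psi_{\mathrm{gen},(\mathcal{E}_{+},\mathcal{E}_{!})}$ enriched with two groups of $\FOte$ conjuncts: (i) $\forall xy.(\Tp_{\tp}(x) \wedge \Tp_{\tp}(y) \Rightarrow x = y)$ for every king $\tp \in \mathcal{E}_{!}$; and (ii) for every $\beta_i$ and every $1$-type $\tp \in \mathcal{E}_{+}$, a $\forall x\exists y$-clause asserting that every element of $1$-type $\tp$ admits a witness whose pair $2$-type $\tp^* \in \mathcal{E}_{+}$ satisfies $\tp^*|_{v_1}=\tp$ and $\tp^* \models \beta_i$. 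No extra conjunct for the universal $\alpha$ is needed, because membership in $\mathcal{E}_{+}$ already restricts to $2$-types satisfying $\alpha$ (any $\tp \in \mathcal{E}_+$ must be realized in some $\mathfrak{A} \models \Phi$). All conjuncts use only $x,y$, so $\Psi \in \FOte$; the model summary $\getsummary{\Phi}$ is finite and effectively computable.

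The inclusion $\homcl{\getmodels{\Phi}} \subseteq \getmodels{\Psi}|_{\tau}$ follows from \Cref{lem:PsiGen}: for $\mathfrak{A} \models \Phi$ and $h\colon \mathfrak{A} \to \mathfrak{B}$, the expansion $\mathfrak{B}' := \mathfrak{B}\cdot h(\mathfrak{A}\cdot\mathcal{E}|_{\sigma})$ satisfies $\Psi_{\mathrm{gen},\mathcal{E}(\mathfrak{A})} \wedge \Psi_{\mathrm{hom}}$; the king-clauses (i) hold because every $\Tp_{\tp}$-tuple of $\mathfrak{B}'$ is the $h$-image of a $\tp$-realizer in $\mathfrak{A}$, and the uniqueness of such a realizer there forces the image to be unique too; the existential clauses (ii) express type-theoretic consequences of the $\forall x\exists y.\beta_i$-axioms satisfied by $\mathfrak{A}$, which are then transported to $\mathfrak{B}'$ along $h$ since they only assert the existence of appropriately type-tagged tuples.

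The backward inclusion is the main technical step. Given $\mathfrak{B}' \models \Psi_{(\mathcal{E}_{+},\mathcal{E}_{!})} \wedge \Psi_{\mathrm{hom}}$ for some summary, I construct $\mathfrak{A} \models \Phi$ with a homomorphism $\mathfrak{A} \to \mathfrak{B}'|_{\tau'}$ by the standard $\FOte$ \emph{unraveling}: for every non-king $1$-type $\tp \in \mathcal{E}_{+}$ I take $k\cdot|\mathcal{E}_{+}|$ fresh copies of an arbitrary $\tp$-realizer in $\mathfrak{B}'$; for every king and for every constant I take the unique realizer. I assign $1$-types from the $\Tp_{\tp}$-predicates and, for each $\forall x\exists y.\beta_i$-obligation at a non-king copy, invoke clause (ii) to commit a distinct witness copy together with a $\beta_i$-compatible $2$-type from $\mathcal{E}_{+}$; for every remaining pair, I pick any $2$-type from $\mathcal{E}_{+}$ extending the two endpoints' $1$-types (available by \Cref{lem:eligclosure}). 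The forgetful map sending each copy to its underlying $\mathfrak{B}'$-element is then a homomorphism by $\Psi_{\mathrm{hom}}$, and $\mathfrak{A} \models \Phi$ because $\alpha$ is entailed by every $2$-type in $\mathcal{E}_{+}$. The main obstacle will be orchestrating these $2$-type choices so that no unintended equalities are imposed -- particularly around kings, constants, and reflexive pairs -- while still furnishing enough fresh witness copies for all $\forall\exists$-obligations; this is exactly the point where distinguishing $\mathcal{E}_{!}$ from $\mathcal{E}_{+}$ becomes indispensable.
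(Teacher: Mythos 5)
Your proposal follows the paper's general plan (Scott normal form, eligible $1$- and $2$-types, $\Psi_{\mathrm{gen}}$, king clauses, $\forall\exists$-clauses) but drops one essential conjunct that the paper includes, and this breaks the backward inclusion. The paper's $\Psi_{(\mathcal{E}_+,\mathcal{E}_!)}$ also contains a \emph{pairwise completeness} conjunct, roughly
\[
\forall x y.\; \sig{Tp}_{\tp'}(x) \wedge \sig{Tp}_{\tp''}(y) \;\Rightarrow\;
\bigvee_{\substack{\tp \in \mathcal{E}_{+,2} \\ \tp|_{v_1}=\tp',\ \tp|_{v_2}=\tp''}} \sig{Tp}_{\tp}(x,y)
\quad \text{for all } \tp',\tp'' \in \mathcal{E}_{+,1},
\]
which forces \emph{every} pair of type-tagged elements of $\mathfrak{B}'$ to carry a $2$-type atom. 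You explicitly discard this, arguing that ``no extra conjunct for the universal $\alpha$ is needed,'' but that argument addresses the wrong obligation: membership in $\mathcal{E}_+$ does guarantee $\alpha$ is satisfied by whatever $2$-type you choose in $\mathfrak{A}$, but it does \emph{not} guarantee that the forgetful map $\mathfrak{A} \to \mathfrak{B}'|_{\tau'}$ is a homomorphism.

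Concretely, in your construction you say ``for every remaining pair, I pick any $2$-type from $\mathcal{E}_+$ extending the two endpoints' $1$-types'' and then claim the forgetful map is a homomorphism ``by $\Psi_{\mathrm{hom}}$.'' That claim fails: $\Psi_{\mathrm{hom}}$ only forces the positive literals of $\tp$ to hold on $\bold{b}$ \emph{when} $\bold{b} \in \sig{Tp}_\tp^{\mathfrak{B}'}$. If $(b,b')$ lies in no $2$-type predicate of $\mathfrak{B}'$ (and without the pairwise conjunct, nothing in your $\Psi$ rules this out), then picking a $2$-type $\tp''$ for the corresponding pair in $\mathfrak{A}$ may introduce a positive $\tau$-atom $\sigP(v_1,v_2)$ that does not hold for $(b,b')$ in $\mathfrak{B}'$, and the forgetful map is not a homomorphism. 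The pairwise conjunct is exactly what lets the construction \emph{choose} a $2$-type whose positive literals are already witnessed in $\mathfrak{B}'$, and the paper's proof invokes it precisely at this point (and, via the ``$\gettp$'' assignment, also to coherently split the possible $2$-types across the index copies). Also, a minor but related imprecision: you cite \Cref{lem:eligclosure} for the existence of a $2$-type in $\mathcal{E}_+$ extending two $1$-types in $\mathcal{E}_+$, but that lemma only gives closure of $\mathcal{E}_+$ under subtypes; the needed existence follows from the fact that $\mathcal{E}_+$ is the type summary of an actual model of $\Phi$ --- a different argument.
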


\begin{proof}
Thanks to \Cref{lem:normalformOK}, we can w.l.o.g. assume $\Phi$ to be in Scott's normal form, i.e. 
\begin{equation}
\Phi = \forall xy. \varphi(x,y) \wedge \bigwedge_i \forall x \exists y.\psi_i(x,y),
\end{equation}
 where $\varphi$ and all $\psi_i$ are quantifier-free $\FOte$ $\tau$-formulae.
   	
Let $\bold{v} = \{ v_{1}, v_{2} \}$ be the set of dedicated type variables we will use in this proof.
We now define the set $\mathcal{E}$ of eligible types as all rigid $1$-types and $2$-types (using variables from $\bold{v}$) wrt. $\tau$. Note that $\mathcal{E}$ is closed under variable permutations.
%
	As there are only two permutations on $\{ v_{1}, v_{2} \}$ (the identity and the permutation $\kappa$ mapping $v_1$ to $v_2$ and $v_2$ to $v_1$) we will denote for every $\tp\in\mathcal{E}$ by $\tp^-$ its flipped version $\kappa\tp$. Note that this leaves constants untouched.
	
Let now 
\begin{equation} 
\Psi = \Psi_\mathrm{hom}\ \wedge \hspace{-2ex}\bigvee_{(\mathcal{E}_\text{+},\mathcal{E}_!)\in \getsummary{\Phi}} \hspace{-2ex} \Psi_{(\mathcal{E}_\text{+},\mathcal{E}_!)},
\end{equation}  
where $\Psi_{(\mathcal{E}_\text{+},\mathcal{E}_!)}$ is obtained as the conjunction over $\Psi_{\mathrm{gen}, (\mathcal{E}_{+}, \mathcal{E}_{!})}$ and the following sentences (letting $\mathcal{E}_{\text{+},1}$ and $\mathcal{E}_{\text{+},2}$ denote the $1$- and $2$-types in $\mathcal{E}_{\text{+}}$, respectively, and letting $\mathcal{E}_{!,1}$ denote the $1$-types in $\mathcal{E}_{!}$):
	\noindent
	\begin{align}
%
%
%
		\!\!\!\!\!\forall xy. \sig{Tp}_\tp(x) \wedge \sig{Tp}_\tp(y) & \Rightarrow x=y & & \hspace{-1ex}\mbox{for }\tp \in \mathcal{E}_{\text{!},1} \label{proj0}\\
%
%
%
%
%
		\!\!\!\!\forall xy. \sig{Tp}_{\tp'} (x) \wedge \sig{Tp}_{\tp''} (y) & \Rightarrow \hspace{-8.5ex} \bigvee_{{\scriptstyle \tp \in \mathcal{E}_{\text{+},2}} \atop {\scriptstyle \quad\quad\quad\  \tp|_{v_1} = \tp',\,  \tp|_{v_2} = \tp''}} \hspace{-8.5ex} \sig{Tp}_{\tp} (x,y) & &\hspace{-1ex}\mbox{for } \tp'\!,\!\tp''\! \in\hspace{-1pt} \mathcal{E}_{\text{+},1}\hspace{-2ex} \label{proj5}\\
		\forall x \exists y. \Big(\hspace{-2.2ex}\bigvee_{\scriptstyle\quad\tp \in \mathcal{E}_{\text{+},1}} \hspace{-2ex} \sig{Tp}_\tp (x) & \Rightarrow \hspace{-7.5ex} \bigvee_{{\scriptstyle\quad\tp' \in \mathcal{E}_{\text{+},2}}\atop{\scriptstyle\quad\quad\quad \tp' \models \psi_i(v_1,v_2)}} \hspace{-7.2ex} \sig{Tp}_{\tp'} (x,y)\Big) & &\hspace{-1ex}\hspace{0.5ex} \psi_i \mbox{ from } \Phi \label{proj6}
	\end{align} 
	
	Clearly, $\Psi$ is a \FOte{} $\tau{\uplus}\sigma$-sentence.
	Moreover, we obtain $\homcl{\getmodels{\Phi}} = \getmodels{\Psi}|_\tau$:
	
	``$\subseteq$'': Let $\mathfrak{A} \in \getmodels{\Phi}$ and let $h\colon\mathfrak{A} \to \mathfrak{B}$. Obtain $\mathfrak{A}'=\mathfrak{A}\cdot\mathcal{E}$ and $\mathfrak{B}'=\mathfrak{B}\cdot h(\mathfrak{A}\cdot\mathcal{E}|_\sigma)$ as described before. Pick $(\mathcal{E}_\text{+},\mathcal{E}_!) = \mathcal{E}(\mathfrak{A})$. Then
	$\mathfrak{A}' \models \Psi_{(\mathcal{E}_\text{+},\mathcal{E}_!)}$ because of $\Phi$-modelhood and $\mathfrak{A}' \models \Psi_\mathrm{hom}$ by construction. 
	Furthermore, by applying \Cref{lem:PsiGen} $\mathfrak{A}'$ as well as $\mathfrak{B}'$ satisfy $\Psi_{\mathrm{gen}, (\mathcal{E}_{+}, \mathcal{E}_{!})}$.
	Next, Formulae \ref{proj0} holds by definition of $\mathfrak{A}'$ and the fact that $\mathcal{E}_{!, 1}$ contains the $1$-types realized exactly once. Hence every type predicate associated to such a $1$-type can only be satisfied by exactly one element from $\mathfrak{A}'$. By definition of $\mathfrak{B}'$ Formulae \ref{proj0} also hold there: Since every type predicate from $\sigma$ is interpreted in $\mathfrak{B}'$ as the homomorphic image of the interpretation of the very same type predicate in $\mathfrak{A}'$, and those are singleton sets, we obtain the claimed satisfaction.
	Now let $b_{1} , b_{2}$ be elements from $\mathfrak{B}'$ such that there are types $\tp', \tp'' \in\mathcal{E}_{+,1}$ with $b_{1} \in\sig{Tp}_{\tp'}^{\mathfrak{B}'}$ and $b_{1} \in\sig{Tp}_{\tp'}^{\mathfrak{B}'}$. By definition there are $a_{1}$ and $a_{2}$ in $\mathfrak{A}'$ with $a_1 \in\sig{Tp}_{\tp'}^{\mathfrak{A}'}$ and $a_1 \in\sig{Tp}_{\tp'}^{\mathfrak{A}'}$ such that $h(a_1) = b_1$ and $h(a_2) = b_2$.  Without loss of generality we now assume $\bold{v}_{\tp'} = \{v_1 \}$ and $\bold{v}_{\tp''} = \{ v_2 \}$. If this were not the case, using the closure under variable permutation we could enforce this condition. Note that it will be only important for the next step in a technical way, to extract a $2$-type that is rigid. Because now we obtain by definition of $\mathfrak{A}'$ (and the uniqueness of the realized types of elements/tuples) that $a_1$ realizes $\tp'$, $a_2$ realizes $\tp''$ and $(a_{1}, a_{2})$ realizes some type $\tp$ that naturally satisfies $\tp|_{v_1} = \tp'$ and $\tp|_{v_2} = \tp''$. Hence $(a_1, a_2) \in\sig{Tp}_{\tp}^{\mathfrak{A}'}$ and applying $h$ we obtain $(b_1, b_2) = h(a_1, a_2) \in \sig{Tp}_{\tp}^{\mathfrak{B}'}$.
	This argument yields \ref{proj5}.
	Finally we discuss that $\mathfrak{B}'$ satisfies \ref{proj6}. Suppose $b$ be in $\mathfrak{B}'$ and let $\tp\in\mathcal{E}_{+, 1}$ be some type such that $b \in\sig{Tp}_{\tp}^{\mathfrak{B}'}$. Hence there is $a$ in $\mathfrak{A}'$ such that $h(a)=b_1$ and $a \in\sig{Tp}_{\tp}^{\mathfrak{A}'}$. As $\mathfrak{A}'$ as an expansion has the same domain as $\mathfrak{A}$ and $\mathfrak{A}\vDash \Phi$ we obtain an element $a_{i}$ such that $(a, a_i)$ satisfies $\psi_{i}(x,y)$ for all $\forall\exists$-conjuncts in $\Phi$. The type $\tp(v_1,v_2)$ realized by $(a,a_i)$ is in $\mathcal{E}_{+, 2}$ and we have $\tp'\vDash\psi_{i}(v_1, v_2)$. Hence $(a,a_i) \in\sig{Tp}_{\tp'}^{\mathfrak{A}'}$. Applying $h$ we first obtain a $b_i = h(a_i)$ such that also $(b,b_i) \in\sig{Tp}_{\tp'}^{\mathfrak{B}'}$ holds. By this argument \ref{proj6} holds in $\mathfrak{B}'$.
	Consequently $\mathfrak{B}'$ must be a model of $\Psi$, hence $\mathfrak{B} \in \getmodels{\Psi}|_\tau$.


	``$\supseteq$'': 	Let $\mathfrak{B}'$ be a model of $\Psi$ and let $\Psi_{(\mathcal{E}_\text{+},\mathcal{E}_!)}$ be the disjunct made true in the rear part of $\Psi$. 
		The crucial observation about the structure $\mathfrak{B}' $ is that in general, elements $a$ or a tuple $(a,b)$ from $B'$ may satisfy more than one type predicate. Note that if this would not be the case, we'd easily find a weak substructure $ \mathfrak{A}'$ of $\mathfrak{B}$ which is projectively in $ \getmodels{\Phi}$. This holds since the type predicates inherit the satisfiability of $\Phi$. In order to overcome this difficulty of ``ambiguity of types'' we 
	construct a structure $\mathfrak{A} \in \getmodels{\Phi}$ and a homomorphism $\mathfrak{A} \to \mathfrak{B}'|_\tau$ as follows: first construct the $\sigma$-structure $\mathfrak{A}'$ by letting 
	\begin{equation} 
	A' = \{ (b,\tp) \mid b\in B, b\in \sig{Tp}_\tp^\mathfrak{B}, \tp\in \mathcal{E}_{\text{+},1}\}.\label{projdef1}
    \end{equation}  
	For the unary type predicates, let 
	$\sig{Tp}_\tp^{\mathfrak{A}'} = A' \cap (B \times \{\tp\})$ 
	and for binary type predicates, let
	\begin{align*}
		& \sig{Tp}_{\tp}^\mathfrak{A'} = \Big\{ \big((b,\tp'),\!(b'\!,\tp'')\big)\ \Big|\ 
		\tp|_{v_1}{=}\tp', \tp|_{v_2}{=}\tp'', (b,b') \in \sig{Tp}_{\tp}^\mathfrak{B} \Big\} \label{eq75}
	\end{align*} 
	In a next step, we define $\mathfrak{A}$ from $\mathfrak{A}'$ as follows: let $k$ be the maximum number of distinct $\tp'' \in \mathcal{E}_{\text{+},2}$ with $(a,a') \in \sig{Tp}_{\tp''}^\mathfrak{A'}$ across all pairs $(a,a')$ from $A'$. Define a function
	$\gettp \colon A' \times A' \times \{0,\ldots,k-1\} \to \mathcal{E}_{\text{+},2}$ such that for every $a,a' \in A'$, the set $\{ \gettp(a,a',i) \mid 0\leq i < k \}$ contains all and only the  distinct $\tp''$ with $(a,a') \in \sig{Tp}_{\tp''}^\mathfrak{A'}$ (with possible repetitions) and $\gettp(a,a',i) = \gettp(a',a,i)^-$. Note that the last identity is possible by (\ref{eq: gen 3plus}).
	Now, let 
	\begin{equation} 
	A_! = \{(b,\tp) \in A' \mid \tp \in \mathcal{E}_{\text{!},1} \}
	\end{equation}  
	and let
	\begin{equation} 
	A = A_! \times \{0\} \cup (A' \setminus A_!) \times \{0,\ldots, k-1\}. 
	\end{equation}  

Note that $A_!$ contains for every $\tp \in \mathcal{E}_{\text{!},1}$ exactly one element of the form $(b,\tp) $ since $\mathfrak{B}$ satisfies (\ref{proj0}).
	We assign relations to the type predicates as follows (for all $\tp \in \mathcal{E}_{\text{+},1}$ and $\tp'' \in \mathcal{E}_{\text{+},2}$):
\begin{eqnarray}
	\sig{Tp}_\tp^\mathfrak{A} & \hspace{-2ex} = \hspace{-2ex} & \{ (a,i) \mid a \in \sig{Tp}_\tp^\mathfrak{A'} \} \text{ for } \tp \in \mathcal{E}_{\text{+},1} \label{projdef2}\\
    \sig{Tp}_{\tp''}^\mathfrak{A} & \hspace{-2ex} = \hspace{-2ex} & \{ ((a,i),(a',j)) \mid \gettp(a,a',i\,{+}\,j\!\! \mod k) = \tp''\}. \quad\quad 	     
\end{eqnarray}

It follows from our construction that the structure $\mathfrak{A}$ satisfies formulae (\ref{eq: gen 1}), (\ref{eq: gen 2}), and (\ref{proj0}). To see that (\ref{eq: gen 3}) is satisfied consider $((a,i),(a',j))\in A^2$ such that  $((a,i),(a',j)) \in \sig{Tp}_\tp^\mathfrak{A}$ holds. By the definition of $\sig{Tp}_\tp^\mathfrak{A}$ we get $\gettp(a,a',i\,{+}\,j\!\! \mod k) = \tp$. 
	Let 
	 $(a,a')=((b,\tp^a),(b',\tp^{a'})   )$.
	Since $(a,a')\in \sig{Tp}_{\tp}^{{\mathfrak{A}'}}$ holds  by definition of $\gettp$ we get by the definition of  $\sig{Tp}_{\tp}^{\mathfrak{A}'}$ that $\tp|_{v_1}=\tp^a$ holds.
	By the definition of  $\sig{Tp}_{\tp}^{\mathfrak{A}'}$ and since $\mathfrak{B}$ satisfies (\ref{eq: gen 3}) we have that $(b,\tp^a)$ is in $\sig{Tp}_{\tp^a}^{\mathfrak{A}'}$.
	This implies that $((b,\tp^a),i)$ is in $\sig{Tp}_{\tp^a}^{\mathfrak{A}}$. Since  $\tp|_{v_1}=\tp^a$ holds and the same argument works for $v_{2}$ and $\tp^{a'}$, we showed that (\ref{eq: gen 3}) is satisfied in $\mathfrak{A}$.

Formulae (\ref{eq: gen 3plus})  clearly hold in $\mathfrak{A}$. In consequence, $\Psi_{(\mathcal{E}_{+}, \mathcal{E}_{!})}$ as well as (\ref{proj0}) holds in $\mathfrak{A}$.

	For signature elements from $\tau$, choose $\cdot^\mathfrak{A}$ such that all (pairs of) domain elements realize exactly the types as indicated by the type predicates. This is possible since, by construction, there is exactly one $2$-type-predicate holding for every pair of domain elements.
   Furthermore, every domain element has been assigned exactly two $1$-type-predicates (whose associated types only differ in the used variable name but are otherwise identical by (\ref{eq: gen 3plus})) by the definitions (\ref{projdef1}) and  (\ref{projdef2}).
		The $1$-type-predicates  also fit with all the two-type predicates 
	by what we have shown before.
	
	We now claim that $\mathfrak{A}|_\tau$ is a model of $\Phi$. To see this, consider an arbitrary element $a=((b,\tp),i)\in A$. Since $\mathfrak{B}$ satisfies (\ref{proj6}), there exists a $b'\in B$ such that that the inner implication in (\ref{proj6}) is satisfied (for $x=b$ and $y=b'$). 
	By the construction of $\mathfrak{A}$, we know  that $b\in \sig{Tp}_{\tp}^{\mathfrak{B}}$ holds and therefore the premise of the implication is satisfied. This implies the existence of a type $\tp'$ such that $(b,b')\in \sig{Tp}_{\tp'}^{\mathfrak{B}}$ and
	$	\tp' \models \psi_i(v_1,v_2)$ hold. 
	By the definition of $\gettp$, we can find a $j$ such that $\gettp( (b,\tp),(b',\tp'|_{v_2}),i\,{+}\,j\!\! \mod k) = \tp'$. Therefore, we get for $a'=((b',\tp'|_{v_2}),j) $ that $(a,a') \in \sig{Tp}_{\tp'}^{\mathfrak{A}} $ holds. By the definition of the $\tau$-relations from $\mathfrak{A}$, it follows that $\psi_{j}(a,a')$ holds in $\mathfrak{A}$. 
		The proof for the conjunct $\forall xy. \varphi(x,y) $ is by an analogous (but easier, since ``type-internal'') use of formula (\ref{proj5}).
		Consider arbitrary elements $a=((b, \tp),i)\in A$ and $a'=((b', \tp'), i')\in A$. By construction of $\mathfrak{A}$ we know that $b\in\sig{Tp}_{\tp}^{\mathfrak{B}}$ and $b'\in\sig{Tp}_{\tp'}^{\mathfrak{B}}$. By \ref{proj5} there is a at least one $\tp''\in\mathcal{E}_{+, 2}$ with $\tp''|_{v_1}=\tp$ and $\tp''|_{v_2}=\tp'$ such that $(b,b')\in\sig{Tp}_{\tp''}^{\mathfrak{B}}$. Hence by definition of $\gettp$ one of those types $\tp''$ satisfies $\gettp((b, \tp), (b', \tp'), i\,{+}\,i'\!\! \mod k) = \tp''$. Consequently $(a,a')\in\sig{Tp}_{\tp''}^{\mathfrak{A}}$ by construction. By the definition of the $\tau$-relations from $\mathfrak{A}$ we immediately obtain that $(a,a')$ realizes $\tp''$. Now note that, as $\tp''\in\mathcal{E}_{+,2}$, $\tp''\vDash\varphi(v_1,v_2)$ since $\mathcal{E}_{+}$ collects all eligible types realized by some $\mathfrak{A}^{\ast}\in\getmodels{\Phi}$. Hence $(a,a')$ satisfies $\varphi$ in $\mathfrak{A}$. Thus the conjunct $\forall xy. \varphi(x,y)$ holds for $\mathfrak{A}$.
	
By our construction, $h\colon \mathfrak{A}|_\tau \to \mathfrak{B}|_\tau$ with $((b,\tp),i) \mapsto b$ is a homomorphism, which proves the statement.
\end{proof}


Before characterizing the homclosures of a \GFO{} or a \TGF{} sentence, we first discuss normal forms for each fragment, which we will facilitate in the upcoming proofs.

\begin{lemma}\label{lem:GFONF}
	Let $\Phi$ be a \GFO{} $\tau$-sentence. Then there exists a \emph{normal form} $\Phi'$ over an extended signature $\sigma \uplus \tau \uplus \{ \sig{D} \}$ (with $\sig{D}$ being a unary relational symbol) such that $\Phi'$ has the form $\Phi' = \Phi_{\sig{D}}\land\Phi'_{\forall} \land \Phi'_{\forall\exists}$ where
	\begin{align}
		\Phi_{\sig{D}} &= \forall x. \sig{D}(x) ,\label{GFONF phi_D} \\
		\Phi'_{\forall} &= \bigwedge_{i} \forall \bold{x}.\Big(\alpha_{i}(\bold{x})\Rightarrow\vartheta_{i}[\bold{x}]\Big) \label{eq: GFONF phi_A}, \\
		\Phi'_{\forall\exists} &= \bigwedge_{j}\forall \bold{z}.\Big(\beta_{j}(\bold{z})\Rightarrow\exists\bold{y}.\gamma_{j}(\bold{yz})\Big) \label{eq: GFONF phi_AE},
	\end{align}
	with $\vartheta_{i}$ is a disjunction of literals, $\alpha_{i}, \beta_{j}$ are guard atoms, and $\gamma_{j}$ is an atom,	such that every model of $\Phi$ can be extended to a model of $\Phi'$ and any model of $\Phi'$ restricted to the signature $\tau$ is a model of $\Phi$.
\end{lemma}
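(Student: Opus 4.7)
The plan is to apply a standard flattening/renaming procedure (in the spirit of Scott's normal form, adapted to the guarded fragment) that iteratively names complex subformulas by fresh auxiliary predicates, until only the desired shapes remain. First, I would put $\Phi$ into negation normal form and syntactically rewrite every guarded quantification into its canonical shape ($\forall\bold{x}.(\alpha(\bold{y}) \Rightarrow \varphi[\bold{y}])$ or $\exists\bold{x}.(\alpha(\bold{y})\wedge\varphi[\bold{y}])$). Since $\GFO$ only forbids \emph{unguarded} quantification over formulas with more than one free variable, I would introduce a fresh unary predicate $\sig{D}$ together with the conjunct $\Phi_{\sig{D}} = \forall x.\sig{D}(x)$, and use $\sig{D}(x)$ as a canonical guard whenever a quantification with one free variable appears without an explicit guard; this does not leave $\GFO$ and lets us uniformly treat every quantifier as guarded.

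The core work is a recursive descent over subformulas that introduces, for each non-atomic subformula $\psi(\bold{x})$ appearing in a ``problematic'' position, a fresh predicate $\sig{P}_\psi$ of matching arity. If $\psi$ sits underneath an outer guarded universal quantification as part of a complex body, we replace the occurrence of $\psi(\bold{x})$ by $\sig{P}_\psi(\bold{x})$ and add the defining implication $\forall\bold{x}.(\alpha(\bold{y}) \Rightarrow (\sig{P}_\psi(\bold{x}) \Leftrightarrow \psi(\bold{x})))$ (where $\alpha$ is the nearest guard). Iterating this pushes the remaining $\forall$-implications into shape \eqref{eq: GFONF phi_A} with $\vartheta_i$ a disjunction of literals, by propagating $\wedge$'s outside (turning one implication into several) and naming away all strictly positive subformulas that are not literals. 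Symmetrically, an implication whose head has the shape $\exists\bold{y}.\gamma(\bold{y},\bold{z})$ with $\gamma$ not atomic is handled by naming $\gamma$: replace it with $\exists\bold{y}.\sig{P}_\gamma(\bold{y},\bold{z})$ and add the definition of $\sig{P}_\gamma$, which itself may be further flattened by the $\forall$-part of the procedure. This yields the conjuncts \eqref{eq: GFONF phi_AE}.

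For the correctness of the reduction I would argue the two directions separately. Given $\mathfrak{A} \models \Phi$, extend $\mathfrak{A}$ to $\mathfrak{A}'$ by setting $\sig{D}^{\mathfrak{A}'} := A$ and, for each introduced predicate $\sig{P}_\psi$, $\sig{P}_\psi^{\mathfrak{A}'} := \{\bold{a} \mid \mathfrak{A} \models \psi(\bold{a})\}$; a straightforward induction over the construction shows $\mathfrak{A}' \models \Phi'$. Conversely, for $\mathfrak{B}' \models \Phi'$, the defining implications attached to each fresh $\sig{P}_\psi$ together with $\Phi_{\sig{D}}$ guarantee that erasing the fresh predicates recovers a $\tau$-model of $\Phi$, by unfolding $\sig{P}_\psi$ back into $\psi$ inside $\Phi$.

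The main subtlety I expect is preserving guardedness throughout the renaming: when we abstract a subformula $\psi(\bold{x})$ whose free variables $\bold{x}$ are not jointly contained in some atom of the surrounding context, just writing $\sig{P}_\psi(\bold{x}) \Leftrightarrow \psi(\bold{x})$ as a top-level axiom would break the guardedness discipline. The fix is to attach the definition of $\sig{P}_\psi$ under the \emph{same} guard $\alpha(\bold{y})$ already present at the site where $\psi$ occurs (this is sound because only guarded $\bold{x}$-tuples ever witness $\sig{P}_\psi$), and, in the few cases where we must cross a quantifier, to reuse $\sig{D}$ to guard single-variable definitions; larger-arity definitions always inherit an existing guard from the enclosing $\forall$ or $\forall\exists$ pattern. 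Careful bookkeeping of these guards is where the bulk of the proof goes; the remaining combinatorics (splitting conjunctions, pushing disjunctions in/out, renaming atoms) are routine.
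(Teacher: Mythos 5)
Your overall plan — iterative Tseitin-style renaming of subformulas by fresh predicates, plus a sentinel unary predicate $\sig{D}$ to guard otherwise-unguarded single-variable quantifications — is in the same spirit as the paper's proof. But the specific way you propose to write the defining axioms is where the construction breaks, and it is precisely the point you flag as "the main subtlety" without actually resolving it.

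You propose to define each fresh $\sig{P}_\psi$ by attaching the biconditional under the guard $\alpha(\bold{y})$ from the occurrence site:
$\forall\bold{y}.\bigl(\alpha(\bold{y}) \Rightarrow (\sig{P}_\psi(\bold{x}) \Leftrightarrow \psi(\bold{x}))\bigr)$.
Splitting the biconditional and looking at the \emph{unfold} direction $\forall\bold{y}.\bigl(\alpha(\bold{y}) \Rightarrow (\neg\sig{P}_\psi(\bold{x}) \vee \psi(\bold{x}))\bigr)$, if $\psi(\bold{x}) = \exists\bold{y}'.(\beta'(\bold{y}'\bold{x}) \wedge \varphi')$, the body is a disjunction of a literal with an existential quantifier, which cannot be put into either of the target shapes \eqref{eq: GFONF phi_A} or \eqref{eq: GFONF phi_AE}. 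In the \emph{fold} direction, pushing the negated existential into its universal NNF form leaves you needing a universal quantification over $\bold{y} \cup \bold{y}'$, for which neither $\alpha(\bold{y})$ nor $\beta'(\bold{y}'\bold{x})$ alone is a guard, and their conjunction is not a single guard atom. So "attach the definition under the same guard $\alpha$" does not yield a normal-form GFO sentence, and the bookkeeping you defer is not routine — it is where the argument must change.

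The paper's construction avoids both problems by \emph{not} using the occurrence-site guard at all. It only names subformulas of the specific shape $\chi(\bold{z}) = \exists\bold{y}.\beta(\bold{y}\bold{z})\wedge\varphi[\bold{y}\bold{z}]$ with $\varphi$ quantifier-free, and splits the definition into two one-directional implications, each with its own \emph{intrinsic} guard: the unfold direction $\forall\bold{z}.(\sig{R}_\chi(\bold{z})\Rightarrow\exists\bold{y}.\beta\wedge\varphi)$ is guarded by the freshly introduced atom $\sig{R}_\chi$ itself, and the fold direction $\forall\bold{y}\bold{z}.(\beta(\bold{y}\bold{z})\Rightarrow(\varphi\Rightarrow\sig{R}_\chi(\bold{z})))$ is guarded by the existential subformula's own guard $\beta$, which correctly covers all the bound variables. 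To guarantee that the targeted subformula always has this existential shape, the paper first does a preprocessing that you do not mention: it pads every relation with an extra argument (a universally present fresh variable $u$) so that every proper subformula has at least one free variable, introduces $\sig{D}$ as the outermost guard, and \emph{double-negates away all inner universals} so that every proper quantification is existential. The iterative naming then always applies to formulas of the fixed shape above, and the $\forall\exists$ conjuncts are later further decomposed (with two more fresh predicates $\sig{P}_\Theta,\sig{Q}_\Theta$) to reach the atomic body $\gamma_j$. These guard choices and the preliminary normalization are the content of the proof; without them, your recursion does not close.
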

\begin{proof}
	Let $\Phi$ be a \GFO{} $\tau$-sentence.
	The construction will be done iteratively. As a note on notation we will collect for every iteration step $i$ in the first big transformation some sentences in sets $\bold{A}^{i}_{\Phi}$ and $\bold{E}^{i}_{\Phi}$. The collected sentences will be of specific forms: Sentences from $\bold{A}^{i}_{\Phi}$ have the form 
	\begin{equation}
		\forall\bold{x}.\alpha(\bold{x}) \Rightarrow \vartheta[\bold{x}]
	\end{equation}
	(where $\alpha$ is a guard atom and $\vartheta$ is quantifier-free) and those from $\bold{E}^{i}_{\Phi}$ have the form 
	\begin{equation}
		\forall \bold{z}.\Big(\beta(\bold{z})\Rightarrow\exists\bold{y}.\gamma(\bold{yz}) \land\varphi[\bold{yz}]\Big)
	\end{equation}
	(where $\beta$ and $\gamma$ are guard atoms and $\varphi$ is quantifier-free). Finally, $\bold{A}^{i}_{\Phi}$ and $\bold{E}^{i}_{\Phi}$ will constitute two finite monotonically increasing sequences of sets, i.e. $\bold{A}^{i}_{\Phi} \subseteq \bold{A}^{i+1}_{\Phi}$ and $\bold{E}^{i}_{\Phi} \subseteq \bold{E}^{i+1}_{\Phi}$ for every $i\ge 0$ such that $\bold{A}^{i+1}_{\Phi}$ and $\bold{E}^{i+1}_{\Phi}$ are defined.
	
	If $\chi$ is a (proper) subformula in $\Phi$ and $\psi$ a formula with the same free variables as $\chi$, we will write $\Phi[\chi / \psi]$ for the formula we obtain by replacing every occurrence of $\chi$ in $\Phi$ by $\psi$.
	
	In a preliminary step we will transform $\Phi$ to a sentence $\Phi^{0}$ to ensure that
	\begin{enumerate}
		\item\label{item: GFONF 1 1} every proper subformula of $\Phi^{0}$ has at least one free variable,
		\item\label{item: GFONF 1 2} $\Phi^{0}$ has an outermost universal and guarded quantification (and every other quantification is in the scope of this universal quantifier), and
		\item\label{item: GFONF 1 3} every quantification properly inside the scope of the outermost universal one is existential. 
	\end{enumerate}
	Let $u$ be a fresh variable symbol completely unused. To establish Item \ref{item: GFONF 1 1} we introduce for every relational symbol  $\sig{R}\in \tau$ a fresh symbol $\sig{R}'$ (which we add to $\sigma$) such that $\mathrm{ar}(\sig{R}') = \mathrm{ar}(\sig{R}) + 1$. We then substitute every atom $\sig{R}(\bold{t})$ in $\Phi$ by the corresponding $\sig{R}'(\bold{t}, u)$. The so obtained formula will be denoted by $\psi(u)$. Note that by uniformly using $u$ the formula $\psi(u)$ stays guarded. We furthermore let $\bold{A}^{0}_{\Phi}$ consist of just the sentences $\forall \bold{x}\forall u. \sig{R}'(\bold{x}, u) \Rightarrow \sig{R}(\bold{x})$ for each relational symbol $\sig{R}\in\tau$. Also we set $\bold{E}_{\Phi}^{0}=\emptyset$.
	\newline
	To ensure Item \ref{item: GFONF 1 2} we introduce the fresh relation symbol $\sig{D}$ of arity $1$. We now let $\Psi^{0}$ be the sentence $\forall u. \sig{D}(u) \Rightarrow \psi(u)$.
	\newline
	Ensuring Item \ref{item: GFONF 1 3} is a simple exercise of double negation: Let every occurrence of a subformula from $\psi(u)$ of the form $\forall \bold{x}.\alpha(\bold{xy}) \Rightarrow \varphi[\bold{xy}]$ be substituted by $\lnot \exists \bold{x}. \alpha(\bold{xy}) \land \lnot\varphi[\bold{xy}]$. The so obtained formula will be denoted by $\psi'(u)$.
	
	Hence we obtain a guarded sentence $\Phi^{0} = \forall u. \sig{D}(u) \Rightarrow \psi'(u)$ that satisfies Items \ref{item: GFONF 1 1} - \ref{item: GFONF 1 3}. Note that every model of $\mathfrak{A}\in \getmodels{\Phi}$ can be expanded to one of $\Phi_{\sig{D}} \land \Phi^{0} \land \bigwedge_{\Theta \in \bold{A}^{0}_{\Phi}}\Theta$. To see this, let $\mathfrak{A}'$ (with domain $A$) be the expansion of $\mathfrak{A}$, where $\sig{D}^{\mathfrak{A}'} = A$ and for every $\sig{R}\in \tau$ we set $(\sig{R}')^{\mathfrak{A}'} = \sig{R}^{\mathfrak{A}}\times A$. $\mathfrak{A}'$ satisfies $\Phi_{\sig{D}}\land\Phi^{0} \land \bigwedge_{\Theta \in \bold{A}^{0}_{\Phi}}\Theta$.
	\newline
	On the other hand, if $\mathfrak{A}$ is a model of $\Phi_{\sig{D}}\land\Phi^{0} \land \bigwedge_{\Theta \in \bold{A}^{0}_{\Phi}}\Theta$, then $\mathfrak{A}|_{\tau}$ is one of $\Phi$. To ensure this we added to $\bold{A}^{0}_{\Phi}$ the sentences $\forall \bold{x}\forall u. \sig{R}'(\bold{x}, u) \Rightarrow \sig{R}(\bold{x})$. Note that, since we do not allow empty models, $\Phi_{\sig{D}}$ also ensures the non-emptiness of interpretations of $\sig{D}$.

	We now begin to iteratively ``unnest'' the quantifiers in $\Phi^{0}$.
	Let $\chi_{0}(\bold{z})$ be a proper subformula of $\Phi^{0}$ of the form
	\begin{equation}
		\chi_{0}(\bold{z}) = \exists\bold{y}. \beta(\bold{yz}) \land \varphi[\bold{yz}]
	\end{equation}
	with $\varphi$ being quantifier-free. We introduce a fresh relation symbol $\sig{R}_{\chi_{0}}$ (which we add to $\sigma$) of the same arity as length of $\bold{z}$. We obtain a sentence $\Phi^{1}$ by substituting $\sig{R}_{\chi_{0}}(\bold{z})$ for each occurrence of $\chi_{0}(\bold{z})$ in $\Phi^{0}$, i.e.
\begin{equation}
		\Phi^{1} = \Phi^{0}[\chi_{0}(\bold{z}) / \sig{R}_{\chi_{0}}(\bold{z})] .
	\end{equation}
	Let $\bold{E}^{1}_{\Phi}$ be $\bold{E}^{0}_{\Phi}$ with the added sentence
	\begin{equation}
		\forall\bold{z}. \sig{R}_{\chi_{0}}(\bold{z}) \Rightarrow \exists\bold{y}. \beta(\bold{yz})\land \varphi[\bold{yz}]
	\end{equation}
	and let $\bold{A}^{1}_{\Phi}$ be $\bold{A}^{0}_{\Phi}$ with the added sentence
	\begin{equation}
		\forall\bold{yz}. \beta(\bold{yz}) \Rightarrow (\varphi[\bold{yz}] \Rightarrow R_{\chi_{0}}(\bold{z})) . 
	\end{equation}
	Note that the latter sentence is of the form $\forall\bold{yz}. \beta(\bold{yz}) \Rightarrow \vartheta[\bold{yz}]$ with $\vartheta$ quantifier-free, as $(\varphi[\bold{yz}] \Rightarrow R_{\chi_{0}}(\bold{z}))$ is quantifier-free.
	\newline
	By iterating this process, picking from $\Phi^{i}$ a proper subformula $\chi_{i}(\bold{z})$ of the form \[ \chi_{i}(\bold{z}) = \exists\bold{y}. \beta(\bold{yz}) \land \varphi[\bold{yz}] \] with $\varphi$ being quantifier-free and introducing a fresh relation symbol $\sig{R}_{\chi_{i}}$ obtaining $\Phi^{i+1}$ and enlarging $\bold{A}^{i}_{\Phi}$ and $\bold{E}^{i}_{\Phi}$ to $\bold{A}^{i+1}_{\Phi}$ and $\bold{E}^{i+1}_{\Phi}$ respectively, we will terminate at some $n$ with the sentence
	\begin{equation}
		\Phi^{n} = \forall u. \sig{D}(u) \Rightarrow \sig{R}_{\chi_{n-1}}
	\end{equation}
	and sets $\bold{A}^{n}_{\Phi}$, $\bold{E}^{n}_{\Phi}$. We set $\bold{A}_{\Phi}$ to be $\bold{A}^{n}_{\Phi}$ with $\Phi^{n}$ added and $\bold{E}_{\Phi}= \bold{E}^{n}_{\Phi}$.
	\newline
	Note that in every iteration, every model $\mathfrak{A} \in \getmodels{\Phi_{\sig{D}}\land\Phi^{i-1}}$ can be expanded to one of $\Phi_{\sig{D}}\land\Phi^{i}\land \Theta_{\bold{A},  i}\land \Theta_{\bold{E}, i}$ with $\Theta_{\bold{A}, i}$ being the unique element from the singleton set $\bold{A}^{i}_{\Phi} \setminus \bold{A}^{i-1}_{\Phi}$ and $\Theta_{\bold{E}, i}$ being the unique element from the singleton set $\bold{E}^{i}_{\Phi} \setminus \bold{E}^{i-1}_{\Phi}$. The trick is to interpret $\sig{R}_{\chi_{i-1}}$ to be the set of tuples satisfying $\chi_{i-1}$ in $\mathfrak{A}$.  On the other hand, by ``forgetting'' the relation symbols introduced in the $i$-th iteration we obtain from a model $\mathfrak{A}\in\getmodels{\Phi_{\sig{D}}\land\Phi^{i}\land \Theta_{\bold{A}, i}\land \Theta_{\bold{E}, i}}$ (where $\Theta_{\bold{A}, i}$ is the unique element from the singleton set $\bold{A}^{i}_{\Phi} \setminus \bold{A}^{i-1}_{\Phi}$ and $\Theta_{\bold{E}, i}$ is the unique element from the singleton set $\bold{E}^{i}_{\Phi} \setminus \bold{E}^{i-1}_{\Phi}$) a model of $\Phi_{\sig{D}}\land\Phi^{i-1}$. The choice of $\Theta_{\bold{A}, i}$ and $\Theta_{\bold{E}, i}$ in the construction ensures that this works.
	
	As now every information of our initial $\Phi^{0}$ is stored in the sets $\bold{A}_{\Phi}$ and $\bold{E}_{\Phi}$, the next manipulations will take place inside those sets. First we look at $\bold{E}_{\Phi}$. We know that every sentence from $\bold{E}_{\Phi}$ is of the form
	\begin{equation}
		\forall\bold{xz}. \beta(\bold{xz}) \Rightarrow \exists\bold{y}. \gamma(\bold{yz})\land \varphi[\bold{yz}], 
	\end{equation}
	where $\beta$ and $\gamma$ are respective guard atoms and $\varphi$ is quantifier free. We go through every sentence of $\bold{E}_{\Phi}$ and do the following steps:
	\begin{itemize}
		\item For every $\Theta \in \bold{E}_{\Phi}$ such that
		\begin{equation}
			\forall\bold{xz}. \beta(\bold{xz}) \Rightarrow \exists\bold{y}. \gamma(\bold{yz})\land \varphi[\bold{yz}]
		\end{equation}
		we introduce fresh relation symbols $\sig{P}_{\Theta}$ (arity being equal to the length of $\bold{z}$) and $\sig{Q}_{\Theta}$ (arity being the length of $\bold{yz}$) to $\sigma$.
		\item Then we decompose each $\Theta$ into the three sentences 
			\begin{align}
				\forall\bold{xz}. \beta(\bold{xz}) &\Rightarrow  \sig{P}_{\Theta}(\bold{z}) \label{eq: GFONF 1} \\
				\forall\bold{z}. \sig{P}_{\Theta}(\bold{z}) &\Rightarrow \exists\bold{y}. \sig{Q}_{\Theta}(\bold{yz}) \label{eq: GFONF 2} \\
				\forall\bold{yz}. \sig{Q}_{\Theta}(\bold{yz}) &\Rightarrow \gamma(\bold{yz})\land \varphi[\bold{yz}] \label{eq: GFONF 3} .
			\end{align}
		\item Finally, we substitute in $\bold{E}_{\Phi}$ each sentence $\Theta$ by the corresponding sentence \ref{eq: GFONF 2} obtained in the previous step and furthermore we add to $\bold{A}_{\Phi}$ all the sentences \ref{eq: GFONF 1} and \ref{eq: GFONF 3} yielded by decomposing each $\Theta$.
	\end{itemize}
	This step leaves us with particularly simple sentences in $\bold{E}_{\Phi}$: They are just of the form
	\begin{equation}
		\forall\bold{z}.\beta(\bold{z}) \Rightarrow \exists\bold{y}. \gamma(\bold{yz})
	\end{equation}
	where $\beta$ and $\gamma$ are atoms. Furthermore, $\gamma$ contains all the variables contained in $\beta$.

	Finally we simplify the sentences in the set $\bold{A}_{\Phi}$. First we note, that every sentence $\Theta \in \bold{A}_{\Phi}$ is of the form
	\begin{equation}
		\forall\bold{x}. \alpha(\bold{x}) \Rightarrow \vartheta[\bold{x}]
	\end{equation}
	with $\alpha$ a guard atom and $\vartheta$ being a quantifier-free formula. We now transform $\vartheta$ into conjunctive normal form, i.e. $\vartheta = \bigwedge_{i}\vartheta_{i}$, where the $\vartheta_{i}$ are disjunctions over literals. As the conjunction commutes with the universal quantification, $\varphi$ is equivalent to the sentence
	\begin{equation}
		\bigwedge_{i}\forall\bold{x}. \alpha(\bold{x}) \Rightarrow \vartheta_{i}[\bold{x}] .
	\end{equation}
	Replacing $\Theta$ in $\bold{A}_{\Phi}$ by all the sentences 
	\begin{equation}
		\forall\bold{x}. \alpha(\bold{x}) \Rightarrow \vartheta_{i}[\bold{x}]
	\end{equation}
	of that conjunction and doing this for every $\Theta\in\bold{A}_{\Phi}$ yields us a simpler $\bold{A}_{\Phi}$. To summarize, every sentence of $\bold{A}_{\Phi}$ now has the form 
	\begin{equation}
		\forall\bold{x}. \alpha(\bold{x}) \Rightarrow \vartheta[\bold{x}]
	\end{equation}
	with $\alpha$ a guard atom and $\vartheta$ a disjunction of literals.

	In consequence we obtain the sentence $\Phi' = \Phi_{\sig{D}}\land \Phi'_{\forall}\land\Phi'_{\forall\exists}$ where
	\begin{align}
		\Phi_{\sig{D}} & =  \forall x. \sig{D}(x),\\[2ex]
		\Phi'_{\forall} & =  \bigwedge_{\Theta \in \bold{A}_{\Phi}}\Theta, \text{ and} \\	
		\Phi'_{\forall\exists} & =  \bigwedge_{\Theta\in\bold{E}_{\Phi}}\Theta .
	\end{align}
	Note that $\Phi_{\sig{D}}$ is guarded, hence $\Phi'$ is, too. Finally, as every step ensured the models could be expanded or restricted to satisfy the respective sentences after or before the step, every model of $\Phi$ can be expanded to one of $\Phi'$ and every model of $\Phi'$ can be restricted to get one of $\Phi$.
\end{proof}

As a corollary we obtain a quite  similar normal form for \TGF{}.
\begin{corollary}\label{lem:TGFNF}
	Let $\Phi$ be a \TGF{} $\tau$-sentence. Then there exists a \emph{normal form} $\Phi'$ over an extended signature $\sigma\uplus\tau\uplus\{ \sig{Univ} \}$ (where $\sig{Univ}$ is a binary auxiliary predicate) such that $\Phi'$ has the form $\Phi' = \Phi'_{\sig{Univ}}\land \Phi'_{\forall} \land \Phi'_{\forall\exists} $	where 
	\begin{eqnarray}
		\Phi_{\sig{Univ}} & = & \forall xy. \sig{Univ}(x,y),	\label{eq: TGF Univ}\\[2ex]
		\Phi'_{\forall} & = & \bigwedge_{i}\forall\bold{x}. (\alpha_{i}(\bold{x}) \Rightarrow \vartheta_{i}[\bold{x}]), \text{ and} \label{eq: TGF A}\\	
		\Phi'_{\forall\exists} & = & \bigwedge_{j}\forall\bold{z}. (\beta_{j}(\bold{z}) \Rightarrow \exists\bold{y}.\gamma_{j}(\bold{y}\bold{z})), \label{eq: TGF AE}
	\end{eqnarray}
	with $\vartheta_{i}$ a disjunction of literals, $\alpha_{i}$, $\beta_{j}$ guard atoms, and $\gamma_{j}$ an atom, such that every model of $\Phi$ can be extended to a model of $\Phi'$ and any model of $\Phi'$ restricted to the signature $\tau$ is a model of $\Phi$.
\end{corollary}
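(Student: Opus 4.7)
The plan is to reduce to the analogous \GFO{} result of \Cref{lem:GFONF}. The only syntactic difference between \TGF{} and \GFO{} is that \TGF{} additionally allows unguarded quantification over subformulae with at most one free variable. We bridge this gap with the fresh binary predicate $\sig{Univ}$, axiomatized by $\Phi_{\sig{Univ}} = \forall xy.\sig{Univ}(x,y)$, which forces $\sig{Univ}$ to denote the full binary relation on the domain in every model of the target sentence.

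First, I would traverse the input \TGF{} sentence $\Phi$ inside-out and replace each unguarded quantifier by an equivalent guarded one, using $\sig{Univ}$ as a universal guard. Concretely, an unguarded $\exists x.\varphi$ (whose matrix has at most one free variable, which must be $x$ if any) is replaced by $\exists x.(\sig{Univ}(x,x) \wedge \varphi)$, and analogously $\forall x.\varphi$ by $\forall x.(\sig{Univ}(x,x) \Rightarrow \varphi)$. Since $\sig{Univ}(x,x)$ uses only the bound variable $x$, the resulting atom is a valid \GFO{} guard. The formula $\hat\Phi$ obtained this way is, in conjunction with $\Phi_{\sig{Univ}}$, a bona fide \GFO{} sentence on the extended signature that is equivalent to $\Phi$ on every model where $\sig{Univ}$ is interpreted as the full binary relation.

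Next, I would apply \Cref{lem:GFONF} to $\hat\Phi$, obtaining a \GFO{} normal form of shape $\Phi_{\sig{D}} \wedge \Phi^{\#}_\forall \wedge \Phi^{\#}_{\forall\exists}$ with the required decomposition. The lone unary conjunct $\forall x.\sig{D}(x)$ can be replaced by the logically equivalent (in presence of $\Phi_{\sig{Univ}}$) conjunct $\forall xy.(\sig{Univ}(x,y) \Rightarrow \sig{D}(x))$, which has the desired form $\alpha(\bold{x}) \Rightarrow \vartheta[\bold{x}]$ with guard atom $\sig{Univ}(x,y)$ and literal $\sig{D}(x)$. Folding this reformulated sentence into $\Phi^{\#}_\forall$ yields $\Phi'_\forall$, while $\Phi'_{\forall\exists} := \Phi^{\#}_{\forall\exists}$ is kept as-is, producing $\Phi' = \Phi_{\sig{Univ}} \wedge \Phi'_\forall \wedge \Phi'_{\forall\exists}$ of the required shape.

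The model correspondence transfers directly from \Cref{lem:GFONF}: any $\mathfrak{A} \models \Phi$ extends to $\mathfrak{A}'\models \Phi'$ by setting $\sig{Univ}^{\mathfrak{A}'} = A \times A$ and then invoking the expansion supplied by the \GFO{} result on $\hat\Phi$; conversely, any model of $\Phi'$ restricted to $\tau$ satisfies $\Phi$, because $\Phi_{\sig{Univ}}$ ensures that all $\sig{Univ}(x,x)$-guards introduced in the first step become tautological. The only real point of care is the bookkeeping around the \GFO{} normal form's domain predicate $\sig{D}$: one must ensure that the replacement of $\forall x.\sig{D}(x)$ by the $\sig{Univ}$-guarded variant still fits the prescribed shape of $\Phi'_\forall$ and does not interact with the other conjuncts. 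This is routine but needs to be checked explicitly to complete the reduction.
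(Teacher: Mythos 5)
Your high-level strategy is exactly the one the paper uses: wrap unguarded quantifiers with a $\sig{Univ}$-guard, obtain a $\GFO$ sentence on the extended signature, apply \Cref{lem:GFONF}, and fold the residual $\sig{D}$-conjunct back into the required shape. However, your case analysis of which quantifiers need guarding is incomplete, and this is a real gap.

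You claim that $\TGF$ ``additionally allows unguarded quantification over subformulae with at most one free variable'' and accordingly only treat the pattern $Qx.\varphi[x]$, inserting $\sig{Univ}(x,x)$. But the paper's proof of this corollary (and the $\TGF$ definition in the sense of [RS18]) also permits unguarded quantification where the formula under the quantifier has \emph{two} free variables, so long as after quantification at most one remains free. Concretely it also handles $\exists x.\varphi[x,y]$ and $\exists xy.\varphi[x,y]$ (and the universal analogues). Your substitution is wrong there: replacing $\exists x.\varphi[x,y]$ by $\exists x.(\sig{Univ}(x,x)\wedge\varphi[x,y])$ leaves the free variable $y$ outside the guard atom, so the result is \emph{not} in $\GFO$, and \Cref{lem:GFONF} cannot be applied. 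The correct replacement uses $\sig{Univ}(x,y)$, which covers both variables, and $\exists xy.\varphi[x,y]$ must be rewritten to $\exists xy.(\sig{Univ}(x,y)\wedge\varphi[x,y])$. Without those two additional cases per quantifier, the reduction to $\GFO$ simply does not go through on arbitrary $\TGF$ input.

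The remainder is fine: once $\hat\Phi$ really is a $\GFO$ sentence, applying \Cref{lem:GFONF} and then absorbing $\Phi_{\sig{D}}$ does work. Your way of absorbing it, namely replacing $\forall x.\sig{D}(x)$ by the equivalent (given $\Phi_{\sig{Univ}}$) conjunct $\forall xy.(\sig{Univ}(x,y)\Rightarrow\sig{D}(x))$ and keeping $\sig{D}$ in the auxiliary signature, is a legitimate alternative to the paper's approach of substituting $\sig{Univ}(x,x)$ for every occurrence of $\sig{D}(x)$ and dropping $\Phi_{\sig{D}}$ entirely; both produce a normal form of the stated shape.
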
 
\begin{proof}
	Let $\Phi$ be a \TGF{} $\tau$-sentence. In a preliminary step we guard every unguarded quantification (which can only happen for quantification over one or two variables) inside of $\Phi$ by $\sig{Univ}$. What we mean by this is the following: Substitute every subformula
	\begin{itemize}
		\item of the form $\exists x. \varphi[x]$ \quad \, by $\exists x. \sig{Univ}(x,x) \land \varphi[x]$,
		\item of the form $\exists x. \varphi[x,y]$ \, by $\exists x. \sig{Univ}(x,y) \land \varphi[x,y]$,
		\item of the form $\exists x y. \varphi[x,y]$ by $\exists x y. \sig{Univ}(x,y) \land \varphi[x,y]$,
		\item of the form $\forall x. \varphi[x]$ \quad \, by $\forall x. \sig{Univ}(x,x) \Rightarrow \varphi[x]$,
		\item of the form $\forall x. \varphi[x,y]$ \,  by $\forall x. \sig{Univ}(x,y) \Rightarrow \varphi[x,y]$,
		\item of the form $\forall x y. \varphi[x,y]$ by $\forall x y. \sig{Univ}(x,y) \Rightarrow \varphi[x,y]$.
	\end{itemize}
	We denote the so obtained sentence by $\Psi$. First note that every model of $\Phi_{\sig{Univ}}\land \Psi$ is a model of $\Phi$ after ``forgetting'' $\sig{Univ}$. On the other hand, by interpreting $\sig{Univ}$ as the full binary relation every model of $\Phi$ can be expanded to one of $\Phi_{\sig{Univ}}\land \Psi$.
	More importantly, $\Psi$ itself is a \GFO{} $\tau\uplus\{ \sig{Univ} \}$-sentence. By applying \Cref{lem:GFONF} we hence obtain a normal form over an extended signature $\tau \uplus\{ \sig{Univ} \} \uplus \sigma \uplus\{ \sig{D} \}$ as the sentence $\Psi' = \Phi_{\sig{D}} \land \Psi'_{\forall}\land\Psi'_{\forall\exists}$ where
	\begin{align}
		\Phi_{\sig{D}} &= \forall x. \sig{D}(x) , \\[2ex]
		\Psi'_{\forall} &= \bigwedge_{i} \forall \bold{x}.\Big(\alpha_{i}(\bold{x})\Rightarrow\vartheta_{i}[\bold{x}]\Big) , \\
		\Psi'_{\forall\exists} &= \bigwedge_{j}\forall \bold{z}.\Big(\beta_{j}(\bold{z})\Rightarrow\exists\bold{y}.\gamma_{j}(\bold{yz})\Big) ,
	\end{align}
	with $\vartheta_{i}$ a disjunction of literals, $\alpha_{i}$, $\beta_{j}$ guard atoms, and $\gamma_{j}$ an atom. In a final step we will eliminate every subformula in $\Psi'_{\forall}$ and $\Psi'_{\forall\exists}$ of the form $\sig{D}(x)$ by $\sig{Univ}(x,x)$. We will denote the so obtained $\tau \uplus\{ \sig{Univ} \} \uplus \sigma$-sentences by $\Phi'_{\forall}$ and $\Phi'_{\forall\exists}$ respectively. By dropping $\Phi_{\sig{D}}$ this yields us the desired normal form $\Phi' = \Phi_{\sig{Univ}}\land \Phi'_{\forall} \land \Phi'_{\forall\exists}$.
\end{proof}

Having established these normal forms, we are in a position of establishing further preliminary results. Though we will  first need some notations to actually talk about those.

\begin{definition}\label{def:TBPfitting}
	Let $\Phi$ be a \GFO{} or \TGF{} $\tau$-sentence in normal form, i.e. $\Phi = \Phi_{\sig{D}}\land\Phi_{\forall}\land\Phi_{\forall\exists}$ or $\Phi_{\sig{Univ}}\land\Phi_{\forall}\land\Phi_{\forall\exists}$ respectively, where
	\begin{align}
		\Phi_{\sig{D}} & =  \forall x. \sig{D}(x),	\\[2ex]
		\Phi_{\sig{Univ}} & =  \forall xy. \sig{Univ}(x,y),	\\[2ex]
		\Phi_{\forall} & =  \bigwedge_{i}\forall\bold{x}. (\alpha_{i}(\bold{x}) \Rightarrow \vartheta_{i}[\bold{x}]), \text{ and} \\	
		\Phi_{\forall\exists} & =  \bigwedge_{j}\forall\bold{z}. (\beta_{j}(\bold{z}) \Rightarrow \exists\bold{y}.\gamma_{j}(\bold{y}\bold{z})),
	\end{align}
	with $\vartheta_{i}$ a disjunction of literals, $\alpha_{i}$, $\beta_{j}$ guard atoms, and $\gamma_{j}$ an atom.  Additionally let
	$\mathrm{width}(\Phi)$ be the maximal arity of a relation symbol appearing in $\Phi$ and $\mathcal{E}$ be the set of all guarded types that are also rigid and of order $\le \mathrm{width}(\Phi)$. To every $(\mathcal{E}_{+}, \mathcal{E}_{!}) \in \getsummary{\Phi}$, we associate for each conjunct in
	\begin{itemize}
		\item $\Phi_{\forall}$ a set $T_{i}^{(\mathcal{E}_{+}, \mathcal{E}_{!})}$ containing all pairs $(\tp, \nu)$ where $\tp$ is a type from $\mathcal{E}_{+}$ guarded by $\alpha_{i}$ as witnessed by $\nu$,
		\item $\Phi_{\forall\exists}$ a set $B_{j}^{(\mathcal{E}_{+}, \mathcal{E}_{!})}$ containing all pairs $(\tp,\nu)$ where $\tp$ is a type from $\mathcal{E}_{+}$ guarded by $\beta_{j}$ as witnessed by $\nu$, and
		\item $\Phi_{\forall\exists}$ a set $P_{j}^{(\mathcal{E}_{+}, \mathcal{E}_{!})}$ containing all pairs $(\tp,\nu)$ where $\tp$ is a type from $\mathcal{E}_{+}$ guarded by $\gamma_{j}$  
		as witnessed by $\nu$.
	\end{itemize}
	For every $(\mathcal{E}_{+}, \mathcal{E}_{!}) \in \getsummary{\Phi}$ we call two elements $(\tp,\nu)\in B_{j}^{(\mathcal{E}_{+}, \mathcal{E}_{!})}$ and $(\tp',\nu')\in P_{j}^{(\mathcal{E}_{+}, \mathcal{E}_{!})}$ \emph{fitting} and write $(\tp,\nu) \Subset_j (\tp',\nu')$ if
	$\nu'$ extends $\nu$, and
	the subtype of $\tp'$ w.r.t. the variables $\bold{v}^{\ast} = \bold{v} \cap \nu\hspace{2pt}'^{\scriptscriptstyle\hspace{-4.5pt}(\hspace{1.5pt})}\hspace{-0.5pt}(\bold{z})$ coincides with $\tp$,   
	that is, $\tp = \tp'|_{\bold{v}^{\ast}}$.
\end{definition}
Note that since our types are maximally consistent, each $\tp\in\mathcal{E}_{+}$ (for $(\mathcal{E}_{+}, \mathcal{E}_{!}) \in \getsummary{\Phi}$) with $\Phi$ a sentence from \GFO{} or \TGF{} in normal form contains for every term (i.e. type variable or constant symbol) $t, t_{1}, t_{2}$ from $\tp$ the atoms $\sig{D}(t)$ or $\sig{Univ}(t_{1}, t_{2})$ depending on whether $\Phi$ is \GFO{} or \TGF{}. This presupposes that $\Phi$ is in the respective normal form (see \Cref{lem:GFONF} or \Cref{lem:TGFNF} respectively).

Note also, that by our types being maximally consistent each $(\tp, \nu) \in T_{i}^{(\mathcal{E}_{+}, \mathcal{E}_{!})}$ contains automatically some disjunct from $\nu(\vartheta_{i})$.

Lastly, if $\Phi$ is from \TGF{} it does not really matter, whether we state that $\mathcal{E}$ consists of rigid types or not, since \TGF{} expressly disallows equality atoms.

\begin{definition}\label{def:20}
	Let $\Phi$ be given as a \GFO{} or \TGF{} $\tau$-sentence. Additionally let $\mathrm{width}(\Phi)$ be the maximal arity of a relation symbol appearing in $\Phi$ and $\mathcal{E}$ be the set of all guarded types that are also rigid and of order $\le \mathrm{width}(\Phi)$. We will denote by $\sigma$ the set of fresh relation symbols $\sig{Tp}_{\tp}$ of arity $\mathrm{order}(\tp)$ for every $\tp\in\mathcal{E}$. Furthermore let $(\mathcal{E}_{+}, \mathcal{E}_{!}) \in \getsummary{\Phi}$.
	


	Let $C_{/\sim}$ denote the factorization of the set $C$ of constant symbols from $\tau$ by the equivalence relation $\sim$ that is defined by letting $\sig{c} \sim \sig{d}$ hold whenever the literal $\sig{c}\!=\!\sig{d}$ is contained in some $\tp \in \mathcal{E}_{+}$. We obtain the $C$-structure $\mathfrak{C}$ by taking $C_{/\sim}$ as domain and letting 
	$\sigc^{\mathfrak{C}} = [\sigc]_\sim$ for any $\sigc \in C$. 
	\newline
	Furthermore, the $\sigma\uplus C$-structure $\mathfrak{C}\cdot\widecheck{\sigma}$ (the extension of $\mathfrak{C}$ to the signature $\sigma\uplus C$ by interpreting the relation symbols from $\sigma$ as empty) will sometimes also be denoted with $\mathfrak{C}$. In this case we will make clear, or it is clear from the context, that we talk about the $\sigma\uplus C$-structure and not just the $C$-structure.
	
	For $\tp \in \mathcal{E}_{+}$, we denote by $\mathfrak{D}_\tp$ the relation-minimal $\sigma$-structure  
	with domain $D_\tp = \bold{v}_\tp$ satisfying $\bold{v}_\tp \in \sig{Tp}_{\tp}^{\mathfrak{D}_\tp}$
	as well as all further $\sigma$-relations implied by the conjuncts from \ref{eq: gen 3} and \ref{eq: gen 3plus}.

 We define	for every $\tp\in\mathcal{E}_{+}$  a $\tau\uplus \sigma$-structure $(\mathfrak{D}_{\tp})^{\tau}$ with domain $(D_{\tp})^{\tau} = D_{\tp}\uplus C_{/\sim}$ as follows: 
	\begin{itemize}
		\item\label{item Dtauexpansion:1} $\sig{c}^{(\mathfrak{D}_{\tp})^{\tau}} = [\sig{c}]_{\sim}$ for $\sig{c}\in C$,
		\item\label{item Dtauexpansion:2} $\sig{T}^{(\mathfrak{D}_{\tp})^{\tau}} = \sig{T}^{\mathfrak{D}_{\tp}}$ for every $\sig{T}\in\sigma$, and
		\item\label{item Dtauexpansion:3} $\sig{R}^{(\mathfrak{D}_{\tp})^{\tau}} = \{ \bold{b} \in ((D_{\tp})^{\tau})^{\mathrm{ar}(\sig{R})} \ | \ \sig{R}(\bold{b})\in\tp \}$ for every $\sig{R}\in\tau\setminus C$.
	\end{itemize}
	
	Let $\mathfrak{K}_{\mathcal{E}_{+}}$ be the $\sigma\uplus C$-structure that is the disjoint union of $\mathfrak{C}\cdot\widecheck{\sigma}$ and all $\mathfrak{D}_\tp$ for $\tp \in \mathcal{E}_{+}$. We assume that taking the disjoint union includes the renaming of all domain elements of every $\mathfrak{D}_\tp$ with fresh, nowhere-else occurring element names, prior to taking the union. We denote by $f_{\tp}^{\mathfrak{K}_{\mathcal{E}_{+}}} \colon\mathfrak{D}_{\tp} \to \mathfrak{K}_{\mathcal{E}_{+}}$ the injection that maps $\mathfrak{D}_{\tp}$ to its isomorphic copy. Additionally, we extend $f_{\tp}^{\mathfrak{K}_{\mathcal{E}_{+}}}$ to a map $f_{\tp, \mathfrak{C}}^{\mathfrak{K}_{\mathcal{E}_{+}}}\colon \mathfrak{D}_{\tp}\uplus (\mathfrak{C}\cdot\widecheck{\sigma})\to \mathfrak{K}_{\mathcal{E}_{+}}$ with $f_{\tp, \mathfrak{C}}^{\mathfrak{K}_{\mathcal{E}_{+}}}(\sig{c}^{\mathfrak{C}\cdot\widecheck{\sigma}})=\sig{c}^{\mathfrak{K}_{\mathcal{E}_{+}}}$ for all $c\in C$. 
\end{definition}

We will make some notes on \Cref{def:20}.
\begin{enumerate}
	\item\label{item remdef20:1} For each $\mathcal{E}_{+}$ we note that if some $\tp\in\mathcal{E}_{+}$ contains a literal whose terms are just from $C$ (so no variables are mentioned), i.e. a closed literal, then this literal is contained in every $\tp\in\mathcal{E}_{+}$. This follows from the fact that first, our types are maximal consistent and second that $\mathcal{E}_{+}$ arises as a component of the type summary of a given model $\mathfrak{A}\in\getmodels{\Phi}$. More specifically, let $\Theta$ be a literal having as terms just constants (hence being a sentence) and let $\tp\in\mathcal{E}_{+}$ with $\Theta\in\tp$. Let $\tp' \in \mathcal{E}_{+}$ some arbitrary, different type, and take $\mathfrak{A}\in\getmodels{\Phi}$ such that $\mathcal{E}(\mathfrak{A})=(\mathcal{E}_{+}, \mathcal{E}_{!})$. By $\tp\in\mathcal{E}_{+}$ we conclude that $\mathfrak{A}\vDash\Theta$. Suppose, by maximal consistency of types, $\lnot\Theta\in\tp'$. As also $\tp' \in \mathcal{E}_{+}$ we obtain $\mathfrak{A}\vDash\lnot\Theta$, a contradiction. Hence every closed literal that is contained in some $\tp\in\mathcal{E}_{+}$ automatically is contained in every $\tp\in\mathcal{E}_{+}$. An example for this is the closed literal $\sig{c}\!=\!\sig{d}$ (which assumes $\Phi$ to be a \GFO{} sentence).
	\item\label{item remdef20:2} If $\Phi$ is from \TGF{}, rigidity always holds for types. And especially, in the definition of $\sim$ we obtain that $C_{/ \sim}$ is, up to renaming, equal to $C$ (note that the former is a set of equivalence classes, whereas the latter is not).
	\item\label{item remdef20:3} Note that when we defined $\mathfrak{D}_{\tp}$ for some $\tp\in\mathcal{E}_{+}$ we made use of (type) variables (which are \emph{syntactic} objects) as domain elements (which are \emph{semantic} objects). This shall not cause any confusion, especially as it is a convenient and intuitive choice for our domain elements. Even more important, the usage of such $\mathfrak{D}_{\tp}$ will be only temporary and always followed by a renaming of its domain elements.
	\item\label{item remdef20:4} In the definition of $\mathfrak{D}_{\tp}$ we closed $\mathfrak{D}_{\tp}$ under taking variable permutations and (valid) subtypes. Note that this plays out on the level of the type predicates and not the types themselves. 
	\item\label{item remdef20:5} For $\tp\in\mathcal{E}_{+}$ the structure $(\mathfrak{D}_{\tp})^{\tau}$ satisfies  for every $\tp'\in\mathcal{E}_{+}$ an equivalence between the set of tuples $\bold{d}$ from $(\mathfrak{D}_{\tp})^{\tau}$ satisfying $\sig{Tp}_{\tp}$ and the set of tuples $\bold{d}$ realizing $\tp'$ in $(\mathfrak{D}_{\tp})^{\tau}$. As this is an important fact, we will prove it separately in \Cref{lem:Dtauexpansion}.
	\item\label{item remdef20:6} For $\tp\in\mathcal{E}_{+}$ the structure $(\mathfrak{D}_{\tp})^{\tau}$ satisfies the following properties: every tuple $\bold{d}$ satisfying some $\sig{Tp}_{\tp'}$ in $(\mathfrak{D}_{\tp})^{\tau}$ consists of pairwise distinct elements from $\bold{v}_{\tp}$.
	\item\label{item remdef20:7} When we defined $\mathfrak{K}_{\mathcal{E}_{+}}$, note that by including $\mathfrak{C}$ in the disjoint union, we add all domain elements denoted by constants. This part of the model does not participate in any $\Tp_{\tp}$ relations, due to definition. For our construction later on this will be no issue: 
	The rigidity of the types from $\mathcal{E}$ enforces that the elements from $\mathfrak{C}$ do not participate in any $\Tp_{\tp}$ relations. 
	\item\label{item remdef20:8} As $f_{\tp}^{\mathfrak{K}_{\mathcal{E}_{+}}}$ is injective, so is $f_{\tp, \mathfrak{C}}^{\mathfrak{K}_{\mathcal{E}_{+}}}$ by definition (as we only additionally map bijectively $\mathfrak{C}$ to its copy in $\mathfrak{K}_{\mathcal{E}_{+}}$). Additionally, $f_{\tp, \mathfrak{C}}^{\mathfrak{K}_{\mathcal{E}_{+}}}$ is an embedding. That it is a homomorphism is clear by definition. {So let $\sig{T}\in\sigma$ and $\bold{d}\in D_{\tp}\uplus C_{/\sim}$ such that $f_{\tp, \mathfrak{C}}^{\mathfrak{K}_{\mathcal{E}_{+}}}(\bold{d})\in\sig{T}^{\mathfrak{K}_{\mathcal{E}_{+}}}$. As the $\mathfrak{C}$ part of $\mathfrak{K}_{\mathcal{E}_{+}}$ does not participate in any $\sigma$-relations we obtain by definition of $f_{\tp, \mathfrak{C}}^{\mathfrak{K}_{\mathcal{E}_{+}}}$ that $\bold{d}\in D_{\tp}$. And by definition of $\mathfrak{K}_{\mathcal{E}_{+}}$ we conclude $\bold{d}\in\sig{T}^{\mathfrak{D}_{\tp}}$} and therefore $\bold{d}\in\sig{T}^{\mathfrak{D}_{\tp}\uplus ({\mathfrak{C}\cdot\widecheck{\sigma}})}$.	
\end{enumerate}

In the following lemmata (\Cref{lem:Dtauexpansion} and \Cref{lem:A0tauexpansion}) we will introduce results that essentially allow us to switch between the notion of realizing a type $\tp$ from some set of eligible types $\mathcal{E}$ and satisfying the associated type predicates $\sig{Tp}_{\tp}$. As we will start from structures whose relations are just interpretations of type predicates, the general idea is the following: For each tuple $\bold{a}$ satisfying a type predicate $\sig{Tp}_{\tp}$, we will look into the associated type $\tp$ and, according to the information therein, we will introduce $\tau$-relations such that $\bold{a}$ also realizes $\tp$.
At this stage this will be restricted to the structures $(\mathfrak{D}_{\tp})^{\tau}$ and an expanded version of $\mathfrak{K}_{\mathcal{E}_{+}}$. In the first case, we actually only have to check that the general idea already holds. In the second case we expand the structure according to the general idea. Those might be simple cases but they lie at the heart of our construction we will employ to prove the characterizations of the homclosure for \GFO{} or \TGF{} sentences.

\begin{lemma}\label{lem:Dtauexpansion}
	Let $\Phi$ be given as a \GFO{} or \TGF{} $\tau$-sentence. Additionally let $\mathrm{width}(\Phi)$ be the maximal arity of a relation symbol appearing in $\Phi$ and $\mathcal{E}$ be the set of all guarded types that are also rigid and of order $\le \mathrm{width}(\Phi)$. We will denote by $\sigma$ the set of fresh relation symbols $\sig{Tp}_{\tp}$ of arity $\mathrm{order}(\tp)$ for every $\tp\in\mathcal{E}$. Furthermore let $(\mathcal{E}_{+}, \mathcal{E}_{!}) \in \getsummary{\Phi}$. 
	
	Then for all $\tp'\in\mathcal{E}$ it holds that a tuple $\bold{d}$ from $D_{\tp}$ satisfies $\sig{Tp}_{\tp'}\in\sigma$ in $(\mathfrak{D}_{\tp})^{\tau}$ if and only if $\bold{d}$ realizes $\tp'$ in $(\mathfrak{D}_{\tp})^{\tau}$.
\end{lemma}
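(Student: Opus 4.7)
The approach is to first isolate a clean combinatorial characterization of $\sig{Tp}_{\tp'}^{\mathfrak{D}_\tp}$ and then use it to discharge both directions in a routine way. Specifically, I claim that for every $\tp'\in\mathcal{E}$ a tuple $\bold{d}$ belongs to $\sig{Tp}_{\tp'}^{\mathfrak{D}_\tp}$ if and only if there is an injection $\mu\colon \bold{v}_{\tp'}\to\bold{v}_\tp$ (extended to be the identity on $C$) with $\mu(\bold{v}_{\tp'}) = \bold{d}$ componentwise and $\mu(\tp')\subseteq \tp$, where $\mu(\tp')$ denotes the literal-wise substitution. This is established by induction on the derivation of $\bold{d}\in\sig{Tp}_{\tp'}^{\mathfrak{D}_\tp}$ from the seed fact $\bold{v}_\tp\in\sig{Tp}_\tp^{\mathfrak{D}_\tp}$ under the closure rules (\ref{eq: gen 3}) and (\ref{eq: gen 3plus}): the base case is witnessed by $\mu=\mathrm{id}$, while subtype restriction and variable permutation correspond exactly to the obvious modifications of $\mu$.

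For the forward direction, suppose $\bold{d}\in\sig{Tp}_{\tp'}^{(\mathfrak{D}_\tp)^\tau}=\sig{Tp}_{\tp'}^{\mathfrak{D}_\tp}$, and let $\mu$ be the injection furnished by the characterization. Let $\nu\colon\bold{v}_{\tp'}\cup C\to (D_\tp)^\tau$ be the componentwise evaluation $\nu((\bold{v}_{\tp'})_i)=d_i$, $\nu(\sigc)=[\sigc]_\sim$. For every literal $\ell\in\tp'$, the characterization yields $\mu(\ell)\in\tp$, and hence by item 3 of Definition \ref{def:20} (and, for equality literals $\sigc\!=\!\sigd$ or $\sigc\!\neq\!\sigd$, by the observation from Remark \ref{item remdef20:1} that closed literals are shared across all types in $\mathcal{E}_+$) the evaluation $\nu$ satisfies $\ell$ in $(\mathfrak{D}_\tp)^\tau$. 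Since $\tp'$ is maximally consistent, $\bold{d}$ realizes $\tp'$.

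For the reverse direction, assume $\bold{d}$ realizes $\tp'$ in $(\mathfrak{D}_\tp)^\tau$. Because $\bold{d}$ is a tuple from $D_\tp=\bold{v}_\tp$ and $\tp'$ is rigid, its entries are pairwise distinct and disjoint from the interpretations of constants, so defining $\mu\colon \bold{v}_{\tp'}\to\bold{v}_\tp$ componentwise yields an injection (extended trivially to $C$). For every literal $\ell\in\tp'$, the realization hypothesis together with the definition of $\sig{R}^{(\mathfrak{D}_\tp)^\tau}$ (item 3 of Definition \ref{def:20}) forces $\mu(\ell)\in\tp$; maximal consistency of $\tp$ covers negated literals by the same token. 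Thus $\mu(\tp')\subseteq\tp$, and the characterization gives $\bold{d}\in\sig{Tp}_{\tp'}^{\mathfrak{D}_\tp}$.

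The part I expect to need the most care is the treatment of equality and constant literals, which is why I would do the \TGF\ case first (where types are equality-free and $\sim$ is trivial, so the characterization is immediate) and only then the \GFO\ case, where one must explicitly invoke that the $\sim$-collapse of $C$ is compatible with $\tp$—that is, the equality literals among constants present in $\tp$ are exactly the ones defining $\sim$—so that the interpretation $\sigc^{(\mathfrak{D}_\tp)^\tau}=[\sigc]_\sim$ correctly validates both positive and negative closed literals of $\tp'$. Everything else is straightforward bookkeeping once the characterization is in place.
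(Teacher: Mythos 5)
Your proof is correct and follows essentially the same route as the paper's: both reduce membership of $\bold{d}$ in $\sig{Tp}_{\tp'}^{\mathfrak{D}_\tp}$ to the existence of an injection $\mu\colon\bold{v}_{\tp'}\to\bold{v}_\tp$ with $\mu(\tp')\subseteq\tp$ (the paper phrases this as the existence of a permutation $\kappa$ with $\kappa\tp'$ a subtype of $\tp$), and then compare $\tp'$ literal-by-literal with the $\tau$-relations of $(\mathfrak{D}_\tp)^\tau$ from \Cref{def:20}, handling negated literals via maximal consistency. The only real difference is that you spell out the justification of that characterization by induction on the closure derivation under (\ref{eq: gen 3}) and (\ref{eq: gen 3plus}), which the paper merely asserts.
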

\begin{proof}
	Let $\bold{d}$ be a tuple in $(\mathfrak{D}_{\tp})^{\tau}$ such that $\bold{d}\in\sig{Tp}_{\tp'}^{(\mathfrak{D}_{\tp})^{\tau}}$ for some $\tp'\in\mathcal{E}$. 
	There is a permutation $\kappa\colon\bold{v}\injsur\bold{v}$ with $\kappa\tp'$ being a subtype of $\tp$. 
	By definition of $(\mathfrak{D}_{\tp})^{\tau}$ every (positive) literal, i.e. atom $\delta[\bold{v}_{\tp'}]\in\tp'$ is satisfied by $\bold{d}$ in $(\mathfrak{D}_{\tp})^{\tau}$. Note that the atom $\delta$ might contain constant symbols that we do not mention explicitly.  On the other hand for negative literals, suppose $\lnot\delta[\bold{v}_{\tp'}]\in\tp'$. Hence $\lnot\delta[\bold{v}_{\tp'}]\in\tp$ and by $\tp$ being a type, thus maximally consistent, it cannot contain the non-negated atom $\delta[\bold{v}_{\tp'}]$. Thus by definition, $\bold{d}$ does not satisfy $\delta[\bold{v}_{\tp'}]$ and consequently satisfying the negation of it, i.e. $\lnot\delta[\bold{v}_{\tp'}]$. Hence we obtain that $\bold{d}$ realizes $\tp'$ in $(\mathfrak{D}_{\tp})^{\tau}$.
	
	Now let $\bold{d}$ from $(\mathfrak{D}_{\tp})^{\tau}$ satisfy $\tp'\in\mathcal{E}_{+}$. 
	By definition, every literal $\lambda[\bold{v}_{\tp'}]\in\tp'$ is satisfied by $\bold{d}$. Trivially by assumption $\lambda[\bold{v}_{\tp'}]\in\tp$. Hence there is a permutation $\kappa\colon\bold{v}\injsur\bold{v}$ such that $\kappa\tp|_{\bold{v}_{\tp'}}=\tp'$. Without loss of generality we may assume $\kappa$ being the identity and $\bold{v}_{\tp'}\subseteq\bold{v}_{\tp}$. As the tuple $\bold{v}_{\tp}$ satisfies $\sig{Tp}_{\tp}$ in $\mathfrak{D}_{\tp}$ and the structure satisfies both \ref{eq: gen 3} and \ref{eq: gen 3plus} we thus obtain via $\tp' = \tp|_{\bold{v}_{\tp'}}$ that $\bold{d}\in \sig{Tp}_{\tp'}^{(\mathfrak{D}_{\tp})^{\tau}}$.
\end{proof}

\begin{lemma}\label{lem:A0tauexpansion}
	Let $\Phi$ be given as a \GFO{} or \TGF{} $\tau$-sentence. Additionally let $\mathrm{width}(\Phi)$ be the maximal arity of a relation symbol appearing in $\Phi$ and $\mathcal{E}$ be the set of all guarded types that are also rigid and of order $\le \mathrm{width}(\Phi)$. We will denote by $\sigma$ the set of fresh relation symbols $\sig{Tp}_{\tp}$ of arity $\mathrm{order}(\tp)$ for every $\tp\in\mathcal{E}$. Furthermore let $(\mathcal{E}_{+}, \mathcal{E}_{!}) \in \getsummary{\Phi}$.
	
	Then there is a canonical $\sigma\uplus\tau$-expansion $(\mathfrak{K}_{\mathcal{E}_{+}})^{\tau}$ of $\mathfrak{K}_{\mathcal{E}_{+}}$,
	such that for all $\tp\in\mathcal{E}$ it holds that a tuple $\bold{a}$ from $(\mathfrak{K}_{\mathcal{E}_{+}})^{\tau}$ satisfies $\sig{Tp}_{\tp}\in\sigma$ in $(\mathfrak{K}_{\mathcal{E}_{+}})^{\tau}$ if and only if $\bold{a}$ realizes $\tp$ in $(\mathfrak{K}_{\mathcal{E}_{+}})^{\tau}$.

	Additionally, for every $\tp\in\mathcal{E}$ the map $f_{\tp, \mathfrak{C}}^{\mathfrak{K}_{\mathcal{E}_{+}}}$ gives rise to an embedding $(f')_{\tp, \mathfrak{C}}^{\mathfrak{K}_{\mathcal{E}_{+}}}\colon(\mathfrak{D}_{\tp})^{\tau}\to(\mathfrak{K}_{\mathcal{E}_{+}})^{\tau}$, being the same map as $f_{\tp, \mathfrak{C}}^{\mathfrak{K}_{\mathcal{E}_{+}}}$.
\end{lemma}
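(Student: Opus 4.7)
The plan is to define $(\mathfrak{K}_{\mathcal{E}_+})^\tau$ by reading $\tau$-atoms directly off the types that witness type-predicate membership. Concretely, I would set, for every $\sig{R} \in \tau \setminus C$,
\begin{equation}
\sig{R}^{(\mathfrak{K}_{\mathcal{E}_+})^\tau} \,=\, \bigcup_{\tp' \in \mathcal{E}_+} \;\bigcup_{\bold{a} \in \sig{Tp}_{\tp'}^{\mathfrak{K}_{\mathcal{E}_+}}} \big\{\, \nu_{\bold{a}}^+(\bold{t}) \,:\, \sig{R}(\bold{t}) \in \tp' \,\big\},
\end{equation}
where $\nu_{\bold{a}}^+$ is the componentwise substitution $\bold{v}_{\tp'} \mapsto \bold{a}$ extended by $\sig{c} \mapsto \sig{c}^{\mathfrak{K}_{\mathcal{E}_+}}$, and every constant keeps its $\mathfrak{K}_{\mathcal{E}_+}$-interpretation. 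Consistency of closed $\tau$-literals across $\mathcal{E}_+$ (Remark~\ref{item remdef20:1}) ensures that the $\mathfrak{C}$-part is interpreted coherently.

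The structural fact that will drive the whole argument is that in $\mathfrak{K}_{\mathcal{E}_+}$ every $\sig{Tp}_{\tp'}$-witness lies entirely inside a single copy $f_{\tilde{\tp}}^{\mathfrak{K}_{\mathcal{E}_+}}(D_{\tilde{\tp}})$: the $\sigma$-part of $\mathfrak{C}\cdot\widecheck{\sigma}$ is empty, the $\mathfrak{D}_{\tilde{\tp}}$-copies are pairwise disjoint, and on each copy the interpretations of any $\sig{Tp}_{\tp'}$ are---by closure under \ref{eq: gen 3} and \ref{eq: gen 3plus}---precisely the subtype/permutation instances of the generating $\tilde{\tp}$. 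First I would use this to show that $(f')_{\tp, \mathfrak{C}}^{\mathfrak{K}_{\mathcal{E}_+}}\colon(\mathfrak{D}_{\tp})^\tau \to (\mathfrak{K}_{\mathcal{E}_+})^\tau$ is an embedding: $\tau$-homomorphy is immediate, because $\bold{v}_\tp \in \sig{Tp}_\tp^{\mathfrak{D}_\tp}$ is mapped into $\sig{Tp}_\tp^{\mathfrak{K}_{\mathcal{E}_+}}$, so every atom in $\tp$ appears in the canonical expansion; for strongness, any $\tau$-atom on the image must be created by some $\tp'$ and witness $\bold{a}''$, which the single-copy fact forces into $f_\tp^{\mathfrak{K}_{\mathcal{E}_+}}(D_\tp)$, making $\tp'$ a subtype/permutation of $\tp$, so the atom already sits in $\tp$ and thus in $(\mathfrak{D}_{\tp})^\tau$.

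The central equivalence then reduces to \Cref{lem:Dtauexpansion} via this embedding. For the direction ``satisfies $\sig{Tp}_\tp$ $\Rightarrow$ realizes $\tp$'', any $\bold{a} \in \sig{Tp}_\tp^{\mathfrak{K}_{\mathcal{E}_+}}$ lies by the single-copy property inside some $f_{\tilde{\tp}}^{\mathfrak{K}_{\mathcal{E}_+}}(D_{\tilde{\tp}})$; pulling back through the embedding and invoking \Cref{lem:Dtauexpansion} on $(\mathfrak{D}_{\tilde{\tp}})^\tau$ yields that $\bold{a}$ realizes $\tp$ in $(\mathfrak{K}_{\mathcal{E}_+})^\tau$. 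For the converse, if $\bold{a}$ realizes $\tp$, then the guard atom of $\tp$ (present by the normal forms of \Cref{lem:GFONF}/\Cref{lem:TGFNF}) must hold on $\bold{a}$, and the single-copy property applied to this guard atom forces all non-constant components of $\bold{a}$ into one common $D_{\tilde{\tp}}$-copy; for $\GFO$, rigidity additionally rules out components of $\bold{a}$ coinciding with constants' denotations or with each other. \Cref{lem:Dtauexpansion} applied to $(\mathfrak{D}_{\tilde{\tp}})^\tau$ via the embedding then returns $\bold{a} \in \sig{Tp}_\tp^{\mathfrak{K}_{\mathcal{E}_+}}$. The hard part will be establishing strongness of the embedding: this demands a careful coupling of the disjoint-union geometry of $\mathfrak{K}_{\mathcal{E}_+}$ with the subtype and variable-permutation closure of $\mathcal{E}_+$ to rule out any spurious cross-copy $\tau$-atoms in the canonical expansion.
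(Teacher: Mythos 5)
Your proposal is correct and follows essentially the same line as the paper's proof: the paper defines $\sig{R}^{(\mathfrak{K}_{\mathcal{E}_+})^\tau}$ by pushing forward $\sig{R}^{(\mathfrak{D}_\tp)^\tau}$ along the injections $f_\tp^{\mathfrak{K}_{\mathcal{E}_+}}$, which is interchangeable with your ``read atoms off the witnessing types'' formulation since $\sigma$-relations in $\mathfrak{D}_\tp$ are exactly the subtype/permutation instances of $\tp$ and their dictated $\tau$-atoms are subsets of $\tp$'s; both proofs then rely on the disjoint-copy geometry plus closed-literal agreement across $\mathcal{E}_+$ to get strongness of $(f')_{\tp,\mathfrak{C}}^{\mathfrak{K}_{\mathcal{E}_+}}$, and reduce the type-predicate/type-realization equivalence to \Cref{lem:Dtauexpansion} via these embeddings.
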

\begin{proof}
	For every $\tp\in\mathcal{E}$ we use the maps $f_{\tp, \mathfrak{C}}^{\mathfrak{K}_{\mathcal{E}_{+}}}$ to expand $\mathfrak{K}_{\mathcal{E}_{+}}$ in the following way: Note first, that $(\mathfrak{D}_{\tp})^{\tau}$ has the same domain as $\mathfrak{D}_{\tp} \uplus (\mathfrak{C}\cdot\widecheck{\sigma})$. For $\sig{R}\in\tau\setminus C$ we let 
	\begin{equation}\label{eq: defsigR 1}
		\sig{R}^{(\mathfrak{K}_{\mathcal{E}_{+}})^{\tau}} = \{ f_{\tp}^{\mathfrak{K}_{\mathcal{E}_{+}}}(\bold{d}) \ | \ \tp\in\mathcal{E}_{+} \text{ and } \bold{d}\in\sig{R}^{(\mathfrak{D}_{\tp})^{\tau}} \}.
	\end{equation}
	By definition for different $\tp, \tp'\in\mathcal{E}_{+}$ the sets $f_{\tp, \mathfrak{C}}^{\mathfrak{K}_{\mathcal{E}_{+}}}(D_{\tp})$ and $f_{\tp', \mathfrak{C}}^{\mathfrak{K}_{\mathcal{E}_{+}}}(D_{\tp'})$ are disjoint. Hence for every tuple $\bold{a}\in\sig{R}^{(\mathfrak{K}_{\mathcal{E}_{+}})^{\tau}}$ {containing at least one element not named by a constant} there is a unique $\tp\in\mathcal{E}_{+}$ such that $\bold{a} = f_{\tp, \mathfrak{C}}^{\mathfrak{K}_{\mathcal{E}_{+}}}(\bold{d})$ for a tuple $\bold{d}$ consisting exclusively of elements from $\mathfrak{C}$ and $\mathfrak{D}_{\tp}$. {If $\bold{a}$ consists only of  elements named by constants, we remember Remark \ref{item remdef20:1} after \Cref{def:20} from which we immediately conclude, that it does not matter which $\tp\in\mathcal{E}_{+}$ we choose in the discussion. Hence in this case we fix one.} By $f_{\tp, \mathfrak{C}}^{\mathfrak{K}_{\mathcal{E}_{+}}}$ being injective, this $\bold{d}$ is also unique.
	Additionally, $f_{\tp, \mathfrak{C}}^{\mathfrak{K}_{\mathcal{E}_{+}}}$ gives rise to an embedding $(f')_{\tp, \mathfrak{C}}^{\mathfrak{K}_{\mathcal{E}_{+}}}\colon(\mathfrak{D}_{\tp})^{\tau}\to(\mathfrak{K}_{\mathcal{E}_{+}})^{\tau}$, being the same map as $f_{\tp, \mathfrak{C}}^{\mathfrak{K}_{\mathcal{E}_{+}}}$ considering both as maps between the underlying domains of the structures. For simplicity of exposition we will denote it by just $f'_{\tp}$. By the interpretation of the $\sig{R}\in\tau\setminus C$ in $(\mathfrak{K}_{\mathcal{E}_{+}})^{\tau}$ as defined in \ref{eq: defsigR 1} the map $f_{\tp}'$ is a homomorphism. $f'_{\tp}$ is injective since $f_{\tp, \mathfrak{C}}^{\mathfrak{K}_{\mathcal{E}_{+}}}$ is. 
	So let $\bold{d}$ be a tuple from $(\mathfrak{D}_{\tp})^{\tau}$ such that $f'_{\tp}(\bold{d})\in \sig{R}^{(\mathfrak{K}_{\mathcal{E}_{+}})^{\tau}}$. Then by definition $\bold{d}\in\sig{R}^{(\mathfrak{D}_{\tp})^{\tau}}$, establishing that $f'_{\tp}$ is an embedding.
	
	We now show that for every type $\tp\in\mathcal{E}$ it holds that a tuple $\bold{a}$ from $(\mathfrak{K}_{\mathcal{E}_{+}})^{\tau}$ satisfies $\sig{Tp}_{\tp'}^{(\mathfrak{K}_{\mathcal{E}_{+}})^{\tau}}$ in $(\mathfrak{K}_{\mathcal{E}_{+}})^{\tau}$ if and only if $\bold{a}$ realizes $\tp'$ in $(\mathfrak{K}_{\mathcal{E}_{+}})^{\tau}$. 
	Note that by rigidity of types on the one hand and Remark \ref{item remdef20:7} after \Cref{def:20} on the other, only tuples that solely consists of elements not named by constants can satisfy type predicates or realize types.
	
	So let $\bold{a}\in\sig{Tp}_{\tp'}^{(\mathfrak{K}_{\mathcal{E}_{+}})^{\tau}}$ for some $\tp'\in\mathcal{E}$. As by definition the domain of $\mathfrak{K}_{\mathcal{E}_{+}}$ is the disjoint union of $\mathfrak{C}\cdot\widecheck{\sigma}$ and isomorphic copies of $\mathfrak{D}_{\tp}$ there exists a unique $\tp\in\mathcal{E}_{+}$ such that $\bold{a}$ is from $f'_{\tp}(D_{\tp})$. Since $f'_{\tp}$ is an embedding, we find $\bold{d}$ in $D_{\tp}$ with $f'_{\tp}(\bold{d}) = \bold{a}$ and $\bold{d}\in\sig{Tp}_{\tp'}^{(\mathfrak{D}_{\tp})^{\tau}}$. \Cref{lem:Dtauexpansion} yields that $\bold{d}$ realizes $\tp'$ in $(\mathfrak{D}_{\tp})^{\tau}$. By invoking $f'_{\tp}$ being an embedding once again, we obtain that $\bold{a}=f'_{\tp}(\bold{d})$ realizes $\tp'$ in $(\mathfrak{K}_{\mathcal{E}_{+}})^{\tau}$.
	
	Suppose now that $\bold{a}$ realizes $\tp'$ for some $\tp'\in\mathcal{E}$. As $\tp'$ is guarded, there exists a guard atom $\delta$ (witnessed by $\nu$) such that $\bold{a}$ satisfies $\nu(\delta)$ in $(\mathfrak{K}_{\mathcal{E}_{+}})^{\tau}$. By \ref{eq: defsigR 1} this necessitates, that there is a $\tp''\in\mathcal{E}$ such that $\bold{a}$ consists of elements exclusively from $f'_{\tp''}(D_{\tp''})$. More importantly, $\bold{a}$ cannot consist of images from different $D_{\tp''}$ via $f'_{\tp''}$.
	But since $\bold{a}$ realizes $\tp'$ we find some $\tp\in\mathcal{E}_{+}$ and a unique $\bold{d}$ from $D_{\tp}$ with $f'_{\tp}(\bold{d}) = \bold{a}$ such that $\bold{d}$ also realizes $\tp'$ (using that $f'_{\tp}$ is an embedding). By \Cref{lem:Dtauexpansion} we obtain that $\bold{d}\in\sig{Tp}_{\tp'}^{(\mathfrak{D}_{\tp})^{\tau}}$ and by $f'_{\tp}$ being an embedding we conclude $\bold{a}\in\sig{Tp}_{\tp'}^{(\mathfrak{K}_{\mathcal{E}_{+}})^{\tau}}$.
\end{proof}

As we are concerned with the homclosure of \GFO{} or \TGF{} sentences, we need to start shedding light on the behaviour of our (``enriched'') structures under homomorphisms into some other structures. It is natural to start with the simple case of $\mathfrak{K}_{\mathcal{E}_{+}}$ (or, for the ``enriched'' structure $(\mathfrak{K}_{\mathcal{E}_{+}})^{\tau}$). The structures to which we will homomorphically map $\mathfrak{K}_{\mathcal{E}_{+}}$ (or $(\mathfrak{K}_{\mathcal{E}_{+}})^{\tau}$) cannot be arbitrary ones but have to already satisfy some sentence, which will be part of the sentence characterizing the homclosure. This is what the following lemma (\Cref{lem:A0}) does. 

\begin{lemma}\label{lem:A0}
	Let $\Phi$ be a \GFO{} or \TGF{} $\tau$-sentence with $C$ being the set of constants in $\tau$, $\mathcal{E}$ a set of eligible types (with the usual provision of being finite and containing rigid types) closed under taking variable permutations and $(\mathcal{E}_{+}, \mathcal{E}_{!})\in \getsummary{\Phi}$. Let $\sigma$ be the signature containing for every $\tp\in\mathcal{E}$ the predicate $\sig{Tp}_{\tp}$ having as its arity the order of $\tp$. Then $\mathfrak{K}_{\mathcal{E}_{+}}$ satisfies $\Psi_{\mathrm{gen}, (\mathcal{E}_{+}, \mathcal{E}_{!})}$ 
	with the additional property that for every model $\mathfrak{B}$ (over the signature $\tau \uplus\sigma$) of $\Psi_{\mathrm{hom}}\land\Psi_{\mathrm{gen}, (\mathcal{E}_{+}, \mathcal{E}_{!})}$ there exists a homomorphism $h\colon \mathfrak{K}_{\mathcal{E}_{+}} \to \mathfrak{B}|_{\sigma\uplus C}$.
	\newline
{Additionally, $h$ gives rise to a homomorphism $h'\colon (\mathfrak{K}_{\mathcal{E}_{+}})^{\tau}\to \mathfrak{B}$ and $(\mathfrak{K}_{\mathcal{E}_{+}})^{\tau}$ satisfies $\Psi_{\mathrm{hom}}\land\Psi_{\mathrm{gen}, (\mathcal{E}_{+}, \mathcal{E}_{!})}$.}
\end{lemma}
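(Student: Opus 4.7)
The plan is to verify the four claims in sequence: (i) $\mathfrak{K}_{\mathcal{E}_{+}} \models \Psi_{\mathrm{gen}, (\mathcal{E}_{+}, \mathcal{E}_{!})}$, (ii) existence of $h \colon \mathfrak{K}_{\mathcal{E}_{+}} \to \mathfrak{B}|_{\sigma \uplus C}$ for any model $\mathfrak{B} \models \Psi_{\mathrm{hom}} \wedge \Psi_{\mathrm{gen}, (\mathcal{E}_{+}, \mathcal{E}_{!})}$, (iii) that the same underlying map lifts to a homomorphism $h'\colon (\mathfrak{K}_{\mathcal{E}_{+}})^{\tau} \to \mathfrak{B}$, and (iv) $(\mathfrak{K}_{\mathcal{E}_{+}})^\tau \models \Psi_{\mathrm{hom}} \wedge \Psi_{\mathrm{gen}, (\mathcal{E}_{+}, \mathcal{E}_{!})}$.

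For (i), the relation-minimality in the definition of each $\mathfrak{D}_\tp$ directly yields Formulae \ref{eq: gen 3} and \ref{eq: gen 3plus}, and the base tuple $f_\tp^{\mathfrak{K}_{\mathcal{E}_{+}}}(\bold{v}_\tp) \in \sig{Tp}_\tp^{\mathfrak{K}_{\mathcal{E}_{+}}}$ yields Formula \ref{eq: gen 1}. For Formula \ref{eq: gen 2}, note that the only $\sigma$-atoms asserted in $\mathfrak{K}_{\mathcal{E}_{+}}$ arise, inductively, as consequences of some $\bold{v}_\tp \in \sig{Tp}_\tp$ (with $\tp \in \mathcal{E}_{+}$) under Formulae \ref{eq: gen 3} and \ref{eq: gen 3plus}; since by \Cref{lem:eligclosure} the set $\mathcal{E}_{+}$ is closed under subtypes and under variable permutations of $\mathcal{E}$, every such consequent type predicate $\sig{Tp}_{\tp'}$ satisfies $\tp' \in \mathcal{E}_{+}$.

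For (ii), since $\mathfrak{B}$ satisfies Formula \ref{eq: gen 1}, for each $\tp \in \mathcal{E}_{+}$ we fix a witness $\bold{b}_\tp$ with $\bold{b}_\tp \in \sig{Tp}_\tp^{\mathfrak{B}}$. Define $h$ by sending $f_\tp^{\mathfrak{K}_{\mathcal{E}_{+}}}(v_i) \mapsto (\bold{b}_\tp)_i$ for the elements coming from the $\mathfrak{D}_\tp$-component, and $[c]_\sim \mapsto c^{\mathfrak{B}}$ on $\mathfrak{C}$. Well-definedness on $\mathfrak{C}$ uses $\Psi_{\mathrm{hom}}$: if $c \sim d$ then $c\!=\!d$ is a positive literal in some $\tp \in \mathcal{E}_{+}$; Formula \ref{eq: gen 1} gives a witness $\bold{b}_\tp$ in $\mathfrak{B}$, and $\Psi_{\mathrm{hom}}$ forces $c^{\mathfrak{B}} = d^{\mathfrak{B}}$. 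Preservation of constants is immediate. For preservation of type predicates, any $\bold{a} \in \sig{Tp}_{\tp'}^{\mathfrak{K}_{\mathcal{E}_{+}}}$ lies in a unique component $f_\tp^{\mathfrak{K}_{\mathcal{E}_{+}}}(D_\tp)$, and was added to $\sig{Tp}_{\tp'}^{\mathfrak{D}_\tp}$ by a finite chain of applications of Formulae \ref{eq: gen 3} and \ref{eq: gen 3plus} from the base fact $\bold{v}_\tp \in \sig{Tp}_\tp^{\mathfrak{D}_\tp}$; since $\mathfrak{B}$ satisfies these same implications and $\bold{b}_\tp \in \sig{Tp}_\tp^{\mathfrak{B}}$, induction along the chain gives $h(\bold{a}) \in \sig{Tp}_{\tp'}^{\mathfrak{B}}$.

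For (iii), let $h' = h$ on the common domain. Let $\sig{R} \in \tau \setminus C$ and $\bold{a} \in \sig{R}^{(\mathfrak{K}_{\mathcal{E}_{+}})^{\tau}}$. By Equation \ref{eq: defsigR 1}, $\bold{a} = f'_\tp(\bold{d})$ for a unique $\tp \in \mathcal{E}_{+}$ and a tuple $\bold{d}$ in $(\mathfrak{D}_\tp)^\tau$ with $\bold{d} \in \sig{R}^{(\mathfrak{D}_\tp)^\tau}$; unwinding the definition of $(\mathfrak{D}_\tp)^\tau$, this means $\sig{R}(\bold{d}') \in \tp$, where $\bold{d}'$ is the term tuple arising from $\bold{d}$ by reading elements of $D_\tp$ as their associated type variables and elements of $C_{/\sim}$ as their associated constants. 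By $\bold{b}_\tp \in \sig{Tp}_\tp^{\mathfrak{B}}$ and $\Psi_{\mathrm{hom}}$, the positive literal $\sig{R}(\bold{d}')$ is satisfied in $\mathfrak{B}$ under the evaluation $v_i \mapsto (\bold{b}_\tp)_i$, $c \mapsto c^{\mathfrak{B}}$, which is exactly $h'(\bold{a}) \in \sig{R}^{\mathfrak{B}}$. For claim (iv), the $\sigma \uplus C$-reduct of $(\mathfrak{K}_{\mathcal{E}_{+}})^\tau$ equals $\mathfrak{K}_{\mathcal{E}_{+}}$, so $\Psi_{\mathrm{gen}, (\mathcal{E}_{+}, \mathcal{E}_{!})}$ carries over from (i); and $\Psi_{\mathrm{hom}}$ follows from \Cref{lem:A0tauexpansion}, since every tuple in $\sig{Tp}_\tp^{(\mathfrak{K}_{\mathcal{E}_{+}})^\tau}$ realizes $\tp$, hence satisfies all positive literals of $\tp$. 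The main obstacle is the inductive verification in (ii) that $h$ preserves all $\sigma$-atoms: it rests on the observation that the $\sigma$-relations of $\mathfrak{D}_\tp$ are generated from the single atom $\bold{v}_\tp \in \sig{Tp}_\tp^{\mathfrak{D}_\tp}$ by the very implications that $\mathfrak{B}$ is assumed to satisfy, so the witnesses $\bold{b}_\tp$ chosen independently for each $\tp \in \mathcal{E}_{+}$ automatically extend into a coherent homomorphism.
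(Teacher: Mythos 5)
Your proof is correct and follows essentially the same strategy as the paper's: verify $\Psi_{\mathrm{gen}}$ for $\mathfrak{K}_{\mathcal{E}_{+}}$ conjunct by conjunct (invoking \Cref{lem:eligclosure} for Formula \ref{eq: gen 2}), build $h$ component-by-component across the disjoint union by picking a witness tuple $\bold{b}_\tp$ per $\tp \in \mathcal{E}_{+}$ and arguing inductively that saturation under Formulae \ref{eq: gen 3} and \ref{eq: gen 3plus} is transported to $\mathfrak{B}$, and then lift $h$ to $h'$ via the embeddings from \Cref{lem:A0tauexpansion}. The only minor divergence is in the verification that $h'$ preserves $\tau$-relations: you shortcut by applying $\Psi_\mathrm{hom}$ directly to the chosen $\bold{b}_\tp$ and the positive literal $\sig{R}(\bold{d}')\in\tp$, whereas the paper detours through a guarded subtype $\tp'$ and the generic preservation of type predicates by $h'$; your version is a bit tighter but encapsulates the same idea.
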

\begin{proof}
	We show that $\mathfrak{K}_{\mathcal{E}_{+}}$ satisfies $\Psi_{\mathrm{gen}, (\mathcal{E}_{+}, \mathcal{E}_{!})}$. 
	Since by definition $\mathfrak{K}_{\mathcal{E}_{+}}$ contains for every $\tp\in\mathcal{E}_{+}$ a substructure isomorphic to $\mathfrak{D}_{\tp}$, it contains some tuple $\bold{a}$ such that $\bold{a} \in \sig{Tp}_{\tp}^{\mathfrak{K}_{\mathcal{E}_{+}}}$. As this holds for every $\tp\in \mathcal{E}_{+}$ Formulae \ref{eq: gen 1} is satisfied.
	
	To show the satisfaction of Formulae \ref{eq: gen 2} observe that in the construction we either added for $\tp\in \mathcal{E}_{+}$ a $\mathfrak{D}_{\tp}$ (hence ensuring the satisfaction of some $\sig{Tp}_{\tp}$) or we closed the disjunction ultimately yielding $\mathfrak{K}_{\mathcal{E}_{+}}$ under \ref{eq: gen 3} and \ref{eq: gen 3plus}. Consequently, only by these closure operations could we unwittingly made a $\sig{Tp}_{\tp}$ with $\tp\in\mathcal{E}\setminus\mathcal{E}_{+}$ true for some tuple in $\mathfrak{K}_{\mathcal{E}_{+}}$. But this cannot have happened, as by the fact that $\mathcal{E}$ is closed under taking variable permutations and applying \Cref{lem:eligclosure} (both Item \ref{item eligclosure:1} and \ref{item eligclosure:2}) we only could have introduced $\sig{Tp}_{\tp}$ for $\tp\in\mathcal{E}_{+}$. Consequently, for every $\tp \in \mathcal{E}\setminus\mathcal{E}_{+}$ we obtain $\sig{Tp}_{\tp}^{\mathfrak{K}_{\mathcal{E}_{+}}} = \emptyset$, implying the validity of Formulae \ref{eq: gen 2}.
	
	Now take $\tp, \tp' \in \mathcal{E}_{+}$ such that $\tp' = \tp|_{\bold{v}_{\tp'}}$ and let $\bold{a}$ be some tuple from $\mathfrak{K}_{\mathcal{E}_{+}}$ such that $\bold{a}\in \sig{Tp}_{\tp}^{\mathfrak{K}_{\mathcal{E}_{+}}}$, let $\mu\colon \bold{v}_{\tp} \to\bold{a}$ be the componentwise evaluation witnessing this. By definition of $\mathfrak{K}_{\mathcal{E}_{+}}$ there exists some type $\tp^{\ast}\in\mathcal{E}_{+}$ with $\tp^{\ast}|_{\bold{v}_{\tp}}=\tp$, such that the induced substructure by $\bold{a}$ in $\mathfrak{K}_{\mathcal{E}_{+}}$ is an induced one of an isomorphic copy of $\mathfrak{D}_{\tp^{\ast}}$. Hence we know by definition (closure with respect to eligible subtypes in $\mathfrak{D}_{\tp}$) that the subtuple $\mu(\bold{v}_{\tp'})$ of $\bold{a}$ satisfies $\mu(\bold{v}_{\tp'})\in \sig{Tp}_{\tp'}^{\mathfrak{K}_{\mathcal{E}_{+}}}$. Finally, observing that $\mu$ assigns the tuple $\bold{a}$ componentwise to $\bold{v}_{\tp}$, this yields the satisfaction of Formulae \ref{eq: gen 3} by $\mathfrak{K}_{\mathcal{E}_{+}}$.
	
	To show the satisfaction of the last kind of Formulae (i.e., Formulae \ref{eq: gen 3plus}), let $\tp \in \mathcal{E}$, $\kappa \colon \bold{v} \injsur \bold{v}$ a permutation of the variables and $\bold{a}$ a tuple from $\mathfrak{K}_{\mathcal{E}_{+}}$ such that $\bold{a}\in \sig{Tp}_{\tp}^{\mathfrak{K}_{\mathcal{E}_{+}}}$. Note that, since $\mathcal{E}$ is closed under taking permutations of (type) variables, $\kappa\tp\in\mathcal{E}$. And as for every $\tp \in\mathcal{E}\setminus\mathcal{E}_{+}$ the interpretation $\sig{Tp}_{\tp}^{\mathfrak{K}_{\mathcal{E}_{+}}}$ is empty, it suffices to discuss the case when $\tp\in \mathcal{E}_{+}$. 
	By definition of $\mathfrak{K}_{\mathcal{E}_{+}}$ there exists some type $\tp^{\ast}\in\mathcal{E}_{+}$ and some permutation $\kappa^{\ast}\colon\bold{v}\injsur \bold{v}$ with $\kappa^{\ast}\tp^{\ast}=\tp$ such that the induced substructure by $\bold{a}$ in $\mathfrak{K}_{\mathcal{E}_{+}}$ is an induced one of an isomorphic copy of $\mathfrak{D}_{\tp^{\ast}}$. Since every $\mathfrak{D}_{\tp}$ used in the definition of $\mathfrak{K}_{\mathcal{E}_{+}}$ is itself closed under \ref{eq: gen 3plus} (i.e. closed under permutations, which is a fact holding by definition) we obtain that $\eta_{\bold{v}}^{\kappa\bold{v}_{\tp}}(\bold{a}) \in \sig{Tp}_{\kappa\tp}^{\mathfrak{K}_{\mathcal{E}_{+}}}$ holds. As this shows the validity of the Formulae \ref{eq: gen 3plus} we finally have shown that $\mathfrak{K}_{\mathcal{E}_{+}}$ satisfies $\Psi_{\mathrm{gen}, (\mathcal{E}_{+}, \mathcal{E}_{!})}$.

	Furthermore, for every $\mathfrak{B}\in\getmodels{\Psi_{\mathrm{hom}}\land\Psi_{\mathrm{gen}, (\mathcal{E}_{+}, \mathcal{E}_{!})}}$ the structure $\mathfrak{K}_{\mathcal{E}_{+}}$ has a homomorphism $h$ into the structure $\mathfrak{B}|_{\sigma\uplus C}$, which we will denote by $\mathfrak{B}^\ast$. To see this, note that $\mathfrak{K}_{\mathcal{E}_{+}}$ is the disjoint union of $\mathfrak{C}\cdot\widecheck\sigma$ and structures isomorphic to $\mathfrak{D}_{\tp}$ for $\tp \in \mathcal{E}_{+}$, each of which in separation homomorphically maps to $\mathfrak{B}^\ast$.
	As for $\mathfrak{C}$, the mapping $[\sigc]_\sim \mapsto \sigc^{\mathfrak{B}^\ast}$ does the trick since for any $\sigc,\sig{d} \in C$ with $\sigc \sim \sig{d}$, the fact that $\mathfrak{B}$ is a model of $\Psi_{\mathrm{hom}}$ and realizes some $\Tp_\tp$ containing the literal $\sig{c}\!=\!\sig{d}$ ensures $\sigc^\mathfrak{B} = \sig{d}^\mathfrak{B}$. For $\Phi$ from \TGF{} this comment is not interesting since $\sig{c}\sim\sig{d}$ if and only if $\sig{c}$ and $\sig{d}$ are the same constant symbols. We can now extend this map stepwise for every component isomorphic to some $\mathfrak{D}_{\tp}$ of the disjoint union. Because of isomorphy it is enough to discuss $\mathfrak{D}_{\tp}$ instead of the actual component. By definition each $\mathfrak{D}_\tp$ satisfies a conjunct from \ref{eq: gen 1} saturated under  \ref{eq: gen 3} and \ref{eq: gen 3plus}. As $\mathfrak{B}\in\getmodels{\Psi_{\mathrm{gen}, (\mathcal{E}_{+}, \mathcal{E}_{!})}}$ we can map the elements $\bold{v}_{\tp}$ of $\mathfrak{D}_{\tp}$ componentwise to some arbitrary $\bold{b}\in \sig{Tp}_{\tp}^{\mathfrak{B}}$. Then the extension of the homomorphism is obtained by mapping the tuple $f_{\tp}^{\mathfrak{K}_{\mathcal{E}_{+}}}(\bold{v}_{\tp})$ componentwise to $\bold{b}$. As we took the disjoint union, we obtain a homomorphism $h\colon\mathfrak{K}_{\mathcal{E}_{+}} \to\mathfrak{B}^{\ast}.$
	
	Finally we show that $h$ (as a map) constitutes a homomorphism $h'\colon (\mathfrak{K}_{\mathcal{E}_{+}})^{\tau}\to \mathfrak{B}$. As $h$ is a homomorphism when restricting to the signature $\sigma\uplus C$, it suffices to discuss $\sig{R}\in\tau\setminus C$. 
	Furthermore, we will make use of the embedding $(f')_{\tp, \mathfrak{C}}^{\mathfrak{K}_{\mathcal{E}_{+}}}$ obtained from \Cref{lem:A0tauexpansion} (for $\tp\in\mathcal{E}_{+}$). For brevity, we will denote those maps by just $f'_{\tp}$. Let $\bold{a}\in\sig{R}^{(\mathfrak{K}_{\mathcal{E}_{+}})^{\tau}}$ for some $\sig{R}\in\tau \setminus C$. 
	If $\bold{a}$ {contains at least one element not named by a constant} there is a unique $\tp\in\mathcal{E}_{+}$ and a unique $\bold{d}$ from $\bold{v}_{\tp}$ such that $\bold{d}\in\sig{R}^{(\mathfrak{D}_{\tp})^{\tau}}$ and $f'_{\tp}(\bold{d}) = \bold{a}$.
	{If $\bold{a}$ consists of only elements named by constants, we remember Remark \ref{item remdef20:1} after \Cref{def:20} from which we immediately conclude, that it does not matter which $\tp\in\mathcal{E}_{+}$ we choose in the discussion. Hence in this case we fix one.} 
	Let $\bold{x}$ be a tuple of pairwise different variables of length $\mathrm{length}(\bold{d})$ and $\mu\colon \bold{x}\to \bold{d}$ componentwise.
	Then $\sig{R}(\bold{x})$ is satisfied by $\bold{d}$ as witnessed by $\mu$. Furthermore let $\bold{v}_{\bold{d}}$ be the subtuple of $\bold{d}$ consisting of elements from $\bold{v}_{\tp}$, i.e. $\bold{v}_{\bold{d}} = \bold{d}\cap\bold{v}_{\tp}$. 
	Take $\delta(\bold{v}_{\bold{d}})$ to be the atom obtained from $\sig{R}(\bold{d})$ where each element of the form $[\sigc]_{\sim}$ for $\sig{c}\in C$ appearing in $\bold{d}$ is substituted by $\sig{c}$. The choice of the representative does only matter in so far as we will not obtain a unique $\delta$. 
	Nevertheless this non-uniqueness will have no impact on the proof. Note for this that $\sim$ is defined on $C$ in terms of equality atoms $\sig{c}=\sig{d}$ in the types $\tp\in\mathcal{E}_{+}$. 
	And by the note after \Cref{def:20}, realization of the types from $\mathcal{E}_{+}$ automatically enforces this equivalence relation. As $\bold{v}_{\tp}$ realizes $\tp$ in $(\mathfrak{D}_{\tp})^{\tau}$, $\bold{v}_{\tp}\in\sig{Tp}_{\tp}^{(\mathfrak{D}_{\tp})^{\tau}}$ by \Cref{lem:Dtauexpansion}. As $\delta\in\tp$ the subtype $\tp' = \tp|_{\bold{v}_{\bold{d}}}$ is guarded by $\delta$ (as witnessed by $\nu\colon\bold{v}_{\bold{d}}\to\bold{v}_{\tp'}\uplus C, v\mapsto v$). Hence $\tp'\in\mathcal{E}_{+}$ by \Cref{lem:eligclosure} and by $\mathfrak{D}_{\tp}$ (hence $(\mathfrak{D}_{\tp})^{\tau}$) satisfying \ref{eq: gen 3} (by definition) we obtain that $\bold{v}_{\bold{d}}\in\sig{Tp}_{\tp'}^{(\mathfrak{D}_{\tp})^{\tau}}$. Applying $f'_{\tp}$ yields $\bold{a}^{\ast}\in\sig{Tp}_{\tp'}^{(\mathfrak{K}_{\mathcal{E}_{+}})^{\tau}}$ with $f'_{\tp}(\bold{v}_{\bold{d}}) = \bold{a}^{\ast}$, where $\bold{a}^{\ast}$ is the subtuple of $\bold{a}$ consisting of all elements that are no interpretations of constant symbols. Hence $h'(\bold{a}^{\ast}) \in \sig{Tp}_{\tp'}^{\mathfrak{B}}$.
	 By $\mathfrak{B}$ satisfying $\Psi_{\mathrm{hom}}$, $h'(\bold{a}^{\ast})$ satisfies the atom $\delta$ in $\mathfrak{B}$. But this means that $h'(\bold{a}) \in\sig{R}^{\mathfrak{B}}$ showing that $h'$ is also a homomorphism for the $\tau\uplus \sigma$-structures.
\end{proof}
We will now start to make more precise the sentence $\Psi$ that will characterize the homclosure of some \GFO{} or \TGF{} sentence $\Phi$. For this we will utilize our normal forms (see \Cref{lem:GFONF} and \Cref{lem:TGFNF}). 
Thanks to those normal forms, the \GFO{} and the \TGF{} case will differ only slightly. 
This comes as little surprise after observing that our normal form for \TGF{} sentences is in and of itself ``essentially guarded'' with the exception of the $\Phi_{\sig{Univ}}$ part. 
Hence the difference in $\Psi$ depending on whether $\Phi$ is from \GFO{} or \TGF{} is connected to $\Phi_{\sig{Univ}}$.

We will first define the, more or less, ``last puzzle pieces'' for $\Psi$. Additionally we introduce the notion of a violation tuple which will be put to good use in a stepwise construction procedure later on.

\begin{definition}\label{def:guardvioltgf}
	Let $\Phi$ be a \GFO{} or \TGF{} $\tau$-sentence in normal form, i.e. $\Phi = \Phi_{\sig{D}}\land\Phi_{\forall}\land\Phi_{\forall\exists}$ or $\Phi =\Phi_{\sig{Univ}}\land\Phi_{\forall}\land\Phi_{\forall\exists}$ respectively, where
	\begin{align}
		\Phi_{\sig{D}} & =  \forall x. \sig{D}(x),	\\[2ex]
		\Phi_{\sig{Univ}} & =  \forall xy. \sig{Univ}(x,y),	\\[2ex]
		\Phi_{\forall} & =  \bigwedge_{i}\forall\bold{x}. (\alpha_{i}(\bold{x}) \Rightarrow \vartheta_{i}[\bold{x}]), \text{ and} \\	
		\Phi_{\forall\exists} & =  \bigwedge_{j}\forall\bold{z}. (\beta_{j}(\bold{z}) \Rightarrow \exists\bold{y}.\gamma_{j}(\bold{y}\bold{z})),
	\end{align}
	with $\vartheta_{i}$ a disjunction of literals, $\alpha_{i}$, $\beta_{j}$ guard atoms, and $\gamma_{j}$ an atom. Additionally let $\mathcal{E}$ be the set of all guarded rigid types of order $\le \mathrm{width}(\Phi)$, and $\sigma$ be the set of all type predicates associated to the types from $\mathcal{E}$. Then for every $(\mathcal{E}_{+}, \mathcal{E}_{!})$ we define $\Psi_{\mathrm{guard}, (\mathcal{E}_{+}, \mathcal{E}_{!})}$ to be the conjunction of the $\sigma$-sentences
	\newcommand{\less}{\hspace{-1.5ex}}
	\noindent
	\begin{eqnarray}
		\forall \bold{v}_\tp. \sig{Tp}_\tp (\bold{v}_\tp) \Rightarrow  \hspace{-6ex} \bigvee_{\quad\quad(\tp,\nu) \Subset_j (\tp',\nu')} \hspace{-6ex} \exists \bold{v}_{\tp'}{\setminus}\nu(\bold{z}) .\sig{Tp}_{\tp'}(\bold{v}_{\tp'})\label{eq: Psi_guarded}
	\end{eqnarray}
	where $(\tp, \nu)$ ranges over $B_{j}^{(\mathcal{E}_{+}, \mathcal{E}_{!})}$ and $j$ ranges over the number of conjuncts in $\Phi_{\forall\exists}$
	
	We call $(\bold{a},\mu,\tp, \nu)$ a \emph{violation tuple} of some structure $\mathfrak{A}\in\getmodels{\Phi}$ with $\mathcal{E}(\mathfrak{A}) = (\mathcal{E}_{+}, \mathcal{E}_{!})$ if $\bold{a}$ is a tuple from $\mathfrak{A}$ and $(\tp, \nu )\in B_j^{(\mathcal{E}_{+}, \mathcal{E}_{!})}$ such that the componentwise variable assignment $\mu\colon \bold{v}_\tp \to\bold{a}$ witnesses the violation of the $(\tp,\nu )$-conjunct from (\ref{eq: Psi_guarded}) 
	in the structure $\mathfrak{A}$, 
	i.e.,  $(\mathfrak{A}, \mu)\vDash \sig{Tp}_{\tp}$ and for every $(\tp', \nu')\in P_{j}^{(\mathcal{E}_{+}, \mathcal{E}_{!})}$ with $(\tp, \nu) \Subset_j (\tp', \nu')$ and every $\mu' \colon \bold{v}_{\tp'} \to A$ with $\mu'|_{\bold{v}_{\tp}} = \mu$ the condition $(\mathfrak{A}, \mu')\not\vDash \sig{Tp}_{\tp'}$ holds.
	
	Additionally, if $\Phi$ is from \TGF{}, we define for every $(\mathcal{E}_{+}, \mathcal{E}_{!})$ the $\sigma$ sentence $\Psi_{\text{\TGF{}}, (\mathcal{E}_{+}, \mathcal{E}_{!})}$ to be the conjunction of the following sentences for every $\tp,\tp' \in \mathcal{E}_{\text{+}}$ with $\bold{v}_\tp=v$ and $\bold{v}_{\tp'}=v'$:
	\begin{equation}\label{eq: Psi_TGF}
		\forall vv'. \sig{Tp}_\tp (v) \wedge \sig{Tp}_{\tp'} (v') \Rightarrow \hspace{-3ex} \bigvee_{{\scriptstyle\tp'' \in \mathcal{E}_{\text{+}},\ \bold{v}_{\tp''}=vv'} \atop {\scriptstyle\tp''|_{v} = \tp,\  \tp''|_{v'} = \tp'}} \hspace{-3ex} \sig{Tp}_{\tp''} (v,v')
	\end{equation}
\end{definition}

For the upcoming lemmata (\Cref{lem:PsiGuardTGF} to \Cref{lem:ConstrA}), which record some technical machinery needed, we will fix the following:

Let $\Phi$ be a \GFO{} or \TGF{} $\tau$-sentence in normal form, i.e. $\Phi = \Phi_{\sig{D}}\land\Phi_{\forall}\land\Phi_{\forall\exists}$ or $\Phi =\Phi_{\sig{Univ}}\land\Phi_{\forall}\land\Phi_{\forall\exists}$ respectively, where
\begin{align}
	\Phi_{\sig{D}} & =  \forall x. \sig{D}(x),	\\[2ex]
	\Phi_{\sig{Univ}} & =  \forall xy. \sig{Univ}(x,y),	\\[2ex]
	\Phi_{\forall} & =  \bigwedge_{i}\forall\bold{x}. (\alpha_{i}(\bold{x}) \Rightarrow \vartheta_{i}[\bold{x}]), \text{ and} \\	
	\Phi_{\forall\exists} & =  \bigwedge_{j}\forall\bold{z}. (\beta_{j}(\bold{z}) \Rightarrow \exists\bold{y}.\gamma_{j}(\bold{y}\bold{z})),
\end{align}
with $\vartheta_{i}$ a disjunction of literals, $\alpha_{i}$, $\beta_{j}$ guard atoms, and $\gamma_{j}$ an atom. Additionally let $\mathcal{E}$ be the set of all guarded rigid types of order $\le \mathrm{width}(\Phi)$, and $\sigma$ be the set of all type predicates associated to the types from $\mathcal{E}$. Let $C$ be the set of all constants from $\tau$ and $\sigma$ the signature containing for every $\tp$ a fresh predicate symbol $\sig{Tp}_{\tp}$ of arity $\mathrm{order}(\tp)$. We furthermore fix some $(\mathcal{E}_{+}, \mathcal{E}_{!})\in\getsummary{\Phi}$.

We will prove an additional ``preservation'' lemma akin to \Cref{lem:PsiGen}.
\begin{lemma}\label{lem:PsiGuardTGF}
	Then for $\mathfrak{A} \in\getmodels{\Phi}$ the $\mathcal{E}$-adornment $\mathfrak{A}\cdot \mathcal{E}$ satisfies $\Psi_{\mathrm{hom}}$ and $\Psi_{\mathrm{guard}, \mathcal{E}(\mathfrak{A})}$. If $\Phi$ is from \TGF{}, $\mathfrak{A}\cdot\mathcal{E}$ additionally satisfies $\Psi_{\text{\TGF{}}, \mathcal{E}(\mathfrak{A})}$. 
	
	Furthermore, for every $\mathfrak{B}$ a homomorphic codomain of $\mathfrak{A}$ via a homomorphism $h \colon \mathfrak{A}\to\mathfrak{B}$ the $\tau\uplus\sigma$-structure $\mathfrak{B} \cdot h(\mathfrak{A}\cdot \mathcal{E}|_{\sigma})$ equally satisfies $\Psi_{\mathrm{hom}}$ and $\Psi_{\mathrm{guard}, \mathcal{E}(\mathfrak{A})}$. If $\Phi$ is from \TGF{}, $\mathfrak{B} \cdot h(\mathfrak{A}\cdot \mathcal{E}|_{\sigma})$ additionally satisfies $\Psi_{\text{\TGF{}}, \mathcal{E}(\mathfrak{A})}$. 
\end{lemma}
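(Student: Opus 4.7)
The statement consists of six assertions: three properties ($\Psi_{\mathrm{hom}}$, $\Psi_{\mathrm{guard},\mathcal{E}(\mathfrak{A})}$, and, under the \TGF{} assumption, $\Psi_{\text{\TGF},\mathcal{E}(\mathfrak{A})}$) to verify first in $\mathfrak{A}\cdot\mathcal{E}$ and then in $\mathfrak{B}':=\mathfrak{B}\cdot h(\mathfrak{A}\cdot\mathcal{E}|_\sigma)$. My intention is to handle $\mathfrak{A}\cdot\mathcal{E}$ directly from the definitions and the fact that $\mathfrak{A}\models\Phi$, and then to transport each property to $\mathfrak{B}'$ by exploiting the homomorphism $h$ and the definition of $\mathfrak{B}'$, where type predicates are interpreted precisely as homomorphic images.

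\textbf{Step 1 ($\Psi_{\mathrm{hom}}$ for both structures).} By definition of $\mathcal{E}$-adornment, a tuple $\bold{a}\in\Tp_\tp^{\mathfrak{A}\cdot\mathcal{E}}$ realizes $\tp$ in $\mathfrak{A}$, so every positive literal in $\tp$ is satisfied by the componentwise assignment; this gives $\Psi_{\mathrm{hom}}$ for $\mathfrak{A}\cdot\mathcal{E}$. For $\mathfrak{B}'$, any $\bold{b}\in\Tp_\tp^{\mathfrak{B}'}$ equals $h(\bold{a})$ for some $\bold{a}\in\Tp_\tp^{\mathfrak{A}\cdot\mathcal{E}}$; since $h$ preserves positive atoms and constant symbols, the positive literals of $\tp$ are satisfied by $\bold{b}$ in $\mathfrak{B}$ as well.

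\textbf{Step 2 ($\Psi_{\mathrm{guard},\mathcal{E}(\mathfrak{A})}$ for $\mathfrak{A}\cdot\mathcal{E}$).} Fix a $\forall\exists$-conjunct $\forall\bold{z}.\beta_j(\bold{z})\Rightarrow\exists\bold{y}.\gamma_j(\bold{y}\bold{z})$ of $\Phi$ and some $(\tp,\nu)\in B_j^{\mathcal{E}(\mathfrak{A})}$. Let $\bold{a}$ be a tuple with $\bold{a}\in\Tp_\tp^{\mathfrak{A}\cdot\mathcal{E}}$, and let $\mu\colon\bold{v}_\tp\to\bold{a}$ be the witnessing assignment. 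Because $(\tp,\nu)\in B_j$, the atom $\nu(\beta_j)$ lies in $\tp$, so $\mu\circ\nu$ satisfies $\beta_j$ in $\mathfrak{A}$. Invoking $\mathfrak{A}\models\Phi_{\forall\exists}$, there is a tuple $\bold{d}$ in $\mathfrak{A}$ such that $\gamma_j(\bold{d},\mu(\nu(\bold{z})))$ holds. The augmented tuple obtained by merging $\bold{a}$ and $\bold{d}$ realizes some type $\tp'$ in $\mathfrak{A}$ that (i) is of order at most $\mathrm{width}(\Phi)$, (ii) is guarded by $\gamma_j$ through an extension $\nu'\supseteq\nu$, and (iii) restricts to $\tp$ on the variables corresponding to $\bold{a}$; after possibly picking a canonical variable renaming to ensure rigidity of $\tp'$ (adjusting $\bold{d}$ to avoid non-rigid collapses when needed, using that any repetitions with $\bold{a}$ or with constants are already encoded inside $\tp$ and hence in $\tp'$), we obtain $\tp'\in\mathcal{E}_+$ with $(\tp,\nu)\Subset_j(\tp',\nu')$. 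The extended componentwise assignment witnesses the corresponding disjunct in (\ref{eq: Psi_guarded}).

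\textbf{Step 3 ($\Psi_{\text{\TGF},\mathcal{E}(\mathfrak{A})}$ for $\mathfrak{A}\cdot\mathcal{E}$).} Here we use that $\mathfrak{A}\models\Phi_{\sig{Univ}}$, so every pair $(a,a')$ of elements is guarded by $\sig{Univ}$. Given $a\in\Tp_\tp^{\mathfrak{A}\cdot\mathcal{E}}$ and $a'\in\Tp_{\tp'}^{\mathfrak{A}\cdot\mathcal{E}}$ for $1$-types $\tp,\tp'$, the pair $(a,a')$ realizes a $2$-type $\tp''$ guarded by $\sig{Univ}(v,v')$, hence $\tp''\in\mathcal{E}$; its restrictions along $v,v'$ coincide with $\tp,\tp'$ respectively, so $\tp''\in\mathcal{E}_+$ and the required disjunct in (\ref{eq: Psi_TGF}) is true.

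\textbf{Step 4 (Transport to $\mathfrak{B}'$).} The key observation is that $h$ lifts to a homomorphism $h'\colon\mathfrak{A}\cdot\mathcal{E}\to\mathfrak{B}'$, since $\Tp_\tp^{\mathfrak{B}'}=h(\Tp_\tp^{\mathfrak{A}\cdot\mathcal{E}})$ by definition. Take any tuple $\bold{b}\in\Tp_\tp^{\mathfrak{B}'}$ for the premise of (\ref{eq: Psi_guarded}); by construction, $\bold{b}=h'(\bold{a})$ for some $\bold{a}\in\Tp_\tp^{\mathfrak{A}\cdot\mathcal{E}}$. Applying Step 2 in $\mathfrak{A}\cdot\mathcal{E}$ produces an extended tuple $\bold{a}^\ast$ with $\bold{a}^\ast\in\Tp_{\tp'}^{\mathfrak{A}\cdot\mathcal{E}}$ for some fitting $\tp'$; then $h'(\bold{a}^\ast)\in\Tp_{\tp'}^{\mathfrak{B}'}$ extends $\bold{b}$, confirming $\Psi_{\mathrm{guard},\mathcal{E}(\mathfrak{A})}$ in $\mathfrak{B}'$. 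The analogous argument, composing Step 3 with $h'$, handles $\Psi_{\text{\TGF},\mathcal{E}(\mathfrak{A})}$: any $b\in\Tp_\tp^{\mathfrak{B}'}$ and $b'\in\Tp_{\tp'}^{\mathfrak{B}'}$ are images of some $a,a'$ in $\mathfrak{A}\cdot\mathcal{E}$ realizing the appropriate $2$-type $\tp''$, whose image under $h'$ is a witness for (\ref{eq: Psi_TGF}) in $\mathfrak{B}'$.

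\textbf{Main obstacle.} The only technically delicate step is Step 2, specifically ensuring that the type $\tp'$ realized by the augmented tuple in $\mathfrak{A}$ is \emph{rigid} and thus eligible. The existential witness $\bold{d}$ could \emph{a priori} coincide with elements of $\bold{a}$ or with constant interpretations, producing equalities and hence violating rigidity. This will be resolved by appealing to the normal form of $\Phi$ (\Cref{lem:GFONF,lem:TGFNF}) together with the fact that $\mathcal{E}$ is closed under variable permutations and (via \Cref{lem:eligclosure}) under taking the appropriate subtypes inside $\mathcal{E}_+$, so that a rigid representative witnessing the fitting relation $(\tp,\nu)\Subset_j(\tp',\nu')$ can always be extracted.
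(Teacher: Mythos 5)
Your plan tracks the paper's proof closely: Step 1 (preservation of $\Psi_\mathrm{hom}$), Step 2 (using the $\forall\exists$-conjunct of the normal form to witness $\Psi_{\mathrm{guard},\mathcal{E}(\mathfrak{A})}$), Step 3 (using $\Phi_{\sig{Univ}}$ so that any pair is guarded), and Step 4 (pulling witnesses back through $h'$ and pushing them forward again) all correspond to the paper's argument, and the decomposition is identical.

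One small correction about your ``Main obstacle'': the rigidity issue is real, but the resolution you gesture at (closure under variable permutations plus \Cref{lem:eligclosure}) is not what the paper uses to fix it. Neither of those lemmas can turn a non-rigid type into a rigid one. The actual mechanism is built into the definitions of ``guarded by'' and ``fitting'': the type $\tp'$ is the (automatically rigid) type realized by the tuple of \emph{distinct} elements $\bold{a}^\ast$ obtained by adjoining the fresh existential witnesses to $\bold{a}$, and any collapse of a witness onto an element of $\bold{a}$ or onto a constant interpretation is recorded \emph{in the substitution} $\nu'\colon \bold{yz}\to\bold{v}_{\tp'}\cup C$ rather than in $\tp'$ itself. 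This is why $\nu'$ is allowed to be non-injective and to hit constants, and why the fitting relation $(\tp,\nu)\Subset_j(\tp',\nu')$ is phrased in terms of $\nu'$ extending $\nu$ and a subtype condition on $\tp'|_{\bold{v}\cap\nu'(\bold{z})}$. You should not ``adjust $\bold{d}$''; the domain elements are what they are. Rephrase the obstacle's resolution accordingly and the argument is complete.
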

\begin{proof}
	Let $\mathfrak{A}\in\getmodels{\Phi}$. We set $\mathcal{E}(\mathfrak{A}) = (\mathcal{E}_{+}, \mathcal{E}_{!})$. We first prove everything for $\Phi$ from \GFO{} or \TGF{}. We then conclude with the special case that $\Phi$ is just from \TGF{}.
	
	Note that $\mathfrak{A}\cdot\mathcal{E}$ trivially satisfies $\Psi_{\mathrm{hom}}$ by definition, as $\bold{a}\in\sig{Tp}_{\tp}^{\mathfrak{A}\cdot\mathcal{E}}$ if and only if $\bold{a}$ realizes the type $\tp$. Hence it realizes every positive literal in $\tp$.
	
	We show that $\mathfrak{A}\cdot\mathcal{E}$ satisfies $\Psi_{\mathrm{guard}, \mathcal{E}(\mathfrak{A})}$. For this we will make use of the fact that we assumed $\Phi$ to be in normal form (see also \Cref{lem:GFONF}). Let $\bold{a}$ be a tuple from $\mathfrak{A}\cdot\mathcal{E}$ (and hence from $\mathfrak{A}$) and $(\tp, \nu)\in B_{j}^{\mathcal{E}(\mathfrak{A})}$ for some $j$ such that $\bold{a}\in \sig{Tp}_{\tp}^{\mathfrak{A}\cdot\mathcal{E}}$. Let furthermore $\mu_{\bold{a}}\colon\bold{v}_{\tp}\to \bold{a}$ be the componentwise variable assignment witnessing this fact, i.e. $(\mathfrak{A}, \mu_{\bold{a}})\vDash \sig{Tp}_{\tp}$. We extend $\mu_{\bold{a}}$ to $\mu'_{\bold{a}}\colon\bold{v}_{\tp}\uplus C \to A$ where for every $\sig{c}\in C$ we set $\mu'_{\bold{a}}(\sig{c}) = \sig{c}^{\mathfrak{A}}$. By definition $\bold{a}$ realizes $\tp$ and thus (by virtue of $(\tp, \nu)$ being in $B_{j}^{\mathcal{E}(\mathfrak{A})}$) we obtain by componentwise application that $(\mathfrak{A}, \mu'_{\bold{a}}\circ \nu) \vDash \beta_j(\bold{z})$.
	By $\Phi_{\forall\exists}$ there is a supertuple $\bold{a}^{\ast}$ of $\bold{a}$ with $\nu'\colon\bold{yz}\to\bold{v}^{\ast}\uplus C$ extending $\nu$ (i.e. $\nu'|_{\bold{z}}= \nu$) and $\mu_{\bold{a}^{\ast}}\colon \bold{v}^{\ast}\uplus C\to A$ (where $\mu_{\bold{a}^{\ast}}(\bold{v}^{\ast}) = \bold{a}^{\ast}$ holds componentwise) extending $\mu'_{\bold{a}}$ such that $(\mathfrak{A}, \mu_{\bold{a}^{\ast}}\circ\nu') \vDash \gamma_j(\bold{yz})$. Then $\bold{a}^{\ast}$ has a unique type $\tp'$ with $\bold{v}_{\tp'}=\bold{v}^{\ast}$ componentwise and $\tp'$ being guarded by $\gamma_j$ as witnessed by $\nu'$. So $\tp'\in\mathcal{E}$ and hence $(\tp', \nu')\in P_{j}^{\mathcal{E}(\mathfrak{A})}$. Since $\bold{v}_{\tp}\subseteq \bold{v}_{\tp'}$, $\nu'$ extends $\nu$, $\mu_{\bold{a}^{\ast}}$ extends $\mu_{\bold{a}}$ we obtain that $\tp'|_{\bold{v}_{\tp}}=\tp$, hence $(\tp, \nu)\Subset_j (\tp', \nu')$. By definition of $\mathfrak{A}\cdot\mathcal{E}$ the tuple $\bold{a}^{\ast}$ is in $\sig{Tp}_{\tp'}^{\mathfrak{A}\cdot\mathcal{E}}$.
	In conclusion, this argument shows that $\mathfrak{A}\cdot\mathcal{E}$ satisfies $\Psi_{\mathrm{guard}, \mathcal{E}(\mathfrak{A})}$.
	
 	Now let $\mathfrak{B}$ be a $\tau$-structure with $h\colon \mathfrak{A}\to\mathfrak{B}$ homomorphism. We let $\mathfrak{B}^{\ast}$ denote the $\tau\uplus\sigma$-structure $\mathfrak{B} \cdot h(\mathfrak{A}\cdot \mathcal{E}|_{\sigma})$. Note that by definition $h$ is also a homomorphism $\mathfrak{A}\cdot\mathcal{E}\to\mathfrak{B} \cdot h(\mathfrak{A}\cdot \mathcal{E}|_{\sigma})$.
	Let $\bold{b}$ be a tuple from $\mathfrak{B}^{\ast}$ satisfying $\sig{Tp}_{\tp}$ for some $\tp\in \mathcal{E}_{+}$. By definition there is $\bold{a}$ in $\mathfrak{A}^{\ast}$ (and hence in $\mathfrak{A}$) such that $\bold{a}\in\sig{Tp}_{\tp}^{\mathfrak{A}^{\ast}}$ and $h(\bold{a})=\bold{b}$. As $\mathfrak{A}^{\ast}$ satisfies $\Psi_{\mathrm{hom}}$ by definition we obtain that $\bold{a}$ realizes $\tp$. Especially it satisfies the positive literals from $\tp$. We extend the componentwise variable assignment $\mu\colon\bold{v}_{\tp}\to\bold{a}$ to a map $\mu'\colon \bold{v}_{\tp}\uplus C \to A$ such that $\mu'|_{\bold{v}_{\tp}}=\mu$ and for every $\sig{c}\in C$ we have $\mu'(\sig{c})=\sig{c}^{\mathfrak{A}}$. Hence for every atom $\sig{R}(\bold{t})\in\tp$ with $\bold{t}\subseteq\bold{v}_{\tp}\uplus C$ we have (applying $\mu'$ componentwise) $\mu'(\bold{t})\in\sig{R}^{\mathfrak{A}}$. Consequently, the (componentwise) image $h(\mu'(\bold{t}))$ is a subtuple from $\bold{b}$ such that $h(\mu'(\bold{t}))\in\sig{R}^{\mathfrak{B}^{\ast}}$. We thus obtain that $\mathfrak{B}^{\ast}\vDash \Psi_{\mathrm{hom}}$.
	
	Next we will show that $\mathfrak{B}^{\ast}$ also satisfies $\Psi_{\mathrm{guard}, (\mathcal{E}_{+}, \mathcal{E}_{!})}$. So let $\bold{b}$ be some tuple from $\mathfrak{B}^{\ast}$ and $(\tp, \nu)\in B_{j}^{(\mathcal{E}_{+}, \mathcal{E}_{!})}$ for some $j$ such that $\bold{b}\in \sig{Tp}_{\tp}^{\mathfrak{B}^{\ast}}$. Additionally let $\mu_{\bold{b}} \colon\bold{v}_{\tp}\to \bold{b}$ be the componentwise variable assignment. By definition of $\mathfrak{B}^{\ast}$ there is $\bold{a}$ in $\mathfrak{A}^{\ast}$ such that $\bold{a}\in\sig{Tp}_{\tp}^{\mathfrak{A}^{\ast}}$, $\mu_{\bold{a}}\colon\bold{v}_{\tp}\to\bold{a}$ componentwise variable assignment and $h(\bold{a})= \bold{b}$ (again, componentwise). Since $\mathfrak{A}^{\ast}$ satisfies $\Psi_{\mathrm{guard}, (\mathcal{E}_{+}, \mathcal{E}_{!})}$ we find $(\tp', \nu')\in P_{j}^{(\mathcal{E}_{+}, \mathcal{E}_{!})}$ with $(\tp, \nu)\Subset_j (\tp', \nu')$ and a tuple $\bold{a}^{\ast}$ such that the componentwise variable assignment $\mu_{\bold{a}^{\ast}}\colon\bold{v}_{\tp'}\to\bold{a}^{\ast}$ extends $\mu_{\bold{a}}$, i.e. $\mu_{\bold{a}^{\ast}}|_{\bold{v}_{\tp}} = \mu_{\bold{a}}$. Applying $h$ to $\bold{a}^{\ast}$ we obtain $\bold{b}^{\ast}=h(\bold{a}^{\ast})$ a supertuple of $\bold{b}$. The assignment $\mu_{\bold{b}^{\ast}} = h\circ \mu_{\bold{a}^{\ast}}$ extends $\mu_{\bold{b}}$:
	\begin{equation}
		\mu_{\bold{b}^{\ast}}|_{\bold{v}_{\tp}} = h\circ (\mu_{\bold{a}^{\ast}}|_{\bold{v}_{\tp}}) = h\circ \mu_{\bold{a}} = \mu_{\bold{b}}.
	\end{equation}
	As $h$ is a homomorphism we obtain $\bold{b}^{\ast}\in\sig{Tp}_{\tp'}^{\mathfrak{B}^{\ast}}$.
	This argument shows that $\mathfrak{B}^{\ast}\vDash \Psi_{\mathrm{guard}, \mathcal{E}(\mathfrak{A})}$.
	
	Now suppose in addition, that $\Phi$ is from \TGF{}.
	We first show that then $\mathfrak{A}\cdot\mathcal{E}$ satisfies $\Psi_{\text{\TGF{}}, \mathcal{E}(\mathfrak{A})}$. Take $a_1,a_2$ from $\mathfrak{A}\cdot\mathcal{E}$ (hence from $\mathfrak{A}$) satisfying for some types $\tp_1, \tp_2\in\mathcal{E}_{+}$ the following: $a_1 \in\sig{Tp}_{\tp_1}^{\mathfrak{A}\cdot\mathcal{E}}$ and $a_2 \in\sig{Tp}_{\tp_2}^{\mathfrak{A}\cdot\mathcal{E}}$. Without loss of generality we may assume that $\bold{v}_{\tp_1} = \{ v_1 \}$ and $\bold{v}_{\tp_2} = \{ v_2 \}$ for $v_1\neq v_2$. By definition $a_1$ realizes $\tp_1$ and $a_2$ realizes $\tp_2$ in $\mathfrak{A}$ (and $\mathfrak{A}\cdot\mathcal{E}$). For $(a_1,a_2)$ 
	there is a unique $2$-type $\tp'$ over $\tau$ with $\bold{v}_{\tp'} = v_1 v_2$ (as a subsequence from $\bold{v}$) such that $(a_1, a_2)$ realizes $\tp'$ in $\mathfrak{A}$ and $\tp'|_{v_1}=\tp_1$ and $\tp'|_{v_2}=\tp_2$. Because of $\mathfrak{A}\vDash\Phi_{\sig{Univ}}$ the type $\tp'$ contains automatically the atoms $\sig{Univ}(v_1,v_2)$ and $\sig{Univ}(v_2,v_1)$. Hence $\tp'$ is guarded and so we see that $\tp'\in\mathcal{E}$ (and of course also in $\mathcal{E}_{+}$). By definition of $\mathfrak{A}\cdot\mathcal{E}$ we obtain $(a_1, a_2)\in\sig{Tp}_{\tp'}^{\mathfrak{A}\cdot\mathcal{E}}$. Hence also $\mathfrak{A}\cdot\mathcal{E} \vDash\Psi_{\text{\TGF{}}, \mathcal{E}(\mathfrak{A})}$.
	
	Finally, also $\mathfrak{B}^{\ast}$ satisfies $\Psi_{\text{\TGF{}}, \mathcal{E}(\mathfrak{A})}$. To see this, let $b_1, b_2$ be from $\mathfrak{B}^{\ast}$ such that there are $\tp_1, \tp_2\in\mathcal{E}_{+}$ with $b_1\in\sig{Tp}_{\tp_1}^{\mathfrak{B}^{\ast}}$ and $b_2\in\sig{Tp}_{\tp_2}^{\mathfrak{B}^{\ast}}$.  Without loss of generality we may assume that $\bold{v}_{\tp_1} = \{ v_1 \}$ and $\bold{v}_{\tp_2} = \{ v_2 \}$ for $v_1\neq v_2$. By definition of $\mathfrak{B}^{\ast}$ there are $a_1,a_2$ in $\mathfrak{A}\cdot\mathcal{E}$ such that $a_1\in\sig{Tp}_{\tp_1}^{\mathfrak{A}\cdot\mathcal{E}}$ and $a_2\in\sig{Tp}_{\tp_2}^{\mathfrak{A}\cdot\mathcal{E}}$ with $h(a_1) = b_1$ and $h(a_2)=b_2$. Since we already established that $\mathfrak{A}\cdot\mathcal{E}$ satisfies $\Psi_{\text{\TGF{}}, \mathcal{E}(\mathfrak{A})}$,there is a unique $2$-type $\tp'\in\mathcal{E}_{+}$ with $\bold{v}_{\tp'} = v_1 v_2$ (as a subsequence from $\bold{v}$) with $\tp'|_{v_1}=\tp_1$ and $\tp'|_{v_2}=\tp_2$ such that $(a_1, a_2)\in\sig{Tp}_{\tp'}^{\mathfrak{A}\cdot\mathcal{E}}$. As previously mentioned, $h$ as a map is also a homomorphism $\mathcal{A}\cdot\mathcal{E}\to\mathfrak{B}^{\ast}$. Hence, by $h((a_1, a_2)) = (b_1, b_2)$, we obtain $(b_1, b_2)\in\sig{Tp}_{\tp'}^{\mathfrak{B}'}$. This argument shows $\mathfrak{B}^{\ast} \vDash\Psi_{\text{\TGF{}}, \mathcal{E}(\mathfrak{A})}$ and concludes our proof.
\end{proof}

We will now prove a lemma that will allow us to extend a structure not satisfying $\Psi_{\mathrm{guard}, (\mathcal{E}_{+}, \mathcal{E}_{!})}$ to one that does. Of course, this will not be so simple. Hence we first show that, if there is a ``problem'' for satisfaction of $\Psi_{\mathrm{guard}, (\mathcal{E}_{+}, \mathcal{E}_{!})}$ then we are able to ``repair'' this problem without introducing additional ones with respect to the original structure. This will not necessarily preclude obtaining new ``problems'' in the extended structure, though those new ones will be located in the new additions. Dealing with this will be part of \Cref{lem:inindstep} and \Cref{lem:GFOindstep}. The notion of ``problem'' in this context has been made formal by the definition of violation tuples in \Cref{def:guardvioltgf}. The main idea is the following: If we have a ``problem'' (formalized via a violation tuple), we can take some type ``fitting'' for the position where the ``problem'' occurs (consistent with $\Psi_{\mathrm{guard}, (\mathcal{E}_{+}, \mathcal{E}_{!})}$) and glue $\mathfrak{D}_{\tp}$ (or $(\mathfrak{D}_{\tp})^{\tau}$ in the enriched setting) on accordingly. And as we are concerned with homclosures we will also discuss the cases when we have homomorphisms from a structure ``having problems'' to one that has none.

\begin{lemma}\label{lem:addtype}
	Let $\mathfrak{A}$ be a finite $\sigma\uplus C$-structure satisfying $\Psi_{\mathrm{gen}, (\mathcal{E}_{+}, \mathcal{E}_{!})}$. 
	Let $(\tp, \nu)\in B_{j}^{(\mathcal{E}_{+}, \mathcal{E}_{!})}$ and $(\tp', \nu')\in P_{j}^{(\mathcal{E}_{+}, \mathcal{E}_{!})}$ be fitting, i.e. $(\tp, \nu)\Subset_{j}(\tp', \nu')$. Furthermore let $\bold{a}$ be a tuple of length $\mathrm{order}(\tp)$ in $\mathfrak{A}$ and $\mu\colon\bold{v}_{\tp}\to\bold{a}$ be the componentwise variable assignment such that $(\mathfrak{A}, \mu) \vDash \sig{Tp}_{\tp}(\bold{v}_{\tp})$, i.e. $\bold{a}\in \sig{Tp}_{\tp}^{\mathfrak{A}}$. Then we can extend $\mathfrak{A}$ to a $\sigma\uplus C$-structure $\mathfrak{A}^{\ast}$ and the componentwise assignment $\mu$ to a componentwise one $\mu^{\ast} \colon\bold{v}_{\tp'}\to \mathfrak{A}^{\ast}$ in a canonical way such that $(\mathfrak{A}^{\ast}, \mu^{\ast})\vDash \sig{Tp}_{\tp'}(\bold{v}_{\tp'})$, and $\mu^{\ast}(\bold{v}_{\tp'}\setminus\bold{v}_{\tp})$ has no element in common with the domain $A$ of $\mathfrak{A}$.
	Furthermore $\mathfrak{A}$ is an induced substructure of $\mathfrak{A}^{\ast}$ and $\mathfrak{A}^{\ast}$ satisfies $\Psi_{\mathrm{gen}, (\mathcal{E}_{+}, \mathcal{E}_{!})}$.
	
	If there is a $\sigma\uplus C$-structure $\mathfrak{B}$ satisfying
	\begin{equation}
		\forall\bold{v}_{\tp} .\sig{Tp}_{\tp}(\bold{v}_{\tp}) \Rightarrow \exists\bold{v}_{\tp'}\setminus \nu(\bold{z}). \sig{Tp}_{\tp'}(\bold{v}_{\tp'})
	\end{equation}
	 and a homomorphism $h\colon \mathfrak{A}\to\mathfrak{B}$ we can extend it to a homomorphism $h^{\ast}\colon \mathfrak{A}^{\ast}\to\mathfrak{B}$.
	 
	 Additionally, if there are $\tau\uplus\sigma$-structures $\mathfrak{A}'$ and $\mathfrak{B}'$ such that $\mathfrak{A}=\mathfrak{A}'|_{\sigma\uplus C}$ and $\mathfrak{B}=\mathfrak{B}'|_{\sigma\uplus C}$, $\mathfrak{B}'\vDash\Psi_{\mathrm{hom}}$, 
	 for all $\tp\in\mathcal{E}$ it holds that a tuple $\bold{a}$ without constants from $\mathfrak{A}'$ satsifies $\sig{Tp}_{\tp}$ in $\mathfrak{A}'$ if and only if $\bold{a}$ realizes $\tp$ in $\mathfrak{A}'$, 
	 and there is a homomorphism $h'\colon\mathfrak{A}'\to\mathfrak{B}'$, then we can extend $\mathfrak{A}'$ to a structure $(\mathfrak{A}')^{\ast}$ with $(\mathfrak{A}')^{\ast}|_{\sigma\uplus C}=\mathfrak{A}^{\ast}$ and $h'$ to a homomorphism $(h')^{\ast}\colon (\mathfrak{A}')^{\ast}\to \mathfrak{B}'$. Furthermore, $\mathfrak{A}'$ is an induced substructure of $(\mathfrak{A}')^{\ast}$, and $(\mathfrak{A}')^{\ast}$ satisfies $\Psi_{\mathrm{gen}, (\mathcal{E}_{+}, \mathcal{E}_{!})}$. 
	 Additionally, for all $\tp\in\mathcal{E}$ it holds that a tuple $\bold{a}$ without constants from $(\mathfrak{A}')^{\ast}$ satsifies $\sig{Tp}_{\tp}$ in $(\mathfrak{A}')^{\ast}$ if and only if $\bold{a}$ realizes $\tp$ in $(\mathfrak{A}')^{\ast}$.
	 
	 Let $Q_{\bold{a}}$ be the set of all violation tuples of $\mathfrak{A}$ (or $\mathfrak{A}'$) of the form $q_i = (\bold{c}_i, \mu_i, \tp_i, \nu_i)$ where the $\bold{c}_i$ consist only of elements from $\bold{a}$. Not only is $Q_{\bold{a}}$ finite but it contains already all violation tuples of the form $(\bold{d}, \mu_{\ast}, \tp_{\ast}, \nu_{\ast})$ of $\mathfrak{A}^{\ast}$ (or $(\mathfrak{A}')^{\ast}$) where $\bold{d}$ consists of elements from $\bold{a}$.
\end{lemma}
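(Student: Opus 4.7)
The plan is to build $\mathfrak{A}^{\ast}$ by adjoining one fresh element $a_v$ for each variable $v \in \bold{v}_{\tp'}\setminus\bold{v}_\tp$ and pulling the relevant $\sig{Tp}$-relations over from $\mathfrak{D}_{\tp'}$ via the natural glueing map. Concretely, I would set $A^{\ast} = A \uplus \{a_v \mid v\in\bold{v}_{\tp'}\setminus\bold{v}_\tp\}$, extend $\mu$ to $\mu^{\ast}\colon \bold{v}_{\tp'}\to A^{\ast}$ by sending the new variables to their fresh counterparts, and define the injection $\iota\colon D_{\tp'}\to A^{\ast}$ by $v\mapsto \mu^{\ast}(v)$. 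For every $\tp''\in\mathcal{E}$, enlarge $\sig{Tp}_{\tp''}^{\mathfrak{A}}$ to $\sig{Tp}_{\tp''}^{\mathfrak{A}^{\ast}}\ := \sig{Tp}_{\tp''}^{\mathfrak{A}}\,\cup\,\iota(\sig{Tp}_{\tp''}^{\mathfrak{D}_{\tp'}})$, and leave the interpretation of constants untouched. By construction $\mathfrak{A}$ is an induced substructure of $\mathfrak{A}^{\ast}$, $\mu^{\ast}(\bold{v}_{\tp'}\setminus\bold{v}_\tp)\cap A = \emptyset$, and $\mu^{\ast}$ witnesses $\sig{Tp}_{\tp'}$ on the freshly assembled tuple because $\bold{v}_{\tp'}\in\sig{Tp}_{\tp'}^{\mathfrak{D}_{\tp'}}$ by definition of $\mathfrak{D}_{\tp'}$.

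Next I would verify that $\mathfrak{A}^{\ast}\vDash \Psi_{\mathrm{gen},(\mathcal{E}_+,\mathcal{E}_!)}$. Conjunct (\ref{eq: gen 1}) survives because $\mathfrak{A}$ already satisfied it. Conjunct (\ref{eq: gen 2}) holds since $\mathfrak{D}_{\tp'}$ by construction populates only $\sig{Tp}$-relations for types from $\mathcal{E}_+$ (invoking closure of $\mathcal{E}_+$ under subtypes and permutations via \Cref{lem:eligclosure}, given that $\tp'\in\mathcal{E}_+$). Conjuncts (\ref{eq: gen 3}) and (\ref{eq: gen 3plus}) transfer from $\mathfrak{D}_{\tp'}$ for tuples lying entirely in $\iota(D_{\tp'})$; for mixed tuples, note that any atom of $\mathfrak{D}_{\tp'}$ using a variable in $\bold{v}_\tp\cap\bold{v}_{\tp'}$ was already mandated in $\mathfrak{A}$ because $\tp=\tp'|_{\bold{v}_\tp}$ and $\mathfrak{A}$ satisfied (\ref{eq: gen 3}), (\ref{eq: gen 3plus}) on $\bold{a}\in\sig{Tp}_\tp^{\mathfrak{A}}$. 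No tuples mixing fresh elements with arbitrary old elements of $A\setminus\bold{a}$ are introduced, so nothing new is required.

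To extend the homomorphism $h\colon\mathfrak{A}\to\mathfrak{B}$, I would apply the assumed sentence to $h(\bold{a})\in\sig{Tp}_\tp^{\mathfrak{B}}$ (which holds because $h$ is a homomorphism) to obtain a tuple $\bold{b}^{\ast}$ extending $h(\bold{a})$ with $\bold{b}^{\ast}\in\sig{Tp}_{\tp'}^{\mathfrak{B}}$, and define $h^{\ast}$ to agree with $h$ on $A$ and send each fresh $a_v$ to the corresponding component of $\bold{b}^{\ast}$. Homomorphy on the new $\sig{Tp}$-atoms follows from the equivalence of these with permutations and subtypes of $\sig{Tp}_{\tp'}^{\mathfrak{B}}$, which in turn are enforced in $\mathfrak{B}$ if $\mathfrak{B}$ also satisfies $\Psi_{\mathrm{gen},(\mathcal{E}_+,\mathcal{E}_!)}$ (and otherwise we just need the $\sig{Tp}_{\tp'}$ component). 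For the enriched setting with $\mathfrak{A}',\mathfrak{B}'$ on $\tau\uplus\sigma$, I would additionally interpret every $\sig{R}\in\tau\setminus C$ on tuples containing fresh elements exactly as dictated by $\tp'$: a tuple $\bold{b}$ from $(\mathfrak{A}')^{\ast}$ enters $\sig{R}^{(\mathfrak{A}')^{\ast}}$ iff there is a $\bold{d}\in D_{\tp'}$ with $\iota(\bold{d})=\bold{b}$ and $\sig{R}(\bold{d})\in\tp'$ (recognising elements of $A$ as their $\iota$-preimages in $\bold{v}_\tp\cap\bold{v}_{\tp'}$), leaving existing $\tau$-atoms untouched. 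Because $\bold{a}$ already realized $\tp=\tp'|_{\bold{v}_\tp}$ in $\mathfrak{A}'$, this is consistent on the overlap; the type-realization property and the homomorphy of $(h')^{\ast}$ then follow, using $\mathfrak{B}'\vDash\Psi_{\mathrm{hom}}$ to translate satisfaction of the new $\sig{Tp}_{\tp''}$-atoms in $\mathfrak{B}'$ into the required $\tau$-atoms.

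The final claim about $Q_{\bold{a}}$ is the subtle part: adding the tuple $\bold{a}^{\ast}\in\sig{Tp}_{\tp'}^{\mathfrak{A}^{\ast}}$ could in principle \emph{cancel} some earlier violation $(\bold{c},\mu_i,\tp_i,\nu_i)$ with $\bold{c}\subseteq\bold{a}$, but inspecting the definition of $\Subset_j$ shows that any such cancelling witness would force the required supertype to live within $\iota(D_{\tp'})$, i.e.\ only the single $(\tp',\nu')$-witness we installed could fire, and it fires only when $(\tp_i,\nu_i)=(\tp,\nu)$ and the current $\bold{c}$ matches $\bold{a}$. The main obstacle, therefore, is to check finiteness and stability carefully: I would argue that any new violation tuple of $\mathfrak{A}^{\ast}$ with carrier inside $\bold{a}$ must already have been a violation tuple of $\mathfrak{A}$, because the only new $\sig{Tp}$-atoms on $\bold{a}$ come from subtypes and permutations of $\tp'$ (hence subtypes and permutations of $\tp$ on the subtuple $\bold{a}$), which were already witnessed in $\mathfrak{A}$ via $\sig{Tp}_\tp$ and (\ref{eq: gen 3}), (\ref{eq: gen 3plus}); and the only new $\sig{Tp}$-witnesses for the existentials are confined to tuples leaving $A$, so violations whose entire carrier lies in $A$ cannot be resolved by anything but the one specific $(\tp,\nu)$ pair we handled. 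Finiteness of $Q_{\bold{a}}$ is immediate since $\mathcal{E}$ is finite and $\bold{a}$ has bounded length.
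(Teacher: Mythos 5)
Your proposal is correct and follows essentially the same construction as the paper: glue $\mathfrak{D}_{\tp'}$ onto $\mathfrak{A}$ along the overlap dictated by $(\tp,\nu)\Subset_j(\tp',\nu')$, observe that fittingness makes the overlap consistent (so $\mathfrak{A}$ is induced), verify $\Psi_{\mathrm{gen}}$ by splitting tuples into pure-$\mathfrak{A}$ and pure-$\iota(D_{\tp'})$ cases, extend $h$ via the guaranteed witness $\bold{b}^\ast$ in $\mathfrak{B}$, and note that no new $\sigma$-atoms appear on tuples inside $\bold{a}$, so violation tuples supported on $\bold{a}$ can only disappear, never arise. You even flag (correctly) the point that extending $h$ to a homomorphism on the saturated copy of $\mathfrak{D}_{\tp'}$ implicitly needs $\mathfrak{B}$ to respect subtype and permutation closure, which the paper handles by appealing to the argument of \Cref{lem:A0} and which holds in all the lemma's uses.
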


\begin{proof}
	Let $\mathfrak{A}$ be a finite $\sigma\uplus C$ structure satisfying $\Psi_{\mathrm{gen}, (\mathcal{E}_{+}, \mathcal{E}_{!})}$. 
	Let $(\tp, \nu)\in B_{j}^{(\mathcal{E}_{+}, \mathcal{E}_{!})}$ and $(\tp', \nu')\in P_{j}^{(\mathcal{E}_{+}, \mathcal{E}_{!})}$ be fitting, i.e. $(\tp, \nu)\Subset_{j}(\tp', \nu')$. Furthermore let $\bold{a}$ be a tuple of length $\mathrm{order}(\tp)$ in $\mathfrak{A}$ and $\mu\colon\bold{v}_{\tp}\to\bold{a}$ be the componentwise variable assignment such that $(\mathfrak{A}, \mu) \vDash \sig{Tp}_{\tp}(\bold{v}_{\tp})$, i.e. $\bold{a}\in \sig{Tp}_{\tp}^{\mathfrak{A}}$. 
	
	Now, consider the $\sigma$-structure $\mathfrak{D}_{\tp'}$ and be reminded that its domain are the elements from $\bold{v}_{\tp'}$.
	We obtain $\mathfrak{D}'_{\tp'}$ from $\mathfrak{D}_{\tp'}$ by renaming all elements $\nu'(z) \in \bold{v}_{\tp'} \cap \nu'(\bold{z})$ into $\mu(z)$ (i.e., into elements already present in $\mathfrak{A}$) and assigning a fresh name to any other element $v \in \bold{v}_{\tp'} \setminus \nu'(\bold{z})$. 
	Now we obtain $\mathfrak{A}^{\ast}$ by taking the (non-disjoint) union of $\mathfrak{A}$ with $\mathfrak{D}'_{\tp'}$.
	
	Note that by $(\tp, \nu)$ and $(\tp', \nu')$ being fitting we obtain that $\mathfrak{A}$ is an induced substructure of $\mathfrak{A}^{\ast}$ by construction. 
	As $(\tp', \nu')\in P_{j}^{(\mathcal{E}_{+}, \mathcal{E}_{!})}$, clearly $\tp' \in\mathcal{E}_{+}$. By definition and applying \Cref{lem:eligclosure} this means that $\mathfrak{D}_{\tp'}$ and hence $\mathfrak{D}'_{\tp'}$ satisfies $\Psi_{\mathrm{gen}, (\mathcal{E}_{+}, \mathcal{E}_{!})}$. More specifically, the validity of Formulae \ref{eq: gen 3} and \ref{eq: gen 3plus} is clear from the closure under those formulae in the definition, whereas \ref{eq: gen 1} and \ref{eq: gen 2} follow from $\tp\in\mathcal{E}_{+}$ and \Cref{lem:eligclosure} and the definition. 
	Note that the saturation with respect to Formulae \ref{eq: gen 3} and \ref{eq: gen 3plus} is closely tied to the relation of the types to which the type predicates are associated. 
	And, by again applying that $(\tp, \nu)$ and $(\tp', \nu')$ are fitting, we conclude that the union of $\mathfrak{A}$ and $\mathfrak{D}'_{\tp'}$, which is $\mathfrak{A}^{\ast}$ inherits the satisfaction of $\Psi_{\mathrm{gen}, (\mathcal{E}_{+}, \mathcal{E}_{!})}$. 
	This is clear while observing that the substructure induced by $\bold{a}$ in $\mathfrak{A}$ and the one in $\mathfrak{D}'_{\tp'}$ are the same. Hence we do not create any new relations with respect to relation symbols from $\sigma$.
	By this token, $\mathfrak{A}$ is an induced substructure of $\mathfrak{A}^{\ast}$, as no new relations are introduced on tuples purely from $\mathfrak{A}$.
	
	If there is a $\sigma\uplus C$-structure $\mathfrak{B}$ satisfying
	\begin{equation}
		\forall\bold{v}_{\tp} .\sig{Tp}_{\tp} \Rightarrow \exists\bold{v}_{\tp'}\setminus \nu(\bold{z}). \sig{Tp}_{\tp'}(\bold{v}_{\tp'}) \label{eq: addtype 1}
	\end{equation}
	and a homomorphism $h\colon \mathfrak{A}\to\mathfrak{B}$ we adjust the construction in the following way:
	Since $h\colon \mathfrak{A}\to\mathfrak{B}$ is a homomorphism the tuple $\bold{b}=h(\bold{a})$ satisfies $\bold{b}\in\sig{Tp}_{\tp}^{\mathfrak{B}}$. This is witnessed by the componentwise assignment $\mu' \colon\bold{v}_{\tp} \to\bold{b}$ defined as $h\circ\mu$. As $\mathfrak{B}$ satisfies Formula \ref{eq: addtype 1} we obtain a tuple $\bold{b}^{\ast}$ containing $\bold{b}$ as a subtuple and can extend $\mu'$ to the componentwise assignment $\mu^{\ast}\colon\bold{v}_{\tp'}\to\bold{b}^{\ast}$ (i.e. $\mu^{\ast}|_{\bold{v}_{\tp}}=\mu'$).
	Then, by what we have seen earlier, the componentwise map $g\colon\bold{v}_{\tp'} \to \bold{b}^{\ast}, v\mapsto \mu'(v)$ is a homomorphism from $\mathfrak{D}_{\tp'}$ to $\mathfrak{B}$. By moving from $\mathfrak{D}_{\tp'}$ to $\mathfrak{D}'_{\tp'}$ via renaming (which establishes an isomorphism $f\colon \mathfrak{D}_{\tp'}\to\mathfrak{D}'_{\tp'}$) we obtain via $g\circ f^{-1}$ a homomorphism $\dot{g}\colon \mathfrak{D}'_{\tp'}\to \mathfrak{B}$
	
	After obtaining $\mathfrak{A}^{\ast}$ via the union of $\mathfrak{A}$ with $\mathfrak{D}'_{\tp'}$, we let $h^{\ast}$ be the union of the maps $h$ and $\dot{g}$, i.e. $h\cup\dot{g}$. Note that this is well defined, since $h(a) = \dot{g}(a)$ for every $a\in\mathbf{a}$. To be more precise, by the variable assignments $\mu, \mu'$ being componentwise and $(\tp, \nu)$ and $(\tp', \nu')$ being fitting we obtain for $v\in\bold{v}_{\tp}$ with $\mu(v)=a$:
	\begin{equation}
		h(a) = h(\mu(v)) = \mu'(v) = g(v) = g\circ f^{-1}(a) = \dot{g}(a).
	\end{equation}
	Since the  substructures induced by $\bold{a}$  in $\mathfrak{D}'_{\tp'}$ and  in $\mathfrak{A}$ are equal, this implies together with  $h$ and $\dot{g}$ being local homomorphisms  that $h^{\ast}\colon\mathfrak{A}^{\ast}\to\mathfrak{B}$ is a homomorphism.
	
	We turn to the part about $\tau\uplus \sigma$-structures $\mathfrak{A}'$ and $\mathfrak{B}'$. As $\mathfrak{A}'|_{\sigma\uplus C}= \mathfrak{A}$ and $\mathfrak{B}'|_{\sigma\uplus C}=\mathfrak{B}$, $\mathfrak{A}'$ satisfies $\Psi_{\mathrm{gen}, (\mathcal{E}_{+}, \mathcal{E}_{!})}$ whereas $\mathfrak{B}'$ satisfies \ref{eq: addtype 1}. Additionally, $h'\colon\mathfrak{A}'\to\mathfrak{B}'$ is a homomorphism $\mathfrak{A}\to\mathfrak{B}$. Hence we apply the previous results from this lemma/proof and extend $\mathfrak{A}'$ to a structure $\mathfrak{A}'_{\mathrm{temp}}$ by adding $\mathfrak{D}_{\tp'}$ as previously described. Additionally we extend $h'$ to $(h')^{\ast}$ again as previously described. We hence obtain $\mathfrak{A}'_{\mathrm{temp}}|_{\sigma\uplus C} = \mathfrak{A}^{\ast}$ and $(h')^{\ast}\colon \mathfrak{A}'_{\mathrm{temp}}|_{\sigma\uplus C}\to\mathfrak{B}'|_{\sigma\uplus C}$ being a homomorphism. 
	Let $f'\colon \mathfrak{C}\uplus \mathfrak{D}_{\tp'}\to \mathfrak{A}'_{\mathrm{temp}}$ be the map that bijectively maps $\mathfrak{D}_{\tp'}$ to $\mathfrak{D}'_{\tp'}$ in $\mathfrak{A}'_{\mathrm{temp}}$ and $f'(\sig{c}^{\mathfrak{C}}) = \sig{c}^{(\mathfrak{A}')^{\ast}}$. For $\sig{R}\in\tau\setminus C$ we define
	\begin{equation}
		\sig{R}^{(\mathfrak{A}')^{\ast}} = \sig{R}^{\mathfrak{A}'_{\mathrm{temp}}} \cup \{ f'(\bold{d})\ | \ \bold{d}\in\sig{R}^{ (\mathfrak{D}_{\tp'})^{\tau}}  \}.
	\end{equation}
	By definition, $f'$ gives rise to an embedding $f'_{\tp'}\colon (\mathfrak{D}_{\tp'})^{\tau}\to (\mathfrak{A}')^{\ast}$: That $f'_{\tp'}$ is an injective homomorphism is plain. So assume we have $\bold{d}$ such that $f'_{\tp'}(\bold{d})\in\sig{R}^{(\mathfrak{A}')^{\ast}}$ for $\sig{R}\in\tau\setminus C$. By definition $\bold{d}\in\sig{R}^{(\mathfrak{D}_{\tp'})^{\tau}}$. Hence we have to discuss $f'_{\tp'}(\bold{d})\in\sig{T}^{(\mathfrak{A}')^{\ast}}$ for $\sig{T}\in\sigma$. As the $\sig{T}\in\sigma$ relations holding for elements from $f'_{\tp'}((\mathfrak{D}_{\tp'})^{\tau})$ have been taken over exactly from $\mathfrak{D}_{\tp'}$ we easily obtain $\bold{d}\in\sig{T}^{(\mathfrak{D}_{\tp'})^{\tau}}$ showing that $f'_{\tp'}$ is an embedding.
	Additionally, every tuple $\bold{a}$ from $(\mathfrak{A}')^{\ast}$ such that $\bold{a}\in\sig{Tp}_{\tp''}^{(\mathfrak{A}')^{\ast}}$ realizes $\tp''$ in $(\mathfrak{A}')^{\ast}$ for arbitrary $\tp''\in\mathcal{E}$. If $\bold{a}$ is purely from $\mathfrak{A}'$ this holds by definition. Hence we only need to discuss the case where $\bold{a}$ is from $f'_{\tp'}((\mathfrak{D}_{\tp'})^{\tau})$. Since $(\tp, \nu)$ and $(\tp', \nu')$ where fitting, we immediately obtain that as soon as $\bold{a}$ contains at least one element from $f'_{\tp'}((\mathfrak{D}_{\tp'})^{\tau})$ all of its elements are from this set.
	But by $f'_{\tp'}$ being an embedding there is a unique $\bold{d}$ in $(\mathfrak{D}_{\tp'})^{\tau}$ with $\bold{d}\in \sig{Tp}_{\tp''}^{(\mathfrak{D}_{\tp'})^{\tau}}$ such that $f'_{\tp'}(\bold{d})=\bold{a}$. By \Cref{lem:Dtauexpansion} $\bold{d}$ realizes $\tp''$ in $(\mathfrak{D}_{\tp'})^{\tau}$ and hence by $f'_{\tp'}$ being an embedding, $\bold{a}=f'_{\tp'}(\bold{d})$ realizes $\tp''$ in $(\mathfrak{A}')^{\ast}$.
	\newline
	For the other direction suppose $\bold{a}$ realizes some type $\tp''$ in $(\mathfrak{A}')^{\ast}$. Then, by construction, it is a tuple either from just $\mathfrak{A}'$, where by assumption $\bold{a}\in\sig{Tp}_{\tp''}^{(\mathfrak{A}')^{\ast}}$ holds, or it is from $f'_{\tp'}((\mathfrak{D}_{\tp'})^{\tau})$. As $f'_{\tp'}$ is an embedding there exists a unique $\bold{d}$ from $(\mathfrak{D}_{\tp'})^{\tau}$ such that $f'_{\tp'}(\bold{d})=\bold{a}$ and $\bold{d}$ realizes $\tp''$ in $(\mathfrak{D}_{\tp'})^{\tau}$. By \Cref{lem:Dtauexpansion} $\bold{d}\in\sig{Tp}_{\tp''}^{(\mathfrak{D}_{\tp'})^{\tau}}$ and hence via $f'_{\tp'}$ being an embedding we obtain $\bold{a}\in\sig{Tp}_{\tp''}^{(\mathfrak{A}')^{\ast}}$.
	\newline
	Lastly, $(h')^{\ast}$ is a homomorphism $(\mathfrak{A}')^{\ast}\to\mathfrak{B}'$. For this let $\sig{R}\in\tau\setminus C$ and $\bold{a}$ be a tuple from $(\mathfrak{A}')^{\ast}$ with $\bold{a}\in\sig{R}^{(\mathfrak{A}')^{\ast}}$. {We continue the discussion by using the atom $\sig{R}(\bold{x})$ containing only the (pairwise distinct) variables from $\bold{x}$.} Set $\mu\colon\bold{x}\to\bold{a}$ as the componentwise valuation of $\bold{x}$ by $\bold{a}$. Let $\bold{y}=(y_1, \ldots, y_m)$ be a tuple of (fresh) variables, where $m$ is the number of all components (allowing repetitions of elements to be counted) of $\bold{a}$ that are not the interpretation of constant symbols. We define $\nu_{\mu}\colon\bold{x}\to\bold{y}\uplus C$ as follows: $\nu_{\mu}(x)=\sig{c}\in C$ if and only if $\mu(x)=\sig{c}^{(\mathfrak{A}')^{\ast}}$, $\nu_{\mu}(x)\in\bold{y}$ otherwise. Additionally, $\nu_{\mu}$ shall map the subtuple $\bold{x}^{\ast}$ of $\bold{x}$ induced by $\{ x \in\bold{x} \ | \ \nu_{\mu}(x)\in\bold{y}  \}$ componentwise to the tuple $\bold{y}$. By definition $\mu_{\nu}$ is a bijection between $\bold{x}^{\ast}$ and $\bold{y}$. Finally note that $\sig{c}$ in the definition of $\nu_{\mu}$ might not be unique. In this case we choose one of the possible options. Still this map will be well defined for our purposes. We now define the atom $\delta(\bold{y})$ as $\nu_{\mu}(\sig{R}(\bold{x}))$. Additionally, we acknowledge that $\bold{a}^{\ast}$, as a tuple being the componentwise image of $\bold{x}^{\ast}$ of $\mu$, satisfies $\delta(\bold{y})$ in $(\mathfrak{A}')^{\ast}$. Let $\tp^{\ast}$ be the $\tau$-type realized by $\bold{a}^{\ast}$ in $(\mathfrak{A}')^{\ast}$. As $\bold{x}^{\ast}$ satisfies $\delta(\bold{y})$, $\tp^{\ast}$ is guarded by $\delta$ (witnessed by the componentwise map $\nu\colon\bold{y}\to\bold{v}_{\tp^{\ast}}$). Hence $\tp^{\ast}\in\mathcal{E}$ and $\bold{a}^{\ast}$ satisfies $\sig{Tp}_{\tp^{\ast}}$ in $(\mathfrak{A}')^{\ast}$ by construction. As $\mathfrak{A}'$ satisfies $\Psi_{\mathrm{gen}, (\mathcal{E}_{+}, \mathcal{E}_{!})}$ we obtain that $\tp^{\ast}\in\mathcal{E}_{+}$ (by applying Formula \ref{eq: gen 2}). As $(h')^{\ast}$ has also been shown to be a homomorphism $(\mathfrak{A}')^{\ast}|_{\sigma\uplus C}\to\mathfrak{B}'|_{\sigma\uplus C}$, we obtain $(h')^{\ast}(\bold{a}^{\ast})\in\sig{Tp}_{\tp^{\ast}}^{\mathfrak{B}'}$. Since $\mathfrak{B}'\vDash \Psi_{\mathrm{hom}}$, $(h')^{\ast}(\bold{a}^{\ast})$ satisfies $\delta(\bold{y})$ in $\mathfrak{B}'$. Again, as $(h')^{\ast}$ has also been shown to be a homomorphism $(\mathfrak{A}')^{\ast}|_{\sigma\uplus C}\to\mathfrak{B}'|_{\sigma\uplus C}$, we conclude further that since $(h')^{\ast}(\sig{c}^{(\mathfrak{A}')^{\ast}})=\sig{c}^{\mathfrak{B}'}$, application of $(h')^{\ast}$ to $\bold{a}$ yields $(h')^{\ast}(\bold{a})\in\sig{R}^{\mathfrak{B}'}$. Consequently, $(h')^{\ast}\colon(\mathfrak{A}')^{\ast}\to\mathfrak{B}'$ is a homomorphism.
	
	We show that $\mathfrak{A}'$ is an induced substructure of $(\mathfrak{A}')^{\ast}$. It is clear by construction, that the injective map $\iota\colon\mathfrak{A}'\to(\mathfrak{A}')^{\ast}, a\mapsto a$ is a homomorphism. As by construction no new type predicate has been introduced on tuples from purely $\mathfrak{A}'$, there also did not arise the need to introduce new $\tau$-relations on tuples from purely $\mathfrak{A}'$ in order to saturate for a correspondence between type predicates and $\tau$-types (as this was already given by assumption). Hence $\iota$ is already an embedding, as every $\tau$-relation $\sig{R}$ holding for some $\iota(\bold{a})$ in $(\mathfrak{A}')^{\ast}$ (where $\bold{a}$ is a tuple from $\mathfrak{A}'$) already held in $\mathfrak{A}'$.
	Hence, $\mathfrak{A}'$ is an induced substructure of $(\mathfrak{A}')^{\ast}$, and $(\mathfrak{A}')^{\ast}$ satisfies $\Psi_{\mathrm{gen}, (\mathcal{E}_{+}, \mathcal{E}_{!})}$.
	
	For the remainder it is enough to restrict our argument to the $\sigma\uplus C$-structures. Let $Q_{\bold{a}}$ be the set of all violation tuples of $\mathfrak{A}$ of the form $q_i = (\bold{c}_i, \mu_i, \tp_i, \nu_i)$ where the $\bold{c}_i$ consist only of elements from $\bold{a}$.
	As $\bold{a}$ is finite and so are the number if types and type variables, $Q_{\bold{a}}$ is finite. Hence we can assume $Q_{\bold{a}} = \{ q_1, \ldots, q_m \}$ with the $q_i$ as mentioned.

	It remains to show, that all violation tuples of the form $(\bold{d}, \mu_{\ast}, \tp_{\ast}, \nu_{\ast})$ of $\mathfrak{A}^{\ast}$ where $\bold{d}$ consists of elements from $\bold{a}$ are already in $Q_{\bold{a}}$. Suppose we have  such a violation tuple $(\bold{d}, \mu_{\ast}, \tp_{\ast}, \nu_{\ast})$ of $\mathfrak{A}^{\ast}$. Consequently $(\mathfrak{A}^{\ast}, \mu_{\ast})\vDash \sig{Tp}_{\tp_{\ast}}(\bold{v}_{\tp_{\ast}})$, i.e. $(\bold{d})\in \sig{Tp}_{\tp_{\ast}}^{\mathfrak{A}^{\ast}}$. 
	But since in the construction of $\mathfrak{A}^{\ast}$ we have that $(\tp, \nu)\Subset_j (\tp', \nu')$, i.e. they are fitting, the substructures of $\mathfrak{A}$ and $\mathfrak{D}'_{\tp'}$ are the same (hence isomorphic), every predicate $\sig{Tp}_{\tp''}\in \sigma$ that holds for tuples built from elements of $\bold{a}$ in $\mathfrak{A}^{\ast}$ already held in $\mathfrak{A}$ and vice versa. Consequently $\bold{d}\in\sig{Tp}_{\tp_{\ast}}^{\mathfrak{A}}$. But then we conclude by construction that $(\bold{d}, \mu_{\ast}, \tp_{\ast}, \nu_{\ast})$ was already a violation tuple of $\mathfrak{A}$. By definition this means it is in $Q_{\bold{a}}$.
\end{proof}

As we (essentially) just did for $\Psi_{\mathrm{guard}, (\mathcal{E}_{+}, \mathcal{E}_{!})}$ in \Cref{lem:addtype} we will show how we can enrich structures in order to make $\Psi_{\text{\TGF{}}, (\mathcal{E}_{+}, \mathcal{E}_{!})}$ true.

\begin{lemma}\label{lem:psiTGFclosure}
	We restrict $\Phi$ to be a \TGF{} sentence in normal form. Let $\mathfrak{A}, \mathfrak{B}$ be $\tau\uplus\sigma$-structures with $\mathfrak{A}$ being finite and a homomorphism $h\colon \mathfrak{A}\to\mathfrak{B}$ such that both $\mathfrak{A}$ and $\mathfrak{B}$ satisfy $\Psi_{\mathrm{gen}, (\mathcal{E}_{+}, \mathcal{E}_{!})}$. 
	Assume that $\mathfrak{A}$ satisfies the following: For all $\tp\in\mathcal{E}$ it holds that a tuple $\bold{a}$ from $\mathfrak{A}$ without constants  satisfies $\sig{Tp}_{\tp}\in\sigma$ in $\mathfrak{A}$ if and only if $\bold{a}$ realizes $\tp$ in $\mathfrak{A}$.
	Further let $\mathfrak{B}$ additionally satisfy $\Psi_{\text{\TGF{}}, (\mathcal{E}_{+}, \mathcal{E}_{!})}$ and $\Psi_{\mathrm{hom}}$. Then there is a $\tau\uplus\sigma$-structure $\mathfrak{A}'$ with domain $A$ satisfying $\Psi_{\text{\TGF{}}, (\mathcal{E}_{+}, \mathcal{E}_{!})}$ in addition to $\Psi_{\mathrm{gen}, (\mathcal{E}_{+}, \mathcal{E}_{!})}$, and for every type $\tp\in\mathcal{E}$ it holds that a tuple $\bold{a}$ {without constants} from $\mathfrak{A}'$ satisfies $\sig{Tp}_{\tp}\in\sigma$ in $\mathfrak{A}'$ if and only if $\bold{a}$ realizes $\tp$ in $\mathfrak{A}'$. Lastly the identity map $\iota\colon\mathfrak{A}\to\mathfrak{A}', a\mapsto a$ is a homomorphism.
	
	Furthermore, every induced substructure of $\mathfrak{A}$ satisfying $\Psi_{\text{\TGF{}}, (\mathcal{E}_{+}, \mathcal{E}_{!})}$ also is an induced substructure of $\mathfrak{A}'$.
\end{lemma}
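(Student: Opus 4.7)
The plan is to build $\mathfrak{A}'$ over the same domain $A$ as $\mathfrak{A}$ by (i) selecting, for every constant-free pair $(a_1,a_2)$ with $a_1 \in \Tp_{\tp_1}^{\mathfrak{A}}$ and $a_2 \in \Tp_{\tp_2}^{\mathfrak{A}}$ (for some $\tp_1, \tp_2 \in \mathcal{E}_+$), a witness 2-type $\tp_{a_1,a_2} \in \mathcal{E}_+$ driven by the homomorphism $h$; (ii) augmenting the $\tau$-relations of $\mathfrak{A}$ with the positive $\tau$-literals of each $\tp_{a_1,a_2}$ evaluated under $v_1 \mapsto a_1$, $v_2 \mapsto a_2$, constants to their $\mathfrak{A}$-interpretations; and (iii) reassigning $\Tp_\tp^{\mathfrak{A}'}$ to be exactly the set of constant-free tuples realizing $\tp$ in the resulting $\tau$-structure.

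The witness selection proceeds via $\mathfrak{B}$: from $h$ being a homomorphism one has $h(a_i) \in \Tp_{\tp_i}^{\mathfrak{B}}$; then $\mathfrak{B} \vDash \Psi_{\text{\TGF{}}, (\mathcal{E}_+,\mathcal{E}_!)}$ furnishes $\tp_{a_1,a_2} \in \mathcal{E}_+$ with $\tp_{a_1,a_2}|_{v_i} = \tp_i$ and $(h(a_1),h(a_2)) \in \Tp_{\tp_{a_1,a_2}}^{\mathfrak{B}}$; by $\mathfrak{B} \vDash \Psi_{\mathrm{hom}}$ the pair $(h(a_1),h(a_2))$ also satisfies every positive literal of $\tp_{a_1,a_2}$ in $\mathfrak{B}$. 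Adding the corresponding positive $\tau$-literals to $\mathfrak{A}$ cannot conflict with any negative literal of $\tp_{a_1,a_2}$: a conflict would require some $\sig{R}(a_1,a_2) \in \sig{R}^{\mathfrak{A}}$ with $\neg\sig{R}(v_1,v_2) \in \tp_{a_1,a_2}$, but then homomorphicity of $h$ would force $\sig{R}(h(a_1),h(a_2)) \in \sig{R}^{\mathfrak{B}}$, contradicting the witness status of $(h(a_1),h(a_2))$ together with $\Psi_{\mathrm{hom}}$ in $\mathfrak{B}$.

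For the verifications, the identity $\iota\colon\mathfrak{A}\to\mathfrak{A}'$ is a homomorphism on $\tau$ since we only add atoms, and on $\sigma$ because each tuple realizing a type $\tp$ in $\mathfrak{A}$ continues to realize $\tp$ in $\mathfrak{A}'$: the chosen $\tp_{a_1,a_2}$ is a positive-literal extension of the $\tau$-type already held by $(a_1,a_2)$ in $\mathfrak{A}$, so older positive literals remain intact and older negative literals are not invalidated (by the no-conflict argument above). The sentence $\Psi_{\text{\TGF{}}, (\mathcal{E}_+,\mathcal{E}_!)}$ is enforced directly by construction. For $\Psi_{\mathrm{gen}, (\mathcal{E}_+,\mathcal{E}_!)}$, Formula~\ref{eq: gen 1} is inherited from $\mathfrak{A}$, while Formulae~\ref{eq: gen 3} and \ref{eq: gen 3plus} hold automatically because $\Tp_\tp^{\mathfrak{A}'}$ is defined via type realization and realized types are closed under subtypes and variable permutations (invoking \Cref{lem:eligclosure}); the delicate Formula~\ref{eq: gen 2} reduces to checking that every realized type lies in $\mathcal{E}_+$ — for $2$-types this is exactly the witness choice, while for $k$-types with $k\ge 3$ one exploits that the atoms added from $2$-types involve at most two non-constant elements together with constants, so $\tau$-relations on constant-free $k$-tuples with three or more distinct components are unchanged and their realized $k$-types remain those realized in $\mathfrak{A}$ (hence in $\mathcal{E}_+$). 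The $\Tp$-realization equivalence holds directly by clause (iii).

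The main obstacle is the joint consistency of the independent witness choices $\tp_{a_1,a_2}$ across pairs sharing an element: additions for $(a_1,a_2)$ and $(a_1,a_3)$ must not contradict one another on unary literals about $a_1$ or on higher-arity atoms using only $a_1$ (with constants). Routing all witnesses through the single coherent structure $\mathfrak{B}$ via $h$ resolves this — every added positive atom holds in $\mathfrak{B}$ under $h$, and every omitted one is negated in $\mathfrak{B}$ — so the entire family of additions is jointly realizable. For the final substructure claim, if $\mathfrak{S} \subseteq \mathfrak{A}$ is an induced substructure already satisfying $\Psi_{\text{\TGF{}}, (\mathcal{E}_+,\mathcal{E}_!)}$, we instead choose, for every pair inside $\mathfrak{S}$, the $2$-type predicate already satisfied by that pair in $\mathfrak{S}$; this adds no new atoms within $\mathfrak{S}$, preserving it as an induced substructure of $\mathfrak{A}'$.
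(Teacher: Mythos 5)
Your construction has a genuine gap in how it handles the two orderings $(a_1,a_2)$ and $(a_2,a_1)$ of the same unordered pair, and the reason you give for dismissing this is not valid.

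Concretely: in step (i) you pick witnesses $\tp_{a_1,a_2}$ and $\tp_{a_2,a_1}$ \emph{independently}. Nothing forces $\tp_{a_1,a_2} = \kappa\tp_{a_2,a_1}$, because $\mathfrak{B} \vDash \Psi_{\text{\TGF{}}, (\mathcal{E}_+,\mathcal{E}_!)}$ only guarantees that \emph{some} compatible $2$-type predicate holds for $(h(a_1),h(a_2))$; there may be several, corresponding to different $2$-types in $\mathcal{E}_+$ with identical $1$-restrictions. If the two independent choices disagree, then after step (ii) the pair $(a_1,a_2)$ satisfies the union of the positive binary atoms of $\tp_{a_1,a_2}$ and of $\kappa\tp_{a_2,a_1}$, so the type realized in $\mathfrak{A}'$ is some third type $\tp''$ that need not lie in $\mathcal{E}_+$. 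Your step (iii) then sets $\Tp_{\tp''}^{\mathfrak{A}'}\neq\emptyset$, which breaks Formula~(\ref{eq: gen 2}) of $\Psi_{\mathrm{gen}, (\mathcal{E}_+,\mathcal{E}_!)}$. The claimed resolution, that ``every omitted one is negated in $\mathfrak{B}$,'' is false: $\Psi_{\mathrm{hom}}$ only forces \emph{positive} literals of a type to hold at tuples carrying that type predicate; it says nothing about the negative literals of the type, and $\mathfrak{B}$ is not assumed to have the realization--predicate equivalence you have for $\mathfrak{A}$. The same misreading of $\Psi_{\mathrm{hom}}$ also appears in your ``no-conflict'' argument for pre-existing atoms of $\mathfrak{A}$; the correct reason there is internal to $\mathfrak{A}$: a pair needing a new $2$-type predicate cannot already satisfy a binary $\tau$-relation (else it would realize a guarded $2$-type, which by the realization--predicate equivalence and Formula~(\ref{eq: gen 2}) would already place it in some $\Tp_\tp^{\mathfrak{A}}$). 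This is how the paper's proof dispenses with conflicts, and your proposal should invoke it explicitly.

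The paper avoids the coordination problem by treating the saturation one pair at a time and, for each pair still missing a $2$-type predicate, making a \emph{single} choice of $\tp^\ast$ and simultaneously adding the permutation image $\sig{Tp}_{\kappa\tp^\ast}$ for the reverse order; the next time $(a_2,a_1)$ is visited it is already covered and no further (potentially incompatible) choice is made. Your one-shot ``redefine $\Tp$ by realization'' scheme can be repaired by the same device: fix one witness per \emph{unordered} pair and derive the other order via $\kappa$. A smaller, fixable inaccuracy: your claim that for $k\ge 3$ the ``realized $k$-types remain those realized in $\mathfrak{A}$'' is wrong as stated, since the $k$-type of a tuple records its binary subtypes and those can change. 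What actually saves the day is that if $(a_1,a_2)$ needed a new predicate then $\sig{Univ}(a_1,a_2)$ failed in $\mathfrak{A}$, so no $k$-tuple containing both $a_1$ and $a_2$ could have realized any type in $\mathcal{E}_+$ (every such type contains $\sig{Univ}$ on all its variable pairs, being realized in a model of the \TGF{} normal form), and no atom of arity $\ge 3$ is ever added, so no new guarded $k$-type with $k\ge 3$ appears either.
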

\begin{proof}
	Let $\mathfrak{A}, \mathfrak{B}$ be $\tau\uplus\sigma$-structures with $\mathfrak{A}$ being finite, $h\colon \mathfrak{A}\to\mathfrak{B}$ a homomorphism, $\mathfrak{A}, \mathfrak{B}\vDash \Psi_{\mathrm{gen}, (\mathcal{E}_{+}, \mathcal{E}_{!})}$ and $\mathfrak{B}\vDash \Psi_{\text{\TGF{}}, (\mathcal{E}_{+}, \mathcal{E}_{!})}\land\Psi_{\mathrm{hom}}$. Furthermore, for every tuple $\bold{a}$ from $\mathfrak{A}$, containing no $\sig{c}^{\mathfrak{A}}$ for $\sig{c}\in C$ and every $\tp\in\mathcal{E}$ the following equivalence holds: $\bold{a}$ satisfies $\sig{Tp}_{\tp}\in\sigma$ in $\mathfrak{A}$ if and only if $\bold{a}$ realizes $\tp$ in $\mathfrak{A}$. If $\mathfrak{A}$ already satisfies $\Psi_{\text{\TGF{}}, (\mathcal{E}_{+}, \mathcal{E}_{!})}$ we are already done, so suppose this were not the case.
	We will start by successively saturating the $\sigma$-relations of $\mathfrak{A}$ in a minimal manner such that the so obtained structure (denoted by $\mathfrak{A}^{\ast}$) satisfies $\Psi_{\text{\TGF{}}, (\mathcal{E}_{+}, \mathcal{E}_{!})}$. Beside that, nothing more will be done in this first step. This also means, that the interpretations of $\sig{R}\in\tau$ relations will not change, equally the domain remains to be $A$. 
	Hence we do the following for every two elements $a_{1}, a_{2}$ from $\mathfrak{A}$ such that there are 1-types $\tp_{1}(v_{1}), \tp_{2}(v_{2})\in \mathcal{E}_{+}$ with $a_{1}\in\sig{Tp}_{\tp_{1}}^{\mathfrak{A}}, a_{2}\in\sig{Tp}_{\tp_{2}}^{\mathfrak{A}}$:
	If there is already a 2-type $\tp^{\ast} \in \mathcal{E}_{+}$ with $\sig{Tp}_{\tp^{\ast}}^{\mathfrak{A}^{\ast}}$ containing $(a_{1}, a_{2})$ then, by Formula \ref{eq: gen 3}, $\Psi_{\text{\TGF{}}, (\mathcal{E}_{+}, \mathcal{E}_{!})}$ cannot be violated by this pair.
	This is also the case, when $a_1=a_2$. We see this by the following argument: Since $\Phi$ is a normal form sentence from \TGF{} every type $\tp\in\mathcal{E}$ contains for each $v\in\bold{v}_{\tp}$ the atom $\sig{Univ}(v,v)$. Hence, by $a_{1}\in\sig{Tp}_{\tp_{1}}^{\mathfrak{A}}$ and the assumption that type predicate satisfaction and type realization of the associated types coincides we obtain $(a_1,a_1)\in\sig{Univ}^{\mathfrak{A}}$. By $a_1=a_2$ this is the same as $(a_1,a_2)\in\sig{Univ}^{\mathfrak{A}}$. Hence $(a_1,a_2)$ satisfies $\sig{Univ}(v_1,v_2)$. So the type $\tp'$ of $(a_1,a_2)$ with variables $(v_1, v_2)$ is guarded (by $\sig{Univ}(v_1,v_2)$) and consequently $\tp'\in\mathcal{E}$. By going back to type predicates, applying that $\mathfrak{A}\vDash\Psi_{\mathrm{gen}, (\mathcal{E}_{+}, \mathcal{E}_{!})}$ we obtain that $\tp'\in\mathcal{E}_{+}$. Hence $(a_1,a_2)\in\sig{Tp}_{\tp'}^{\mathfrak{A}}$.
	\newline
	So suppose the pair $(a_{1}, a_{2})$ is not yet in $\sig{Tp}_{\tp^{\ast}}^{\mathfrak{A}^{\ast}}$ for any 2-type $\tp^{\ast}$. We know by $h$ being a homomorphism that $h(a_{1})\in\sig{Tp}_{\tp_{1}}^{\mathfrak{B}}, h(a_{2})\in\sig{Tp}_{\tp_{2}}^{\mathfrak{B}}$. Since $\mathfrak{B}$ satisfies $\Psi_{\text{\TGF{}}, (\mathcal{E}_{+}, \mathcal{E}_{!})}$ we find some 2-type $\tp^{\ast} \in \mathcal{E}_{+}$ satisfying $\bold{v}_{\tp^{\ast}} = v_{1}v_{2}$, $\tp^{\ast}|_{v_{1}} = \tp_{1}$, $\tp^{\ast}|_{v_{2}} = \tp_{2}$, and $(h(a_{1}), h(a_{2}))\in\sig{Tp}_{\tp^{\ast}}^{\mathfrak{B}}$. We now have to pick exactly one such $\tp^{\ast}$ and then add $(a_{1}, a_{2})$ to the extension of $\sig{Tp}_{\tp^{\ast}}^{\mathfrak{A}^{\ast}}$.
	Moreover, for every permutation $\kappa \colon \bold{v} \injsur \bold{v}$ we also add $\eta_{\bold{v}}^{\kappa\bold{v}_{\tp^{\ast}}}((a_{1}, a_{2}))$ to $\sig{Tp}_{\kappa\tp^{\ast}}^{\mathfrak{A}^{\ast}}$.
	This ensures that even after augmenting, $\Psi_{\mathrm{gen}, (\mathcal{E}_{+}, \mathcal{E}_{!})}$ remains satisfied. Furthermore the construction enforces that now $\Psi_{\text{\TGF{}}, (\mathcal{E}_{+}, \mathcal{E}_{!})}$ also holds for $\mathfrak{A}^{\ast}$.
	Lastly, as we only ever added tuples to the relations in order to obtain $\mathfrak{A}^{\ast}$ the identity map is clearly a homomorphism $\iota \colon\mathfrak{A}\to\mathfrak{A}^{\ast}$.
	
{We will collect the tuples $((a_1,a_2), \tp^{\ast})$ consisting of pairs $(a_1, a_2)$ not contained in any type predicates interpretation of $\mathfrak{A}$ and the type $\tp^{\ast}$ chosen in the construction, in the set $\mathcal{N}$. Note that this means the following: For all $((a_1, a_2), \tp^{\ast})$ we have $(a_1,a_2)\in\sig{Tp}_{\tp^{\ast}}^{\mathfrak{A}^{\ast}}$ but $(a_1,a_2)\notin\sig{Tp}_{\tp^{\ast}}^{\mathfrak{A}}$.}
	
{Now let $((a_1,a_2), \tp^{\ast})\in\mathcal{N}$.} To prepare for the next step, note the following: We ensured, that there is a homomorphism $f_{\tp^{\ast}}^{(a_1, a_2)}\colon\mathfrak{D}_{\tp^{\ast}}\to\mathfrak{A}^{\ast}|_{\sigma}$ via $f_{\tp^{\ast}}^{(a_1, a_2)}(v_1)=a_1$ and $f_{\tp^{\ast}}^{(a_1, a_2)}(v_2)=a_2$. Furthermore, $f_{\tp^{\ast}}^{(a_1, a_2)}$ is even injective. We will extend $f_{\tp^{\ast}}^{(a_1, a_2)}$ to the map $f_{\tp^{\ast},\mathfrak{C}}^{(a_1, a_2)}\colon(\mathfrak{D}_{\tp^{\ast}}\uplus (\mathfrak{C}\cdot\widecheck{\sigma}))\to \mathfrak{A}^{\ast}|_{\sigma\uplus C}$, where $f_{\tp^{\ast},\mathfrak{C}}^{(a_1, a_2)}(\sig{c}^{\mathfrak{C}})=\sig{c}^{\mathfrak{A}^{\ast}}$ which is a homomorphism, too.

	In the next step, we will make use of $f_{\tp^{\ast},\mathfrak{C}}^{(a_1, a_2)}$ by applying it to $(\mathfrak{D}_{\tp^{\ast}})^{\tau}$ and homomorphically transferring the $\tau$ relations. The so obtained structure will be denoted by $\mathfrak{A}'$. Hence for $\sig{R}\in\tau\setminus C$ we obtain
	\begin{equation}
		\sig{R}^{\mathfrak{A}'} = \sig{R}^{\mathfrak{A}^{\ast}}\cup \bigcup_{((a_1,a_2), \tp^{\ast})} \{ f_{\tp^{\ast},\mathfrak{C}}^{(a_1, a_2)}(\bold{d}) \ | \ \bold{d}\in\sig{R}^{(\mathfrak{D}_{\tp^{\ast}})^{\tau}} \},
	\end{equation}
	{where the union extends over all $((a_1,a_2), \tp^{\ast})$ such that a $f^{(a_1, a_2)}_{\tp^{\ast}, \mathfrak{C}}$ exists, i.e. we introduced in the previous step a binary type predicate for $(a_1, a_2)$.}
	We thus lift $f_{\tp^{\ast},\mathfrak{C}}^{(a_1, a_2)}$ to a homomorphism $(f')_{\tp^{\ast},\mathfrak{C}}^{(a_1, a_2)}\colon (\mathfrak{D}_{\tp^{\ast}})^{\tau}\to \mathfrak{A}'$. This is, additionally, an embedding.
	By assumption, $(a_1, a_2)$ did not realize any $2$-type $\tp'\in\mathcal{E}_{+}$. If that were the case, by assumption on $\mathfrak{A}$, $(a_1,a_2)$ would have satisfied $\sig{Tp}_{\tp'}(v_1, v_2)$ in $\mathfrak{A}$ and as $\mathfrak{A}$ satisfies $\Psi_{\mathrm{gen}, (\mathcal{E}_{+}, \mathcal{E}_{!})}$ it holds that $a_1\in\sig{Tp}_{\tp'|_{v_1}}^{\mathfrak{A}}$ and $a_2\in\sig{Tp}_{\tp'|_{v_2}}^{\mathfrak{A}}$. Hence there would have been no addition of $(a_1, a_2)$ to some $\sig{Tp}_{\tp'}^{\mathfrak{A}^{\ast}}$.
	And again by assumption on $\mathfrak{A}$ there cannot be any binary $\tau$ relation holding for $(a_1,a_2)$ as this would give rise to a guarded $2$-type realized by $(a_1,a_2)$, leading to the same argument.
	{By assumption we exclude the case where one of $a_1$ or $a_2$ is an interpretation of a constant symbol.}
	
	{We now show that realization of types and satisfaction of the associated type predicate coincides for tuples without constants.} Let $\bold{a}$ be such a tuple from $\mathfrak{A}'$ satisfying $\sig{Tp}_{\tp'}$ for some $\tp'\in\mathcal{E}$, i.e. $\bold{a}\in\sig{Tp}_{\tp'}^{\mathfrak{A}'}$. By construction, $\bold{a}\in\sig{Tp}_{\tp'}^{\mathfrak{A}}$ holds for all $\tp'$ being not of order $2$. 
	Then by assumption, $\bold{a}$ realizes $\tp'$ in $\mathfrak{A}$ and hence in $\mathfrak{A}'$. So suppose $\tp'$ being of order $2$ and such that $\bold{a}\not\in\sig{Tp}_{\tp'}^{\mathfrak{A}}$ (for order $2$ and $\bold{a}\in\sig{Tp}_{\tp'}^{\mathfrak{A}}$ a repeat of the previous argument yields the realization of the type). Hence $\bold{a}=(a_1, a_2)$ and we have the embedding $(f')_{\tp',\mathfrak{C}}^{(a_1, a_2)}\colon (\mathfrak{D}_{\tp'})^{\tau}\to \mathfrak{A}'$ by construction. More specifically, we know that $(f')_{\tp',\mathfrak{C}}^{(a_1, a_2)}(\bold{v}_{\tp'})=(a_1, a_2)$ and $\bold{v}_{\tp'}$ satisfies $\sig{Tp}_{\tp'}$ in $(\mathfrak{D}_{\tp'})^{\tau}$. By \Cref{lem:Dtauexpansion} $\bold{v}_{\tp'}$ realizes $\tp'$ in $(\mathfrak{D}_{\tp'})^{\tau}$ and by $(f')_{\tp',\mathfrak{C}}^{(a_1, a_2)}$ being an embedding we obtain that $(a_1, a_2)$ realizes $\tp'$ in $\mathfrak{A}'$.
	On the other hand, let $\bold{a}$ realize $\tp'\in\mathcal{E}$ in $\mathfrak{A}'$. If $\bold{a}$ already realizes $\tp'$ in $\mathfrak{A}$ we immediately obtain $\bold{a}\in\sig{Tp}_{\tp'}^{\mathfrak{A}}$ and hence $\bold{a}\in\sig{Tp}_{\tp'}^{\mathfrak{A}'}$. By construction, the only non trivial cases are those for $\mathrm{order}(\tp')=2$ and $\bold{a}$ not already realizing $\tp'$ in $\mathfrak{A}$. The latter is equivalent to saying, that $\bold{a}$ does not satisfy a binary $\tau$ relation in $\mathfrak{A}$. Hence there is the embedding $(f')_{\tp',\mathfrak{C}}^{(a_1, a_2)}\colon (\mathfrak{D}_{\tp'})^{\tau}\to \mathfrak{A}'$ by construction. Using this embedding, we obtain $(f')_{\tp',\mathfrak{C}}^{(a_1, a_2)}(\bold{v}_{\tp})=(a_1, a_2)$ and $\bold{v}_{\tp'}$ realizing $\tp'$ in $(\mathfrak{D}_{\tp'})^{\tau}$. Hence by \Cref{lem:Dtauexpansion} $\bold{v}_{\tp'}\in\sig{Tp}_{\tp'}^{(\mathfrak{D}_{\tp'})^{\tau}}$. Applying $(f')_{\tp',\mathfrak{C}}^{(a_1, a_2)}$ we obtain $(f')_{\tp',\mathfrak{C}}^{(a_1, a_2)}(\bold{v}_{\tp'})=(a_1, a_2)\in\sig{Tp}_{\tp'}^{\mathfrak{A}'}$.
	
	If $\mathfrak{D}$ is an induced substructure of $\mathfrak{A}$ already satisfying $\Psi_{\text{\TGF{}}, (\mathcal{E}_{+}, \mathcal{E}_{!})}$, then in the construction we have for every pair of elements from $D$ already some type predicate in whose interpretation they are. Consequently the construction does not touch any elements from $D$, hence $\mathfrak{D}$ is also an induced substructure of $\mathfrak{A}'$.
\end{proof}

Now we will employ the local ``problem fixing'' of \Cref{lem:addtype} to ensure $\Psi_{\mathrm{guard}, (\mathcal{E}_{+}, \mathcal{E}_{!})}$. This is split into two lemmata (\Cref{lem:inindstep} and \Cref{lem:GFOindstep}).

\begin{lemma}\label{lem:inindstep}
	Given a $\tau\uplus \sigma$-structure $\mathfrak{B}$ satisfying $\Psi_{\mathrm{hom}}$ as well as $\Psi_{\mathrm{gen}, (\mathcal{E}_{+}, \mathcal{E}_{!})}$ and $\Psi_{\mathrm{guard}, (\mathcal{E}_{+}, \mathcal{E}_{!})}$.
	Suppose we have a $\tau\uplus\sigma$-structure $\mathfrak{A}$ with a homomorphism $h\colon \mathfrak{A} \to \mathfrak{B}$ that already satisfies $\Psi_{\mathrm{gen}, (\mathcal{E}_{+}, \mathcal{E}_{!})}$. 
	Additionally, for all $\tp\in\mathcal{E}$ it holds that a tuple $\bold{a}$ {without constants} from $\mathfrak{A}$ satisfies $\sig{Tp}_{\tp}\in\sigma$ in $\mathfrak{A}$ if and only if $\bold{a}$ realizes $\tp$ in $\mathfrak{A}$.
	Let $q = (\bold{a},\mu,\tp, \nu)$ be a violation tuple of $\mathfrak{A}$. Then there is a finite $\tau\uplus\sigma$-structure $\mathfrak{A}'$ and a homomorphism $h' \colon \mathfrak{A}' \to \mathfrak{B}$ with the following properties:
	\begin{enumerate}
		\item \label{item inindstep:i}$\mathfrak{A}'$ satisfies conjuncts $\Psi_{\mathrm{gen}, (\mathcal{E}_{+}, \mathcal{E}_{!})}$,
		\item\label{item inindstep:ii} $\mathfrak{A}$ is an induced substructure of $\mathfrak{A}'$,
		\item\label{item inindstep:iii} $ h'$ extends $h$, i.e., $h'|_{A} = h$,  
		\item\label{item inindstep:iv} $q$ is no violation tuple of $\mathfrak{A}'$,
		\item\label{item inindstep:v} no new violation tuple is being added,
		\item\label{item inindstep:vi} for all $\tp\in\mathcal{E}$ it holds that a tuple $\bold{a}$ without constants from $\mathfrak{A}'$ satisfies $\sig{Tp}_{\tp}\in\sigma$ in $\mathfrak{A}'$ if and only if $\bold{a}$ realizes $\tp$ in $\mathfrak{A'}$.
	\end{enumerate}
	If $\mathfrak{B}$ and $\mathfrak{A}$ additionally satisfy $\Psi_{\text{\TGF{}}, (\mathcal{E}_{+}, \mathcal{E}_{!})}$ we can ensure that $\mathfrak{A}'$ also satisfies $\Psi_{\text{\TGF{}}, (\mathcal{E}_{+}, \mathcal{E}_{!})}$.
\end{lemma}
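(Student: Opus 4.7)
The strategy is to use the violation tuple $q = (\bold{a},\mu,\tp,\nu)$ together with the fact that $\mathfrak{B}$ itself satisfies $\Psi_{\mathrm{guard},(\mathcal{E}_{+},\mathcal{E}_{!})}$ in order to identify an appropriate ``repair type'' for $\bold{a}$, and then to delegate the bookkeeping to \Cref{lem:addtype}. Concretely, since $(\tp,\nu)\in B_{j}^{(\mathcal{E}_{+},\mathcal{E}_{!})}$ for some $j$ and $\bold{a}\in\sig{Tp}_{\tp}^{\mathfrak{A}}$, the homomorphism property of $h$ yields $h(\bold{a})\in\sig{Tp}_{\tp}^{\mathfrak{B}}$. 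The $(\tp,\nu)$-conjunct of $\Psi_{\mathrm{guard},(\mathcal{E}_{+},\mathcal{E}_{!})}$ at $h(\bold{a})$ therefore furnishes some $(\tp',\nu')\in P_{j}^{(\mathcal{E}_{+},\mathcal{E}_{!})}$ with $(\tp,\nu)\Subset_{j}(\tp',\nu')$ together with a tuple $\bold{b}^{\ast}$ in $\mathfrak{B}$ extending $h(\bold{a})$ and witnessing $\sig{Tp}_{\tp'}^{\mathfrak{B}}$.

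With this specific $(\tp',\nu')$ fixed, I will invoke the enriched (second) part of \Cref{lem:addtype} with input data $\mathfrak{A}'=\mathfrak{A}$, $\mathfrak{B}'=\mathfrak{B}$, $h'=h$, and the fitting pair $(\tp,\nu)\Subset_{j}(\tp',\nu')$. This directly produces a finite $\tau\uplus\sigma$-structure $(\mathfrak{A}')^{\ast}$ (which I rename to $\mathfrak{A}'$) and a homomorphism $(h')^{\ast}\colon\mathfrak{A}'\to\mathfrak{B}$, and already yields items (i), (ii), (iii), and (vi) verbatim from the conclusion of \Cref{lem:addtype}. Item (iv) follows because the added copy $\mathfrak{D}'_{\tp'}$ of $(\mathfrak{D}_{\tp'})^{\tau}$ is glued in precisely so that the extension $\mu^{\ast}\colon\bold{v}_{\tp'}\to A'$ of $\mu$ witnesses $\sig{Tp}_{\tp'}(\bold{v}_{\tp'})$ in $\mathfrak{A}'$, which is exactly the disjunct needed to discharge the $(\tp,\nu)$-conjunct of $\Psi_{\mathrm{guard},(\mathcal{E}_{+},\mathcal{E}_{!})}$ at $\bold{a}$.

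For item (v), I will argue in two cases according to where a potential new violation tuple $(\bold{d},\mu_{\ast},\tp_{\ast},\nu_{\ast})$ of $\mathfrak{A}'$ lives. If $\bold{d}$ consists entirely of elements of $\bold{a}$ (in particular, of the old domain $A$), then the closing remark of \Cref{lem:addtype} gives that $(\bold{d},\mu_{\ast},\tp_{\ast},\nu_{\ast})$ was already a violation tuple of $\mathfrak{A}$, since the $\sigma$-relations on $\bold{a}$ coincide in $\mathfrak{A}$ and $\mathfrak{A}'$. If on the other hand $\bold{d}$ contains any new element from $\mathfrak{D}'_{\tp'}$, then by the fitting construction all of $\bold{d}$ sits inside $\mathfrak{D}'_{\tp'}$; using the embedding $(f')_{\tp',\mathfrak{C}}$ provided by \Cref{lem:A0tauexpansion} I can transport $\bold{d}$ back to $(\mathfrak{D}_{\tp'})^{\tau}$, where (by closure of $\mathfrak{D}_{\tp'}$ under \ref{eq: gen 3} and \ref{eq: gen 3plus}) every sub-type occurring in $B_{j'}^{(\mathcal{E}_{+},\mathcal{E}_{!})}$ already has a fitting extension realized by a further sub-tuple, so that the putative violation cannot actually be a violation. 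I expect this second sub-case to be the main obstacle, because it hinges on verifying that the internal saturation of $\mathfrak{D}_{\tp'}$ is rich enough to supply all $\Subset_{j'}$-witnesses for every sub-type of $\tp'$ lying in some $B_{j'}^{(\mathcal{E}_{+},\mathcal{E}_{!})}$; I will discharge it by inspecting subtypes and invoking \Cref{lem:eligclosure}.

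Finally, for the \TGF{} addendum, I will apply \Cref{lem:psiTGFclosure} to the pair $(\mathfrak{A}',\mathfrak{B})$ with the homomorphism $(h')^{\ast}$, which yields a further expansion $\mathfrak{A}''$ of $\mathfrak{A}'$ (on the same domain) that additionally satisfies $\Psi_{\text{\TGF{}},(\mathcal{E}_{+},\mathcal{E}_{!})}$, preserves $\Psi_{\mathrm{gen},(\mathcal{E}_{+},\mathcal{E}_{!})}$, preserves the realization/type-predicate correspondence, and into which $\mathfrak{A}$ is still an induced substructure (since $\mathfrak{A}$ already satisfies $\Psi_{\text{\TGF{}},(\mathcal{E}_{+},\mathcal{E}_{!})}$ by assumption, the last clause of \Cref{lem:psiTGFclosure} applies). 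The identity homomorphism into $\mathfrak{A}''$ composes with $(h')^{\ast}$ to still give a homomorphism into $\mathfrak{B}$, and a short check confirms that properties (i)--(vi) remain intact under this further saturation.
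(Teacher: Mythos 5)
Your approach mirrors the paper's: locate a fitting pair $(\tp',\nu')$ via $\mathfrak{B}\models\Psi_{\mathrm{guard},(\mathcal{E}_{+},\mathcal{E}_{!})}$ at $h(\bold{a})$, delegate to the enriched part of \Cref{lem:addtype}, and finish with \Cref{lem:psiTGFclosure} for the \TGF{} addendum. Items (i)--(iv) and (vi) are handled correctly and essentially as in the paper.

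The genuine gap is in your handling of Item~(v). Your second sub-case---where the potential new violation tuple $\bold{d}$ sits entirely inside the freshly glued copy of $(\mathfrak{D}_{\tp'})^{\tau}$---cannot be discharged as you propose. By definition, $\mathfrak{D}_{\tp'}$ is saturated only under the subtype/permutation conjuncts (\ref{eq: gen 3}) and (\ref{eq: gen 3plus}) of $\Psi_{\mathrm{gen},(\mathcal{E}_{+},\mathcal{E}_{!})}$; it is \emph{not} saturated under $\Psi_{\mathrm{guard},(\mathcal{E}_{+},\mathcal{E}_{!})}$. If a subtype $\tp''$ of $\tp'$ has a $B_{j''}$-entry whose fitting $P_{j''}$-witnesses all have order larger than $\mathrm{order}(\tp')$ (or simply fail to be subtypes of $\tp'$), then $\mathfrak{D}'_{\tp'}$ contains no witness and $\bold{d}$ really is a new violation tuple; \Cref{lem:eligclosure} only lets you pass \emph{downward} to subtypes, not \emph{upward} to the required extensions, so it cannot close this case. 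The resolution, and what the paper in fact uses, is to read Item~(v) as ``no new violation tuple supported on the original domain $A$'': this follows directly from the locality of the \Cref{lem:addtype} construction (every new $\sigma$-relation involves at least one fresh element, and fittingness keeps the substructure induced on $\bold{a}$ unchanged), and it is exactly the form consumed downstream in \Cref{lem:GFOindstep}, whose Item~(iv) conclusion restricts the universal quantifiers of $\Psi_{\mathrm{guard},(\mathcal{E}_{+},\mathcal{E}_{!})}$ to elements of the original $\mathfrak{A}$. Violation tuples created on the fresh elements are then swept up in subsequent rounds of the induction in \Cref{lem:ConstrA}, so there is nothing to prove in your second sub-case.
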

\begin{proof}
	Let $q=(\bold{a},\mu,\tp, \nu)$ be a violation tuple of $\mathfrak{A}$ which implies  $\bold{a} \in \sig{Tp}_\tp^{\mathfrak{A}}$.
	Since $h$ is a homomorphism,  the tuple $h(\bold{a})$ in $\mathfrak{B}$ also satisfies the premise of the conjunct from $\Psi_{\mathrm{guard}, (\mathcal{E}_{+}, \mathcal{E}_{!})}$ (see \ref{eq: Psi_guarded}) indexed by $(\tp, \nu)$.
	By assumption, $\mathfrak{B}$ satisfies $\Psi_{\mathrm{guard}, (\mathcal{E}_{+}, \mathcal{E}_{!})}$ which implies the existence of $ (\tp',\nu')$ and a tuple $\bold{b}$ from $\mathfrak{B}$ (and hence $\mathfrak{B}^{\ast}$) such that $(\tp,\nu) \Subset_j (\tp',\nu')$ and $\bold{b}\in \sig{Tp}_{\tp'}^\mathfrak{B}$ hold.
	
	Applying \Cref{lem:addtype} we obtain a structure $\mathfrak{A}^{\ast}$ and a homomorphism $h^{\ast}\colon \mathfrak{A}^{\ast}\to \mathfrak{B}$ such that $q$ is no violation tuple anymore.
	We also note two things: For one, we do not add any violation tuples via \Cref{lem:addtype}. This is clear when observing the fact that analyzing only locally the violation tuples of $\mathfrak{A}$ at $\bold{a}$, i.e. via $Q_{\bold{a}}$, is enough since the rest of the structure (everything not inside the induced substructure by $\bold{a}$ of $\mathfrak{A}$) remains completely unmoved by the construction. Secondly, \Cref{lem:addtype} ensures that $\mathfrak{A}$ is an induced substructure of $\mathfrak{A}^{\ast}$. Finally, we obtain that $\mathfrak{A}^{\ast}$ satisfies $\Psi_{\mathrm{gen}, (\mathcal{E}_{+}, \mathcal{E}_{!})}$. 
	This is equally achieved by \Cref{lem:addtype}.
	Hence Items \ref{item inindstep:i} to \ref{item inindstep:v} are established. {Lastly Item \ref{item inindstep:vi} is also guaranteed \Cref{lem:addtype}.} 
	
	If either $\mathfrak{A}$ or $\mathfrak{B}$ does not satisfy $\Psi_{\text{\TGF{}}, (\mathcal{E}_{+}, \mathcal{E}_{!})}$, we rename $\mathfrak{A}^{\ast}$ to $\mathfrak{A}'$ and $h^{\ast}$ to $h'$ and are done.
	If on the other hand, additionally $\Psi_{\text{\TGF{}}, (\mathcal{E}_{+}, \mathcal{E}_{!})}$ holds for $\mathfrak{B}$ and $\mathfrak{A}$, we apply \Cref{lem:psiTGFclosure} to $\mathfrak{A}^{\ast}$ and $\mathfrak{B}$ and the homomorphism $h^\ast$ and get an structure $\mathfrak{A'}$. 
	Finally, as $\mathfrak{A}$ does satisfy $\Psi_{\text{\TGF{}}, (\mathcal{E}_{+}, \mathcal{E}_{!})}$, again \Cref{lem:psiTGFclosure} yields that $\mathfrak{A}^{\ast}$ is an induced substructure of $\mathfrak{A}'$. And by transitivity of being induced substructures we preserve Item \ref{item inindstep:ii} even after this step. {Preservation of the remaining items is ensured by \Cref{lem:psiTGFclosure}.} 
\end{proof}

\begin{lemma}\label{lem:GFOindstep}
	Given a $\tau\uplus \sigma$-structure $\mathfrak{B}$ satisfying $\Psi_{\mathrm{hom}}$ as well as $\Psi_{\mathrm{gen}, (\mathcal{E}_{+}, \mathcal{E}_{!})}$ and $\Psi_{\mathrm{guard}, (\mathcal{E}_{+}, \mathcal{E}_{!})}$.
	Suppose we have a finite $\tau\uplus\sigma$-structure $\mathfrak{A}$ with a homomorphism $h\colon \mathfrak{A} \to \mathfrak{B}$ that already satisfies $\Psi_{\mathrm{gen}, (\mathcal{E}_{+}, \mathcal{E}_{!})}$. 
	Additionally, for every tuple $\bold{a}$ {without constants} from $\mathfrak{A}$ it holds for all $\tp\in\mathcal{E}$ that $\bold{a}$ satisfies $\sig{Tp}_{\tp}\in\sigma$ in $\mathfrak{A}$ if and only if $\bold{a}$ realizes $\tp$ in $\mathfrak{A}$. Then there is a finite $\tau\uplus\sigma$-structure $\mathfrak{A}'$ and a homomorphism $h' \colon \mathfrak{A}' \to \mathfrak{B}$ with the following properties:
	\begin{enumerate}
		\item\label{item GFOindstep:1} $\mathfrak{A}'$ satisfies all $\Psi_{\mathrm{gen}, (\mathcal{E}_{+}, \mathcal{E}_{!})}$, 
		\item\label{item GFOindstep:2} $\mathfrak{A}$ is an induced substructure of $\mathfrak{A}'$,
		\item\label{item GFOindstep:3} $h'$ extends $h$, i.e., $h'|_{A} = h$,  
		\item\label{item GFOindstep:4} every conjunct  in (\ref{eq: Psi_guarded}) is satisfied in $\mathfrak{A}'$ if the universal quantification is restricted to elements from $\mathfrak{A}$,
		\item\label{item GFOindstep:5} for all $\tp\in\mathcal{E}$ it holds that a tuple $\bold{a}$ without constants from $\mathfrak{A}'$ satisfies $\sig{Tp}_{\tp}\in\sigma$ in $\mathfrak{A}'$ if and only if $\bold{a}$ realizes $\tp$ in $\mathfrak{A'}$.
	\end{enumerate}
	If $\mathfrak{A}$ additionally satisfies $\Psi_{\text{\TGF{}}, (\mathcal{E}_{+}, \mathcal{E}_{!})}$ we can ensure that $\mathfrak{A}'$ also satisfies $\Psi_{\text{\TGF{}}, (\mathcal{E}_{+}, \mathcal{E}_{!})}$.
\end{lemma}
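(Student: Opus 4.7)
The plan is to iterate \Cref{lem:inindstep} across all violation tuples of $\mathfrak{A}$. Because $\mathfrak{A}$ is finite and the orders of types in $\mathcal{E}$ are bounded by $\mathrm{width}(\Phi)$, only finitely many violation tuples $q_1, \ldots, q_n$ of $\mathfrak{A}$ exist. Set $\mathfrak{A}_0 := \mathfrak{A}$, $h_0 := h$, and for each $i = 1, \ldots, n$: if $q_i$ is still a violation tuple of $\mathfrak{A}_{i-1}$, invoke \Cref{lem:inindstep} on $(\mathfrak{A}_{i-1}, h_{i-1}, q_i)$ to obtain $\mathfrak{A}_i$ and $h_i \colon \mathfrak{A}_i \to \mathfrak{B}$; otherwise set $\mathfrak{A}_i := \mathfrak{A}_{i-1}$ and $h_i := h_{i-1}$. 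Finally, take $\mathfrak{A}' := \mathfrak{A}_n$ and $h' := h_n$. The applicability of \Cref{lem:inindstep} at each step hinges on the required hypotheses being preserved; these are precisely the $\Psi_{\mathrm{gen}, (\mathcal{E}_+, \mathcal{E}_!)}$-satisfaction and the coincidence of type-predicate satisfaction with type realization (for constant-free tuples), both inductively guaranteed by Items \ref{item inindstep:i} and \ref{item inindstep:vi} of \Cref{lem:inindstep}.

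The verification of Items \ref{item GFOindstep:1}, \ref{item GFOindstep:2}, \ref{item GFOindstep:3}, and \ref{item GFOindstep:5} proceeds by straightforward induction on $i$, using transitivity of ``being an induced substructure of'' and of the extension relation on homomorphisms, together with Items \ref{item inindstep:i}--\ref{item inindstep:iii} and \ref{item inindstep:vi} of \Cref{lem:inindstep}. Likewise, if $\mathfrak{A}$ (and the target $\mathfrak{B}$) satisfies $\Psi_{\text{\TGF{}}, (\mathcal{E}_+, \mathcal{E}_!)}$, the corresponding preservation clause at the end of \Cref{lem:inindstep} carries this property through all iterations, yielding the final claim. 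Finiteness of $\mathfrak{A}'$ is inherited since each step enlarges the structure only by finitely many fresh elements and only finitely many steps are performed.

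The main obstacle is Item \ref{item GFOindstep:4}: we must confirm that at the end, for every tuple $\bold{a}$ of $\mathfrak{A}$ and every $(\tp, \nu) \in B_j^{(\mathcal{E}_+, \mathcal{E}_!)}$ with $\bold{a} \in \sig{Tp}_\tp^{\mathfrak{A}'}$, some fitting $(\tp', \nu')$-witness already lies in $\mathfrak{A}'$. Since $\mathfrak{A}$ is an induced substructure of $\mathfrak{A}'$, the premise $\bold{a} \in \sig{Tp}_\tp^{\mathfrak{A}'}$ is equivalent to $\bold{a} \in \sig{Tp}_\tp^{\mathfrak{A}}$, so it suffices to resolve every violation tuple of $\mathfrak{A}$. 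The crux is a monotonicity argument: once $q_i$ becomes resolved in some $\mathfrak{A}_i$ (via a witness tuple $\bold{b}^{(i)}$ with $\bold{b}^{(i)} \in \sig{Tp}_{\tp'}^{\mathfrak{A}_i}$ and the fitting relationship to $q_i$), it remains resolved in every $\mathfrak{A}_j$ for $j > i$, because $\mathfrak{A}_i$ is an induced substructure of $\mathfrak{A}_j$ and thus both $\bold{b}^{(i)}$ and its type-predicate membership persist. Item \ref{item inindstep:v} of \Cref{lem:inindstep} ensures that no new violation tuple whose tuple component lies within the previously existing domain is created by any step; in particular, the enumeration $q_1, \ldots, q_n$ of violation tuples of $\mathfrak{A}$ is never enlarged. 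Hence after the $n$-th step every $q_i$ has been eliminated, and no unresolved violation tuple of $\mathfrak{A}$ remains in $\mathfrak{A}'$, establishing Item \ref{item GFOindstep:4}.
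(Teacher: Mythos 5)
Your proof is correct and follows essentially the same approach as the paper's: enumerate the finitely many violation tuples of $\mathfrak{A}$, iterate \Cref{lem:inindstep} to eliminate them one by one, and use the induced-substructure relation plus the ``no new violation tuples'' clause (Item 5 of \Cref{lem:inindstep}) to argue that progress is monotone and that Items 1--5 persist under composition. Your explicit conditional (applying \Cref{lem:inindstep} only when $q_i$ is \emph{still} a violation tuple of $\mathfrak{A}_{i-1}$) is a small but welcome refinement: the paper asserts without comment that $q_j$ remains a violation tuple of $\mathfrak{A}^{j-1}$, which need not literally hold if an earlier step inadvertently resolved it; your case distinction sidesteps that wrinkle cleanly while leaving the overall argument unchanged.
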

\begin{proof}
	Let $\mathfrak{B}^{\ast}$ denote $\mathfrak{B}|_{\sigma\uplus C}$ and $Q$ be the set of violation tuples of $\mathfrak{A}$.
	Since $Q$ is clearly finite (as we have finitely many types to consider and $\mathfrak{A}$ is a finite structure) we can assume an enumeration $\{q_1,\ldots,q_m\}$ of its elements. 
	We define inductively homomorphisms $h^j\colon \mathfrak{A}^j\rightarrow \mathfrak{B}^{\ast}$ and
	structures $\mathfrak{A}^j$  for  $j\in \{0,\ldots,m\}$ 
	such that for every $j\in \{1,\ldots,m\}$ the following holds
	\begin{enumerate}[label={\roman*})]
		\item\label{item GFOindstep:i} $\mathfrak{A}$ is a substructure of $\mathfrak{A}^{j}$,
		\item\label{item GFOindstep:ii} $A^j$ is finite,
		\item \label{item GFOindstep:iii}$\mathfrak{A}^j$ satisfies conjuncts $\Psi_{\mathrm{gen}, (\mathcal{E}_{+}, \mathcal{E}_{!})}$,
		\item\label{item GFOindstep:iv} $\mathfrak{A}^{j-1}$ is an induced substructure of $\mathfrak{A}^{j}$,
		\item\label{item GFOindstep:v} $h^{j}$ extends $h^{j-1}$, i.e. $h^{j}|_{A^{j-1}}=h^{j-1}$,
		\item\label{item GFOindstep:vi} no element from $\{q_1,\ldots,q_{j-1}\}$ is a violation tuple of $\mathfrak{A}^j$,
		\item\label{item GFOindstep:vii} at most the elements from $\{ q_{j}, \ldots, q_{m} \}$ are violation tuples of $\mathfrak{A}^{j}$, i.e. we do not add new violation tuples in each step),
		\item\label{item GFOindstep:viii} {for all $\tp\in\mathcal{E}$ it holds that a tuple $\bold{a}$ without constants from $\mathfrak{A}^j$ satisfies $\sig{Tp}_{\tp}\in\sigma$ in $\mathfrak{A}^j$ if and only if $\bold{a}$ realizes $\tp$ in $\mathfrak{A}^j$.}
	\end{enumerate}

	We put $\mathfrak{A}^=\mathfrak{A}$ and 
	$ h^0=h$. Suppose $\mathfrak{A}^k$ and $h^k$ are defined for all $k\in\{1,\ldots,j\}$ and satisfy the requirements. 
	
	Now consider $q_{j}=(\bold{a},\mu,\tp, \nu)$ which implies  $\bold{a} \in \sig{Tp}_\tp^{\mathfrak{A}}$. Since $\mathfrak{A}$ is a substructure of $\mathfrak{A}^{j-1}$ we find the tuple $\bold{a}$ in $\mathfrak{A}^{j-1}$.
	Hence $q_{j}$ is also a violation tuple of $\mathfrak{A}^{j-1}$. Applying \Cref{lem:inindstep} we obtain $\mathfrak{A}^{j}$ and $h^{j}$ with Items \ref{item GFOindstep:ii}-\ref{item GFOindstep:v} holding. Obviously Item \ref{item GFOindstep:i} holds by transitivity of the substructure relation. Furthermore $q_{j}$ is not a violation tuple of $\mathfrak{A}^{j}$, hence Item \ref{item GFOindstep:vi} is true, too. Item \ref{item inindstep:v} of \Cref{lem:inindstep} yields \ref{item GFOindstep:vii} {and Item \ref{item inindstep:vi} of \Cref{lem:inindstep} yields \ref{item GFOindstep:viii}}.
	
	This process terminates after $m$ steps and we obtain a structure $\mathfrak{A}^{m}$ from it, which we will denote now by $\mathfrak{A}'$ and a homomorphism $h^{m}\colon \mathfrak{A}^{m}\to \mathfrak{B}^{\ast}$ renamed to $h'$. As every step ensured that Items \ref{item GFOindstep:i}-\ref{item GFOindstep:vii} hold, $\mathfrak{A}'$ is a finite structure,
	\begin{itemize}
		\item that satisfies $\Psi_{\mathrm{gen}, (\mathcal{E}_{+}, \mathcal{E}_{!})}$,
		\item has $\mathfrak{A}$ as induced substructure (by transitivity of this relation), and
		\item $h'$ extends $h$, i.e. $h'|_{A}=h$.
	\end{itemize}
	Hence $\mathfrak{A}'$ with $h'$ satisfy Items \ref{item GFOindstep:1}-\ref{item GFOindstep:3}. Since no violation tuple of $\mathfrak{A}$ is one of $\mathfrak{A}'$ and there never have been created new ones along the way, Item \ref{item GFOindstep:4} is true as well. {Item \ref{item GFOindstep:5} holds thanks to Item \ref{item GFOindstep:viii}.}
	
	For the second part of the lemma note that if $\Psi_{\text{\TGF{}}, (\mathcal{E}_{+}, \mathcal{E}_{!})}$ holds for $\mathfrak{B}$ and $\mathfrak{A}$ we can additionally apply in each step the second part of \Cref{lem:inindstep}. This then ensures that $\Psi_{\text{\TGF{}}, (\mathcal{E}_{+}, \mathcal{E}_{!})}$ holds for every $A^{j}$ with $j\in \{ 1, \ldots, m \}$ and hence also for $\mathfrak{A}'$. 
\end{proof}

We will now put those lemmata together in order to do the following: For a $\tau\uplus \sigma$-model $\mathfrak{B}$ satisfying $\Psi_{\mathrm{hom}}$, all of the $\Psi_{\mathrm{gen}, (\mathcal{E}_{+}, \mathcal{E}_{!})}$ and $\Psi_{\mathrm{guard}, (\mathcal{E}_{+}, \mathcal{E}_{!})}$ (and in the case of $\Phi$ being a \TGF{} sentence all the $\Psi_{\text{\TGF{}}, (\mathcal{E}_{+}, \mathcal{E}_{!})}$) we will construct a $\tau\uplus\sigma$-model $\mathfrak{A}$ of $\Phi$ and a homomorphism $h\colon\mathfrak{A}\to\mathfrak{B}$ .
Note that we do not fix a $(\mathcal{E}_{+}, \mathcal{E}_{!})\in\getsummary{\Phi}$ this time.
\begin{lemma}\label{lem:ConstrA}
	Let $\Phi$ be a \GFO{} sentence $\Phi = \Phi_{\sig{D}}\land \Phi_{\forall} \land \Phi_{\forall\exists}$ with $\sig{D} \in \tau$ unary or a \TGF{} sentence $\Phi = \Phi_{\sig{Univ}}\land \Phi_{\forall} \land \Phi_{\forall\exists}$ with $\sig{Univ}\in\tau$ binary, where
	\begin{align}
		\Phi_{\sig{D}} &= \forall x. \sig{D}(x) ,\label{eq: propGFOTGF phi_D} \\[2ex]
		\Phi_{\sig{Univ}} & = \forall xy. \sig{Univ}(x,y),	\label{eq: propGFOTGF Univ}\\[2ex]
		\Phi_{\forall} &= \bigwedge_{i} \forall \bold{x}.\Big(\alpha_{i}(\bold{x})\Rightarrow\vartheta_{i}[\bold{x}]\Big) \label{eq: propGFOTGF phi_A}\\
		\Phi_{\forall\exists} &= \bigwedge_{j}\forall \bold{z}.\Big(\beta_{j}(\bold{z})\Rightarrow\exists\bold{y}.\gamma_{j}(\bold{yz})\Big) \label{eq: propGFOTGF phi_AE}
	\end{align}
	Define the $\tau \uplus \sigma$-sentences
	\begin{equation} 
		\Psi = \Psi_\mathrm{hom} \wedge \bigvee_{(\mathcal{E}_{+}, \mathcal{E}_{!}) \in \getsummary{\Phi}} \Psi_{(\mathcal{E}_{+}, \mathcal{E}_{!})}
	\end{equation}  
	and
	\begin{equation} 
		\Psi^{\text{\TGF{}}} = \Psi_\mathrm{hom} \wedge \bigvee_{(\mathcal{E}_{+}, \mathcal{E}_{!}) \in \getsummary{\Phi}} \Psi^{\text{\TGF{}}}_{(\mathcal{E}_{+}, \mathcal{E}_{!})}
	\end{equation}  
	where $\Psi_{(\mathcal{E}_{+}, \mathcal{E}_{!})}$ is the following sentences:
	\begin{equation} 
		\Psi_{(\mathcal{E}_{+}, \mathcal{E}_{!})} = \Psi_{\mathrm{gen}, (\mathcal{E}_{+}, \mathcal{E}_{!})} \land \Psi_{\mathrm{guard}, (\mathcal{E}_{+}, \mathcal{E}_{!})}
	\end{equation}  
	and $\Psi^{\text{\TGF{}}}_{(\mathcal{E}_{+}, \mathcal{E}_{!})}$ is the following sentences:
	\begin{equation} 
		\Psi^{\text{\TGF{}}}_{(\mathcal{E}_{+}, \mathcal{E}_{!})} = \Psi_{\mathrm{gen}, (\mathcal{E}_{+}, \mathcal{E}_{!})} \land \Psi_{\mathrm{guard}, (\mathcal{E}_{+}, \mathcal{E}_{!})} \land \Psi_{\text{\TGF{}}, (\mathcal{E}_{+}, \mathcal{E}_{!})}
	\end{equation}  
	Depending on $\Phi$ being a sentence from \GFO{} or \TGF{} let $\mathfrak{B}$ be a model of $\Psi$ or $\Psi^{\text{\TGF{}}}$. Then there exists a $\tau\uplus\sigma$-structure $\mathfrak{A}$ satisfying $\Phi$ and a homomorphism $h\colon\mathfrak{A}\to\mathfrak{B}$.
\end{lemma}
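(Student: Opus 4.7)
The plan is to realise $\mathfrak{A}$ as the directed union of a chain of finite $\tau\uplus\sigma$-structures built by an iterated chase, with $h$ the corresponding union of homomorphisms into $\mathfrak{B}$. Since $\mathfrak{B}\models\Psi$ (resp.\ $\Psi^{\text{\TGF{}}}$), some disjunct $\Psi_{(\mathcal{E}_+,\mathcal{E}_!)}$ is satisfied; fix such a pair, which by definition of $\getsummary{\Phi}$ equals $\mathcal{E}(\mathfrak{A}^\star)$ for some $\mathfrak{A}^\star\in\getmodels{\Phi}$. Combining \Cref{lem:A0} with \Cref{lem:A0tauexpansion} furnishes the finite $\tau\uplus\sigma$-structure $(\mathfrak{K}_{\mathcal{E}_+})^\tau$ satisfying $\Psi_{\mathrm{hom}}\wedge\Psi_{\mathrm{gen},(\mathcal{E}_+,\mathcal{E}_!)}$, in which type-predicate satisfaction coincides with type realisation on constant-free tuples, together with a homomorphism $h^0$ into $\mathfrak{B}$. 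For the \TGF{} case I additionally invoke \Cref{lem:psiTGFclosure} to enforce $\Psi_{\text{\TGF{}},(\mathcal{E}_+,\mathcal{E}_!)}$ while preserving $h^0$; I call the outcome $\mathfrak{A}^0$.

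I then iteratively apply \Cref{lem:GFOindstep} (its \TGF{}-preserving second clause where relevant) to obtain a chain $\mathfrak{A}^0\subseteq\mathfrak{A}^1\subseteq\cdots$ of finite structures with extending homomorphisms $h^i\colon\mathfrak{A}^i\to\mathfrak{B}$. Each $\mathfrak{A}^i$ is an induced substructure of $\mathfrak{A}^{i+1}$, the invariants $\Psi_{\mathrm{gen}}$, $\Psi_{\mathrm{hom}}$, the coincidence of type-predicate satisfaction with type realisation, and (in the \TGF{} case) $\Psi_{\text{\TGF{}}}$ are maintained, and \emph{every} violation tuple of $\mathfrak{A}^i$ is resolved in $\mathfrak{A}^{i+1}$. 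Newly introduced elements may carry fresh violations, but those are in turn handled at the following round. Setting $\mathfrak{A}=\bigcup_{i\geq 0}\mathfrak{A}^i$ and $h=\bigcup_{i\geq 0}h^i$ then yields a $\tau\uplus\sigma$-structure together with a homomorphism $h\colon\mathfrak{A}\to\mathfrak{B}$.

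To verify $\mathfrak{A}\models\Phi$ the crucial observation is that the type-realisation invariant transfers to $\mathfrak{A}$: as each stage is an induced substructure, every constant-free tuple of $\mathfrak{A}$ realises a unique eligible type, which by $\Psi_{\mathrm{gen},(\mathcal{E}_+,\mathcal{E}_!)}$ belongs to $\mathcal{E}_+$. Because $\mathcal{E}_+$ consists of types realised in $\mathfrak{A}^\star\in\getmodels{\Phi}$, every such type must positively contain $\sig{D}(v)$ (resp.\ $\sig{Univ}(v_1,v_2)$ in the \TGF{} case), giving $\mathfrak{A}\models\Phi_{\sig{D}}$ (resp.\ $\Phi_{\sig{Univ}}$); and for every guard-compatible conjunct $\alpha_i(\bold{x})\Rightarrow\vartheta_i[\bold{x}]$ of $\Phi_\forall$ it must contain some disjunct of $\vartheta_i$ as a literal, giving $\mathfrak{A}\models\Phi_\forall$. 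Finally, any putative failure of $\Phi_{\forall\exists}$ in $\mathfrak{A}$ would correspond, via the fitting relation $\Subset_j$ of \Cref{def:TBPfitting}, to a violation tuple witnessed on a finite subtuple contained in some $\mathfrak{A}^i$; but such a tuple is resolved at stage $i+1$ by \Cref{lem:GFOindstep}, so the corresponding witness is already present in $\mathfrak{A}^{i+1}\subseteq\mathfrak{A}$, a contradiction.

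The main obstacle lies in this last verification step: establishing that the purely local chase invariants (type-realisation coincidence, $\Psi_{\mathrm{gen}}$, and the guarded repair step) are indeed sufficient in the limit to deliver genuine $\Phi_{\forall\exists}$-witnesses for \emph{arbitrary} tuples of $\mathfrak{A}$, including those mixing freshly added and constant-named elements. This ultimately hinges on the careful design of the type machinery (rigidity and closure of $\mathcal{E}$, together with the homogeneous treatment of constants through $\mathfrak{C}$) that guarantees every relevant tuple in $\mathfrak{A}$ realises a type in $\mathcal{E}_+$ inherited from $\mathfrak{A}^\star$.
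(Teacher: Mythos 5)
Your proposal is correct and follows essentially the same route as the paper's own proof: start from $(\mathfrak{K}_{\mathcal{E}_+})^{\tau}$ (with the type-realisation invariant secured by \Cref{lem:A0} and \Cref{lem:A0tauexpansion}, plus \Cref{lem:psiTGFclosure} in the \TGF{} case), iterate \Cref{lem:GFOindstep} to build an increasing chain of finite structures with compatible homomorphisms into $\mathfrak{B}$, pass to the union, and then read $\Phi$ off the type-predicate interpretations using the realisation correspondence and the fitting relation. The only stylistic difference is that you fold the verification that $\mathfrak{A}\models\Psi_{\mathrm{guard},(\mathcal{E}_+,\mathcal{E}_!)}$ into the check of $\Phi_{\forall\exists}$ via a single contradiction argument, whereas the paper separates those two steps; both are valid.
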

\begin{proof}
	Let $\mathfrak{B}$ be a model of $\Psi$ or a model of $\Psi^{\text{\TGF{}}}$ (depending on whether $\Phi$ is from \GFO{} or \TGF{} respectively) and let $(\mathcal{E}_{+}, \mathcal{E}_{!}) \in \getsummary{\Phi}$ such that the associated disjunct $\Psi_{(\mathcal{E}_{+}, \mathcal{E}_{!})}$ (or $\Psi^{\text{\TGF{}}}_{(\mathcal{E}_{+}, \mathcal{E}_{!})}$) is satisfied in $\mathfrak{B}$. 
	
	Applying \Cref{lem:A0}, let $\mathfrak{A}_0$ be $(\mathfrak{K}_{\mathcal{E}_{+}})^{\tau}$, hence for every type $\tp\in\mathcal{E}$ it holds that a tuple $\bold{a}$ (without elements named by constants) from $\mathfrak{A}_0$ satisfies $\sig{Tp}_{\tp}$ in $\mathfrak{A}_0$ if and only if $\bold{a}$ realizes $\tp$ in $\mathfrak{A}_0$.
	Let $h_0$ be the in \Cref{lem:A0} constructed homomorphism $h'\colon (\mathfrak{K}_{\mathcal{E}_{+}})^{\tau} \to \mathfrak{B}$.
	If $\Phi$ is a \TGF{} sentence, \Cref{lem:psiTGFclosure} allows us to assume $\mathfrak{A}_{0}$ to satisfy $\Psi_{\text{\TGF{}}, (\mathcal{E}_{+}, \mathcal{E}_{!})}$.
	
	This will constitute the induction beginning for a proof of the following statement:
	
	\underline{Claim:} There exist a sequence $(\mathfrak{A}_n)_{n\in \mathbb{N} }$ of $\tau\uplus\sigma$-structures  and a sequence $(h_n)_{n \in \mathbb{N}}$ of homomorphisms $h_n\colon \mathfrak{A}_n\rightarrow \mathfrak{B}$
	such that for every $i\in \mathbb{N}$, the following holds:
	
	\begin{enumerate}
		\item\label{item GFOTGF:1} $A_i$ is finite,
		\item\label{item GFOTGF:2} $\mathfrak{A}_i$ satisfies all $\Psi_{\mathrm{gen}, (\mathcal{E}_{+}, \mathcal{E}_{!})}$ (and if $\Phi$ is from \TGF{} also $\Psi_{\text{\TGF{}}, (\mathcal{E}_{+}, \mathcal{E}_{!})}$), 
		\item\label{item GFOTGF:3} $\mathfrak{A}_{i}$ is an induced substructure of $\mathfrak{A}_{i+1}$,
		\item\label{item GFOTGF:4} $ h_{i+1}$ extends $h_i$, i.e., $h_{i+1}|_{A_i} = h_{i}$,  
		\item\label{item GFOTGF:5} every conjunct  $\Psi_{\mathrm{guard}, (\mathcal{E}_{+}, \mathcal{E}_{!})}$ is satisfied in $\mathfrak{A}_{i+1}$ if the universal quantification is restricted to elements from $\mathfrak{A}_i$,
		\item\label{item GFOTGF:6} for all $\tp\in\mathcal{E}$ it holds that a tuple $\bold{a}$ without constants from $\mathfrak{A}_i$ satisfies $\sig{Tp}_{\tp}\in\sigma$ in $\mathfrak{A}_i$ if and only if $\bold{a}$ realizes $\tp$ in $\mathfrak{A}_i$.
	\end{enumerate}

	As mentioned before, we prove the statement by induction on $n$ and assume therefore that the claim holds for  structures $(\mathfrak{A}_n)_{n\leq i}$.
	In each induction step we apply \Cref{lem:GFOindstep} to obtain $\mathfrak{A}^{i+1}$ from $\mathfrak{A}^{i}$.
	This immediately yields the claim.
	
	We continue by using the claim for the proof of $\homcl{\getmodels{\Phi}|_{\tau}} \supseteq \getmodels{\Psi}|_{\tau}$ by defining a limit structure for the sequence $(\mathfrak{A}_n)_{n\in \mathbb{N} }$.
	By Item \ref{item GFOTGF:3}, we know that $A_i\subseteq A_{i+1}$ as well as  $\sig{R}^{\mathfrak{A}_i}\subseteq \sig{R}^{\mathfrak{A}_{i+1}}$ for every $\sig{R}\in \sigma$ holds.
	We define the $\tau\uplus\sigma$-structure $\mathfrak{A}$ as follows: The domain is given by  $A=\bigcup_{n\in \mathbb{N}}A_n$ and for $\sig{R}\in \tau\setminus C$ we have $\sig{R}^\mathfrak{A}=\bigcup_{n\in \mathbb{N}}\sig{R}^{\mathfrak{A}_n}$. By the choice of $\mathfrak{A}_0$,  Item \ref{item GFOTGF:2} and Item \ref{item GFOTGF:5} imply that the structure $ \mathfrak{A}$ satisfies $\Psi_{(\mathcal{E}_{+}, \mathcal{E}_{!})}$. 
	If $\Phi$ is a \TGF{} sentence, we also obtain $\mathfrak{A}\vDash\Psi_{\text{\TGF{}}, (\mathcal{E}_{+}, \mathcal{E}_{!})}$.
	Additionally, Item \ref{item GFOTGF:5} guarantees that $\mathfrak{A}$ satisfies $\Psi_{\mathrm{guard}, (\mathcal{E}_{+}, \mathcal{E}_{!})}$. If this were not the case we would find some violation tuple $q=(\bold{a}, \mu, \tp, \nu)$ of $\mathfrak{A}$. As $\mathfrak{A}$ is the union of an increasing sequence of structures, there is an index $k$ such that $q$ is a violation tuple of $\mathfrak{A}_{k}$. This also means that $\bold{a}$ is a tuple from $\mathfrak{A}_{k}$. By construction (applying \Cref{lem:addtype}) $q$ is no violation tuple in $\mathfrak{A}_{k+1}$. Since in every application of \Cref{lem:addtype} we never add a fresh violation tuple  this implies, that $q$ could not have been one of $\mathfrak{A}$. As violation tuples are witnessing violations of $\Psi_{\mathrm{guard}, (\mathcal{E}_{+}, \mathcal{E}_{!})}$ and we just argued that there cannot be any violation tuples in $\mathfrak{A}$, this yields $\mathfrak{A}\vDash\Psi_{\mathrm{guard}, (\mathcal{E}_{+}, \mathcal{E}_{!})}$.
	Finally, Item \ref{item GFOTGF:6} guarantees that type predicates correspond to the associated types. To be more precise: {For every type $\tp\in\mathcal{E}$ it holds that a tuple $\bold{a}$ (without constants) from $\mathfrak{A}$ satisfies $\sig{Tp}_{\tp}$ in $\mathfrak{A}$ if and only if $\bold{a}$ realizes $\tp$ in $\mathfrak{A}$.}
	Note that, since $\mathfrak{A}\vDash\Psi_{\mathrm{gen}, (\mathcal{E}_{+}, \mathcal{E}_{!})}$ the only interesting case is $\tp\in\mathcal{E}_{+}$. This allows us to switch freely between $\tp\in\mathcal{E}$ and $\sig{Tp}_{\tp}\in\sigma$ for any further discussions.

	Hence we can establish the validity of the $\tau$-sentence $\Phi$  based on the type predicates in $\mathfrak{A}$ alone. To be precise, consider at first (\ref{eq: propGFOTGF phi_A}) and suppose $\bold{a}$ from $\mathfrak{A}$ satisfies the premise $\alpha_i(\bold{x})$ of some conjunct from $\Phi_{\forall}$. Then the $\tau$ type $\tp_{\bold{a}}$ realized by $\bold{a}$ in $\mathfrak{A}$ is guarded by $\alpha_i(\bold{x})$ (as witnessed by some $\nu$). Consequently $\tp_{\bold{a}}\in\mathcal{E}$ and hence $\bold{a}\in\sig{Tp}_{\tp_{\bold{a}}}^{\mathfrak{A}}$ by construction. With $\mathfrak{A}\vDash\Psi_{\mathrm{gen}, (\mathcal{E}_{+}, \mathcal{E}_{!})}$ we obtain $\tp_{\bold{a}}\in\mathcal{E}_{+}$. As $(\mathcal{E}_{+}, \mathcal{E}_{!})\in\getsummary{\Phi}$, $\tp_{\bold{a}}$ contains some disjunct from $\vartheta_i[\bold{v}_{\tp_{\bold{a}}}]$. Hence $\bold{a}$ satisfies some disjunct from $\vartheta_i[\bold{x}]$ in $\mathfrak{A}$.
	
	Now consider (\ref{eq: propGFOTGF phi_AE}) and assume that a premise $\beta_j(\bold{z})$ of a conjunct from $\Phi_{\forall\exists}$ is satisfied by a tuple $\bold{c}$ from $\mathfrak{A}$. Hence the $\tau$ type $\tp_{\bold{c}}$ of $\bold{c}$ in $\mathfrak{A}$ is guarded by $\beta_j(\bold{z})$ (as witnessed by some $\nu$), so $\tp_{\bold{c}}\in\mathcal{E}$. By construction this means that $\bold{c}$ satisfies $\sig{Tp}_{\tp_{\bold{c}}}$ in $\mathfrak{A}$. By $\Psi_{\mathrm{gen}, (\mathcal{E}_{+}, \mathcal{E}_{!})}$ we obtain $\tp_{\bold{c}}\in\mathcal{E}_{+}$ which in turn implies that $(\tp_{\bold{c}}, \nu)\in B_{j}$. Since $\mathfrak{A}$ satisfies $\Psi$ and especially $\Psi_{\mathrm{guard}, (\mathcal{E}_{+}, \mathcal{E}_{!})}$, it follows that there exist $\bold{b}$ in $\mathfrak{A}$ and a type $\tp'\in\mathcal{E}_{+}$ guarded by $\gamma_{j}(\bold{y}\bold{z})$ as witnessed by $\nu'$, i.e. $(\tp', \nu') \in P_j$ such that $(\tp_{\bold{c}}, \nu)$ and $(\tp', \nu')$ are fitting and $\sig{Tp}_{\tp'}(\bold{b}\bold{c})$ holds. And by construction we obtain $\bold{b}\bold{c}$ realizing $\tp'$ hence satisfying $\gamma_{j}(\bold{y}\bold{z})$ in $\mathfrak{A}$.
	
	For the case $\Phi$ being a \GFO{} sentence, $\Phi_{\sig{D}}$ holds since, for every $a$ in $\mathfrak{A}$ its type $\tp_{a}(v_1)$ is, as a $1$-type, trivially guarded (hence in $\mathcal{E}$). By construction $a\in\sig{Tp}_{\tp_{a}}^{\mathfrak{A}}$, hence by $\Psi_{\mathrm{gen}, (\mathcal{E}_{+}, \mathcal{E}_{!})}$ we obtain $\tp_{a}\in\mathcal{E}_{+}$. As $\sig{D}(v_1)\in\tp_{a}$, $a\in\sig{D}^{\mathfrak{A}}$.
	
	Lastly, if $\Phi$ is a \TGF{} sentence, $\Phi_{\sig{Univ}}$ holds since, for every $(a_1,a_2)$ in $\mathfrak{A}$ we obtain via $\mathfrak{A}\vDash\Psi_{\text{\TGF{}}, (\mathcal{E}_{+}, \mathcal{E}_{!})}$ its type $\tp_{a}(v_1)$ is, as a $1$-type, trivially guarded (hence in $\mathcal{E}$). By construction $a\in\sig{Tp}_{\tp_{a}}^{\mathfrak{A}}$, hence by $\Psi_{\mathrm{gen}, (\mathcal{E}_{+}, \mathcal{E}_{!})}$ we obtain $\tp_{a}\in\mathcal{E}_{+}$. As $\sig{D}(v_1)\in\tp_{a}$, $a\in\sig{D}^{\mathfrak{A}}$.
	
	This implies that $\mathfrak{A}$ indeed satisfies $\Phi$, as desired.
	
	By Item \ref{item GFOTGF:4}, the union $h=\bigcup_{n\in \mathbb{N}} h_n$ is a well-defined map. 
	Moreover, it is a homomorphism from $ \mathfrak{A}$ to $\mathfrak{B}$. 
	This proves the statement.
\end{proof}

We are now ready to prove several results characterizing the homclosures.

\begin{proposition}[restate=GFOinESO, name=]\label{prop:charGFO}
For every \GFO{} $\tau$-sentence $\Phi$, 
there exists
a \GFO{} 
sen\-tence $\Psi$ 
such that $\homcl{\getmodels{\Phi}} = \getmodels{\Psi}|_\tau$. 
\end{proposition}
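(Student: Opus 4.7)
My plan is to assemble $\Psi$ directly from the machinery built up in the preceding lemmas and then invoke those lemmas to justify both inclusions. First I would appeal to \Cref{lem:normalformOK} to reduce to the case where $\Phi$ is already in the normal form of \Cref{lem:GFONF}, i.e.\ $\Phi = \Phi_{\sig{D}}\wedge\Phi_\forall\wedge\Phi_{\forall\exists}$ over some expanded signature $\tau'\supseteq\tau$. Then I set $\mathrm{width}(\Phi)$ to be the maximal arity of a predicate appearing in $\Phi$ and let $\mathcal{E}$ be the (finite) set of all rigid guarded $n$-types with $n\leq\mathrm{width}(\Phi)$ over $\tau'$; this is closed under variable permutations by construction. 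Introducing the signature $\sigma$ of fresh type predicates $\Tp_\tp$, I define
\begin{equation}
\Psi \;=\; \Psi_\mathrm{hom}\;\wedge\hspace{-2ex}\bigvee_{(\mathcal{E}_+,\mathcal{E}_!)\in\getsummary{\Phi}}\hspace{-2ex}\bigl(\Psi_{\mathrm{gen},(\mathcal{E}_+,\mathcal{E}_!)}\wedge\Psi_{\mathrm{guard},(\mathcal{E}_+,\mathcal{E}_!)}\bigr).
\end{equation}
Note that the outer disjunction is finite because $\getsummary{\Phi}$ is, and the whole formula is polynomial-size-definable once $\mathcal{E}$ is fixed.

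For the inclusion $\homcl{\getmodels{\Phi}}\subseteq\getmodels{\Psi}|_{\tau'}$, I take a model $\mathfrak{A}\in\getmodels{\Phi}$ and a homomorphism $h\colon\mathfrak{A}\to\mathfrak{B}$ and let $\mathfrak{B}'=\mathfrak{B}\cdot h(\mathfrak{A}\cdot\mathcal{E}|_\sigma)$. \Cref{lem:PsiGen} delivers that $\mathfrak{B}'\models\Psi_{\mathrm{gen},\mathcal{E}(\mathfrak{A})}$, while \Cref{lem:PsiGuardTGF} (the \GFO{} part) gives $\mathfrak{B}'\models\Psi_{\mathrm{hom}}\wedge\Psi_{\mathrm{guard},\mathcal{E}(\mathfrak{A})}$. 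Picking the disjunct indexed by $\mathcal{E}(\mathfrak{A})\in\getsummary{\Phi}$ in $\Psi$ yields $\mathfrak{B}'\models\Psi$, hence $\mathfrak{B}=\mathfrak{B}'|_{\tau'}\in\getmodels{\Psi}|_{\tau'}$.

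For the converse $\getmodels{\Psi}|_{\tau'}\subseteq\homcl{\getmodels{\Phi}}$, I take any $\mathfrak{B}\models\Psi$ and invoke \Cref{lem:ConstrA}: it directly produces some $\mathfrak{A}\in\getmodels{\Phi}$ together with a homomorphism $h\colon\mathfrak{A}\to\mathfrak{B}$. Restricting to the original signature $\tau$ via \Cref{lem:normalformOK} then transfers the characterization from $\Phi$ (over $\tau'$) to the original sentence (over $\tau$), giving $\homcl{\getmodels{\Phi}}=\getmodels{\Psi}|_\tau$ as required.

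The main obstacle I anticipate is the syntactic check that $\Psi$ actually lies in \GFO{}. All universal quantifiers in $\Psi_\mathrm{hom}$, $\Psi_{\mathrm{gen}}$ and $\Psi_{\mathrm{guard}}$ are of the shape $\forall\bold{v}_\tp.\Tp_\tp(\bold{v}_\tp)\Rightarrow\varphi[\bold{v}_\tp]$, and the existentials are either $\exists\bold{v}_\tp.\Tp_\tp(\bold{v}_\tp)$ (Formula~\ref{eq: gen 1}) or $\exists\bold{v}_{\tp'}{\setminus}\nu(\bold{z}).\Tp_{\tp'}(\bold{v}_{\tp'})$ (Formula~\ref{eq: Psi_guarded}); both use a type predicate as guard. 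One has to verify that the matrices $\varphi[\bold{v}_\tp]$ consist purely of atoms, (type) atoms, and their literal combinations over the quantified variables (no deeper unguarded quantification arises because the disjunctions over fitting $(\tp',\nu')$ are finite and every branch closes immediately with another guarded existential). The only subtle point is equality literals inside types, but since we are in \GFOe{} (equality is allowed in \GFO{}) and types are rigid, these are unproblematic. With that observation, $\Psi\in\GFO$, completing the proof.
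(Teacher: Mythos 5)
Your proof is correct and follows essentially the same route as the paper's: reduce to the normal form of \Cref{lem:GFONF} via \Cref{lem:normalformOK}, fix $\mathcal{E}$ as the rigid guarded types of order at most $\mathrm{width}(\Phi)$, define $\Psi = \Psi_\mathrm{hom}\wedge\bigvee\bigl(\Psi_{\mathrm{gen},(\mathcal{E}_+,\mathcal{E}_!)}\wedge\Psi_{\mathrm{guard},(\mathcal{E}_+,\mathcal{E}_!)}\bigr)$, and establish the two inclusions by \Cref{lem:PsiGen}/\Cref{lem:PsiGuardTGF} (forward) and \Cref{lem:ConstrA} (backward). Your added paragraph verifying syntactic membership of $\Psi$ in $\GFO$ makes explicit a check the paper leaves implicit, which is a welcome piece of diligence.
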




\begin{proof}
	Let $\Phi$ be a \GFO{} $\tau$-sentence, which, by \Cref{lem:normalformOK}, can without loss of generality be assumed to be in the following normal form (as discussed above in \Cref{lem:GFONF}): $\Phi = \Phi_{\sig{D}}\land \Phi_{\forall} \land \Phi_{\forall\exists}$ with $\sig{D} \in \tau$ unary, where
	\begin{align}
		\Phi_{\sig{D}} &= \forall x. \sig{D}(x) ,\label{eq: prop48 phi_D} \\[2ex]
		\Phi_{\forall} &= \bigwedge_{i} \forall \bold{x}.\Big(\alpha_{i}(\bold{x})\Rightarrow\vartheta_{i}[\bold{x}]\Big) \label{eq: prop48 phi_A}\\
		\Phi_{\forall\exists} &= \bigwedge_{j}\forall \bold{z}.\Big(\beta_{j}(\bold{z})\Rightarrow\exists\bold{y}.\gamma_{j}(\bold{yz})\Big) \label{eq: prop48 phi_AE}
	\end{align}
	with $\vartheta_{i}$ a disjunction of literals, $\alpha_{i}, \beta_{j}$ guard atoms and $\gamma_j$ an atom. 
	We define the \emph{width} of $\Phi$ (denoted by $\mathrm{width}(\Phi)$) to be the maximal arity of a relation symbol appearing in $\Phi$. Furthermore, we denote by $C$ the set of all constant symbols from $\tau$.
	
	Now, let the set $\mathcal{E}$ of eligible types consist of all rigid guarded types of order $\le\mathrm{width}(\Phi)$.


    Let $\sigma$ contain for every $\tp\in \mathcal{E}$ a fresh relational symbol $\sig{Tp}_\tp$ of appropriate arity, and define the $\tau \uplus \sigma$-sentence
	\begin{equation} 
		\Psi = \Psi_\mathrm{hom} \wedge \bigvee_{(\mathcal{E}_{+}, \mathcal{E}_{!}) \in \getsummary{\Phi}} \Psi_{(\mathcal{E}_{+}, \mathcal{E}_{!})},
	\end{equation}  
	where $\Psi_{(\mathcal{E}_{+}, \mathcal{E}_{!})}$ is the following sentences:
	
	\begin{equation} 
		\Psi_{(\mathcal{E}_{+}, \mathcal{E}_{!})} = \Psi_{\mathrm{gen}, (\mathcal{E}_{+}, \mathcal{E}_{!})} \land \Psi_{\mathrm{guard}, (\mathcal{E}_{+}, \mathcal{E}_{!})}.
	\end{equation}  
    
    We obtain $\homcl{\getmodels{\Phi}} = \getmodels{\Psi}|_{\tau}$ as follows:
    
	``$\subseteq$'': Let $\mathfrak{A} \in \getmodels{\Phi}$ and let $h\colon\mathfrak{A} \to \mathfrak{B}$. Take $\mathcal{E}$ to be the set of all eligible types for $\Phi$ and $\sigma$ the set of fresh relation symbols $\sig{Tp}_{\tp}$ associated to each $\tp \in \mathcal{E}$ with the appropriate arity. Then the $\mathcal{E}$-adornment of $\mathfrak{A}$, $\mathfrak{A}\cdot\mathcal{E}$, is a $\tau\uplus\sigma$-structure which we will denote by $\mathfrak{A}^{\ast}$. Furthermore, let $\mathfrak{B}\cdot h(\mathfrak{A}^{\ast} |_{\sigma})$ be the $\tau \uplus\sigma$-structure described before, which we denote by $\mathfrak{B}^{\ast}$. Obviously $\mathfrak{B}^{\ast}|_{\tau} = \mathfrak{B}$ holds.
    
	By \Cref{lem:PsiGen} $\mathfrak{B}^{\ast}$ satisfies $\Psi_{\mathrm{gen}, (\mathcal{E}_{+}, \mathcal{E}_{!})}$. Additionally, after applying \Cref{lem:PsiGuardTGF} $\mathfrak{B}^{\ast}$ satisfies $\Psi_{\mathrm{guard}, (\mathcal{E}_{+}, \mathcal{E}_{!})}$ and $\Psi_{\mathrm{hom}}$. Consequently $\mathfrak{B}^{\ast}\vDash \Psi$.
	  
  	``$\supseteq$'': Let $\mathfrak{B}$ be a model of $\Psi$. Application of \Cref{lem:ConstrA} yields a $\tau\uplus\sigma$-structure $\mathfrak{A}$ satisfying $\Phi$ and a homomorphism $h\colon\mathfrak{A}\to\mathfrak{B}$. Restricting both $\mathfrak{A}$ and $\mathfrak{B}$ to $\tau$ yields the claim.
\end{proof}

\begin{lemma}[restate=GNFOredGFO, label=GNFOredGFO, name=]
For every \GNFO{} $\tau$-sentence $\Phi$, there exists some 
\GFO{} 
sentence $\Phi^*$ with
$\homcl{\getmodels{\Phi}} = \homcl{\getmodels{\Phi^*}|_\tau}$.
\end{lemma}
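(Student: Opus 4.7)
The plan is to establish the stronger claim that every \GNFO{} $\tau$-sentence $\Phi$ is projectively equivalent to a \GFO{} sentence $\Phi^*$ over an extended signature $\tau \uplus \sigma$, i.e., $\getmodels{\Phi} = \getmodels{\Phi^*}|_\tau$; the desired equality of homomorphism closures then follows immediately.

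Starting from $\Phi$, I would first put it into a normal form in which each negation has the guarded shape $\alpha(\bold{x}) \wedge \neg \psi[\bold{x}]$ with $\alpha$ an atom, and in which existential quantifiers have been pushed inward as far as possible (via distributivity over $\vee$ and scope narrowing over $\wedge$) so that every multi-variable existential either carries an atomic guard in its body or decomposes into nested single-variable ones. For each subformula $\psi_i(\bold{x}_i)$ of the normalized $\Phi$, I would introduce two fresh predicates $\sig{R}_i^+$ and $\sig{R}_i^-$ of arity $|\bold{x}_i|$, intended respectively as an under- and an over-approximation of the extension of $\psi_i$. Then $\Phi^*$ is the conjunction of the top-level assertion $\sig{R}_\Phi^+$ (nullary) with a system of \GFO{} axioms propagating these approximations recursively along the parse tree of $\Phi$: positive occurrences use $\sig{R}^+$ while occurrences under guarded negation use $\sig{R}^-$, with polarity flipping whenever we descend through a $\neg$. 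Every forward implication $\sig{R}_i^+(\bold{x}_i) \Rightarrow \ldots$ is guarded by $\sig{R}_i^+$ itself, while each over-approximation axiom $\ldots \Rightarrow \sig{R}_i^-(\bold{x}_i)$ is guarded by the atom $\alpha$ inherited from the enclosing guarded negation---which by the \GNFO{} syntactic restriction covers exactly the free variables $\bold{x}_i$ of $\psi_i$.

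Projective equivalence then decomposes into two directions. For $(\subseteq)$, any $\mathfrak{A} \models \Phi$ expands to a model of $\Phi^*$ by interpreting every $\sig{R}_i^\pm$ as the true extension of $\psi_i$ in $\mathfrak{A}$, which satisfies all the axioms and the top-level assertion by inspection. For $(\supseteq)$, given $\mathfrak{A}^* \models \Phi^*$, a bottom-up structural induction using the axioms establishes $\sig{R}_\psi^+(\bold{a}) \Rightarrow \psi(\bold{a})$ for all subformulae $\psi$ and tuples $\bold{a}$, so the asserted $\sig{R}_\Phi^+$ forces $\Phi$ on $\mathfrak{A}^*|_\tau$. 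The principal obstacle is verifying that every axiom can indeed be phrased in \GFO{} despite the unrestricted multi-variable existentials of \GNFO{}; the guarded-negation restriction is precisely what guarantees that an atomic guard is always available wherever the over-approximation axioms need one, making the entire construction uniformly expressible.
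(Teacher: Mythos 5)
Your proposal sets out to prove a strictly stronger claim than the lemma asserts: you aim for projective \emph{equivalence} $\getmodels{\Phi} = \getmodels{\Phi^*}|_\tau$ rather than mere equality of homclosures. This stronger claim is false, and it is precisely what the paper's argument is designed to avoid. $\GNFO$ strictly exceeds $\GFO$ even projectively: for instance, the $\GNFO$ sentence $\neg\exists xyz.\,\sigE(x,y)\wedge\sigE(y,z)\wedge\sigE(z,x)$ (triangle-freeness) has no $\GFO$ projective characterization, because $\GFO$ sentences are invariant under guarded bisimulation and hence cannot ``see'' unguarded patterns like a triangle, regardless of which auxiliary predicates are introduced. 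Were the projective-equivalence claim true, the treewidth-based machinery from the cited literature would be unnecessary.

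The concrete gap appears exactly where you assert that ``each over-approximation axiom $\ldots\Rightarrow\sig{R}_i^-(\bold{x}_i)$ is guarded by the atom $\alpha$ inherited from the enclosing guarded negation.'' Consider a subformula $\alpha(\bold{x})\wedge\neg\exists\bold{y}.\chi(\bold{x},\bold{y})$ where the inner existential is unguarded — a perfectly legal $\GNFO$ pattern, and indeed the paradigm source of $\GNFO$'s extra power. The over-approximation axiom for $\exists\bold{y}.\chi$ would need to be the universally quantified statement that whenever $\chi(\bold{x},\bold{y})$ holds, $\sig{R}^-(\bold{x})$ holds; this is a $\forall$-axiom over $\bold{x}\cup\bold{y}$, but $\alpha(\bold{x})$ covers only $\bold{x}$, not the bound variables $\bold{y}$. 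For the triangle example the ``guard'' from the negated \emph{sentence} is vacuous (it covers no variables at all), and the required axiom $\forall xyz.\,\sigE(x,y)\wedge\sigE(y,z)\wedge\sigE(z,x)\Rightarrow\sig{R}^-$ has no atom covering $\{x,y,z\}$. The normalization step you invoke (pushing existentials inward, decomposing into nested single-variable quantifiers) cannot fix this: the guard for an innermost $\exists z$ must still cover the \emph{previously bound} variables $x,y$, which no available atom does. The paper instead first applies the $\GNFO$ normal form of B\'ar\'any, ten Cate, and Segoufin to obtain a projectively equivalent $\Phi'\wedge\neg\Phi''$ with $\Phi'\in\GFO$ and $\Phi''$ a UCQ, then uses a treeification argument (via bounded-treewidth models of the guarded part) to replace $\Phi''$ by a guarded $\Phi'''$ that changes the model class but preserves the homomorphism closure. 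The lemma genuinely cannot be strengthened to projective equivalence, and the homclosure-only equality is the nontrivial content being transferred from the cited work.
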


\begin{proof}By \cite[Proposition 3.2]{GuardedNegation}, there exist a \GFO{} sentence $\Phi'$ and an \EpFO{} sentence (more precisely: a union of conjunctive queries) $\Phi''$ such that $\getmodels{\Phi} = \getmodels{\Phi' \wedge \neg \Phi''}|_{\tau}$ ($\dag$).
	
	
	Then, by \cite[Lemma 2.4]{BaranyGO13} and the corresponding proof, there exists a \GFO{} sentence $\Phi'''$ such that 
	(i) $\getmodels{\Phi' \wedge \neg \Phi''} \subseteq \getmodels{\Phi' \wedge \neg \Phi'''}$ and (ii) for every model $\mathfrak{A}$ of $\Phi' \wedge \neg \Phi'''$ there is
	a model  $\mathfrak{B}$ of $\Phi' \wedge \neg \Phi''$ such that a homomorphism $\mathfrak{B} \to \mathfrak{A}$ exists.
	Therefore, $\homcl{\getmodels{\Phi' \wedge \neg \Phi''}} = \homcl{\getmodels{\Phi' \wedge \neg \Phi'''}}$ ($\ddag$). 

So we obtain: 
\begin{eqnarray}
	\homcl{\getmodels{\Phi}} & \stackrel{\dag}{=} & \homcl{\getmodels{\Phi' \wedge \neg \Phi''}|_{\tau}} \\
	& \stackrel{*}{=} & \homcl{\getmodels{\Phi' \wedge \neg \Phi''}}|_{\tau} \\
	& \stackrel{\ddag}{=} & \homcl{\getmodels{\Phi' \wedge \neg \Phi'''}}|_{\tau} \\
	& \stackrel{*}{=} & \homcl{\getmodels{\Phi' \wedge \neg \Phi'''}|_{\tau}}.
\end{eqnarray} 
The equivalent transformation used in the steps denoted by $\stackrel{*}{=}$ has been justified in the proof of \Cref{lem:normalformOK}.	
Note that $\Phi' \wedge \neg \Phi''' \in \GFO$ and let $\Phi^* = \Phi' \wedge \neg \Phi'''$.
\end{proof}

\begin{proposition}\label{prop:charGNFO}
For every \GNFO{} $\tau$-sentence $\Phi$, 
there exists
a \GFO{} 
sen\-tence $\Psi$ 
such that $\homcl{\getmodels{\Phi}} = \getmodels{\Psi}|_\tau$. 
\end{proposition}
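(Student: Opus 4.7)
The plan is to obtain $\Psi$ by combining the two preparatory results already established: \Cref{GNFOredGFO}, which reduces the homclosure problem for $\GNFO$ to the one for $\GFO$ modulo a signature extension, and \Cref{prop:charGFO}, which projectively characterizes the homclosure of any $\GFO$ sentence in $\GFO$ itself.

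Concretely, given the $\GNFO$ $\tau$-sentence $\Phi$, I would first invoke \Cref{GNFOredGFO} to obtain a $\GFO$ $\tau^*$-sentence $\Phi^*$ (for some $\tau \subseteq \tau^*$) satisfying
\begin{equation}
\homcl{\getmodels{\Phi}} = \homcl{\getmodels{\Phi^*}|_\tau}.
\end{equation}
Next, applying \Cref{prop:charGFO} to $\Phi^*$, I would obtain a $\GFO$ sentence $\Psi$ over some extended signature $\tau^{**} \supseteq \tau^*$ with
\begin{equation}
\homcl{\getmodels{\Phi^*}} = \getmodels{\Psi}|_{\tau^*}.
\end{equation}

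It remains to verify that $\getmodels{\Psi}|_\tau = \homcl{\getmodels{\Phi}}$. Chaining the two identities above yields
\begin{equation}
\getmodels{\Psi}|_\tau \;=\; \getmodels{\Psi}|_{\tau^*}|_\tau \;=\; \homcl{\getmodels{\Phi^*}}|_\tau,
\end{equation}
so the only remaining point is the ``commutation'' $\homcl{\getmodels{\Phi^*}}|_\tau = \homcl{\getmodels{\Phi^*}|_\tau}$. This is exactly the auxiliary equality $\stackrel{*}{=}$ established inside the proof of \Cref{lem:normalformOK}: its ``$\subseteq$'' direction follows since any homomorphism $\mathfrak{A}' \to \mathfrak{B}'$ with $\mathfrak{A}'\models\Phi^*$ restricts to a homomorphism $\mathfrak{A}'|_\tau \to \mathfrak{B}'|_\tau$, and its ``$\supseteq$'' direction follows by expanding a target $\tau$-structure via $\cdot\,\widehat{\tau^*{\setminus}\tau}$, which preserves the homomorphism from any witness $\mathfrak{A}'\models\Phi^*$.

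Since this step is entirely a reassembly of statements already proven, I do not foresee any real technical obstacle; the only thing to be careful about is signature bookkeeping, in particular making sure that the successive signature extensions introduced by \Cref{GNFOredGFO} and \Cref{prop:charGFO} do not collide (which can be arranged by renaming fresh predicates) and that the auxiliary symbols introduced by \Cref{prop:charGFO} are safely eliminated by the final $\tau$-reduct.
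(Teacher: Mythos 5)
Your proof is correct and follows essentially the same approach as the paper: invoke \Cref{GNFOredGFO} to pass to a $\GFO$ sentence with the same homclosure (projectively), apply \Cref{prop:charGFO}, and then chain the identities using the commutation $\homcl{\getmodels{\Phi^*}}|_\tau = \homcl{\getmodels{\Phi^*}|_\tau}$ established in the proof of \Cref{lem:normalformOK}. The paper's proof is this exact reassembly, so there is nothing to add.
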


\begin{proof}
Given $\Phi$, we apply \Cref{GNFOredGFO} to obtain a \GFO{} $\tau'$-sentence $\Phi^*$ for some $\tau' \supseteq \tau$ satisfying $\homcl{\getmodels{\Phi}} = \homcl{\getmodels{\Phi^*}|_\tau}$.
Applying \Cref{prop:charGFO} to $\Phi^*$, we obtain a \GFO{} $\tau''$-sentence $\Psi$ for some $\tau'' \supseteq \tau'$ satisfying $\homcl{\getmodels{\Phi^*}} = \getmodels{\Psi}|_{\tau'}$. Yet, since $\tau \subseteq \tau'$, we also obtain $$\getmodels{\Psi}|_\tau = \getmodels{\Psi}|_{\tau'}|_\tau = \homcl{\getmodels{\Phi^*}}|_\tau \stackrel{*}{=} \homcl{\getmodels{\Phi^*}|_\tau} = \homcl{\getmodels{\Phi}}$$ (the equality denoted by $\stackrel{*}{=}$ has been shown in the proof of \Cref{lem:normalformOK}), therefore $\Psi$ serves the desired purpose.
\end{proof}

\begin{proposition}[restate=TGFinESO, name=]\label{prop:charTGF}
For every \TGF{} $\tau$-sentence $\Phi$, 
there exists
a \TGF{} 
sen\-tence $\Psi$ 
such that $\homcl{\getmodels{\Phi}} = \getmodels{\Psi}|_\tau$. 
\end{proposition}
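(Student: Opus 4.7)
The plan is to essentially replay the construction used for \Cref{prop:charGFO} and \Cref{prop:charGNFO}, but being careful about the \TGF{} syntactic restriction. By \Cref{lem:normalformOK} together with \Cref{lem:TGFNF}, we may assume without loss of generality that the given \TGF{} $\tau$-sentence has the shape $\Phi = \Phi_{\sig{Univ}} \wedge \Phi_\forall \wedge \Phi_{\forall\exists}$ over a signature extending $\tau$ by the binary predicate $\sig{Univ}$; modulo the projective reading, any model of the normal form collapses back to a model of the original sentence, and the homclosure statement is preserved.

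Next, set $\mathrm{width}(\Phi)$ to be the maximum arity of a predicate occurring in $\Phi$, let $\mathcal{E}$ consist of all guarded (and automatically rigid) types over $\tau$ of order at most $\mathrm{width}(\Phi)$, and introduce the signature $\sigma$ of type predicates $\sig{Tp}_\tp$ for $\tp \in \mathcal{E}$. I then assemble
\begin{equation}
\Psi \ = \ \Phi_{\sig{Univ}} \wedge \Psi_{\mathrm{hom}} \wedge \hspace{-2ex}\bigvee_{(\mathcal{E}_+,\mathcal{E}_!)\in\getsummary{\Phi}}\hspace{-2ex} \Psi^{\text{\TGF{}}}_{(\mathcal{E}_+,\mathcal{E}_!)},
\end{equation}
where each $\Psi^{\text{\TGF{}}}_{(\mathcal{E}_+,\mathcal{E}_!)}$ is the conjunction $\Psi_{\mathrm{gen},(\mathcal{E}_+,\mathcal{E}_!)} \wedge \Psi_{\mathrm{guard},(\mathcal{E}_+,\mathcal{E}_!)} \wedge \Psi_{\text{\TGF{}},(\mathcal{E}_+,\mathcal{E}_!)}$ introduced in \Cref{def:guardvioltgf}. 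Both inclusions of $\homcl{\getmodels{\Phi}} = \getmodels{\Psi}|_\tau$ are then handed to the machinery already established: for ``$\subseteq$'', given $\mathfrak{A} \in \getmodels{\Phi}$ and a homomorphism $h\colon \mathfrak{A} \to \mathfrak{B}$, the structure $\mathfrak{B}\cdot h(\mathfrak{A}\cdot\mathcal{E}|_\sigma)$ is a $\tau\uplus\sigma$-expansion of $\mathfrak{B}$ satisfying each conjunct of $\Psi$ by \Cref{lem:PsiGen} and \Cref{lem:PsiGuardTGF}, while $\Phi_{\sig{Univ}}$ is inherited from $\mathfrak{A}$; for ``$\supseteq$'', any model $\mathfrak{B}$ of $\Psi$ selects a summary $(\mathcal{E}_+,\mathcal{E}_!)$ making $\Psi^{\text{\TGF{}}}_{(\mathcal{E}_+,\mathcal{E}_!)}$ true, and \Cref{lem:ConstrA} produces a $\tau\uplus\sigma$-structure $\mathfrak{A}$ with $\mathfrak{A}|_\tau \in \getmodels{\Phi}$ together with a homomorphism $\mathfrak{A} \to \mathfrak{B}$, which restricts to a homomorphism $\mathfrak{A}|_\tau \to \mathfrak{B}|_\tau$.

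The real obstacle, and the only part that genuinely distinguishes the \TGF{} case from the \GFO{} case, is the verification that $\Psi$ itself lies in \TGF{}. The conjuncts of $\Psi_{\mathrm{hom}}$, $\Psi_{\mathrm{gen},(\mathcal{E}_+,\mathcal{E}_!)}$ and $\Psi_{\mathrm{guard},(\mathcal{E}_+,\mathcal{E}_!)}$ are all quantifications of the form $\forall \bold{v}_\tp. \sig{Tp}_\tp(\bold{v}_\tp) \Rightarrow \ldots$ or $\exists \bold{v}_\tp. \sig{Tp}_\tp(\bold{v}_\tp) \wedge \ldots$, which are properly guarded by the single type atom $\sig{Tp}_\tp(\bold{v}_\tp)$ of arity $\mathrm{order}(\tp)$. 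The genuinely delicate point is the $\Psi_{\text{\TGF{}},(\mathcal{E}_+,\mathcal{E}_!)}$ conjunct (\ref{eq: Psi_TGF}), which quantifies simultaneously over two variables $v,v'$ under the premise $\sig{Tp}_\tp(v) \wedge \sig{Tp}_{\tp'}(v')$; this is not a single guard atom. I resolve this by rewriting each such conjunct as
\begin{equation}
\forall v v'.\, \sig{Univ}(v,v') \Rightarrow \big(\sig{Tp}_\tp(v) \wedge \sig{Tp}_{\tp'}(v') \Rightarrow \hspace{-3ex} \bigvee_{\tp''\in\mathcal{E}_+,\, \tp''|_v=\tp,\, \tp''|_{v'}=\tp'} \hspace{-3ex} \sig{Tp}_{\tp''}(v,v')\big),
\end{equation}
which is a legitimate guarded quantification using $\sig{Univ}(v,v')$ as the guard, exactly as permitted by \TGF{} on two-variable quantifications; the extra conjunct $\Phi_{\sig{Univ}}$ forces $\sig{Univ}$ to be interpreted as the full binary relation, so the rewriting is equivalent to (\ref{eq: Psi_TGF}) on models of $\Psi$. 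Once this is in place, all remaining steps go through verbatim as in \Cref{prop:charTGF}'s \GFO{}-analogue, using \Cref{lem:PsiGen}, \Cref{lem:PsiGuardTGF}, and \Cref{lem:ConstrA} (whose \TGF{}-specific strengthenings, via \Cref{lem:psiTGFclosure}, were built precisely to handle the extra conjunct $\Psi_{\text{\TGF{}},(\mathcal{E}_+,\mathcal{E}_!)}$), yielding $\homcl{\getmodels{\Phi}} = \getmodels{\Psi}|_\tau$ with $\Psi$ a \TGF{} sentence as required.
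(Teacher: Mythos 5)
Your skeleton (normal form via \Cref{lem:TGFNF} and \Cref{lem:normalformOK}, the disjunction over summaries, the two inclusions via \Cref{lem:PsiGen}, \Cref{lem:PsiGuardTGF}, and \Cref{lem:ConstrA}) exactly matches the paper's proof, but the modification you introduce in order to make $\Psi$ land in \TGF{} is incorrect.

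The error is adding $\Phi_{\sig{Univ}} = \forall xy.\sig{Univ}(x,y)$ as a conjunct of $\Psi$, justified by the claim that ``$\Phi_{\sig{Univ}}$ is inherited from $\mathfrak{A}$'' in the ``$\subseteq$'' direction. It is not: $\Phi_{\sig{Univ}}$ is a universal sentence and homomorphisms preserve positive atoms only on the homomorphic image, not on the whole target. Concretely, take a normal-form model $\mathfrak{A}\in\getmodels{\Phi}$ and let $\mathfrak{B}$ be $\mathfrak{A}$ with one isolated fresh element adjoined (relations and constant interpretations unchanged); then the inclusion $\mathfrak{A}\hookrightarrow\mathfrak{B}$ is a homomorphism, so $\mathfrak{B}\in\homcl{\getmodels{\Phi}}$, yet $\mathfrak{B}\not\models\Phi_{\sig{Univ}}$, so no $\sigma$-expansion of $\mathfrak{B}$ satisfies your $\Psi$. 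Thus $\homcl{\getmodels{\Phi}}\not\subseteq\getmodels{\Psi}|_\tau$ with your $\Psi$, and the intermediate characterization over the normal-form signature required in order to invoke \Cref{lem:normalformOK} fails. (This is also why you cannot compensate by weakening the conjunct to your guarded rewriting only on $\sig{Univ}$-related pairs: the homclosure of the normal form genuinely contains structures with partial $\sig{Univ}$.)

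The ``delicate point'' you flagged is in fact a non-issue. The triguarded fragment of \cite{RS18} permits unguarded quantification over scopes with at most two free variables; this is precisely the reading under which $\Phi_{\sig{Univ}}$ itself is a \TGF{} sentence (as it must be, since it appears in the \TGF{} normal form of \Cref{lem:TGFNF}). The conjuncts of $\Psi_{\text{\TGF{}},(\mathcal{E}_+,\mathcal{E}_!)}$ quantify $\forall vv'$ over a scope whose only free variables are $v$ and $v'$, so they are already legal \TGF{} formulae and need no guard. The paper's $\Psi$ --- consisting only of $\Psi_{\mathrm{hom}}$ and the disjunction over $\Psi_{\mathrm{gen},\cdot}\wedge\Psi_{\mathrm{guard},\cdot}\wedge\Psi_{\text{\TGF{}},\cdot}$, with no $\Phi_{\sig{Univ}}$ conjunct and no $\sig{Univ}$-guarding --- is already triguarded, and with your extra conjunct removed the rest of your argument coincides with the paper's.
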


\begin{proof}
	Let $\Phi$ be a \TGF{} $\tau$-sentence which we assume to be in normal form (see \Cref{lem:TGFNF}), i.e. with $\sig{Univ}\in\tau$
	\begin{equation}	
		\Phi = \Phi_{\sig{Univ}}\land \Phi_{\forall} \land \Phi_{\forall\exists} 
	\end{equation}	
	where 
	\begin{eqnarray}
		\Phi_{\sig{Univ}} & = & \forall xy. \sig{Univ}(x,y),	\label{eq: prop52 Univ}\\[2ex]
		\Phi_{\forall} & = & \bigwedge_{i}\forall\bold{x}. (\alpha_{i}(\bold{x}) \Rightarrow \vartheta_{i}[\bold{x}]), \text{ and} \label{eq: prop52 A}\\	
		\Phi_{\forall\exists} & = & \bigwedge_{j}\forall\bold{z}. (\beta_{j}(\bold{z}) \Rightarrow \exists\bold{y}.\gamma_{j}(\bold{y}\bold{z})), \label{eq: prop52 AE}
	\end{eqnarray}
	with $\alpha_{i}$, $\beta_{j}$ guard atoms, $\gamma_{j}$ just atoms, and $\vartheta_{i}$ being a disjunction of literals. 
	We define the \emph{width} of $\Phi$ (denoted by $\mathrm{width}(\Phi)$) to be the maximal arity of a relation symbol appearing in $\Phi$. We denote by $C$ the set of all constant symbols from $\tau$.

	Let the set $\mathcal{E}$ of eligible types consist of all rigid guarded types of order $\le\mathrm{width}(\Phi)$.
	
	Let $\sigma$ contain for each $\tp \in \mathcal{E}$ a fresh relational symbols $\sig{Tp}_{\tp}$. Then we define the $\tau \uplus\sigma$-sentence
	\begin{equation}	
		\Psi = \Psi_{\mathrm{hom}} \land\bigvee_{(\mathcal{E}_{+}, \mathcal{E}_{!}) \in\getsummary{\Phi}} \Psi_{(\mathcal{E}_{+}, \mathcal{E}_{!})} ,
	\end{equation}	
	where $\Psi_{(\mathcal{E}_{+}, \mathcal{E}_{!})}$ is the sentence:
	\begin{equation}
		\Psi_{(\mathcal{E}_{+}, \mathcal{E}_{!})} = \Psi_{\mathrm{gen}, (\mathcal{E}_{+}, \mathcal{E}_{!})}\land \Psi_{\mathrm{guard}, (\mathcal{E}_{+}, \mathcal{E}_{!})}\land \Psi_{\text{\TGF{}}, (\mathcal{E}_{+}, \mathcal{E}_{!})}.
	\end{equation}
		We now show $\homcl{\getmodels{\Phi}} = \getmodels{\Psi}|_\tau$ by proving the two inclusions separately:
	
	``$\subseteq$'': Let $\mathfrak{A} \in \getmodels{\Phi}$ and let $h\colon\mathfrak{A} \to \mathfrak{B}$. Take $\mathcal{E}$ to be the set of all eligible types for $\Phi$ and $\sigma$ the set of fresh relation symbols $\sig{Tp}_{\tp}$ associated to each $\tp \in \mathcal{E}$ with the appropriate arity. Then the $\mathcal{E}$-adornment of $\mathfrak{A}$, $\mathfrak{A}\cdot\mathcal{E}$, is a $\tau\uplus\sigma$-structure which we will denote by $\mathfrak{A}^{\ast}$. Furthermore, let $\mathfrak{B}\cdot h(\mathfrak{A}^{\ast} |_{\sigma})$ be the $\tau \uplus\sigma$-structure described before, which we denote by $\mathfrak{B}^{\ast}$. Obviously $\mathfrak{B}^{\ast}|_{\tau} = \mathfrak{B}$ holds.

	By \Cref{lem:PsiGen} $\mathfrak{B}^{\ast}$ satisfies $\Psi_{\mathrm{gen}, (\mathcal{E}_{+}, \mathcal{E}_{!})}$. Additionally, after applying \Cref{lem:PsiGuardTGF} $\mathfrak{B}^{\ast}$ satisfies $\Psi_{\mathrm{guard}, (\mathcal{E}_{+}, \mathcal{E}_{!})}$, $\Psi_{\mathrm{hom}}$, and $\Psi_{\text{\TGF{}}, (\mathcal{E}_{+}, \mathcal{E}_{!})}$. Consequently $\mathfrak{B}^{\ast}\vDash \Psi$.
	
	``$\supseteq$'': Let $\mathfrak{B}$ be a model of $\Psi$. Application of \Cref{lem:ConstrA} yields a $\tau\uplus\sigma$ structure $\mathfrak{A}$ satisfying $\Phi$ and a homomorphism $h\colon\mathfrak{A}\to\mathfrak{B}$. Restricting both $\mathfrak{A}$ and $\mathfrak{B}$ to $\tau$ yields the claim.
\end{proof}

We can ``piggyback'' on this result to obtain the comparable statement for $\EAAE$.

\begin{proposition}[restate=EAAEinESO, name=]\label{prop:charEAAE}
For every \EAAE{} $\tau$-sentence $\Phi$, 
there exists
a \TGF{} 
sen\-tence $\Psi$ 
such that $\homcl{\getmodels{\Phi}} = \getmodels{\Psi}|_\tau$. 
\end{proposition}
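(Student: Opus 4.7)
The plan is to reduce $\EAAE$-characterizability to the already-established $\TGF$-characterizability (Proposition~\ref{prop:charTGF}) via a projective equivalence, and then transfer the result back to $\tau$ using Lemma~\ref{lem:normalformOK}.

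Put $\Phi$ in prenex form $\exists x_1 \ldots x_m\,\forall y_1 y_2\,\exists z_1 \ldots z_n.\,\varphi(\vec x, y_1, y_2, \vec z)$ with $\varphi$ quantifier-free (and equality-free). I would Skolemize the outer existential block by introducing fresh constants $c_1,\ldots,c_m$ standing in for $x_1,\ldots,x_m$, and Skolemize the inner existential block \emph{relationally} by introducing, for each $i\le n$, a fresh ternary predicate $R_i$ intended to contain the graph of a Skolem function $A^2 \to A$ that picks a value of $z_i$ as a function of $y_1, y_2$. Additionally I would introduce a fresh binary predicate $\sig{Univ}$ serving as a ``universal'' guard. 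Over $\tau' := \tau \uplus \{c_1,\ldots,c_m,\sig{Univ},R_1,\ldots,R_n\}$, the target sentence $\Phi^*$ is then the conjunction of $\forall xy.\sig{Univ}(x,y)$; the sentences $\forall y_1 y_2.(\sig{Univ}(y_1,y_2) \impl \exists z_i.\,R_i(y_1,y_2,z_i))$ for $i=1,\ldots,n$; and the single sentence $\forall y_1 y_2 z_1 \ldots z_n.\bigl(\bigwedge_{i=1}^n R_i(y_1,y_2,z_i) \impl \varphi(\vec c, y_1, y_2, \vec z)\bigr)$.

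I would then verify two things. First, that $\Phi^* \in \TGF$: the first conjunct is the $\TGF$ normal-form axiom (cf.~Corollary~\ref{lem:TGFNF}); the middle conjuncts are ordinary $\forall\exists$-implications with standard binary guards; and the last conjunct uses the \emph{conjunctive} guard $\bigwedge_{i} R_i(y_1,y_2,z_i)$, which is admissible by the footnote extending guarded quantification, as each guard atom uses only variables from the quantifier prefix $y_1 y_2 z_1 \ldots z_n$. Second, the projective equivalence $\getmodels{\Phi^*}|_\tau = \getmodels{\Phi}$: for $\supseteq$, given $\mathfrak{A}\models\Phi$, fix outer witnesses to interpret the $c_i$, interpret $\sig{Univ}$ as $A^2$, and interpret each $R_i$ as the graph of any fixed choice of inner-existential Skolem witnesses; for $\subseteq$, given $\mathfrak{A}^*\models\Phi^*$, the constants $c_i^{\mathfrak{A}^*}$ witness $\exists\vec x$, and for any $b_1,b_2 \in A^*$ the middle conjuncts yield witnesses $z_i^*$ with $R_i(b_1,b_2,z_i^*)$, after which the last conjunct delivers $\varphi(\vec c^{\mathfrak{A}^*},b_1,b_2,\vec{z^*})$.

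Finally, I would apply Proposition~\ref{prop:charTGF} to the $\TGF$ $\tau'$-sentence $\Phi^*$ to obtain a $\TGF$ $\tau''$-sentence $\Psi$ (for some $\tau'' \supseteq \tau'$) with $\homcl{\getmodels{\Phi^*}} = \getmodels{\Psi}|_{\tau'}$, and invoke Lemma~\ref{lem:normalformOK} with $\Phi^*$ in the role of the projective characterization of $\Phi$ (and $\tau \subseteq \tau' \subseteq \tau''$), concluding $\homcl{\getmodels{\Phi}} = \getmodels{\Psi}|_\tau$. The only non-routine aspect is the Skolemization itself: it crucially exploits that the newly introduced graph predicates $R_i$ can together serve as a legitimate conjunctive guard for the remaining universal quantification over $y_1,y_2,\vec z$, which is precisely what makes $\TGF$ large enough to absorb the G\"odel prefix after relational Skolemization.
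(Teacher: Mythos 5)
Your high-level strategy --- Skolemize the outer $\exists$-block into constants, produce a $\TGF$ sentence over an extended signature that projectively characterizes $\Phi$, and then apply Proposition~\ref{prop:charTGF} together with Lemma~\ref{lem:normalformOK} --- is exactly the paper's route. The gap is that your $\Phi^*$ is not actually a $\TGF$ sentence, so the middle step fails. Your last conjunct $\forall y_1 y_2 z_1 \ldots z_n.\bigl(\bigwedge_{i=1}^n R_i(y_1,y_2,z_i) \impl \varphi(\vec c, y_1, y_2, \vec z)\bigr)$ quantifies over a formula whose free variables are $y_1,y_2,z_1,\ldots,z_n$, so it must be guarded, and you appeal to the footnote on conjunctive guards to treat $\bigwedge_{i} R_i(y_1,y_2,z_i)$ as a guard. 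But that footnote presupposes a \emph{single} primary guard atom $\sigP(\bold{y})$ with $\bold{x}\subseteq\bold{y}$ and the matrix $\varphi[\bold{y}]$ using only variables from $\bold{y}$; it only licenses conjoining \emph{further} atoms over $\bold{x}$ on top of that. No ternary atom $R_i(y_1,y_2,z_i)$ mentions $z_j$ for $j\neq i$, so there is no primary atom covering all of $y_1,y_2,z_1,\ldots,z_n$, and the bare conjunction is not a guard in the sense used for $\GFO$ or $\TGF$. Hence $\Phi^*$ lies outside $\TGF$ (for $n\geq 2$) and Proposition~\ref{prop:charTGF} cannot be invoked on it.

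The repair is the one the paper makes, and it is also simpler than your relational Skolemization: after replacing the outer $\exists$'s by constants, keep the inner block $\exists z_1\ldots z_n$ intact and insert a \emph{single} fresh $(n{+}2)$-ary guard atom after it, obtaining $\forall y_1 y_2 \exists z_1 \ldots z_n.\bigl(G(y_1,y_2,z_1,\ldots,z_n) \wedge \varphi(\vec c,y_1,y_2,\vec z)\bigr)$. This is a $\TGF$ sentence (the outer universal block is unguarded over a formula with at most two free variables, which $\TGF$ permits, and the inner existential block is guarded by $G$), and it projectively characterizes $\Phi$: in one direction interpret $G$ as the full $(n{+}2)$-ary relation and pick constants from the outer witnesses, in the other direction drop $G$. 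With this construction you need neither $\sig{Univ}$ nor the per-$i$ predicates $R_i$. If you want to keep the relational-Skolemization viewpoint, you must at minimum replace your $n$ ternary predicates by a single $(n{+}2)$-ary $R$, so that $R(y_1,y_2,z_1,\ldots,z_n)$ can serve as its own guard in both the existential and the universal conjunct.
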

\begin{proof}
	$\Phi$ can be projectively characterized by a \TGF{} sentence $\Phi'$ obtained as follows: skolemize the outer existential quantifiers into extra constants and add an atom using a fresh predicate of adequate arity as a guard following every inner existential quantifier. Clearly, $\Phi'$ is in \TGF{} and it projectively characterizes $\Phi$, so the former can be seen as the ``$\TGF$ normal form'' of the latter. 
	Let now $\Psi$ be obtained from $\Phi'$ following \Cref{prop:charTGF}. We then obtain $\homcl{\getmodels{\Phi}} = \getmodels{\Psi}|_{\tau}$ thanks to \Cref{lem:normalformOK}.
\end{proof}

\phithreesat*

\begin{lemma}\label{lem:NPhardnesFO2}
	\begin{enumerate}[itemindent=0ex, leftmargin=3ex, itemsep=-0.5ex]
		\item $\Phi_\mathrm{3SAT}$ is contained in constant-free, equality-free $\FOt$ (and therefore in $\FOte$ and \TGF).
\item There exists a sentence $\Psi$  in constant-free, equality-free $\Al\Al\Ex\Ex\FO$ (and therefore in 
$\EAAE$) such that $ \homcl{\getffinmodels{\Psi}}|_\tau\subseteq \homcl{\getffinmodels{\Phi_\mathrm{3SAT}}}$.
		\item The size of $\mathfrak{A}_\mathcal{S}$ is polynomial in the size of $\mathcal{S}$.
		\item $\bigwedge_{\{\ell,\ell',\ell''\}\in \mathcal{S}} \ell \vee \ell' \vee \ell''$ is satisfiable exactly if $\mathfrak{A}_\mathcal{S} \in \homcl{\getffinmodels{\Phi_\mathrm{3SAT}}}$.
		\item Checking membership in $\homcl{\getffinmodels{\Phi_\mathrm{3SAT}}}$ is \textsc{NP}-hard in the size of the structure. 
		\item Checking membership in $\homcl{\getffinmodels{\Psi}}$ is \textsc{NP}-hard in the size of the structure. 

	\end{enumerate}
\end{lemma}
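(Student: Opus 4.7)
The plan is to prove the six items sequentially, with item~4 carrying the combinatorial core of the reduction and items~5--6 then being immediate consequences.

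For items~1 and~2, one inspects $\Phi_\mathrm{3SAT}$ directly: every conjunct in \Cref{def: 3SAT} uses only the two variables $x,y$, no constants, and no equality, so $\Phi_\mathrm{3SAT}\in\FOt\subseteq\FOte$; for $\TGF$ the only delicate conjunct is \eqref{Def14: 5}, where a minor syntactic reshape introducing a proper guard (e.g.\ via the conjunctive-guard allowance in the footnote on guarded quantification, or a TGF-compatible auxiliary predicate capturing pairs of $\sig{Sel}$-elements) suffices. For item~2, I would define $\Psi$ as a single prenex $\forall x_1 x_2 \exists y_1 y_2.\,\phi$ over a possibly enlarged signature $\tau'$, with matrix $\phi$ enforcing, at least as strongly as $\Phi_\mathrm{3SAT}$ after projection, every conjunct of $\Phi_\mathrm{3SAT}$: the two existentials $y_1,y_2$ supply witnesses for \eqref{Def14: 1}, \eqref{Def14: 3}, and \eqref{Def14: 4}, while the universal positions $x_1,x_2$ instantiate the universal conjuncts \eqref{Def14: 2}, \eqref{Def14: 3}, and \eqref{Def14: 5}. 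Since every $\Psi$-model projects to a $\Phi_\mathrm{3SAT}$-model, one reads off $\homcl{\getffinmodels{\Psi}}|_\tau\subseteq\homcl{\getffinmodels{\Phi_\mathrm{3SAT}}}$.

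Item~3 is a routine size count: $|A_\mathcal{S}|=n+1+2m$ and all relations are polynomially sized in $|\mathcal{S}|$. Item~4 is the substantive reduction-correctness step. For the direction ``$\Rightarrow$'', given a satisfying assignment $\alpha\colon\{p_1,\ldots,p_m\}\to\{0,1\}$, I would build a finite $\mathfrak{M}\models\Phi_\mathrm{3SAT}$ mirroring the clause chain $a_1\to\cdots\to a_{n+1}$ of $\mathfrak{A}_\mathcal{S}$, choosing for every clause $C_i$ some literal $\ell\in C_i$ true under $\alpha$ as the $\sig{Lit}$-witness at $a_i$ and pointing to the canonical element $b_j$ (if $\ell=p_j$) or $b'_j$ (if $\ell=\neg p_j$); $\sig{Sel}^\mathfrak{M}$ collects these witnesses and $\sig{Cmp}^\mathfrak{M}$ is the full binary relation on $\sig{Sel}^\mathfrak{M}$. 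The identity-like map $\mathfrak{M}\to\mathfrak{A}_\mathcal{S}$ is a homomorphism because $\alpha$ being single-valued ensures $b_j$ and $b'_j$ are never both selected, so the $\sig{Cmp}^{\mathfrak{A}_\mathcal{S}}$-restriction (which crucially omits $(b_j,b'_j)$) is never violated. For ``$\Leftarrow$'', given any $h\colon\mathfrak{M}\to\mathfrak{A}_\mathcal{S}$ with $\mathfrak{M}\models\Phi_\mathrm{3SAT}$, conjunct \eqref{Def14: 4} forces every $\sig{CLst}$-element of $\mathfrak{M}$ to have a $\sig{Sel}$-witness mapping into $\{b_j,b'_j\}$ for some $j$ dictated (via the $\sig{Lit}$-edges of $\mathfrak{A}_\mathcal{S}$) by the clause, and conjunct \eqref{Def14: 5} combined with $(b_j,b'_j)\notin\sig{Cmp}^{\mathfrak{A}_\mathcal{S}}$ precludes simultaneously mapping $\sig{Sel}$-witnesses across $\mathfrak{M}$ to both $b_j$ and $b'_j$ for a single $j$; setting $\alpha(p_j)=1$ iff some $\sig{Sel}$-witness lands on $b_j$ yields a consistent assignment that satisfies every $C_i$ via its witness literal.

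Items~5 and~6 then follow. For item~5, the polynomial-time map $\mathcal{S}\mapsto\mathfrak{A}_\mathcal{S}$ (item~3) is correct by item~4, giving \textsc{NP}-hardness. For item~6, I would use the reduction $\mathcal{S}\mapsto\mathfrak{A}'_\mathcal{S}$ where $\mathfrak{A}'_\mathcal{S}$ is $\mathfrak{A}_\mathcal{S}$ suitably expanded to $\Psi$'s signature: the direction $\mathfrak{A}'_\mathcal{S}\in\homcl{\getffinmodels{\Psi}}\Rightarrow\mathcal{S}$ satisfiable restricts to $\tau$ and invokes items~2 and~4, while the converse uses the construction of item~4, expanded to $\Psi$'s signature so that $\mathfrak{M}$ becomes a $\Psi$-model still homomorphically mapping into $\mathfrak{A}'_\mathcal{S}$. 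The main obstacle is the ``$\Leftarrow$'' direction of item~4: one must argue that the single-edge gap $(b_j,b'_j)\notin\sig{Cmp}^{\mathfrak{A}_\mathcal{S}}$ is sufficient, in an otherwise completely unconstrained $\mathfrak{M}$, to force global consistency of the selection of literals extracted from $h$; a secondary subtlety is designing $\Psi$ and the expansion $\mathfrak{A}'_\mathcal{S}$ in item~6 so that the ``$\Rightarrow$'' construction of item~4 lifts cleanly to a $\Psi$-witness.
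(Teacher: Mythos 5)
Your proposal is correct and follows essentially the same route as the paper: the paper's $\Psi$ is likewise a $\forall xy\exists zv$-sentence over an enlarged signature, using an always-true auxiliary predicate $\sig{Eq}$ to trigger the existential conjuncts of $\Phi_\mathrm{3SAT}$, item~4 is argued by the same extraction of an assignment from the homomorphism via the $\sig{Cmp}$-gap $(b_j,b'_j)$, and item~6 reduces to items~2 and~4 through the $\sig{Eq}$-diagonal expansion of $\mathfrak{A}_\mathcal{S}$. The only spurious point is your flagged delicacy for item~1: under the intended (standard) $\TGF$ definition, unguarded quantification is permitted over subformulae with at most two free variables, so the conjunct \eqref{Def14: 5} needs no syntactic reshape.
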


\begin{proof}
	Item 1 follows immediately from the definition of $\Phi_\mathrm{3SAT}$.
	
	For Item 2: We let $\Psi = \forall x y \exists z v.\psi$ where $\psi$ is the conjunction over  
\begin{eqnarray}
	& & \sig{First}(z) \\
	& & \sig{Eq}(x,x) \\
		\sig{First}(x) & {\!\!\!\!\impl\!\!\!\!} &  \sig{CLst}(x) \\
	\sig{CLst}(x) \wedge \sig{Eq}(x,y) & {\!\!\!\!\impl\!\!\!\!} &  \sig{Lit}(x,v) \wedge \sig{Sel}(v)\\
	\sig{CLst}(x) \wedge \sig{Lit}(x,y) & {\!\!\!\!\impl\!\!\!\!} &  \sig{Next}(x,v) \ \ \ \ \ \nonumber\\[-0.6ex]	
	&  &  \ \wedge\, \big(\sig{CLst}(v) \vee \sig{Nil}(v)\big)\ \ \ \ \ \\	
	\sig{Sel}(x) \wedge \sig{Sel}(y) & {\!\!\!\!\impl\!\!\!\!} & \sig{Cmp}(x,y).
\end{eqnarray}

The containment $ \getffinmodels{\Psi}|_\tau \subseteq \getffinmodels{\Phi_\mathrm{3SAT}}$ follows immediately and therefore also $ \homcl{\getffinmodels{\Psi}}|_\tau\subseteq \homcl{\getffinmodels{\Phi_\mathrm{3SAT}}}$ holds.

	Item 3 follows directly from the definition of  $\mathfrak{A}_\mathcal{S}$.
	
	For Item 4:  Let $V=\{p_1,\ldots,p_m\}$ be the variable set of the propositional formula $\varphi =\bigwedge_{\{\ell,\ell',\ell''\}\in \mathcal{S}} \ell \vee \ell' \vee \ell''$. 
	
	Assume that there exists an assignment $\alpha\colon V\rightarrow \{\top,\bot\} $ that witnesses the satisfiability of $\varphi$. 
	Let $\tau$ be the signature of   $\mathfrak{A}_\mathcal{S}$ and let $\mathfrak{B}'$ be the $\tau {\setminus} \{\sig{Sel}\}$-reduct of   $\mathfrak{A}_\mathcal{S}$.
	The $\tau$-structure $\mathfrak{B}$ is the expansion of   $\mathfrak{B}'$ where the relation $\sig{Sel}^\mathfrak{B}$ is defined as follows:

\begin{enumerate}
	\item $\sig{Sel}^\mathfrak{B}\subset \{b_i,b_i'\mid 1\leq i\leq m\}$,
	\item 	$b_i \in \sig{Sel}^\mathfrak{B}$  if and only $\alpha(p_i)=\top$,
		\item 	$b_i' \in \sig{Sel}^\mathfrak{B}$  if and only $\alpha(p_i)=\bot$.
\end{enumerate}
One can easily check that $\mathfrak{B}$ satisfies $\Phi_\mathrm{3SAT}$. Furthermore, the identity map from  $\mathfrak{B}$  to $\mathfrak{A}_\mathcal{S}$  is a homomorphism which proves that $\mathfrak{A}_\mathcal{S} \in \homcl{\getffinmodels{\Phi_\mathrm{3SAT}}}$ holds.

For the opposite direction suppose that there exists a structure $\mathfrak{B}$ from $\getffinmodels{\Phi_\mathrm{3SAT}}$ that has a homomorphism $h$ to  $\mathfrak{A}_\mathcal{S} $. Note that Equation \ref{Def14: 1} of \Cref{def: 3SAT} ensures that the structure $\mathfrak{B}$ contains an element $x_f$ such that $x_f\in \sig{First}^\mathfrak{B}$ holds. Since $h$ is a homomorphism $h(x_f)=a_1$ holds. For every element $x$ in $\sig{CLst}^\mathfrak{B}$ exists by Equation \ref{Def14: 3} an element $y$ such that $(x,y)\in \sig{Next}^\mathfrak{B}$ holds. In the structure $\mathfrak{A}_\mathcal{S}$ the same holds but with unique elements $y$. This implies that there exists a set $W\subseteq \sig{CLst}^\mathfrak{B}$ with $x_f\in W$ such that $h$ restricted to $W$ is a bijection from $W$ to $\{a_i \mid 1\leq i\leq n+1\}$. We denote this restricted map by $h'$.
Let $f\colon W\rightarrow \sig{Sel}^\mathfrak{B}$ be a map such that $(x,f(x))\in \sig{Lit}^\mathfrak{B}$  holds for all $x\in W$. Such a map exists by Equation \ref{Def14: 4}.
Observe that $h(f(W)) \subset \{b_i,b_i' \mid 1\leq i\leq m\}$ holds. Moreover since we have $f(W)\subset \sig{Sel}^\mathfrak{B}$ by defintion, all pairs of distinct elements in $f(W)$ are in the relation $\sig{Cmp}^\mathfrak{B}$ by Equation \ref{Def14: 5}. This implies that for all $i\in \{1,\ldots,m\}$ at most one of $b_i$ and $b_i'$ is in  $h(f(W))$. We define a variable assignment $\alpha\colon V\rightarrow \{\top,\bot\} $ as follows:

\begin{enumerate}
	\item $\alpha(p_i) =\top$ if $b_i\in h(f(W))$,
		\item $\alpha(p_i) =\bot$ if $b_i'\in h(f(W))$,
		\item $\alpha(p_i) =\top$ if $b_i$ and $b_i'$ are not in $ h(f(W))$.
	
\end{enumerate}
This is a well defined map by what we saw before. Furthermore, to see that the clause $C_j$ is satisfied, consider the element $y =h'^{-1}(a_j) \in Q$. By the definition of $f$ we have that $(y,f(y))\in \sig{Lit}^\mathfrak{B} $  and therefore $(h(y),h(f(y)))\in \sig{Lit}^{\mathfrak{A}_\mathcal{S}} $  holds. Assume without loss of generality that $h(f(y)) =b_l$ for some $l\in \{1,\ldots,m\}$. Then we have $(a_j,b_l)\in \sig{Lit}^{\mathfrak{A}_\mathcal{S}} $ which means by the definition of $\mathfrak{A}_\mathcal{S}$ that $p_l$ is a literal of $C_j$. By the definition of $\alpha$ we have $\alpha(p_l)=\top$ which proves that $C_j$ is satisfied.

For Item 5: The 3SAT problem is a well known \textsc{NP}-hard problem \cite{Karp1972}. We have a reduction from the 3SAT problem to the membership problem for $\homcl{\getffinmodels{\Phi_\mathrm{3SAT}}}$ by mapping a 3SAT instance $\mathcal{S}$ to the structure $\mathfrak{A}_\mathcal{S} $. 
This is indeed a  reduction by Item 4 and it is polynomial time computable by Item 3.

For Item 6: We reduce the 3SAT problem to the membership problem of $\homcl{\getffinmodels{\Psi}}$. We map an instance $\mathcal{S}$ to the structure $\mathfrak{A}^*_\mathcal{S}$ that is the expansion of $\mathfrak{A}_\mathcal{S}$ by the relation $\sig{Eq}^{\mathfrak{A}^*_\mathcal{S}}=\{(x,x) \mid x\in A_\mathcal{S}\}$.

Assume that $\mathcal{S}$ is satisfiable. Let $\mathfrak{B}$ be the structure from the first direction of 5). It is easy to see that its expansion $\mathfrak{B}^*$  satisfies $\Psi$ and therefore 
$\mathfrak{A}^*_\mathcal{S}$ is in  $\homcl{\getffinmodels{\Psi}}$.

For the other direction assume that $\mathfrak{A}^*_\mathcal{S}\in \homcl{\getffinmodels{\Psi}}$ holds. By 2) we get that  then also $\mathfrak{A}_\mathcal{S} \in \homcl{\getffinmodels{\Phi_\mathrm{3SAT}}}$ holds which implies by 4) that $\mathcal{S}$ is satisfiable.\end{proof}

\NPhardFOprop*

\begin{proof} The proposition follows from Item~1, Item~2, Item~5 and Item~6 of \Cref{lem:NPhardnesFO2}.
\end{proof}

\GNFOinLFP*

\begin{proof}
Given a $\GNFO$ $\tau$-sentence $\Phi$, bring it in $\GFO$ ``homclosure normal form'' according to \Cref{GNFOredGFO},  
		 obtaining a $\GFO$ $\tau'$-sentence $\Phi'$. From $\Phi'$ compute $\Psi$ as in \Cref{prop:charGFO}, adjusting the definition of $\Psi_\mathrm{hom}$ as follows:
\begin{equation}
\bigwedge_{\tp \in \mathcal{E}} \!\forall \bold{v}_\tp.\big( \Tp_\tp(\bold{v}_\tp) \impl \smallbigwedge \tp|_\tau^+ \big),
\end{equation}
where
\begin{equation}
\tp|_\tau^+ = \{\alpha \tight{\in} \tp \mid \alpha = \sigP(\bold{t}) \text{ with }\sigP\in\tau\}
\end{equation}
so, compared to the original $\Psi_\mathrm{hom}$, the new one disregards auxiliary symbols from $\tau'\setminus \tau$. Clearly, we still obtain $\homcl{\getmodels{\Phi}} = \getmodels{\Psi}|_\tau$ and hence also $\homcl{\getmodels{\Phi}} = \getmodels{\exists \sigma.\Psi}$ where  
\newcommand{\PsihomEE}{\Psi_{\!(\mathcal{E}_\text{+},\mathcal{E}_!),\mathrm{hom}}}
\begin{align}
	\exists \sigma.\Psi & = \exists \sigma . \Big(\Psi_\mathrm{hom} \wedge \hspace{-1ex} \bigvee_{\scriptscriptstyle(\mathcal{E}_\text{+},\mathcal{E}_!)\in \getsummary{\Phi}} \hspace{-1ex} \Psi_{\!(\mathcal{E}_\text{+},\mathcal{E}_!)}\Big) \\
	& \equiv \hspace{-1ex} \bigvee_{\scriptscriptstyle(\mathcal{E}_\text{+},\mathcal{E}_!)\in \getsummary{\Phi}} \hspace{-1ex} \exists \sigma . \big(\Psi_{\!(\mathcal{E}_\text{+},\mathcal{E}_!)} \wedge \Psi_\mathrm{hom} \big)\\
	& =: \hspace{-1ex} \bigvee_{\scriptscriptstyle(\mathcal{E}_\text{+},\mathcal{E}_!)\in \getsummary{\Phi}} \hspace{-1ex} \exists \sigma . \big(\PsihomEE\big),
\end{align}
with $\PsihomEE$ being the conjunction over 
\noindent
\newcommand{\less}{\hspace{-1.5ex}}
\newcommand{\more}{\hspace{-1.5ex}}
\begin{eqnarray}
	\exists \bold{v}_\tp. \big(\sig{Tp}_\tp (\bold{v}_\tp)\big) \less\less\ & & \less \hspace{27ex}\more\mbox{for } \tp \in \mathcal{E}_{+}\quad\quad\ \\
	\forall \bold{v}_\tp. \big(\sig{Tp}_\tp (\bold{v}_\tp) \less& \Rightarrow &\less \bot \big)\hspace{21ex}\more\mbox{for } \tp \in \mathcal{E}\setminus\mathcal{E}_{+}  \\
	\forall \bold{v}_\tp. \big(\sig{Tp}_\tp (\bold{v}_\tp) \less& \Rightarrow &\less \sig{Tp}_{\tp'}(\bold{v}_{\tp'})\big) \hspace{7ex}\more\mbox{for } \tp,\tp' \in \mathcal{E}_{+}, \tp'=\tp|_{\bold{v}_{\tp'}}  \\
	\forall \bold{v}_\tp. \big(\sig{Tp}_\tp (\bold{v}_\tp) \less& \Rightarrow &\less \sig{Tp}_{\kappa\tp}( \eta^{\kappa\bold{v}_\tp}_\bold{v} (\bold{v}_\tp) )\big) \hspace{1.3ex}\mbox{for } \tp \in \mathcal{E}, \kappa : \bold{v}\injsur \bold{v} \\
	\forall \bold{v}_\tp. \big(\sig{Tp}_\tp (\bold{v}_\tp) \less& \Rightarrow &\less \hspace{-7ex} \bigvee_{\quad\quad(\tp,\nu) \Subset_j^\bold{z} (\tp',\nu')} \hspace{-6ex} \exists \bold{v}_{\tp'}{\!\setminus}\nu(\bold{z}) .\sig{Tp}_{\tp'}(\bold{v}_{\tp'})\big) \hspace{3ex}\more\mbox{for } (\tp,\nu)\! \in \!B_{j}\\
    \forall \bold{v}_\tp.\big( \Tp_\tp(\bold{v}_\tp) \less& \Rightarrow &\less \smallbigwedge \tp|_\tau^+ \big) \hspace{22ex}\more\mbox{for } \tp \in \mathcal{E}
\end{eqnarray} 
where $j$ ranges over the number of conjuncts in $\Phi_{\forall\exists}$ (cf. the proof of \Cref{prop:charGFO}).

In a next step, we equivalently rewrite  all constituents of $\PsihomEE$ as follows (all side conditions remain untouched):
\begin{eqnarray}
	\exists \bold{v}_\tp. \neg\big(\neg\sig{Tp}_\tp (\bold{v}_\tp)\big) \less\less\ & &  \\
	\forall \bold{v}_\tp. \big(\neg\sig{Tp}_\tp (\bold{v}_\tp) \less& \Leftarrow &\less \top\big) \\
	\forall \bold{v}_\tp. \big(\neg\sig{Tp}_\tp (\bold{v}_\tp) \less& \Leftarrow &\less \neg \sig{Tp}_{\tp'}(\bold{v}_{\tp'})\big) \\
	\forall \bold{v}_\tp. \big(\neg\sig{Tp}_\tp (\bold{v}_\tp) \less& \Leftarrow &\less \neg\sig{Tp}_{\kappa\tp}( \eta^{\kappa\bold{v}_\tp}_\bold{v} (\bold{v}_\tp) )\big) \\
	\forall \bold{v}_\tp. \big(\neg\sig{Tp}_\tp (\bold{v}_\tp) \less& \Leftarrow &\less \hspace{-6ex} \bigwedge_{\quad\quad(\tp,\nu) \Subset_j^\bold{z} (\tp',\nu')} \hspace{-6ex} \forall \bold{v}_{\tp'}{\!\setminus}\nu(\bold{z}) .\neg\sig{Tp}_{\tp'}(\bold{v}_{\tp'})\big) \\
	\forall \bold{v}_\tp.\big(\neg\Tp_\tp(\bold{v}_\tp) \less& \Leftarrow &\less \neg\smallbigwedge \tp|_\tau^+ \big)
\end{eqnarray} 

We next let for $\overline{\sigma} = \{ \overline{\Tp}_\tp \mid {\Tp}_\tp \in \sigma\}$ be a set of ``complement predicates'', with the intention that $\overline{\Tp}_\tp$ is meant to denote the complement relation of ${\Tp}_\tp$. With this in mind, we rewrite the above sentences in the straightforward way, obtaining 
\newcommand{\compPsi}{\hspace{-0pt}\smash{\overline{\Psi}}\hspace{-7pt}\phantom{\Psi}_{\!(\mathcal{E}_\text{+},\mathcal{E}_!),\mathrm{hom}}}
$\compPsi$  
as the conjunction over the following $\tau\uplus\overline{\sigma}$-sentences:
\begin{eqnarray}
	\exists \bold{v}_\tp. \neg\big(\overline{\sig{Tp}}_\tp (\bold{v}_\tp)\big) \ \!\less\less\less& &  \less \hspace{27ex}\more\mbox{for } \tp \in \mathcal{E}_{+}\quad\quad\ \\
	\forall \bold{v}_\tp. \big(\overline{\sig{Tp}}_\tp (\bold{v}_\tp) \less& \Leftarrow &\less \top \big) 
    \hspace{21ex}\more\mbox{for } \tp \in \mathcal{E}\setminus \mathcal{E}_{+} \label{fp0}\\
	\forall \bold{v}_\tp. \big(\overline{\sig{Tp}}_\tp (\bold{v}_\tp) \less& \Leftarrow &\less \overline{\sig{Tp}}_{\tp'}(\bold{v}_{\tp'})\big) 
	    \hspace{7ex}\more\mbox{for } \tp,\tp' \in \mathcal{E}_{+}, \tp'=\tp|_{\bold{v}_{\tp'}}  \label{fp1}\\
\forall \bold{v}_\tp. \big(\overline{\sig{Tp}}_\tp (\bold{v}_\tp) \less& \Leftarrow &\less \overline{\sig{Tp}}_{\kappa\tp}( \eta^{\kappa\bold{v}_\tp}_\bold{v} (\bold{v}_\tp) )\big) \hspace{1.3ex}\mbox{for } \tp \in \mathcal{E}, \kappa : \bold{v}\injsur \bold{v} \\
	\forall \bold{v}_\tp. \big(\overline{\sig{Tp}}_\tp (\bold{v}_\tp) \less& \Leftarrow &\less \hspace{-7.5ex} \bigwedge_{\quad\quad\quad(\tp,\nu) \Subset_j^\bold{z} (\tp',\nu')} \hspace{-8ex} \forall \bold{v}_{\tp'}{\!\setminus}\nu(\bold{z}) .\overline{\sig{Tp}}_{\tp'}(\bold{v}_{\tp'})\big)
	    \hspace{2.5ex}\more\mbox{for } (\tp,\nu) \in B_{j} \label{fp2}\\
	\forall \bold{v}_\tp.\big(\overline{\sig{Tp}}_\tp(\bold{v}_\tp) \less& \Leftarrow &\less \neg\smallbigwedge \tp|_\tau^+ \big)  \hspace{20ex}\more\mbox{for } \tp \in \mathcal{E} \label{fp3}
\end{eqnarray} 

Note that, for the mentioned reasons, 
\begin{equation}
\exists \overline{\sigma} . \big(\compPsi \big) \equiv \exists \sigma . \big(\PsihomEE \big).
\end{equation} 
Also note that the set of sentences of the forms (\ref{fp1}), (\ref{fp2}), and (\ref{fp3}) can be grouped such that, for every $\overline{\Tp}_\tp \in \overline{\sigma}$, there exists precisely one sentence of the form
\begin{equation} 
\forall \bold{v}_\tp. \big(\overline{\sig{Tp}}_\tp (\bold{v}_\tp) \Leftarrow \xi_{\overline{\Tp}_\tp}[\bold{v}_\tp] \big),
\end{equation} 
where $\xi_{\overline{\Tp}_\tp}[\bold{v}_\tp]$ is a $\FOe$ sentence with only positive occurrences of $\overline{\sigma}$-atoms (obtain $\xi_{\overline{\Tp}_\tp}[\bold{v}_\tp]$ as the disjunction of all the bodys of those implications whose head is $\overline{\sig{Tp}}_\tp (\bold{v}_\tp)$). Therefore, the obtained set of sentences gives rise to a monotonic operator over $\tau \uplus \overline{\sigma}$-structures simultaneously inflating all $\overline{\sig{Tp}}_\tp$ relations. Hence, $\exists \overline{\sigma} . \big(\compPsi \big)$ is equivalent to the simultaneous inductive $\FOe$ least fixed-point expression $\compPsi^\mathbf{sim\text{-}lfp}$ defined by
\begin{equation} 
\bigwedge_{\tp \in \mathcal{E}_\text{+}} \exists \bold{x}.\neg\Big[\textbf{lfp}_{\overline{\Tp}_\tp} \big\{\, \overline{\Tp}_{\tp'}(\bold{v}_\tp) \leftarrow \xi_{\overline{\Tp}_{\tp'}}[\bold{v}_\tp] \,\, \big|\,  \tp' \tight{\in} \mathcal{E}\,\big\}\Big](\bold{x}).
\end{equation} 
As simultaneous induction can be expressed using plain least-fixed-point logic \cite{GurevichS86}, there exists some plain $\FOe^\mathbf{lfp}$ sentence $\compPsi^\mathbf{lfp}$ equivalent to $\compPsi^\mathbf{sim\text{-}lfp}$. By defining
\begin{equation} 
\Psi^\mathbf{lfp}=\bigvee_{\scriptscriptstyle(\mathcal{E}_\text{+},\mathcal{E}_!)\in \getsummary{\Phi}} \hspace{-1ex} \compPsi^\mathbf{lfp}, 
\end{equation}  
we obtain $\exists \sigma. \Psi \equiv \Psi^\mathbf{lfp}$
and therefore $\getmodels{\Psi^\mathbf{lfp}} = \homcl{\getmodels{\Phi}}$ as desired.
Using the fact that evaluating $\FOe^\mathbf{lfp}$ sentences over structures can be done in polynomial time \cite{Immerman82,Vardi82}, we obtain polytime data complexity of homclosure membership.
\end{proof}

\subsection{Characterizing Finite-Model Homclosures}


\cornercaseone*

\begin{proof}
	$\homcl{\getfinmodels{\Phi_\infty}}$ contains precisely all -- finite or infinite -- $\{\sigP\}$-structures that have a directed $\sigP$-cycle. Toward a contradiction, let us assume that there were an \ESO{} formula $\Psi$ characterizing $\homcl{\getfinmodels{\Phi_\infty}}$. This means the first-order part of $\Psi$, denoted $\Psi'$, projectively characterizes $\homcl{\getfinmodels{\Phi_\infty}}$. Now, for a given number $k$, let $\Gamma^{\not\circlearrowright}_k$ be the $\FO$ sentence disallowing directed $\sigP$-cycles of length $k$ defined by
\begin{equation} 
\Gamma^{\not\circlearrowright}_k \ \ = \ \ \forall x_1\ldots x_k . \neg \sigP(x_k,x_1) \vee \bigvee_{i=1}^{k-1} \neg \sigP(x_i,x_{i+1}).
\end{equation}  
	Now define the infinite $\FO$ theory $\Xi = \{\Psi'\}\cup\{ \Gamma^{\not\circlearrowright}_k \mid k > 0 \}$.
	Obviously any finite subset of $\Xi$ is satisfiable, while $\Xi$ is not. However this contradicts the compactness theorem of $\FOe$ and therefore the presumed existence of $\Psi$ is refuted.
\end{proof}

\cornercasetwo*

\begin{proof}
	``$\subseteq$'': let $\mathfrak{B} \in \homcl{\getfinmodels{\Phi}}$. Then, by definition, there exists some finite $\mathfrak{A} \in  \getfinmodels{\Phi}$ and a homomorphism $h:\mathfrak{A} \arbhom \mathfrak{B}$. Let $\mathfrak{C}$ be the finite substructure of $\mathfrak{B}$ induced by $h(A)$. Then $h$ also serves as a homomorphism from $\mathfrak{A}$ to $\mathfrak{C}$ and therefore $\mathfrak{C} \in \homclfin{\getfinmodels{\Phi}}$ and, by assumption, $\mathfrak{C} \in \getfinmodels{\Psi}$. We now obtain $\mathfrak{D}$ from $\mathfrak{B}$ by adding $\sigU$ to its signature and letting $\sigU^\mathfrak{D} = h(A)$. Then, obviously, $\mathfrak{C} = [\mathfrak{D}]_\sigU$ and, thanks to  $\mathfrak{C} \in \getfinmodels{\Psi}$ we obtain $\mathfrak{D} \in \getmodels{\Psi^{\mathrm{rel}(\sigU)}}$ by the remarks from \Cref{sec:tools}.
	By construction, $\sigU^\mathfrak{D}$ is finite, therefore $\mathfrak{B} \in \getmodels{\exists^\fin \sig{U}. \Psi^{\mathrm{rel}(\sigU)}}$ as desired.
	
	``$\supseteq$'': let $\mathfrak{B} \in \getmodels{\exists^\fin \sigU. \Psi^{\mathrm{rel}(\sigU)}}$. Then there must be a structure $\mathfrak{D} \in \getmodels{\Psi^{\mathrm{rel}(\sigU)}}$ expanding $\mathfrak{B}$ by interpreting an additional signature element $\sigU$ such that $\sigU^\mathfrak{D}$ is finite. Let $\mathfrak{C} = [\mathfrak{D}]_\sigU$, which is finite by construction and an induced substructure of $\mathfrak{B}$. Then by the properties of relativization follows $\mathfrak{C} \in \getmodels{\Psi}$ and due to finiteness of $\mathfrak{C}$ even $\mathfrak{C} \in \getfinmodels{\Psi}$, therefore, by assumption $\mathfrak{C} \in \homclfin{\getfinmodels{\Phi}}$, meaning there exists some $\mathfrak{A} \in \getfinmodels{\Phi}$ and a  homomorphism $h:\mathfrak{A} \to \mathfrak{C}$. But then $id_{C} \circ h$ is a homomorphism from $\mathfrak{A}$ to $\mathfrak{B}$, witnessing $\mathfrak{B} \in \homcl{\getfinmodels{\Phi}}$ as desired.
\end{proof}

\subsection{Proofs for Normal Form Section}

\boundedex*

\begin{proof}
	Assume $\Phi \in \bex^n\bal^*\FOe$ is over the signature $\tau {\uplus} \sigma$ with predicates $\tau$ and constants $\sigma$. Let $\sigma^\mathrm{sk}$ be a set of $n$ Skolem constants.
	Let $\Phi^\mathrm{sk}$ be the $\AsFO$ $\tau {\uplus} \sigma {\uplus} \sigma^\mathrm{sk}$-sentence obtained from $\Phi$ via skolemization of all the existentially quantified variables into (additional) constants. Clearly then  $\getffinmodels{\Phi} = \getffinmodels{\Phi^\mathrm{sk}}|_\tau$. Since $\Phi^\mathrm{sk}$ is a $\AsFO$ sentence, any model $\mathfrak{A}$ gives rise to a small submodel by taking the substructure $\mathfrak{A}'$ induced by the set $A' \subseteq A$ containing all elements denoted by constants. The identity on $A'$ is a homomorphism from $\mathfrak{A}'$ to $\mathfrak{A}$. Let $\mathcal{A}$ be the set of small submodels thus obtained, for which therefore holds $\homcl{\mathcal{A}}=\homcl{\getmodels{\Phi^\mathrm{sk}}}$ and hence also $\homcl{\mathcal{A}|_\tau}=\homcl{\getmodels{\Phi}} = \getmodels{\Phi}$. 
	Now let us obtain $\Psi = \exists \sigma^\mathrm{sk}.\bigvee_{\mathfrak{A}\in \mathcal{A}} \mathrm{cq}'(\mathfrak{A})$ where  
\begin{equation} 
\mathrm{cq}'(\mathfrak{A}) \ = \ \hspace{-2ex} \bigwedge_{{\sigc,\sigc'\in \sigma \cup \sigma^\mathrm{sk}}\atop{\mathfrak{A}\models \sigc{=}\sigc'}}\hspace{-2.5ex}
\sigc \,{=}\, \sigc' \quad \wedge \quad  \hspace{-3ex} \bigwedge_{{\sigP \in \tau,\, k=\text{ar}(\sigP)}\atop{{\mathfrak{A}\models \sigP{} (\sigc_1,\ldots,\sigc_k) }}} \hspace{-3.5ex} 
\sigP{} (\sigc_1,\ldots,\sigc_k). 
\end{equation}  
	
	Then we get $\getmodels{\Psi} = \homcl{\mathcal{A}|_\tau} = \getmodels{\Phi}$.
	Moreover $\Psi \in \bex^n\FOe^+$.
\end{proof}

\classnormalform*

\begin{proof}
	1) and 2) follow immediately since ``being superstructure'' and ``having surjective homomorphisms'' are transitive relations on structures.
	
	For 3) note that a class is closed under (finite) homomorphism if and only if it is closed under (finite) superstructures and (finite) surjective homomorphisms. By 1) we get that $\supclffin{(\homsucl{\mathcal{C}})}$  is closed under  (finite) superstructures. 
	Therefore it remains to show the closure under (finite) surjective homomorphisms.
	
	Let $\mathfrak{A}\in \supclffin{(\homsucl{\mathcal{C}})}$ and suppose that there exists a (finite) surjective homomorphism $h\colon \mathfrak{A}\rightarrow  \mathfrak{B}$ where $\mathfrak{B}$ is a $\tau$-structure.
	By definition of $\supclffin{(\homsucl{\mathcal{C}})}$ there exists a substructure $\mathfrak{A}_0$ of $\mathfrak{A}$ such that $\mathfrak{A}_0 \in \homsucl{\mathcal{C}}$. 
	Let $\mathfrak{B}_0$ be the substructure of $\mathfrak{B}$ that is induced by $h(A_0)$. The structure $\mathfrak{A}_0$ has clearly a (finite) surjective homomorphism to $\mathfrak{B}_0$ and since $\homsucl{\mathcal{C}}$ is by 2) closed under (finite) surjective homomorphisms it follows that $\mathfrak{B}_0\in \homsucl{\mathcal{C}}$. 
	Since $\mathfrak{B}_0$ is a substructure of $\mathfrak{B}$ it follows by the definition of  $\supclffin{(\homsucl{\mathcal{C}})}$ that $\mathfrak{B}\in  \supclffin{(\homsucl{\mathcal{C}})}$ holds.\end{proof}

\supstructure*

\begin{proof}
	Let $\mathfrak{A} \in \supclffin{\getffinmodels{\Psi}}$ and let $\mathfrak{B}$ a substructure of $\mathfrak{A}$ such that $\mathfrak{B}\in \getffinmodels{\Psi}$. Such a substructure exists by the definition of  $\supclffin{\getffinmodels{\Psi}}$. We can therefore choose $B\subseteq A$ as a witness for the set variable $\dot{\sigU}$ in $\exists \dot{\sigU}. \Psi^{\mathrm{rel}(\dot{\sigU})}$ and get that $\mathfrak{A} \in \getffinmodels{\Psi^{\mathrm{sup}}}$.
	For the other direction suppose that $\mathfrak{A} \in \getffinmodels{\Psi^{\mathrm{sup}}}$ holds. This means by definition that there exists a subset $B$ of $A$ such that $\Psi^{\mathrm{rel}(B)}$ holds on $\mathfrak{A}$. Therefore the structure $\mathfrak{B}$ that is induced by  $\mathfrak{A}$ on $B$ satisfies $\Psi$ and we get $\mathfrak{A} \in \supclffin{\getffinmodels{\Psi}}$.
\end{proof}

\surjhom*

\begin{proof}
	Let $\mathfrak{A} \in \homsucl{\getffinmodels{\Psi}}$ and let $\mathfrak{A}'$ be a $\tau{\uplus} \tau' {\uplus} \{\sigU\}$-expansion of $\mathfrak{A}$ that satisfies $\Theta_\sigU^{\tau'} \wedge\bigwedge_ {\sigP'\in \tau'} \forall \bold{x}.\big(\sigP(\bold{x}) \,{\impl}\, \sigP'(\bold{x})\big)$. Let $\mathfrak{B}_1$ be the substructure of $\mathfrak{A}'$  that is induced by $\sigU^{\mathfrak{A}'}$. Let $\mathfrak{B}_2$ be the $\tau'$-reduct of   $\mathfrak{B}_1$. We define $\mathfrak{B}$ as the $\tau$-structure that arises from the signature substitution $\tau' \mapsto \tau$ in the structure $\mathfrak{B}_2$.
	We claim that there exists a surjective homomorphism $ h\colon \mathfrak{A}\surhom \mathfrak{B}$ with the following definition: The map is the identity on $\sigU^{\mathfrak{A}'}$. Each element $x\not \in \sigU^{\mathfrak{A}'}$ is mapped to some element $y \in \sigU^{\mathfrak{A}'}$ such that $\eta^{\tau'}(x,y)$ holds. Such a mapping exists (but is not unique) according to the sentence $\Theta_\sigU^{\tau'}$. 
	Since $\sigP^{\mathfrak{A}'}\subseteq {\sigP'}^{\mathfrak{A}'}$ holds and by the formula  $\eta^{\tau'}(x,y)$ it follows that $h$ is a homomorphism. Note that $h$ is surjective since it is the identity on $\sigU^{\mathfrak{A}'}$, which is the domain of $\mathfrak{B}$. By our assumption that $\mathfrak{A}$ is from $ \homsucl{\getffinmodels{\Psi}}$ it follows that $\mathfrak{B}\in  \getffinmodels{\Psi}$ holds. This is nothing else than saying that $\mathfrak{A} \models \Psi_{\tau \mapsto \tau'}^{\mathrm{rel}(\sigU) }$ holds, which proves the first inclusion.

	For the second inclusion, assume $\mathfrak{A}\in \getffinmodels{\Psi^{\mathrm{shom}}}$ and that there exists a surjective homomorphism $h\colon \mathfrak{A} \surhom\mathfrak{B}$. Define the $\tau{\uplus} \tau' {\uplus} \{\sigU\}$-expansion $\mathfrak{A}'$ of $\mathfrak{A}$  as follows. For all $\sigP\in \tau$ we define the relation ${\sigP'}^{\mathfrak{A}'}$ as the preimage of $\sigP^\mathfrak{A}$ under $h$. The set $\sigU^{\mathfrak{A}'}$ is defined as some set of representatives of the kernel classes of $h$. It is clear that $\mathfrak{A}'$ satisfies $\Theta_\sigU^{\tau'} \wedge\bigwedge_ {\sigP'\in \tau'} \forall \bold{x}.\big(\sigP(\bold{x}) \,{\impl}\, \sigP'(\bold{x})\big)$ and therefore it satisfies
	$\Psi_{\tau \mapsto \tau'}^{\mathrm{rel}(\sigU) }$ as well.	
	Let $\mathfrak{A}_1$ be the structure that is induced  by $\mathfrak{A}'$ on $\sigU^{\mathfrak{A}'}$. 
	By the definition of $\mathfrak{A}'$ the $\tau'$-reduct of $\mathfrak{A}_1$ is isomorphic to $\mathfrak{B}$ after the signature substitution $\tau' \mapsto \tau$. This implies that $\mathfrak{B}$ satisfies $\Psi$. Since the surjective homomorphism $h$ and the structure  $\mathfrak{B}$ were arbitrary, we get $\mathfrak{A} \in \homsucl{\getffinmodels{\Psi}}$. \end{proof}



\end{document}